\definecolor{darkred}{rgb}{0.8,0.1,0.1}
\begin{document}

\hyphenation{Schwarz-schild}
\hyphenation{phase-space}
\hyphenation{Symmetrie-reduktion}
\hyphenation{Deformations-quantisierung}

\newcommand{\bra}[1]{\langle #1 |}
\newcommand{\ket}[1]{|#1\rangle}
\newcommand{\braket}[2]{\langle #1 | #2 \rangle}
\newcommand{\me}[3]{\langle #1 | #2 | #3\rangle}
\newcommand{\com}[2]{[#1,#2]}
\newcommand{\starcom}[2]{[#1\stackrel{\star}{,}#2]}
\newcommand{\pb}[2]{\lbrace #1 , #2 \rbrace }
\newcommand{\f}[1]{\mathbf{#1}}
\newcommand{\pair}[2]{\langle #1,#2\rangle}
\newcommand{\spp}[2]{\bigl( #1 , #2 \bigr)}

\renewcommand{\arraystretch}{1.5}

\theoremstyle{plain}

\newtheorem{theo}{{\bf Theorem}}[chapter]

\newtheorem{lem}[theo]{{\bf Lemma}}

\newtheorem{propo}[theo]{{\bf Proposition}}

\newtheorem{cor}[theo]{{\bf Corollary}}

\theoremstyle{definition}

\newtheorem{defi}[theo]{{\bf Definition}}

\newtheorem{ex}[theo]{{\bf Example}}

\theoremstyle{remark}

\newtheorem{rem}[theo]{{\bf Remark}}

\renewcommand{\thepart}{\Roman{part}}

\def\nn{\nonumber}

\def\bbK{\mathbb{K}}
\def\bbR{\mathbb{R}}
\def\bbC{\mathbb{C}}
\def\bbN{\mathbb{N}}
\def\bbA{\mathbb{A}}
\def\bbB{\mathbb{B}}

\def\bfR{\mathsf{R}}
\def\bfK{\mathsf{K}}
\def\bfC{\mathsf{C}}

\def\MM{\mathcal{M}}
\def\AA{\mathcal{A}}
\def\FF{\mathcal{F}}

\def\MMM{\mathscr{M}}
\def\AAA{\mathscr{A}}

\def\Hom{\mathrm{Hom}}
\def\End{\mathrm{End}}
\def\Con{\mathrm{Con}}
\def\Sol{\mathrm{Sol}}
\def\Ker{\mathrm{Ker}}

\def\id{\mathrm{id}}
\def\supp{\mathrm{supp}}
\def\deg{\mathrm{deg}}
\def\dd{\mathrm{d}}
\def\vol{\mathrm{vol}_g}
\def\vols{\mathrm{vol}_\star}
\def\cnt{\mathrm{cnt}}
\def\top{\mathrm{top}}
\def\sc{\mathrm{sc}}
\def\dim{\mathrm{dim}}
\def\cop{\mathrm{cop}}
\def\op{\mathrm{op}}

\def\totimes{\widetilde{\otimes}}
\def\nab{\triangledown}
\def\ra{\triangleright}
\def\RA{\blacktriangleright}

\setlength{\unitlength}{1mm}
\parindent 4mm
\begin{titlepage}
\begin{center}
\vspace{20mm} ~\\
\begin{Huge}
\begin{center}
Noncommutative Gravity and\\
Quantum Field Theory on \\
Noncommutative Curved Spacetimes
\end{center}
\end{Huge}\vspace{3cm}

{\Large Dissertation zur Erlangung des } \vspace{3mm}\\ {\Large naturwissenschaftlichen Doktorgrades  }\vspace{3mm}\\ 
{\Large der Bayerischen Julius-Maximilians-Universit\"at W\"urzburg } ~\vspace{1cm}\\
\includegraphics[width=60mm]{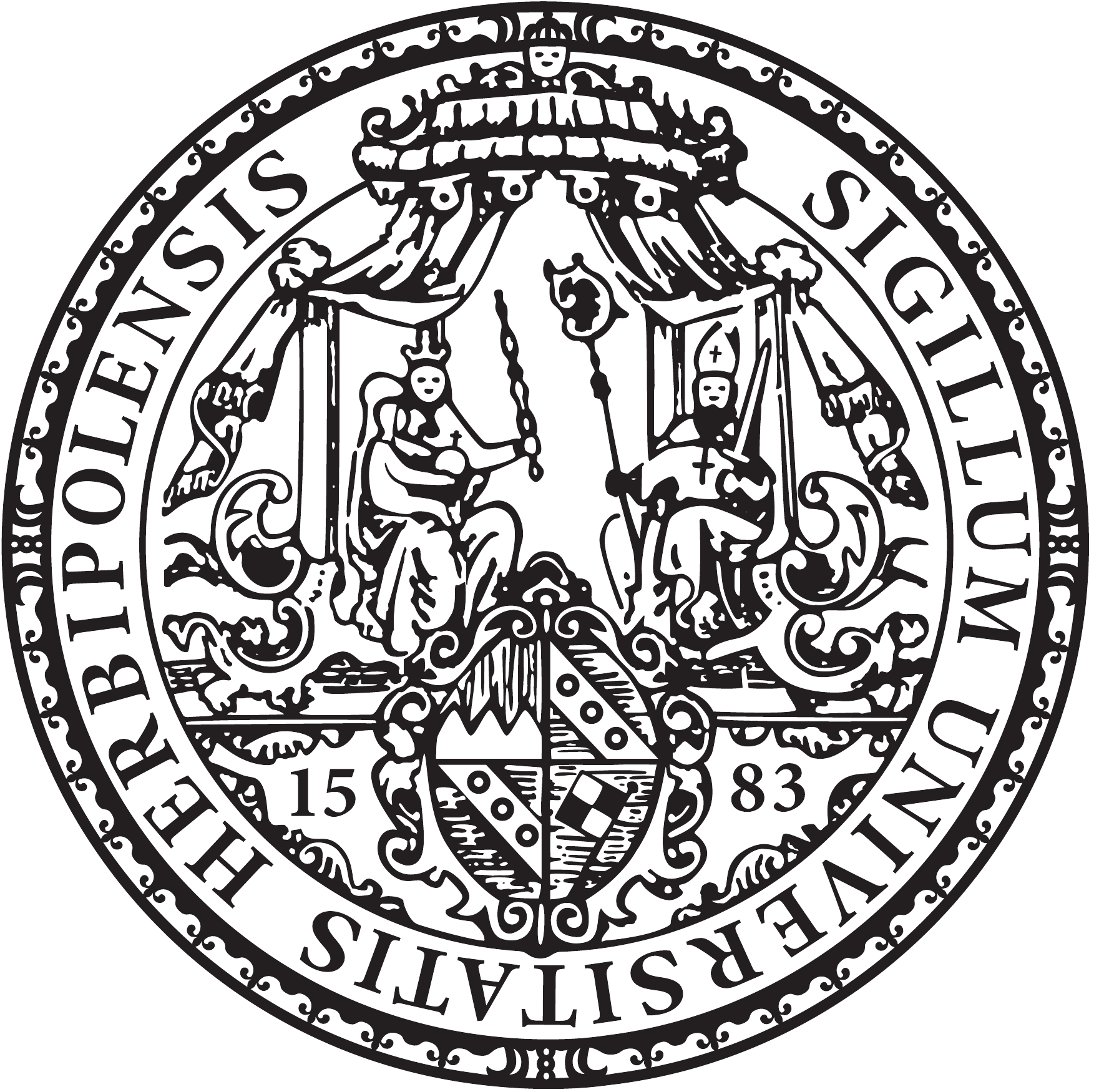} \vspace{1cm}\\
\begin{large}vorgelegt von\vspace{8mm} \\{\Large Alexander Schenkel}\vspace{8mm}\\ aus Hardheim\end{large}\\

\vspace{3cm}
{\Large W\"urzburg 2011}
\end{center}
\pagenumbering{roman}

\end{titlepage}

~~\thispagestyle{empty} \\ \newpage~~\vspace{3cm} \thispagestyle{empty} \\
\noindent {\large Eingereicht am: \quad 14.~Juni 2011}\vspace{0.5cm}\\
{\large bei der Fakult\"at f\"ur Physik und Astronomie }\vspace{1.5cm}\\

\noindent {\large 1.~Gutachter: \quad Prof.~Dr.~Thorsten Ohl }\vspace{0.5cm}\\
{\large 2.~Gutachter:  \quad Prof.~Dr.~Haye Hinrichsen }\vspace{0.5cm}\\
{\large 3.~Gutachter: \quad Prof.~Dr.~Peter Schupp~\,~(Jacobs University Bremen) }\vspace{0.5cm}\\
{\large der Dissertation }\vspace{1.5cm}\\

\noindent {\large 1.~Pr\"ufer: \quad  Prof.~Dr.~Thorsten Ohl }\vspace{0.5cm}\\
{\large 2.~Pr\"ufer: \quad Prof.~Dr.~Haye Hinrichsen }\vspace{0.5cm}\\
{\large 3.~Pr\"ufer: \quad Prof.~Dr.~Peter Schupp~\,~(Jacobs University Bremen) }\vspace{0.5cm}\\
{\large 4.~Pr\"ufer: \quad Prof.~Dr.~Thomas Trefzger }\vspace{0.5cm}\\
{\large im Promotionskolloquium}\vspace{1.5cm}\\

{\large \noindent Tag des Promotionskolloquiums: \quad 24.~Oktober 2011 }\vspace{0.5cm}\\

{\large \noindent Doktorurkunde ausgeh\"andigt am: \quad 27.~Oktober 2011 }
\newpage
~~\thispagestyle{empty}\newpage

%%%%%%%%%%%%%%%%%%%%%%%%%%%%%%%%%%%%%%%%%%%%%%%%%%%%%%%
%%%%%%%%%%%%%%%%%%%%%%%%%%%%%%%%%%%%%%%%%%%%%%%%%%%%%%%

\chapter*{Abstract}\setcounter{page}{1}
Over the past decades, noncommutative geometry has grown into an established field
in pure mathematics and theoretical physics. The discovery that noncommutative geometry emerges as a limit 
of quantum gravity and string theory has provided strong motivations to search for physics beyond the standard model 
of particle physics and also beyond Einstein's theory of general relativity within the realm of noncommutative geometries.
A very fruitful approach in the latter direction is due to Julius Wess and his group, which combines deformation 
quantization ($\star$-products) with quantum group methods.
The resulting gravity theory does not only include noncommutative effects of spacetime,
but it is also invariant under a deformed Hopf algebra of diffeomorphisms, generalizing the principle
of general covariance to the noncommutative setting.

The purpose of the first part of this thesis is to understand symmetry reduction
in noncommutative gravity, which then allows us to find exact solutions of the noncommutative Einstein equations. 
These are important investigations in order to capture the physical content of such theories
and to make contact to applications in e.g.~noncommutative cosmology and black hole physics.
We propose an extension of the usual symmetry reduction procedure, which is frequently applied to
the construction of exact solutions of Einstein's field equations, to noncommutative gravity and show that
this leads to preferred choices of noncommutative deformations of a given symmetric system. We classify
in the case of abelian Drinfel'd twists all consistent deformations of spatially
flat Friedmann-Robertson-Walker cosmologies and of the Schwarzschild black hole.
The deformed symmetry structure allows us to obtain exact solutions of 
the noncommutative Einstein equations in many of our models, for which the noncommutative metric field
coincides with the classical one.

In the second part we focus on quantum field theory on noncommutative curved spacetimes.
We develop a new formalism by combining methods from the algebraic approach to quantum field theory
with noncommutative differential geometry. The result is an algebra of observables for
scalar quantum field theories on a large class of noncommutative curved spacetimes.
A precise relation to the algebra of observables of the corresponding undeformed quantum field theory
is established.
We focus on explicit examples of deformed wave operators and find that there can be noncommutative corrections
even on the level of free field theories, which is not the case in the simplest example of the Moyal-Weyl deformed
Minkowski spacetime. The convergent deformation of simple toy-models is investigated and it is shown
that these quantum field theories have many new features compared to formal deformation quantization.
In addition to the expected nonlocality, we obtain that the relation between the deformed and the undeformed
quantum field theory is affected in a nontrivial way, leading to an improved behavior of the noncommutative quantum field theory
at short distances, i.e.~in the ultraviolet.

In the third part we develop elements of a more powerful, albeit more abstract, mathematical 
approach to noncommutative gravity.
The goal is to better understand global aspects of homomorphisms between and connections on noncommutative vector bundles, 
which are fundamental objects in the mathematical description of noncommutative gravity.
 We prove that all homomorphisms
and connections of the deformed theory can be obtained by applying a quantization isomorphism to undeformed homomorphisms
and connections. The extension of homomorphisms and connections to tensor products of modules
is clarified, and as a consequence we are able to add tensor fields of arbitrary type
to the noncommutative gravity theory of Wess et al.
As a nontrivial application of the new mathematical formalism we extend our studies of exact noncommutative
gravity solutions to more general deformations.

%%%%%%%%%%%%%%%%%%%%%%%%%%%%%%%%%%%%%%%%%%%%%%%%%%%%%%%
%%%%%%%%%%%%%%%%%%%%%%%%%%%%%%%%%%%%%%%%%%%%%%%%%%%%%%%

\chapter*{Zusammenfassung}
\"Uber die letzten Jahrzehnte hat sich 
die nichtkommutative Geometrie zu einem etablierten 
Teilgebiet der reinen Mathematik und der theoretischen Physik entwickelt.
Die Entdeckung, dass gewisse Grenzf\"alle der Quantengravitation
 und Stringtheorie zu nichtkommutativer Geometrie f\"uhren,
motivierte die Suche nach Physik jenseits des Standardmodells der Elementarteilchenphysik
und der Einstein'schen allgemeinen Relativit\"atstheorie im Rahmen von nichtkommutativen Geometrien.
Einen ergiebigen Ansatz zu letzteren Theorien, welcher Deformationsquantisierung (Sternprodukte) mit
Methoden aus der Theorie der Quantengruppen kombiniert, wurde von der Gruppe um Julius Wess entwickelt.
Die resultierende Gravitationstheorie ist nicht nur imstande nichtkommutative Effekte der Raumzeit
zu beschreiben, sondern sie erf\"ullt ebenfalls ein generalisiertes allgemeines Kovarianzprinzip,
welches durch eine deformierte Hopf Algebra von Diffeomorphismen beschrieben wird.

Gegenstand des ersten Teils dieser Dissertation ist es Symmetriereduktion im Rahmen von
nichtkommutativer Gravitation zu verstehen und damit exakte L\"osungen der nichtkommutativen Einstein'schen
Gleichungen zu konstruieren.
Diese Untersuchungen sind von gro\ss er Bedeutung um den physikalischen Inhalt dieser Theorien
herauszuarbeiten und den Kontakt zu Anwendungen, z.B.~im Rahmen nichtkommutativer Kosmologie und
Physik schwarzer L\"ocher, herzustellen.
Wir verallgemeinern die \"ubliche Methode der Symmetriereduktion, welche eine Standardtechnik
im Auffinden von L\"osungen der Einstein'schen Gleichungen ist,
auf nichtkommutative Gravitation.
Es wird gezeigt, dass unsere Methode zur nichtkommutativen Symmetriereduktion
f\"ur ein gegebenes symmetrisches System zu bevorzugten Deformationen
f\"uhrt.
F\"ur Abelsche Drinfel'd Twists klassifizieren wir alle
konsistenten Deformationen von r\"aumlich flachen Friedmann-Robertson-Walker Kosmologien
und des Schwarzschild'schen schwarzen Loches.
Aufgrund der deformierten Symmetriestruktur dieser Modelle
k\"onnen wir viele Beispiele von exakten L\"osungen der nichtkommutativen Einstein'schen Gleichungen
finden, bei welchen das nichtkommutative Metrikfeld mit dem klassischen \"ubereinstimmt.

Im Fokus des zweiten Teils sind Quantenfeldtheorien
auf nichtkommutativen gekr\"ummten Raumzeiten.
Dazu entwickeln wir einen neuen Formalismus, welcher algebraische Methoden
der Quantenfeldtheorie  mit nichtkommutativer Differentialgeometrie 
verkn\"upft.
Als Resultat unseres Ansatzes erhalten wir eine Observablenalgebra
f\"ur skalare Quantenfeldtheorien auf einer gro\ss en Klasse von nichtkommutativen
gekr\"ummten Raumzeiten.
Es wird eine pr\"azise Relation zwischen dieser Algebra und der Observablenalgebra
der undeformierten Quantenfeldtheorie hergeleitet.
Wir studieren ebenfalls explizite Beispiele von deformierten Wellenoperatoren
und finden, dass im Gegensatz zu dem einfachsten Modell des Moyal-Weyl deformierten Minkowski-Raumes,
im Allgemeinen schon die Propagation freier Felder durch die nichtkommutative Geometrie beeinflusst wird.
Die Effekte von konvergenten Deformationen werden in einfachen Spezialf\"allen untersucht,
und wir beobachten neue Aspekte in diesen Quantenfeldtheorien, welche sich in formalen Deformationen
nicht zeigten. Zus\"atzlich zu der erwarteten Nichtlokalit\"at finden wir,
dass sich die Beziehung zwischen der deformierten und der undeformierten Quantenfeldtheorie nichttrivial ver\"andert.
Wir beweisen, dass dies zu einem verbesserten Verhalten der nichtkommutativen Theorie
bei kurzen Abst\"anden, d.h.~im Ultravioletten, f\"uhrt.

Im dritten Teil dieser Arbeit entwickeln wir Elemente eines leistungsf\"ahigeren, jedoch abstrakteren,  mathematischen
Ansatzes zur Beschreibung der nichtkommutativen Gravitation.
Das Hauptaugenmerk liegt auf globalen Aspekten von Homomorphismen zwischen und Zusammenh\"angen auf nichtkommutativen Vektorb\"undeln, 
welche fundamentale Objekte in der mathematischen Beschreibung von nichtkommutativer Gravitation sind.
Wir beweisen, dass sich alle Homomorphismen und Zusammenh\"ange der deformierten Theorie
mittels eines Quantisierungsisomorphismus aus den undeformierten Homomorphismen und Zusammenh\"angen ableiten lassen.
Es wird ebenfalls untersucht wie sich Homomorphismen und Zusammenh\"ange auf Tensorprodukte von Moduln 
induzieren lassen. Das Verst\"andnis dieser Induktion erlaubt es uns die nichtkommutative Gravitationstheorie
von Wess et al.~um allgemeine Tensorfelder zu erweitern.
Als eine nichttriviale Anwendung des neuen Formalismus erweitern wir unsere Studien
zu exakten L\"osungen der nichtkommutativen Einstein'schen Gleichungen auf allgemeinere Klassen von Deformationen.

%%%%%%%%%%%%%%%%%%%%%%%%%%%%%%%%%%%%%%%%%%%%%%%%%%%%%%%
%%%%%%%%%%%%%%%%%%%%%%%%%%%%%%%%%%%%%%%%%%%%%%%%%%%%%%%

\tableofcontents

\numberwithin{equation}{chapter}

\pagenumbering{arabic}

%%%%%%%%%%%%%%%%%%%%%%%%%%%%%%%%%%%%%%%%%%%%%%%%%%%%%%%
%%%%%%%%%%%%%%%%%%%%%%%%%%%%%%%%%%%%%%%%%%%%%%%%%%%%%%%
%%%%%%%%%%%%%%%%%%%%%%%%%%%%%%%%%%%%%%%%%%%%%%%%%%%%%%%

\part*{Introduction and Outline}

%%%%%%%%%%%%%%%%%%%%%%%%%%%%%%%%%%%%%%%%%%%%%%%%%%%%%%%
%%%%%%%%%%%%%%%%%%%%%%%%%%%%%%%%%%%%%%%%%%%%%%%%%%%%%%%

\vspace{3mm}

\begin{center}
 {\bf Introduction}
\end{center}
\vspace{2mm}

Physics has gone through a number of major conceptual changes in the early twentieth century.
In particular, experiments in atomic physics revealed the quantum structure 
of nature at microscopic distances, and Einstein's theory of general relativity provided us with a deeper understanding
of space and time. Quantum mechanics was later successfully combined with special relativity, leading
to quantum field theory and eventually to the standard model of particle physics. However, the understanding of
how to combine general relativity with the concepts of quantum mechanics is not yet complete,
 and quantum gravity remains a very active field of research in mathematics, mathematical physics
and also phenomenology.
There are by now a number of serious candidates towards a theory of quantum gravity with
string theory, see e.g.~\cite{1075.81054,1075.81053}, and loop quantum gravity, see e.g.~\cite{1140.83005,1129.83004},
 being the most prominent examples. 
These theories focus on different key aspects one expects of quantum gravity:
String theory in particular on the unification of all fundamental interactions and loop quantum gravity on background
independence.
In addition to these two major frameworks, there are also other influential approaches like asymptotic safety \cite{Reuter:1996cp} 
and causal dynamical triangulations \cite{Ambjorn:1998xu}, which have already led to
interesting insights into the quantum nature of spacetime and might be able to guide future quantum gravity research.

Looking again at the two major conceptual changes in the early twentieth century one notices a puzzling feature:
Classical mechanics is described in terms of geometry of the phasespace, which is
a field in mathematics called Poisson geometry. On the other hand, quantum mechanics is described in terms
of noncommutative algebras generated by position and momentum operators satisfying
the canonical commutation relations $[\hat x^i,\hat p^j]=i\hbar\, \delta^{ij}\,\hat 1$. This noncommutativity 
and the resulting uncertainty relations are experimentally required and have been tested to a great precision.
The formulation of general relativity is based on (pseudo-) Riemannian geometry, 
which is, similarly to Poisson geometry, a subfield of differential geometry.
This means that, on the mathematical level, general relativity is comparable to classical mechanics,
because both are formulated in a geometric language, which does not include any quantum effects. 
Since the transition from classical to quantum mechanics
is a transition from geometry to noncommutative algebra, it is natural to ask the following question:
Can we come closer to a quantum theory of gravity by replacing the geometrical structures underlying general relativity
by noncommutative algebraic structures?

In mathematics, there are several examples of geometric structures which can be 
entirely described in algebraic terms.
Based on a seminal theorem by Gel'fand and Naimark \cite{0824.46060}, we can
equivalently describe topological spaces of a certain kind (locally compact Hausdorff spaces)
by {\it commutative} $C^\ast$-algebras. In addition, Serre \cite{0067.16201} and Swan \cite{0109.41601} 
have shown that vector bundles are equivalent to finitely generated and projective modules over these algebras.
The algebraic equivalent of Riemannian spin-geometry was found and investigated intensively
by Connes, see e.g.~\cite{Connes:1994yd}, and led to the definition of {\it commutative} spectral triples,
consisting of {\it commutative} algebras, Hilbert spaces and Dirac operators.
Note that all algebraic structures corresponding to classical geometries are commutative.
A natural generalization of classical geometry is thus obtained by allowing also for noncommutative algebras.
In this respect, a noncommutative topological space is a noncommutative $C^\ast$-algebra,
a noncommutative vector bundle is a finitely generated  and projective module over this algebra and
a noncommutative Riemannian spin-manifold is a noncommutative spectral triple.

Having understood the algebraic objects required to formulate noncommutative geometry
there is still the question which noncommutative algebra or which noncommutative vector bundle
we should choose in order to appropriately describe a certain physical situation.
Unfortunately, constructing a noncommutative theory from scratch
is in general very complicated, since, in contrast to classical geometry, we are often missing
physical intuition in the quantum case. An approach which turned out to be very fruitful in order to construct
noncommutative theories is quantization. In general terms, quantization is a set of rules (axioms)
how to associate to a commutative system a noncommutative one. 
A systematic approach to quantization, called deformation quantization, was 
developed in \cite{0377.53024,0377.53025}, see also Waldmann's book \cite{1139.53001} for an introduction.
 The starting point of this approach is a Poisson algebra, i.e.~a commutative algebra $A$ with a 
Poisson structure $\lbrace\cdot,\cdot\rbrace$, 
which is quantized by introducing a new {\it noncommutative} product, the $\star$-product.
This $\star$-product depends on a deformation parameter, $\hbar$ in the case of quantum mechanics,
and one demands that for $\hbar\to 0$ the $\star$-commutator reduces to the Poisson bracket
at leading order $\starcom{a}{b} = i\hbar \lbrace a,b\rbrace +\mathcal{O}(\hbar^2)$.

Motivated by the example of quantum mechanics we can start thinking about introducing a $\star$-product
on spacetime in order to quantize it. In this case a natural deformation parameter is given by the Planck length,
i.e.~the scale where we expect quantum effects of geometry to become relevant.
However, in contrast to classical mechanics, we did not yet observe a Poisson structure on spacetime.
Thus, it is not clear which Poisson tensor should be used to construct the $\star$-product,
or in other words, in which ``direction'' we should quantize.
In order to understand the basic features of a deformation let us assume some Poisson
tensor $\Theta^{\mu\nu}(x)$ on spacetime. The $\star$-commutator between coordinate functions
at leading order in the deformation parameter $\lambda$ reads $\starcom{x^\mu}{x^\nu}=
i\lambda\,\Theta^{\mu\nu}(x)+\mathcal{O}(\lambda^2)$. Thus, similar to the Planck cells
in quantum mechanics, there will be minimal areas in spacetime due to the associated coordinate uncertainty relations.
These minimal areas may depend on the position because of the $x$-dependence of the Poisson tensor.
As it has been shown in \cite{Doplicher199439,Doplicher:1994tu},
coordinate uncertainty relations are capable to limit the intrinsic resolution of spacetime,
such that black holes can not be produced in the process of sharp localization.
Moreover, the modified ultraviolet structure of spacetime immediately rises the hope to improve 
the mathematical description of physics, in particular
the ultraviolet divergences in quantum field theory and curvature singularities in
general relativity.

A possible explanation for the Poisson tensor and also the $\star$-product on spacetime is provided by string theory.
There it has been found that open strings ending on $D$-branes in a $B$-field background
$B_{\mu\nu}$ are subject to noncommutative geometry effects, see e.g.~\cite{Chu:1998qz,Schomerus:1999ug,Seiberg:1999vs}
and references therein. These effects manifest themselves on the level of string theory in terms of modified scattering amplitudes.
Taking the effective field theory limit, the $B$-field background still affects the physics on the brane,
and leads to a noncommutative Yang-Mills theory thereon.

The relation to string theory and physical motivations led phenomenologists to study in great detail possible effects
of noncommutative geometry in particle physics and cosmology.
An overview of the work done in particle physics can be obtained from
\cite{Hewett:2000zp,Hinchliffe:2002km,Jurco:2001rq,Calmet:2001na,Melic:2005fm,Melic:2005am,Ohl:2004tn,Alboteanu:2006hh,Alboteanu:2007bp}
 and for cosmology see e.g.~\cite{Lizzi:2002ib,Kim:2005tf,Akofor:2007fv,Akofor:2008gv,Koivisto:2010fk}, and references therein.
The model which was mostly used in these studies is the so-called canonical, or Moyal-Weyl, deformation,
where one assumes that $\starcom{x^\mu}{x^\nu}=i\,\lambda\,\Theta^{\mu\nu}$ is constant.
In contrast to commutative theories, the noncommutative ones displayed a violation of Lorentz invariance
in scattering amplitudes and preferred directions in the cosmic microwave background.

Assuming spacetime to be noncommutative, there is still the question
of how to describe gravitation. Besides using noncommutative metric fields, see 
e.g.~\cite{Aschieri:2005yw,Aschieri:2005zs,Kurkcuoglu:2006iw}, there are also
approaches based on hermitian metrics, see e.g.~\cite{Chamseddine:1992yx,Chamseddine:2000zu}, or vielbeins, see 
e.g.~\cite{Chamseddine:2003we,Aschieri:2009ky,Aschieri:2009mc} and references therein.
In addition to these rather conventional approaches, noncommutative geometry seems to provide
a natural mechanism for emergent gravity from noncommutative gauge theory and matrix models,
see \cite{Rivelles:2002ez,Yang:2006dk,Yang:2006hj,Steinacker:2007dq,Steinacker:2008ri,Steinacker:2010rh}.
For reviews on different approaches to noncommutative gravity see \cite{Szabo:2006wx,MullerHoissen:2007xy}.

In our work we are guided by the approach of Wess and his group to noncommutative gravity
\cite{Aschieri:2005yw,Aschieri:2005zs}. 
In this theory the symmetries of general relativity, i.e.~the diffeomorphisms,
are considered as the fundamental object and are deformed.
The generalization of the diffeomorphism symmetry is formulated in the language of Hopf algebras, a
 mathematical object which is suitable for studying quantizations of Lie groups or Lie algebras, 
see e.g.~\cite{Majid:1996kd,Kassel:1995xr} for an introduction.
A gravity theory is then constructed such that it transforms covariantly
under the deformed diffeomorphisms, which automatically results in noncommutative geometry.
Note that in noncommutative gravity we take into account quantum effects of the underlying manifold 
($\lambda$-deformation), but a phasespace quantization of the metric field ($\hbar$-deformation)
is not yet included. Thus, we expect noncommutative gravity to be a valid approximation of a full quantum gravity
theory, which should be quantized in $\hbar$ and $\lambda$, in configurations where quantum fluctuations
in the metric field are negligible.
We can also see noncommutative gravity in the following, more speculative, way.
Since the $\hbar$-quantization of general relativity is plagued by 
serious difficulties, like the perturbative nonrenormalizability,
the $\lambda$-deformation of the underlying manifold might be the missing ingredient
to improve the $\hbar$-quantization of the metric field.

In addition to noncommutative gravity, an interesting field of research is noncommutative quantum field theory.
The focus there is on quantum fields propagating on a fixed noncommutative spacetime
and the resulting effects.
Due to the minimal areas present in a noncommutative spacetime, one
expects improved mathematical properties of these quantum field theories in the ultraviolet, as well as
interesting and distinct new physical effects. Noncommutative quantum field theory comes in many different varieties,
in particular it was studied in a Euclidean and Lorentzian setting.
In the Euclidean setting, remarkable results were obtained by
Grosse and Wulkenhaar \cite{Grosse:2003aj,Grosse:2004yu,Grosse:2004by,Grosse:2009pa} 
and later also by Rivasseau and his group \cite{Rivasseau:2005bh,Gurau:2005gd,Disertori:2006nq}
after a long series of investigations.
It has been found that the $\Phi^4$-theory on the Moyal-Weyl space has interesting quantum properties,
if one includes an additional quadratic term in the action. In particular, the theory
is renormalizible to all orders in the perturbation theory and the infamous Landau pole is not present.
This is an improvement compared to the commutative $\Phi^4$-theory and
rises hope for obtaining a rigorously defined interacting $4$-dimensional Euclidean quantum field theory
by using noncommutative geometry methods. Even though there have been many attempts in this direction, 
see e.g.~the review \cite{Blaschke:2010kw}, similar results do not yet exist for noncommutative gauge theories.
In the Lorentzian case, a considerable amount of research has been done in order to understand 
perturbatively interacting quantum field theories.
The model mostly used for these studies is the Moyal-Weyl deformed Minkowski spacetime.
Different approaches to perturbation theory have been investigated, see e.g.~\cite{Bahns:2004mm,Zahn:2006mg}, 
and there was for a long time the hope that the infamous UV/IR-mixing
problem is not present in the Minkowski case. Recently, it was shown in \cite{Bahns:2010dx} and \cite{Zahn:2011bs} that the
UV/IR-mixing also occurs in the Hamiltonian and Yang-Feldman approach to noncommutative Minkowski quantum field theory,
even though the mechanism is different to the Euclidean case. These new results question the mathematical 
consistency of these approaches.

When going from the Minkowski spacetime to more general Lorentzian spacetimes, in particular
curved ones, the number of approaches to noncommutative quantum field theory reduces considerably.
An interesting formalism for deformed quantum field theory in the language
of algebraic quantum field theory was developed by Dappiaggi, Lechner and Morfa-Morales
\cite{Dappiaggi:2010hc}, which is based on the concept of warped convolutions previously studied
by Lechner and collaborators \cite{Grosse:2007vr,Grosse:2008dk,Buchholz:2010ct}.
In this approach an algebraic quantum field theory, described by a net of observable algebras,
is deformed by using methods similar to those developed by Rieffel \cite{0798.46053}.
A different approach, focusing primarily on spectral geometry, was investigated by Paschke and Verch
\cite{Paschke:2004xf}. In this thesis we will present a third approach to quantum field theory
on noncommutative curved spacetimes developed by myself and collaborators 
\cite{Ohl:2009qe,Schenkel:2010sc,Schenkel:2010jr,Schenkel:2011gw}, 
which is formulated in close contact to the noncommutative gravity theory of Wess et al.

%%%%%%%%%%%%%%%%%%%%%%%%%%%%%%%%%%%%%%%%%%%%%%%%%%%%%%%
%%%%%%%%%%%%%%%%%%%%%%%%%%%%%%%%%%%%%%%%%%%%%%%%%%%%%%%

\vspace{3mm}

\begin{center}
 {\bf Outline of this thesis}
\end{center}
\vspace{2mm}

This thesis consists of three main parts, focusing on different, but strongly connected, 
aspects of noncommutative geometry, gravity and quantum field theory. 
The purpose of this section is to provide a broad and nontechnical overview of the content of all three parts.

\subsection*{Part \ref{part:ncg}}
We are going to focus on physical aspects of the noncommutative 
gravity theory of Wess and his group. Even though this theory was already developed in
2006, no results on its application to physical situations, for example
cosmology or black hole physics, have been published until recently\footnote{
While working on this subject \cite{Ohl:2009pv} I became aware of earlier investigations by Schupp and Solodukhin
on noncommutative black holes, which were presented at conferences in 2007 \cite{Schupp:2007} 
and later published in 2009 \cite{Schupp:2009pt}.}.
A detailed investigation of explicit models 
is an essential step to capture the physical content of the noncommutative gravity theory,
and it is therefore a very important task for future developments in this field.
This provides the motivation for Part \ref{part:ncg}.
In Chapter \ref{chap:basicncg} we review the noncommutative gravity
theory under consideration \cite{Aschieri:2005yw,Aschieri:2005zs}. Since this theory makes use of mathematical
methods of Hopf algebra theory, we first give a gentle introduction to Hopf algebras and
their Drinfel'd twist deformations using explicit examples.  
Based on this we explain how spacetime, as well as its differential geometry,
can be deformed, leading to examples of noncommutative geometries.
We equip these noncommutative spacetimes with covariant derivatives, define their curvature and
eventually a noncommutative version of Einstein's equations, which are the 
underlying dynamical equations of noncommutative gravity.
After this introductory chapter we present in Chapter \ref{chap:symred} our approach to symmetry reduction
in noncommutative gravity \cite{Ohl:2008tw}. 
Remember that in classical general relativity, Friedmann-Robertson-Walker cosmologies and Schwarzschild 
black holes are characterized as configurations which are invariant under a certain symmetry group (or Lie algebra).
We generalize this definition to theories covariant under deformed Hopf algebra symmetries,
making use of the concept of infinitesimal deformed isometries, described by almost quantum Lie algebras.
As an application we classify all possible abelian twist deformations of spatially flat Friedmann-Robertson-Walker
cosmologies and Schwarzschild black holes satisfying our axioms of deformed symmetry reduction.
The physical content of these models is briefly discussed and we find a particularly interesting
cosmological model, which is invariant under all classical rotations, and a deformed black hole
model, which is invariant under all classical black hole symmetries.
In Chapter \ref{chap:ncgsol} we take the natural next step and construct exact solutions
of the noncommutative Einstein equations within our models. 
The work we present was published in \cite{Ohl:2009pv} and appeared at the same time 
as the related articles by Schupp and Solodukhin \cite{Schupp:2009pt} and Aschieri and Castellani \cite{Aschieri:2009qh},
all of them focusing on different aspects of exact noncommutative gravity solutions.
The main result of our work, which was also found in \cite{Schupp:2007,Schupp:2009pt,Aschieri:2009qh},
is that the classical metric field satisfies the noncommutative Einstein equations exactly if the
deformation is generated by sufficiently many Killing vector fields. We show that this condition is
fulfilled for most of our physically viable models of noncommutative cosmologies and black holes, thus
leading to a large class of explicit physics examples. In particular, we show that also the isotropically deformed
cosmological model solves the noncommutative Einstein equations exactly in presence of a cosmological constant.
Even though the metric field for these configurations does not receive noncommutative corrections,
the underlying manifold is quantized, leading to distinct physical effects which will be discussed.
We conclude Part \ref{part:ncg} by pointing out open problems in noncommutative gravity, which are
the motivation for the developments described in Part \ref{part:math}.

\subsection*{Part \ref{part:qft}}
After the discussion of noncommutative background spacetimes in Part \ref{part:ncg}
we focus in Part \ref{part:qft} on noncommutative quantum field theory.
This is an important step towards extracting physical observables in noncommutative
cosmology and black hole physics, for example the two-point correlation function
yielding information on the cosmological power spectrum or the Hawking radiation.
Since noncommutative quantum field theory is usually studied on the Moyal-Weyl deformed
Minkowski spacetime, the analysis of our models requires two generalizations: 
Curved spacetimes and more general deformations. In other words, we have
to develop a formalism for quantum field theory on noncommutative curved spacetimes.
In order to fix notation we first review in Chapter \ref{chap:qftbas}
the algebraic approach to quantum field theory on commutative curved spacetimes,
which has turned out to be very fruitful, see e.g.~\cite{Wald:1995yp,Bar:2007zz,Bar:2009zzb}.
We present our approach to quantum field theory on noncommutative curved spacetimes  \cite{Ohl:2009qe} 
in Chapter \ref{chap:qftdef}, which combines algebraic quantum field theory methods with noncommutative differential geometry. 
The result of this construction is a deformed algebra of observables for a scalar quantum field theory
on a large class of deformed curved spacetimes. In Chapter \ref{chap:qftcon} we explore
mathematical properties of our approach to quantum field theory on noncommutative
curved spacetimes and in particular prove that each deformed quantum field theory can be mapped bijectively to
an undeformed one \cite{Schenkel:2011gw}. Chapter \ref{chap:qftapp} is devoted to explicit examples
of field and quantum field theories on noncommutative curved spacetimes. We present
examples of deformed wave operators on noncommutative Minkowski, de Sitter, Schwarzschild and anti-de Sitter spacetimes
\cite{Schenkel:2010sc}.
We study in detail the explicit construction of a scalar quantum field theory on
the isotropically deformed de Sitter spacetime, which is a nontrivial step towards physical applications in cosmology.
As two more applications we focus on homothetic Killing deformations, yielding 
simple examples of exactly treatable models \cite{Schenkel:2010jr}, and we present a new perturbatively 
interacting quantum field theory on a nonstandard deformed Euclidean space \cite{Schenkel:2010zi}, which 
shows remarkable similarities to the recently studied Ho{\v r}ava-Lifshitz theories \cite{Horava:2009uw}
and has improved quantum properties.
We close this part with a discussion of open problems in quantum field theory on noncommutative curved spacetimes.

\subsection*{Part \ref{part:math}}
In this part we focus on mathematical aspects of noncommutative geometry, which
are based on ongoing work with Paolo Aschieri [that appeared after finishing the thesis in \cite{AlexPaolo}].
The main motivation for these studies comes from the open problems
in noncommutative gravity and quantum field theory, which have shown that in particular
metric fields and covariant derivatives are not yet completely understood in the noncommutative setting.
To explain the content of this part we remind the reader that in noncommutative geometry
spacetime is described by a noncommutative algebra $A$ and a vector bundle by a (finitely generated and projective) 
module $V$ over $A$.
We consider the situation where we have an action of a Hopf algebra $H$ on $A$ and $V$.
This is a generalization of the setting we encounter in the noncommutative gravity theory of Part \ref{part:ncg}, 
where the Hopf algebra
$H$ describes the deformed diffeomorphisms, $A$ the quantized functions on spacetime
and $V$ the quantized vector fields or differential forms.
After fixing the notation in Chapter \ref{chap:prelim} and exploring some technical
aspects of Drinfel'd twist deformations in Chapter \ref{chap:HAdef}, we focus in Chapter \ref{chap:modhom}
on module endomorphisms and homomorphisms. We show that every module endomorphism on $V$
can be quantized to yield a module endomorphism on the twist quantized module $V_\star$,
and even more that every module endomorphism on $V_\star$ can be obtained in this way. 
Thus, there is an isomorphism between the quantized and unquantized module endomorphisms.
We extend the results to homomorphisms between two modules.
As a direct consequence, we find that the quantized dual module
is isomorphic to the dual quantized module, meaning that there are no ambiguities
in considering duals. 
We conclude this chapter by studying the extension of module homomorphisms
to tensor products of modules, i.e.~tensor fields. 
In Chapter \ref{chap:con} we investigate covariant derivatives (more precisely connections) in
noncommutative geometry.
We consider connections on the module $V$ satisfying the right Leibniz rule 
and provide a quantization prescription to obtain connections on the deformed module $V_\star$.
As in case of module homomorphisms, this quantization map is an isomorphism, meaning that there
is a one-to-one correspondence between the quantized and unquantized connections.
We show that for quasi-commutative algebras and bimodules\footnote{
An algebra or bimodule is said to be quasi-commutative, if it is commutative up to the action of an $R$-matrix, see Chapter
\ref{chap:modhom}, Section \ref{sec:modhomqc} for details.
} we can extend connections canonically to tensor products of modules.
This is exactly the situation we face in noncommutative gravity when we want to extend the 
connection to tensor fields.
The curvature and torsion of connections is studied in Chapter \ref{chap:curvature}.
In Chapter \ref{chap:ncgmath} we apply our formalism to reinvestigate exact noncommutative gravity
solutions. 
In contrast to the investigations based on local coordinate patches in Chapter \ref{chap:ncgsol}, 
we can now study solutions of the noncommutative Einstein equations on a global level.
In particular, we are able to extend the known results of \cite{Schupp:2007,Schupp:2009pt,Ohl:2009pv,Aschieri:2009qh}, 
see also Chapter \ref{chap:ncgsol}, to a larger class of Drinfel'd twists.
We conclude in Chapter \ref{chap:outlookmath} by giving an outlook to 
further interesting applications that can be studied within our formalism
and point out open issues which remain to be solved for completing the construction of a noncommutative
theory of gravity.

%%%%%%%%%%%%%%%%%%%%%%%%%%%%%%%%%%%%%%%%%%%%%%%%%%%%%%%
%%%%%%%%%%%%%%%%%%%%%%%%%%%%%%%%%%%%%%%%%%%%%%%%%%%%%%%
%%%%%%%%%%%%%%%%%%%%%%%%%%%%%%%%%%%%%%%%%%%%%%%%%%%%%%%

\part{\label{part:ncg}Noncommutative Gravity}

%%%%%%%%%%%%%%%%%%%%%%%%%%%%%%%%%%%%%%%%%%%%%%%%%%%%%%%
%%%%%%%%%%%%%%%%%%%%%%%%%%%%%%%%%%%%%%%%%%%%%%%%%%%%%%%

\chapter{\label{chap:basicncg}Basics}
In this chapter we give an introduction to the noncommutative gravity theory of Wess and his group
 \cite{Aschieri:2005yw,Aschieri:2005zs}.

%%%%%%%%%%%%%%%%%%%%%%%%%%%%%%%%%%%%%%%%%%%%%%%%%%%%%%%

\section{The Hopf algebra of diffeomorphisms}
Let $\MM$ be an $N$-dimensional smooth manifold and let $\Xi$ be the space
 of complex and smooth vector fields on $\MM$. Locally, there exists a basis $\lbrace\partial_\mu\in\Xi:\mu=1,\dots,N \rbrace$,
such that every vector field $v\in\Xi$ can be written as $v=v^\mu(x)\partial_\mu$, where $v^\mu(x)\in C^\infty(\MM)$, for all
$\mu$, are the coefficient functions. The space of vector fields can be naturally equipped with a Lie bracket, i.e.~an
antisymmetric $\bbC$-bilinear map $[\cdot,\cdot]:\Xi\times\Xi\to \Xi$, which satisfies the Jacobi identity.
Locally, the Lie bracket reads
\begin{flalign}
 [v,w]= \bigl(v^\mu(x)\partial_\mu w^\nu(x) - w^\mu(x)\partial_\mu v^\nu(x)\bigr)\partial_\nu~,
\end{flalign}
for all $v,w\in\Xi$. Thus, $\bigl(\Xi,[\cdot,\cdot]\bigr)$ forms a complex Lie algebra.

The Lie algebra of vector fields $\bigl(\Xi,[\cdot,\cdot]\bigr)$ plays an important role in differential geometry,
namely it describes the infinitesimal diffeomorphisms of $\MM$. The action of $\bigl(\Xi,[\cdot,\cdot]\bigr)$
on tensor fields is via the Lie derivative $\mathcal{L}$. Note that the Lie derivative and the Lie bracket are
compatible, i.e.~$\mathcal{L}_v\circ \mathcal{L}_w - \mathcal{L}_w\circ \mathcal{L}_v = \mathcal{L}_{[v,w]}$, for 
all $v,w\in \Xi$.

Let us point out two important operations one always has in mind when dealing with Lie algebras.
These observations are essential to understand the step how to go over from Lie algebras to Hopf algebras.
We will discuss only the case of the Lie algebra $\bigl(\Xi,[\cdot,\cdot]\bigr)$, even though the same statements
hold true for every Lie algebra $\bigl(\mathfrak{g},[\cdot,\cdot]\bigr)$.
Firstly, note that for each vector field $v\in\Xi$, which we interpret as an infinitesimal diffeomorphism,
there is the inverse infinitesimal diffeomorphism $v_\text{inv}=-v\in\Xi$.
Secondly, having a product of representations, e.g.~a tensor product $\tau\otimes \tau^\prime$ of two tensor fields
$\tau,\tau^\prime$, we can apply the Leibniz rule 
\begin{flalign}
\label{eqn:leibnizsimple}
\mathcal{L}_v\bigl(\tau\otimes\tau^\prime\bigr)=\mathcal{L}_v(\tau)\otimes\tau^\prime + \tau\otimes \mathcal{L}_v(\tau^\prime)~, 
\end{flalign}
for all $v\in\Xi$. Let us introduce also a third operation, which at the moment should be interpreted
as a normalization condition. We define a map $\epsilon: \Xi\to \bbC\,,~v\mapsto \epsilon(v)=0$, which
associates to all vector fields the number zero.

From the vector space $\Xi$ we can always construct the free associative and unital algebra $\mathcal{A}_\text{free}$.
Elements of $\mathcal{A}_\text{free}$ are finite sums of finite products of vector fields and the unit element $1$.
In order to encode information on the Lie algebra structure of $\bigl(\Xi,[\cdot,\cdot]\bigr)$,
we consider the ideal $\mathcal{I}$ generated by the elements $v\,w-w\,v-[v,w]$, for all $v,w\in\Xi$.
The {\it universal enveloping algebra} of the Lie algebra $\bigl(\Xi,[\cdot,\cdot]\bigr)$ is then defined
to be the factor algebra $U\Xi := \mathcal{A}_\text{free}/\mathcal{I}$.
Provided a representation, say tensor fields, of the Lie algebra $\bigl(\Xi,[\cdot,\cdot]\bigr)$, 
we can extend it to a left representation of $U\Xi$ by defining $\mathcal{L}_{\xi\,\eta}=\mathcal{L}_{\xi}\circ \mathcal{L}_\eta$,
for all $\xi,\eta\in U\Xi$, and $\mathcal{L}_1 = \id$. The latter definition allows us to interpret
 $1$ as the trivial diffeomorphism.

We now implement the three additional operations we have for the Lie algebra into $U\Xi$, starting with
the Leibniz rule. Note that (\ref{eqn:leibnizsimple}) gives us a prescription of how to
transform products of representations. In a more abstract language, not making use of the representation,
the information of (\ref{eqn:leibnizsimple}) can be encoded into a $\bbC$-linear map $\Delta:U\Xi\to U\Xi\otimes U\Xi$,
which on the generators reads
\begin{flalign}
  \Delta(v) = v\otimes 1 +1\otimes v~,\quad \Delta(1)=1\otimes 1~,
\end{flalign}
for all $v\in\Xi$. We can extend this map to $U\Xi$ by demanding multiplicativity $\Delta(\xi\,\eta)=\Delta(\xi)\,\Delta(\eta)$,
for all $\xi,\eta\in U\Xi$. The product on $U\Xi\otimes U\Xi$ is given by $(\xi\otimes \eta)\,(\xi^\prime\otimes \eta^\prime)=
\xi\,\xi^\prime\otimes \eta\,\eta^\prime$. We can easily check that this definition is consistent with the ideal $\mathcal{I}$
\begin{flalign}
 \nn \Delta(v\,w-w\,v) &= \bigl(v\otimes 1 + 1\otimes v\bigr)\,\bigl(w\otimes 1 + 1\otimes w\bigr) - (v\leftrightarrow w)\\
 \nn &=v\,w\otimes 1 + v\otimes w + w\otimes v + 1\otimes v\,w - (v\leftrightarrow w)\\
 & = [v,w]\otimes 1 + 1\otimes [v,w] = \Delta([v,w])~,
\end{flalign}
for all $v,w\in\Xi$.

Next, we implement the inverse operation $v_\text{inv}=-v\in\Xi$. Again on a more abstract level, we are looking for
a $\bbC$-linear map $S:U\Xi\to U\Xi$, which on the generators gives $S(v)=-v$, for all $v\in\Xi$, and $S(1)=1$,
since $1$ is the trivial diffeomorphism. Since we want to interpret $S$ as a map giving the inverse
of an element $\xi\in U\Xi$, it is natural to extend it to $U\Xi$ as an antimultiplicative map, 
i.e.~$S(\xi\,\eta)=S(\eta)\,S(\xi)$, for all $\xi,\eta\in U\Xi$. We can check that $S$ defined like this
is compatible with the ideal $\mathcal{I}$
\begin{flalign}
\nn S(v\,w-w\,v) &= S(v\,w)-S(w\,v) = S(w)\,S(v) - S(v)\,S(w) \\
 &= w\,v - v\,w = -[v,w] = S([v,w])~,
\end{flalign}
for all $v,w\in\Xi$.

It remains to extend the normalization $\epsilon$ to $U\Xi$. We define the $\bbC$-linear map $\epsilon:U\Xi\to\bbC$
on the generators by $\epsilon(v)=0$, for all $v\in\Xi$, $\epsilon(1)=1$, and extend it to $U\Xi$
multiplicatively. This definition is consistent with the ideal $\mathcal{I}$
\begin{flalign}
 \epsilon(v\,w-w\,v) = \epsilon(v)\epsilon(w)- \epsilon(w)\epsilon(v) = 0 = \epsilon([v,w])~,
\end{flalign}
 for all $v,w\in\Xi$.

Let us summarize this construction: Starting from the Lie algebra $\bigl(\Xi,[\cdot,\cdot]\bigr)$
we have constructed the universal enveloping algebra $U\Xi$. The intuitive notions of Leibniz rule, inverse and normalization,
which we have on the Lie algebra, were encoded on the level of $U\Xi$ in terms of $\bbC$-linear maps
$\Delta:U\Xi\to U\Xi\otimes U\Xi$, $S: U\Xi\to U\Xi$ and $\epsilon:U\Xi\to \bbC$.
While $\Delta$ and $\epsilon$ are multiplicative maps, the map $S$ associated to inversion is antimultiplicative.
The object which is of interest in the following is the quintuple 
$H=\bigl(U\Xi,\mu,\Delta,\epsilon,S\bigr)$, where $\mu$ denotes the
multiplication map in $U\Xi$, which above was written simply as juxtaposition.

The object $H$ we have derived above from physical considerations is a structure which is well-known
in mathematics, namely a {\it Hopf algebra}. We refer to Part \ref{part:math} for a mathematical definition
of Hopf algebras and we will continue in this section with our nontechnical treatment. 
Roughly speaking, a Hopf algebra is an algebra together with three maps $\Delta$, $S$ and $\epsilon$ as above,
which satisfy certain compatibility conditions. 
The map $\Delta$ is called the {\it coproduct}, $\epsilon$ the {\it counit} and $S$ the {\it antipode}. 
We shall now show that all these conditions hold true
for our explicit example $H=\bigl(U\Xi,\mu,\Delta,\epsilon,S\bigr)$. This proves that we are indeed
dealing with a Hopf algebra, which will be of great importance later in this part.
For $H$ to be a Hopf algebra, the following three conditions have to hold true, for all $\xi\in H$,
\begin{subequations}
\label{eqn:basichopfcond}
 \begin{flalign}
  \bigl(\Delta\otimes\id\bigr)\Delta(\xi) &= \bigl(\id\otimes \Delta\bigr)\Delta(\xi) ~,\\
  \bigl(\epsilon\otimes \id\bigr)\Delta(\xi) &= \xi = \bigl(\id\otimes \epsilon\bigr)\Delta(\xi)~,\\
  \mu\Bigl( \bigl(S\otimes \id\bigr)\Delta(\xi)\Bigr) &= \epsilon(\xi)\,1=\mu\Bigl( \bigl(\id\otimes S\bigr)\Delta(\xi)\Bigr)~,
 \end{flalign}
\end{subequations}
where $\mu(\xi\otimes\eta)=\xi\,\eta$ denotes the multiplication map. In order to check the conditions (\ref{eqn:basichopfcond})
we introduce a convenient notation according to Sweedler: For any $\xi\in H$ we write for the coproduct
$\Delta(\xi)=\xi_1\otimes \xi_2$ (sum understood). In this notation the conditions (\ref{eqn:basichopfcond}) read
\begin{subequations}
\label{eqn:basichopfcondshort}
\begin{flalign}
 \xi_{1_1}\otimes \xi_{1_2}\otimes \xi_2 &= \xi_1\otimes \xi_{2_1}\otimes \xi_{2_2}~,\\
 \epsilon(\xi_1)\,\xi_2 &= \xi = \xi_1\,\epsilon(\xi_2)~,\\
 S(\xi_1)\,\xi_2 &= \epsilon(\xi)\,1 = \xi_1\, S(\xi_2)~.
\end{flalign}
\end{subequations}
We first check these conditions on the level of the generators of $H$. For $\xi=1$ the conditions trivially hold true.
For $\xi=u\in\Xi$ the left hand side of the first condition reads
\begin{flalign}
 u_{1_1}\otimes u_{1_2}\otimes u_2 = u_1\otimes u_2 \otimes 1 + 1_1\otimes 1_2\otimes u = u\otimes 1\otimes 1+1\otimes u\otimes 1
+1\otimes 1\otimes u~.
\end{flalign}
Evaluating the right hand side we obtain the same result, thus the first condition holds true.
The second condition also holds for all generators $u\in \Xi$
\begin{flalign}
 \epsilon(u_1)\,u_2 = \epsilon(u)\,1 + \epsilon(1)\,u = u = u_1\,\epsilon(u_2)~.
\end{flalign}
Analogously, the third condition holds true
\begin{flalign}
 S(u_1)\,u_2 = S(u)\,1 + S(1)\,u = -u+u =0 =\epsilon(u)\,1= u_1\,S(u_2)~.
\end{flalign}
In order prove that the conditions (\ref{eqn:basichopfcond}) are satisfied for generic elements of $H$ it is sufficient
to show that they hold true for the product $\xi\,\eta\in H$, provided they hold for the individual $\xi,\eta\in H$.
In the notation of (\ref{eqn:basichopfcondshort}) we find
\begin{subequations}
\begin{flalign}
 \nn (\xi\,\eta)_{1_1}\otimes (\xi\,\eta)_{1_2}\otimes (\xi\,\eta)_2 &= \xi_{1_1}\,\eta_{1_1}\otimes \xi_{1_2}\,\eta_{1_2}\otimes \xi_2\,\eta_2\\
\nn&= \bigl(\xi_{1_1}\otimes\xi_{1_2}\otimes\xi_2\bigr)\,\bigl(\eta_{1_1}\otimes \eta_{1_2}\otimes \eta_2\bigr)\\
\nn&= \bigl(\xi_{1}\otimes\xi_{2_1}\otimes\xi_{2_2}\bigr)\,\bigl(\eta_{1}\otimes \eta_{2_1}\otimes \eta_{2_2}\bigr)\\
&= \xi_{1}\,\eta_{1}\otimes \xi_{2_1}\,\eta_{2_1}\otimes \xi_{2_2}\,\eta_{2_2}= (\xi\,\eta)_{1}\otimes (\xi\,\eta)_{2_1}\otimes (\xi\,\eta)_{2_2}~,
\end{flalign}
\begin{flalign}
 \epsilon((\xi\,\eta)_1)\,(\xi\,\eta)_2 = \epsilon(\xi_1\,\eta_1)\,\xi_2\,\eta_2 = \epsilon(\xi_1)\epsilon(\eta_1)\, \xi_2\,\eta_2 =\epsilon(\xi_1)\,\xi_2\,\epsilon(\eta_1)\,\eta_2
=\xi\,\eta~,
\end{flalign}
and
\begin{flalign}
 S((\xi\,\eta)_1)\,(\xi\,\eta)_2 = S(\xi_1\,\eta_1)\,\xi_2\,\eta_2 = S(\eta_1)\,S(\xi_1)\,\xi_2\,\eta_2 = \epsilon(\xi)\,\epsilon(\eta)\,1 = \epsilon(\xi\,\eta)\,1~.
\end{flalign}
\end{subequations}
The right hand sides of the second and third condition (\ref{eqn:basichopfcondshort}) are shown analogously.

The calculations performed above lead us to the following conclusion: We can associate to the Lie algebra of diffeomorphisms
$\bigl(\Xi,[\cdot,\cdot]\bigr)$ a Hopf algebra $H=\bigl(U\Xi,\mu,\Delta,\epsilon,S\bigr)$. This Hopf algebra will be called
the {\it Hopf algebra of diffeomorphisms}. It includes information on the Leibniz rule (via the coproduct $\Delta$)
and on the inverse of a diffeomorphism (via the antipode $S$), and further has a normalization operation (via the counit $\epsilon$).
Vice versa, we can extract the Lie algebra $\bigl(\Xi,[\cdot,\cdot]\bigr)$ from $H$ as follows:
As a vector space, $\Xi$ is isomorphic to the space of all elements $\xi\in H$ with coproduct
\begin{flalign}
\label{eqn:lgcoprod}
\Delta(\xi) = \xi\otimes 1 + 1\otimes \xi~.
\end{flalign}
This vector space can be equipped with a Lie bracket by employing the commutator
$[\xi,\eta]=\xi\,\eta -\eta\,\xi$, for all $\xi,\eta\in H$. The resulting Lie algebra is isomorphic
to $\bigl(\Xi,[\cdot,\cdot]\bigr)$.

%%%%%%%%%%%%%%%%%%%%%%%%%%%%%%%%%%%%%%%%%%%%%%%%%%%%%%%

\section{Deformed diffeomorphisms}
The Hopf algebra of  diffeomorphisms $H$, as well as every Hopf algebra constructed from a Lie algebra
along the lines presented above, has a particular feature: Taking the coopposite coproduct $\Delta^\cop(\xi)=\xi_2\otimes \xi_1$
of a generic element $\xi\in H$ agrees with the coproduct itself, i.e.~$\Delta=\Delta^\cop$. 
This property is called {\it cocommutativity} and correspondingly the Hopf algebra $H$ is called a {\it cocommutative
Hopf algebra}. From the point of view of Hopf algebra theory this is a very special feature, which allows for 
generalizations. As we will point out in this section, deformations of $H$ will in general not be cocommutative anymore,
but the coopposite coproduct will be equal to the coproduct up to conjugation by an element $R\in H\otimes H$, called 
the universal $R$-matrix. The same $R$-matrix will also appear in the commutation relations of two quantized functions,
thus it encodes the noncommutative structure of spacetime.

In order to deform $H$ we have to introduce a deformation parameter $\lambda$, which in physical situations
shall be related to the Planck length. For our investigations we treat $\lambda$ as a formal parameter, which can be seen
as a perturbative approach. Nonperturbative, i.e.~convergent, deformations are mathematically much more involved and rare, such
that for an investigation of the leading effects of our deformations using a formal approach
seems reasonable. As a drawback of the formal approach we have to extend the Hopf algebra $H$ and the complex numbers
$\bbC$ by formal powers of $\lambda$, denoted by $H[[\lambda]]$ and $\bbC[[\lambda]]$, respectively. Elements of $H[[\lambda]]$
are given by formal power series $\xi=\sum_{n=0}^\infty\lambda^n\,\xi_{(n)}$, where $\xi_{(n)}\in H$ for all $n\geq 0$.
The zeroth order describes the classical part and higher orders describe possible corrections due to the deformation.
The extension $H[[\lambda]]$ is again a Hopf algebra: The algebra structure is given by
\begin{flalign}
 \xi+\eta = \sum\limits_{n=0}^\infty \lambda^n\,(\xi_{(n)}+\eta_{(n)})~,\quad \xi\,\eta=\sum\limits_{n=0}^\infty\lambda^n
\sum\limits_{m+k=n} \xi_{(m)}\,\eta_{(k)}~,
\end{flalign}
and the maps $\Delta,\epsilon,S$ are defined componentwise, e.g.~
\begin{flalign}
 \Delta(\xi)=\sum\limits_{n=0}^\infty\lambda^n\,\Delta(\xi_{(n)})~.
\end{flalign}
For mathematical details on formal power series, which we do not require for the present part,
 we refer to the Appendix \ref{app:basicsdefq}.
In the remaining part the formal power series extension will be implicitly understood and we drop
$[[\lambda]]$ for notational convenience. This is a typical convention in the physics literature.

We now consider deformations of the Hopf algebra $H$.
A {\it Drinfel'd twist} is an invertible element $\mathcal{F}\in H\otimes H$ satisfying the following two conditions
\begin{subequations}
\label{eqn:twistcondbas}
\begin{flalign}
 &\mathcal{F}_{12}\,(\Delta\otimes\id)\mathcal{F} = \mathcal{F}_{23}\, (\id\otimes\Delta)\mathcal{F}~,\\
 &(\epsilon\otimes \id)\mathcal{F} = 1 = (\id\otimes \epsilon)\mathcal{F}~,
\end{flalign}
\end{subequations}
where $\mathcal{F}_{12}=\mathcal{F}\otimes 1$ and $\mathcal{F}_{23}=1\otimes\mathcal{F}$.
We additionally demand that $\mathcal{F} = 1\otimes 1 +\mathcal{O}(\lambda)$ in order to leave
the zeroth order unchanged, which is reasonable since we interpret the zeroth order as the classical part.
As an aside, this object is precisely a normalized $2$-cocycle of the Hopf algebra $H$.
From the physics perspective, the two conditions (\ref{eqn:twistcondbas}) have a 
direct consequence for the noncommutative geometry we are going to construct later:
The first condition ensures that the $\star$-product is associative,
while the second leads to trivial $\star$-multiplications of functions with the unit element.
For later convenience we introduce the notation $\mathcal{F}= f^\alpha\otimes f_\alpha$ and $\mathcal{F}^{-1}=\bar f^\alpha\otimes
\bar f_\alpha$ (sum over $\alpha$ understood) for the twist and its inverse. Note that 
 $f^\alpha,f_\alpha,\bar f^\alpha,\bar f_\alpha$ are elements in $H$.

Provided a twist $\mathcal{F}$ of the Hopf algebra $H$, there is a well-known theorem telling us that we
can construct a new Hopf algebra $H^\mathcal{F} := \bigl(U\Xi,\mu,\Delta^\mathcal{F},\epsilon,S^\mathcal{F}\bigr)$
by deforming the coproduct and antipode according to
\begin{flalign}
 \Delta^\mathcal{F}(\xi):=\mathcal{F}\,\Delta(\xi)\,\mathcal{F}^{-1}~,\quad S^\mathcal{F}(\xi):=\chi\, S(\xi)\,\chi^{-1}~,
\end{flalign}
where $\chi:=f^\alpha\,S(f_\alpha)$ and $\chi^{-1} =S(\bar f^\alpha)\,\bar f_\alpha$.
See for example Majid's book \cite{Majid:1996kd} for more details.
We introduce the short notation $\Delta^\mathcal{F}(\xi)=\xi_{1_\mathcal{F}}\otimes \xi_{2_\mathcal{F}}$ 
for the deformed coproduct.

Given the deformed Hopf algebra of diffeomorphisms $H^\mathcal{F}$ the question arises if, and in which sense, 
it is different to the Hopf algebra $H$ we started with. For this remember that $H$ is a cocommutative Hopf algebra,
i.e.~$\Delta^\cop=\Delta$. If we now consider the deformed coproduct $\Delta^\mathcal{F}$ 
we obtain for an arbitrary $\xi\in H^\mathcal{F}$
\begin{flalign}
 (\Delta^{\mathcal{F}})^\cop(\xi) = 
\mathcal{F}_{21}\,\Delta^\cop(\xi)\,\mathcal{F}^{-1}_{21}=\mathcal{F}_{21}\,\mathcal{F}^{-1}\,\mathcal{F}\,\Delta(\xi)
\,\mathcal{F}^{-1}\,\mathcal{F}\,\mathcal{F}^{-1}_{21}= \mathcal{F}_{21}\,\mathcal{F}^{-1}\,\Delta^\mathcal{F}(\xi)\,\mathcal{F}
\,\mathcal{F}^{-1}_{21}~,
\end{flalign}
where $\mathcal{F}_{21}=f_\alpha\otimes f^\alpha $ and $\mathcal{F}^{-1}_{21} = \bar f_\alpha \otimes \bar f^\alpha$.
Thus, the coopposite deformed coproduct is related to $\Delta^\mathcal{F}$ by conjugation of
the element $R=\mathcal{F}_{21}\mathcal{F}^{-1}\in H^\mathcal{F}\otimes H^\mathcal{F}$, 
for all $\xi\in H^\mathcal{F}$,
\begin{flalign}
\label{eqn:Rprop1}
 (\Delta^\mathcal{F})^\cop(\xi) = R\,\Delta^\mathcal{F}(\xi)\, R^{-1}~.
\end{flalign}
This element is called a {\it universal $R$-matrix}. We introduce the convenient notation
$R = R^\alpha\otimes R_\alpha$ and $R^{-1}=\bar R^\alpha \otimes \bar R_\alpha$ (sum over $\alpha$ understood)
for the $R$-matrix and its inverse.
The $R$-matrix satisfies in addition to (\ref{eqn:Rprop1}) the following conditions
\begin{flalign}
\label{eqn:Rprop2}
 (\Delta^\mathcal{F}\otimes \id)R = R_{13}\,R_{23}~,\quad (\id\otimes \Delta^\mathcal{F})R = R_{13} R_{12}~,\quad R_{21} = R^{-1}~,
\end{flalign}
where $R_{12} = R\otimes 1$, $R_{23} = 1\otimes R$, $R_{13} =R^\alpha\otimes 1 \otimes R_\alpha $ and $R_{21} = R_\alpha\otimes R^\alpha$.
A Hopf algebra together with an $R$-matrix satisfying (\ref{eqn:Rprop1}) and (\ref{eqn:Rprop2})
is called a {\it triangular Hopf algebra}. If the last property in (\ref{eqn:Rprop2}) does not hold true,
the Hopf algebra is called {\it quasitriangular}.

Note that while both $H^\mathcal{F}$ and $H$ are triangular Hopf algebras\footnote{
The $R$-matrix of $H$ is the trivial one $1\otimes 1$.
}, $H^\mathcal{F}$ is in general not cocommutative (see the example below).
This means that $H^\mathcal{F}$ is structurally different to the Hopf algebras generated by Lie algebras via
the universal enveloping algebra construction. As a consequence of this non-cocommutative 
behavior of $H^\mathcal{F}$, we will obtain a noncommutative structure on the ``spaces'' the Hopf algebra acts on, 
e.g.~the algebra of functions on $\MM$.

Let us present an explicit example of a particular deformed Hopf algebra of diffeomorphisms.
We consider $\MM=\bbR^N$ and denote by $x^\mu$, $\mu=1,\dots,N$, global coordinate functions on $\MM$.
The derivatives $\partial_\mu$ along $x^\mu$ provide a global basis of $\Xi$, such that every vector field
$v\in \Xi$ can be written as $v=v^{\mu}(x)\partial_\mu$, with coefficient functions $v^\mu\in C^\infty(\bbR^N)$.
In particular, $\partial_\mu\in \Xi$ are globally defined vector fields for all $\mu=1,\dots,N$.
Consider the following element in $U\Xi\otimes U\Xi $
\begin{flalign}
\label{eqn:moyaltwistbas}
\mathcal{F}=\exp\left(-\frac{i\lambda}{2}\Theta^{\mu\nu}\partial_\mu\otimes\partial_\nu\right)~, 
\end{flalign}
where $\Theta^{\mu\nu}$ is a constant and antisymmetric $N\times N$-matrix.
A short calculation shows that $\mathcal{F}$ satisfies (\ref{eqn:twistcondbas}) and is invertible
via
\begin{flalign}
 \mathcal{F}^{-1} = \exp\left(\frac{i\lambda}{2}\Theta^{\mu\nu}\partial_\mu\otimes\partial_\nu\right)~.
\end{flalign}
This means that $\mathcal{F}$ is a twist of $H$, the so-called Moyal-Weyl twist. 
The $R$-matrix of the deformed Hopf algebra $H^\mathcal{F}$ reads
\begin{flalign}
 R= \mathcal{F}_{21}\,\mathcal{F}^{-1} = \exp\bigl(i\lambda\Theta^{\mu\nu}\partial_\mu\otimes\partial_\nu\bigr)~,
\end{flalign}
and thus is nontrivial. The deformed coproduct of the vector fields $\partial_\mu\in H^\mathcal{F}$ is undeformed, i.e.~
\begin{flalign}
 \Delta^\mathcal{F}(\partial_\mu) = \Delta(\partial_\mu)~,
\end{flalign}
since all $\partial_\mu$ mutually commute. However, the deformed coproduct of the vector field $v=M^\nu_\mu x^\mu\partial_\nu\in \Xi$,
where $M^\nu_\mu$ is a constant $N\times N$-matrix, reads
\begin{flalign}
 \Delta^\mathcal{F}(v) = v\otimes 1 + 1\otimes v -\frac{i\lambda}{2} \bigl(\Theta^{\rho\mu}M_\mu^\nu - \Theta^{\nu\mu}M_\mu^\rho
\bigr) \partial_\rho\otimes \partial_\nu ~.
\end{flalign}

As it will become more clear in the following section, the Hopf algebra $H^\mathcal{F}$ with $\mathcal{F}$ given by
(\ref{eqn:moyaltwistbas}) describes the diffeomorphism symmetries of the Moyal-Weyl space $\bbR^N_\Theta$, 
which is a noncommutative space.
Equipping $\bbR^N$ with the Minkowski metric $g=-dx^1\otimes dx^1+\sum_{i=2}^N dx^i\otimes dx^i$ 
and restricting the vector fields to Killing vector fields $\mathfrak{K}\subseteq \Xi$,
we obtain the deformed isometry Hopf algebra $(U\mathfrak{K},\mu,\Delta^\mathcal{F},\epsilon,S^\mathcal{F})$ 
of the noncommutative space $\bigl(\bbR_\Theta^N,g\bigr)$.
This Hopf algebra is also called the $\Theta$-twisted Poincar{\'e} Hopf algebra
\cite{Chaichian:2004za,Chaichian:2004yh}.

%%%%%%%%%%%%%%%%%%%%%%%%%%%%%%%%%%%%%%%%%%%%%%%%%%%%%%%

\section{Noncommutative differential geometry}
In classical differential geometry, the Lie algebra of vector fields $\bigl(\Xi,[\cdot,\cdot]\bigr)$
acts on tensor fields via the Lie derivative $\mathcal{L}$. As explained above, this action extends
to a left action of $H$ via $\mathcal{L}_{\xi\,\eta}=\mathcal{L}_{\xi}\circ \mathcal{L}_{\eta}$, for all
$\xi,\eta\in H$, and $\mathcal{L}_1=\id$. The aim of this section is to construct
deformations of scalar, vector and tensor fields, which transform covariantly under the deformed 
Hopf algebra of  diffeomorphisms $H^\mathcal{F}$. 
In order to simplify the notation we will suppress the symbol $\mathcal{L}$
for the Lie derivative and simply write $\xi(\cdot):=\mathcal{L}_\xi(\cdot)$, for all
$\xi\in U\Xi$.

Let us first focus on the simplest type of tensor field, namely the smooth and complex functions $C^\infty(\MM)$.
This space can be equipped with an algebra structure by employing the pointwise
multiplication $(h\,k)(x)=h(x)\,k(x)$, for all $h,k\in C^\infty(\MM)$. The algebra structure
is covariant under the Hopf algebra $H$, since the following properties hold true, for all
$1,h,k\in C^\infty(\MM)$ and $\xi\in H$,
\begin{flalign}
 \xi(h\,k) = \xi_1(h)\,\xi_2(k)~,\quad \xi(1)=\epsilon(\xi)\,1~.
\end{flalign}
In mathematical terms, this means that the algebra $\bigl(C^\infty(\MM),\cdot\bigr)$ equipped with the pointwise 
multiplication $\cdot$ is a left $H$-module algebra. 

For nontrivial deformations $\mathcal{F}$, the algebra $\bigl(C^\infty(\MM),\cdot\bigr)$ fails to be covariant
under $H^\mathcal{F}$ (see the example below). This failure is due to the nontrivial coproduct structure
on $H^\mathcal{F}$. However, the algebra $\bigl(C^\infty(\MM),\cdot\bigr)$ can be made covariant under
$H^\mathcal{F}$, if we deform the product accordingly to $\mathcal{F}$. This is a well-known and very general theorem
in mathematics, see e.g.~\cite{Majid:1996kd} and Part \ref{part:math}.
In this section we explain this deformation using examples.
For this consider the following deformed multiplication ($\star$-product)
\begin{flalign}
 h\star k := \bar f^\alpha(h)\,\bar f_\alpha(k)~,
\end{flalign}
for all $h,k\in C^\infty(\MM)$, where $\bar f^\alpha\otimes \bar f_\alpha=\mathcal{F}^{-1}$ is the inverse twist.
Due to the properties (\ref{eqn:twistcondbas}) of the twist the $\star$-product is associative, 
i.e.~$(h\star k)\star l= h\star(k\star l)$, and fulfills $1\star h = h = h\star 1$.
Thus, $\bigl(C^\infty(\MM),\star\bigr)$ is an associative algebra with unit, which however is in general noncommutative
(see the example below). The noncommutativity is governed by the inverse $R$-matrix, since
\begin{flalign}
\nn h\star k &=\bar f^\alpha(h)\,\bar f_\alpha(k) = \bar f_\alpha(k)\,\bar f^\alpha(h)\\
 &=(\bar f^\beta \, f^\gamma\,\bar f_\alpha)(k)\,(\bar f_\beta \, f_\gamma\,\bar f^\alpha)(h) 
= \bar R^\alpha(k)\star \bar R_\alpha(h)~,
\end{flalign}
for all $h,k\in C^\infty(\MM)$. In the second line we have inserted the unit $1\otimes 1 = \mathcal{F}^{-1}\,\mathcal{F}$.
 
We now focus on the action of $H^\mathcal{F}$ on the algebra $\bigl(C^\infty(\MM),\star\bigr)$.
Since $H^\mathcal{F}$ and $H$ are equal as algebras, a left action of $H^\mathcal{F}$ on the vector space
$C^\infty(\MM)$ is given by the usual Lie derivative. 
It remains to check if $\bigl(C^\infty(\MM),\star\bigr)$ is covariant under $H^\mathcal{F}$, i.e.~if
the deformed algebra is a left $H^\mathcal{F}$-module algebra. 
We obtain by an explicit calculation
\begin{flalign}
 \nn \xi(h\star k ) &= \xi\bigl(\bar f^\alpha(h)\,\bar f_\alpha(k)\bigr)=
 (\xi_1\,\bar f^\alpha)(h)\,(\xi_2\,\bar f_\alpha)(k)\\
 &=(\bar f^\beta \, f^\gamma\,\xi_1\,\bar f^\alpha)(h)\,(\bar f_\beta \,f_\gamma\,\xi_2\,\bar f_\alpha)(k) 
= \xi_{1_\mathcal{F}}(h)\star\xi_{2_\mathcal{F}}(k)~,
\end{flalign}
which means that the deformed coproduct $\Delta^\mathcal{F}$ is compatible with the $\star$-multiplication.

This observation allows us to make the following interpretation of $H^\mathcal{F}$:
Similarly as $H$ describes the diffeomorphism symmetries of the classical manifold $\MM$, 
the deformed Hopf algebra $H^\mathcal{F}$ describes the diffeomorphism symmetries of the
noncommutative space $\bigl(C^\infty(\MM),\star\bigr)$.

Before going on in the construction of a noncommutative differential geometry let us discuss
an explicit example. Let $\MM=\bbR^N$ and $\mathcal{F}$ be the Moyal-Weyl twist (\ref{eqn:moyaltwistbas}).
The corresponding $\star$-product then reads
\begin{flalign}
 h\star k = h\,e^{\frac{i\lambda}{2}\overleftarrow{\partial_\mu}\Theta^{\mu\nu}\overrightarrow{\partial_\nu}}\,k~,
\end{flalign}
which is the usual Moyal-Weyl product.
The algebra $\bigl(C^\infty(\bbR^N),\star\bigr)$ transforms covariantly under the deformed diffeomorphisms
$H^\mathcal{F}$. Due to the nontrivial $R$-matrix of $H^\mathcal{F}$, the algebra  $\bigl(C^\infty(\bbR^N),\star\bigr)$
is noncommutative. In particular, the commutation relations of the coordinate functions $x^\mu\in C^\infty(\bbR^N)$
read
\begin{flalign}
 [x^\mu\stackrel{\star}{,}x^\nu] = x^\mu\star x^\nu - x^\nu\star x^\mu = i\lambda\Theta^{\mu\nu}\,1~.
\end{flalign}

We proceed in the construction of a noncommutative differential geometry.
In classical differential geometry an object of central interest is the exterior algebra
of differential forms $\bigl(\Omega^\bullet:=\bigoplus_{n=0}^N\,\Omega^n,\wedge,\dd\bigr)$, where
$\Omega^n$ is the space of smooth and complex $n$-forms, $\wedge:\Omega^n\otimes \Omega^m\to \Omega^{m+n}$ the wedge product
and $\dd:\Omega^n\to\Omega^{n+1}$ the exterior differential. Note that $\Omega^0=C^\infty(\MM)$.
The exterior algebra is graded commutative, i.e.~$\omega\wedge\omega^\prime =(-1)^{\deg(\omega)\deg(\omega^\prime)}
 \omega^\prime\wedge\omega$, and it is covariant under $H$, since for all $\omega,\omega^\prime\in \Omega^\bullet$
and $\xi\in H$
\begin{flalign}
 \xi(\omega\wedge\omega^\prime) = \xi_1(\omega)\wedge \xi_2(\omega^\prime)~.
\end{flalign}
The differential $\dd$ is equivariant under the action of $H$, for all $\omega\in \Omega^\bullet$ and $\xi\in H$,
\begin{flalign}
\label{eqn:equvariancebas} \xi(\dd\omega) = \dd(\xi(\omega))~,
\end{flalign}
and satisfies the graded Leibniz rule, for all $\omega,\omega^\prime\in \Omega^\bullet$,
\begin{flalign}
 \dd(\omega\wedge\omega^\prime) = (\dd\omega)\wedge\omega^\prime+ (-1)^{\deg(\omega)}~\omega\wedge(\dd\omega^\prime)~.
\end{flalign}
Furthermore, the wedge product in $\Omega^\bullet$ provides us with a $\bigl(C^\infty(\MM),\cdot\bigr)$-bimodule structure
on the space of $n$-forms. This is simply the pointwise multiplication of an $n$-form by a function from
left or right, respectively. Due to the $H$-covariance of the exterior algebra,
the bimodule structure is automatically covariant under $H$, i.e.~
\begin{flalign}
 \xi(h\,\omega) = \xi_1(h)\,\xi_2(\omega)~,\quad 
\xi(\omega\,h) = \xi_1(\omega)\,\xi_2(h)~,
\end{flalign}
for all $h\in C^\infty(\MM)$, $\omega\in\Omega^n$ and $\xi\in H$.

In order to render $\Omega^\bullet$ covariant under the deformed Hopf algebra of diffeomorphisms $H^\mathcal{F}$ we
introduce the $\star$-wedge product
\begin{flalign}
 \omega\wedge_\star\omega^\prime := \bar f^\alpha(\omega)\wedge \bar f_\alpha(\omega^\prime)~,
\end{flalign}
for all $\omega,\omega^\prime\in \Omega^\bullet$. Due to (\ref{eqn:twistcondbas}) this product is associative and 
satisfies $1\wedge_\star \omega = \omega = \omega\wedge_\star 1$, for all $\omega\in \Omega^\bullet$. 
It also defines a $\bigl(C^\infty(\MM),\star\bigr)$-bimodule structure on the space on $n$-forms.
The covariance of these structures under $H^\mathcal{F}$ is easily checked and we obtain
\begin{flalign}
 \xi(\omega\wedge_\star\omega^\prime)  = \xi_{1_\mathcal{F}}(\omega)\wedge_\star
\xi_{2_\mathcal{F}}(\omega^\prime)~,
\end{flalign}
for all $\omega,\omega^\prime\in\Omega^\bullet$ and $\xi\in H^\mathcal{F}$.
Analogously to the case of the deformed algebra of functions, the inverse $R$-matrix determines the
deviation of $\bigl(\Omega^\bullet,\wedge_\star\bigr)$ from being graded commutative, more precisely we have
\begin{flalign}
\omega\wedge_\star\omega^\prime = (-1)^{\deg(\omega)\deg(\omega^\prime)} \bar R^\alpha(\omega^\prime)
\wedge_\star\bar R_\alpha(\omega)~,
\end{flalign}
for all $\omega,\omega^\prime\in \Omega^\bullet$.

The deformed exterior algebra can be equipped with a differential. Due to the equivariance property
(\ref{eqn:equvariancebas}) the undeformed differential satisfies, for all $\omega,\omega^\prime\in \Omega^\bullet$,
\begin{flalign}
 \dd(\omega\wedge_\star\omega^\prime) = (\dd\omega)\wedge_\star \omega^\prime + (-1)^{\deg(\omega)}\omega\wedge_\star
(\dd\omega^\prime)~,
\end{flalign}
and thus is a differential on $\bigl(\Omega^\bullet,\wedge_\star\bigr)$. We call 
$\bigl(\Omega^\bullet,\wedge_\star, \dd\bigr)$ the {\it deformed differential calculus}.

Another important geometric object we want to deform is the space of smooth and complex vector fields $\Xi$.
This space has no algebra structure, but it is a bimodule over the algebra $\bigl(C^\infty(\MM),\cdot\bigr)$.
The left and right action is given by the pointwise multiplication of a vector field by a function.
The bimodule structure is $H$-covariant, i.e.~for all $h\in C^\infty(\MM)$, $v\in \Xi$ and $\xi\in H$ we have
\begin{flalign}
  \xi(h\,v) = \xi_1(h)\,\xi_2(v)~,\quad 
\xi(v\,h) = \xi_1(v)\,\xi_2(h)~.
\end{flalign}
Analogously to the case of $n$-forms above we introduce the deformed left and right multiplication
\begin{flalign}
  h\star v := \bar f^\alpha(h)\,\bar f_\alpha(v)~,\quad 
 v\star h := \bar f^\alpha(v)\,\bar f_\alpha(h)~,
\end{flalign}
turning $\Xi$ into a $\bigl(C^\infty(\MM),\star\bigr)$-bimodule, which is covariant under $H^\mathcal{F}$.

Since in differential geometry vector fields and one-forms are dual to each other, we have a
contraction $\pair{\cdot}{\cdot}:\Xi\times \Omega^1\to C^\infty(\MM)$. The contraction map 
$\Omega^1\times \Xi\to C^\infty(\MM)$ will be denoted for notational simplicity by the same symbol.
These contractions are $H$-covariant in the sense that
\begin{flalign}
 \xi\bigl(\pair{v}{\omega}\bigr) = \pair{\xi_1(v)}{\xi_2(\omega)}~,
\end{flalign}
for all $v\in\Xi$, $\omega\in\Omega^1$ and $\xi\in H$. Furthermore, they satisfy the
important property
\begin{flalign}
 \pair{h\,v}{k\,\omega\,l}=h\,\pair{v\,k}{\omega}\,l~,
\end{flalign}
for all $h,k,l\in C^\infty(\MM)$, $v\in\Xi$ and $\omega\in\Omega^1$. In order to make
$\pair{\cdot}{\cdot}$ covariant under $H^\mathcal{F}$ we use again the inverse twist and define
\begin{flalign}
 \pair{v}{\omega}_\star:= \pair{\bar f^\alpha(v)}{\bar f_\alpha(\omega)}~,
\end{flalign}
for all $v\in\Xi$ and $\omega\in \Omega^1$. It follows that the $\star$-contraction
satisfies, for all $h,k,l\in C^\infty(\MM)$, $v\in\Xi$ and $\omega\in\Omega^1$,
\begin{flalign}
  \pair{h\star v}{k\star\omega\star l}_\star=h\star\pair{v\star k}{\omega}_\star\star l~.
\end{flalign}

The next step in the construction of the noncommutative differential geometry is 
to define deformed tensor fields. 
Consider the tensor algebra $\bigl(\mathcal{T},\otimes_A\bigr)$
generated by $\Omega^1$ and $\Xi$. 
The symbol $\otimes_A$ denotes the tensor product over the algebra $A=\bigl(C^\infty(\MM),\cdot\bigr)$.\footnote{
Remember that given two bimodules $V,W$ over an algebra
$A$, we can define their tensor product (over $A$) $V\otimes_A W$ as follows:
Consider the free $A$-bimodule $(V\times W)_{\mathrm{free}}$ generated by the Cartesian product $V\times W$
and the $A$-subbimodule $\mathcal{N}$ generated by the elements
\begin{subequations}
\begin{flalign}
 &(v+v^\prime,w) - (v,w) - (v^\prime,w)~,\quad (v,w+w^\prime) -(v,w) - (v,w^\prime)~,\\
 &(a\,v,w)-a\,(v,w)~,\quad(v\,a,w) - (v,a\,w)~,\quad (v,w\,a) - (v,w)\,a~,
\end{flalign}
\end{subequations}
for all $a\in A$, $v,v^\prime\in V$ and $w,w^\prime\in W$. The tensor product $V\otimes_A W$
is defined by the quotient $A$-bimodule $V\otimes_A W := (V\times W)_{\mathrm{free}}/\mathcal{N}$.
We denote by $v\otimes_A w$ the image of $(v,w)\in V\times W$ under the natural map $V\times W\to V\otimes_A W$.
}
The tensor product is by definition of the Lie derivative
$H$-covariant, i.e.~
\begin{flalign}
  \xi(\tau\otimes_A\tau^\prime) = \xi_1(\tau)\otimes_A \xi_2(\tau^\prime)~,
\end{flalign}
 for all $\tau,\tau^\prime\in\mathcal{T}$ and $\xi\in H$. 
The desired $H^\mathcal{F}$-covariance of the tensor algebra is obtained by introducing the
deformed tensor product
\begin{flalign}
  \tau\otimes_{A_\star}\tau^\prime := \bar f^\alpha(\tau)\otimes_A\bar f_\alpha(\tau^\prime)~,
\end{flalign}
for all $\tau,\tau^\prime\in \mathcal{T}$. 
Let $\tau=\tau^\alpha\otimes_{A_\star}\tau_\alpha\in\mathcal{T}$ (sum over $\alpha$ understood) 
be a tensor field with $\tau^\alpha,\tau_\alpha \in \Xi$ for all $\alpha$ (the same holds true for $\Omega^1$). 
We say that $\tau$ is symmetric/antisymmetric, if
\begin{flalign}
 \tau = \pm\,  \bar R^\beta(\tau_\alpha)\otimes_{A_\star} \bar R_\beta(\tau^\alpha)~.
\end{flalign}
We can always evaluate the $\star$-tensor product
and write the tensor field $\tau$ in terms of the usual tensor product 
$\tau = \widetilde{\tau}^\alpha\otimes_A\widetilde{\tau}_\alpha$.
Our definition of symmetry/antisymmetry in this basis reduces to the usual definition
$\tau = \pm \widetilde{\tau}_\alpha\otimes_A \widetilde{\tau}^\alpha$.
Thus, every classical symmetric/antisymmetric tensor field is also a deformed symmetric/antisymmetric tensor field.
As examples, deformed two-forms are antisymmetric tensor fields and the deformed metric field to be defined later
will be symmetric.

%%%%%%%%%%%%%%%%%%%%%%%%%%%%%%%%%%%%%%%%%%%%%%%%%%%%%%%

\section{\label{sec:QLA}Quantum Lie algebras}
In the sections above the focus was on the deformed Hopf algebra $H^\mathcal{F}$
of diffeomorphisms. Since we require this concept later, we
now show that we can associate to $H^\mathcal{F}$ a {\it quantum Lie algebra},
similarly as we could associate to $H$ the Lie algebra $\bigl(\Xi,[\cdot,\cdot]\bigr)$
of vector fields. We will be rather nontechnical in this section and refer to 
\cite{Aschieri:2005zs} for details.

In order to better understand the construction of the quantum Lie algebra, we first show that there is a Hopf
algebra $H_\star$, which is isomorphic to $H^\mathcal{F}$, but more convenient for later.
Note that the undeformed Hopf algebra $H$ acts on itself via the adjoint action
\begin{flalign}
 \text{Ad}_\xi(\eta):=  \xi_1\,\eta\,S(\xi_2)~.
\end{flalign}
More precisely, the algebra $\bigl(U\Xi,\mu\bigr)$ is a left $H$-module algebra.
Using the same twist deformation methods as above, we deform this module algebra
by introducing the $\star$-product
\begin{flalign}
 \xi\star\eta := \text{Ad}_{\bar f^\alpha}(\xi)\,\text{Ad}_{\bar f_\alpha}(\eta)~,
\end{flalign}
for all $\xi,\eta\in U\Xi$.
There is a remarkable relation between the algebras $\bigl(U\Xi,\mu_\star\bigr)$ and
$\bigl(U\Xi,\mu\bigr)$, namely they are isomorphic. 
This has been observed first in \cite{0787.17010} for this particular example
and we later found out that an analogous statement holds true for more general classes 
of algebras (see Chapter \ref{chap:HAdef}).
The algebra isomorphism $D:\bigl(U\Xi,\mu_\star\bigr)\to \bigl(U\Xi,\mu\bigr)$ is given by
\begin{flalign}
 D(\xi) := \text{Ad}_{\bar f^\alpha}(\xi)\,\bar f_\alpha~,
\end{flalign}
for all $\xi\in U\Xi$. For a proof we refer to Chapter \ref{chap:HAdef}.
Due to this isomorphism we can pull-back the Hopf algebra structure
on $H^\mathcal{F}$ to a Hopf algebra structure on $\bigl(U\Xi,\mu_\star\bigr)$.
We denote this Hopf algebra by $H_\star=\bigl(U\Xi,\mu_\star,\Delta_\star,\epsilon_\star,S_\star\bigr)$,
where the coproduct, counit and antipode are given by
\begin{subequations}
\begin{flalign}
 \Delta_\star &:= (D^{-1}\otimes D^{-1})\circ\Delta^\mathcal{F}\circ D~,\\
 \epsilon_\star &:= \epsilon\circ D~,\\
 S_\star &:= D^{-1}\circ S^\mathcal{F}\circ D~. 
\end{flalign}
\end{subequations}
This Hopf algebra is also triangular with $R$-matrix
$R_\star := (D^{-1}\otimes D^{-1})(R)$.
Since $H_\star$ and $H^\mathcal{F}$ are isomorphic Hopf algebras, any representation of $H^\mathcal{F}$
is also a representation of $H_\star$. We denote the $H_\star$-action by the $\star$-Lie derivative,
for all $\xi\in U\Xi$,
\begin{flalign}
 \label{eqn:starliederbas} 
\mathcal{L}_\xi^\star := \mathcal{L}_{D(\xi)}~.
\end{flalign}

Let us now focus on the $\star$-coproduct $\Delta_\star$ of vector fields $\Xi$. One obtains
the very compact expression
\begin{flalign}
 \Delta_\star(v)= v\otimes 1 + D^{-1}(\bar R^\alpha)\otimes \bar R_{\alpha}(v)~,
\end{flalign}
for all $v\in \Xi$. 
Acting with the $\star$-Lie derivative on, for example, a $\star$-tensor product of tensor fields
we obtain
\begin{flalign}
 \mathcal{L}^\star_{v}(\tau\otimes_{A_\star} \tau^\prime) = \mathcal{L}^\star_{v}(\tau)\otimes_{A_\star} \tau^\prime
+ \bar R^\alpha(\tau)\otimes_{A_\star} \mathcal{L}^\star_{\bar R_\alpha(v)}(\tau^\prime)~,
\end{flalign}
for all $v\in\Xi$ and $\tau,\tau^\prime\in\mathcal{T}$.
Note that this is up to the $R$-matrix the usual Leibniz rule for a vector field.

With this observation we are able to construct a quantum Lie algebra in the sense of Woronowicz \cite{Woronowicz:1989rt}.
Consider the vector fields $\Xi$ equipped with the $\star$-Lie bracket
$[\cdot,\cdot]_\star:\Xi\times\Xi\to\Xi$ defined by
\begin{flalign}
 [v,w]_\star:= [\bar f^\alpha(v),\bar f_\alpha(w)]~,
\end{flalign}
for all $v,w\in\Xi$. This bracket satisfies the deformed antisymmetry property
\begin{flalign}
 [v,w]_\star = - [\bar R^\alpha(w),\bar R_\alpha(v)]_\star~,
\end{flalign}
and the deformed Jacobi identity
\begin{flalign}
 [v,[w,z]_\star]_\star = [[v,w]_\star,z]_\star + [\bar R^\alpha(w),[\bar R_\alpha(v),z]_\star]_\star~,
\end{flalign}
for all $v,w,z\in\Xi$. 
Note that the following three properties hold true:
\begin{enumerate}
 \item $\Xi$ generates $H_\star$
 \item $[\Xi,\Xi]_\star\subseteq \Xi$
 \item $\Delta_\star(\Xi)\subseteq \Xi\otimes 1 + U\Xi\otimes \Xi$
\end{enumerate}
Thus, $\bigl(\Xi,[\cdot,\cdot]_\star\bigr)$ is a quantum Lie algebra corresponding
to $H_\star$.
This quantum Lie algebra acts on deformed tensor fields via the $\star$-Lie derivative 
(\ref{eqn:starliederbas}).

Observe that the conditions a quantum Lie algebra has to satisfy are slight deformations of the 
Lie algebra case.  In particular, the deformed Lie bracket is allowed to be antisymmetric up to an $R$-matrix
and the coproduct is allowed to deviate in a controlled way from the usual Leibniz rule.

The quantum Lie algebra $\bigl(\Xi,[\cdot,\cdot]_\star\bigr)$ should be interpreted as the infinitesimal
deformed diffeomorphisms. Covariance under $\bigl(\Xi,[\cdot,\cdot]_\star\bigr)$ implies
covariance under $H_\star$, since $\Xi$ generates $H_\star$, and vice versa, which means
that we can now work completely on the level of quantum Lie algebras, without focusing
on the Hopf algebra $H_\star$.

As a last remark, the Hopf algebra $H_\star$ is more suitable for the definition of a quantum Lie algebra,
since its quantum Lie algebra is as a vector space simply $\Xi$.
Using the isomorphism $D$, we obtain that the quantum Lie algebra of $H^\mathcal{F}$
is given as a vector space by $D(\Xi)\subseteq U\Xi$, i.e.~higher order products of vector fields.
Working with $\Xi$ is simpler than working with $D(\Xi)$, this is why we prefer $H_\star$.
However, due to the isomorphism $H_\star\simeq H^\mathcal{F}$ both approaches 
are equivalent.

%%%%%%%%%%%%%%%%%%%%%%%%%%%%%%%%%%%%%%%%%%%%%%%%%%%%%%%

\section{\label{sec:ncgbaseistein}$\star$-covariant derivatives, curvature and Einstein equations}
In this section we introduce covariant derivatives, torsion and curvature
in the framework of the noncommutative differential geometry presented in the section above.
We will again suppress the symbol $\mathcal{L}$ for the (usual) Lie derivative in order to compactify notation,
i.e.~$\xi(\cdot) = \mathcal{L}_\xi(\cdot)$.

A {\it $\star$-covariant derivative} $\nab^\star_v$ along a vector field $v\in\Xi$
is a $\bbC$-linear map $\nab^\star_v:\Xi\to\Xi$ satisfying, for all $v,w,z\in\Xi$ and $h\in C^\infty(\MM)$,
\begin{subequations}
\label{eqn:starcovderbas}
\begin{flalign}
\nab^\star_{v+w}z &= \nab^\star_v z +\nab^\star_w z~,\\
\nab^\star_{h\star v} z &= h\star \nab^\star_v z~,\\
\nab^\star_{v}(h\star z) &= \mathcal{L}_v^\star(h)\star z +\bar R^\alpha(h)\star \nab^\star_{\bar R_\alpha(v)}z~.
\end{flalign}
\end{subequations}

Given a  $\star$-covariant derivative, its {\it $\star$-torsion} and {\it $\star$-curvature}
are $\bbC$-linear maps $\text{Tor}^\star:\Xi\otimes\Xi\to\Xi$ and $\text{Riem}^\star:\Xi\otimes\Xi\otimes\Xi\to \Xi$
defined by
\begin{subequations}
\begin{flalign}
  \text{Tor}^\star(v,w) &:= \nab^\star_v w - \nab^\star_{\bar R^\alpha(w)}\bar R_\alpha(v) - [v,w]_\star~,\\
  \text{Riem}^\star(v,w,z) &:= \nab^\star_v\nab^\star_wz - \nab^\star_{\bar R^\alpha(w)}\nab^\star_{\bar R_\alpha(v)}z
-\nab^\star_{[v,w]_\star}z~,
\end{flalign}
\end{subequations}
for all $v,w,z\in\Xi$. These maps satisfy the antisymmetry properties
\begin{subequations}
\begin{flalign}
  \text{Tor}^\star(v,w) &= - \text{Tor}^\star\bigl(\bar R^\alpha(w),\bar R_\alpha(v)\bigr)~,\\
  \text{Riem}^\star(v,w,z) &= -\text{Riem}^\star\bigl(\bar R^\alpha(w),\bar R_\alpha(v),z\bigr)~,
\end{flalign}
\end{subequations}
and 
\begin{subequations}
\begin{flalign}
  \text{Tor}^\star(h\star v,k\star w) &= h\star \text{Tor}^\star(v\star k,w)~,\\
  \text{Riem}^\star(h\star v,k\star w,l\star z) &= h\star \text{Riem}^\star(v\star k,w\star l,z)~,
\end{flalign}
\end{subequations}
for all $v,w,z\in\Xi$ and $h,k,l\in C^\infty(\MM)$.

In order to construct the {\it $\star$-Ricci tensor} $\text{Ric}^\star:\Xi\otimes \Xi \to C^\infty(\MM)$
we require a (local) basis $\lbrace e_a:a=1,\dots,N \rbrace$ of $\Xi$. 
The dual basis $\lbrace \theta^a:a=1,\dots,N \rbrace$ is defined by the conditions $\pair{e_a}{\theta^b}_\star = \delta_a^b$.
The $\star$-Ricci tensor is defined by the contraction
\begin{flalign}
\text{Ric}^\star(v,w):= \sum\limits_{a=1}^N\,\pair{\theta^a}{\text{Riem}^\star(e_a,v,w)}_\star~,
\end{flalign}
for all $v,w\in\Xi$. It is independent on the choice of basis and satisfies
\begin{flalign}
 \text{Ric}^\star(h\star v,k\star w) = h\star \text{Ric}^\star(v\star k,w)~,
\end{flalign}
for all $h,k\in C^\infty(\MM)$ and $v,w\in\Xi$.

Let us consider the example of $\MM=\bbR^N$
equipped with the Moyal-Weyl twist (\ref{eqn:moyaltwistbas}).
 In this case we have the global basis $\lbrace \partial_\mu\in\Xi:\mu=1,\dots,N \rbrace$ of vector fields $\Xi$.
The $\star$-covariant derivative is completely specified by the Christoffel symbols
\begin{flalign}
 \nab^\star_{\partial_\mu}\partial_\nu = \Gamma^{\star \rho}_{\mu\nu}\star \partial_\rho=\Gamma^{\star \rho}_{\mu\nu}\,\partial_\rho~,
\end{flalign}
where in the last equality we have used that the twist acts trivially on all $\partial_\rho$.
In this basis the $\star$-torsion, $\star$-curvature and $\star$-Ricci tensor reads
\begin{subequations}
\begin{flalign}
 \text{Tor}^\star(\partial_\mu,\partial_\nu) &= \bigl(\Gamma^{\star \rho}_{\mu\nu} - \Gamma^{\star \rho}_{\nu\mu}\bigr)\,\partial_\rho~, \\
 \text{Riem}^\star(\partial_\mu,\partial_\nu,\partial_\rho) &= \bigl(\partial_\mu\Gamma_{\nu\rho}^{\star \sigma} - 
\partial_\nu\Gamma_{\mu\rho}^{\star \sigma} + \Gamma_{\nu\rho}^{\star \tau} \star \Gamma_{\mu\tau}^{\star \sigma}-
 \Gamma_{\mu\rho}^{\star \tau} \star \Gamma_{\nu\tau}^{\star\sigma}\bigr)\,\partial_\sigma ~,\\
 \text{Ric}^\star(\partial_\nu,\partial_\rho) &= \partial_\mu\Gamma_{\nu\rho}^{\star \mu} - 
\partial_\nu\Gamma_{\mu\rho}^{\star \mu} + \Gamma_{\nu\rho}^{\star \tau} \star \Gamma_{\mu\tau}^{\star \mu}-
 \Gamma_{\mu\rho}^{\star \tau} \star \Gamma_{\nu\tau}^{\star\mu} ~,
\end{flalign}
\end{subequations}
where it was extensively used that the twist acts trivially on $\partial_\rho$.
For more general deformations, or a different basis of vector fields, 
these equations will be much more complicated, in the sense that 
$R$-matrices appear.

For a noncommutative gravity theory we require one more ingredient, a metric field
$g=g^\alpha\otimes_{A_\star} g_\alpha \in\Omega^1\otimes_{A_\star} \Omega^1$.
We demand $g$ to be symmetric, real and nondegenerate.
In particular, every classical metric field $g\in\Omega^1\otimes_{A} \Omega^1$
is also a metric field in the sense above.
The $\star$-inverse metric $g^{-1}=g^{-1\alpha}\otimes_{A_\star} g^{-1}_\alpha \in\Xi\otimes_{A_\star} \Xi$ is defined via the conditions
\begin{subequations}
\begin{flalign}
  \pair{\pair{v}{g}_\star}{g^{-1}}_\star &= \pair{v}{g^\alpha}_\star \star \pair{g_\alpha}{g^{-1\beta}}_\star \star g_\beta^{-1}=v~,\\
  \pair{\pair{\omega}{g^{-1}}_\star}{g}_\star &= \pair{\omega}{g^{-1\beta}}_\star \star\pair{g^{-1}_\beta}{g^\alpha}_\star\star g_{\alpha} = \omega~,
\end{flalign}
\end{subequations}
for all $v\in\Xi$ and $\omega\in\Omega^1$.

In Einstein's theory of general relativity the metric field constitutes the dynamical
degree of freedom and the covariant derivative is fixed to be the Levi-Civita connection,
i.e.~the unique torsion-free and metric compatible covariant derivative.
In noncommutative gravity, the existence and uniqueness of a $\star$-Levi-Civita connection, 
i.e.~torsion-free $\star$-covariant derivative which is compatible with the metric tensor, is not yet completely understood.
We will come back to this issue in Chapter \ref{chap:ncgsol}, where we show that under certain conditions
there is a unique $\star$-Levi-Civita connection.

Continuing in the construction of the noncommutative Einstein equations, we
define the {\it $\star$-curvature scalar} as the contraction
\begin{flalign}
  \mathfrak{R}^\star := \text{Ric}^\star(g^{-1\alpha},g_{\alpha}^{-1})~.
\end{flalign}
The {\it $\star$-Einstein tensor} is given by the map $G^\star:\Xi\otimes\Xi\to C^\infty(\MM)$,
\begin{flalign}
  G^\star(v,w):= \text{Ric}^\star(v,w)-\frac{1}{2} g(v,w)\star\mathfrak{R}^\star~,
\end{flalign}
for all $v,w\in\Xi$, where $g(v,w):= \pair{v}{\pair{w}{g^\alpha}_\star\star g_\alpha}_\star$.
It satisfies the property
\begin{flalign}
  G^\star(h\star v,k\star w) = h\star G^\star(v\star k,w)~,
\end{flalign}
for all $h,k\in C^\infty(\MM)$  and $v,w\in\Xi$.
The noncommutative Einstein equations for vacuum are thus
\begin{flalign}
  G^\star(v,w)=0~,
\end{flalign}
for all $v,w\in\Xi$.
Provided a suitable stress-energy tensor $T^\star:\Xi\otimes \Xi\to C^\infty(\MM)$, one can also consider
the Einstein equations coupled to matter
\begin{flalign}
  G^\star(v,w)= 8\pi G_N\,T^\star(v,w)~,
\end{flalign}
for all $v,w\in\Xi$.

Let us go back to the example $\MM=\bbR^N$ deformed by the Moyal-Weyl twist (\ref{eqn:moyaltwistbas}).
Using again that the twist acts trivially on $\partial_\rho$ we can write for the metric field
$g=dx^\mu\otimes_A dx^\nu g_{\mu\nu} = dx^\mu\otimes_{A_\star} dx^\nu\star g_{\mu\nu}$, since all $\star$ drop out.
The $\star$-inverse metric field $g^{-1} = g^{\mu\nu}\,\partial_\mu\otimes_A \partial_\nu 
=g^{\mu\nu}\star \partial_\mu\otimes_{A_\star} \partial_\nu$ is determined by
\begin{flalign}
  g_{\mu\nu}\star g^{\nu\rho}= \delta_\mu^\rho~, \quad g^{\mu\nu}\star g_{\nu\rho} = \delta^\mu_\rho~.
\end{flalign}
It is simply the $\star$-inverse matrix of $g_{\mu\nu}$.
For this model there is a unique $\star$-Levi-Civita connection, see Chapter \ref{chap:ncgsol}. The corresponding 
Christoffel symbols are given by
\begin{flalign}
  \Gamma_{\mu\nu}^{\star\rho} = \frac{1}{2}\,g^{\rho\sigma}\star \left(\partial_\mu g_{\nu\sigma} +\partial_\nu g_{\mu\sigma}-
\partial_\sigma g_{\mu\nu}\right)~.
\end{flalign}
Thus, the $\star$-Einstein tensor can be expressed completely in terms of the metric and 
the noncommutative Einstein equations give rise to dynamical equations for the metric field.

%%%%%%%%%%%%%%%%%%%%%%%%%%%%%%%%%%%%%%%%%%%%%%%%%%%%%%%
%%%%%%%%%%%%%%%%%%%%%%%%%%%%%%%%%%%%%%%%%%%%%%%%%%%%%%%

\chapter{\label{chap:symred}Symmetry reduction}
In this chapter we present an approach to symmetry reduction in noncommutative gravity
and its application to noncommutative cosmology and black hole physics. 
These results have been published in the research article \cite{Ohl:2008tw}
 and the proceedings article \cite{Schenkel:2009qm}.
 
 %%%%%%%%%%%%%%%%%%%%%%%%%%%%%%%%%%%%%%%%%%%%%%%%%%%%%%%

\section{Physical idea and the case of classical gravity}
Consider a physical system consisting of a metric field $g$ and
a collection of matter fields (tensor fields) $\lbrace \Phi_i\rbrace_{i\in\mathcal{I}}$
on a manifold $\MM$. Since symmetry reduction in presence of gauge fields
is more involved, we are going to neglect the effects of gauge fields in the present chapter.
An example of such a system is inflationary cosmology, where one typically
considers the metric field $g$ and an inflaton field $\Phi$
on a suitable manifold $\MM$ (e.g.~$\MM=\bbR^4$ or $\MM=\bbR\times S^3$).

We assume that the dynamics of our system is described by Einstein's equations 
$\text{Ric}-\frac{1}{2}g\,\mathfrak{R} =8\pi G_N\,T$ and geometric differential equations for $\lbrace \Phi_i\rbrace_{i\in\mathcal{I}}$.
Finding the most general solution of these highly nonlinear differential equations turns out to be extremely hard
and, even worse, practically impossible. 
In order to anyhow extract certain classes of exact solutions of the equations above, we have to make some 
physics input and assumptions. Symmetry reduction turns out to be a very powerful tool to do so.

We first explain the basic idea of symmetry reduction using the example of cosmology.
From observations we know that the universe, at large scales, is almost isotropic and homogeneous.
So it seems to be reasonable to do a zeroth order approximation and assume the universe 
to be exactly isotropic and homogeneous. The fluctuations we observe for example in the cosmic microwave background (CMB)
 are then described in a second step by considering small perturbations on top of the highly symmetric background fields.
If we demand the metric $g$ and all matter fields $\lbrace \Phi_i \rbrace_{i\in\mathcal{I}}$
to be isotropic and homogeneous, Einstein's equations reduce to  Friedmann's equations
and also the differential equations for the matter fields simplify drastically. For certain models one can then find
exact solutions, describing the evolution of a homogeneous and isotropic universe
filled with  homogeneous and isotropic matter. This model serves as a background for perturbative studies
in cosmology, in particular phenomenological investigations on the CMB.

Let us now investigate symmetry reduction from the mathematical, i.e.~differential geometric, perspective.
Let $\MM$ be a manifold, $g$ a metric field and $\lbrace\Phi_i \rbrace_{i\in\mathcal{I}}$ be a family
of tensor fields. Let $\bigl(\mathfrak{g},[\cdot,\cdot]\bigr)$ be a Lie subalgebra 
of the Lie algebra of vector fields $\bigl(\Xi,[\cdot,\cdot]\bigr)$.
We implement the symmetries described by $\mathfrak{g}$ on our system by demanding the metric
and all tensor fields to be invariant under $\mathfrak{g}$
\begin{flalign}
\label{eqn:symredcondclass}
\mathcal{L}_{\mathfrak{g}}(g) = \lbrace 0\rbrace~,\quad 
\mathcal{L}_{\mathfrak{g}}(\Phi_i) = \lbrace 0 \rbrace ~,\quad \forall i\in\mathcal{I}~.
\end{flalign}
As an aside, the symmetry condition (\ref{eqn:symredcondclass}) would be too strict for fields $\chi$ 
which also transform under some infinitesimal gauge symmetry $\mathfrak{h}$.
A reasonable definition in this case would be $\mathcal{L}_{\mathfrak{g}}(\chi) \subseteq \delta_{\mathfrak{h}}(\chi)$,
which means that gauge equivalence classes should be invariant under $\mathfrak{g}$.

Demanding the symmetry condition (\ref{eqn:symredcondclass}) on the level of the fundamental fields the question
arises if composite fields, like e.g.~the curvature or the stress-energy tensor, are 
also invariant under $\mathfrak{g}$. This propagation of symmetries is of great importance,
since it ensures that the equations of motion are consistent when restricted to the symmetry reduced fields.

In classical differential geometry the propagation of symmetries is ensured by
the coproduct of vector fields (the Leibniz rule) $\Delta(v)=v\otimes 1 + 1\otimes v$, for all
$v\in\Xi$. Since all differential geometric constructions are defined such that they transform covariantly
with respect to this coproduct, constructions out of $\mathfrak{g}$-invariant fields will be $\mathfrak{g}$-invariant.
Let us give some examples to clarify this: The tensor product of two invariant tensor fields $\tau,\tau^\prime$ is invariant, since
$\mathcal{L}_{v}\left(\tau\otimes_A\tau^\prime\right) = 
\mathcal{L}_v(\tau)\otimes_A \tau^\prime +\tau\otimes_A\mathcal{L}_v(\tau^\prime)=0$, 
for all $v\in\mathfrak{g}$. Contractions of $\tau,\tau^\prime$
are also invariant, since $\mathcal{L}_v(\pair{\tau}{\tau^\prime})=\pair{\mathcal{L}_v(\tau)}{\tau^\prime} + 
\pair{\tau}{\mathcal{L}_v(\tau^\prime)}=0$, for all $v\in\mathfrak{g}$. Furthermore, one easily shows that the 
Levi-Civita connection, Riemann tensor, Ricci tensor, curvature scalar and finally the Einstein tensor is $\mathfrak{g}$-invariant
in case the metric is.

%%%%%%%%%%%%%%%%%%%%%%%%%%%%%%%%%%%%%%%%%%%%%%%%%%%%%%%

\section{Deformed symmetry reduction: General considerations}
While in classical symmetry reduction one is interested in suitable Lie subalgebras
$\bigl(\mathfrak{g},[\cdot,\cdot]\bigr)\subseteq\bigl(\Xi,[\cdot,\cdot]\bigr)$
of the Lie algebra of vector fields on $\MM$, in the noncommutative case 
 Lie algebras are not expected to be an appropriate algebraic structure.
 This can be already seen on the level of the infinitesimal diffeomorphisms,
 which are described by the quantum Lie algebra $\bigl(\Xi,[\cdot,\cdot]_\star\bigr)$
 associated to the deformed Hopf algebra of diffeomorphisms $H_\star$.
 So the basic ingredient for a deformed symmetry reduction should be given by a suitable
 subset $\mathfrak{g}_\star\subseteq \Xi$ carrying some algebraic structures
 to be specified now. Since $\mathfrak{g}_\star$ will be later interpreted as 
 deformed infinitesimal isometries/symmetries of our models, there are natural
 conditions we would like to demand:
\begin{subequations}
\label{eqn:symredcond}
 \begin{flalign}
 \label{eqn:symredcond1}&\mathfrak{g}_\star \text{~is a vector space}\\
 \label{eqn:symredcond2}&[\mathfrak{g}_\star,\mathfrak{g}_\star]_\star\subseteq \mathfrak{g}_\star\\
 \label{eqn:symredcond3}&\Delta_\star(\mathfrak{g}_\star)\subseteq \mathfrak{g}_\star\otimes 1 + U\Xi\otimes \mathfrak{g}_\star
 \end{flalign} 
\end{subequations}
While the reason for demanding the first two conditions is obvious, we have to explain why we also demand the 
third one. As explained in detail in the last section, an important feature of classical differential geometry
is that demanding symmetries for all fundamental fields, the symmetries propagate due to the Leibniz
rule also to composite fields. Since the deformed Leibniz rule of vector fields, i.e.~the coproduct $\Delta_\star$,
does not necessarily have this property, we have to make an assumption on $\Delta_\star$
in order to ensure that symmetries propagate to composite deformed fields. 
Let us explain why the third property leads to a propagation of the symmetry using an example. Let $\tau,\tau^\prime\in\mathcal{T}$
be two tensor fields, which are $\mathfrak{g}_\star$-invariant, i.e.~
\begin{flalign}
 \mathcal{L}_{\mathfrak{g}_\star}^{\star}(\tau) = \mathcal{L}_{\mathfrak{g}_\star}^{\star}(\tau^\prime) = \lbrace 0\rbrace~. 
\end{flalign}
Due to the third condition (\ref{eqn:symredcond3}) we find for all $v\in\mathfrak{g}_\star$
\begin{flalign}
\mathcal{L}_{v}^{\star}(\tau\otimes_{A_\star}\tau^\prime) =\mathcal{L}_{v}^{\star}(\tau)\otimes_{A_\star}\tau^\prime
+ \bar R^\alpha(\tau)\otimes_{A_\star}\mathcal{L}^\star_{\bar R_\alpha(v)}(\tau^\prime) =0~.
\end{flalign}
The same holds true for all other deformed differential geometric operations, since they are covariant under
$H_\star$.

 Note that the three conditions above are much weaker than demanding 
$\mathfrak{g}_\star$ to be a quantum Lie algebra, since this would require to replace (\ref{eqn:symredcond3}) 
by the stronger condition
$\Delta_\star(\mathfrak{g}_\star)\subseteq \mathfrak{g}_\star\otimes 1 + U\mathfrak{g}_\star \otimes \mathfrak{g}_\star $.
However, since $\bigl(\mathfrak{g}_\star,[\cdot,\cdot]_\star\bigr)$ satisfies 
almost all conditions of a quantum Lie algebra we call it an {\it almost quantum Lie subalgebra 
of $\bigl(\Xi,[\cdot,\cdot]_\star\bigr)$}.

Provided an almost quantum Lie subalgebra $\bigl(\mathfrak{g}_\star,[\cdot,\cdot]_\star\bigr)\subseteq 
\bigl(\Xi,[\cdot,\cdot]_\star\bigr)$, taking the deformation parameter $\lambda\to 0$ gives us
 a Lie subalgebra $\bigl(\mathfrak{g},[\cdot,\cdot]\bigr)\subseteq \bigl(\Xi,[\cdot,\cdot]\bigr)$ of the Lie 
 algebra of vector fields on $\MM$.
 Thus, every quantum symmetry gives rise to a classical Lie algebra in the commutative limit.
 For our purpose, the other way around is more interesting.
 Let  $\bigl(\mathfrak{g},[\cdot,\cdot]\bigr)\subseteq \bigl(\Xi,[\cdot,\cdot]\bigr)$ be a Lie subalgebra,
 which we interpret as the symmetries of some classical physics model, e.g.~translations and rotations in cosmology.
 It is important to find out under which conditions we can construct an almost quantum Lie subalgebra 
 $\bigl(\mathfrak{g}_\star,[\cdot,\cdot]_\star\bigr)$ such that $\lambda\to 0$ yields $\bigl(\mathfrak{g},[\cdot,\cdot]\bigr)$.
 Note that this construction is always possible by making the choice $\mathfrak{g}_\star = \mathfrak{g} + \lambda \Xi$, 
 which means that elements of $\mathfrak{g}_\star$ are given by 
$v=\sum_{n=0}^\infty\lambda^n\,v_{(n)}$, where $v_{(0)}\in \mathfrak{g}$
 and $v_{(n)}\in\Xi$ for $n>0$. 
 Since in this case the higher orders in $\mathfrak{g}_\star$ are $\infty$-dimensional, 
 demanding invariance under $\mathfrak{g}_\star$ would provide too strong restrictions on tensor fields.
 This is why we consider this choice as unphysical and introduce the condition that
 $\mathfrak{g}_\star$ has the same dimension as $\mathfrak{g}$. 
 In the remaining part we will be even more restrictive and demand $\mathfrak{g}_\star=\mathfrak{g}$ as vector spaces.
 The motivation for this choice is threefold: Firstly, it simplifies our explicit investigations in the next sections.
 Secondly, deforming only the action of a symmetry and not its underlying vector space structure
 fits into the deformation philosophy of Chapter \ref{chap:basicncg}.
 Thirdly, this choice is mathematically distinguished from all other choices since $\mathfrak{g}_\star=\mathfrak{g}$ is
 a topologically free $\bbC[[\lambda]]$-module (see Appendix \ref{app:basicsdefq}).
 With the choice $\mathfrak{g}_\star=\mathfrak{g}$ the first condition (\ref{eqn:symredcond1}) is fulfilled. The second condition 
(\ref{eqn:symredcond2}) demands that the $\star$-Lie bracket closes on $\mathfrak{g}$. 
Taking a basis $\lbrace t_i\in\mathfrak{g}:i=1,\dots,\dim(\mathfrak{g})\rbrace$ we thus require
the existence of deformed structure constants $f_{ij}^{\star~~k}\in\bbC$, such that
\begin{flalign}
\label{eqn:starcomsymred}
[t_i,t_j]_\star = f_{ij}^{\star~~k}\,t_k~. 
\end{flalign}
By classical correspondence we know that $f_{ij}^{\star~~k}=f_{ij}^{~~k} +\mathcal{O}(\lambda)$,
where $f_{ij}^{~~k}$ are the structure constants of $\bigl(\mathfrak{g},[\cdot,\cdot]\bigr)$.
 On the level of the generators $t_i$, the third condition (\ref{eqn:symredcond3})
requires the existence of elements $Q_{i}^{\star~j}\in U\Xi$, such that
\begin{flalign}
\label{eqn:coproductsymred}
 \Delta_\star(t_i) = t_i \otimes 1 + Q_{i}^{\star~j}\otimes t_j~.
\end{flalign}

It turns out that provided the most general twist $\mathcal{F}$ and the most general Lie subalgebra $\mathfrak{g}$
the two conditions (\ref{eqn:starcomsymred}) and (\ref{eqn:coproductsymred}) are not satisfied. 
There are obstructions which give us relations between the deformation $\mathcal{F}$ and the classical symmetries 
$\mathfrak{g}$ which we want to quantize.
This result is physically very attractive, since it means that if we want to deform
$\mathfrak{g}$-symmetric systems, we can not use the most general twist, but the requirements of a consistent deformed
symmetry structure restrict us to special choices of $\mathcal{F}$. In other words,
not every classical symmetry is compatible with every Drinfel'd twist. 
This partially resolves the arbitrariness in choosing a Drinfel'd twist. However, as we will see 
later using explicit examples, there are still classes of deformations which are consistent with a given classical
symmetry in the sense above. This can be seen as a positive and also a negative feature:
It is positive, since we have different models which have different physical properties,
but it is also in some sense negative, since there is no unique way of deforming a given
classical symmetric system and we are still left with some arbitrariness in choosing $\mathcal{F}$.
Note that dropping the restriction $\mathfrak{g}_\star=\mathfrak{g}$ (as vector spaces)
we might be able to relax the conditions between a given classical symmetry Lie algebra 
$\bigl(\mathfrak{g},[\cdot,\cdot]\bigr)$ and the deformation $\mathcal{F}$. It would be an
interesting project for future research to understand quantitatively how much the space
of compatible $(\mathfrak{g},\mathcal{F})$-pairs is enlarged in this way. 
This could provide more consistent models for noncommutative cosmology and black hole physics
than those presented in Section \ref{sec:symredapplic}.

%%%%%%%%%%%%%%%%%%%%%%%%%%%%%%%%%%%%%%%%%%%%%%%%%%%%%%%

\section{Deformed symmetry reduction: Abelian Drinfel'd twists}
In order to understand the conditions (\ref{eqn:starcomsymred}) and (\ref{eqn:coproductsymred})
in more detail we are going to restrict the class of Drinfel'd twists to abelian twists.
This allows us for an explicit evaluation of (\ref{eqn:starcomsymred}) and (\ref{eqn:coproductsymred})
and provides us with a better understanding of the compatible $(\mathfrak{g},\mathcal{F})$-pairs.

Let $\lbrace X_\alpha\in\Xi\rbrace$ be a finite set of mutually commuting vector fields, 
i.e. $\com{X_\alpha}{X_\beta}=0,~\forall_{\alpha,\beta}$, on an $N$-dimensional manifold $\mathcal{M}$.
We assume the $X_\alpha$ to be linearly independent. Then the object
\begin{flalign}
\label{eqn:jstwist}
 \mathcal{F} = \exp\left(-\frac{i\lambda}{2} \Theta^{\alpha\beta} X_\alpha\otimes X_\beta \right)\in U\Xi\otimes U\Xi
\end{flalign}
is a Drinfel'd twist, if $\Theta$ is constant and antisymmetric~\cite{Reshetikhin:1990ep,Jambor:2004kc}.
 The twist (\ref{eqn:jstwist}) is called {\it abelian twist}. Note that
 $\mathcal{F}$ is not restricted to the topology $\bbR^N$ of the manifold
 $\mathcal{M}$. As a special case of an abelian twist we obtain the Moyal-Weyl twist by choosing 
$\MM=\bbR^N$ and for $X_\alpha$ the partial derivatives.

Without loss of generality we can restrict ourselves to $\Theta$ with maximal rank and an even number of vector fields $X_\alpha$.
We can therefore use the standard form
\begin{flalign}
%\label{eqn:theta}
 \Theta=\begin{pmatrix}
         0 & 1 & 0 & 0 & \cdots\\
	-1 & 0 & 0 & 0 & \cdots\\
	 0 & 0 & 0 & 1 & \cdots\\
	 0 & 0 & -1 & 0 & \cdots\\
	\vdots&\vdots&\vdots&\vdots&\ddots
        \end{pmatrix}
\end{flalign}
by applying a suitable general linear transformation on the $X_\alpha$.

The abelian twist (\ref{eqn:jstwist}) is easy to apply and in particular we obtain for the inverse and the $R$-matrix
\begin{subequations}
\begin{flalign}
\label{eqn:jsinvtwist} \mathcal{F}^{-1} &= \exp\left(\frac{i\lambda}{2} 
\Theta^{\alpha\beta} X_\alpha\otimes X_\beta \right)~,\\
\label{eqn:jsrmatrix} R&=\mathcal{F}_{21}\mathcal{F}^{-1}=
\mathcal{F}^{-2}=\exp\bigl(i \lambda\Theta^{\alpha\beta} X_\alpha\otimes X_\beta \bigr)~.
\end{flalign}
\end{subequations}

Let $\bigl(\mathfrak{g},\com{\cdot}{\cdot}\bigr)\subseteq\bigl(\Xi,\com{\cdot}{\cdot}\bigr)$ be the Lie 
algebra of the symmetry we want to deform. We choose a 
basis of this Lie algebra $\lbrace t_i : i=1,\cdots ,\dim(\mathfrak{g})\rbrace$ 
with $\com{t_i}{t_j}=f_{ij}^{~~k}t_k$. We work with the restriction of demanding
$\mathfrak{g}_\star=\mathfrak{g}$ as a vector space. Thus, the $t_i$ are also generators of $\mathfrak{g}_\star$.

We are going to check now the three conditions (\ref{eqn:symredcond}) for this particular model.
More precisely, since $\mathfrak{g}_\star=\mathfrak{g}$ is by definition a vector space
we just have to evaluate the two equivalent conditions (\ref{eqn:starcomsymred}) and (\ref{eqn:coproductsymred}) 
the generators $t_i$ have to satisfy. 
We start with the coproduct condition (\ref{eqn:coproductsymred}). Using the
explicit form of the inverse $R$-matrix (\ref{eqn:jsrmatrix}) we can calculate the $\star$-coproduct
of an arbitrary generator $t_i\in\mathfrak{g}$ and find 
\begin{flalign}
\label{eqn:symredabcoprod}
\Delta_\star(t_i) = t_i\otimes 1 + \sum\limits_{n=0}^\infty\frac{(-i\lambda)^n}{n!}\, \widetilde{X}^{\alpha_1}\cdots
\widetilde{X}^{\alpha_n}\otimes [X_{\alpha_1},\cdots [X_{\alpha_n},t_i]\cdots]~,
\end{flalign}
where we have introduced $\widetilde{X}^\alpha := \Theta^{\beta\alpha}X_\beta$. 
We find that at order $\lambda^0$ (\ref{eqn:coproductsymred}) is fulfilled,
while for the order $\lambda^1$ we obtain the compatibility condition
\begin{flalign}
\label{eqn:symredabcond}
 [X_\alpha,\mathfrak{g}]\subseteq \mathfrak{g}~,~~\forall_\alpha~.
\end{flalign}
Written in terms of generators, there have to be constants $\mathcal{N}_{\alpha i}^{j}\in\bbC$, such that
\begin{flalign}
 \com{X_\alpha}{t_i}=\mathcal{N}_{\alpha i}^{j}\,t_j~.
\end{flalign}
As a consequence we have for all $\widetilde{X}^\alpha$
\begin{flalign}
 \com{\widetilde{X}^\alpha}{t_i} = \Theta^{\beta\alpha}\com{X_\beta}{t_i}
 = \Theta^{\beta\alpha}\mathcal{N}_{\beta i}^j\,t_j =:\widetilde{N}_{i}^{\alpha j} t_j~.
\end{flalign}
Note that provided (\ref{eqn:symredabcond}) holds true, the condition (\ref{eqn:coproductsymred})
is fulfilled to all orders in $\lambda$, since
\begin{flalign}
  [X_{\alpha_1},\cdots [X_{\alpha_n},t_i]= \mathcal{N}_{\alpha_n i}^{i_n}\,\mathcal{N}_{\alpha_{n-1} i_n}^{i_{n-1}}
\cdots\mathcal{N}_{\alpha_1 i_2}^{i_1}\,t_{i_1}\in\mathfrak{g}~.
\end{flalign}

Next, we evaluate the $\star$-Lie bracket condition (\ref{eqn:symredcond2}), more precisely
its expression in terms of the generators (\ref{eqn:starcomsymred}).
Using the explicit form of the inverse twist (\ref{eqn:jsinvtwist}) and (\ref{eqn:symredabcond}) we obtain
\begin{flalign}
\nn \com{t_i}{t_j}_\star &= \sum\limits_{n=0}^\infty\frac{(i\lambda)^n}{2^n\,n!}\, 
\bigl[[\widetilde{X}^{\alpha_1},\cdots[\widetilde{X}^{\alpha_n},t_i]\cdots  ] ,[X_{\alpha_1}, \cdots [X_{\alpha_n},t_j]\cdots] \bigr]\\
\nn &= \sum\limits_{n=0}^\infty\frac{(i\lambda)^n}{2^n\,n!}\, 
\widetilde{\mathcal{N}}_{i}^{\alpha_n i_n}\cdots\widetilde{\mathcal{N}}_{i_2}^{\alpha_1 i_1}\, \mathcal{N}_{\alpha_n j}^{j_n}
\cdots\mathcal{N}_{\alpha_1 j_2}^{j_1}\, [t_{i_1},t_{j_1}] \\
 &= \sum\limits_{n=0}^\infty\frac{(i\lambda)^n}{2^n\,n!}\,
\widetilde{\mathcal{N}}_{i}^{\alpha_n i_n}\cdots\widetilde{\mathcal{N}}_{i_2}^{\alpha_1 i_1}\, \mathcal{N}_{\alpha_n j}^{j_n}
\cdots\mathcal{N}_{\alpha_1 j_2}^{j_1}\,f_{i_1 j_1}^{\quad k}\,t_k~.
\end{flalign}
Thus, the deformed structure constants are 
\begin{flalign}
f_{ij}^{\star~~k} = \sum\limits_{n=0}^\infty\frac{(i\lambda)^n}{2^n\,n!}\,
\widetilde{\mathcal{N}}_{i}^{\alpha_n i_n}\cdots\widetilde{\mathcal{N}}_{i_2}^{\alpha_1 i_1}\, \mathcal{N}_{\alpha_n j}^{j_n}
\cdots\mathcal{N}_{\alpha_1 j_2}^{j_1}\,f_{i_1 j_1}^{\quad k}~,
\end{flalign}
and  (\ref{eqn:starcomsymred}) is automatically satisfied, if (\ref{eqn:symredabcond}) or equivalently
 (\ref{eqn:coproductsymred}) is fulfilled.
We summarize the results in this
\begin{propo}
\label{propo:symred}
Let $\bigl(\mathfrak{g},[\cdot,\cdot]\bigr)\subseteq \bigl(\Xi,[\cdot,\cdot]\bigr)$ be a Lie subalgebra
and $\mathcal{F}\in U\Xi\otimes U\Xi$ be an abelian Drinfel'd twist generated by $X_\alpha$.
Then $\bigl(\mathfrak{g},[\cdot,\cdot]_\star\bigr)\subseteq \bigl(\Xi,[\cdot,\cdot]_\star\bigr)$ is
an almost quantum Lie subalgebra, if and only if
\begin{flalign}
  [X_\alpha,\mathfrak{g}]\subseteq \mathfrak{g}~,~~\forall_\alpha~.
\end{flalign}
In other words, $\bigl(\mathfrak{g},[\cdot,\cdot]_\star\bigr)\subseteq \bigl(\Xi,[\cdot,\cdot]_\star\bigr)$ is
an almost quantum Lie subalgebra, if and only if $\bigl(\mathrm{span}(X_\alpha,t_i),[\cdot,\cdot]\bigr)$ forms a Lie algebra
with ideal $\mathfrak{g}$.
\end{propo}

Let us now show explicitly that the conditions  (\ref{eqn:symredcond}) are weaker than demanding
$(\mathfrak{g},[\cdot,\cdot]_\star)$ to be a quantum Lie algebra, i.e.~demanding instead of  (\ref{eqn:symredcond3})
the condition $\Delta_\star(\mathfrak{g})\subseteq \mathfrak{g}\otimes 1 + U\mathfrak{g}\otimes \mathfrak{g}$.
From (\ref{eqn:symredabcoprod}) we find at first order in $\lambda$ 
\begin{flalign}
  \Delta_\star(t_i)\vert_{\lambda^1} =  -i\, \widetilde{X}^\alpha\otimes [X_\alpha,t_i] = -i\,\mathcal{N}_{\alpha i}^j
\widetilde{X}^\alpha\otimes t_j~.
\end{flalign}
Using that all $X_\alpha$ are linearly independent and thus also all $\widetilde{X}^\alpha$,
the condition $\mathcal{N}_{\alpha i}^j\widetilde{X}^\alpha\in U\mathfrak{g}$ is equivalent to the requirement
\begin{flalign}
  \widetilde{X}^\alpha \in\mathfrak{g}~,~~\text{if}\quad \com{X_\alpha}{\mathfrak{g}}\neq \lbrace 0\rbrace~.
\end{flalign}
This shows that demanding $(\mathfrak{g},[\cdot,\cdot]_\star)$ to be a quantum Lie algebra also restricts
the $X_\alpha$ themselves, while for an almost quantum Lie subalgebra only the action of $X_\alpha$ on $\mathfrak{g}$
is constrained via $[X_a,\mathfrak{g}]\subseteq \mathfrak{g}$.
Note that taking $X_\alpha\in\mathfrak{g}$, $\forall_\alpha$, we always obtain a quantum Lie 
algebra $(\mathfrak{g},[\cdot,\cdot]_\star)$.
This is just the quantum Lie algebra corresponding to the deformed symmetry Hopf algebra
$\bigl(U\mathfrak{g},\mu_\star,\Delta_\star,\epsilon_\star,S_\star\bigr)$.

We study the action of the $\star$-Lie derivative of $(\mathfrak{g},[\cdot,\cdot]_\star)$ on the 
deformed tensor fields. The action of the generators
 $t_i$ on $\tau\in\mathcal{T}$ can be simplified as follows
\begin{flalign}
 \nn \mathcal{L}^\star_{t_i}(\tau) &= \sum\limits_{n=0}^\infty\frac{(i\lambda)^n}{2^n\, n!} 
~\mathcal{L}_{[\widetilde{X}^{\alpha_1},\cdots[\widetilde{X}^{\alpha_n},t_i]\cdots]}\Bigl(
\mathcal{L}_{X_{\alpha_1}\cdots X_{\alpha_n}}(\tau)\Bigr) \\
\label{eqn:starliesimp} &=\sum\limits_{n=0}^\infty\frac{(i\lambda)^n}{2^n\, n!} ~\widetilde{\mathcal{N}}_{i}^{\alpha_n i_n}\cdots
\widetilde{\mathcal{N}}_{i_2}^{\alpha_1 i_1}\,\mathcal{L}_{t_{i_1}X_{\alpha_1}\cdots X_{\alpha_n}}(\tau)~.
\end{flalign}
For invariant tensor fields, the $\star$-Lie derivative has to vanish to all orders in $\lambda$.
We obtain the following useful
\begin{propo}
\label{propo:starinvariance}
 Let $\com{X_\alpha}{\mathfrak{g}}\subseteq\mathfrak{g}~,\forall_\alpha$.
Then a tensor field $\tau\in\mathcal{T}$ is $\star$-invariant under $(\mathfrak{g},\com{\cdot}{\cdot}_\star)$, 
if and only if it is invariant under the undeformed action of $(\mathfrak{g},\com{\cdot}{\cdot})$, i.e.
\begin{flalign}
  \mathcal{L}^{\star}_{\mathfrak{g}}(\tau)=\lbrace 0\rbrace\quad\Leftrightarrow\quad\mathcal{L}_{\mathfrak{g}}(\tau)=\lbrace 0\rbrace~.
\end{flalign}
\end{propo}
\begin{proof}
``$\Rightarrow$'': Let $\tau=\sum_{n=0}^\infty \lambda^n\,\tau_{(n)}$ be a $\star$-invariant tensor field. Using (\ref{eqn:starliesimp})
we obtain for all generators $t_i$
\begin{flalign}
 0=\mathcal{L}_{t_i}^\star(\tau) = 
\sum\limits_{n=0}^\infty\lambda^n \sum\limits_{m+k=n} \frac{i^m}{2^m\, m!} ~\widetilde{\mathcal{N}}_{i}^{\alpha_m i_m}\cdots
\widetilde{\mathcal{N}}_{i_2}^{\alpha_1 i_1}\,\mathcal{L}_{t_{i_1}X_{\alpha_1}\cdots X_{\alpha_m}}(\tau_{(k)})~.
\end{flalign}
At order $n=0$ we obtain $\mathcal{L}_{t_i}(\tau_{(0)}) =0$, for all $i$, and at order $n=1$ we find
\begin{flalign}
\nn  0&=\mathcal{L}_{t_i}(\tau_{(1)}) + \frac{i}{2} \widetilde{\mathcal{N}}^{\alpha j}_i\, \mathcal{L}_{t_j X_\alpha}(\tau_{(0)})
=\mathcal{L}_{t_i}(\tau_{(1)}) + \frac{i}{2} \widetilde{\mathcal{N}}^{\alpha j}_i\, \mathcal{L}_{[t_j, X_\alpha] + X_\alpha t_j}(\tau_{(0)})\\
 &=\mathcal{L}_{t_i}(\tau_{(1)}) - \frac{i}{2} \widetilde{\mathcal{N}}^{\alpha j}_i\,\mathcal{N}_{\alpha j}^k 
\mathcal{L}_{t_k }(\tau_{(0)}) =\mathcal{L}_{t_i}(\tau_{(1)})~,
\end{flalign}
where we have used that $[X_\alpha,\mathfrak{g}]\subseteq \mathfrak{g}$, for all $\alpha$, and the $n=0$ result 
$\mathcal{L}_{t_i}(\tau_{(0)}) =0$, for all $i$. By induction and using $[X_\alpha,\mathfrak{g}]\subseteq \mathfrak{g}$, 
for all $\alpha$, 
one easily shows that $\mathcal{L}_{\mathfrak{g}}(\tau_{(n)})=\lbrace 0\rbrace$ for all $n$. 
Thus, $\mathcal{L}_\mathfrak{g}(\tau)=\lbrace 0\rbrace$ and $\tau$ is a classical $\mathfrak{g}$-invariant tensor field. \vspace{1mm}\newline
``$\Leftarrow$'': Let now $\tau=\sum_{n=0}^\infty \lambda^n\,\tau_{(n)}$ be a classical $\mathfrak{g}$-invariant tensor field.
Using $[X_\alpha,\mathfrak{g}]\subseteq \mathfrak{g}$, for all $\alpha$, to permute successively
 the generators of $\mathfrak{g}$ in (\ref{eqn:starliesimp}) to the very right we obtain that $\tau$ is also $\star$-invariant.

\end{proof}
\noindent This proposition has a desirable consequence for practical applications.
If we want to make the most general ansatz for a tensor field invariant under the deformed symmetry
$\bigl(\mathfrak{g},[\cdot,\cdot]_\star\bigr)$ we can make use of existing results on
$\bigl(\mathfrak{g},[\cdot,\cdot]\bigr)$-invariant tensor fields in classical differential geometry.
In this way we will in particular be able to write down the most general ansatz for a $\star$-invariant
metric field for deformed FRW cosmology or for a deformed Schwarzschild black hole.

%%%%%%%%%%%%%%%%%%%%%%%%%%%%%%%%%%%%%%%%%%%%%%%%%%%%%%%

\section{\label{sec:symredapplic}Application to cosmology and the Schwarzschild black hole}
\subsection{Deformed FRW universes:}

In this section we investigate flat Friedmann-Robertson-Walker (FRW) universes.
The undeformed Lie algebra $\mathfrak{c}$ of this model is generated by the ``momenta'' $p_i$ and
 ``angular momenta'' $L_i$, $i\in\lbrace1,2,3\rbrace$, 
which are represented in the Lie algebra of vector fields as
\begin{flalign}
 p_i = \partial_i\quad,\quad L_i = \epsilon_{ijk}x^j \partial_k~,
\end{flalign}
where $\epsilon_{ijk}$ is the Levi-Civita symbol.
The undeformed Lie bracket relations are
\begin{flalign}
  \com{p_i}{p_j}=0~,\quad\com{p_i}{L_j}=-\epsilon_{ijk}p_k~,\quad\com{L_i}{L_j}=-\epsilon_{ijk}L_k~.
\end{flalign}

We can now explicitly evaluate the condition each twist vector field $X_\alpha$ has to satisfy given by
 $\com{X_\alpha}{\mathfrak{c}}\subseteq\mathfrak{c}$ (cf.~Proposition \ref{propo:symred}). 
Since the generators are at most linear in the spatial coordinates, $X_\alpha$ can be at most quadratic in order to 
fulfill this condition. If we make a quadratic ansatz with time dependent coefficients we obtain that each $X_\alpha$ 
has to be of the form
\begin{flalign}
\label{eqn:FRWV}
 X_\alpha = X_\alpha^0(t) \partial_t + c_\alpha^i\partial_i + d_\alpha^i L_i + f_\alpha x^i\partial_i~,
\end{flalign}
where $c_\alpha^i,~d_\alpha^i,~f_\alpha\in\bbR$ and $X_\alpha^0(t)\in C^\infty(\bbR)$ 
in order to obtain hermitian deformations.

Since our deformation requires mutually commuting vector fields, we have to implement additional conditions
on the parameters of different $X_\alpha$. A brief calculation shows that the
following requirements have to hold true:
\begin{subequations}
\label{eqn:frwcond}
\begin{align}
 \label{eqn:frwcond1}&d_\alpha^i d_\beta^j \epsilon_{ijk} = 0 ~,\\
 \label{eqn:frwcond2}&c_\alpha^i d_\beta^j \epsilon_{ijk} - c_\beta^i d_\alpha^j \epsilon_{ijk} + f_\alpha c_\beta^k - 
f_\beta c_\alpha^k =0~,\\
 \label{eqn:frwcond3}&\com{X_\alpha^0(t)\partial_t}{X_\beta^0(t)\partial_t}\equiv 0~.
\end{align}
\end{subequations}
As a first step, we  work out all consistent deformations of 
$\mathfrak{c}$ when twisted with two commuting vector fields $X_1$, $X_2$. 
To do this systematically, we divide the solutions of (\ref{eqn:frwcond}) into
 classes depending on the value of $d_\alpha^i$ and $f_\alpha$.
 We use as notation for our cosmologies $\mathfrak{C}_{AB}$, where $A\in\lbrace1,2,3\rbrace$ and $B\in\lbrace1,2\rbrace$, 
which will become clear later, when we sum up the results in Table \ref{tab:frw}.

Type $\mathfrak{C}_{11}$ is defined to be vector fields with $d_1^i=d_2^i=0$ and $f_1=f_2=0$, i.e.~
\begin{flalign}
 X_{1(\mathfrak{C}_{11})} = X_{1}^0(t)\partial_t +c_{1}^i\partial_i~,\quad 
X_{2(\mathfrak{C}_{11})} = X_{2}^0(t)\partial_t +c_{2}^i\partial_i~.
\end{flalign}
These vector fields fulfill the first two conditions (\ref{eqn:frwcond1}) and (\ref{eqn:frwcond2}). 
The solution of the third condition 
(\ref{eqn:frwcond3}) will be discussed later, since the classification we perform now does not depend on it.

Type $\mathfrak{C}_{21}$ is defined to be vector fields with $d_1^i=d_2^i=0$, $f_1\neq0$ and $f_2=0$.
 The first condition (\ref{eqn:frwcond1}) is trivially fulfilled and the second (\ref{eqn:frwcond2}) is fulfilled, 
if and only if $c_{2}^i=0,~\forall_i$, i.e.~type $\mathfrak{C}_{21}$ is given by the vector fields
\begin{flalign}
  X_{1(\mathfrak{C}_{21})} = X_{1}^0(t)\partial_t + c_{1}^i\partial_i 
+ f_{1} x^i\partial_i~,\quad X_{2(\mathfrak{C}_{21})} = X_{2}^0(t)\partial_t~.
\end{flalign}

Solutions with $d_1^i=d_2^i=0$, $f_1\neq0$ and $f_2\neq0$ lie in type 
$\mathfrak{C}_{21}$, since we can perform the twist conserving map
$X_2\to X_2 - \frac{f_2}{f_1} X_1$, which transforms $f_2$ to zero. 
Furthermore, $\mathfrak{C}_{31}$ is defined by 
$d_1^i=d_2^i=0$, $f_1=0$ and $f_2\neq0$ and is equivalent to $\mathfrak{C}_{21}$ 
by interchanging the labels of the vector fields.

We now move on to solutions with without loss of generality $\mathbf{d}_1\neq0$ 
and $\mathbf{d}_2=0$ ($\mathbf{d}$ denotes the vector). 
Note that this class contains also the class with $\mathbf{d}_1\neq0$ and $\mathbf{d}_2\neq0$.
 To see this, we use the first condition (\ref{eqn:frwcond1}) and obtain 
that $\mathbf{d}_1$ and $\mathbf{d}_2$ have to be parallel, 
i.e.~$d^i_2=\kappa d^i_1$.
With this we can transform $\mathbf{d}_2$ to zero by using the twist conserving map $X_2\to X_2 - \kappa X_1$.

Type $\mathfrak{C}_{12}$ is defined to be vector fields with $\mathbf{d}_1\neq0$, $\mathbf{d}_2=0$ and $f_1=f_2=0$.
The first condition (\ref{eqn:frwcond1}) is trivially fulfilled, while the second condition (\ref{eqn:frwcond2}) 
requires that $\mathbf{c}_2$ is parallel to $\mathbf{d}_1$, i.e.~we obtain
\begin{flalign}
 X_{1(\mathfrak{C}_{12})} = X_{1}^0(t)\partial_t + c_{1}^i\partial_i + 
d_{1}^i L_i~,\quad X_{2(\mathfrak{C}_{12})} = X_{2}^0(t)\partial_t + \kappa~d_{1}^i\partial_i~,
\end{flalign}
where $\kappa\in\bbR$ is a constant.

Type $\mathfrak{C}_{22}$ is defined to be vector fields with $\mathbf{d}_1\neq0$, $\mathbf{d}_2=0$, $f_1\neq0$ and $f_2=0$.
Solving the second condition (\ref{eqn:frwcond2}) we obtain $c_2^i=0$ and thus
\begin{flalign}
 X_{1(\mathfrak{C}_{22})} = X_1^0(t)\partial_t + c_{1}^i\partial_i + d_{1}^i L_i + f_{1} x^i\partial_i~,
\quad X_{2(\mathfrak{C}_{22})} = X_{2}^0(t)\partial_t~.
\end{flalign}
 Note that $\mathfrak{C}_{21}$ is contained in $\mathfrak{C}_{22}$ by violating the condition $\mathbf{d}_1\neq0$.

We come to the last class, type $\mathfrak{C}_{32}$, defined by
 $\mathbf{d}_1\neq0$, $\mathbf{d}_2=0$, $f_1=0$ and $f_2\neq0$. 
This class contains also the case $\mathbf{d}_1\neq0$, $\mathbf{d}_2=0$, 
$f_1\neq0$ and $f_2\neq0$ by using the twist
 conserving map $X_1\to X_1 - \frac{f_1}{f_2} X_2$. The vector fields are given by
\begin{flalign}
 X_{1(\mathfrak{C}_{32})} = X_{1}^0(t)\partial_t + \frac{d_{1}^j 
c_{2}^k\epsilon_{jki}}{f_2}  \partial_i + d_{1}^i L_i ~,\quad X_{2(\mathfrak{C}_{32})}
 = X_{2}^0(t)\partial_t + c_{2}^i\partial_i + f_{2} x^i\partial_i~.
\end{flalign}

For a better overview we additionally present the results in Table \ref{tab:frw}, containing all possible two vector field 
deformations $\mathfrak{C}_{AB}$ of the Lie algebra $\mathfrak{c}$.
From this table the notation $\mathfrak{C}_{AB}$ becomes clear.
\begin{table}
\begin{center}
\begin{tabular}{|l|l|l|}
\hline
{\large $~~\mathfrak{C}_{AB}$} & $\mathbf{d}_1=\mathbf{d}_2=0$ & $\mathbf{d}_1\neq0$~,~$\mathbf{d}_2=0$ \\ \hline
$f_1=0$,    &$X_1= X_{1}^0(t)\partial_t +c_{1}^i\partial_i$  &$X_1= X_{1}^0(t)\partial_t + c_{1}^i\partial_i + d_{1}^i L_i$   \\ 
$f_2=0$	   &$X_2= X_{2}^0(t)\partial_t +c_{2}^i\partial_i$  &$X_2= X_{2}^0(t)\partial_t + \kappa~d_{1}^i\partial_i$	     \\ \hline
$f_1\neq0$, &$X_1=X_1^0(t)\partial_t+ c_{1}^i\partial_i + f_{1} x^i\partial_i$  &$X_1= X_1^0(t)\partial_t+ c_{1}^i\partial_i + d_{1}^i L_i + f_{1} x^i\partial_i$    \\ 
$f_2=0$	   &$X_2= X_{2}^0(t)\partial_t$	                    &$X_2= X_{2}^0(t)\partial_t$	                             \\ \hline
$f_1=0$,    &$X_1= X_{1}^0(t)\partial_t$	                    &$X_1= X_{1}^0(t)\partial_t + \frac{1}{f_2}d_{1}^j c_{2}^k\epsilon_{jki}  \partial_i + d_{1}^i L_i$	\\ 
$f_2\neq0$ &$X_2= X_2^0(t)\partial_t + c_{2}^i\partial_i + f_{2} x^i\partial_i$  &$X_2= X_{2}^0(t)\partial_t + c_{2}^i\partial_i + f_{2} x^i\partial_i$	\\ \hline
\end{tabular}
\end{center}
\caption{\label{tab:frw}Two vector field deformations of the cosmological symmetry Lie algebra $\mathfrak{c}$.}
\end{table}

It remains to discuss the solutions of the third condition
 (\ref{eqn:frwcond3}) $\com{X_1^0(t)\partial_t}{X_2^0(t)\partial_t}\equiv0$. 
It is obvious that choosing either 
$X_1^0(t)\equiv0$ or $X_2^0(t)\equiv0$ and the other one arbitrary is a solution.
 Additionally, let us consider solutions with $X_1^0(t)\not \equiv0$ and $X_2^0(t)\not \equiv0$.
 For this case there has to be some point $t_0\in\bbR$, such that without
 loss of generality $X_1^0(t)$ is unequal zero in some open region 
$U\subseteq\bbR$ around $t_0$. 
In this region we can perform the diffeomorphism $t\to \tilde t(t):= \int\limits_{t_0}^t dt^\prime \frac{1}{X_1^0(t^\prime)}$ 
leading to $\tilde X_1^0(\tilde t)\equiv 1$ on $U\subseteq \bbR$. In the coordinate 
$\tilde t$ the third condition (\ref{eqn:frwcond3}) simplifies to
\begin{flalign}
 0\equiv \com{X_1^0(t)\partial_t}{X_2^0(t)\partial_t}=
\com{\tilde X_1^0(\tilde t)\partial_{\tilde t}}{\tilde X_2^0(\tilde t)\partial_{\tilde t}}
 = \Bigl(\partial_{\tilde t}\tilde X_2^0(\tilde t)\Bigr) \partial_{\tilde t}~.
\end{flalign}
This condition is solved if and only if $\tilde X_2^0(\tilde t) \equiv \mathrm{const.}$ on $U\subseteq\bbR$. 
For the subset of analytic functions $C^\omega(\bbR)\subset C^\infty(\bbR)$ we can continue this condition 
to all $\bbR$ and obtain the global relation $X_2^0(t)\equiv \kappa X_1^0(t)$, with some constant $\kappa\in\bbR$.
For non analytic, but smooth functions, we can not continue these 
relations to all $\bbR$ and therefore only obtain local conditions 
 restricting the functions in the overlap of their supports to be linearly dependent. 
In particular, non analytic functions with disjoint supports fulfill the condition (\ref{eqn:frwcond3}) trivially.

After characterizing the possible two vector field deformations of $\mathfrak{c}$ we briefly give a method how to obtain twists
 generated by a larger number of vector fields. For this purpose we use the canonical form of $\Theta$.

Assume that we want to obtain deformations with e.g.~four vector fields.
 Then of course all vector fields have to be of the form (\ref{eqn:FRWV}). 
According to the form of $\Theta$ we have two blocks of vector fields
 $(\alpha,\beta)=(1,2)$ and $(\alpha,\beta)=(3,4)$, in which the classification described
 above for two vector fields can be performed. This means that all four 
vector field twists can be obtained by using two types
 of two vector field twists.
 We label the twist by using a tuple of types, e.g.~$(\mathfrak{C}_{11},\mathfrak{C}_{22})$
 means that $X_1,X_2$ are of type $\mathfrak{C}_{11}$ and $X_3,X_4$ of type $\mathfrak{C}_{22}$.
But this does only assure that $\com{X_\alpha}{X_\beta}=0$ for $(\alpha,\beta)\in\lbrace(1,2),(3,4)\rbrace$
 and we have to demand further restrictions in
 order to fulfill $\com{X_\alpha}{X_\beta}=0$, for all $\alpha,\beta$, and that all vector fields give independent contributions 
to the twist. In particular twists constructed by linearly dependent vector fields can be reduced to a twist 
constructed by a lower number of vector fields.
This method naturally extends to a larger number of vector fields, until we cannot find anymore independent and
 mutually commuting vector fields. 

Let us provide a simple example: Consider a four vector field twist of type $(\mathfrak{C}_{11},\mathfrak{C}_{11})$.
The vector fields $X_\alpha$ mutually commute without imposing further conditions. For the special
choice $X_1=\partial_t$, $X_2=\partial_1$, $X_3=\partial_2$ and $X_4=\partial_3$ we obtain the Moyal-Weyl
twist with  $\Theta$ of rank $4$.

A first qualitative insight into the physics of a given noncommutative spacetime can be obtained by
studying the commutation relations of local coordinates.
For this, we calculate the $\star$-commutator of the linear coordinate functions $x^\mu\in C^\infty(\MM)$ 
for the various types of models up to the first order in the deformation parameter $\lambda$. It is given by
\begin{flalign}
 c^{\mu\nu}:=\starcom{x^\mu}{x^\nu} := x^\mu\star x^\nu -x^\nu\star x^\mu = i\lambda \Theta^{\alpha\beta}
 X_\alpha(x^\mu) X_\beta(x^\nu) +\mathcal{O}(\lambda^2)~.
\end{flalign}
The results are summarized in Table \ref{tab:cosmocom} and show that these commutators can be at most quadratic in 
the spatial coordinates $x^i$.

\begin{table}
\begin{center}
\begin{tabular}{|l|l|}
\hline
Type	&	$c^{\mu\nu}:=\starcom{x^\mu}{x^\nu}$ in $\mathcal{O}(\lambda^1)$\\
\hline
$\mathfrak{C}_{11}$    &	$c^{0i}=i\lambda \bigl( X_{1}^0(t) c_{2}^i - X_{2}^0(t) c_{1}^i  \bigr)$\\
~		&	$c^{ij}=i\lambda \bigl( c_{1}^i c_{2}^j - (i\leftrightarrow j) \bigr)$\\ \hline
$\mathfrak{C}_{21}$    &	$c^{0i}=-i\lambda X_{2}^0(t) \bigl( c_{1}^i + f_{1} x^i \bigr)$\\
~		&	$c^{ij}=0$  \\ \hline
$\mathfrak{C}_{12}$    &	$c^{0i}=i\lambda \bigl( X_{1}^0(t) \kappa d_{1}^i - X_{2}(t) (c_{1}^i+d_{1}^k\epsilon_{kli}x^l )   \bigr)$\\
~		&	$c^{ij}=i\lambda \kappa \bigl( (c_{1}^i +d_{1}^k \epsilon_{kli}x^l ) d_{1}^j - (i\leftrightarrow j)  \bigr)$  \\ \hline
$\mathfrak{C}_{22}$    &	$c^{0i}=-i\lambda X_{2}^0(t) \bigl( c_{1}^i + d_{1}^j\epsilon_{jki}x^k + f_{1} x^i \bigr)$\\
~		&	$c^{ij}=0$  \\ \hline
$\mathfrak{C}_{32}$    &	$c^{0i}=i\lambda \bigl( X_1^0(t) (c_2^i + f_2 x^i) - X_2^0(t) (\frac{1}{f_2}d_1^j c_2^k\epsilon_{jki} + d_1^j\epsilon_{jki}x^k)   \bigr)$\\
~		&	$c^{ij}=i\lambda \bigl( (\frac{1}{f_2} d_1^k c_2^l \epsilon_{kli} +d_1^k\epsilon_{kli} x^l )~(c_2^j + f_2 x^j) - (i\leftrightarrow j)  \bigr)$  \\ \hline
\end{tabular}
\end{center}
\caption{\label{tab:cosmocom} $\star$-commutators in the cosmological models $\mathfrak{C}_{AB}$.}
\end{table}

From the Tables \ref{tab:cosmocom} and \ref{tab:frw} we can extract some rough physics implications of our deformations.
We observe that type $\mathfrak{C}_{11}$ is a Moyal-Weyl type deformation, where the time-space commutators are
allowed to depend on time. The physics described by this model is expected to be similar to the Moyal-Weyl
case, with the advantage that we can vary the time-space deformation in time and for example switch it off
by hand for large times. However, similar to the usual Moyal-Weyl deformation 
(which is a special case of $\mathfrak{C}_{11}$) this model is not suitable for inflationary cosmology,
since the spatial commutators $\starcom{x^i}{x^j}$ are constant in {\it comoving} coordinates,
and as a consequence, the physical scale of noncommutativity grows quadratically in the scale factor.

Let us discuss the physics of $\mathfrak{C}_{22}$, which also contains type $\mathfrak{C}_{21}$. 
Choosing $c_1^i=0$ we obtain that both sides of the time-space commutator are linear in $x^i$. Thus,
the same commutation relations also hold true in physical coordinates and the physical
scale of noncommutativity is not growing in the scale factor. This makes the model attractive for inflationary cosmology.
Even better, when we also choose $d_1^i=0$, the commutation relations are symmetric under
classical rotations, meaning that this deformed universe is still isotropic around the point $x^i=0$
in the classical sense. Since the main feature of noncommutative cosmologies based on the Moyal-Weyl
deformation are preferred directions in the CMB, the model $\mathfrak{C}_{22}$ with the choices of parameters made
above will definitely be distinct.

For the model $\mathfrak{C}_{12}$ the right hand side of the spatial commutator 
is at most linear in the spatial coordinates, which means that the physical scale of noncommutativity
is growing in the scale factor and makes this model not suitable for inflationary cosmology.

Type $\mathfrak{C}_{32}$ contains suitable models for inflationary cosmology, since choosing
$c_2^i=0$, the spatial commutator is quadratic in the spatial coordinates
and the time-space commutator is linear in $x^i$. The physical scale of noncommutativity
is thus not growing in the scale factor. Even better, choosing functions $X_1^0(t)$ and $X_2^0(t)$
which decrease in time we can model a universe which starts in a strongly noncommutative phase
and descends to a phase where the time-space commutators are approximately zero.

\subsection{Deformed Schwarzschild black holes:} 
In this section we investigate abelian twist deformations of Schwarzschild black holes. 
We do this in analogy to the cosmological models and therefore do not have to explain every single step.

The undeformed Lie algebra $\mathfrak{b}$ of the Schwarzschild
black hole is generated by  the vector fields
\begin{flalign}
 p^0=\partial_t\quad,\quad L_i = \epsilon_{ijk} x^j\partial_k~.
\end{flalign}
It can be shown that each twist vector field $X_\alpha$ has to be of the form
\begin{flalign}
 X_\alpha = (c^0_\alpha(r)+N_\alpha^0 t)\partial_t + d_\alpha^i L_i + f_\alpha(r) x^i \partial_i
\end{flalign}
in order to fulfill $\com{X_\alpha}{\mathfrak{b}}\subseteq\mathfrak{b}$. Here $r=\Vert\mathbf{x}\Vert$ is the Euclidean 
norm of the spatial position vector.

For constructing an abelian twist, we additionally have to demand $\com{X_\alpha}{X_\beta} =0,~\forall_{\alpha,\beta}$, 
leading to the conditions
\begin{subequations}
\begin{flalign}
\label{eqn:blackcond1}&d_\alpha^i d_\beta^j\epsilon_{ijk}=0~, \\
\label{eqn:blackcond2}&(f_\alpha(r) x^j \partial_j - N_\alpha^0) c^0_\beta(r) - (f_\beta(r) x^j\partial_j -N_\beta^0) c^0_\alpha(r)
  \equiv 0~,\\
\label{eqn:blackcond3}& f_\alpha(r)f_\beta^\prime(r)-f_\alpha^\prime(r) f_\beta(r) \equiv 0~,
\end{flalign}
\end{subequations}
where $f_\alpha^\prime(r)$ denotes the derivative of $f_\alpha(r)$. 

The next task is to construct the two vector field deformations. 
Note that (\ref{eqn:blackcond3}) is a condition similar to (\ref{eqn:frwcond3}), and 
 has the same type of solutions. Because of this, the functions $f_1(r)$ and $f_2(r)$ have to be parallel in the overlap 
of their supports. From this we can always eliminate locally one $f_\alpha(r)$
by a twist conserving map and simplify the investigation of the condition (\ref{eqn:blackcond2}). 
At the end, the local solutions have to be glued together.
We choose without loss of generality $f_1(r)\equiv0$ for our classification of local solutions.

The solution of (\ref{eqn:blackcond1}) is that the $\mathbf{d}_\alpha$ have to be parallel and we can write
\begin{flalign}
\mathbf{d}_\alpha=\kappa_\alpha \mathbf{d}
\end{flalign}
with constants $\kappa_\alpha\in\bbR$ and some arbitrary vector $\mathbf{d}\neq0$.

We now classify the solutions of (\ref{eqn:blackcond2}) according to $N_\alpha^0$ and $f_2(r)$
 and label them by $\mathfrak{B}_{AB}$. 
We distinguish between $f_2(r)$ being the zero function or not. The result is shown in Table \ref{tab:blackhole}.
Other choices of parameters can be mapped by a twist conserving map into these classes.
Note that in particular for analytic functions $f_\alpha(r)$ the twist conserving map transforming $f_1(r)$ to zero 
can be performed globally, and with this the classification of twists given in Table \ref{tab:blackhole} is global.
\begin{table}
\begin{center}
\begin{tabular}{|l|l|l|}
\hline
 {\large$~~\mathfrak{B}_{AB}$}  &  $f_2(r)\equiv0$  &  $f_2(r)\not \equiv 0$  \\ \hline
$N_1^0=0$,	&$X_1=c_1^0(r)\partial_t +\kappa_1 d^iL_i$	&$X_1=c_1^0\partial_t +\kappa_1 d^i L_i $	\\
$N_2^0=0$	&$X_2=c_2^0(r)\partial_t+\kappa_2 d^i L_i$	&$X_2=c_2^0(r)\partial_t +\kappa_2 d^i L_i + f_2(r) x^i\partial_i$		\\ \hline
$N_1^0\neq0$,	&$X_1=(c_1^0(r)+N_1^0 t)\partial_t $  &$X_1=(c_1^0(r)+N_1^0 t)\partial_t+\kappa_1 d^i L_i$	\\
$N_2^0=0$	&$X_2=\kappa_2 d^i L_i$				&$X_2= -\frac{1}{N_1^0}f_2(r)r c_1^{0\prime}(r)\partial_t +\kappa_2 d^i L_i+f_2(r)x^i\partial_i$\\ \hline
$N_1^0=0$,	&$X_1=\kappa_1 d^i L_i$				&$X_1=c_1^0(r) \partial_t + \kappa_1 d^i L_i$,\quad\text{with (\ref{eqn:blackode})} \\
$N_2^0\neq0$	&$X_2=(c_2^0(r)+N_2^0 t)\partial_t $  &$X_2= (c_2^0(r)+N_2^0 t)\partial_t +\kappa_2 d^i L_i +f_2(r) x^i\partial_i $\\ \hline
\end{tabular}
\end{center}
\caption{\label{tab:blackhole}Two vector field deformations of the black hole symmetry Lie algebra $\mathfrak{b}$. 
Note that $c_1^0(r)\equiv c_1^0$ has to be constant in type $\mathfrak{B}_{12}$.}
\end{table}

In type $\mathfrak{B}_{32}$ we still have to solve a differential equation for $c_1^0(r)$ given by
\begin{flalign}
\label{eqn:blackode}
 c_1^0(r) = \frac{f_2(r)}{N_2^0} r c_1^{0\prime}(r)~,
\end{flalign}
for an arbitrary given $f_2(r)$.
We will not work out the solutions of this differential equation, since type $\mathfrak{B}_{32}$ is a quite unphysical model,
 in which the noncommutativity is increasing linearly in time due to $N_2^0\neq 0$.

The $\star$-commutators $c^{\mu\nu}=\starcom{x^\mu}{x^\nu}$ of the coordinate functions $x^\mu\in C^\infty(\MM)$
 in order $\lambda^1$ for these 
models are given in the Table \ref{tab:blackcom}. They can be used in order to extract physically reasonable 
 models for a noncommutative black hole.

By using the method explained in the previous section, the two vector field twists can be extended to multiple vector field
 twists. Since we do not require these twists in our work and their construction is straightforward, we do not
 present them here.

\begin{table}
\begin{center}
\begin{tabular}{|l|l|}
\hline
Type	&	$c^{\mu\nu}:=\starcom{x^\mu}{x^\nu}$ in $\mathcal{O}(\lambda^1)$\\
\hline
$\mathfrak{B}_{11}$ & $c^{0i}=i\lambda \bigl(c_1^0(r) \kappa_2 -c_2^0(r) \kappa_1\bigr) d^j\epsilon_{jki}x^k$ \\
~		& $c^{ij}=0$ \\ \hline
$\mathfrak{B}_{21}$ & $c^{0i}=i\lambda \bigl(c_1^0(r)+N_1^0 t\bigr) \kappa_2 d^j \epsilon_{jki}x^k$\\
~		 & $c^{ij}=0$\\ \hline
$\mathfrak{B}_{12}$ & $c^{0i} =i\lambda \bigl( c_1^0 (\kappa_2 d^j\epsilon_{jki} x^k+f_2(r) x^i) -\kappa_1 c_2^0(r) d^j\epsilon_{jki}x^k\bigr)$  \\
~		& $c^{ij}= i\lambda\bigl(\kappa_1 d^k\epsilon_{kli}x^l  f_2(r)x^j -(i\leftrightarrow j)\bigr)$ \\ \hline
$\mathfrak{B}_{22}$ &  $c^{0i}=i\lambda \Bigl((c_1^0(r)+N_1^0 t)(\kappa_2 d^j\epsilon_{jki} x^k +f_2(r)x^i) + \frac{1}{N_1^0} f_2(r)rc_1^{0\prime}(r)\kappa_1d^j\epsilon_{jki}x^k \Bigr)$ \\
~		&  $c^{ij}= i\lambda\bigl(\kappa_1 d^k\epsilon_{kli}x^l  f_2(r)x^j -(i\leftrightarrow j)\bigr)$\\ \hline
$\mathfrak{B}_{32}$ &  $c^{0i}=i\lambda \bigl( c_1^0(r)~(\kappa_2 d^j\epsilon_{jki} x^k+f_2(r) x^i) -(c_2^0(r) +N_2^0 t) \kappa_1 d^j\epsilon_{jki}x^k \bigr)$\\
~		&  $c^{ij}=i\lambda\bigl(\kappa_1 d^k\epsilon_{kli}x^l f_2(r)x^j -(i\leftrightarrow j)\bigr)$\\ \hline
\end{tabular}
\end{center}
\caption{\label{tab:blackcom}$\star$-commutators in the black hole models $\mathfrak{B}_{AB}$.}
\end{table}

Let us briefly discuss the physics we expect from the models $\mathfrak{B}_{AB}$
by investigating the Tables \ref{tab:blackhole} and \ref{tab:blackcom}.
Models with $N^0_\alpha\neq 0$ are not expected to be  suitable for deforming a stationary black hole,
since the noncommutativity in these cases grows linearly in time. 
The remaining models are $\mathfrak{B}_{11}$ and $\mathfrak{B}_{12}$.
The time-space commutation relations of $\mathfrak{B}_{11}$ are $r$-dependent through the
functions $c^0_1(r)$ and $c^0_2(r)$. Thus, we can model a noncommutative black hole,
which is strongly noncommutative for small $r$ and almost commutative for $r\to\infty$.
However, in case of $\kappa_\alpha\neq 0$, the model $\mathfrak{B}_{11}$ breaks the classical 
rotation symmetry of the black hole.

A very interesting model for a noncommutative black hole is given by
$\mathfrak{B}_{12}$ with the choice $\kappa_1=\kappa_2=0$. The deformation
is then invariant under all classical black hole symmetries.
From Table \ref{tab:blackcom} we obtain the commutation relations for this 
particular choice of parameters
\begin{flalign}
  \starcom{t}{x^i}= i\lambda\,c_1^0\,f_2(r)\,x^i ~,\quad \starcom{x^i}{x^j}=0~. 
\end{flalign}
Taking a suitable function $f_2(r)$, e.g.~$f_2(r)\propto \exp(-\gamma r)$ with $\gamma>0$, we can model 
a noncommutative black hole which is strongly noncommutative in its vicinity and almost commutative asymptotically
for $r\to\infty$.

%%%%%%%%%%%%%%%%%%%%%%%%%%%%%%%%%%%%%%%%%%%%%%%%%%%%%%%
%%%%%%%%%%%%%%%%%%%%%%%%%%%%%%%%%%%%%%%%%%%%%%%%%%%%%%%

\chapter{\label{chap:ncgsol}Exact solutions}
In this chapter we discuss exact solutions of the noncommutative gravity theory introduced in
Chapter \ref{chap:basicncg}. These results have been published in the research article \cite{Ohl:2009pv}
 and the proceedings article \cite{Schenkel:2009qm}.
We also comment briefly on the related work by Schupp and Solodukhin
\cite{Schupp:2007,Schupp:2009pt} and Aschieri and Castellani \cite{Aschieri:2009qh}.

%%%%%%%%%%%%%%%%%%%%%%%%%%%%%%%%%%%%%%%%%%%%%%%%%%%%%%%

\section{\label{sec:ncgbasis}Noncommutative gravity in the nice basis}
The noncommutative gravity theory presented in Chapter \ref{chap:basicncg}
is formulated in a highly abstract way in order make it suitable for mathematical studies.
However, for practical applications it is convenient to express all formulae of noncommutative
differential geometry and gravity in a suitable (local) basis of vector fields and one-forms.
This allows for straightforward calculations using ``indices'' and helps us to understand
explicit examples in more detail. Rewriting noncommutative gravity in terms
of a suitable basis of vector fields and one-forms is the subject of this section.

Using a generic local basis $\lbrace e_a\in\Xi:a=1,\dots,N \rbrace$ of vector fields
turns out to be not helpful for simplifying the formalism, since the $R$-matrices
appearing in the formulae of Chapter \ref{chap:basicncg} will complicate practical calculations.
When we have discussed the example of the Moyal-Weyl deformation in Chapter \ref{chap:basicncg},
Section  \ref{sec:ncgbaseistein}, we have observed that the resulting formulae take a simple 
form and, in particular, no $R$-matrix appeared. The reason for this is that the basis $\lbrace\partial_\mu:\mu=1,\dots,N\rbrace$
is invariant under the Moyal-Weyl twist, i.e.~the twist is generated by a subspace of vector fields which
commute with all basis vector fields $\partial_\mu$.
Assuming such an invariant basis to exist globally would be too restrictive.
However, as we will see in the next sections, for many of our models discussed in Chapter \ref{chap:symred},
we can find for almost all points $p\in \MM$ an open region $U\ni p$ and a local basis $\lbrace e_a\in\Xi:a=1,\dots,N\rbrace$, 
such that the vector fields $\mathfrak{F}\subseteq \Xi$ generating the twist commute with all $e_a$. 
Even more, we obtain that we can choose the $e_a$ to be mutually commuting, i.e.~$[e_a,e_b]=0$ for all $a,b$.
A basis with these properties is called a {\it nice basis}.
After finishing our article \cite{Ohl:2009pv}, it was shown in \cite{Aschieri:2009qh}
that for a large class of abelian twists a nice basis always exists for open regions around almost all points in $\MM$. 
The basic idea is to show that the open submanifold $\MM_\text{reg}\subseteq \MM$ of regular point, i.e.~points
$p\in\MM$ where there is an open region $U\ni p$, such that $\dim(\mathrm{span}(X_\alpha))$ is constant on $U$,
is dense in $\MM$. Under the assumption that for each $p\in \MM_\text{reg}$ there is an open region $U\subseteq \MM_\text{reg}$,
such that $\Theta^{\alpha\beta}X_\alpha\otimes X_\beta= \widetilde{\Theta}^{\tilde\alpha \tilde\beta}\widetilde{X}_{\tilde\alpha}
\otimes \widetilde{X}_{\tilde\beta}$, where the $\widetilde{X}_{\tilde\alpha}$ are {\it pointwise} linearly independent,
the abelian twist reduces locally to the Moyal-Weyl twist in an appropriate coordinate system on $U$.
The existence of a nice basis for the Moyal-Weyl deformation ensures the existence of a local
nice basis for this class of abelian twists around almost all points $p\in \MM$.
Deformed tensor fields and the $\star$-Levi-Civita connection are then treated locally on $\MM_\text{reg}$
and extended to $\MM$ by demanding smoothness.
Note that, in contrast to what was claimed in \cite{Aschieri:2009qh}, the assumption stated above
does not hold true for all abelian twists, meaning that not all abelian twists reduce locally 
to the Moyal-Weyl twist. These deformations will be called {\it exotic}.
A prime example of an exotic deformation is given by $\MM=\bbR^2$ twisted with $X_1=\partial_x$ and
$X_2=y\partial_x$, where $x,y$ denote global coordinates. Using the normalization $\Theta^{12}=1$ 
we have $\Theta^{\alpha\beta}X_\alpha\otimes X_\beta =\partial_x \otimes y\partial_x - y\partial_x\otimes \partial_x$.
The corresponding twist is nontrivial, since the tensor product is over $\bbC$, even though the
vector fields $X_1$ and $X_2$ are {\it pointwise} linearly dependent.
As we have shown in the letter \cite{Schenkel:2010zi}, see also Chapter \ref{chap:qftapp}, Section \ref{sec:z=2qft},
exotic deformations can lead to very interesting quantum field theories, which completely differ
from the usual Moyal-Weyl deformation. This is expected, since the corresponding twists are not even locally of Moyal-Weyl type.
Exotic deformations of gravity are not yet completely understood and
remain an interesting subject for future investigations. The main problem is the following:
If we can not locally reduce the deformation to the Moyal-Weyl deformation, we can not yet construct a
$\star$-Levi-Civita connection, which is one of the main ingredients of noncommutative gravity.
See, however, Chapter \ref{chap:ncgmath} for a possible way to circumvent this problem by 
quantizing directly commutative Levi-Civita connections.

For the remaining part of this section let $\MM$ be an $N$-dimensional smooth manifold and 
$\mathcal{F}\in U\mathfrak{F}\otimes U\mathfrak{F}$ be a Drinfel'd twist. 
We assume that we have around almost all points a local nice basis of vector fields $\lbrace e_a\in\Xi:a=1,\dots,N\rbrace$, 
i.e.~$[\mathfrak{F},e_a] =\lbrace 0\rbrace$ and $[e_a,e_b]=0$ for all $a,b$.
This is in particular given for all nonexotic abelian Drinfel'd twist.
All formulae which follow will be valid locally in this special basis.

As a consequence of the nice basis $\lbrace e_a\rbrace$, the $\star$-dual basis $\lbrace \theta^a\in\Omega^1:a=1,\dots,N\rbrace$
is equal to the undeformed dual basis, since
\begin{flalign}
 \delta_a^b = \pair{e_a}{\theta^b}_\star = \pair{\bar f^\alpha(e_a)}{\bar f_\alpha(\theta^b)} = \pair{e_a}{\theta^b}~.
\end{flalign}
The opposite $\star$-contraction satisfies
\begin{flalign}
  \pair{\theta^b}{e_a}_\star = \pair{\theta^b}{e_a} = \pair{e_a}{\theta^b} = \delta_a^b~, 
\end{flalign}
thus $\theta^b$ is dual to $e_a$ from both sides.
Employing the Leibniz rule we find for all $v\in\mathfrak{F}$
\begin{flalign}
  0 = v(\delta_a^b)=v(\pair{\theta^b}{e_a}) = \pair{v(\theta^b)}{e_a}~,
\end{flalign}
which implies $\mathcal{L}_\mathfrak{F}(\theta^b)=\lbrace 0\rbrace$ for all $b$, i.e.~$\lbrace \theta^b\rbrace$ is 
an invariant basis of $\Omega^1$.

Let $\tau\in\mathcal{T}$ be an arbitrary $(q,r)$-tensor field. Due to the invariance of the bases $\lbrace e_a\rbrace$ 
and $\lbrace \theta^b\rbrace$ we have locally the two equivalent expressions
\begin{subequations}
\begin{flalign}
 \tau &= \tau_{b_1\dots b_r}^{a_1\dots a_q}~\theta^{b_1}\otimes_A \cdots\otimes_A  \theta^{b_r}\otimes_A e_{a_1}\otimes_A \cdots \otimes_A e_{a_q}\,
~\\
 &=\tau_{b_1\dots b_r}^{a_1\dots a_q}\star\theta^{b_1}\otimes_{A_\star} \cdots\otimes_{A_\star}  \theta^{b_r}\otimes_{A_\star} e_{a_1}\otimes_{A_\star} \cdots \otimes_{A_\star} e_{a_q}~.
\end{flalign}
\end{subequations}
Note that the coefficient functions $\tau_{b_1\dots b_r}^{a_1\dots a_q}$ in both expressions are identical
and that the position where we put $\tau_{b_1\dots b_r}^{a_1\dots a_q}$ does not matter, e.g.~we can also write it on the right.

Let $\nab^\star$ be a $\star$-covariant derivative on $\Xi$. Using the local basis $\lbrace e_a\rbrace$ we
define the Christoffel symbols
\begin{flalign}
  \nab^\star_{e_a}e_b = \Gamma_{ab}^{\star c}~e_c = \Gamma_{ab}^{\star c}\star e_c ~.
\end{flalign}
Consider now a $\star$-covariant derivative $\nab^\star$  on one-forms $\Omega^1$.
In the local basis we can express it in terms of Christoffel symbols
\begin{flalign}
 \nab^\star_{e_a}\theta^b = \widetilde{\Gamma}_{ac}^{\star b}\,\theta^c = \widetilde{\Gamma}_{ac}^{\star b}\star\theta^c ~.
\end{flalign}
Demanding compatibility between the $\star$-covariant derivative on $\Xi$ and $\Omega^1$ we obtain
\begin{flalign}
 \nn 0&=e_a(\delta^c_b)= e_a(\pair{\theta^c}{e_b}_\star) = \pair{\nab^\star_{e_a}\theta^c}{e_b}_\star 
+ \pair{\theta^c}{\nab^\star_{e_a}e_b}_\star  \\
&=  \pair{\widetilde{\Gamma}_{ad}^{\star c}\star\theta^d}{e_b}_\star + \pair{\theta^c}{\Gamma_{ab}^{\star d}\star e_d}_\star = 
\widetilde{\Gamma}_{ab}^{\star c} + \Gamma_{ab}^{\star c}~,
\end{flalign}
i.e.~we have $\widetilde{\Gamma}_{ab}^{\star c} = -\Gamma_{ab}^{\star c}$.
The $\star$-covariant derivative is extended to tensor fields via
\begin{flalign}
  \nab^\star_{e_a}\tau = 
\Bigl(e_a(\tau_{b_1\dots b_r}^{a_1\dots a_q}) - \tau_{\widetilde b\dots b_r}^{a_1\dots a_q} \star \Gamma_{a b_1}^{\star \tilde b} 
- \dots + \tau_{b_1\dots b_r}^{a_1\dots \tilde a}\star \Gamma_{a \tilde a}^{\star a_q} \Bigr)~\theta^{b_1}\otimes_A \cdots \otimes_A e_{a_q}~.
\end{flalign}
For mathematical details on this extension see Part \ref{part:math}.
We denote the coefficient functions of $\nab^\star_{e_a}\tau$ by
\begin{flalign}
 \bigl(\nab^\star_{e_a}\tau\bigr)_{b_1\dots b_r}^{a_1\dots a_q} = e_a(\tau_{b_1\dots b_r}^{a_1\dots a_q}) - \tau_{\widetilde b\dots b_r}^{a_1\dots a_q} \star \Gamma_{a b_1}^{\star \tilde b} 
- \dots + \tau_{b_1\dots b_r}^{a_1\dots \tilde a}\star \Gamma_{a \tilde a}^{\star a_q}  ~.
\end{flalign}
Note that $\tau_{b_1\dots b_r}^{a_1\dots a_q}$ and $\Gamma_{ab}^{\star c}$ in the expressions above are coefficient {\it functions}.
This means that the action of $e_a$ is the Lie derivative on functions and all $\star$-products 
are $\star$-products between functions and do not ``act on the indices''.

The $\star$-torsion and $\star$-curvature of $\nab^\star$ take a simple form in the nice basis $\lbrace e_a\rbrace$. We obtain
\begin{flalign}
 \text{Tor}^\star(e_a,e_b) = (\Gamma_{ab}^{\star c} - \Gamma_{ba}^{\star c}\bigr)~e_c~,
\end{flalign}
and
\begin{flalign}
  \text{Riem}^\star(e_a,e_b,e_c) = \Bigl(e_a(\Gamma_{bc}^{\star d}) - e_b(\Gamma_{ac}^{\star d}) + \Gamma_{bc}^{\star e}\star\Gamma_{a e}^{\star d} -
 \Gamma_{a c}^{\star e}\star \Gamma_{b e}^{\star d}\Bigr)~e_d~.
\end{flalign}
We denote the coefficient functions of these tensors by
\begin{subequations}
\begin{flalign}
\label{eqn:bastorsion} T_{ab}^{\star c}&=\Gamma_{ab}^{\star c} - \Gamma_{ba}^{\star c}~,\\
\label{eqn:basriemann}R_{abc}^{\star ~\, d} &= e_a(\Gamma_{bc}^{\star d}) - e_b(\Gamma_{ac}^{\star d}) + \Gamma_{bc}^{\star e}\star\Gamma_{a e}^{\star d} -
 \Gamma_{a c}^{\star e}\star \Gamma_{b e}^{\star d}~.
\end{flalign}
\end{subequations}
The coefficient functions of the $\star$-Ricci tensor are given by $R^\star_{ab} = R_{cab}^{\star ~\, c}$,
where the sum over $c$ is understood.

Let $g= \theta^a\otimes_A \theta^b \,g_{ab}=\theta^a\otimes_{A_\star} \theta^b\star g_{ab}$ be a metric field.
The coefficient functions are real $g_{ab}=g^\ast_{ab}$ (provided the basis is real), symmetric $g_{ab}=g_{ba}$ and 
invertible as a matrix.
The $\star$-inverse metric field $g^{-1}=e_a\otimes_A e_b\, g^{ab} = e_a\otimes_{A_\star} e_b\star g^{ab}$
is given by the $\star$-inverse matrix
\begin{flalign}
  g_{ab}\star g^{bc} = g^{cb}\star g_{ba} = \delta_a^c~.
\end{flalign}
The $\star$-covariant derivative acts on the metric field as
\begin{flalign}
  \bigl(\nab^\star_{e_a} g\bigr)_{bc} = e_a(g_{bc}) - g_{\tilde b c} \star \Gamma_{ab}^{\star \tilde b}- 
g_{b \tilde c} \star \Gamma_{ac}^{\star \tilde c}~.
\end{flalign}
We can now prove analogously to the undeformed case that there is a unique $\star$-Levi-Civita connection,
i.e.~a $\star$-covariant derivative which is metric compatible and $\star$-torsion free.
The steps in this proof are the same as in the corresponding undeformed version of the theorem.
First, we assume that there is a $\star$-Levi-Civita connection. As a consequence,
we have
\begin{flalign}
\nn 0 &= \bigl(\nab^\star_{e_a} g\bigr)_{bc} + \bigl(\nab^\star_{e_b} g\bigr)_{ac} - \bigl(\nab^\star_{e_c} g\bigr)_{ab}\\
\nn &= e_a(g_{bc}) + e_b(g_{ac}) - e_c(g_{ab}) - g_{dc}\star \Gamma_{ab}^{\star d} - g_{bd}\star \Gamma_{ac}^{\star d} \\
\nn & \quad - 
 g_{dc}\star\Gamma_{ba}^{\star d} -  g_{ad}\star\Gamma_{bc}^{\star d} +  g_{db}\star\Gamma_{ca}^{\star d} +  g_{ad}\star\Gamma_{cb}^{\star d}\\
&\hspace{-3mm}\stackrel{T_{ab}^{\star c}=0}{=} e_a(g_{bc}) + e_b(g_{ac}) - e_c(g_{ab}) -2\,g_{cd}\star\Gamma_{ab}^{\star d}~.
\end{flalign}
This yields an expression for $\Gamma_{ab}^{\star c}$ in terms of the metric
\begin{flalign}
 \label{eqn:starlevicivita} \Gamma_{ab}^{\star c} = \frac{1}{2}\,g^{cd}\star\bigl(e_a(g_{bd}) + e_b(g_{ad}) - e_d(g_{ab})\bigr)~.
\end{flalign}
Thus, the $\star$-Levi-Civita connection is unique. It also exists, since $\nab^\star$ can be constructed from the
Christoffel symbols (\ref{eqn:starlevicivita}).

The $\star$-curvature scalar is given by
\begin{flalign}
 \mathfrak{R}^\star = g^{ab}\star R^\star_{ab}
\end{flalign}
and we obtain the $\star$-Einstein tensor
\begin{flalign}
 \label{eqn:basnceinstein} G^\star_{ab} = R^\star_{ab} - \frac{1}{2}g_{ab}\star \mathfrak{R}^\star~.
\end{flalign}
For the $\star$-Levi-Civita connection (\ref{eqn:starlevicivita}), the $\star$-Einstein tensor only depends
on the metric $g_{ab}$. The equation of motion for vacuum noncommutative Einstein gravity are thus given by
\begin{flalign}
  G^\star_{ab}=0~.
\end{flalign}
We can couple the metric to a suitable deformed stress-energy tensor 
$T^\star= \theta^a\otimes_A \theta^b\, T^\star_{ab}=\theta^a\otimes_{A_\star} \theta^b\star T^\star_{ab}$
via the equation
\begin{flalign}
G^\star_{ab} = 8\pi G_N\,T^\star_{ab}~.
\end{flalign}
The goal of the next two sections is to discuss explicit solutions of these equations.

%%%%%%%%%%%%%%%%%%%%%%%%%%%%%%%%%%%%%%%%%%%%%%%%%%%%%%%

\section{Cosmological solutions}
We study examples of exact noncommutative gravity solutions. For this we make use of the
cosmological models investigated in Chapter \ref{chap:symred}, Section \ref{sec:symredapplic}.
These models allow for a consistent deformed symmetry reduction and thus are expected to
be appropriate for noncommutative cosmology. From the discussion  
in Chapter \ref{chap:symred}, Section \ref{sec:symredapplic}, we have obtained that
for inflationary cosmology only the model $\mathfrak{C}_{22}$ with $c_1^i=0$ and $\mathfrak{C}_{32}$ with $c_2^i=0$
are suitable, since for all other models the physical scale of noncommutativity grows in the scale
factor\footnote{
Note that for $\mathfrak{C}_{22}$ we can also obtain a physically acceptable model for 
$c_1^i\neq 0$ by demanding a rapidly decreasing function $X_2^0(t)$, such that
$\vert a(t)X_2^0(t)\vert$ does not grow in time, where $a(t)$ is the scale factor.
For late times this model can be approximated by the same model with $d_1^i=f_1=0$
and is equivalent to a time-dependent Moyal-Weyl deformation.
Since we are interested in non-Moyal-Weyl deformations, this model is not of too much interest for us
and we do not discuss it further in the following.
}. Due to inflation this would lead to a highly noncommutative late universe, which we consider as unphysical.
We simplify our discussion and assume the functions $X_1^0(t)$ and $X^0_2(t)$ in model 
$\mathfrak{C}_{22}$ and $\mathfrak{C}_{32}$ to be analytic.
The vector fields $X_\alpha$ generating these twists are then given by
\begin{subequations}
\begin{flalign}
 \qquad\mathfrak{C}_{22}:&\qquad X_{1}=d_1^iL_i+f_1 x^i\partial_i~,\quad &X_2&= X(t)\partial_t~,&\qquad\\
 \qquad\mathfrak{C}_{32}:&\qquad X_{1}=\kappa_1X(t)\partial_t+d_1^iL_i ~,\quad &X_2&=\kappa_2X(t)\partial_t + f_2 x^i\partial_i~.&\qquad
\end{flalign}
\end{subequations}
Here $L_i :=\epsilon_{ijk}x^j\partial_k$ are the generators of rotations.

We can understand our models better by choosing without loss of generality $\mathbf{d}_1=(0,0,d)$ and 
transforming from Cartesian coordinates $x^i$ to spherical coordinates $(r,\zeta,\phi)$.
The vector fields then read
\begin{subequations}
\label{eqn:models}
\begin{flalign}
\label{eqn:c22}
\qquad \mathfrak{C}_{22}:&\qquad X_1=d\partial_\phi + f_1 r\partial_r~,\quad &X_2&=X(t)\partial_t~,&\qquad\\
\label{eqn:c32}
\qquad \mathfrak{C}_{32}:&\qquad X_1=\kappa_1X(t)\partial_t+d \partial_\phi~,\quad &X_2&=\kappa_2 X(t)\partial_t
 + f_2 r\partial_r~.&\qquad
\end{flalign}
\end{subequations}

The nontrivial $\star$-commutation relations between appropriate 
functions in spherical coordinates are:
\begin{subequations}
\label{eqn:commutators}
\begin{align}
\label{eqn:a22}
  \mathfrak{C}_{22}:\; &
    \left\{ \begin{aligned}
               \starcom{t}{\exp i\phi}&=2\exp i\phi~\sinh\Bigl(\frac{\lambda d}{2} X(t)\partial_t\Bigr) t\\
               \starcom{t}{r}         &=-2 i r ~\sin\Bigl(\frac{\lambda f_2}{2} X(t)\partial_t\Bigr)t
            \end{aligned} \right. \\
\label{eqn:a32}
  \mathfrak{C}_{32}:\; &
    \left\{ \begin{aligned}
               \starcom{t}{\exp i\phi}&= 2 \exp i\phi~\sinh\Bigl(\frac{\lambda d\kappa_2}{2} X(t)\partial_t\Bigr) t\\
               \starcom{t}{r}         &=2 i r ~\sin\Bigl(\frac{\lambda f_2\kappa_1}{2} X(t)\partial_t\Bigr)t\\
               \exp i\phi\star r      &= e^{-\lambda d f_2}~r\star \exp i\phi
    \end{aligned} \right.
\end{align}
\end{subequations}
In particular, our models include time-angle, time-radius and angle-radius noncommutativity.
Note that the $\star$-commutators simplify dramatically 
for the choice $X(t)\equiv 1$. This will be further explained below,
 when we discuss specific parameter choices in these models.

For our cosmological models we can make the following choice for a nice local basis
of vector fields
\begin{subequations}
\label{eqn:basiscosmo}
\begin{flalign}
 \label{eqn:basiscosmo1}\qquad &e_1=X(t)\partial_t~,&~~&e_2=r\partial_r~,&~~&e_3=\partial_\zeta~,&~~&e_4=\partial_\phi~,&~~&\text{if }X(t)\not\equiv 0~,&\qquad\\
 \label{eqn:basiscosmo2}\qquad &e_1=\partial_t~,&~~&e_2=r\partial_r~,&~~&e_3=\partial_\zeta~,&~~&e_4=\partial_\phi~,&~~&\text{if }X(t)\equiv 0~.&\qquad
\end{flalign}
\end{subequations}
The required conditions $[e_a,e_b]=0$ and $[X_\alpha,e_b]=0$ hold true for all $a,b,\alpha$. 
Note that (\ref{eqn:basiscosmo1}) is only a local basis in open regions around times $t$, where $X(t)\neq0$.
Since $X(t)$ is assumed to be analytic, the points around which we do not have a nice basis
are a null set in $\MM$. Knowing a smooth tensor field on this dense set, we can use smoothness
in order to extend it to all of $\MM$.

Let us move on to exact cosmological solutions of the noncommutative Einstein equations.
For this we can use Proposition \ref{propo:starinvariance} in order to find the right ansatz for the symmetry reduced 
metric field $g$.
This proposition tells us that a tensor field is invariant under the deformed action of the deformed symmetries, 
if and only if it is invariant under the undeformed action of the undeformed symmetries. 
We make the ansatz
 $g=dx^\mu\otimes_A dx^\nu g_{\mu\nu}$ in the commutative coordinate basis with
\begin{flalign}
 g_{\mu\nu}=\mathrm{diag}\Bigl(-1,a(t)^2,a(t)^2,a(t)^2\Bigr)_{\mu\nu}~,
\end{flalign}
and calculate the required coefficient functions $g_{ab}$ in the nice basis by solving
\begin{flalign}
 \theta^a\otimes_A\theta^b g_{ab}= dx^\mu\otimes_A dx^\nu g_{\mu\nu}~,
\end{flalign}
using the explicit expressions for the nice basis vector fields
(\ref{eqn:basiscosmo}).

It can be checked explicitly that for the choice $f_1=0$ in model $\mathfrak{C}_{22}$ (\ref{eqn:c22}) and 
the choice $\kappa_1= 0$ in model $\mathfrak{C}_{32}$ (\ref{eqn:c32}) the $\star$-Levi-Civita connection
 (\ref{eqn:starlevicivita}), the  $\star$-Riemann tensor (\ref{eqn:basriemann}) and finally the $\star$-Einstein tensor 
(\ref{eqn:basnceinstein}) receive no contributions in the deformation parameter $\lambda$, 
thus reducing to the undeformed counterparts. 
The reason is that, for the restrictions
mentioned above, one twist vector field $X_\alpha\in\mathfrak{c}$ is a Killing vector field and therefore deformed operations among
 invariant tensor fields reduce to the undeformed ones, since $X_\alpha\in \mathfrak{c}$ annihilates the tensors due to invariance.

Since the noncommutative stress-energy tensor of symmetry reduced matter fields reduces to the undeformed one due to 
the same reason,
these noncommutative models are exactly solvable, if the undeformed model is exactly solvable. Note that the reduction of the deformed
 symmetry reduced tensor fields to the undeformed ones does not mean that our models are trivial. In particular, we will obtain in general
 a deformed dynamics for fluctuations on the symmetry reduced backgrounds (see Part \ref{part:qft}),
 as well as a nontrivial coordinate algebra (\ref{eqn:commutators}).

Let us discuss physical implications of the nontrivial coordinate algebras of our models. Consider the model $\mathfrak{C}_{22}$
(\ref{eqn:c22}) with $f_1=0$ and for simplicity $X(t)\equiv1$. The coordinate algebra (\ref{eqn:a22}) 
reduces to the algebra of a quantum mechanical particle on the circle, i.\,e.
\begin{flalign}
 \com{\hat t}{\hat E}=\lambda \hat E~,
\end{flalign}
where we have introduced the abstract operators $\hat t$ and $\hat E:=\widehat{\exp i\phi}$ and set $d=1$. This 
algebra previously appeared e.g.~in
 the context of noncommutative field theory~\cite{Chaichian:2000ia,Balachandran:2004yh} and 
the noncommutative BTZ black hole~\cite{Dolan:2006hv}.
It is well-known that the operator $\hat t$ can be represented as a differential operator acting on the Hilbert
space $L^2(S^1)$ of square integrable functions on the circle and the spectrum can be shown to be given by $\sigma(\hat t)=\lambda
 (\mathbb{Z}+\delta)$, where $\delta\in[0,1)$ labels unitary inequivalent representations. 
The spectrum should be interpreted as possible time eigenvalues, i.e.~our model has a discrete time. 
This feature can be used to realize singularity avoidance in cosmology.
Consider for example an
inflationary background with $a(t)=t^p$, where $p>1$ is a parameter. This so-called power-law inflation can be 
realized by coupling a scalar field with exponential potential to the geometry even in our noncommutative model, 
since the symmetry reduced noncommutative gravity theory reduces to the undeformed one as explained above. Note that 
the scale factor goes to zero at the time $t=0$ and leads
to a singularity in the curvature scalar. But as we have discussed above, the possible time eigenvalues
 are $\lambda (\mathbb{Z}+\delta)$, which does not include the time $t=0$ for $\delta\neq0$.

For the more complicated solvable model $\mathfrak{C}_{32}$ (\ref{eqn:c32}) with $\kappa_1=0$ we obtain time-angle 
and angle-radius noncommutativity. We set without loss of generality the parameters $d=f_2=1$. 
Firstly, we choose $X(t)\equiv 0$ leading to a pure angle-radius noncommutativity. The algebra (\ref{eqn:a32})
becomes in this case
\begin{flalign}
 \hat E \hat r=e^{-\lambda} ~\hat r \hat E~.
\end{flalign}
This algebra can be represented on the Hilbert space $L^2(S^1)$ as
\begin{flalign}
 \hat E = \exp i\phi~,\quad \hat r = L \,\exp{\bigl(-i\lambda \partial_\phi\bigr)}~,
\end{flalign}
leading to the spectrum $\sigma(\hat r)=L\, \exp\bigl(\lambda (\mathbb{Z}+\delta)\bigr)$, 
where $\delta\in[0,1)$ is again a parameter labeling unitary inequivalent representations and $L$ is some length scale. 
The radius becomes discrete. 
Note that this model describes a kind of ``condensating geometry'' around the origin $r=0$, since the shells of 
constant $r$ accumulate at this point.

Secondly, we choose  $X(t)\equiv1$ in (\ref{eqn:a32}).  We obtain the abstract algebra
\begin{flalign}
\label{eqn:alg32}
 \com{\hat t}{\hat E}= \lambda \hat E~,\quad \hat E \hat r= e^{-\lambda} ~\hat r \hat E~. 
\end{flalign}
We find the representation on $L^2(S^1)$
\begin{flalign}
 \hat E = \exp i\phi~,\quad\hat t=\tau \hat 1 - i\lambda \partial_\phi~,\quad \hat r =
 L \,\exp{\bigl(-i\lambda \partial_\phi\bigr)}~.
\end{flalign}
Note that we had to introduce a real parameter $\tau\in\bbR$ and the identity operator $\hat 1$ 
in order to cover the whole spacetime.

The last cosmological model we want to briefly discuss is the isotropic model $\mathfrak{C}_{22}$ with $d=0$ (\ref{eqn:c22}).
It turns out that both $X_\alpha$ are not Killing vector fields and the
symmetry reduced noncommutative gravity theory does not automatically reduce to
the undeformed one. Thus, we expect corrections in $\lambda$ to the $\star$-Einstein equations (\ref{eqn:basnceinstein})
 and its solutions.
However, we have found a special exact solution of this model which is physically very interesting.

Consider the (undeformed) de Sitter space given by $a(t)=\exp H t$, where $H$ is the Hubble parameter.
It turns out that all $\star$-products entering the deformed geometrical quantities 
reduce to the undeformed ones, if $X(t)\equiv1$. Thus, the undeformed de Sitter space solves exactly
the $\star$-Einstein equations in presence of a cosmological constant for this particular choice of twist. 
 Note that in contrast to the solutions above, we have required the explicit form of the scale factor $a(t)$.

%%%%%%%%%%%%%%%%%%%%%%%%%%%%%%%%%%%%%%%%%%%%%%%%%%%%%%%

\section{Black hole solutions}
We now turn to examples of noncommutative black hole solutions,
making use of the models investigated in Chapter \ref{chap:symred}, Section \ref{sec:symredapplic}.
From the discussion in this section we found that the only physically viable noncommutative black hole models
are $\mathfrak{B}_{11}$ and $\mathfrak{B}_{12}$, since all other models have a noncommutativity growing
linearly in time. 
As a reminder, the twist generating vector fields $X_\alpha$ for these models are given by
\begin{subequations}
\begin{flalign}
 \qquad\mathfrak{B}_{11}:&\qquad X_{1}=c^0_1(r)\partial_t +\kappa_1 d^i L_i~,\quad &X_2&= c_2^0(r)\partial_t~,&\qquad\\
 \qquad\mathfrak{B}_{12}:&\qquad X_{1}=c_1^0 \partial_t +\kappa_1d^iL_i ~,\quad &X_2&=c_2^0(r) \partial_t + \kappa_2 d^i L_i+f_2(r) x^i\partial_i~,&\qquad
\end{flalign}
\end{subequations}
where $c_1^0$ in $\mathfrak{B}_{12}$ has to be constant and $r:=\Vert\mathbf{x}\Vert$ is the radial coordinate. 
We have set without loss of generality $\kappa_2=0$ in $\mathfrak{B}_{11}$.
Let us again choose $\mathbf{d}=(0,0,d)$ and 
transform from Cartesian coordinates $x^i$ to spherical coordinates $(r,\zeta,\phi)$.
The twist generating vector fields read in this basis
\begin{subequations}
\begin{flalign}
\label{eqn:b11} \qquad\mathfrak{B}_{11}:&\qquad X_{1}=c^0_1(r)\partial_t +\kappa_1 \partial_\phi~,\quad &X_2&= c_2^0(r)\partial_t ~,&\qquad\\
\label{eqn:b12} \qquad\mathfrak{B}_{12}:&\qquad X_{1}=c_1^0 \partial_t +\kappa_1\partial_\phi ~,\quad &X_2&=c_2^0(r) \partial_t + \kappa_2 \partial_\phi+f(r) \partial_r~.&\qquad
\end{flalign}
\end{subequations}
Note that we have defined $f(r):=f_2(r) r$ and absorbed the parameter
$d$ into $\kappa_\alpha$ in order to
simplify the expressions.
The nontrivial $\star$-commutation relations between appropriate 
functions in spherical coordinates are:
\begin{subequations}
\begin{align}
\label{eqn:a11}
  \mathfrak{B}_{11}:\; &
    \left\{ \begin{aligned}
               \starcom{t}{\exp i\phi}&= \lambda\,\exp i\phi ~ \kappa_1 c_2^0(r)
            \end{aligned} \right. \\
\label{eqn:a12}
  \mathfrak{B}_{12}:\; &
    \left\{ \begin{aligned}
               \starcom{t}{\exp i\phi} &=\exp i\phi~ \Bigl(2 \sinh\Bigl(\frac{\lambda\kappa_1}{2}\bigl(c_2^0(r)\partial_t
                                          + f(r)\partial_r\bigr)\Bigr) t -\lambda \kappa_2 c_1^0\Bigr)\\
               \starcom{t}{r}          &=i\lambda c_1^0 f(r)~\\
               \starcom{\exp i\phi}{r} &=-2 \exp i\phi ~\sinh\Bigl(\frac{\lambda\kappa_1}{2}f(r)\partial_r\Bigr)r
    \end{aligned} \right.
\end{align}
\end{subequations}
In particular, our models include time-angle, time-radius and angle-radius noncommutativity.
Note that the $\star$-commutators simplify dramatically 
for the choice $f(r)=r$ and $c_2^0(r)\equiv \mathrm{const.}$. This will be 
further explained below, when we discuss specific choices for the parameters.

We now study for which parameter choices the deformation is nonexotic.
This is required to set up the noncommutative Einstein equations. 
We assume that $c_1^0(r)$, $c_2^0(r)$ and $f(r)$ are analytic functions to simplify the investigations.
Model $\mathfrak{B}_{11}$ is nonexotic for $\kappa_1\neq 0$. A nice basis is given by
\begin{subequations}
\begin{flalign}
\label{eqn:basisbh2}
 e_1=X_2~,~~e_2=\partial_r +t \frac{c_2^{0\prime}(r)}{c_2^0(r)}\partial_t + 
\frac{\phi}{\kappa_1}\left(c_1^{0\prime}(r)-\frac{c_1^0(r)c_2^{0\prime}(r)}{c_2^0(r)}\right)\partial_t,~~e_3=\partial_\zeta~,~~
e_4=X_1~~,
\end{flalign}
where $c_1^{0\prime}(r)$ and $c_2^{0\prime}(r)$ denotes the derivative of $c_1^0(r)$ and $c_2^0(r)$, respectively. 
This basis is only defined in open regions around radii $r$ satisfying $c_2^0(r)\neq0$ and
on $\phi\in (0,2\pi)$ (due to the term linear in $\phi$).
Since $c_2^0(r)\not\equiv 0$ is assumed to be nonvanishing 
in order to have a nontrivial deformation and analytic, this is a dense set in $\MM$.

We now come to $\mathfrak{B}_{12}$ with $f(r)\not\equiv0$ (otherwise $\mathfrak{B}_{12}$ becomes part of $\mathfrak{B}_{11}$).
This model is nonexotic and we find the nice basis
\begin{flalign}
\label{eqn:basisbh1}
 e_1=\partial_t~,~~e_2=f(r)\partial_r + c_2^0(r)\partial_t~,~~e_3=\partial_\zeta~,~~e_4=\partial_\phi~,
\end{flalign}
\end{subequations}
which is defined in open regions around radii $r$ with $f(r)\neq 0$. 
Since $f(r)$ is assumed to be analytic this set is dense in $\MM$.

We use again Proposition \ref{propo:starinvariance} and make the ansatz
\begin{flalign}
\label{eqn:bhmetric}
 g_{\mu\nu}=\mathrm{diag}\Bigl(-Q(r),S(r),r^2,(r\sin\zeta)^2  \Bigr)_{\mu\nu}
\end{flalign}
for the metric field $g=dx^\mu\otimes_A dx^\nu g_{\mu\nu}$ in the commutative spherical coordinate basis. The metric in the 
nice basis can be calculated using (\ref{eqn:basisbh2}) or (\ref{eqn:basisbh1}), respectively. 
Concerning the solution of the $\star$-Einstein equations we are in a comfortable position, since 
$X_1\in\mathfrak{b}$ in $\mathfrak{B}_{12}$ is a Killing vector field, which means that the symmetry 
reduced noncommutative gravity theory reduces to the undeformed one. 
Choosing in $\mathfrak{B}_{11}$ either $c_1^0(r)\equiv \text{const.}$ or $c_2^0(r)\equiv \text{const.}$, we obtain
$X_1\in\mathfrak{b}$ or $X_2\in\mathfrak{b}$. Note that in this case $\mathfrak{B}_{11}$ becomes part
of type $\mathfrak{B}_{12}$ by setting $f(r)\equiv 0$.
This leads in the exterior of our noncommutative black hole to the metric (\ref{eqn:bhmetric}) with
\begin{flalign}
 Q(r)=S(r)^{-1} = 1-\frac{r_s}{r}~,
\end{flalign}
where $r_s$ is the Schwarzschild radius. As in the case of the cosmological models, the reduction of the symmetry 
reduced tensor fields to the undeformed counterparts does not mean that our models are trivial. 
Field fluctuations, as well as the coordinate algebras, will in general receive distinct noncommutative effects.

Taking a look at the coordinate algebra of the black hole $\mathfrak{B}_{12}$ (\ref{eqn:a12}), 
we observe that it includes in particular the algebra of a quantum mechanical particle on 
the circle for the time and angle coordinate, if we choose $c_2^0(r)\equiv0$ and $f(r)\equiv 0$. This leads to discrete times.
Another simple choice is $c_1^0=\kappa_2=0$, $c_2^0(r)\equiv0$, $\kappa_1=1$ and $f(r)=r$. The radius spectrum in this case is 
$\sigma(\hat r)=L\,\exp\bigl(\lambda(\mathbb{Z}+\delta) \bigr)$, describing a fine grained geometry around the black hole.
The phenomenological problem with this model is that the spacings between the radius eigenvalues grow exponentially in $r$.
This can be fixed by considering a modified twist like e.\,g.~$c_1^0=\kappa_2=0$, $c_2^0(r)\equiv0$, $\kappa_1=1$ and 
$f(r)=\tanh \frac{r}{L}$, where $L$ is some length scale.
The essential modification is to choose a bounded $f(r)$. 
Consider the coordinate change $r\to \eta=\log \sinh (\frac{r}{L})$.
Note that~$r$ and not~$\eta$ is the physical radial coordinate of a
Schwarzschild observer, $\eta$ is just introduced in order to simplify
the calculation.  In terms of $\eta$ the
algebra (\ref{eqn:a12}) reduces to
\begin{flalign}
 \com{\hat E}{\hat \eta}=-\frac{\lambda}{L} \hat E~,
\end{flalign}
leading to the spectrum $\sigma(\hat \eta)=\frac{\lambda}{L} \bigl(\mathbb{Z}+\delta\bigr)$.
The spectrum of the physical radius $\hat r$ is then given by $\sigma(\hat r) = L~ 
\mathrm{arcsinh}\exp\bigl(\frac{\lambda}{L}(\mathbb{Z}+\delta)\bigr) $. This spectrum approaches constant spacings
 between the eigenvalues for large $r$.

As a last model we consider $\mathfrak{B}_{12}$ with $\kappa_1=\kappa_2=0$, $c_2^0(r)\equiv 0$
and the normalization $c_1^0=1$. This model is invariant under all classical black hole symmetries.
The nontrivial commutation relation is given by
\begin{flalign}
 [\hat t,\hat r] = i\,\lambda\, f(\hat r)~.
\end{flalign}
For the special case $f(\hat r) = \hat r$ we obtain the $\kappa$-commutation relations
 $[\hat t,\hat r] = i\,\lambda\,\hat r$.
The $\kappa$-commutation relations can be represented on $L^2(\bbR)$ by the operators
\begin{flalign}
 \hat t = -i\lambda\,\frac{\partial}{\partial s}~,\quad \hat r = L\, e^{-s}~.
\end{flalign}
This representation is unique up to unitary equivalence \cite{Dabrowski:2010yk}.
The spectrum of the time and radius operator is $\sigma(\hat t)=\bbR$ and $\sigma(\hat r)=(0,\infty)$, respectively.
Thus, there is no discretization of time or radius, however, this spacetime is noncommutative
which leads to distinct effects in the propagation of fields (see Part \ref{part:qft}). 

We omit a discussion of further possible models, since our main
goal was to present the very explicit and simple examples shown
above. We conclude this section with one remark. Our class of black
hole models (\ref{eqn:b12}) has a nonvanishing overlap with the noncommutative 
black hole models found before by Schupp and Solodukhin \cite{Schupp:2007,Schupp:2009pt}. They also
observed that the symmetry reduced dynamics reduces to the undeformed one,
provided the deformation is constructed by sufficiently many Killing vector fields.
 In addition, they have constructed models based on a nonabelian and projective twist, 
that is not part of the class of abelian twists we have used for our investigations. 
Their black holes exhibit discrete radius eigenvalues as well.

%%%%%%%%%%%%%%%%%%%%%%%%%%%%%%%%%%%%%%%%%%%%%%%%%%%%%%%

\section{\label{sec:ncgsolgen}General statement on noncommutative gravity solutions}
After studying explicit examples of exact solutions of the noncommutative Einstein equations,
we gathered a sufficient understanding in order to make  more general statements on solutions
\cite{Schupp:2007,Schupp:2009pt,Ohl:2009pv,Aschieri:2009qh}. 
Let $g$ be a classical metric field and $\lbrace \Phi_i \rbrace_{i\in\mathcal{I}}$ be a collection
 of classical matter fields which are invariant under some symmetry Lie algebra $\mathfrak{g}\subseteq \Xi$
and solve classical Einstein equations and geometric differential equations for $\lbrace \Phi_i\rbrace_{i\in\mathcal{I}}$.
Consider a deformation by a nonexotic abelian Drinfel'd twist (\ref{eqn:jstwist}) with linearly independent $X_\alpha$.
If now either the vector field $X_\alpha$ or $\widetilde{X}^\alpha=\Theta^{\beta\alpha}X_\beta$
is a Killing vector field for all $\alpha$, then the classical solutions $g$ and $\Phi_i$ also solve the noncommutative
Einstein equations and the corresponding deformed geometric differential equations for $\Phi_i$.
Let us explain why:
Because everything is assumed to be smooth, it is sufficient to check if the noncommutative
equations of motion are solved in open regions around every regular point. In these regions we can choose
a nice basis $\lbrace e_a\rbrace $ satisfying $[e_a,e_b]=0$ and $[X_\alpha,e_a]=0$, for all $a,b,\alpha$,
and its dual $\lbrace \theta^a\rbrace$ satisfying $\mathcal{L}_{X_\alpha}(\theta^b)=0$, for all $b,\alpha$.
Writing a generic $\mathfrak{g}$-invariant tensor field in this basis, the coefficient functions will be annihilated
by all $X_\alpha$ and $\widetilde{X}^\alpha$ which are Killing vector fields, since the basis is 
invariant under these transformations. From the formulae of Section \ref{sec:ncgbasis} in this chapter
we find that all deformed geometric quantities are calculated in the nice basis by using
$\star$-products between coefficient functions. 
The $\star$-products between $\mathfrak{g}$-invariant tensor coefficient functions drop out, since either 
$X_\alpha$ or $\widetilde{X}^\alpha$ is a Killing vector field for all $\alpha$.
The deformed equations of motion then agree with the undeformed ones and thus are solved
exactly. Let us mention again that the equations of motion for fluctuations
around the $\mathfrak{g}$-invariant sector obtain noncommutative corrections, see Part \ref{part:qft}.

%%%%%%%%%%%%%%%%%%%%%%%%%%%%%%%%%%%%%%%%%%%%%%%%%%%%%%%
%%%%%%%%%%%%%%%%%%%%%%%%%%%%%%%%%%%%%%%%%%%%%%%%%%%%%%%

\chapter{\label{chap:ncgproblems}Open problems}
In this chapter we comment on open problems in the noncommutative gravity theory \cite{Aschieri:2005zs}, 
which we have found during our investigations.

\subsection*{Exotic deformations and global approach:} 
As discussed in detail above, the existence of a local nice basis
of vector fields is guaranteed only in case of nonexotic abelian twists, i.e.~Drinfel'd twists
which reduce to the Moyal-Weyl twist in open regions around almost all points in $\MM$.
However, for the present formulation of noncommutative gravity, this nice basis of vector fields is required,
since otherwise we can not yet prove the existence and uniqueness of a $\star$-Levi-Civita connection,
one of the main ingredients of noncommutative gravity.
As discussed in Chapter \ref{chap:qftapp}, Section \ref{sec:z=2qft}, there are very interesting exotic 
deformations of scalar quantum field theories, which might also be relevant for noncommutative gravity.
Therefore, it is important to understand noncommutative gravity beyond the nice basis.

Another motivation for a global and basis free understanding of noncommutative gravity
comes from convergent deformations. As compared to formal deformations,
the localization in open regions around points becomes impossible in the convergent setting, since the twist acts as
a nonlocal operator and in general does not preserve these patches. This means that we can not 
use a local nice basis in the convergent case,
but we have to work in a global approach. A first step towards convergent deformations
is thus to understand formal deformations on a global level, without referring to any local bases.
This is a motivation for our global investigations on noncommutative gravity in Part \ref{part:math}.
As a first successful application of this formalism we are going to study in Chapter \ref{chap:ncgmath}
solutions of the noncommutative Einstein equations in a global approach.

\subsection*{Inverse metric field:}
Since this problem also occurs for the Moyal-Weyl deformation of $\bbR^N$,
we restrict ourselves to this case in order to make our argument simple.
Consider a metric field $g = dx^\mu\otimes_{A_\star} dx^\nu\star g_{\mu\nu} = dx^\mu\otimes_A dx^\nu g_{\mu\nu}$,
where we have used that $dx^\mu$ is invariant under the twist.
By definition, the metric is symmetric $g_{\mu\nu}=g_{\nu\mu}$ and  real $g_{\mu\nu}^\ast = g_{\mu\nu}$.
However, the inverse metric field $g^{-1}=g^{\mu\nu}\partial_\mu\otimes_A \partial_\nu$
defined by
\begin{flalign}
\label{eqn:metinv}
 g_{\mu\nu}\star g^{\nu\rho} = g^{\rho\nu}\star g_{\nu\mu}=\delta_\mu^\rho
\end{flalign}
is in general neither symmetric nor real, but only hermitian $g^{\mu\nu\ast} = g^{\nu\mu}$.
This leads to an asymmetry between the metric field and its inverse, and the question arises
which one we should demand to be symmetric and real, thus paying the price that the other one is only hermitian.
Another option is to use hermitian metrics for formulating noncommutative gravity, 
see e.g.~\cite{Chamseddine:1992yx,Chamseddine:2000zu}, which however leads to new degrees
of freedom compared to Einstein's gravity.

A further issue is that the $\star$-Levi-Civita connection is in general not compatible with the inverse metric.
To see this, let us again focus on the Moyal-Weyl deformation of $\bbR^N$. The Christoffel
symbols of the $\star$-Levi-Civita connection are uniquely defined by symmetry $\Gamma_{\mu\nu}^{\star\rho} = 
\Gamma_{\nu\mu}^{\star\rho}$ and metric compatibility
\begin{flalign}
 0=\partial_\mu g_{\nu\rho} -  g_{\tau\rho}\star\Gamma_{\mu\nu}^{\star\tau} - g_{\nu\tau}\star \Gamma_{\mu\rho}^{\star \tau}~,
\end{flalign}
yielding the expression
\begin{flalign}
 \Gamma_{\mu\nu}^{\star\rho} = \frac{1}{2}\,g^{\rho\sigma}\star\bigl(\partial_\mu g_{\nu\sigma} +\partial_{\nu}g_{\mu\sigma}
 - \partial_\sigma g_{\mu\nu}\bigr)~.
\end{flalign}
As a consequence of (\ref{eqn:metinv}) the inverse metric satisfies
\begin{flalign}
 0=\partial_\mu g^{\nu\rho} + \Gamma_{\mu\sigma}^{\star \nu}\star g^{\sigma\rho} + g^{\nu\epsilon}\star g_{\tau\sigma}\star 
\Gamma_{\mu\epsilon}^{\star \tau}\star g^{\sigma\rho}~.
\end{flalign}
For noncentral Christoffel symbols this leads to a violation of the compatibility condition for the inverse metric, i.e.~
\begin{flalign}
 0\neq\partial_\mu g^{\nu\rho} + g^{\tau\rho}\star \Gamma_{\mu\tau}^{\star\nu} + g^{\nu\tau}\star\Gamma_{\mu\tau}^{\star\rho}~.
\end{flalign}
To be explicit, we consider a conformally flat metric field $g_{\mu\nu}= \Phi\,\eta_{\mu\nu}$, where
$\Phi$ is a positive definite function and $\eta_{\mu\nu}$ is the Minkowski metric.
The Christoffel symbols read
\begin{flalign}
 \Gamma_{\mu\nu}^{\star\rho} = 
\frac{1}{2} \,\Phi^{-1_\star}\star\bigl(\partial_\mu\Phi \delta_\nu^\rho + \partial_\nu \Phi \delta_\mu^\rho - \partial^\rho \Phi\eta_{\mu\nu}\bigr)~,
\end{flalign}
where the index of $\partial^\rho$ is raised with the Minkowski metric and $\Phi^{-1_\star}$ is the 
$\star$-inverse function of $\Phi$.
We obtain
\begin{flalign}
 \nn\partial_\mu g^{\nu\rho} +  g^{\tau\rho}\star \Gamma_{\mu\tau}^{\star\nu} + g^{\nu\tau}\star \Gamma_{\mu\tau}^{\star\rho}&=
\bigl(\partial_\mu \Phi^{-1_\star}  + \Phi^{-2_\star}\star\partial_\mu\Phi\bigr)\eta^{\nu\rho}\\
&=\bigl(- \Phi^{-1_\star}\star\partial_\mu\Phi\star\Phi^{-1_\star} + \Phi^{-2_\star}\star\partial_\mu\Phi\bigr)\eta^{\nu\rho}\neq 0~.
\end{flalign}

\subsection*{Covariant derivatives and their extension:}
One problem which we could solve in this thesis is the extension of covariant derivatives to tensor fields, 
see Part \ref{part:math}.
To understand why this improvement was necessary, we show explicitly where the original definition of
\cite{Aschieri:2005zs} breaks down.
As a reminder, a $\star$-covariant derivative
is defined to be a $\bbC$-linear map $\nab^\star_u:\Xi\to\Xi$ satisfying 
\begin{subequations}
\begin{flalign}
\nab^\star_{v+w}z &= \nab^\star_v z +\nab^\star_w z~,\\
\nab^\star_{h\star v} z &= h\star \nab^\star_v z~,\\
\nab^\star_{v}(h\star z) &=\mathcal{L}^\star_{v}(h)\star z + \bar R^\alpha(h)\star \nab^\star_{\bar R_\alpha(v)}z ~.
\end{flalign}
\end{subequations}
The extension of the $\star$-covariant derivative to bivector fields $\Xi\otimes_{A_\star}\Xi$ was given in \cite{Aschieri:2005zs} by
\begin{flalign}
\label{eqn:zwischenschrittopen}
 \nab^\star_v(w\otimes_{A_\star} z) = \bigl(\nab^\star_vw\bigr)\otimes_{A_\star} z 
+ \bar R^\alpha(w)\otimes_{A_\star} \nab^\star_{\bar R_\alpha(v)} z~.
\end{flalign}
Let us study this equation in more detail for the Moyal-Weyl deformation of $\bbR^N$.
By definition of the $\star$-tensor product and invariance of the basis $\partial_\mu$ we have
$(h\star \partial_\nu)\otimes_{A_\star} \partial_\rho = \partial_\nu\otimes_{A_\star}(h\star  \partial_\rho) $, for all
functions $h$.
The $\star$-covariant derivative should respect this middle linearity property in order to be well-defined.
However, we find for (\ref{eqn:zwischenschrittopen}) by an explicit calculation
\begin{subequations}
\begin{flalign}
\nn \nab^\star_{\partial_\mu}((h\star \partial_\nu)\otimes_{A_\star} \partial_\rho)
&= \nab^\star_{\partial_\mu}(h\star \partial_\nu)\otimes_{A_\star} \partial_\rho + h\star\partial_\nu\otimes_{A_\star}
\nab^\star_{\partial_\nu}\partial_\rho\\
\label{eqn:op1}&=\partial_\mu h \star \partial_\nu\otimes_{A_\star} \partial_\rho + h\star 
\bigl(\nab^\star_{\partial_\mu}\partial_\nu\otimes_{A_\star} \partial_\rho + \partial_\nu\otimes_{A_\star} \nab^\star_{\partial_\mu}\partial_\rho\bigr)
\end{flalign}
and
\begin{flalign}
 \nn \nab^\star_{\partial_\mu}(\partial_\nu\otimes_{A_\star} (h\star \partial_\rho)) 
&=\nab^\star_{\partial_\mu}\partial_\nu \otimes_{A_\star} h\star \partial_\rho + 
\partial_\nu\otimes_{A_\star} \nab^\star_{\partial_\mu}(h\star\partial_\rho)\\
\nn &=\nab^\star_{\partial_\mu}\partial_\nu \otimes_{A_\star} h\star \partial_\rho + 
\partial_\nu\otimes_{A_\star}\partial_\mu h\star \partial_\rho + \partial_\nu\otimes_{A_\star} h\star \nab^\star_{\partial_\mu}\partial_\rho\\
\label{eqn:op2}&=\partial_\mu h\star \partial_\nu\otimes_{A_\star}\partial_\rho +\nab^\star_{\partial_\mu}\partial_\nu \otimes_{A_\star} h\star \partial_\rho+
 h\star \partial_\nu\otimes_{A_\star} \nab^\star_{\partial_\mu}\partial_\rho
 ~.
\end{flalign}
\end{subequations}
Note that the second term of (\ref{eqn:op1}) is different to (\ref{eqn:op2}) 
in case the Christoffel symbols are noncentral, which is the generic case.
This problem has been resolved and we show in Chapter \ref{chap:con} that there is a consistent definition
of a $\star$-covariant derivative on tensor fields.

\subsection*{Einstein tensor:}
As already discussed above, there are issues concerning the reality of the inverse metric field.
Similarly, the $\star$-curvature and $\star$-Einstein tensor turn out to be not real.
To show this explicitly let us consider again the Moyal-Weyl deformation of $\bbR^N$ and
a conformally flat metric field $g_{\mu\nu} = \Phi\,\eta_{\mu\nu}$.
The Christoffel symbols read
\begin{flalign}
 \Gamma_{\mu\nu}^{\star\rho} = 
\frac{1}{2} \,\Phi^{-1_\star}\star\bigl(\partial_\mu\Phi \delta_\nu^\rho + \partial_\nu \Phi \delta_\mu^\rho - \partial^\rho \Phi\eta_{\mu\nu}\bigr)
=: u_\mu\delta_\nu^\rho + u_\nu\delta_\mu^\rho - u^\rho\eta_{\mu\nu}~,
\end{flalign}
where we have defined $u_\mu=\Phi^{-1_\star}\star \partial_\mu\Phi /2$.
The $\star$-Ricci tensor reads
\begin{flalign}
 R^\star_{\mu\nu} = \partial_\nu u_\mu - (N-1)\partial_\mu u_\nu - \partial_\rho u^\rho\,\eta_{\mu\nu}
+ (N-2)\,\bigl(u_\mu\star u_\nu -u_\rho\star u^\rho \,\eta_{\mu\nu}\bigr)~
\end{flalign}
and for the $\star$-curvature scalar we find
\begin{flalign}
 \mathfrak{R}^\star = \Phi^{-1_\star}\star \Bigl( 2(1-N)\,\partial_\mu u^\mu - (N-2) (N-1)\, u_\mu\star u^\mu \Bigr)~.
\end{flalign}
Complex conjugation yields
\begin{flalign}
 u_\mu^\ast =\Phi\star u_\mu\star \Phi^{-1_\star}~,\qquad
(\partial_\mu u_\nu)^\ast = \Phi\star \partial_\nu u_\mu\star \Phi^{-1_\star}~.
\end{flalign}
Using this we find for the $\star$-Ricci tensor and the $\star$-curvature scalar
\begin{flalign}
 \bigl(R^\star_{\mu\nu}\bigr)^\ast = \Phi\star R^\star_{\nu\mu}\star \Phi^{-1_\star}~,\qquad 
\bigl(\mathfrak{R}^\star\bigr)^\ast = \Phi^{2_\star}\star \mathfrak{R}^\star \star \Phi^{-2_\star}~.
\end{flalign}
The conjugation of the $\star$-Einstein tensor is
\begin{flalign}
\label{eqn:einsteinconj}
 \bigl(G^\star_{\mu\nu}\bigr)^\ast = \Phi\star G^\star_{\nu\mu}\star \Phi^{-1_\star}~.
\end{flalign}

Reality (or hermiticity) of the $\star$-Einstein tensor is an essential aspect of noncommutative gravity.
Demanding the metric field to be real and symmetric (or hermitian), it is important that the dynamics preserves
this property. From (\ref{eqn:einsteinconj}) one could expect that there is a modified hermiticity property,
namely hermiticity up to $\star$-conjugation by $\Phi$. 
Whether this is an artefact of using a conformally flat metric field is not yet clear.

\subsection*{Discussion:}
Even though the present formulation of noncommutative gravity is plagued by some unsolved issues,
the discussion of our exact solutions in Chapter \ref{chap:ncgsol} is not affected by these problems.
The reason is that we have considered symmetric spacetimes and we have chosen
the deformation such that the symmetry reduced sector of noncommutative gravity is undeformed.
This would also hold true for modified noncommutative Einstein equations, provided
they are constructed according to the principle of deformed general covariance, i.e.~by a twist.
However, for investigating gravitational fluctuations around these highly symmetric noncommutative
gravity backgrounds, a more complete understanding of noncommutative gravity
is required. Since noncommutative gravitons are not only important to study the stability of
noncommutative gravity solutions, but they are also of interest in cosmological applications,
we have resolve the issues
which are present in the current formulation.
This has been the main motivation for Part \ref{part:math} of this thesis.

%%%%%%%%%%%%%%%%%%%%%%%%%%%%%%%%%%%%%%%%%%%%%%%%%%%%%%%
%%%%%%%%%%%%%%%%%%%%%%%%%%%%%%%%%%%%%%%%%%%%%%%%%%%%%%%
%%%%%%%%%%%%%%%%%%%%%%%%%%%%%%%%%%%%%%%%%%%%%%%%%%%%%%%

\part{\label{part:qft}Quantum Field Theory on Noncommutative Curved Spacetimes}

%%%%%%%%%%%%%%%%%%%%%%%%%%%%%%%%%%%%%%%%%%%%%%%%%%%%%%%
%%%%%%%%%%%%%%%%%%%%%%%%%%%%%%%%%%%%%%%%%%%%%%%%%%%%%%%

\chapter{\label{chap:qftbas}Basics}

We review the algebraic approach to quantum field theory on commutative curved spacetimes following mostly \cite{Bar:2007zz}.
The focus is on a non-selfinteracting real scalar quantum field, which is coupled to
a given classical gravitational background.

\section{Motivation}
Quantum field theory on commutative curved spacetimes is by now a well established and far developed theory, 
see e.g.~\cite{Wald:1995yp,Bar:2007zz,Bar:2009zzb}.
Early investigations on these theories have shown that the quantum field theory on a generic curved 
spacetime differs drastically from the usual quantum field theory on the Minkowski spacetime, as presented in
every introductory textbook.
The main difference is that while on Minkowski spacetime we have a unique vacuum state for the quantum
field theory, this uniqueness fails to be present in general. This is because
the Minkowski spacetime has a very large isometry group, the Poincar{\'e} group, which poses
strong restrictions on possible vacua, while a generic spacetime has no isometries. 
We therefore do not have sufficiently many criteria at hand to choose from the huge space of all states a distinct one.
Since Stone-von Neumann's theorem does not hold true for infinitely may degrees of freedom, as in quantum field
theory, different choices of vacuum states can lead to physically distinct predictions, because the corresponding
Hilbert space representations are not unitary equivalent. In other words, the choice of vacuum is an input we have to make
which influences the physics results we extract from the quantum field theory.
Another new feature one has to deal with when going from Minkowski to a generic spacetime
is that the concept of particles becomes observer dependent. On Minkowski spacetime we have preferred observers,
living in an inertial frame, and all of them agree when measuring the particle number of a state.
However, for noninertial observers, such as the Unruh observer which moves in some direction with a
constant acceleration $a$, there is not anymore an agreement in the measured particle number,
in particular, the Unruh observer ``sees'' the Minkowski vacuum state as a thermal state with
temperature $T=a/(2\pi)$ in natural units. Since on a generic spacetime there are no preferred observers,
there is also no preferred notion of particles. Hence, on curved spacetimes
quantum field theory is indeed a theory of fields and not of particles.

Due to the nonuniqueness of the vacuum state we have to treat all possible states on the same footing and 
it is therefore not convenient to start with a specific Hilbert space representation from the beginning.
This is the reason why quantum field theories on curved spacetimes are typically treated in the algebraic framework.
The basic idea of this approach is to split the problem of constructing the quantum field theory
into two steps: Firstly, one constructs an abstract algebra of observables for the quantum field theory,
which should include the observables we can in principle measure. Some properties of these observables
can already be studied without referring to a representation. In the second step, one considers
algebraic states on this algebra, which associate to every observable operator its expectation value.
The connection to the usual framework of quantum field theory using Hilbert spaces is then made
by constructing from the algebra and state the GNS-representation.

%%%%%%%%%%%%%%%%%%%%%%%%%%%%%%%%%%%%%%%%%%%%%%%%%%%%%%%

\section{Some basics of Lorentzian geometry}
We introduce some basic notions of Lorentzian geometry which will appear frequently in this part.
We follow the presentation of \cite{Bar:2007zz}, since it is suitable for our purposes.
More details and a thorough introduction to Lorentzian geometry can e.g.~be found in \cite{0846.53001,0531.53051}.

Let $\MM$ be an $N$-dimensional smooth (Hausdorff) manifold and let $g\in \Omega^1\otimes_A \Omega^1$
be a smooth metric field of signature $(-,+,+,\cdots,+)$. We call the tuple
$(\MM,g)$ a {\it Lorentzian manifold}. At each point $p\in \MM$ the metric field $g_p\in T_p^\ast\MM\otimes T_p^\ast\MM$ 
is equivalent to the Minkowski metric.
We say that a tangent vector $v_p\in T_p\MM$ is {\it timelike}, {\it lightlike} or {\it spacelike},
if $g_p(v_p,v_p)<0$, $g_p(v_p,v_p)=0$ or $g_p(v_p,v_p)>0$, respectively.
A non-spacelike vector is also called a {\it causal} vector.
A {\it time orientation} on a Lorentzian manifold is a smooth vector field $w\in\Xi$
such that at all points $p\in\MM$ the vector $w_p\in T_p\MM$ is timelike.
A Lorentzian manifold together with a choice of such a vector field is called  {\it time-oriented}.
A causal vector $v_p\in T_p\MM$ is called {\it future/past directed} if
the inner product with the time orientation is negative/positive, i.e.~$g_p(w_p,v_p)  \lessgtr 0$.
A piecewise $C^1$-curve $\gamma$ in $\MM$ is called timelike, lightlike, spacelike, causal, future directed, or past directed,
if its tangent vectors are timelike, lightlike, spacelike, causal, future directed, or past directed, respectively.

The {\it causal future} $J_+(p)$ of a point $p\in\MM$ is the set of points that can be reached 
from $p$ by future directed causal curves and $p$ itself. The causal future of a subset
$A\subseteq \MM$ is defined to be $J_+(A):= \bigcup_{p\in A} J_+(p)$.
The {\it causal past} $J_-(A)$ of a subset $A\subseteq\MM$ is defined analogously by replacing future directed 
by past directed.
We will use the notation $J(A):=J_+(A)\cup J_-(A)$ for the domain of causal influence of a set
$A\subseteq \MM$.
Similarly, we define the {\it chronological future/past} $I_\pm(A)$ of a subset $A\subseteq \MM$
by replacing causal curves by timelike curves.

For the most general time-oriented Lorentzian manifold there can be physical pathologies, like
e.g.~the existence of closed causal curves. In order to remove these unphysical examples, we 
have to add more requirements on $(\MM,g)$. We say that a Lorentzian manifold satisfies the
{\it causality condition}, if it does not contain any closed causal curve.
It satisfies the {\it strong causality condition}, if there are no almost closed causal curves.
This means that for each $p\in \MM$ and for each open $U\ni p$, there exists an open neighborhood $U^\prime\subseteq U$
of $p$, such that each causal curve starting and ending in $U^\prime$ is entirely contained in $U$.

We can go on and define a physically very attractive class of Lorentzian manifolds.
\begin{defi}
 A time-oriented connected Lorentzian manifold is called {\it globally hyperbolic} if
it satisfies the strong causality condition and if for any $p,q\in\MM$
the intersection $J_+(p)\cap J_-(q)$ is compact.
\end{defi}
There are equivalent characterizations of globally hyperbolic Lorentzian manifolds, which we want to review
now. For this we first have to define the notion of a {\it Cauchy surface}:
A subset $\Sigma\subset \MM$ of a time-oriented connected Lorentzian manifold $(\MM,g)$
is called a Cauchy surface if each inextendible timelike curve in $\MM$ meets $\Sigma$
at exactly one point.
We can now state without a proof the following theorem \cite{Bar:2007zz}, which is in parts based on the work 
\cite{1081.53059}.
\begin{theo}[\cite{Bar:2007zz}]
 Let $(\MM,g)$ be a time-oriented connected Lorentzian manifold. Then the following
are equivalent:
\begin{itemize}
 \item[(1)] $(\MM,g)$ is globally hyperbolic.
 \item[(2)] There exists a Cauchy surface in $\MM$.
 \item[(3)] $(\MM,g)$ is isometric to $\bbR\times \Sigma$ with metric $-\beta\, dt\otimes_A dt +g_t$,
where $\beta$ is a smooth positive function, $g_t$ is a Riemannian metric on $\Sigma$ depending smoothly
on $t$ and each $\lbrace t\rbrace\times \Sigma$ is a smooth spacelike Cauchy surface in $\MM$.
\end{itemize}
\end{theo}
Let us discuss this theorem. From the physics point of view, the existence of a Cauchy surface
is strongly motivated, since it is a necessary ingredient for formulating a Cauchy problem
for differential equations on $\MM$. Because the existence of a Cauchy surface is equivalent to
global hyperbolicity, we have a motivation for this requirement.
The third equivalent formulation, which was obtained originally in \cite{1081.53059},
is very helpful for checking explicitly if a given Lorentzian manifold is globally hyperbolic.

We finish this section with a technical lemma we require later for our investigations.
\begin{lem}[\cite{Bar:2007zz}]
\label{lem:globhybsupport}
 Let $K,K^\prime\subseteq \MM$ be two compact subsets of a time-oriented, connected and 
globally hyperbolic Lorentzian manifold $(\MM,g)$, then $J_+(K)\cap J_-(K^\prime)$ is compact.
\end{lem}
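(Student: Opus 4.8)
The plan is to combine a reduction to the case of points with the closedness of the causal region, both standard in Lorentzian geometry and available in \cite{Bar:2007zz}. First I would reduce to points. Since the manifold is strongly causal and every point has a convex normal neighbourhood, the chronological past $I_-(p)$ is non-empty and open for each $p\in\MM$; picking $a_p\in I_-(p)$ gives $p\in I_+(a_p)$ with $I_+(a_p)$ open, so $\{I_+(a_p):p\in K\}$ is an open cover of the compact set $K$. Taking a finite subcover indexed by $p_1,\dots,p_m$ and using that $J_+$ is transitive (concatenation of future directed causal curves) together with $I_+\subseteq J_+$, one obtains $J_+(K)\subseteq\bigcup_{i=1}^m J_+(a_{p_i})$. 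Dually, choosing $b_q\in I_+(q)$ for each $q\in K'$ produces finitely many points $b_{q_1},\dots,b_{q_n}$ with $J_-(K')\subseteq\bigcup_{j=1}^n J_-(b_{q_j})$.

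Next I would invoke global hyperbolicity in the form of the Definition above, namely that $J_+(p)\cap J_-(q)$ is compact for all $p,q\in\MM$. Intersecting the two inclusions from the first step,
\[
J_+(K)\cap J_-(K')\;\subseteq\;\bigcup_{i=1}^m\bigcup_{j=1}^n\bigl(J_+(a_{p_i})\cap J_-(b_{q_j})\bigr)=:C,
\]
which is a finite union of compact sets and hence a compact subset of $\MM$.

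It then remains to show that $J_+(K)\cap J_-(K')$ is closed, since a closed subset of the compact set $C$ is compact. Let $x_n\in J_+(K)\cap J_-(K')$ with $x_n\to x$, and choose $k_n\in K$ together with future directed causal curves $\gamma_n$ from $k_n$ to $x_n$. Every point of $\gamma_n$ lies in $J_+(k_n)\cap J_-(x_n)\subseteq J_+(K)\cap J_-(K')\subseteq C$, so the curves $\gamma_n$ are trapped in the compact set $C$; after passing to a subsequence we may assume $k_n\to k\in K$. A limit curve argument of Arzel\`a--Ascoli type --- in which the strong causality condition contained in global hyperbolicity is essential to exclude pathological accumulation of the $\gamma_n$, and compactness of $C$ supplies the uniform control --- yields a future directed causal, possibly constant, limit curve from $k$ to $x$, so that $x\in J_+(k)\subseteq J_+(K)$. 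Arguing symmetrically with curves running from $x_n$ into $K'$ gives $x\in J_-(K')$, hence $x\in J_+(K)\cap J_-(K')$ and the set is closed.

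I expect the limit curve step to be the main obstacle: the reduction to points and the finite-union bound are purely formal, whereas controlling the limiting behaviour of the family of causal curves $\gamma_n$ is where the real content of global hyperbolicity (strong causality plus compactness of causal diamonds) has to be used.
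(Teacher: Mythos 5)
Your proof is correct and follows essentially the same route as the argument in the reference the paper defers to (Appendix A.5 of \cite{Bar:2007zz}): cover $K$ and $K'$ by finitely many chronological futures/pasts of points, bound $J_+(K)\cap J_-(K')$ by a finite union of compact causal diamonds, and conclude by closedness of $J_\pm$ of a compact set. The one load-bearing ingredient, the limit curve theorem for causal curves with converging endpoints confined to a compact set on which strong causality holds, is exactly where you locate it and is a standard result that can be cited rather than reproved.
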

\noindent For a proof of this lemma see the Appendix A.5.~of \cite{Bar:2007zz}.

%%%%%%%%%%%%%%%%%%%%%%%%%%%%%%%%%%%%%%%%%%%%%%%%%%%%%%%

\section{Normally hyperbolic operators, Green's operators and the solution space}
An interesting class of wave operators on a Lorentzian manifold $(\MM,g)$ is given by
{\it normally hyperbolic operators} $P:C^\infty(\MM)\to C^\infty(\MM)$.
An operator $P$ is normally hyperbolic, if on all local coordinate patches $U\subseteq \MM$
there are smooth functions $A^\mu,B\in C^\infty(U)$, such that it can be written as
\begin{flalign}
 P\vert_U = g^{\mu\nu}(x)\partial_\mu\partial_\nu + A^\mu(x)\partial_\mu + B(x)~,
\end{flalign}
where  $g^{\mu\nu}$ are the coefficient functions of the inverse metric field. 
An example of such a map is the Klein-Gordon operator $P=\square_g - M^2$,
where $\square_g$ is the d'Alembert operator corresponding to $g$ and $M^2\in[0,\infty)$ is a mass parameter.
An operator $P:C^\infty(\MM)\to C^\infty(\MM)$ is called {\it formally selfadjoint}
with respect to the natural scalar product on $(\MM,g)$
\begin{flalign}
 \bigl(\varphi,\psi\bigr):=\int\limits_\MM\varphi^\ast\,\psi\,\vol~,
\end{flalign}
if we have
\begin{flalign}
 \bigl(\varphi,P(\psi)\bigr)= \bigl(P(\varphi),\psi\bigr)~,
\end{flalign}
for all $\varphi,\psi\in C^\infty(\MM)$ with $\supp(\varphi)\cap \supp(\psi)$ compact.
Here $\vol$ is the canonical volume form on $(\MM,g)$.
The Klein-Gordon operator is formally selfadjoint.

Normally hyperbolic operators $P$ on time-oriented, connected and globally hyperbolic Lorentzian manifolds $(\MM,g)$
have attractive mathematical properties. In particular, the Cauchy problem can always be solved \cite{Bar:2007zz}.
For our purposes we are also interested in the existence of retarded and advanced Green's operators, since
they enter the quantum field theory construction later. Remember the following
\begin{defi}
 Let $(\MM,g)$ be a time-oriented connected Lorentzian manifold and $P:C^\infty(\MM)\to C^\infty(\MM)$
be a normally hyperbolic operator. A linear map $\Delta_\pm:C^\infty_0(\MM)\to C^\infty(\MM)$
satisfying
\begin{subequations}
\label{eqn:comgreenprop}
 \begin{flalign}
  &P\circ \Delta_\pm = \id_{C^\infty_0(\MM)}~,\\
  &\Delta_\pm \circ P\big\vert_{C^\infty_0(\MM)} = \id_{C^\infty_0(\MM)}~,\\
  &\supp\bigl(\Delta_\pm(\varphi)\bigr)\subseteq J_\pm\bigl(\supp(\varphi)\bigr)~,\quad\text{for all } 
\varphi\in C_0^\infty(\MM)~,
 \end{flalign}
\end{subequations}
is called a {\it retarded/advanced Green's operator for $P$}.
\end{defi}

For globally hyperbolic manifolds the existence and uniqueness of Green's operators is guaranteed.
\begin{theo}[\cite{Bar:2007zz}]
\label{theo:greenclas}
 Let $(\MM,g)$ be a time-oriented, connected and globally hyperbolic Lorentzian manifold and $P:C^\infty(\MM)\to C^\infty(\MM)$
be a normally hyperbolic operator.
Then there exist unique retarded and advanced Green's operators $\Delta_\pm: C^\infty_0(\MM)\to C^\infty(\MM)$ for $P$.
\end{theo}
\noindent For a proof we refer to \cite{Bar:2007zz}.

For formally selfadjoint normally hyperbolic operators $P$ we can obtain a useful relation between
the retarded and advanced Green's operators.
\begin{lem}[\cite{Bar:2007zz}]
\label{lem:antihermitianclas}
 Let $(\MM,g)$ be a time-oriented, connected and globally hyperbolic Lorentzian manifold and $P:C^\infty(\MM)\to C^\infty(\MM)$
be a formally selfadjoint normally hyperbolic operator. Let $\Delta_\pm$ be the Green's operators for $P$. Then
\begin{flalign}
 \label{eqn:antihermitian}\bigl(\Delta_\pm(\varphi),\psi\bigr) = \bigl(\varphi,\Delta_{\mp}(\psi)\bigr)~
\end{flalign}
holds true for all $\varphi,\psi\in C^\infty_0(\MM)$.
\end{lem}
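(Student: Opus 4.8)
The plan is to reduce the asserted identity to the formal selfadjointness of $P$ by inserting the Green's-operator identities $P\circ\Delta_\pm=\id_{C^\infty_0(\MM)}$, the only genuine work being to verify that the relevant supports meet in a compact set, so that the integration by parts underlying formal selfadjointness is legitimate.

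First I would fix $\varphi,\psi\in C^\infty_0(\MM)$ and treat the upper-sign case (the lower one being identical after exchanging $+\leftrightarrow-$). Writing $\psi=P\bigl(\Delta_-(\psi)\bigr)$ gives $\bigl(\Delta_+(\varphi),\psi\bigr)=\bigl(\Delta_+(\varphi),P(\Delta_-(\psi))\bigr)$, and symmetrically $\varphi=P\bigl(\Delta_+(\varphi)\bigr)$ gives $\bigl(\varphi,\Delta_-(\psi)\bigr)=\bigl(P(\Delta_+(\varphi)),\Delta_-(\psi)\bigr)$. Thus, setting $\alpha:=\Delta_+(\varphi)$ and $\beta:=\Delta_-(\psi)$, the lemma becomes equivalent to $\bigl(\alpha,P(\beta)\bigr)=\bigl(P(\alpha),\beta\bigr)$, i.e.~to moving $P$ from one slot of the pairing to the other for this particular pair of functions.

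The hard part will be justifying this last equality, since formal selfadjointness of $P$ was only stated for functions whose supports intersect in a compact set, whereas $\alpha,\beta\in C^\infty(\MM)$ are in general not compactly supported. Here the support properties of the Green's operators enter: $\supp(\alpha)\subseteq J_+\bigl(\supp(\varphi)\bigr)$ and $\supp(\beta)\subseteq J_-\bigl(\supp(\psi)\bigr)$, hence $\supp(\alpha)\cap\supp(\beta)\subseteq J_+\bigl(\supp(\varphi)\bigr)\cap J_-\bigl(\supp(\psi)\bigr)$. Because $\supp(\varphi)$ and $\supp(\psi)$ are compact and $(\MM,g)$ is globally hyperbolic, Lemma \ref{lem:globhybsupport} shows the set on the right is compact; being closed inside it, $\supp(\alpha)\cap\supp(\beta)$ is compact too. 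Therefore formal selfadjointness applies to $(\alpha,\beta)$ and yields $\bigl(\alpha,P(\beta)\bigr)=\bigl(P(\alpha),\beta\bigr)$.

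Finally I would assemble the chain $\bigl(\Delta_+(\varphi),\psi\bigr)=\bigl(\alpha,P(\beta)\bigr)=\bigl(P(\alpha),\beta\bigr)=\bigl(P(\Delta_+(\varphi)),\Delta_-(\psi)\bigr)=\bigl(\varphi,\Delta_-(\psi)\bigr)$, the last equality using $P\circ\Delta_+=\id_{C^\infty_0(\MM)}$; the lower-sign statement follows verbatim with $+$ and $-$ interchanged. I would also note in passing that all integrals occurring in these manipulations are effectively over the compact intersection of supports, so there are no convergence subtleties and the reshuffling of $\bigl(\cdot,\cdot\bigr)$ is unambiguous.
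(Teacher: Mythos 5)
Your proof is correct and follows essentially the same route as the paper's: insert $\psi=P(\Delta_\mp(\psi))$, use formal selfadjointness after checking via the support properties of the Green's operators and Lemma \ref{lem:globhybsupport} that $\supp(\Delta_\pm(\varphi))\cap\supp(\Delta_\mp(\psi))$ is compact, and finish with $P\circ\Delta_\pm=\id$ on compactly supported functions. No gaps.
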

\begin{proof}
 We sketch the proof following \cite{Bar:2007zz} in order to understand why in (\ref{eqn:antihermitian})
 $\pm$ on the left hand side is replaced by $\mp$ on the right hand side.
 Using the properties of the Green's operators (\ref{eqn:comgreenprop}) we obtain
\begin{flalign}
\spp{\Delta_\pm(\varphi)}{\psi} = \spp{\Delta_\pm(\varphi)}{P(\Delta_\mp(\psi))}=
\spp{P(\Delta_\pm(\varphi))}{\Delta_\mp(\psi)} = \spp{\varphi}{\Delta_\mp(\psi)}~.
\end{flalign}
For the first equality we have used that $P(\Delta_\mp(\psi)) = \psi$ by definition of the Green's operators.
The second equality holds since $P$ is formally selfadjoint and 
$\supp(\Delta_\pm(\varphi))\cap\supp(\Delta_\mp(\psi)) \subseteq J_\pm(\supp(\varphi))\cap J_\mp(\supp(\psi))$
is compact for a globally hyperbolic manifold, see Lemma \ref{lem:globhybsupport}.
For the last equality we have used that $P(\Delta_\pm(\varphi)) = \varphi$ by definition of the Green's operators.

Note that when we would use $P(\Delta_\pm(\psi)) = \psi$ in the first equality we can not do the required ``integration by parts''
in the second step, since the functions have noncompact overlap. Thus, we understand why $\pm$ on the left hand side is 
replaced by $\mp$ on the right hand side.

\end{proof}

The next step is to construct the solution space of $P$. For this let us make the following definition:
We define the space of {\it spacelike compact functions} $C^\infty_\sc(\MM)\subseteq C^\infty(\MM)$
to be the set of all $\varphi\in C^\infty(\MM)$ for which there is a compact $K\subseteq \MM$
such that $\supp(\varphi)\subseteq J(K)$.
The solution space of $P$ is then defined as
\begin{flalign}
\Sol_P := \bigl\lbrace \varphi\in C^\infty_\sc(\MM): P(\varphi)=0 \bigr\rbrace~.
\end{flalign}
It turns out that for time-oriented, connected and globally hyperbolic Lorentzian manifolds we can 
generate the solution space by using the retarded-advanced Green's operator\footnote{
In this part we use $\Delta$ for the retarded-advanced Green's operator, while we have used
$\Delta$ for the coproduct in a Hopf algebra in Part \ref{part:ncg}. From the context it should
be clear whether $\Delta$ denotes a coproduct or a Green's operator.
}
$\Delta:=\Delta_+ - \Delta_-:C^\infty_0(\MM)\to C^\infty_\sc(\MM)$.
\begin{theo}[\cite{Bar:2007zz}]
\label{theo:classcomplex}
 Let $(\MM,g)$ be a time-oriented, connected and globally hyperbolic Lorentzian manifold and $P:C^\infty(\MM)\to C^\infty(\MM)$
be a normally hyperbolic operator. Let $\Delta_\pm$ be the Green's operators for $P$. Then the sequence of linear
maps
\begin{flalign}\label{eqn:clascomplex}
 0\longrightarrow C^\infty_0(\MM) \stackrel{P}{\longrightarrow} C^\infty_0(\MM)
\stackrel{\Delta}{\longrightarrow} C^\infty_\sc(\MM) \stackrel{P}{\longrightarrow} C^\infty_\sc(\MM)
\end{flalign}
is a complex which is exact everywhere.
\end{theo}
\noindent For the proof of this theorem we refer to \cite{Bar:2007zz}. However, let us show explicitly how we can
construct the solution space from this data.

Since  (\ref{eqn:clascomplex}) is a complex, the composition of two subsequent maps yields zero.
This means in particular that acting with $\Delta$ on compactly supported functions generates solutions of $P$.
Due to the exactness of the complex, all solutions of $P$ are given in this way, i.e.~$\Delta[C^\infty_0(\MM)] = \Sol_P$.
However, $\Delta$ is no isomorphism between compactly supported functions and solutions, since it has a nontrivial kernel.
From the information that (\ref{eqn:clascomplex}) is a complex we know that $P[C^\infty_0(\MM)] \subseteq \Ker(\Delta)$,
and exactness tells us $P[C^\infty_0(\MM)] = \Ker(\Delta)$. The map $\Delta$ thus gives rise to an isomorphism
\begin{flalign}
\label{eqn:solisoclass}
 \mathcal{I}: C^\infty_0(\MM)/P[C^\infty_0(\MM)] \to \Sol_P\,,~[\varphi]\mapsto \Delta(\varphi)~.
\end{flalign}

%%%%%%%%%%%%%%%%%%%%%%%%%%%%%%%%%%%%%%%%%%%%%%%%%%%%%%%

\section{Symplectic vector space and quantization}
In this section we assume $(\MM,g)$ to be a time-oriented, connected and globally hyperbolic
Lorentzian manifold, and that our scalar field and the normally hyperbolic operator $P$ are real.
Furthermore, we assume $P$ to be formally selfadjoint.
In order to quantize the space of real solutions $\Sol^\bbR_P$ we have to introduce one more structure.
\begin{defi}
 A (weak) {\it symplectic vector space} $\bigl(V,\omega\bigr)$ is a vector space $V$ over $\bbR$ together with
an antisymmetric and $\bbR$-bilinear map $\omega:V\times V\to \bbR$, such that $\omega(v,v^\prime)=0$ for all $v^\prime\in V$
implies $v=0$. The map $\omega$ is called a (weak) symplectic structure.
\end{defi}
Similar to \cite{Bar:2007zz} we suppress the term {\it weak} in the following.
 For the system under consideration there is a canonical symplectic vector space.
Consider  $V:=C^\infty_0(\MM,\bbR)/P[C^\infty_0(\MM,\bbR)]$,
which is isomorphic to $\Sol_P^\bbR$ via (\ref{eqn:solisoclass}), and the $\bbR$-bilinear map 
\begin{flalign}\
\label{eqn:classsymplec}
 \omega: V\times V\to \bbR \,,~([\varphi], [\psi])\mapsto \omega([\varphi],[\psi])=\spp{\varphi}{\Delta(\psi)}~.
\end{flalign}
This map is well-defined and antisymmetric, since due to Lemma \ref{lem:antihermitianclas} we have
\begin{flalign}
 \spp{\varphi}{\Delta(\psi)} = -\spp{\Delta(\varphi)}{\psi}~,
\end{flalign}
for all $\varphi,\psi\in C^\infty_0(\MM,\bbR)$. It is also weakly nondegenerate, 
due to the following reason: Let $\psi\in C^\infty_0(\MM,\bbR)$ be fixed. Then
$\spp{\varphi}{\Delta(\psi)}=0$ for all $\varphi\in C^\infty_0(\MM,\bbR)$ implies $\Delta(\psi)=0$.
Due to Theorem \ref{theo:classcomplex} this means that $\psi\in P[C^\infty_0(\MM,\bbR)]$ and thus
$\psi\sim 0$ is equivalent to zero. Let us summarize the result in this
\begin{propo}
 Let $(\MM,g)$ be a time-oriented, connected and globally hyperbolic Lorentzian manifold
and $P:C^\infty(\MM,\bbR)\to C^\infty(\MM,\bbR)$ be a real and formally selfadjoint normally hyperbolic operator.
We denote the Green's operators for $P$ by $\Delta_\pm$ and $\Delta=\Delta_+ - \Delta_-$. Then $\bigl(V,\omega\bigr)$ with
$V=C^\infty_0(\MM,\bbR)/P[C^\infty_0(\MM,\bbR)]$ and $\omega$ defined in (\ref{eqn:classsymplec})
is a symplectic vector space.
\end{propo}

This symplectic vector space can be quantized in terms of CCR-algebras (also called Weyl algebras),
 yielding the observable algebras for the quantum field theory.
For this we first have to define a Weyl system.
\begin{defi}
 A {\it Weyl system} of a symplectic vector space $\bigl(V,\omega\bigr)$ consists of a unital $C^\ast$-algebra $A$
and a map $\mathcal{W}:V\to A$ such that for all $v,u\in V$ we have
\begin{subequations}
 \begin{flalign}
  \mathcal{W}(0)&=1~,\\
  \mathcal{W}(-v) &= \mathcal{W}(v)^\ast~,\\
  \mathcal{W}(v)\,\mathcal{W}(u) &= e^{-i\,\omega(v,u)/2}\,\mathcal{W}(v+u)~.
 \end{flalign}
\end{subequations}
\end{defi}
Note that the symplectic structure $\omega$ determines the noncommutativity of the Weyl system.
Associated to a Weyl system there is the notion of a CCR-algebra.
\begin{defi}
 A Weyl system $\bigl(A,\mathcal{W}\bigr)$ of a symplectic vector space $\bigl(V,\omega\bigr)$ 
is a {\it CCR-representation} of $\bigl(V,\omega\bigr)$ if $A$ is generated
as a $C^\ast$-algebra by the elements $\mathcal{W}(v)$, $v\in V$. In this case $A$ is called a {\it CCR-algebra}
of $\bigl(V,\omega\bigr)$.
\end{defi}

There are strong results on mathematical properties of Weyl systems, see e.g.~\cite{Bar:2007zz} Chapter 4.
For our purpose it is sufficient to note the uniqueness (up to $\ast$-isomorphisms)
of the CCR-representation.
\begin{theo}[\cite{Bar:2007zz}]
\label{theo:ccrunique}
 Let $\bigl(V,\omega\bigr)$ be a symplectic vector space and let $\bigl(A_1,\mathcal{W}_1\bigr)$
and $\bigl(A_2,\mathcal{W}_2\bigr)$ be two CCR-representations of $\bigl(V,\omega\bigr)$.
Then there exists a unique $\ast$-isomorphism $\pi:A_1\to A_2$, such that
$\pi(\mathcal{W}_1(v))=\mathcal{W}_2(v)$ for all $v\in V$.
\end{theo}
\noindent For a proof we refer to \cite{Bar:2007zz}. The existence of a CCR-representation has
been shown in \cite{Bar:2007zz} by an explicit construction.

A corollary of this theorem which we will require later reads
\begin{cor}[\cite{Bar:2007zz}]
\label{cor:asthomoweyl}
 Let $\bigl(V_1,\omega_1\bigr)$ and $\bigl(V_2,\omega_2\bigr)$ be two symplectic vector spaces and let
$S:V_1\to V_2$ be a symplectic linear map, i.e.~$\omega_2(S(v),S(u))=\omega_1(v,u)$ for all $v,u\in V_1$.
Then there exists a unique injective $\ast$-homomorphism $\mathfrak{S}:A_1\to A_2$ between the CCR-algebras
$A_1$ and $A_2$, such that $\mathfrak{S}(\mathcal{W}_1(v)) = \mathcal{W}_2(S(v))$, for all $v\in V_1$.
\end{cor}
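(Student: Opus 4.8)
The plan is to pull back the Weyl system of $(V_2,\omega_2)$ along $S$ and then invoke the uniqueness of the CCR-representation established in Theorem \ref{theo:ccrunique}. First I would define a map $\mathcal{W}\colon V_1\to A_2$ by $\mathcal{W}(v):=\mathcal{W}_2(S(v))$ and verify that $(A_2,\mathcal{W})$ is a Weyl system of $(V_1,\omega_1)$. The normalization $\mathcal{W}(0)=1$ follows from $S(0)=0$; the relation $\mathcal{W}(-v)=\mathcal{W}(v)^\ast$ follows from linearity of $S$ together with the corresponding property of $\mathcal{W}_2$; and the Weyl relation is obtained from
\begin{align*}
\mathcal{W}(v)\,\mathcal{W}(u) &= \mathcal{W}_2(S(v))\,\mathcal{W}_2(S(u)) = e^{-i\,\omega_2(S(v),S(u))/2}\,\mathcal{W}_2(S(v)+S(u))\\
&= e^{-i\,\omega_1(v,u)/2}\,\mathcal{W}_2(S(v)+S(u)) = e^{-i\,\omega_1(v,u)/2}\,\mathcal{W}(v+u)~,
\end{align*}
where the second line uses precisely that $S$ is linear and symplectic, i.e.~$\omega_2(S(v),S(u))=\omega_1(v,u)$. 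I would also remark that this symplectic condition already forces $S$ to be injective: if $S(v)=0$ then $\omega_1(v,u)=\omega_2(S(v),S(u))=0$ for all $u\in V_1$, hence $v=0$ by nondegeneracy of $\omega_1$ --- though this fact is not actually needed for the construction, only the injectivity of $\mathfrak{S}$ matters.

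Next I would pass to the $C^\ast$-subalgebra $A\subseteq A_2$ generated by $\{\mathcal{W}(v):v\in V_1\}=\{\mathcal{W}_2(S(v)):v\in V_1\}$. By construction $(A,\mathcal{W})$ is then a CCR-representation of $(V_1,\omega_1)$. Since $(A_1,\mathcal{W}_1)$ is another CCR-representation of the same symplectic vector space, Theorem \ref{theo:ccrunique} yields a unique $\ast$-isomorphism $\pi\colon A_1\to A$ with $\pi(\mathcal{W}_1(v))=\mathcal{W}(v)$ for all $v\in V_1$. Composing with the inclusion $\iota\colon A\hookrightarrow A_2$, which is an injective $\ast$-homomorphism, gives $\mathfrak{S}:=\iota\circ\pi\colon A_1\to A_2$, an injective $\ast$-homomorphism with $\mathfrak{S}(\mathcal{W}_1(v))=\mathcal{W}_2(S(v))$.

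For uniqueness I would use that $A_1$ is generated as a $C^\ast$-algebra by the elements $\mathcal{W}_1(v)$, $v\in V_1$, and that every $\ast$-homomorphism between $C^\ast$-algebras is norm-decreasing, hence continuous: two $\ast$-homomorphisms $A_1\to A_2$ that agree on the generating set $\{\mathcal{W}_1(v)\}$ agree on the $\ast$-subalgebra it generates by a purely algebraic argument, and therefore on the norm-closure of that subalgebra, which is all of $A_1$. I do not expect a genuine obstacle here; the construction is a direct corollary of Theorem \ref{theo:ccrunique}. The only points that require a little care are the bookkeeping check that $(A,\mathcal{W})$ genuinely inherits the Weyl relations (so that the uniqueness theorem applies), and the appeal to automatic continuity of $\ast$-homomorphisms in the uniqueness step.
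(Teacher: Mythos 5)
Your argument is correct and is exactly the standard derivation that the paper defers to in \cite{Bar:2007zz}: pull back the Weyl system along $S$, observe that the $C^\ast$-subalgebra of $A_2$ generated by $\{\mathcal{W}_2(S(v))\}$ is a CCR-representation of $(V_1,\omega_1)$, apply Theorem \ref{theo:ccrunique} to get the $\ast$-isomorphism onto that subalgebra, and compose with the inclusion; uniqueness follows since the $\mathcal{W}_1(v)$ generate $A_1$ and $\ast$-homomorphisms of $C^\ast$-algebras are continuous. No gaps.
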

Roughly speaking, this corollary shows that we can canonically associate
to each symplectic linear map a $\ast$-homomorphism between the corresponding CCR-algebras.
In practical applications, one typically has the action of symmetries, e.g.~isometries,
on the symplectic vector space of the field theory. This corollary implies that
we also have a natural action of the same transformations on the CCR-algebra.

%%%%%%%%%%%%%%%%%%%%%%%%%%%%%%%%%%%%%%%%%%%%%%%%%%%%%%%

\section{Algebraic states and representations}
Provided a $C^\ast$-algebra $A$ of observables we still require a way to associate to each element $a\in A$
its expectation value. This leads to the notion of an algebraic state.
\begin{defi}
 An {\it algebraic state} (or simply {\it state}) on a $C^\ast$-algebra $A$ is a $\bbC$-linear map $\Omega:A\to\bbC$
such that
 \begin{flalign}
  \Omega(1)=1~,\quad \Omega(a^\ast a) \geq0~,\quad\text{for all }a\in A~.
 \end{flalign}
$\Omega$ is called faithful, if $\Omega(a^\ast a)=0$ implies $a=0$.
\end{defi}
This now would be sufficient structure to make certain physics predictions, i.e.~calculate 
expectation values of suitable observables $a\in A$. However, in order to make contact
to the usual framework of quantum field theory based on Hilbert spaces we require one more
step, namely a representation of the algebra $A$ on a Hilbert space $\mathcal{H}$.
\begin{defi}
 Let $A$ be a $C^\ast$-algebra. A {\it representation} of $A$ is a tuple $\bigl(\mathcal{H},\pi\bigr)$,
where $\mathcal{H}$  is a Hilbert space and $\pi:A\to B(\mathcal{H})$ is a $\ast$-homomorphism into 
the algebra of bounded operators on $\mathcal{H}$. The representation is called {\it faithful}, if $\pi$ is injective.\vspace{1mm}

\noindent A vector $\ket{0}\in \mathcal{H}$ is called
{\it cyclic}, if $\pi[A]\ket{0}$  is dense in $\mathcal{H}$.\vspace{1mm}

\noindent A triple $\bigl(\mathcal{H},\pi,\ket{0}\bigr)$ is called a {\it cyclic representation} of $A$, if
$\bigl(\mathcal{H},\pi\bigr)$ is a representation of $A$ and $\ket{0}\in\mathcal{H}$ is cyclic.
\end{defi}
A powerful, and even constructive, theorem on representations of $C^\ast$-algebras has been found by Gel'fand, Naimark and 
Segal. 
\begin{theo}[GNS]~\,~
\begin{itemize}
\item[(i)] Let $A$ be a $C^\ast$-algebra and $\Omega$ a state on $A$.
Then there exists a cyclic representation $\bigl(\mathcal{H},\pi,\ket{0}\bigr)$ of $A$, such that
\begin{flalign}
 \Omega(a) = \bra{0} \pi(a)\ket{0}~,\quad\text{for all } a\in A~.
\end{flalign}
The triple $\bigl(\mathcal{H},\pi,\ket{0}\bigr)$ is called the GNS-representation of $\Omega$.\vspace{1mm}

\item[(ii)]$\bigl(\mathcal{H},\pi,\ket{0}\bigr)$ is unique up to unitary equivalence, i.e.~let
$\bigl(\widetilde{\mathcal{H}},\widetilde{\pi},\widetilde{\ket{0}}\bigr)$ be another cyclic representation of
$A$ satisfying $\Omega(a)=\widetilde{\bra{0}}\widetilde{\pi}(a)\widetilde{\ket{0}}$ for all $a\in A$, 
then there is a unitary $U:\mathcal{H}\to \widetilde{\mathcal{H}}$, such that $U\pi(a)U^{-1}=\widetilde{\pi}(a)$, for
all $a\in A$, and $U\ket{0} =\widetilde{\ket{0}}$.

\end{itemize}
\end{theo}
\noindent A proof of this theorem can be found in every textbook on $C^\ast$-algebras, e.g.~\cite{0905.46046}.

The GNS-Theorem shows that we can always go over from the algebraic approach using $C^\ast$-algebras
and algebraic states to a ``conventional'' Hilbert space approach to quantum field theory.
As a last remark we want to mention that the GNS-representation is not restricted to $C^\ast$-algebras and a similar theorem
also applies to unital $\ast$-algebras without a norm: Provided a unital $\ast$-algebra $A$
and an algebraic state $\Omega$ on $A$, there is a dense subspace $\mathcal{D}\subseteq \mathcal{H}$ of
a Hilbert space $\mathcal{H}$, a representation $\pi:A\to L(\mathcal{D})$ in terms of linear operators
on $\mathcal{D}$ and a vector $\ket{0}\in\mathcal{D}$, such that
\begin{subequations}
\begin{flalign}
 \Omega(a) &= \bra{0}\pi(a)\ket{0}~,\quad\text{for all }a\in A~,\\
 \mathcal{D} &= \pi[A]\ket{0} ~.
\end{flalign}
\end{subequations}

For quantum field theory there are natural conditions we can demand for the state $\Omega$.
To explain these, consider the CCR-algebra $A$ of a symplectic vector space $\bigl(V,\omega\bigr)$
with state $\Omega$ and the GNS-representation $\bigl(\mathcal{H},\pi,\ket{0}\bigr)$ of $\Omega$.
In many cases of physical interest there is a group $G$ of transformations acting on the algebra $A$, 
for example the group of isometries of symmetric manifolds $(\MM,g)$. Since we want to interpret $\Omega$ 
as a vacuum state, it is natural to demand it to be invariant under $G$. More technically, we say that
a state $\Omega$ is {\it $G$-symmetric}, if
\begin{flalign}
 \Omega(\alpha(a)) = \Omega(a)~,
\end{flalign}
for all $a\in A$ and $\alpha\in G$.
Another natural requirement is {\it regularity}, i.e.~we assume the one-parameter family of 
operators $\pi(\mathcal{W}(t\, v))$, $t\in\bbR$, to be strongly continuous for all $v\in V$.
Then there exist selfadjoint generators $\Phi_\pi(v)\in L(\mathcal{D})$, where $\mathcal{D}\subseteq \mathcal{H}$
is a dense subspace, such that formally $\pi(\mathcal{W}(v)) = e^{i\Phi_\pi(v)}$.
The operators $\Phi_\pi(v)$ are interpreted as smeared linear field operators and we define the smeared
$n$-point correlation functions by
\begin{flalign}
 \Omega_n(v_1,\dots,v_n):= \bra{0}\Phi_\pi(v_1)\cdots \Phi_\pi(v_n)\ket{0}~.
\end{flalign}
Since we are dealing with free field theories, only coupled to a fixed gravity background,
it is also natural to demand that all $n$-point functions factorize into 
products of $2$-point functions. States $\Omega$ with this property are called {\it quasi-free}.
A last condition which is typically demanded is the Hadamard condition, or equivalently the 
{\it microlocal spectrum condition} ($\mu$SC) \cite{Radzikowski:1996pa,Brunetti:1995rf}.
A regular quasi-free state $\Omega$ is said to fulfill the $\mu$SC, if the wavefront set of its 
$2$-point correlation function regarded as a distribution satisfies
\begin{flalign}
  \text{WF}(\Omega_2) \subset \bigl\lbrace
(x_1,k_1),(x_2,-k_2)\in T^\ast\MM^2\setminus\lbrace 0\rbrace :
(x_1,k_1)\sim (x_2,k_2) \text{ and } k_1^0\geq 0
\bigr\rbrace~,
\end{flalign}
where the relation $(x_1,k_1)\sim (x_2,k_2)$ means that there exists a lightlike geodesic
 connecting $x_1$ and $x_2$ with cotangent vectors $k_1$ at $x_1$ and $k_2$ at $x_2$. 
For an introduction to microlocal analysis and wavefront sets we refer to \cite{0712.35001}.
 Let us skip the mathematical details at this moment and discuss the consequences of the $\mu$SC.
The $\mu$SC poses restrictions on the singularities of the $2$-point function.
Provided it holds true, one can construct Wick-polynomials of field operators $\Phi_\pi(x)$
and eventually the stress-energy tensor. Thus, the $\mu$SC is an important condition which
is essential for extending the algebra of observables by physically important operators, such
as interaction operators or the stress-energy tensor, which all require the product of 
quantum fields at the same point. We do not discuss further developments in modern 
quantum field theory on curved spacetimes, since we do not need them for our work,
 and refer to the book \cite{Bar:2009zzb} for an overview of these topics.

%%%%%%%%%%%%%%%%%%%%%%%%%%%%%%%%%%%%%%%%%%%%%%%%%%%%%%%
%%%%%%%%%%%%%%%%%%%%%%%%%%%%%%%%%%%%%%%%%%%%%%%%%%%%%%%

\chapter{\label{chap:qftdef}Formalism}
We present the algebraic approach to quantum field theory on noncommutative
curved spacetimes developed in \cite{Ohl:2009qe}. 

%%%%%%%%%%%%%%%%%%%%%%%%%%%%%%%%%%%%%%%%%%%%%%%%%%%%%%%

\section{Notation}
In this chapter we will work in a formal deformation quantization setting.
We are going to use a more strict mathematical language,
since compared to the physical applications in Part \ref{part:ncg}
we are now going to develop a mathematical formalism, which requires this precision in notation. 
An extensive review on formal power series extensions of fields, vector spaces and linear maps, 
and their topological properties is given in the Appendix \ref{app:basicsdefq}. In this section we are
going to state the definitions and properties which are essential for this chapter.

As already discussed in Part \ref{part:ncg}, formal deformation quantization requires us to 
replace the underlying field $\bbK=\bbC$ or $\bbR$ by the commutative and unital 
ring of formal power series $\bbK[[\lambda]]$, where $\lambda$ denotes the deformation parameter.
Elements of $\bbK[[\lambda]]$ are given by $\bbK[[\lambda]]\ni\beta = \sum_{n=0}^\infty\lambda^n\,\beta_{(n)}$,
where $\beta_{(n)}\in\bbK$ for all $n$. The sum and product on $\bbK[[\lambda]]$ is induced from the field structure
on $\bbK$ and reads
\begin{flalign}
\label{eqn:sumprodform}
 \beta + \gamma :=\sum\limits_{n=0}^\infty\lambda^n\,\bigl(\beta_{(n)}+\gamma_{(n)}\bigr)~,\quad \beta\,\gamma := 
\sum\limits_{n=0}^\infty\lambda^n \sum\limits_{m+k=n} \beta_{(m)}\gamma_{(k)}~,
\end{flalign}
for all $\beta,\gamma\in\bbK[[\lambda]]$.

Let $V$ be a vector space over $\bbK$. Its formal power series extension $V[[\lambda]]$ can be equipped
with a $\bbK[[\lambda]]$-module structure by defining
\begin{flalign}
 v+v^\prime := \sum\limits_{n=0}^\infty\lambda^n\,\bigl(v_{(n)}+v^\prime_{(n)}\bigr)~,\quad \beta\,v := 
\sum\limits_{n=0}^\infty\lambda^n\sum\limits_{m+k=n} \beta_{(m)}\,v_{(k)}~,
\end{flalign}
for all $\beta\in\bbK[[\lambda]]$ and $v,v^\prime \in V[[\lambda]]$.
Note that a $\bbK[[\lambda]]$-module is not necessarily of the type $V[[\lambda]]$.
If it is, it is called a {\it topologically free $\bbK[[\lambda]]$-module}.
A {\it $\bbK[[\lambda]]$-module homomorphism} is a map between two $\bbK[[\lambda]]$-modules preserving
the $\bbK[[\lambda]]$-module structure. We also call these maps {\it $\bbK[[\lambda]]$-linear maps}.
Note that a family of $\bbK$-linear maps $P_{(n)}:V\to W$, $n\in \bbN^0$, between the vector spaces $V,W$
induces a $\bbK[[\lambda]]$-linear map $P_\star:V[[\lambda]]\to W[[\lambda]]$ by defining
for all $v\in V[[\lambda]]$
\begin{flalign}
\label{eqn:mapformal}
 P_\star(v):= \sum\limits_{n=0}^\infty\lambda^n\sum\limits_{m+k=n}P_{(m)}(v_{(k)})~.
\end{flalign}
We shall use the notation $P_\star = \sum_{n=0}^\infty \lambda^n P_{(n)}$. The other way around,
let $P_\star :V[[\lambda]] \to W[[\lambda]]$ be a $\bbK[[\lambda]]$-linear map, then it gives rise to 
a family of $\bbK$-linear maps $P_{(n)}:V\to W$, $n\in\bbN^0$, defined by
$P_{(n)}(v):= \bigl(P_\star(v)\bigr)_{(n)}$, for all $v\in V$. $P_\star$ can be expressed
in terms of the $P_{(n)}$ by (\ref{eqn:mapformal}).\footnote{
For this proof one requires the $\lambda$-adic topology on $V[[\lambda]]$, see 
Proposition \ref{eqn:topmodulemaps}.
}

An {\it algebra $A$ over $\bbK[[\lambda]]$} is a $\bbK[[\lambda]]$-module together with 
an associative $\bbK[[\lambda]]$-bilinear map $\mu: A\times A\to A\,,~(a, b)\mapsto\mu(a,b)=a\,b$
(product).  An algebra $A$ over $\bbK[[\lambda]]$ is called unital if it contains a unit element $1\in A$.
For topological algebras, which are important for treating Drinfel'd twist deformations
mathematically precise, see Appendix \ref{app:basicsdefq}.
A {\it $\ast$-algebra $A$ over $\bbC[[\lambda]]$} is an algebra over $\bbC[[\lambda]]$ equipped with a
$\bbC[[\lambda]]$-antilinear map $\ast:A\to A$ (involution) satisfying $(a\,b)^\ast=b^\ast\,a^\ast$ and $(a^\ast)^\ast=a$,
for all $a,b\in A$.
A {\it $\ast$-algebra homomorphism} is an algebra homomorphism $\kappa:A\to B$ satisfying $\bigl(\kappa(a)\bigr)^\ast
=\kappa(a^\ast)$, for all $a\in A$.

%%%%%%%%%%%%%%%%%%%%%%%%%%%%%%%%%%%%%%%%%%%%%%%%%%%%%%%

\section{Condensed review of noncommutative differential geometry}
In order to construct the deformed quantum field theory we are making use of the noncommutative 
differential geometry presented in Chapter \ref{chap:basicncg}.
Let us briefly review the essential tools in a very compact manner.
 As new elements, we will focus on reality properties and integration in this section.

Let $\MM$ be an $N$-dimensional smooth manifold and $\Xi$ be the smooth and complex vector fields
on $\MM$. The formal power series extension of the Hopf algebra of diffeomorphisms
is denoted by $H=\bigl(U\Xi[[\lambda]],\mu,\Delta,\epsilon,S\bigr)$.
Let $\mathcal{F}\in (U\Xi\otimes U\Xi)[[\lambda]]$ be a Drinfel'd twist of $H$, i.e.~$\mathcal{F}$
is invertible and satisfies
\begin{subequations}
\begin{flalign}
 \mathcal{F}_{12}\,(\Delta\otimes \id)\mathcal{F}&=\mathcal{F}_{23}\,(\id\otimes \Delta)\mathcal{F}~,\\
 (\epsilon\otimes \id)\mathcal{F} &=1 = (\id\otimes \epsilon) \mathcal{F}~,
\end{flalign}
\end{subequations}
where $\mathcal{F}_{12}=\mathcal{F}\otimes 1$ and $\mathcal{F}_{23} = 1\otimes \mathcal{F}$.
We additionally demand that $\mathcal{F}=1\otimes 1 +\mathcal{O}(\lambda)$ in order to have a good classical
limit. The inverse twist is denoted by (sum over $\alpha$ understood)
\begin{flalign}
 \mathcal{F}^{-1} = \bar f^\alpha\otimes \bar f_\alpha\in (U\Xi\otimes U\Xi)[[\lambda]]~.
\end{flalign}

In this chapter we are going to focus on reality properties of our theories.
It is convenient to choose a real twist, i.e.~to demand that $\mathcal{F}^{\ast\otimes \ast} = (S\otimes S)\mathcal{F}_{21}$.
As a consequence, the $\star$-product in the deformed algebra of functions $\bigl(C^\infty(\MM)[[\lambda]],\star\bigr)$
defined by $h\star k := \bar f^\alpha(h)\,\bar f_\alpha(k)$,
for all $h,k\in C^\infty(\MM)[[\lambda]]$, is hermitian, i.e.~$(h\star k)^\ast = k^\ast\star h^\ast$.
Remember that the Hopf algebra $H$ (and its deformation $H^\mathcal{F}$) acts on functions, vector and tensor fields via
the Lie derivative, $\xi(\cdot):=\mathcal{L}_\xi(\cdot)$, for all $\xi\in U\Xi[[\lambda]]$.

The deformed differential calculus $\bigl(\Omega^\bullet[[\lambda]],\wedge_\star,\dd\bigr)$
employs the deformed wedge-product $\omega\wedge_\star \omega^\prime := \bar f^\alpha(\omega)\wedge\bar f_\alpha(\omega^\prime)$,
for all $\omega,\omega^\prime\in\Omega^\bullet[[\lambda]]$. The differential $\dd$ is undeformed and satisfies the
graded Leibniz rule with respect to $\wedge_\star$. The involution on the deformed differential calculus
satisfies $(\omega\wedge_\star\omega^\prime)^\ast = \omega^{\prime\ast}\wedge_\star \omega^\ast$,
 in case $\mathcal{F}$ is real, and $(\dd\omega)^\ast = (-1)^{1+\deg(\omega)}\, \dd(\omega^\ast)$.

The vector fields are deformed by introducing the $\bigl(C^\infty(\MM)[[\lambda]],\star\bigr)$-bimodule structure
$h \star v = \bar f^\alpha(h)\,\bar f_\alpha(v)$ and $v\star h = \bar f^\alpha(v)\,\bar f_\alpha(h)$,
for all $h\in C^\infty(\MM)[[\lambda]]$ and $v\in \Xi[[\lambda]]$. For real twists we have
$(h\star v)^\ast = v^\ast\star h^\ast$ and $(v\star h)^\ast = h^\ast \star v^\ast$.

The tensor algebra $\bigl(\mathcal{T},\otimes_A\bigr)$ over $C^\infty(\MM)$ generated by $\Xi$ and $\Omega^1$ is 
deformed by introducing the $\star$-tensor product $\tau\otimes_{A_\star} \tau^\prime := \bar f^\alpha(\tau)\otimes_A 
\bar f_\alpha(\tau^\prime)$, for all $\tau,\tau^\prime\in\mathcal{T}[[\lambda]]$. For real twists we have
$(\tau\otimes_{A_\star}\tau^\prime)^\ast = \tau^{\prime\ast} \otimes_{A_\star}\tau^\ast$.

The contraction between vector fields and one-forms is deformed by $\pair{\cdot}{\cdot}_\star :=
\pair{\bar f^\alpha(\cdot)}{\bar f_\alpha(\cdot)} $. The involution relates the left and right contraction
via $\pair{v}{\omega}^\ast_\star = \pair{\omega^\ast}{v^\ast}_\star$, in case the twist is real.
Furthermore, we have the important relation
\begin{flalign}
 \pair{\tau\otimes_{A_\star} v\star h}{\omega\otimes_{A_\star}\tau^\prime }_\star = \tau\otimes_{A_\star}\pair{v}{h\star \omega}_\star
\star \tau^\prime~,
\end{flalign}
for all $\tau,\tau^\prime\in\mathcal{T}[[\lambda]]$, $h\in C^\infty(\MM)[[\lambda]]$, $v\in\Xi[[\lambda]]$
and $\omega\in \Omega^1[[\lambda]]$.

Let us now define an integral on top-forms $\Omega^N[[\lambda]]$. Since the classical integral
is a $\bbC$-linear map $\int_\MM:\Omega^N \to \bbC$, we can extend it to a $\bbC[[\lambda]]$-linear
map 
\begin{flalign}
 \int\limits_\MM:\Omega^N[[\lambda]]\to\bbC[[\lambda]]~,\quad \omega\mapsto \int\limits_\MM \omega = \sum\limits_{n=0}^\infty\lambda^n\int\limits_\MM \omega_{(n)}~.
\end{flalign}
Note that in general 
the integral over $\omega\wedge_\star\omega^\prime$ does not agree with the integral
over $(-1)^{\deg(\omega)\,\deg(\omega^\prime)} \omega^\prime\wedge_\star \omega$, 
for $\omega,\omega^\prime \in \Omega^\bullet[[\lambda]]$ with $\deg(\omega)+\deg(\omega^\prime)=N$.
In other words, the integral is in general not graded cyclic.\footnote{
An example for a twist without this property is the Jordanian twist $\mathcal{F}=
\exp\bigl(\frac{1}{2} H\otimes \log(1+\lambda E)\bigr)$ with $[H,E]=2E$.
}
Since graded cyclicity is a property which simplifies drastically our investigations we are going to assume it.
As pointed out in \cite{Aschieri:2009ky,Aschieri:2009mc}, graded cyclicity is fulfilled if the
twist satisfies $S(\bar f^\alpha)\bar f_\alpha =1$. This is the case for all abelian twists,
which is the class of twists we have investigated in Part \ref{part:ncg}.
Thus, we do not loose examples for explicit models when making this restriction. 
As a consequence of $S(\bar f^\alpha)\bar f_\alpha =1$ we obtain
\begin{flalign}
\label{eqn:gradedcyc}
 \int\limits_\MM\omega\wedge_\star\omega^\prime = (-1)^{\deg(\omega)\deg(\omega^\prime)}\,\int\limits_{\MM}
\omega^\prime\wedge_\star \omega = \int\limits_\MM\omega\wedge\omega^\prime~,
\end{flalign}
for all $\omega,\omega^\prime\in\Omega^\bullet[[\lambda]]$ with $\deg(\omega)+\deg(\omega^\prime)=N$
and $\supp(\omega)\cap\supp(\omega^\prime)$ compact.\footnote{
Let $\omega=\sum_{n=0}^\infty\lambda^n\,\omega_{(n)}\in\Omega^\bullet[[\lambda]]$
and $\omega^\prime=\sum_{n=0}^\infty\lambda^n\,\omega^\prime_{(n)}\in\Omega^\bullet[[\lambda]]$.
The statement $\supp(\omega)\cap\supp(\omega^\prime)$ compact
is an abbreviation for $\supp(\omega_{(n)})\cap\supp(\omega_{(m)}^\prime)$ compact
for all $n,m\in\bbN^0$.
}

An interesting topic for future research is to find out if the deformed quantum field theory
construction presented in this chapter is also possible for all real twists, not subject to
the condition $S(\bar f^\alpha)\bar f_\alpha=1$.

%%%%%%%%%%%%%%%%%%%%%%%%%%%%%%%%%%%%%%%%%%%%%%%%%%%%%%%

\section{Deformed action functional and wave operators}
Using the tools of the previous section we are in the position to construct a deformed action functional 
for a real scalar field $\Phi$. For this we require two more ingredients, 
namely a deformed metric field $g_\star\in \bigl(\Omega^1\otimes_A\Omega^1\bigr)[[\lambda]]$ and 
a deformed volume form $\vols\in\Omega^N[[\lambda]]$ on $\MM$.
Natural choices would be given by the exact noncommutative gravity solutions presented in Chapter 
\ref{chap:ncgsol}. In this case the noncommutative metric field $g_\star$ and also $\vols$ are equal to the commutative
metric field and volume form $\vol$, i.e.~they have no corrections in the deformation parameter $\lambda$.
For sake of generality, however, we do not want to exclude from the beginning models which contain corrections in
$\lambda$, since they can be valid solutions of the noncommutative Einstein equations.
In this respect, we consider deformed metric fields $g_\star\in\bigl(\Omega^1\otimes_A\Omega^1\bigr)[[\lambda]]$,
which yield at zeroth order in the deformation parameter a classical Lorentzian metric $g_\star\vert_{\lambda=0}=g$ on $\MM$.
Similarly, we consider deformed volume forms $\vols\in\Omega^N[[\lambda]]$, which at lowest order yield the classical volume
form $\vols\vert_{\lambda=0}=\vol$ and that are real $\vols^\ast = \vols$.
The $H^\mathcal{F}$-covariant construction of a deformed volume form from a deformed metric field is still an 
open problem in noncommutative gravity, which deserves for a solution. 
However, note that we can always associate to a deformed metric field $g_\star$
(regarded as a formal power series of classical tensors) a volume form $\text{vol}_{g_\star}$ via the classical construction.
The resulting volume form then would be nondegenerate and real, which are exactly the properties we require.
Fortunately, for our explicit physics examples of Part \ref{part:ncg} the semi-Killing property of the twist
makes the choice $\vols=\vol$ natural.

We call the quadruple $(\MM,g_\star,\vols,\mathcal{F})$ a {\it deformed Lorentzian manifold}.
By definition, we obtain at order $\lambda^0$ a Lorentzian manifold $(\MM,g)$. Employing the noncommutative differential
geometry we define a deformed action functional for a real scalar field $\Phi$
\begin{flalign}
\label{eqn:defaction}
 S_\star[\Phi]:= -\frac{1}{2}\int\limits_\MM \Bigl(\pair{\pair{\dd \Phi}{g_\star^{-1}}_\star}{\dd \Phi}_\star 
+ M^2\,\Phi\star\Phi \Bigr)\star\vols~.
\end{flalign}
Here  $M^2\in[0,\infty)$ is a mass parameter, $g_\star^{-1}=g^{-1\alpha}\otimes_{A_\star} g^{-1}_\alpha$ is the $\star$-inverse metric 
field corresponding  to $g_\star$ and we have used in the kinetic term the abbreviation
 $\pair{\pair{\dd \Phi}{g_\star^{-1}}_\star}{\dd \Phi}_\star =
\pair{\dd \Phi}{g^{-1\alpha}}_\star\star\pair{g^{-1}_\alpha}{\dd \Phi}_\star $.
The action $S_\star[\Phi]$ is real, for all $\Phi\in C^\infty_0(\MM,\bbR)[[\lambda]]$,
\begin{flalign}
 \nn\bigl(S_\star[\Phi]\bigr)^\ast &\stackrel{\text{RE}_\mathcal{F}}{=} -\frac{1}{2}\int\limits_\MM \vols^\ast\star
\Bigl(\pair{\dd \Phi^\ast}{g_\alpha^{-1\ast}}_\star\star \pair{g^{-1\alpha\ast}}{\dd\Phi^\ast}_\star + M^2\,\Phi^\ast\star\Phi^\ast \Bigr)\\
 \nn&\hspace{-0.5cm}\stackrel{\text{RE}_{\vols,g_\star,\Phi}}{=}-\frac{1}{2}\int\limits_\MM \vols \star
\Bigl(\pair{\dd \Phi}{g^{-1\alpha}}_\star\star \pair{g_\alpha^{-1}}{\dd\Phi}_\star + M^2\,\Phi\star\Phi \Bigr)\\
 &\,\stackrel{\text{GC}}{=}-\frac{1}{2}\int\limits_\MM 
\Bigl(\pair{\pair{\dd \Phi}{g_\star^{-1}}_\star}{\dd \Phi}_\star + M^2\,\Phi\star\Phi \Bigr)\star\vols = S_\star[\Phi]~,
\end{flalign}
where we have used the reality of the twist ($\text{RE}_\mathcal{F}$), the reality of the volume form, metric and $\Phi$
($\text{RE}_{\vols,g_\star,\Phi}$) and graded cyclicity (GC) as defined in (\ref{eqn:gradedcyc}).

Varying the action (\ref{eqn:defaction}) by functions $\delta\Phi\in C^\infty_0(\MM,\bbR)[[\lambda]]$ of compact support
we obtain
\begin{flalign}
 \delta S_\star[\Phi] = \int\limits_\MM \delta\Phi ~P^\top_\star(\Phi)~,
\end{flalign}
with the top-form valued $\bbC[[\lambda]]$-linear map $P^\top_\star: C^\infty(\MM)[[\lambda]]\to \Omega^N[[\lambda]]$
given by
\begin{flalign}
\label{eqn:defwaveoptop}
 P_\star^\top(\varphi) = \frac{1}{2}\Bigl(\square_\star(\varphi) \star \vols +\vols\star\bigl(\square_\star(\varphi^\ast)\bigr)^\ast 
- M^2 \,\varphi\star\vols - M^2\,\vols\star\varphi\Bigr)~,
\end{flalign}
for all $\varphi\in C^\infty(\MM)[[\lambda]]$. As usual, the d'Alembert operator is defined by
\begin{flalign}
 \int\limits_\MM \varphi\star \square_\star(\psi)\star \vols := 
-\int\limits_\MM \pair{\pair{\dd\varphi}{g_\star^{-1}}_\star}{\dd\psi}_\star\star\vols~,
\end{flalign}
for all $\varphi,\psi\in C^\infty(\MM)[[\lambda]]$ with $\supp(\varphi)\cap\supp(\psi)$ compact.

In order to obtain a scalar valued deformed wave operator we employ the $\bbC[[\lambda]]$-module
isomorphism $\star_\star: C^\infty(\MM)[[\lambda]]\to\Omega^N[[\lambda]]\,,~\varphi\mapsto \varphi\star\vols$
and define 
\begin{flalign}
P_\star := \star_\star^{-1}\circ P^\top_\star:C^\infty(\MM)[[\lambda]]\to C^\infty(\MM)[[\lambda]]~.
\end{flalign}
The equation of motion resulting from the action (\ref{eqn:defaction}) then reads
\begin{flalign}
 P_\star(\Phi) = 0\quad \Leftrightarrow\quad P_\star^\top(\Phi)=0~.
\end{flalign}

Let us discuss some properties of the deformed wave operator $P_\star$. This operator is formally selfadjoint
with respect to the deformed scalar product
\begin{flalign}
 \spp{\varphi}{\psi}_\star := \int\limits_\MM\varphi^\ast\star \psi\star\vols~.
\end{flalign}
More precisely, we have
\begin{flalign}
 \spp{\varphi}{P_\star(\psi)}_\star = \spp{P_\star(\varphi)}{\psi}_\star~,
\end{flalign}
for all $\varphi,\psi\in C^\infty(\MM)[[\lambda]]$ with $\supp(\varphi)\cap\supp(\psi)$ compact.
Furthermore, since it is a $\bbC[[\lambda]]$-linear endomorphism on $C^\infty(\MM)[[\lambda]]$ 
we can equivalently express it in terms of a family of $\bbC$-linear maps $P_{(n)}:C^\infty(\MM)\to C^\infty(\MM)$, $n\in \bbN^0$,
via $P_\star = \sum_{n=0}^\infty\lambda^n\,P_{(n)}$. The zeroth order is simply the undeformed Klein-Gordon operator
$P_{(0)} = \square_g - M^2 $, which is a normally hyperbolic operator.
Thus, $P_\star$ is a formal deformation of a normally hyperbolic operator on $(\MM,g)$.
The last important property we note is a modified reality property of $P_\star$.
The top-form valued equation of motion operator is real, since it is obtained by varying a real
action with respect to a real function, i.e.~$\bigl(P_\star^\top(\varphi)\bigr)^\ast= P_\star^\top(\varphi^\ast)$,
for all $\varphi\in C^\infty(\MM)[[\lambda]]$. As a consequence, $P_\star$ is not real in the conventional
sense, but satisfies $\vols\star \bigl(P_\star(\varphi)\bigr)^\ast = P_\star(\varphi^\ast)\star\vols$, for all
$\varphi\in C^\infty(\MM)[[\lambda]]$. This modified reality property will become important later in our construction
of the deformed quantum field theory.

%%%%%%%%%%%%%%%%%%%%%%%%%%%%%%%%%%%%%%%%%%%%%%%%%%%%%%%

\section{Deformed Green's operators and the solution space}
After obtaining the explicit example of a deformed Klein-Gordon operator in the previous section,
we now turn to the construction of the deformed Green's operators corresponding to $P_\star$.
Analogously to the commutative case, we say that a $\bbC[[\lambda]]$-linear
map $\Delta_{\star\pm}=\sum_{n=0}^\infty\lambda^n\,\Delta_{(n)\pm}:C^\infty_0(\MM)[[\lambda]]\to C^\infty(\MM)[[\lambda]]$
is a retarded/advanced deformed Green's operator for $P_\star$ if it satisfies
\begin{subequations}\label{eqn:green}
 \begin{flalign}
  \label{eqn:green1}&P_\star\circ \Delta_{\star\pm} = \id_{C^\infty_0(\MM)[[\lambda]]}~,\\
  \label{eqn:green2}&\Delta_{\star\pm} \circ P_\star\big\vert_{C^\infty_0(\MM)[[\lambda]]} = \id_{C^\infty_0(\MM)[[\lambda]]}~,\\
  \label{eqn:green3}&\supp\bigl(\Delta_{(n)\pm}(\varphi)\bigr)\subseteq J_\pm\bigl(\supp(\varphi)\bigr)~,\quad\text{for all } 
\varphi\in C_0^\infty(\MM)\text{ and }n\in\bbN^0~.
 \end{flalign}
\end{subequations}
In the expression above the causal future/past $J_\pm$ is with respect to the classical metric $g=g_\star\vert_{\lambda=0}$.

The existence and uniqueness of the zeroth order of the deformed Green's operators
in ensured by Theorem \ref{theo:greenclas}. We denote these operators by $\Delta_\pm := \Delta_{(0)\pm}$.
In order to prove the existence and uniqueness of $\Delta_{\star\pm}$ to all orders in $\lambda$
we have to introduce some technical support conditions on the deformation:
We assume that $(\MM,g_\star,\vols,\mathcal{F})$ is a {\it compactly deformed} time-oriented, connected and globally hyperbolic
Lorentzian manifold. This means that the underlying classical manifold $(\MM,g=g_\star\vert_{\lambda=0})$
is time-oriented, connected and globally hyperbolic, and the deformation is of compact support.
More precisely, we demand that the twist $\mathcal{F}$ is generated by compactly supported vector fields and that
the noncommutative corrections $g_{(n)}$ of the metric and $\text{vol}_{(n)}$ of the volume form are of compact support,
for all $n>0$.
As a consequence, the deformed Klein-Gordon operator $P_\star$ satisfies the support condition 
$P_{(n)}:C^\infty(\MM)\to C^\infty_0(\MM)$, for all $n>0$.

Note that similar requirements appear in the construction of perturbatively interacting quantum field theories,
where all interactions are assumed to be compactly supported in order to make the formal power series construction
well-defined. As in our case this is just a technical assumption, which is not motivated by physics.
However, from experience in perturbative quantum field theory we know that
we can even treat models not fulfilling this condition by introducing an infrared regulator.
The regularized theory is then used to make predictions for local observables
and it turns out that the results do not depend on details of the regularization, in particular
the limit of vanishing infrared regularization is well-defined, see e.g.~\cite{Duetsch:1998hf}. 
For our models we can take the same road. We regularize our deformation by deformations of compact support,
extract physical predictions and finally try to send the infrared regulator to zero.
The hope then is to capture the most important noncommutative geometry effects in this way.
The existence and properties of this limit remain to be studied in detail, see however Appendix \ref{eqn:lambda2green}
for a simple example in order $\lambda^2$.

For compactly deformed time-oriented, connected and globally hyperbolic
Lorentzian manifolds we can clarify the existence and uniqueness of deformed Green's operators.
This is not only possible for the deformed Klein-Gordon operator, but also for all
{\it compactly deformed} normally hyperbolic operators $P_\star$, 
i.e.~$P_\star= \sum_{n=0}^\infty\lambda^n\,P_{(n)}$ with $P=P_{(0)}$ normally hyperbolic
and $P_{(n)}: C^\infty(\MM)\to C^\infty_0(\MM)$ finite-order differential operators, for all $n>0$.
\begin{theo}
\label{theo:greendef}
 Let $(\mathcal{M},g_\star,\vols,\mathcal{F})$ be a compactly deformed  time-oriented, connected and globally hyperbolic 
Lorentzian manifold and let $P_\star$ be a compactly deformed normally 
hyperbolic operator acting on $C^\infty(\mathcal{M})[[\lambda]]$.
Then there exist unique deformed Green's operators $\Delta_{\star\pm}$ for $P_\star$.\vspace{2mm}

\noindent The explicit expressions for $\Delta_{(n)\pm}$, $n>0$, read
\begin{flalign}
\label{eqn:explicitgreen}
\Delta_{(n)\pm} = \sum\limits_{k=1}^{n}\sum\limits_{j_1=1}^n \dots\sum\limits_{j_k=1}^n(-1)^k \delta_{j_1+\dots+j_k, n} \, \Delta_{\pm}\circ P_{(j_1)}\circ \Delta_\pm \circ P_{(j_2)} \circ \dots \circ \Delta_\pm \circ P_{(j_k)}\circ \Delta_{\pm}~,
\end{flalign}
where $\delta_{n,m}$ is the Kronecker-delta.
\end{theo}
\begin{proof}
 We perform a proof by induction. The zeroth order of (\ref{eqn:green}) is assured
 by Theorem \ref{theo:greenclas}. Assume that we have constructed the Green's operators 
to order $\lambda^{n-1}$. In order $\lambda^n$ (\ref{eqn:green1}) reads
\begin{flalign}
\label{eqn:pr1}
 \sum_{m=0}^n P_{(m)}\circ \Delta_{(n-m)\pm}=0~.
\end{flalign}
Let $\varphi\in C_0^\infty(\mathcal{M})$ be arbitrary. We can reformulate 
(\ref{eqn:pr1}) into a Cauchy problem with respect to the
 classical operator $P=P_{(0)}$
\begin{flalign}
 P(\Delta_{(n)\pm}(\varphi)) = - \sum\limits_{m=1}^n P_{(m)}(\Delta_{(n-m)\pm}(\varphi))=:\rho\in C_0^\infty(\mathcal{M})~.
\end{flalign}
In order to satisfy (\ref{eqn:green3}), we have to impose trivial Cauchy data of vanishing field and derivative 
on a Cauchy surface past/future to $\supp(\rho)$. 
The unique solution is
\begin{flalign}
\label{eqn:sol}
 \Delta_{(n)\pm}(\varphi) =\Delta_{\pm}(\rho)= -\sum\limits_{m=1}^n \Delta_{\pm}\circ P_{(m)}\circ \Delta_{(n-m)\pm}(\varphi)~.
\end{flalign}
We obtain the support property for $n \geqq m>0$
\begin{multline}
 \supp\bigl(\Delta_{\pm}\circ P_{(m)}\circ \Delta_{(n-m)\pm}(\varphi) \bigr)\subseteq J_{\pm}\bigl(\supp(P_{(m)}\circ
\Delta_{(n-m)\pm}(\varphi))\bigr)\\
\subseteq J_{\pm}\bigl(\supp(\Delta_{(n-m)\pm}(\varphi))\bigr)\subseteq 
 J_{\pm}\bigl(J_{\pm}(\supp(\varphi))\bigr) \subseteq J_{\pm}\bigl(\supp(\varphi)\bigr)~,
\end{multline}
where we have used that $P_{(m)}$ is a finite-order differential operator, 
thus satisfying $\supp(P_{(m)}(\varphi))\subseteq \supp(\varphi)$.
 This shows that (\ref{eqn:green3}) is satisfied to order $\lambda^n$.

The equality of (\ref{eqn:sol}) and (\ref{eqn:explicitgreen}) can either be 
shown combinatorially, or by showing that (\ref{eqn:explicitgreen}) solves 
(\ref{eqn:green1}) together with the support property (\ref{eqn:green3}), and thus has to 
be equal to (\ref{eqn:sol}).

The remaining step is to prove the order $\lambda^n$ of (\ref{eqn:green2}). 
Plugging in the explicit form (\ref{eqn:explicitgreen}) one notices that every 
possible chain of operators
\begin{flalign}
 \Delta_{\pm}\circ P_{(j_1)} \circ \Delta_{\pm}\circ \dots \circ \Delta_{\pm}\circ P_{(j_k)}~,\quad \text{with~} j_1+j_2+\dots+j_k=n~,
\end{flalign}
occurs exactly twice in (\ref{eqn:green2}), but with a different sign. Thus, they cancel.

\end{proof}
\begin{rem}
 The reason for requiring the support condition $P_{(n)}:C^\infty(\MM)\to C^\infty_0(\MM)$ for all $n>0$
can be understood from (\ref{eqn:explicitgreen}). 
Since the undeformed Green's operators are in general only defined when acting on compactly supported functions,
all noncommutative corrections $P_{(n)}$ have to map to this space in order to make the compositions in 
(\ref{eqn:explicitgreen}) well-defined.
\end{rem}

The retarded and advanced deformed Green's operators fulfill an important relation
in case $P_\star$ is formally selfadjoint.
\begin{lem}
\label{lem:antihermitiandef}
 Let $(\MM,g_\star,\vols,\mathcal{F})$ be a compactly deformed time-oriented, connected and globally hyperbolic 
Lorentzian manifold and $P_\star:C^\infty(\MM)[[\lambda]]\to C^\infty(\MM)[[\lambda]]$
be a formally selfadjoint compactly deformed normally hyperbolic operator. 
Let $\Delta_{\star\pm}$ be the Green's operators for $P_\star$. Then
\begin{flalign}
 \label{eqn:antihermitiandef}\bigl(\Delta_{\star\pm}(\varphi),\psi\bigr)_\star = \bigl(\varphi,\Delta_{\star\mp}(\psi)\bigr)_\star~
\end{flalign}
holds true for all $\varphi,\psi\in C^\infty_0(\MM)[[\lambda]]$.
\end{lem}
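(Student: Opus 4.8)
The plan is to copy the proof of the commutative Lemma~\ref{lem:antihermitianclas} essentially verbatim, with the classical Green's operators replaced by the deformed ones supplied by Theorem~\ref{theo:greendef} and the classical pairing replaced by the deformed pairing $\bigl(\varphi,\psi\bigr)_\star=\int_\MM\varphi^\ast\star\psi\star\vols$. The three ingredients I would use are: the defining identities $P_\star\circ\Delta_{\star\pm}=\id$ from (\ref{eqn:green1}); formal selfadjointness of $P_\star$ with respect to $\bigl(\cdot,\cdot\bigr)_\star$, i.e.\ $\bigl(\varphi,P_\star(\psi)\bigr)_\star=\bigl(P_\star(\varphi),\psi\bigr)_\star$ whenever $\supp(\varphi)\cap\supp(\psi)$ is compact; and the causal support property (\ref{eqn:green3}) of the deformed Green's operators together with Lemma~\ref{lem:globhybsupport}. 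Since all maps involved are $\bbC[[\lambda]]$-linear, the computation is a genuine formal power series identity, and the only place where orders in $\lambda$ have to be inspected individually is the compactness of supports that enters the selfadjointness step.

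Concretely, for arbitrary $\varphi,\psi\in C^\infty_0(\MM)[[\lambda]]$ I would write the chain
\begin{flalign}
\nn \bigl(\Delta_{\star\pm}(\varphi),\psi\bigr)_\star
&= \bigl(\Delta_{\star\pm}(\varphi),P_\star(\Delta_{\star\mp}(\psi))\bigr)_\star \\
\nn &= \bigl(P_\star(\Delta_{\star\pm}(\varphi)),\Delta_{\star\mp}(\psi)\bigr)_\star
= \bigl(\varphi,\Delta_{\star\mp}(\psi)\bigr)_\star~,
\end{flalign}
where the outer equalities use $P_\star(\Delta_{\star\mp}(\psi))=\psi$ and $P_\star(\Delta_{\star\pm}(\varphi))=\varphi$, and the middle equality is formal selfadjointness applied with the two slots $\Delta_{\star\pm}(\varphi)$ and $\Delta_{\star\mp}(\psi)$. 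This is exactly the commutative argument, and it also makes transparent why $\pm$ turns into $\mp$: the selfadjointness (``integration by parts'') step is licensed only when the supports of the two arguments meet in a compact set, and it is precisely the pairing of $\Delta_{\star\pm}$ with $\Delta_{\star\mp}$ -- opposite causal directions -- that guarantees this; inserting the same sign would leave $J_\pm$-type supports in both slots, whose intersection is generically noncompact.

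To discharge the support hypothesis I would argue order by order. Writing $\varphi=\sum_k\lambda^k\varphi_{(k)}$ with $\varphi_{(k)}\in C^\infty_0(\MM)$, equation (\ref{eqn:green3}) gives $\supp\bigl((\Delta_{\star\pm}(\varphi))_{(n)}\bigr)\subseteq J_\pm\bigl(K_\varphi^{(n)}\bigr)$ with $K_\varphi^{(n)}:=\bigcup_{k=0}^{n}\supp(\varphi_{(k)})$ compact, and likewise $\supp\bigl((\Delta_{\star\mp}(\psi))_{(m)}\bigr)\subseteq J_\mp\bigl(K_\psi^{(m)}\bigr)$. By Lemma~\ref{lem:globhybsupport} the set $J_\pm\bigl(K_\varphi^{(n)}\bigr)\cap J_\mp\bigl(K_\psi^{(m)}\bigr)$ is compact for every $n,m$, so each pair of coefficients overlaps in a closed subset of a compact set, hence compactly -- which is exactly the ``$\supp(\cdot)\cap\supp(\cdot)$ compact'' condition (in the formal power series sense used throughout this chapter) needed to apply formal selfadjointness to each order in $\lambda$ in the chain above.

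I do not expect a genuine obstacle here: the statement is a routine formal-power-series lift of the commutative Lemma~\ref{lem:antihermitianclas}, and Theorem~\ref{theo:greendef} already provides everything about $\Delta_{\star\pm}$ that is required. The only mild subtlety -- the ``hard part'', such as it is -- is the bookkeeping just described: making the support convention for formal power series explicit and verifying that each of the three equalities in the chain survives the order-by-order reading; this is entirely mechanical.
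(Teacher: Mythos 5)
Your proof is correct and follows essentially the same route as the paper's: the identical three-step chain using $P_\star\circ\Delta_{\star\mp}=\id$, formal selfadjointness, and $\Delta_{\star\pm}\circ P_\star=\id$, with the compact-overlap hypothesis verified order by order via the support property (\ref{eqn:green3}) and Lemma~\ref{lem:globhybsupport}. No gaps.
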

\begin{proof}
Using the properties of the deformed Green's operators (\ref{eqn:green}) and that
$P_\star$ is formally selfadjoint we obtain for all $\varphi,\psi\in C^\infty_0(\MM)[[\lambda]]$
\begin{flalign}
 \spp{\Delta_{\star\pm}(\varphi)}{\psi}_\star = \spp{\Delta_{\star\pm}(\varphi)}{P_\star(\Delta_{\star\mp}(\psi))}_\star
= \spp{P_\star(\Delta_{\star\pm}(\varphi))}{\Delta_{\star\mp}(\psi)}_\star = \spp{\varphi}{\Delta_{\star\mp}(\psi)}_\star~.
\end{flalign}
The second equality holds, since $\supp(\Delta_{\star\pm}(\varphi))\cap \supp(\Delta_{\star\mp}(\psi))$ 
is compact.
To prove this, note that the $n$-th order of $\varphi_{\pm}:=\Delta_{\star\pm}(\varphi)$ 
satisfies
\begin{flalign}
\nn \supp(\varphi_{\pm (n)})&=\supp\left(\sum\limits_{m=0}^n \Delta_{(m)\pm}(\varphi_{(n-m)})\right)\subseteq 
\bigcup_{m=0}^n \supp\left(\Delta_{(m)\pm}(\varphi_{(n-m)})\right)\\
&\subseteq \bigcup_{m=0}^n J_\pm\bigl(\supp(\varphi_{(n-m)})\bigr)
 \subseteq J_\pm\left(\bigcup_{m=0}^n \supp(\varphi_{(n-m)})\right) = J_\pm\left(K^\varphi_{(n)}\right)~,
\end{flalign}
where $K_{(n)}^\varphi\subseteq \MM$ is compact. It follows from Lemma \ref{lem:globhybsupport} 
that $\supp(\varphi_{\pm(n)})\cap \supp(\psi_{\mp(m)})
\subseteq J_\pm(K^\varphi_{(n)})\cap J_{\mp}(K^\psi_{(m)})$ is compact, for all $n,m\in \bbN^0$.

\end{proof}

As we have seen in the commutative case, the retarded-advanced Green's operator
 plays a prominent role in the quantum field theory 
construction. For the deformed map $\Delta_\star:= \Delta_{\star +}-\Delta_{\star -}$ we obtain the following support property.
\begin{lem}
\label{lem:scsupport}
 Let $(\MM,g_\star,\vols,\mathcal{F})$ be a compactly deformed time-oriented, connected and globally hyperbolic 
Lorentzian manifold and $P_\star:C^\infty(\MM)[[\lambda]]\to C^\infty(\MM)[[\lambda]]$
be a compactly deformed normally hyperbolic operator. 
Let $\Delta_{\star\pm}$ be the Green's operators for $P_\star$.
Then for all $\varphi\in C^\infty_0(\mathcal{M})[[\lambda]]$ we have $\Delta_{\star}(\varphi)\in
 C^\infty_\mathrm{sc}(\mathcal{M})[[\lambda]]$.
\end{lem}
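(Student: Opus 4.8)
The plan is to reduce the claim to a purely order-by-order support estimate, using the explicit formula for $\Delta_{(n)\pm}$ from Theorem \ref{theo:greendef} together with the definition of $C^\infty_\sc(\MM)$. Recall that $\varphi = \sum_{n=0}^\infty \lambda^n\, \varphi_{(n)} \in C^\infty_0(\MM)[[\lambda]]$ means each $\varphi_{(n)}\in C^\infty_0(\MM)$; in particular $K_{(n)} := \supp(\varphi_{(n)})$ is compact. The deformed operator $\Delta_\star = \Delta_{\star+} - \Delta_{\star-}$ is $\bbC[[\lambda]]$-linear, so its $n$-th order component acts by $(\Delta_\star(\varphi))_{(n)} = \sum_{m=0}^n (\Delta_{(m)+}-\Delta_{(m)-})(\varphi_{(n-m)})$. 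To show $\Delta_\star(\varphi)\in C^\infty_\sc(\MM)[[\lambda]]$ it suffices, for each fixed $n$, to produce a single compact set $K\subseteq\MM$ with $\supp\bigl((\Delta_\star(\varphi))_{(n)}\bigr)\subseteq J(K)$.

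First I would handle the zeroth order: $\Delta_{(0)\pm}=\Delta_\pm$ are the classical Green's operators of Theorem \ref{theo:greenclas}, and by the support property $\supp(\Delta_\pm(\psi))\subseteq J_\pm(\supp(\psi))$ one gets $\supp(\Delta(\varphi_{(0)}))\subseteq J_+(K_{(0)})\cup J_-(K_{(0)}) = J(K_{(0)})$, which is exactly the spacelike-compact condition at this order. For $n>0$ the key input is the support estimate already established inside the proof of Theorem \ref{theo:greendef}: for every $m>0$ one has $\supp(\Delta_{(m)\pm}(\psi))\subseteq J_\pm(\supp(\psi))$ for $\psi\in C^\infty_0(\MM)$, because each chain of operators in \eqref{eqn:explicitgreen} is a composition $\Delta_\pm\circ P_{(j_1)}\circ\cdots\circ\Delta_\pm$ with the $P_{(j_i)}$ finite-order differential operators (hence support-nonincreasing) and each $\Delta_\pm$ contributing a $J_\pm$. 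Applying this to $\psi=\varphi_{(n-m)}$ and taking the union over $m=0,\dots,n$ gives
\begin{flalign*}
\supp\bigl((\Delta_\star(\varphi))_{(n)}\bigr) \subseteq \bigcup_{m=0}^n \Bigl( J_+(K_{(n-m)})\cup J_-(K_{(n-m)})\Bigr) \subseteq J\Bigl(\bigcup_{m=0}^n K_{(n-m)}\Bigr)~,
\end{flalign*}
and $K := \bigcup_{m=0}^n K_{(n-m)}$ is a finite union of compact sets, hence compact. This is precisely the statement that $(\Delta_\star(\varphi))_{(n)}$ is spacelike compact, so $\Delta_\star(\varphi)\in C^\infty_\sc(\MM)[[\lambda]]$.

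The argument is essentially bookkeeping; the only subtlety — and the step I would be most careful about — is making sure the support bound $\supp(\Delta_{(m)\pm}(\psi))\subseteq J_\pm(\supp(\psi))$ is genuinely available at all orders $m$, since it relies crucially on the compactly deformed hypothesis ($P_{(m)}:C^\infty(\MM)\to C^\infty_0(\MM)$ finite-order for $m>0$), which is what guarantees both that the compositions in \eqref{eqn:explicitgreen} are well-defined on $C^\infty_0(\MM)$ and that each $P_{(m)}$ is support-nonincreasing. Given that this estimate is already proved within Theorem \ref{theo:greendef}, the present lemma follows by invoking it and assembling the orders; no further analytic work is needed. Indeed this is exactly the same mechanism used in the proof of Lemma \ref{lem:antihermitiandef} to control $\supp(\varphi_{\pm(n)})$, so one could alternatively just cite that computation with $\varphi_\pm$ replaced by $\Delta_\star(\varphi)$.
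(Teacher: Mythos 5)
Your proposal is correct and follows essentially the same route as the paper's proof: decompose $\Delta_\star(\varphi)$ order by order in $\lambda$, apply the support property (\ref{eqn:green3}) of the deformed Green's operators to each term $\Delta_{(m)\pm}(\varphi_{(n-m)})$, and bound the $n$-th order support by $J$ of a finite (hence compact) union of the $\supp(\varphi_{(n-m)})$. No gaps.
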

\begin{proof}
Let $\varphi\in C^\infty_0(\MM)[[\lambda]]$ be arbitrary and define  $\psi := \Delta_\star(\varphi)$.
We obtain for the $n$-th order of $\psi$
\begin{flalign}
\nn \supp(\psi_{(n)}) &= \supp\left(\sum\limits_{m=0}^n \Delta_{(m)}(\varphi_{(n-m)})\right)\subseteq 
\bigcup_{m=0}^n\supp\left(\Delta_{(m)}(\varphi_{(n-m)})\right)\\
&\subseteq \bigcup_{m=0}^n J\bigl(\supp(\varphi_{(n-m)})\bigr) \subseteq  J\left(\bigcup_{m=0}^n 
\supp(\varphi_{(n-m)})\right) = J\left(K_{(n)}^\varphi\right)~,
\end{flalign}
where $K_{(n)}^\varphi\subseteq \MM$ is compact. Thus, $\psi_{(n)}\in C^\infty_\sc(\MM)$ for all $n\in\bbN^0$.

\end{proof}

The space of complex solutions of the deformed wave equation is defined by
\begin{flalign}
\Sol_{P_\star}:= \bigl\lbrace\varphi\in C^\infty_\sc(\MM)[[\lambda]]: P_\star(\varphi)=0\bigr \rbrace~.
\end{flalign}
The full information on this space can be obtained from the following
\begin{theo}
\label{theo:complex}
 Let $(\MM,g_\star,\vols,\mathcal{F})$ be a compactly deformed time-oriented, connected and globally hyperbolic 
Lorentzian manifold and $P_\star:C^\infty(\MM)[[\lambda]]\to C^\infty(\MM)[[\lambda]]$
be a formally selfadjoint compactly deformed normally hyperbolic operator. 
Let $\Delta_{\star\pm}$ be the Green's operators for $P_\star$.
Then the sequence of $\bbC[[\lambda]]$-linear maps
\begin{flalign}
 0\longrightarrow C_0^\infty(\mathcal{M})[[\lambda]] \stackrel{P_\star}{\longrightarrow} C_0^\infty(\mathcal{M})[[\lambda]] \stackrel{\Delta_\star}{\longrightarrow} C_\mathrm{sc}^\infty(\mathcal{M})[[\lambda]] \stackrel{P_\star}{\longrightarrow}C_\mathrm{sc}^\infty(\mathcal{M})[[\lambda]]
\end{flalign}
is a complex which is exact everywhere.
\end{theo}
\begin{proof}
The sequence of maps forms a complex due to Theorem \ref{theo:greendef} and Lemma \ref{lem:scsupport}. 

To prove the first exactness, let $\varphi\in C^\infty_0(\mathcal{M})[[\lambda]]$ such that $P_\star(\varphi)=0$. 
Then $\varphi = \Delta_{\star\pm}\circ P_\star(\varphi) = 0$. 

To prove the second exactness, let $\varphi\in C^\infty_0(\mathcal{M})[[\lambda]]$ such that $\Delta_{\star}(\varphi)=0$. 
We define $\psi:=\Delta_{\star\pm}(\varphi)$ and obtain using Theorem \ref{theo:greendef} 
and Lemma \ref{lem:globhybsupport}
that $\psi\in C^\infty_0(\mathcal{M})[[\lambda]]$. We find $P_\star(\psi)=P_\star\circ\Delta_{\star\pm}(\varphi)=\varphi$.

To prove the third exactness, let $\varphi\in C^\infty_\mathrm{sc}(\mathcal{M})[[\lambda]]$
 such that $P_\star(\varphi)=0$.  
We can find a family of compact sets $\lbrace K_{(n)}\subseteq \MM : n\in\bbN^0\rbrace$, such that
 $\supp(\varphi_{(n)})\subseteq I_+(K_{(n)})\cup I_-(K_{(n)})$.
 We decompose analogously to \cite{Bar:2007zz} $\varphi_{(n)}=\varphi_{(n)}^+ +\varphi_{(n)}^-$, where 
$\supp(\varphi_{(n)}^\pm)\subseteq I_{\pm}(K_{(n)})\subseteq J_\pm(K_{(n)})$. We define 
$\varphi^{\pm} := \sum_{n=0}^\infty\lambda^n\,\varphi_{(n)}^\pm$ and $\psi:= \pm P_\star(\varphi^\pm)$. 
Using the support properties of $\varphi^\pm$ and Lemma \ref{lem:globhybsupport}, 
one obtains that $\psi\in C^\infty_0(\mathcal{M})[[\lambda]]$.
To show that $\Delta_{\star\pm}(\psi)=\pm\varphi^{\pm}$, let $\chi\in C^\infty_0(\mathcal{M})[[\lambda]]$ be arbitrary. We obtain
\begin{flalign}
 \nn\bigl(\chi,\Delta_{\star\pm}(\psi)\bigr)_\star &= \bigl(\Delta_{\star\mp}(\chi),\psi\bigr)_\star
 = \pm \bigl(\Delta_{\star\mp}(\chi),P_\star(\varphi^\pm)\bigr)_\star \\
&= \pm\bigl(P_\star\circ\Delta_{\star\mp}(\chi),\varphi^\pm
\bigr)_\star = \bigl(\chi,\pm \varphi^\pm\bigr)_\star~,
\end{flalign}
where we have used Lemma \ref{lem:antihermitiandef}, Theorem \ref{theo:greendef} and that $P_\star$ is formally selfadjoint. 
This shows that $\Delta_{\star}(\psi) = \varphi$.

\end{proof}
\noindent The map $\Delta_\star$ thus gives rise to a $\bbC[[\lambda]]$-module isomorphism
\begin{flalign}
\label{eqn:solisodef}
 \mathcal{I}_\star: C^\infty_0(\MM)[[\lambda]]/P_\star[C^\infty_0(\MM)[[\lambda]]] \to \Sol_{P_\star}\,
,~[\varphi]\mapsto \Delta_\star(\varphi)~.
\end{flalign}

%%%%%%%%%%%%%%%%%%%%%%%%%%%%%%%%%%%%%%%%%%%%%%%%%%%%%%%

\section{Symplectic $\bbR[[\lambda]]$-module and quantization}
In order to quantize the deformed field theory we have to equip the space of real solutions
with a symplectic structure. Since we are working in a  formal deformation quantization setting,
 we have to adjust the definition of a symplectic vector space.
\begin{defi}
 A {\it symplectic $\bbR[[\lambda]]$-module} $\bigl(V,\omega\bigr)$ is an $\bbR[[\lambda]]$-module $V$ together with
an antisymmetric and $\bbR[[\lambda]]$-bilinear map $\omega:V\times V\to \bbR[[\lambda]]$,
 such that $\omega(v,v^\prime)=0$ for all $v^\prime\in V$
implies $v=0$. The map $\omega$ is called a symplectic structure.
\end{defi}
The space of real solutions is defined by $\Sol_{P_\star}^\bbR:= \bigl\lbrace\varphi \in C^\infty_\sc(\MM,\bbR)[[\lambda]]
: P_\star(\varphi)=0\bigr\rbrace$. 
Remember that the top-form valued deformed Klein-Gordon operator $P_\star^\top$
(\ref{eqn:defwaveoptop}) is real, a natural property we also assume for our general formal deformations
of normally hyperbolic operators.
As a consequence, the scalar valued wave operator $P_\star$ and therewith the Green's operators 
$\Delta_{\star\pm}$ are not real, and the space of real solutions is {\it not} generated by restricting $\Delta_\star$
to $C^\infty_0(\MM,\bbR)[[\lambda]]$. 
The correct way to do this restriction can be obtained from
the following
\begin{propo}
\label{propo:HRprop}
Let $(\MM,g_\star,\vols,\mathcal{F})$ be a compactly deformed time-oriented, connected and globally hyperbolic 
Lorentzian manifold and $P_\star:C^\infty(\MM)[[\lambda]]\to C^\infty(\MM)[[\lambda]]$
be a formally selfadjoint compactly deformed normally hyperbolic operator, such that $P_\star^\top = \star_\star \circ P_\star$
is real.
Let $\Delta_{\star\pm}$ be the Green's operators for $P_\star$.
 Consider the $\bbR[[\lambda]]$-module
\begin{flalign}
 H_\bbR := \bigl\lbrace\varphi\in C^\infty_0(\MM)[[\lambda]] : \bigl(\Delta_{\star\pm}(\varphi)\bigr)^\ast = \Delta_{\star\pm}(\varphi)  \bigr\rbrace~.
\end{flalign}
Then the following statements hold true:
\begin{itemize}
 \item[(1)] Let $\psi\in \Sol_{P_\star}^\bbR$ be a real solution of the deformed wave equation, then
there is a $\varphi\in H_\bbR$ such that $\psi = \Delta_\star(\varphi)$.
\item[(2)] The kernel of $\Delta_\star$ restricted to $H_\bbR$ is given by
$\Ker(\Delta_\star)\big\vert_{H_\bbR}= P_\star[C^\infty_0(\MM,\bbR)[[\lambda]]]$.
\item[(3)] Let $\varphi \in H_\bbR$, then $(\varphi\star \vols)^\ast = \varphi\star \vols$.
\end{itemize}
\end{propo}
\begin{proof}
 {\it Proof of (1):}\\
Let $\psi\in \Sol_{P_\star}^\bbR$ be a real solution. 
By Theorem \ref{theo:complex} we know that there is a
 $\varphi\in C^\infty_0(\mathcal{M})[[\lambda]]$, such that $\psi = \Delta_\star(\varphi)$.
 From the reality of $\psi$ we obtain
\begin{flalign}
 \bigl(\Delta_{\star+}(\varphi)\bigr)^\ast - \Delta_{\star+}(\varphi) = 
\bigl(\Delta_{\star-}(\varphi)\bigr)^\ast - \Delta_{\star-}(\varphi)=:2 \delta\in C_0^\infty(\mathcal{M})[[\lambda]]~.
\end{flalign}
$\delta$ is of compact support due to Theorem \ref{theo:greendef} and Lemma \ref{lem:globhybsupport}.
Using $\delta^\ast = -\delta$ and $\delta = \Delta_{\star\pm}\circ P_\star(\delta)$ 
we find that 
\begin{flalign}
 \bigl(\Delta_{\star\pm}\left(\varphi + P_\star(\delta)\right)\bigr)^\ast = 
\Delta_{\star\pm}\left(\varphi + P_\star(\delta)\right)~.
\end{flalign}
Thus, $\varphi+P_\star(\delta)\in H_\bbR$ with $\Delta_\star(\varphi+P_\star(\delta)) = \Delta_{\star}(\varphi)=\psi$. 
~\vspace{2mm}\\
\noindent {\it Proof of (2):}\\
Let $\varphi=P_\star(\chi)\in P_\star[C^\infty_0(\mathcal{M},\bbR)[[\lambda]]]$. We obtain 
\begin{flalign}
 \bigl(\Delta_{\star\pm}(\varphi)\bigr)^\ast = \chi^\ast =\chi =\Delta_{\star\pm}(\varphi)~.
\end{flalign}
Thus, $\varphi=P_\star(\chi)\in H_\bbR$ and $\Delta_\star(\varphi) = \Delta_{\star}(P_\star(\chi))=0$.\\
Let now $\varphi\in H_\bbR$ such that $\Delta_\star(\varphi)=0$. By Theorem 
\ref{theo:complex} there exists a $\chi\in C^\infty_0(\mathcal{M})[[\lambda]]$, such that
 $\varphi=P_\star(\chi)$. Using the definition of $H_\bbR$ we obtain
\begin{flalign}
 0=\bigl(\Delta_{\star\pm}(\varphi)\bigr)^\ast-\Delta_{\star\pm}(\varphi) = \chi^\ast -\chi~.
\end{flalign}
Thus, $\chi\in C^\infty_0(\mathcal{M},\bbR)[[\lambda]]$.
~\vspace{2mm}\\
\noindent {\it Proof of (3):}\\
Let $\varphi\in H_\bbR$. Using reality of the top-form valued equation of motion operator
 $P_\star^\top$ we obtain
\begin{flalign}
 \left(\varphi\star\vols\right)^\ast =
 \left(P^\top_\star\left(\Delta_{\star\pm}\left(\varphi\right)\right)\right)^\ast =
 P^\top_\star\left(\left(\Delta_{\star\pm}\left(\varphi\right)\right)^\ast \right)= 
P^\top_\star\left(\Delta_{\star\pm}\left(\varphi\right) \right)=\varphi\star\vols~.
\end{flalign}

\end{proof}
From this proposition we obtain that the $\bbR[[\lambda]]$-module 
$V_\star:= H_\bbR/P_\star[C^\infty_0(\MM,\bbR)[[\lambda]]]$
is isomorphic to the space of real solutions $\Sol_{P_\star}^\bbR$.
Furthermore, we can equip $V_\star$ with a symplectic structure
\begin{flalign}
\label{eqn:defsymplec}
 \omega_\star: V_\star\times V_\star \to \bbR[[\lambda]]\,,~([\varphi], [\psi])\mapsto \omega_\star([\varphi],[\psi])= 
\spp{\varphi}{\Delta_\star(\psi)}_\star~.
\end{flalign}
This map is well-defined due to Lemma \ref{lem:antihermitiandef}.
It is weakly nondegenerate for the following reason: Let $\psi\in H_\bbR$ be fixed.
Then $\spp{\varphi}{\Delta_\star(\psi)}_\star=0$ for all $\varphi\in H_\bbR$ implies
$\Delta_\star(\psi)=0$.
Thus, by Proposition \ref{propo:HRprop} (2) we have $\psi\in P_\star[C^\infty_0(\MM,\bbR)[[\lambda]]]$, which means
that $\psi\sim 0$ is equivalent to zero.

It remains to show the antisymmetry and reality of the map $\omega_\star$.
Using Proposition \ref{propo:HRprop} (3), graded cyclicity (GC) and the reality of $\vols$ ($\text{RE}_{\vols}$),
we obtain for all $\varphi,\psi\in H_\bbR$
\begin{flalign}
\nn \spp{\varphi}{\Delta_\star(\psi)}_\star &= \int\limits_\MM \varphi^\ast\star \Delta_\star(\psi)\star \vols \stackrel{\text{GC},\,
\text{RE}_{\vols}}{=} \int\limits_\MM \bigl(\varphi\star\vols\bigr)^\ast\star \Delta_\star(\psi)\\
\nn&\hspace{-0.4mm}\stackrel{(3)}{=} \int\limits_\MM \varphi\star \vols \star\Delta_\star(\psi) \stackrel{\text{GC}}{=} \int\limits_\MM
\Delta_\star(\psi)\star\varphi\star \vols\\
&\hspace{-2.5mm}\stackrel{\psi\in H_\bbR}{=}  \spp{\Delta_\star(\psi)}{\varphi}_\star 
\stackrel{\text{Lem.~\ref{lem:antihermitiandef}}}{=} -\spp{\psi}{\Delta_\star(\varphi)}_\star~.
\end{flalign}
This shows that $\omega_\star$ is antisymmetric. Reality follows from
\begin{flalign}
 \spp{\varphi}{\Delta_\star(\psi)}^\ast_\star = 
\spp{\Delta_\star(\psi)}{\varphi}_\star \stackrel{\text{Lem.~\ref{lem:antihermitiandef}}}{=} 
-\spp{\psi}{\Delta_\star(\varphi)}_\star = \spp{\varphi}{\Delta_\star(\psi)}_\star~,
\end{flalign}
In the first equality we have used that the scalar product is hermitian and in the last one
the antisymmetry of $\omega_\star$.

The results are summarized in this
\begin{propo}
 Let $(\MM,g_\star,\vols,\mathcal{F})$ be a compactly deformed time-oriented, connected and globally hyperbolic 
Lorentzian manifold and $P_\star:C^\infty(\MM)[[\lambda]]\to C^\infty(\MM)[[\lambda]]$
be a formally selfadjoint compactly deformed normally hyperbolic operator, such that $P_\star^\top = \star_\star \circ P_\star$
is real. We denote the Green's operators for $P_\star$ by $\Delta_{\star\pm}$ and $\Delta_\star = \Delta_{\star+}-\Delta_{\star-}$.
Then $\bigl(V_\star,\omega_\star\bigr)$ with $V_\star= H_\bbR/P_\star[C^\infty_0(\MM,\bbR)[[\lambda]]]$
 and $\omega_\star$ defined in (\ref{eqn:defsymplec})
is a symplectic $\bbR[[\lambda]]$-module.
\end{propo}
\begin{rem}
 This proposition is a nontrivial result, since it shows that the field theory
on noncommutative curved spacetimes can be naturally equipped with a symplectic structure $\omega_\star$.
This is in contrast to other approaches to deformed quantum field theory based on
``symplectic or Poisson structures'' which are antisymmetric up to an $R$-matrix, 
see e.g.~\cite{Aschieri:2007sq,Aschieri:2009yq} and references therein. 
As we have shown in the Appendix \ref{app:twistqft} using the example of homothetic Killing deformations,
our approach is more flexible and allows for more general deformations of quantum field theories.
\end{rem}

Provided the symplectic $\bbR[[\lambda]]$-module we can define suitable algebras of observables
for the deformed quantum field theory.
A first possible definition is guided by the commutative case and reads
\begin{defi}
 Let $\bigl(V,\omega\bigr)$ be a symplectic $\bbR[[\lambda]]$-module.
Let $A$ be a unital $\ast$-algebra over $\bbC[[\lambda]]$ and let $\mathcal{W}:V\to A$
be a map such that for all $v,u\in V$ we have
\begin{subequations}
\begin{flalign}
 \mathcal{W}(0)&=1~,\\
 \mathcal{W}(-v)&= \mathcal{W}(v)^\ast~,\\
\label{eqn:expdefweyl} \mathcal{W}(v)\,\mathcal{W}(u) &= e^{-i\,\omega(v,u)/2}\,\mathcal{W}(v+u)~.
\end{flalign}
\end{subequations}
We call $A$ a {\it $\ast$-algebra of Weyl-type}, if it is generated by the elements $\mathcal{W}(v)$, $v\in V$.
\end{defi}
Note that, different to the commutative case, we could not demand $A$ to be a $C^\ast$-algebra, because
we are considering algebras over $\bbC[[\lambda]]$ and not over $\bbC$.
This formal extension of the algebra is required, since $\omega$ and therewith the exponential factor in (\ref{eqn:expdefweyl})
is only defined in terms of formal power series. 
Due to the missing $C^\ast$-properties, we can not use the strong mathematical
results of \cite{Bar:2007zz} for $C^\ast$-algebras, see also Chapter \ref{chap:qftbas}, and thus,
$\ast$-algebras of Weyl-type are more difficult to handle than their $C^\ast$-counterparts.
In particular, the representation theory of $\ast$-algebras over $\bbC[[\lambda]]$ is known to be very rich
\cite{1138.53316}.

A second algebra of observables, which frequently appears in commutative quantum field theory,
is the algebra of field polynomials. This algebra has no $C^\ast$-norm, and thus is not preferred
for structural mathematical studies. However, since its elements are finite sums of finite products of 
smeared linear field operators, the extraction of $n$-point correlation functions from this algebra 
is straightforward. In other words, the algebra of field polynomials is 
convenient for physical studies.
 For our purpose, this algebra is suitable since it has a straightforward extension
 to formal power series.
\begin{defi}
\label{def:fieldpoly}
 Let $(V,\omega)$ be a symplectic $\bbR[[\lambda]]$-module.
 Let $\mathcal{A}_\text{free}$ be the unital $\ast$-algebra over $\bbC[[\lambda]]$ which is
 freely generated by the elements $1$, $\Phi( v )$ and $\Phi( v )^\ast$, $v \in V$,
 and let $\mathcal{I}$ be the $\ast$-ideal generated by the elements
\begin{subequations}
\label{eqn:fieldalgebra}
\begin{flalign}
&~~\Phi(\beta\,v+\gamma\,u)-\beta\,\Phi(v)-\gamma\,\Phi(u)~,\\
&~~\Phi(v)^\ast-\Phi(v)~,\\
&~~[\Phi(v),\Phi(u)]- i\,\omega(v,u)\, 1~,
\end{flalign}
\end{subequations}
for all $v,u\in V$ and $\beta,\gamma\in \bbR[[\lambda]]$.
The {\it $\ast$-algebra of field polynomials} is defined as the quotient $\mathcal{A}_{(V,\omega)}:=
 \mathcal{A}_\text{free}/\mathcal{I}$.
\end{defi}
For notational convenience we do not write the brackets for equivalence classes in $\AA_{(V,\omega)}$
and denote the linear field operators by $\Phi(v)\in \AA_{(V,\omega)}$ and 
not by $[\Phi(v)]$.

%%%%%%%%%%%%%%%%%%%%%%%%%%%%%%%%%%%%%%%%%%%%%%%%%%%%%%%

\section{\label{sec:algstatesdef}Algebraic states and representations}
In order to extract physical observables from the $\ast$-algebras of observables of the deformed 
quantum field theory discussed in the previous section we require the notion
of an algebraic state. Defining a state on a $\ast$-algebra $A$ over $\bbC[[\lambda]]$
to be a positive $\bbC$-linear map $\Omega:A\to\bbC$ turns out to be too naive, since one is immediately
faced with convergence problems or one has to ignore the higher order corrections in $\lambda$
\cite{1138.53316}.
A more suitable definition is the following
\begin{defi}
 An {\it algebraic state} (or simply {\it state}) on a unital $\ast$-algebra $A$ over $\bbC[[\lambda]]$ is a $\bbC[[\lambda]]$-linear
map $\Omega:A\to \bbC[[\lambda]]$ such that
\begin{flalign}
 \Omega(1)=1~,\quad \Omega(a^\ast a)\geq 0~,\quad\text{for all } a\in A~.
\end{flalign}
The ordering on $\bbR[[\lambda]]$ is defined by
\begin{flalign}
 \bbR[[\lambda]]\ni \gamma=\sum\limits_{n=n_0}^\infty\lambda^n\,\gamma_{(n)} >0\quad :\Longleftrightarrow\quad \gamma_{(n_0)}>0~.
\end{flalign}
$\Omega$ is called {\it faithful}, if $\Omega(a^\ast a)=0$ implies $a=0$.
\end{defi}
A state as defined above associates to each element $a\in A$ its expectation value $\Omega(a)\in \bbC[[\lambda]]$,
which can depend on the deformation parameter $\lambda$. Thus, $\lambda$ can have effects on measurements,
which is exactly what we want. 

As argued in \cite{1138.53316}, suitable spaces on which we can represent
$\ast$-algebras over $\bbC[[\lambda]]$ are pre-Hilbert spaces over $\bbC[[\lambda]]$.
\begin{defi}~\,~
\begin{itemize}
\item[(a)] A $\bbC[[\lambda]]$-module $\mathcal{H}$ with a map $\braket{\cdot}{\cdot}: \mathcal{H}\times\mathcal{H}\to\bbC[[\lambda]]$
 is called a {\it pre-Hilbert space over $\bbC[[\lambda]]$} if
\begin{itemize}
\item[1.)] $\braket{\cdot}{\cdot}$ is $\bbC[[\lambda]]$-linear in the second argument,
\item[2.)] $\braket{\psi}{\phi}^\ast = \braket{\phi}{\psi}$ for all $\phi,\psi\in \mathcal{H}$,
\item[3.)] $\braket{\psi}{\psi}>0$ for all $\psi\neq 0$.
\end{itemize}\vspace{2mm}
\item[(b)] Let $A$ be a $\ast$-algebra over $\bbC[[\lambda]]$. A {\it representation} of $A$ is a tuple
$\bigl(\mathcal{H},\pi\bigr)$, where $\mathcal{H}$ is a pre-Hilbert space over $\bbC[[\lambda]]$ and
$\pi:A\to L_\text{ad}(\mathcal{H})$ is a $\ast$-algebra homomorphism into
the algebra of adjoinable operators on $\mathcal{H}$.\vspace{1mm}

\noindent A vector $\ket{0}\in\mathcal{H}$ is called {\it cyclic}, if $\pi[A]\ket{0}=\mathcal{H}$.\vspace{1mm}

\noindent A triple $\bigl(\mathcal{H},\pi,\ket{0}\bigr)$ is called a {\it cyclic representation} of $A$, if 
$\bigl(\mathcal{H},\pi\bigr)$ is a representation of $A$ and $\ket{0}\in\mathcal{H}$ is cyclic.
\end{itemize}
\end{defi}

Similar to the case of unital $\ast$-algebras over $\bbC$ there is a GNS-construction
for unital $\ast$-algebras over $\bbC[[\lambda]]$, see \cite{0989.53057,1138.53316} for details.
The main result of this construction is summarized in the following
\begin{theo}[\cite{0989.53057,1138.53316}]~\,~
 \begin{itemize}
  \item[(i)] Let $A$ be a unital $\ast$-algebra over $\bbC[[\lambda]]$ and $\Omega$ a state on $A$.
Then there exists a cyclic representation $\bigl(\mathcal{H},\pi,\ket{0}\bigr)$ of $A$, such that
\begin{flalign}
 \Omega(a) = \bra{0}\pi(a)\ket{0}~,\quad\text{for all } a\in A~.
\end{flalign}
The triple $\bigl(\mathcal{H},\pi,\ket{0}\bigr)$ is called the GNS-representation of $\Omega$.\vspace{1mm}
\item[(ii)]$\bigl(\mathcal{H},\pi,\ket{0}\bigr)$ is unique up to unitary equivalence, i.e.~let
$\bigl(\widetilde{\mathcal{H}},\widetilde{\pi},\widetilde{\ket{0}}\bigr)$ be another cyclic representation of
$A$ satisfying $\Omega(a)=\widetilde{\bra{0}}\widetilde{\pi}(a)\widetilde{\ket{0}}$ for all $a\in A$, 
then there is a unitary $U:\mathcal{H}\to \widetilde{\mathcal{H}}$, such that $U\pi(a)U^{-1}=\widetilde{\pi}(a)$, for
all $a\in A$, and $U\ket{0} =\widetilde{\ket{0}}$.
 \end{itemize}
\end{theo}
 This theorem shows that, even in the formal power series framework, we can go over from
 $\ast$-algebras and states to a formulation in terms of pre-Hilbert spaces and operators thereon.
However, note that there is one essential difference: All pre-Hilbert spaces 
are not equipped (and thus also not complete) with a norm $\Vert\cdot\Vert:\mathcal{H}\to\bbC$.
Since the usual spectral calculus for operators on Hilbert spaces strongly makes use of this
completeness, it is expected that a spectral calculus in the formal framework will be much more involved.

The last point we want to discuss in this chapter is the choice of an 
algebraic state for the deformed quantum field theory.
In general, this will be even more ambiguous than the choice of a state for the
undeformed quantum field theory, due to the freedom we get from the formal power series extension.
However, as we will show in Chapter \ref{chap:qftcon}, there is a $\ast$-algebra homomorphism
from the $\ast$-algebra of field polynomials of the deformed quantum field theory $\AA_{(V_\star,\omega_\star)}$ 
to the formal power series extension of the $\ast$-algebra of field polynomials of the undeformed one
$\AA_{(V,\omega)}[[\lambda]]$.
This allows us to induce states from $\AA_{(V,\omega)}[[\lambda]]$ to $\AA_{(V_\star,\omega_\star)}$.
There is even the possibility to induce faithful states from $\AA_{(V,\omega)}$ to faithful states on $\AA_{(V_\star,\omega_\star)}$.
A faithful state $\Omega:\AA_{(V,\omega)}\to\bbC$ for the undeformed (and unextended) quantum field theory
provides a faithful state $\Omega:\AA_{(V,\omega)}[[\lambda]]\to\bbC[[\lambda]]$ by 
defining, for all $a\in \AA_{(V,\omega)}[[\lambda]]$,
\begin{flalign}
\label{eqn:stateextension}
 \Omega(a):= \sum\limits_{n=0}^\infty\lambda^n\,\Omega(a_{(n)})~.
\end{flalign}
The map (\ref{eqn:stateextension}) is $\bbC[[\lambda]]$-linear, satisfies $\Omega(1)=1$ on $\AA_{(V,\omega)}[[\lambda]]$,
and
\begin{flalign}
\nn \Omega(a^\ast a) &= \sum\limits_{n=0}^\infty\lambda^n \sum\limits_{m+k=n} \Omega(a^\ast_{(m)}\,a_{(k)})\\
&= \lambda^{2 n_0} \, \Omega(a^\ast_{(n_0)}\,a_{(n_0)}) +\mathcal{O}(\lambda^{2n_0+1})>0~,~\text{for all } a\neq 0~, 
\end{flalign}
where $n_0\in \bbN^0$ denotes the first nonvanishing term in $a=\sum_{n=0}^\infty \lambda^n\,a_{(n)}$. 
This state can be pulled-back to $\AA_{(V_\star,\omega_\star)}$.

%%%%%%%%%%%%%%%%%%%%%%%%%%%%%%%%%%%%%%%%%%%%%%%%%%%%%%%
%%%%%%%%%%%%%%%%%%%%%%%%%%%%%%%%%%%%%%%%%%%%%%%%%%%%%%%

\chapter{\label{chap:qftcon}Properties}
In this chapter we first reformulate the quantum field theory on noncommutative curved spacetimes
presented in Chapter \ref{chap:qftdef} into a simpler but equivalent setting. This makes in particular reality properties
more obvious. Then we show that there exist symplectic isomorphisms between the field theory on the deformed
and the field theory on the undeformed spacetime. This eventually leads to $\ast$-algebra isomorphisms
between the corresponding $\ast$-algebras of field polynomials. The mathematical and physical consequences are studied.
The results of this chapter appeared in the proceedings article \cite{Schenkel:2011gw} and
are influenced by our investigations on special classes of deformations, the homothetic Killing deformations
\cite{Schenkel:2010jr}. 

%%%%%%%%%%%%%%%%%%%%%%%%%%%%%%%%%%%%%%%%%%%%%%%%%%%%%%%

\section{Symplectic isomorphism to a simplified formalism}
\subsection{Wave operators:}
Let us go back to the deformed wave operator $P_\star^\top:C^\infty(\MM)[[\lambda]]\to\Omega^N[[\lambda]]$
 (\ref{eqn:defwaveoptop}).
In Chapter \ref{chap:qftdef} we have used the $\bbC[[\lambda]]$-module isomorphism
\begin{flalign}
\star_\star: C^\infty(\MM)[[\lambda]]\to \Omega^N[[\lambda]]\,,~h\mapsto h\star \vols
\end{flalign}
in order to define the scalar valued wave operator 
\begin{flalign}
P_\star := \star_\star^{-1}\circ P^\top_\star : C^\infty(\MM)[[\lambda]]
\to C^\infty(\MM)[[\lambda]]~. 
\end{flalign}
The resulting operator is formally selfadjoint with respect to the deformed scalar product
\begin{flalign}
\label{eqn:defsp}
 \spp{\varphi}{\psi}_\star = \int\limits_\MM \varphi^\ast\star \psi\star \vols~.
\end{flalign}

However, instead of using $\star_\star$ we can also employ the undeformed Hodge operator $\star_g$ to
define a $\bbC[[\lambda]]$-module isomorphism
\begin{flalign}
 \star_g: C^\infty(\MM)[[\lambda]]\to \Omega^N[[\lambda]]\,,~h\mapsto h\,\vol = 
\sum\limits_{n=0}^\infty\lambda^n\,h_{(n)}\,\vol~.
\end{flalign}
The resulting scalar valued wave operator is
\begin{flalign}
 \widetilde{P}_\star := \star_g^{-1}\circ P_\star^\top: C^\infty(\MM)[[\lambda]]
\to C^\infty(\MM)[[\lambda]]~.
\end{flalign}
Since $\star_\star$ and $\star_g$ are isomorphisms, the following equations of motion are equivalent
\begin{flalign}
 P_\star(\Phi)=0\quad \Leftrightarrow\quad \widetilde{P}_\star(\Phi)=0\quad \Leftrightarrow\quad P_\star^\top(\Phi)=0~,
\end{flalign}
and therewith also the solution spaces 
$\Sol_{P_\star} = \Sol_{\widetilde{P}_\star} = \Sol_{P_\star^\top}$.

We define the $\bbC[[\lambda]]$-module automorphism
\begin{flalign}
 \iota:= \star_g^{-1}\circ \star_\star :C^\infty(\MM)[[\lambda]]\to C^\infty(\MM)[[\lambda]]~,
\end{flalign}
which relates $P_\star$ and $\widetilde{P}_\star$ as follows
\begin{flalign}
 \widetilde{P}_\star = \iota\circ P_\star~.
\end{flalign}

The deformed scalar product (\ref{eqn:defsp}) is related to the undeformed one
\begin{flalign}
\label{eqn:undefsp}
 \spp{\varphi}{\psi} = \int\limits_\MM \varphi^\ast\,\psi\,\vol~,
\end{flalign}
for all $\varphi,\psi\in C^\infty(\MM)[[\lambda]]$ with $\supp(\varphi)\cap\supp(\psi)$ compact by
\begin{subequations}
\begin{flalign}
 \spp{\varphi}{\iota(\psi)} &= \int\limits_\MM \varphi^\ast\,\iota(\psi)\,\vol = 
 \int\limits_\MM \varphi^\ast\,\bigl(\psi\star\vols\bigr) \stackrel{\text{GC}}{=} 
\int\limits_\MM \varphi^\ast\star\psi\star\vols = \spp{\varphi}{\psi}_\star~,\\
 \spp{\iota(\varphi)}{\psi} &= \int\limits_\MM \bigl(\iota(\varphi)\,\vol\bigr)^\ast\,\psi=
\int\limits_\MM \bigl(\varphi\star \vols\bigr)^\ast\,\psi \stackrel{\text{GC}}{=}
\int\limits_\MM \varphi^\ast\star\psi\star \vols = \spp{\varphi}{\psi}_\star~,
\end{flalign}
\end{subequations}
where we have used graded cyclicity (GC) (\ref{eqn:gradedcyc}) and the reality of
$\vols$.

Let $P_\star:C^\infty(\MM)[[\lambda]] \to C^\infty(\MM)[[\lambda]]$ be a wave operator
which is 1.) a formal deformation of a normally hyperbolic operator and 2.) formally selfadjoint
with respect to the deformed scalar product (\ref{eqn:defsp}).
Then we can always define the wave operator $\widetilde{P}_\star:= \iota \circ P_\star$,
which is 1.) a formal deformation of the same normally hyperbolic operator and 2.)
formally selfadjoint with respect to the undeformed scalar product (\ref{eqn:undefsp}).
The first property follows from $\iota = \id +\mathcal{O}(\lambda)$ and the second one from
a small calculation:
\begin{flalign}
\nn \spp{\widetilde{P}_\star(\varphi)}{\psi} &= \spp{\iota^{-1}\circ \widetilde{P}_\star(\varphi)}{ \psi}_\star 
= \spp{P_\star(\varphi)}{\psi}_\star \\
&= \spp{\varphi}{P_\star(\psi)}_\star = \spp{\varphi}{\iota\circ P_\star(\psi)}
=\spp{\varphi}{\widetilde{P}_\star(\psi)}~,
\end{flalign}
for all $\varphi,\psi\in C^\infty(\MM)[[\lambda]]$ with $\supp(\varphi)\cap\supp(\psi)$ compact.
Vice versa, let $\widetilde{P}_\star:C^\infty(\MM)[[\lambda]] \to C^\infty(\MM)[[\lambda]]$ be a wave operator
which is 1.) a formal deformation of a normally hyperbolic operator and 2.) formally selfadjoint
with respect to the undeformed scalar product (\ref{eqn:undefsp}).
Then we can always define the wave operator $P_\star:= \iota^{-1} \circ \widetilde{P}_\star$,
which is 1.) a formal deformation of the same normally hyperbolic operator and 2.)
formally selfadjoint with respect to the deformed scalar product (\ref{eqn:defsp}). 
Furthermore, if $(\MM,g_\star,\vols,\mathcal{F})$ is a compactly deformed Lorentzian manifold,
then demanding $P_\star$ or $\widetilde{P}_\star$ to be a compactly deformed normally hyperbolic operator
is equivalent.
Thus, studying wave operators of the type $P_\star$ or $\widetilde{P}_\star$ is equivalent,
and the relation is given by $\widetilde{P}_\star = \iota\circ P_\star$.

The advantage of studying operators of the type $\widetilde{P}_\star$, 
compared to the type $P_\star$, is that they are real, in case $\widetilde{P}_\star^\top$ is real.
To see this, note that the isomorphism $\star_g$ is real and therewith
reality of the top-form valued operator $P_\star^\top$ is equivalent to the reality of $\widetilde{P}_\star$.

\subsection{Green's operators:}
For a compactly deformed time-oriented, connected and globally hyperbolic Lorentzian manifold 
$(\MM,g_\star,\vols,\mathcal{F})$ we have shown in Theorem \ref{theo:greendef}
that there exist unique Green's operators for all compactly deformed normally hyperbolic operators.
Thus, the existence and uniqueness of Green's operators for both, $P_\star$ and $\widetilde{P}_\star$, is guaranteed.
We can also derive an explicit map relating the Green's operators $\Delta_{\star\pm}$ for $P_\star$
and the Green's operators $\widetilde{\Delta}_{\star\pm}$ for $\widetilde{P}_\star$.
For this consider the following maps
\begin{flalign}
\label{eqn:tildegreenexplicit}
\widetilde{\Delta}_{\star\pm}:= \Delta_{\star\pm}\circ \iota^{-1}: C^\infty_0(\MM)[[\lambda]]\to C^\infty(\MM)[[\lambda]]~.  
\end{flalign}
These maps are well-defined, since $\iota$ restricts to a $\bbC[[\lambda]]$-module
automorphism on compactly supported functions
 $\iota: C_0^\infty(\MM)[[\lambda]]\to C^\infty_0(\MM)[[\lambda]]$.
We explicitly check the conditions for $\widetilde{\Delta}_{\star\pm}$ to be Green's operators
for $\widetilde{P}_\star$ (\ref{eqn:green}). The first and second condition follow from small calculations
\begin{subequations}
\begin{flalign}
& \widetilde{P}_\star\circ \widetilde{\Delta}_{\star\pm} = \iota \circ P_\star\circ \Delta_{\star\pm}\circ \iota^{-1} = 
\iota\circ \iota^{-1} = \id_{C^\infty_0(\MM)[[\lambda]]}~,\\
&\widetilde{\Delta}_{\star\pm}\circ \widetilde{P}_\star\big\vert_{C_0^\infty(\MM)[[\lambda]]} = \Delta_{\star\pm}\circ\iota^{-1}
\circ\iota\circ P_\star\big\vert_{C^\infty_0(\MM)[[\lambda]]} = \id_{C^\infty_0(\MM)[[\lambda]]}~.
\end{flalign}
To prove the third condition, let $\varphi\in C^\infty_0(\MM)$ be arbitrary. We obtain
\begin{flalign}
 \nn\supp\bigl(\widetilde{\Delta}_{(n)\pm}(\varphi)\bigr) &= 
\supp\left(\sum\limits_{m=0}^n \Delta_{(m)\pm}(\iota^{-1}_{(n-m)}(\varphi))\right)\\ 
&\subseteq \bigcup_{m=0}^n J_\pm\bigl(\supp(\iota^{-1}_{(n-m)}(\varphi))\bigr)
\subseteq J_\pm(\supp(\varphi))~,
\end{flalign}
\end{subequations}
since $\supp(\iota^{-1}_{(n)}(\varphi))\subseteq \supp(\varphi)$ for all $n$. Thus,
(\ref{eqn:tildegreenexplicit}) are the unique Green's operators for $\widetilde{P}_\star$.

We define the retarded-advanced Green's operator for $\widetilde{P}_\star$ by
\begin{flalign}
 \widetilde{\Delta}_\star := \widetilde{\Delta}_{\star+} - \widetilde{\Delta}_{\star-} = \Delta_\star\circ \iota^{-1} : C^\infty_0(\MM)[[\lambda]]
\to C^\infty_\sc(\MM)[[\lambda]]~.
\end{flalign}
Then the following sequence of $\bbC[[\lambda]]$-linear maps is a complex, which is exact everywhere
\begin{flalign}
 0\longrightarrow C_0^\infty(\mathcal{M})[[\lambda]] \stackrel{\widetilde{P}_\star}{\longrightarrow} 
C_0^\infty(\mathcal{M})[[\lambda]] \stackrel{\widetilde{\Delta}_\star}{\longrightarrow} 
C_\mathrm{sc}^\infty(\mathcal{M})[[\lambda]] \stackrel{\widetilde{P}_\star}{\longrightarrow}C_\mathrm{sc}^\infty(\mathcal{M})[[\lambda]]~.
\end{flalign}
Let us make some comments on the proof of this statement. The sequence trivially is a complex,
since $\widetilde{\Delta}_{\star\pm}$ are Green's operators for $\widetilde{P}_\star$.
The exactness follows from the exactness of the corresponding complex for $P_\star$ and $\Delta_\star$,
see Theorem \ref{theo:complex}. For completeness, we perform the proof.
To show the first exactness, let $\varphi\in C^\infty_0(\MM)[[\lambda]]$ such that $\widetilde{P}_\star(\varphi) = 
\iota(P_\star(\varphi))=0$. Then due to Theorem \ref{theo:complex} and the fact that $\iota$ is an isomorphism
we have $\varphi=0$. To show the second exactness, let $\varphi\in C^\infty_0(\MM)[[\lambda]]$
such that $\widetilde{\Delta}_\star(\varphi) = \Delta_\star(\iota^{-1}(\varphi))=0$. Then
$\iota^{-1}(\varphi)\in P_\star[C_0^\infty(\MM)[[\lambda]]]$ and $\varphi\in \widetilde{P}_\star[C^\infty_0(\MM)[[\lambda]]]$.
To show the third exactness, let $\varphi\in C^\infty_\mathrm{sc}(\MM)[[\lambda]]$ such that
$\widetilde{P}_\star(\varphi)=\iota(P_\star(\varphi))=0$. Then $P_\star(\varphi)=0$ and $\varphi\in 
\Delta_\star[C^\infty_0(\MM)[[\lambda]]] = \widetilde{\Delta}_\star[C^\infty_0(\MM)[[\lambda]]]$.

The result is that we have the isomorphism
\begin{flalign}
\label{eqn:solisodeftilde}
\widetilde{\mathcal{I}}_\star : C^\infty_0(\MM)[[\lambda]]/\widetilde{P}_\star[C^\infty_0(\MM)[[\lambda]]]\to 
\Sol_{\widetilde{P}_\star}
\,,~[\varphi]\mapsto \widetilde{\Delta}_\star(\varphi)~.
\end{flalign}

\subsection{Symplectic $\bbR[[\lambda]]$-modules:}
In Proposition \ref{propo:HRprop} we have shown that the $\bbR[[\lambda]]$-submodule
\begin{flalign}
 H_\bbR:= \bigl\lbrace \varphi\in C^\infty_0(\MM)[[\lambda]]: \bigl(\Delta_{\star\pm}(\varphi)\bigr)^\ast  = \Delta_{\star\pm}(\varphi)  \bigr\rbrace
\end{flalign}
generates by the action of $\Delta_\star$ all real solutions of the wave equation given by $P_\star$.
As a consequence, the $\bbR[[\lambda]]$-module $V_\star = H_\bbR/P_\star[C^\infty_0(\MM,\bbR)[[\lambda]]]$
is isomorphic to the space of real solutions $\Sol_{P_\star}^\bbR$.
We have equipped $V_\star$ with the symplectic structure
\begin{flalign}
 \omega_\star: V_\star\times V_\star\to \bbR[[\lambda]]\,,~([\varphi],[\psi])\mapsto \spp{\varphi}{\Delta_\star(\psi)}_\star~.
\end{flalign}

For the operator $\widetilde{P}_\star$ the space of real solutions is generated by acting 
with $\widetilde{\Delta}_\star$ on the $\bbR[[\lambda]]$-module $\widetilde{V}_\star := 
C^\infty_0(\MM,\bbR)[[\lambda]]/\widetilde{P}_\star[C^\infty_0(\MM,\bbR)[[\lambda]]]$, more precisely
$\widetilde{V}_\star$ is isomorphic to $\Sol_{\widetilde{P}_\star}^\bbR = \Sol_{P_\star}^\bbR$.
The natural symplectic structure on $\widetilde{V}_\star$ reads
\begin{flalign}
 \widetilde{\omega}_\star:\widetilde{V}_\star\times\widetilde{V}_\star\to\bbR[[\lambda]]\,,~([\varphi],[\psi])\mapsto
\spp{\varphi}{\widetilde{\Delta}_\star(\psi)}~,
\end{flalign}
where now we use the undeformed scalar product (\ref{eqn:undefsp}).

There is a symplectic isomorphism between the symplectic $\bbR[[\lambda]]$-modules
$(V_\star,\omega_\star)$ and $(\widetilde{V}_\star,\widetilde{\omega}_\star)$, which we are going to construct now.
Firstly, note that the map $\iota$ provides an $\bbR[[\lambda]]$-module isomorphism
 $\iota: H_\bbR\to C^\infty_0(\MM,\bbR)[[\lambda]]$. To see this let 
$\varphi\in C^\infty_0(\MM,\bbR)[[\lambda]]$ be arbitrary, then $\iota^{-1}(\varphi)\in H_\bbR$ since
\begin{flalign}
 \bigl(\Delta_{\star\pm}(\iota^{-1}(\varphi))\bigr)^\ast = 
 \bigl(\widetilde{\Delta}_{\star\pm}(\varphi)\bigr)^\ast = \widetilde{\Delta}_{\star\pm}(\varphi) = 
\Delta_{\star\pm}(\iota^{-1}(\varphi))~,
\end{flalign}
where we have used that $\widetilde{\Delta}_{\star\pm}$ are real due to the reality of $\widetilde{P}_\star$.
Let now $\varphi\in H_\bbR$ be arbitrary, then there is a $\psi\in C^\infty_0(\MM)[[\lambda]]$, such that
$\varphi = \iota^{-1}(\psi)$. We find that $\psi$ is real
\begin{flalign}
 0=\bigl(\Delta_{\star\pm}(\varphi)\bigr)^\ast - \Delta_{\star\pm}(\varphi)  =
 \widetilde{\Delta}_{\star\pm}(\psi^\ast-\psi)\quad\Leftrightarrow\quad \psi^\ast=\psi~.
\end{flalign}
The isomorphism $\iota: H_\bbR\to C^\infty_0(\MM,\bbR)[[\lambda]]$
gives rise to an isomorphism between the factor spaces $V_\star$ and $\widetilde{V}_\star$,
since
\begin{flalign}
 \iota(\varphi +P_\star(\psi)) = \iota(\varphi) + \widetilde{P}_\star(\psi)~,
\end{flalign}
for all $\varphi\in H_\bbR$ and $\psi\in C^\infty_0(\MM,\bbR)[[\lambda]]$.
Even more, it is a symplectic isomorphism between $(V_\star,\omega_\star)$ and 
$(\widetilde{V}_\star,\widetilde{\omega}_\star)$, since
\begin{flalign}
 \widetilde{\omega}_\star\bigl([\iota(\varphi)],[\iota(\psi)]\bigr) = 
 \spp{\iota(\varphi)}{\widetilde{\Delta}_\star(\iota(\psi))} = \spp{\varphi}{\Delta_\star(\psi)}_\star
= \omega_\star\bigl([\varphi],[\psi]\bigr)~,
\end{flalign}
for all $[\varphi],[\psi]\in V_\star$.

\subsection{$\ast$-algebra of field polynomials:}
The symplectic isomorphism $\iota: V_\star \to \widetilde{V}_\star$
leads to an isomorphism between the corresponding $\ast$-algebras of field polynomials $\AA_{(V_\star,\omega_\star)}$
and $\AA_{(\widetilde{V}_\star,\widetilde{\omega}_\star)}$, defined in Definition \ref{def:fieldpoly}.
This is a consequence of the following 
\begin{propo}
\label{propo:symplectoalgebra}
Let $(V_1,\omega_1)$ and $(V_2,\omega_2)$ be two symplectic $\bbR[[\lambda]]$-modules 
and let $S:V_1\to V_2$ be a symplectic linear map. Then there exists a $\ast$-algebra homomorphism
$\mathfrak{S}:\mathcal{A}_{(V_1,\omega_1)}\to \mathcal{A}_{(V_2,\omega_2)}$, such that
\begin{subequations}
\begin{flalign}
 \mathfrak{S}(1)&=1~,\\
 \mathfrak{S}\bigl(\Phi_1(v)\bigr)&=\Phi_2(S(v))~,
\end{flalign}
\end{subequations}
for all $v\in V_1$. If $S$ is a symplectic isomorphism then $\mathfrak{S}$ is a $\ast$-algebra isomorphism.
\end{propo}
\begin{proof}
 We construct a $\ast$-algebra homomorphism $\sigma:\AA_{\text{free}_1} \to \AA_{\text{free}_2}$
between the free algebras by defining on the generators
\begin{subequations}
\begin{flalign}
 \sigma(1) &= 1~,\\
 \sigma\bigl(\Phi_1(v)\bigr) &=  \Phi_2(S(v))~,\\
 \sigma\bigl(\Phi_1(v)^\ast\bigr) &= \Phi_2(S(v))^\ast~,
\end{flalign}
\end{subequations}
for all $v\in V_1$, and extending $\sigma$ to $\AA_{\text{free}_1}$ as a $\ast$-algebra homomorphism.
The $\ast$-algebra homomorphism $\sigma$ induces a $\ast$-algebra homomorphism
$\mathfrak{S}:\AA_{(V_1,\omega_1)}\to\AA_{(V_2,\omega_2)}$, since it includes the $\ast$-ideals (\ref{eqn:fieldalgebra})
$\sigma[\mathcal{I}_1]\subseteq \mathcal{I}_2$, i.e.~
\begin{subequations}
\begin{flalign}
 &\sigma\bigl(\Phi_1(\beta\, v+\gamma\, u) -\beta\,\Phi_1(v) -\gamma\,\Phi_1(u)\bigr) = 
\Phi_2(\beta\,S(v)+\gamma\,S(u)) - \beta\,\Phi_2(S(v)) - \gamma\,\Phi_2(S(u))~,\\
 &\sigma\bigl(\Phi_1(v)^\ast - \Phi_1(v)\bigr) = \Phi_2(S(v))^\ast - \Phi_2(S(v))~,\\
 &\sigma\bigl([\Phi_1(v),\Phi_1(u)]-i\,\omega_1(v,u)\,1\bigr) =[\Phi_2(S(v)),\Phi_2(S(u))]-i\,\omega_2\bigl(S(v),S(u)\bigr)\,1 ~,
\end{flalign}
\end{subequations}
are elements of $\mathcal{I}_2$ for all $v,u\in V_1$ and $\beta,\gamma\in \bbR[[\lambda]]$.

If $S$ is a symplectic isomorphism, then the map $\sigma$ is a $\ast$-algebra isomorphism, which
induces a $\ast$-algebra isomorphism $\mathfrak{S}:\mathcal{A}_{(V_1,\omega_1)}\to \mathcal{A}_{(V_2,\omega_2)}$,
since $\sigma[\mathcal{I}_1]=\mathcal{I}_2$.

\end{proof}
This proposition has the following consequence: We can equivalently describe a deformed quantum
field theory in terms of the $\ast$-algebra of field polynomials $\AA_{(V_\star,\omega_\star)}$ or 
$\AA_{(\widetilde{V}_\star,\widetilde{\omega}_\star)}$. However,
the physical interpretation has to be adapted properly.
Due to the modified reality property of the $\bbR[[\lambda]]$-module $H_\bbR$, see Proposition \ref{propo:HRprop},
we should interpret the generators $\Phi_\star([\varphi])$ of 
$\AA_{(V_\star,\omega_\star)}$ as smeared field operators
``$\Phi_\star([\varphi]) = \int_\MM \Phi_\star\star \varphi\star\vols$''. The property
$(\varphi\star\vols)^\ast = \varphi\star\vols$, for all $\varphi\in H_\bbR$, implies that
formally $\Phi_\star([\varphi])^\ast = \Phi_\star([\varphi])$. On the other hand,
since $\widetilde{V}_\star$ is based on the space of real compactly supported functions $C_0^\infty(\MM,\bbR)[[\lambda]]$,
the corresponding generators  $\widetilde{\Phi}_\star([\varphi])$ of 
$\AA_{(\widetilde{V}_\star,\widetilde{\omega}_\star)}$ should be interpreted as smeared field operators
``$\widetilde{\Phi}_\star([\varphi]) = \int_\MM\widetilde{\Phi}_\star\,\varphi\,\vol$'' without 
$\star$-products.

%%%%%%%%%%%%%%%%%%%%%%%%%%%%%%%%%%%%%%%%%%%%%%%%%%%%%%%

\section{Symplectic isomorphisms to commutative field theory}
In this section we provide a theorem showing that the deformed symplectic $\bbR[[\lambda]]$-module 
$(\widetilde{V}_\star,\widetilde{\omega}_\star)$ is isomorphic,
via a symplectic isomorphism, to the formal power series extension of the undeformed symplectic vector space\footnote{
Note that $C^\infty_0(\MM,\bbR)[[\lambda]]/P[C^\infty_0(\MM,\bbR)[[\lambda]]]\simeq V[[\lambda]] $ via the map
$C^\infty_0(\MM,\bbR)[[\lambda]]/P[C^\infty_0(\MM,\bbR)[[\lambda]]]\to V[[\lambda]]\,,~[\varphi]\mapsto \sum_{n=0}^\infty\lambda^n\,
[\varphi_{(n)}]~.$ This natural identification will be used frequently throughout this section, without
explicitly denoting the corresponding map by some symbol.
}
$(V[[\lambda]],\omega)$,
where $V:=C^\infty_0(\mathcal{M},\bbR)/P[C^\infty_0(\mathcal{M},\bbR)]$ and
\begin{flalign}
 \omega:V[[\lambda]]\times V[[\lambda]] \to\bbR[[\lambda]]\,,~([\varphi],[\psi])\mapsto \omega([\varphi],[\psi]) 
= \bigl(\varphi,\Delta(\psi)\bigr)~.
\end{flalign} 
Here $P=\widetilde{P}_\star\vert_{\lambda=0}$ 
is the undeformed wave operator and  $\Delta=\widetilde{\Delta}_\star\vert_{\lambda=0}$ is the corresponding undeformed 
retarded-advanced Green's operator. For the undeformed field theory we have the isomorphism
\begin{flalign}
 \label{eqn:solisoclasslam}
\mathcal{I}:C^\infty_0(\MM)[[\lambda]]/P[C^\infty_0(\MM)[[\lambda]]] \to \Sol_P\,,~[\varphi]\mapsto \Delta(\varphi)~,
\end{flalign}
which leads to an isomorphism $V[[\lambda]]\simeq \Sol_P^\bbR$.

\begin{lem}
\label{lem:solutionspaceiso}
Let $(\MM,g_\star,\vols,\mathcal{F})$ be a compactly deformed time-oriented, connected and globally hyperbolic 
Lorentzian manifold and $\widetilde{P}_\star:C^\infty(\MM)[[\lambda]]\to C^\infty(\MM)[[\lambda]]$
be a real compactly deformed normally hyperbolic operator. 
Let $\widetilde{\Delta}_{\star\pm}$ be the Green's operators for $\widetilde{P}_\star$. 
 Then the maps $T_\pm:\Sol_{\widetilde{P}_\star}^\bbR\to \Sol_{P}^\bbR$ defined by 
\begin{flalign}
\label{eqn:soliso}
T_\pm:=\id_{C^\infty_\sc(\mathcal{M})[[\lambda]]}+t_{\pm}=
\id_{C^\infty_\sc(\mathcal{M})[[\lambda]]}+\Delta_{\pm}\circ\sum\limits_{n=1}^\infty\lambda^n\,\widetilde{P}_{(n)}
\end{flalign}
are $\bbR[[\lambda]]$-module isomorphisms.
\end{lem}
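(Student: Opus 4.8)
The plan is to verify that each $T_\pm$ is well-defined as a map $\Sol_{\widetilde P_\star}^\bbR\to\Sol_P^\bbR$, that it is $\bbR[[\lambda]]$-linear, and that it is bijective, exhibiting an explicit inverse. First I would observe that $t_\pm=\Delta_\pm\circ\sum_{n\geq 1}\lambda^n\widetilde P_{(n)}$ really is a well-defined $\bbR[[\lambda]]$-linear map on $C^\infty_\sc(\MM)[[\lambda]]$: since the deformation is compactly deformed, each $\widetilde P_{(n)}$ for $n>0$ maps $C^\infty(\MM)\to C^\infty_0(\MM)$, so applying $\Delta_\pm$ (defined on compactly supported functions, by Theorem \ref{theo:greenclas}) makes sense, and the result lands in $C^\infty_\sc(\MM)[[\lambda]]$ by the usual causal support estimates (as in Lemma \ref{lem:scsupport}). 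Moreover $t_\pm=\mathcal O(\lambda)$, so $T_\pm=\id+t_\pm$ is formally invertible as an $\bbR[[\lambda]]$-linear endomorphism of $C^\infty_\sc(\MM)[[\lambda]]$ via the geometric series $T_\pm^{-1}=\sum_{k\geq 0}(-t_\pm)^k$, which converges $\lambda$-adically. Reality of $\widetilde P_\star$ (hence of each $\widetilde P_{(n)}$) together with reality of the undeformed $\Delta_\pm$ shows $t_\pm$ preserves real functions, so $T_\pm^{\pm 1}$ restrict to $\bbR[[\lambda]]$-module automorphisms of $C^\infty_\sc(\MM,\bbR)[[\lambda]]$.

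The key step is to check that $T_\pm$ actually sends solutions of $\widetilde P_\star$ to solutions of $P$. For $\varphi\in\Sol_{\widetilde P_\star}^\bbR$ I would compute
\begin{flalign*}
 P(T_\pm\varphi) &= P\varphi + P\,\Delta_\pm\Bigl(\sum_{n\geq 1}\lambda^n\widetilde P_{(n)}\varphi\Bigr)\\
 &= P\varphi + \sum_{n\geq 1}\lambda^n\widetilde P_{(n)}\varphi = \widetilde P_\star\varphi = 0~,
\end{flalign*}
where the second equality uses $P\circ\Delta_\pm=\id$ on compactly supported functions and the fact that $\sum_{n\geq 1}\lambda^n\widetilde P_{(n)}\varphi$ is compactly supported order by order. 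The one subtlety here is that $\Delta_\pm$ is only a left/right inverse of $P$ on $C^\infty_0(\MM)$, so I must confirm $\bigl(\sum_{n\geq 1}\lambda^n\widetilde P_{(n)}\varphi\bigr)_{(m)}$ has compact support for each $m$; this follows because $\varphi_{(k)}\in C^\infty_\sc(\MM)$ has support in $J(K_{(k)})$ for some compact $K_{(k)}$, the operator $\widetilde P_{(n)}$ ($n>0$) does not enlarge supports and outputs compactly supported functions, so the finite sum over $n+k=m$ is compactly supported. Conversely, the same computation with $P$ and $\widetilde P_\star$ swapped shows $T_\pm^{-1}$ sends $\Sol_P^\bbR$ into $\Sol_{\widetilde P_\star}^\bbR$ — indeed if $\psi\in\Sol_P^\bbR$ then $\widetilde P_\star(T_\pm^{-1}\psi)=\widetilde P_\star\psi - \sum_{n\geq1}\lambda^n\widetilde P_{(n)}(\ldots)$, and one checks by the identical telescoping argument that this vanishes, using $\widetilde P_\star=P+\sum_{n\geq1}\lambda^n\widetilde P_{(n)}$ and $P\Delta_\pm=\id$.

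Finally, since $T_\pm$ and $T_\pm^{-1}$ are mutually inverse $\bbR[[\lambda]]$-linear maps (their composition is $\id$ on $C^\infty_\sc(\MM,\bbR)[[\lambda]]$ by construction of the geometric series) and each restricts to a map between the solution spaces as just shown, they furnish $\bbR[[\lambda]]$-module isomorphisms $\Sol_{\widetilde P_\star}^\bbR\cong\Sol_P^\bbR$. The main obstacle I anticipate is bookkeeping the support conditions carefully enough to guarantee every composition $\Delta_\pm\circ\widetilde P_{(n)}$ appearing in $t_\pm$ and in its powers is applied only to compactly supported data at each fixed order in $\lambda$; once the compact-support propagation is pinned down (essentially a repeated application of Lemma \ref{lem:globhybsupport} together with the compactly-deformed hypothesis), the algebraic identities are the straightforward telescoping computations above. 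A secondary point worth stating explicitly is that $T_\pm\equiv\id\pmod\lambda$, so at zeroth order the isomorphism reduces to the identity on $\Sol_P^\bbR$, as it must.
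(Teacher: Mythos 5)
Your proof is correct and follows essentially the same route as the paper's: the operator identity $P\circ T_\pm=\widetilde{P}_\star$ (resting on $P\circ\Delta_\pm=\id$ on compactly supported functions and the support condition on $\widetilde{P}_{(n)}$, $n>0$) for the forward direction, reality of $\Delta_\pm\circ\widetilde{P}_{(n)}$, and the $\lambda$-adically convergent geometric series for the inverse. The only cosmetic difference is that the paper simplifies $T_\pm^{-1}$ to $\id-\widetilde{\Delta}_{\star\pm}\circ\sum_{n\geq1}\lambda^n\,\widetilde{P}_{(n)}$ via the explicit formula (\ref{eqn:explicitgreen}) and then invokes $\widetilde{P}_\star\circ\widetilde{\Delta}_{\star\pm}=\id$, whereas you telescope the geometric series directly (equivalently, compose $P\circ T_\pm=\widetilde{P}_\star$ with $T_\pm^{-1}$); both yield $\widetilde{P}_\star\circ T_\pm^{-1}=P$.
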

\begin{proof}
The compositions $\Delta_\pm\circ \widetilde{P}_{(n)}$ are well-defined due to the support condition on
$\widetilde{P}_{(n)}$, $n>0$. Additionally, the maps $\Delta_\pm\circ \widetilde{P}_{(n)}$ are real for all $n>0$.
 The composition of $P$ and $T_\pm$ is given by
\begin{flalign}
 P\circ T_\pm = P + P\circ \Delta_\pm\circ\sum\limits_{n=1}^\infty \lambda^n\, \widetilde{P}_{(n)} =
P + \sum\limits_{n=1}^\infty \lambda^n\,\widetilde{P}_{(n)} =\sum\limits_{n=0}^\infty \lambda^n\,\widetilde{P}_{(n)}=
\widetilde{P}_\star~,
\end{flalign}
where we have used that $P\circ\Delta_\pm = \text{id}_{C^\infty_0(\mathcal{M})[[\lambda]]}$.
Thus, for all $\Phi\in\Sol_{\widetilde{P}_\star}^\bbR$ we have $T_\pm(\Phi)\in\Sol_P^\bbR$.\newline
The inverse of $T_\pm$ is constructed by the geometric series
\begin{flalign}
 T_\pm^{-1}=\left(\id_{C^\infty_\sc(\mathcal{M})[[\lambda]]} +t_\pm\right)^{-1}
=\sum\limits_{m=0}^\infty \left(-t_\pm\right)^m \stackrel{\text{(\ref{eqn:explicitgreen})}}{=} 
\id_{C^\infty_\sc(\mathcal{M})[[\lambda]]} -\widetilde{\Delta}_{\star\pm}\circ\sum\limits_{n=1}^\infty\lambda^n\,\widetilde{P}_{(n)}
\end{flalign}
and satisfies
\begin{flalign}
 \widetilde{P}_\star\circ T_\pm^{-1} &= \widetilde{P}_\star - \widetilde{P}_\star\circ
\widetilde{\Delta}_{\star\pm}\circ \sum\limits_{n=1}^\infty\lambda^n\,\widetilde{P}_{(n)}
= \widetilde{P}_\star - \sum\limits_{n=1}^\infty\lambda^n\,\widetilde{P}_{(n)} = P~,
\end{flalign}
where we have used that $\widetilde{P}_\star\circ\widetilde{\Delta}_{\star\pm}=\text{id}_{C^\infty_0(\mathcal{M})[[\lambda]]}$.
Thus, for all $\Phi\in\Sol_{P}^\bbR$ we have $T^{-1}_\pm(\Phi)\in\Sol_{\widetilde{P}_\star}^\bbR$.

\end{proof}
It turns out that in general $T_+$ and $T_-$ differ. To see this let $\Phi\in\Sol_{\widetilde{P}_\star}^\bbR$ be arbitrary.
Then there is a $\varphi\in C^\infty_0(\mathcal{M},\bbR)[[\lambda]]$, 
such that $\Phi=\widetilde{\Delta}_\star(\varphi)$.
We obtain 
\begin{flalign}
T_+(\Phi) - T_-(\Phi) = \Delta\left(\sum\limits_{n=1}^\infty\lambda^n\,\widetilde{P}_{(n)}\left(\widetilde{\Delta}_\star(\varphi)
\right)\right)
=-\Delta\left(P\left(\widetilde{\Delta}_\star(\varphi)\right)\right)~.
\end{flalign}
The difference between $T_+$ and $T_-$ is thus related to the operator $\Delta\circ P\circ \widetilde{\Delta}_\star$. 
Notice that this operator is {\it not} zero in general, since the relation $\Delta\circ P =0$ just holds when acting on functions
of compact support, while $\widetilde{\Delta}_\star$ maps to functions of noncompact support. 
To be more explicit we expand the operator $\Delta\circ P\circ\widetilde{\Delta}_\star$ to first order in $\lambda$
 by using (\ref{eqn:explicitgreen}) and find
\begin{flalign}
\nonumber \Delta\circ P\circ \widetilde{\Delta}_\star &= \Delta\circ P\circ \bigl(\Delta - \lambda(\Delta_+\circ \widetilde{P}_{(1)}
\circ\Delta_+ - \Delta_-\circ \widetilde{P}_{(1)}\circ \Delta_-) \bigr)+\mathcal{O}(\lambda^2) \\
&= -\lambda \,\Delta\circ \widetilde{P}_{(1)}\circ \Delta +\mathcal{O}(\lambda^2)~.
\end{flalign} 
Since $\widetilde{P}_{(1)}$ comes from the choice of deformation, while $\Delta$ describes the commutative dynamics, 
these operators are independent and $\Delta\circ P\circ \widetilde{\Delta}_\star$ in general does not vanish.
\begin{rem}
The maps $T_\pm$ can be interpreted as retarded/advanced isomorphisms, since they 
 depend on the retarded/advanced Green's operators.
Due to the support property of $\Delta_\pm$ and the support condition
on $\widetilde{P}_{(n)}$, $n>0$, we obtain order by order in $\lambda$ that $T_\pm(\Phi)$ is equal 
to $\Phi$ for sufficiently small/large times, i.e.~for $t\to \mp\infty$.
\end{rem}

Employing the isomorphisms $T_\pm$, the isomorphism $\widetilde{\mathcal{I}}_\star$ (\ref{eqn:solisodeftilde}) 
and its commutative counterpart $\mathcal{I}$ (\ref{eqn:solisoclasslam}), we obtain the $\bbR[[\lambda]]$-module isomorphisms
$\mathbf{T}_\pm :=\mathcal{I}^{-1}\circ T_\pm\circ\widetilde{\mathcal{I}}_\star: \widetilde{V}_\star\to V[[\lambda]]$. We can
map the deformed symplectic $\bbR[[\lambda]]$-module $(\widetilde{V}_\star,\widetilde{\omega}_\star)$, 
via a symplectic isomorphism, 
to the symplectic $\bbR[[\lambda]]$-module $(V[[\lambda]],\widehat{\omega}_\star)$, where by definition
\begin{flalign}
\widehat{\omega}_\star\bigl([\varphi],[\psi]\bigr) := 
\widetilde{\omega}_\star\bigl(\mathbf{T}^{-1}_\pm\bigl([\varphi]\bigr),\mathbf{T}^{-1}_\pm\bigl([\psi]\bigr) \bigr)~,
\end{flalign}
for all $[\varphi],[\psi]\in V[[\lambda]]$. 
This expression can be simplified and we obtain 
\begin{flalign}
\nonumber \widehat{\omega}_\star\bigl([\varphi],[\psi]\bigr) 
&= \Bigl(\mathbf{T}^{-1}_\pm([\varphi]),\widetilde{\Delta}_\star\bigl(\mathbf{T}^{-1}_\pm([\psi])\bigr)\Bigr)
 = \Bigl(\mathbf{T}^{-1}_\pm([\varphi]),T^{-1}_\pm\bigl(\Delta(\psi)\bigr)\Bigr)\\
&=\Bigl(T_\pm^{-1\dagger}\bigl(\mathbf{T}^{-1}_\pm([\varphi])\bigr),\Delta(\psi)\Bigr)
=-\Bigl(\Delta\bigl(T_\pm^{-1\dagger}\bigl(\mathbf{T}^{-1}_\pm([\varphi])\bigr)\bigr),\psi\Bigr)~,
\end{flalign}
where we have used the adjoint map
\begin{flalign}
T_\pm^{-1\dagger} = \text{id}_{C^\infty_0(\mathcal{M})[[\lambda]]} - 
\sum\limits_{n=1}^\infty\lambda^n\,\widetilde{P}_{(n)}\circ \widetilde{\Delta}_{\star\mp}
\end{flalign}
and that $\Delta$ is antihermitian. Defining the map $\widehat{\Delta}_\star:C^\infty_0(\mathcal{M})[[\lambda]]\to 
\Sol_{P}$ by
\begin{flalign}
\widehat{\Delta}_\star := \Delta\circ T_\pm^{-1\dagger} \circ \mathcal{I}_\star^{-1}\circ T^{-1}_\pm\circ \Delta~,
\end{flalign}
we have for all $[\varphi],[\psi]\in V[[\lambda]]$
\begin{flalign}
\widehat{\omega}_\star\bigl([\varphi],[\psi]\bigr) = \bigl(\varphi,\widehat{\Delta}_\star(\psi)\bigr)~.
\end{flalign}
Analogously to (\ref{eqn:solisoclasslam}) and (\ref{eqn:solisodeftilde}) we define the isomorphism 
\begin{flalign}
 \widehat{\mathcal{I}}_\star: C^\infty_0(\MM)[[\lambda]]/P[C^\infty_0(\MM)[[\lambda]]] \to \Sol_{P}~,\quad 
[\varphi]\mapsto\widehat{\Delta}_\star(\varphi)~,
\end{flalign}
which leads to an isomorphism $V[[\lambda]]\simeq \Sol_P^\bbR$.
We obtain the following
\begin{theo}
 \label{theo:symplecto}
The map $\mathbf{S}: V[[\lambda]]\to V[[\lambda]]$ defined by $\mathbf{S} = \sum\limits_{n=0}^\infty\lambda^n S_{(n)}$,
where
\begin{subequations}
 \begin{flalign}
S_{(0)} &= \id_V~,\\
S_{(1)} &= \frac{1}{2} \mathcal{I}^{-1}\circ\widehat{\mathcal{I}}_{(1)}~,\\
 S_{(n)} &= \frac{1}{2}\Bigl(\mathcal{I}^{-1}\circ \widehat{\mathcal{I}}_{(n)} - \sum\limits_{m=1}^{n-1}S_{(m)}\circ S_{(n-m)}\Bigr) ~,~\forall n\geq 2~,
 \end{flalign}
\end{subequations}
provides a symplectic isomorphism between $(V[[\lambda]],\widehat{\omega}_\star)$ and $(V[[\lambda]],\omega)$,
i.e.~for all $v,u\in V[[\lambda]]$ we have $\widehat{\omega}_\star(v,u)=\omega(\mathbf{S}(v),\mathbf{S}(u))$.
\end{theo}
\begin{proof}
 The map $\mathbf{S}$ is invertible, since it is a formal deformation of the identity map.
Let $v,u\in V[[\lambda]]$ be arbitrary. We obtain
\begin{flalign}
 \nonumber \omega(\mathbf{S}(v),\mathbf{S}(u)) &= \sum\limits_{n=0}^\infty\lambda^n\sum\limits_{m+k+i+j=n}
\omega\bigl(S_{(m)}(v_{(k)}),S_{(i)}(u_{(j)})\bigr)\\
\label{eqn:zwischenschritt}&= \sum\limits_{n=0}^\infty\lambda^n\sum\limits_{k+j+l=n}~\sum\limits_{m+i=l}
\omega\bigl(S_{(m)}(v_{(k)}),S_{(i)}(u_{(j)})\bigr)~.
\end{flalign}
Note that for all $v,u\in V$ we have $\omega(v,S_{(0)}(u))=\omega(S_{(0)}(v),u)$ (trivially) and 
\begin{flalign}
 \omega(v,S_{(1)}(u))=\frac{1}{2}\widehat{\omega}_{(1)}(v,u)=-\frac{1}{2}\widehat{\omega}_{(1)}(u,v)
= \omega(S_{(1)}(v),u)~.
\end{flalign}
By induction it follows that $\omega(v,S_{(n)}(u))=\omega(S_{(n)}(v),u)$ for all $n\geq 0$.

Using this, the inner sum of (\ref{eqn:zwischenschritt}) reads
\begin{flalign}
 \sum\limits_{m+i=l}
\omega(S_{(m)}(v_{(k)}),S_{(i)}(u)_{(j)}) 
= \omega\left(v_{(k)},\sum\limits_{m=0}^l S_{(m)}\circ S_{(l-m)}(u_{(j)})\right)~.
\end{flalign}
It remains to simplify the map $\sum_{m=0}^l S_{(m)}\circ S_{(l-m)}$. For $l=0$ this is simply the identity map
and for $l=1$ it reads $2\, S_{(1)}=\mathcal{I}^{-1}\circ \widehat{\mathcal{I}}_{(1)}$. For $l\geq 2$ we find
\begin{flalign}
 \sum\limits_{m=0}^l S_{(m)}\circ S_{(l-m)} = 2\,S_{(l)} + \sum\limits_{m=1}^{l-1}S_{(m)}\circ S_{(l-m)}
 = \mathcal{I}^{-1}\circ\widehat{\mathcal{I}}_{(l)}~.
\end{flalign}
Thus, (\ref{eqn:zwischenschritt}) reads
\begin{flalign}
 \omega(\mathbf{S}(v),\mathbf{S}(u)) =  
\sum\limits_{n=0}^\infty\lambda^n\sum\limits_{k+j+l=n}\omega\bigl(v_{(k)},\mathcal{I}^{-1}\circ \widehat{\mathcal{I}}_{(l)}
(u_{(j)})\bigr)
= \widehat{\omega}_{\star}(v,u)~.
\end{flalign}

\end{proof}
As a direct consequence we obtain
\begin{cor}
\label{cor:symplecto}
 The maps $\widetilde{\mathbf{S}}_\pm:=\mathbf{S}\circ \mathbf{T}_\pm:\widetilde{V}_\star\to V[[\lambda]]$ 
are symplectic isomorphisms between $(\widetilde{V}_\star,\widetilde{\omega}_\star)$ and $(V[[\lambda]],\omega)$, 
i.e.~$\omega\bigl(\widetilde{\mathbf{S}}_\pm(v),\widetilde{\mathbf{S}}_\pm(u)\bigr)
=\widetilde{\omega}_\star\bigl(v,u\bigr)$ for all $v,u\in \widetilde{V}_\star$.

\noindent Moreover, the maps $\mathbf{S}_\pm:=\widetilde{\mathbf{S}}_\pm\circ\iota: V_\star\to V[[\lambda]]$ are
 symplectic isomorphisms between $(V_\star,\omega_\star)$ and $(V[[\lambda]],\omega)$.
\end{cor}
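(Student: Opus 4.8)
The statement to prove is Corollary \ref{cor:symplecto}, which follows almost immediately by composing the symplectic isomorphisms already established. The plan is to observe that every ingredient map is already known to be a symplectic isomorphism between the appropriate pair of symplectic $\bbR[[\lambda]]$-modules, and that a composition of symplectic isomorphisms is again a symplectic isomorphism. Concretely, I would first recall that Theorem \ref{theo:symplecto} gives a symplectic isomorphism $\mathbf{S}:(V[[\lambda]],\widehat{\omega}_\star)\to(V[[\lambda]],\omega)$, and that by construction of $\widehat{\omega}_\star$ the $\bbR[[\lambda]]$-module isomorphism $\mathbf{T}_\pm=\mathcal{I}^{-1}\circ T_\pm\circ\widetilde{\mathcal{I}}_\star:\widetilde{V}_\star\to V[[\lambda]]$ is a symplectic isomorphism from $(\widetilde{V}_\star,\widetilde{\omega}_\star)$ to $(V[[\lambda]],\widehat{\omega}_\star)$ — indeed $\widehat{\omega}_\star(v,u):=\widetilde{\omega}_\star(\mathbf{T}_\pm^{-1}(v),\mathbf{T}_\pm^{-1}(u))$ is precisely the pushed-forward symplectic structure, so $\mathbf{T}_\pm$ is symplectic essentially by definition.

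Then I would simply chain them: for $\widetilde{\mathbf{S}}_\pm:=\mathbf{S}\circ\mathbf{T}_\pm$ and arbitrary $v,u\in\widetilde{V}_\star$,
\begin{flalign}
\nonumber \omega\bigl(\widetilde{\mathbf{S}}_\pm(v),\widetilde{\mathbf{S}}_\pm(u)\bigr)
&=\omega\bigl(\mathbf{S}(\mathbf{T}_\pm(v)),\mathbf{S}(\mathbf{T}_\pm(u))\bigr)
=\widehat{\omega}_\star\bigl(\mathbf{T}_\pm(v),\mathbf{T}_\pm(u)\bigr)\\
\nonumber &=\widetilde{\omega}_\star\bigl(\mathbf{T}_\pm^{-1}(\mathbf{T}_\pm(v)),\mathbf{T}_\pm^{-1}(\mathbf{T}_\pm(u))\bigr)
=\widetilde{\omega}_\star(v,u)~,
\end{flalign}
using Theorem \ref{theo:symplecto} in the second step and the definition of $\widehat{\omega}_\star$ in the third. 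Since $\mathbf{S}$ and $\mathbf{T}_\pm$ are both $\bbR[[\lambda]]$-module isomorphisms, so is $\widetilde{\mathbf{S}}_\pm$; combined with the displayed identity this establishes that $\widetilde{\mathbf{S}}_\pm$ is a symplectic isomorphism $(\widetilde{V}_\star,\widetilde{\omega}_\star)\to(V[[\lambda]],\omega)$.

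For the second claim I would invoke the symplectic isomorphism $\iota:(V_\star,\omega_\star)\to(\widetilde{V}_\star,\widetilde{\omega}_\star)$ constructed in the subsection "Symplectic $\bbR[[\lambda]]$-modules" of the previous chapter, where it was shown that $\iota$ restricts to an $\bbR[[\lambda]]$-module isomorphism $H_\bbR\to C^\infty_0(\MM,\bbR)[[\lambda]]$ descending to the quotients, and that $\widetilde{\omega}_\star([\iota(\varphi)],[\iota(\psi)])=\omega_\star([\varphi],[\psi])$. Then $\mathbf{S}_\pm:=\widetilde{\mathbf{S}}_\pm\circ\iota$ is a composition of symplectic isomorphisms, hence a symplectic isomorphism $(V_\star,\omega_\star)\to(V[[\lambda]],\omega)$; explicitly, for $v,u\in V_\star$ one has $\omega(\mathbf{S}_\pm(v),\mathbf{S}_\pm(u))=\widetilde{\omega}_\star(\iota(v),\iota(u))=\omega_\star(v,u)$.

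There is no real obstacle here: the corollary is a bookkeeping consequence of the preceding theorem and the earlier construction of $\iota$. The only point requiring a little care is to make sure the domains and codomains match up — in particular that $\mathbf{T}_\pm$ really does intertwine $\widetilde{\omega}_\star$ and $\widehat{\omega}_\star$, which is immediate from the way $\widehat{\omega}_\star$ was defined via pullback along $\mathbf{T}_\pm^{-1}$ — and to note that invertibility is inherited from each factor. One should also remark that the two choices of sign ($+$ or $-$) give genuinely different symplectic isomorphisms, reflecting the retarded/advanced distinction discussed after Lemma \ref{lem:solutionspaceiso}, but both are valid.
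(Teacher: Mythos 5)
Your proof is correct and is exactly the argument the paper leaves implicit: the corollary is stated as a ``direct consequence'' of Theorem \ref{theo:symplecto} with no written proof, and your composition of the three symplectic isomorphisms ($\mathbf{T}_\pm$ by definition of $\widehat{\omega}_\star$ as a pushforward, $\mathbf{S}$ by the theorem, and $\iota$ from the first section of the chapter) is precisely the intended bookkeeping.
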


%%%%%%%%%%%%%%%%%%%%%%%%%%%%%%%%%%%%%%%%%%%%%%%%%%%%%%%

\section{Consequences of the symplectic isomorphisms}
We study the consequences of the symplectic isomorphisms $\mathbf{S}_\pm: V_\star\to V[[\lambda]]$.
Analogous results hold true for the symplectic isomorphisms
$\widetilde{\mathbf{S}}_\pm: \widetilde{V}_\star\to V[[\lambda]]$.
\subsection{$\ast$-algebra of field polynomials:}
Consider the $\ast$-algebras of field polynomials $\AA_{(V_\star,\omega_\star)}$ and
$\AA_{(V[[\lambda]],\omega)}$ of the deformed and undeformed quantum field theory, respectively.
From Proposition \ref{propo:symplectoalgebra} we immediately obtain
\begin{cor}
\label{cor:symplalg}
 There exist $\ast$-algebra isomorphisms $\mathfrak{S}_\pm:\AA_{(V_\star,\omega_\star)}\to\AA_{(V[[\lambda]],\omega)}$.
\end{cor}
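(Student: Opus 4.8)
The statement is a direct corollary of Proposition \ref{propo:symplectoalgebra} combined with Corollary \ref{cor:symplecto}. The plan is to simply feed the symplectic isomorphisms $\mathbf{S}_\pm : V_\star \to V[[\lambda]]$ produced in Corollary \ref{cor:symplecto} into Proposition \ref{propo:symplectoalgebra}, which asserts that every symplectic isomorphism $S$ between two symplectic $\bbR[[\lambda]]$-modules $(V_1,\omega_1)$ and $(V_2,\omega_2)$ lifts to a $\ast$-algebra isomorphism $\mathfrak{S}$ between the corresponding $\ast$-algebras of field polynomials with $\mathfrak{S}(1)=1$ and $\mathfrak{S}(\Phi_1(v)) = \Phi_2(S(v))$. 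Applying this with $(V_1,\omega_1) = (V_\star,\omega_\star)$, $(V_2,\omega_2) = (V[[\lambda]],\omega)$ and $S = \mathbf{S}_\pm$ yields the two desired $\ast$-algebra isomorphisms $\mathfrak{S}_\pm : \AA_{(V_\star,\omega_\star)} \to \AA_{(V[[\lambda]],\omega)}$.

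There are essentially no obstacles: the only thing to verify is that the hypotheses of Proposition \ref{propo:symplectoalgebra} are met, namely that $\mathbf{S}_\pm$ are indeed symplectic isomorphisms of symplectic $\bbR[[\lambda]]$-modules. This is precisely the content of the second assertion of Corollary \ref{cor:symplecto}, which in turn rests on Theorem \ref{theo:symplecto} (the symplectic isomorphism $\mathbf{S}$) and on the chain of isomorphisms $\mathbf{T}_\pm = \mathcal{I}^{-1}\circ T_\pm \circ \widetilde{\mathcal{I}}_\star$ and $\iota : V_\star \to \widetilde{V}_\star$ established in the preceding sections. I would therefore state the proof as a one-line deduction, recording explicitly that $\mathfrak{S}_\pm$ acts on generators by $\mathfrak{S}_\pm(1) = 1$ and $\mathfrak{S}_\pm(\Phi_\star([\varphi])) = \Phi(\mathbf{S}_\pm([\varphi]))$ for all $[\varphi] \in V_\star$, since this explicit description is what is used in the subsequent discussion of induced states and $n$-point functions.

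The only point deserving a remark — and it is interpretational rather than a gap in the argument — is that the isomorphism depends on the choice of sign $\pm$, i.e.\ on whether one uses the retarded or advanced Green's operators in building $T_\pm$, and that $T_+ \neq T_-$ in general (as shown by the computation of $\Delta\circ P\circ\widetilde{\Delta}_\star$ above). Thus one genuinely obtains a two-parameter family of $\ast$-algebra isomorphisms, not a canonical one; but each member is a bona fide isomorphism, which is all the corollary claims. If desired, one could also note that both $\mathfrak{S}_\pm$ reduce to the identity at order $\lambda^0$, since $\mathbf{S}_\pm = \id + \mathcal{O}(\lambda)$, so that the deformed and undeformed $\ast$-algebras of field polynomials are isomorphic as formal deformations of one another.
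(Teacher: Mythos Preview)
Your proof is correct and follows exactly the paper's approach: the corollary is stated immediately after Proposition \ref{propo:symplectoalgebra} as a direct consequence of applying that proposition to the symplectic isomorphisms $\mathbf{S}_\pm$ of Corollary \ref{cor:symplecto}. Your additional remarks on the explicit action on generators and the dependence on the choice of sign are accurate and anticipate how the isomorphisms are used in the subsequent discussion of states and $n$-point functions.
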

This means that we can {\it mathematically} describe the quantum field theory on noncommutative
curved spacetimes by a formal power series extension of the corresponding 
quantum field theory on commutative curved spacetimes. However, the {\it physical interpretation}
has to be adapted properly: If we want to probe the deformed quantum field theory with
a set of smearing functions $\lbrace [\varphi_i] \rbrace$ in order to extract physical observables
(e.g.~correlation functions) we have to probe the corresponding commutative quantum field theory
with a different set of smearing functions $\lbrace \mathbf{S}_\pm([\varphi_i])\rbrace $
in order to obtain the same result. We will come back to this point when we discuss $n$-point functions.

There is a mathematical subtlety at this point: Instead of the algebra $\AA_{(V[[\lambda]],\omega)}$
for the undeformed quantum field theory we could also consider the algebra
$\AA_{(V,\omega)}[[\lambda]]$, where we first construct the $\ast$-algebra of field polynomials
over $\bbC$ and afterwards extend by formal power series. 
There is a natural identification of elements in $\AA_{(V[[\lambda]],\omega)}$ with elements
in $\AA_{(V,\omega)}[[\lambda]]$ given by regarding the generators $\Phi(v)$, $v\in V[[\lambda]]$, 
as formal power series $\Phi(v)=\sum_{n=0}^\infty\lambda^n\,\Phi(v_{(n)})$, where on the right hand
side $\Phi(v_{(n)})$, $v_{(n)}\in V$, denote the generators of $\AA_{(V,\omega)}$.
However, this identification just provides a $\ast$-algebra homomorphism
$\AA_{(V[[\lambda]],\omega)} \to \AA_{(V,\omega)}[[\lambda]]$, which is no isomorphism in general.
Employing methods from the Appendix \ref{app:basicsdefq}, we obtain that $\AA_{(V,\omega)}[[\lambda]]$
is the $\lambda$-adic completion of $\AA_{(V[[\lambda]],\omega)}$.
For the deformed quantum field theory this means that there is 
in addition to the $\ast$-algebra isomorphism $\AA_{(V_\star,\omega_\star)} \simeq \AA_{(V[[\lambda]],\omega)}$ also
a $\ast$-algebra homomorphism $\AA_{(V_\star,\omega_\star)}\to \AA_{(V,\omega)}[[\lambda]]$, 
which is not necessarily invertible.

\subsection{Symplectic automorphisms:}
An important class of symmetries of quantum field theories are those which are induced by
symplectic automorphisms of the corresponding symplectic $\bbR[[\lambda]]$-module.
\begin{defi}
 Let $(V,\omega)$ be a symplectic $\bbR[[\lambda]]$-module. A map $\alpha\in \End_{\bbR[[\lambda]]}(V)$
is called a {\it symplectic automorphism}, if it is invertible and if $\omega(\alpha(v),\alpha(u))=\omega(v,u)$,
for all $v,u\in V$. The set $\mathcal{G}_{(V,\omega)}\subseteq \End_{\bbR[[\lambda]]}(V)$ of all symplectic automorphisms
forms a group under the usual composition $\circ$ of endomorphisms. This group is called the
 {\it group of symplectic automorphisms}.
\end{defi}
Due to the symplectic isomorphisms $\mathbf{S}_\pm:V_\star\to V[[\lambda]]$
we obtain the following
\begin{cor}
\label{cor:symplauto}
 There exist group isomorphisms $\mathbf{S}_{\pm\mathcal{G}}: \mathcal{G}_{(V[[\lambda]],\omega)}\to
\mathcal{G}_{(V_\star,\omega_\star)}\,,~\alpha\mapsto \mathbf{S}_\pm^{-1}\circ \alpha\circ \mathbf{S}_\pm$.
\end{cor}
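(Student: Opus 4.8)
The plan is to deduce this corollary directly from Corollary~\ref{cor:symplecto}, which supplies symplectic isomorphisms $\mathbf{S}_\pm:V_\star\to V[[\lambda]]$: conjugation by such a map transports the full group structure of symplectic automorphisms, so $\mathbf{S}_{\pm\mathcal{G}}$ is simply this conjugation and the whole content of the statement is the verification that it is well-defined, a homomorphism, and invertible.

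First I would check that for any $\alpha\in\mathcal{G}_{(V[[\lambda]],\omega)}$ the endomorphism $\mathbf{S}_\pm^{-1}\circ\alpha\circ\mathbf{S}_\pm$ really lies in $\mathcal{G}_{(V_\star,\omega_\star)}$. It is an element of $\End_{\bbR[[\lambda]]}(V_\star)$, being a composition of $\bbR[[\lambda]]$-linear maps, and it is invertible with inverse $\mathbf{S}_\pm^{-1}\circ\alpha^{-1}\circ\mathbf{S}_\pm$. For the symplectic property I would first record the elementary observation that, since $\mathbf{S}_\pm$ is a symplectic isomorphism, its inverse $\mathbf{S}_\pm^{-1}:V[[\lambda]]\to V_\star$ is again symplectic: substituting $v,u\mapsto\mathbf{S}_\pm^{-1}(v),\mathbf{S}_\pm^{-1}(u)$ into $\omega(\mathbf{S}_\pm(\cdot),\mathbf{S}_\pm(\cdot))=\omega_\star(\cdot,\cdot)$ gives $\omega(v,u)=\omega_\star(\mathbf{S}_\pm^{-1}(v),\mathbf{S}_\pm^{-1}(u))$. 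Then for all $v,u\in V_\star$,
\[
\omega_\star\bigl(\mathbf{S}_\pm^{-1}\alpha\mathbf{S}_\pm(v),\mathbf{S}_\pm^{-1}\alpha\mathbf{S}_\pm(u)\bigr)
=\omega\bigl(\alpha\mathbf{S}_\pm(v),\alpha\mathbf{S}_\pm(u)\bigr)
=\omega\bigl(\mathbf{S}_\pm(v),\mathbf{S}_\pm(u)\bigr)
=\omega_\star(v,u),
\]
using first that $\mathbf{S}_\pm^{-1}$ is symplectic, then $\alpha\in\mathcal{G}_{(V[[\lambda]],\omega)}$, and finally that $\mathbf{S}_\pm$ is symplectic. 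Hence $\mathbf{S}_{\pm\mathcal{G}}$ is well-defined as a map into $\mathcal{G}_{(V_\star,\omega_\star)}$.

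Next I would verify the homomorphism and bijectivity properties. For $\alpha,\beta\in\mathcal{G}_{(V[[\lambda]],\omega)}$, inserting $\mathbf{S}_\pm\circ\mathbf{S}_\pm^{-1}=\id_{V[[\lambda]]}$ between $\alpha$ and $\beta$ gives $\mathbf{S}_{\pm\mathcal{G}}(\alpha\circ\beta)=\mathbf{S}_\pm^{-1}\circ\alpha\circ\beta\circ\mathbf{S}_\pm=(\mathbf{S}_\pm^{-1}\circ\alpha\circ\mathbf{S}_\pm)\circ(\mathbf{S}_\pm^{-1}\circ\beta\circ\mathbf{S}_\pm)=\mathbf{S}_{\pm\mathcal{G}}(\alpha)\circ\mathbf{S}_{\pm\mathcal{G}}(\beta)$, and obviously $\mathbf{S}_{\pm\mathcal{G}}(\id_{V[[\lambda]]})=\id_{V_\star}$. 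Invertibility follows because $\beta\mapsto\mathbf{S}_\pm\circ\beta\circ\mathbf{S}_\pm^{-1}$ is a two-sided inverse: it is the analogous conjugation by the symplectic isomorphism $\mathbf{S}_\pm^{-1}$, so the same computation as above shows it maps $\mathcal{G}_{(V_\star,\omega_\star)}$ into $\mathcal{G}_{(V[[\lambda]],\omega)}$, and composing the two conjugations in either order returns the identity on the respective endomorphism ring.

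I do not expect any genuine obstacle here; the statement is pure transport of structure along an isomorphism, and the only point needing a moment's care is that a symplectic isomorphism has a symplectic inverse, which is immediate as shown. If a more compact presentation were preferred, one could simply phrase the corollary as the functoriality of $(V,\omega)\mapsto\mathcal{G}_{(V,\omega)}$ under symplectic isomorphisms and cite Corollary~\ref{cor:symplecto}, but for this chapter the explicit check above is the cleanest.
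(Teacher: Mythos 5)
Your proof is correct and is exactly the transport-of-structure argument the paper has in mind; the paper states this corollary without proof as an immediate consequence of Corollary~\ref{cor:symplecto}, and your explicit verification (well-definedness via the symplecticity of $\mathbf{S}_\pm^{-1}$, the homomorphism property, and invertibility by conjugating back) fills in precisely the intended steps.
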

The group of symplectic automorphisms of a symplectic $\bbR[[\lambda]]$-module $(V,\omega)$
can be represented on its $\ast$-algebra of field polynomials
by defining on the generators, for all $\alpha\in\mathcal{G}_{(V,\omega)}$,
\begin{flalign}
 \alpha(1)&=1~,\quad \alpha\bigl(\Phi(v)\bigr)=\Phi(\alpha(v))~,~\text{for all }v\in V~,
\end{flalign}
and extending to $\AA_{(V,\omega)}$ as $\ast$-algebra homomorphisms.
\begin{ex}
Let $(\MM,g_\star,\vols,\mathcal{F})$ be a compactly deformed time-oriented, connected and globally hyperbolic 
Lorentzian manifold and $\widetilde{P}_\star:C^\infty(\MM)[[\lambda]]\to C^\infty(\MM)[[\lambda]]$
be the deformed Klein-Gordon operator. 
Assume that the underlying classical Lorentzian manifold $(\MM,g)$ has a group $G_{\text{iso}}$ of isometries.
Due to the $G_\text{iso}$-invariance of the classical Klein-Gordon operator and the classical Green's operators
we have $G_\text{iso}\subseteq \mathcal{G}_{(V[[\lambda]],\omega)}$, where the action of $G_\text{iso}$ 
on $V[[\lambda]]$ is given by the geometric action.
Corollary \ref{cor:symplauto} states that the deformed quantum field theory
$\AA_{(V_\star,\omega_\star)}$ enjoys the same transformations as symplectic automorphisms, i.e.~$G_\text{iso}
\subseteq \mathcal{G}_{(V_\star,\omega_\star)}$, not dependent on details of the deformation.
The action is given by $\alpha_\star:=\mathbf{S}^{-1}_\pm\circ\alpha\circ \mathbf{S}_\pm$
and it is in general different to the geometric action.
Thus, the deformation does not break isometries of the undeformed quantum field theory,
but replaces geometric symmetries by nongeometric ones.
\end{ex}

\subsection{Algebraic states:}
Due to the symplectic isomorphisms $\mathbf{S}_\pm:V_\star\to V[[\lambda]]$
the space of algebraic states on $\AA_{(V_\star,\omega_\star)}$ and $\AA_{(V[[\lambda]],\omega)}$
can be related in a precise way. To explain this, we require the following
\begin{lem}
 Let $A_1$ and $A_2$ be two unital $\ast$-algebras over $\bbC[[\lambda]]$ and 
$\kappa:A_1\to A_2$ be a $\ast$-algebra homomorphism. Then each state $\Omega_2$ on $A_2$ 
induces a state $\Omega_1$ on $A_1$ by defining
\begin{flalign}
 \Omega_1(a):=\Omega_2(\kappa(a))~,
\end{flalign}
for all $a\in A_1$.
\end{lem}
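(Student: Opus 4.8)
The plan is to verify directly that the candidate map $\Omega_1 := \Omega_2 \circ \kappa$ satisfies the two defining axioms of a state on a unital $\ast$-algebra over $\bbC[[\lambda]]$: namely $\bbC[[\lambda]]$-linearity together with the normalization $\Omega_1(1)=1$, and positivity $\Omega_1(a^\ast a)\geq 0$ for all $a\in A_1$. Since $\kappa$ is a $\ast$-algebra homomorphism, it is in particular $\bbC[[\lambda]]$-linear and unital, so $\Omega_1=\Omega_2\circ\kappa$ is a composition of $\bbC[[\lambda]]$-linear maps and hence $\bbC[[\lambda]]$-linear, and $\Omega_1(1)=\Omega_2(\kappa(1))=\Omega_2(1)=1$.

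For positivity, first I would use that $\kappa$ is a $\ast$-algebra homomorphism to compute, for any $a\in A_1$,
\begin{flalign}
 \kappa(a^\ast a) = \kappa(a^\ast)\,\kappa(a) = \bigl(\kappa(a)\bigr)^\ast\,\kappa(a)~,
\end{flalign}
where the first equality is multiplicativity of $\kappa$ and the second is compatibility of $\kappa$ with the involutions. Setting $b:=\kappa(a)\in A_2$, this gives $\Omega_1(a^\ast a)=\Omega_2(\kappa(a^\ast a))=\Omega_2(b^\ast b)\geq 0$, since $\Omega_2$ is a state on $A_2$. This completes the verification of both axioms.

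The argument is essentially a one-line computation and there is no real obstacle; the only point requiring a small amount of care is making sure every structural property of $\kappa$ that is invoked (unitality, multiplicativity, $\ast$-compatibility, $\bbC[[\lambda]]$-linearity) is indeed part of the definition of a $\ast$-algebra homomorphism over $\bbC[[\lambda]]$ as fixed in the notation section of Chapter \ref{chap:qftdef}. In particular, one should note that $\bbC[[\lambda]]$-linearity of $\Omega_1$ is needed to even call it a candidate state, and it is inherited from the $\bbC[[\lambda]]$-linearity of both $\kappa$ and $\Omega_2$; likewise the ordering on $\bbR[[\lambda]]$ used in the positivity condition is the one defined in Section \ref{sec:algstatesdef}, and the inequality $\Omega_1(a^\ast a)\geq 0$ holds in precisely that sense because $\Omega_2(b^\ast b)\geq 0$ does. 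Since $\kappa$ need not be injective or surjective, nothing more than the homomorphism property is used, and the lemma applies in particular to the $\ast$-algebra homomorphism $\AA_{(V_\star,\omega_\star)}\to\AA_{(V,\omega)}[[\lambda]]$ discussed above as well as to the isomorphisms $\mathfrak{S}_\pm$ of Corollary \ref{cor:symplalg}.
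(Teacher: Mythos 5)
Your proof is correct and follows essentially the same route as the paper's: verify normalization via $\kappa(1)=1$ and positivity via $\kappa(a^\ast a)=(\kappa(a))^\ast\kappa(a)$, with $\bbC[[\lambda]]$-linearity inherited from the composition. The extra remarks on the ordering of $\bbR[[\lambda]]$ and on non-injectivity are fine but not needed.
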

\begin{proof}
 $\Omega_1:A_1\to \bbC[[\lambda]]$ is a $\bbC[[\lambda]]$-linear map. We have $\Omega_1(1) =\Omega_2(\kappa(1))=\Omega_2(1)=1$
and 
\begin{flalign}
 \Omega_1(a^\ast a) = \Omega_2\bigl(\kappa(a^\ast a)\bigr) =\Omega_2\bigl((\kappa(a))^\ast\kappa(a)\bigr)\geq 0~,
\end{flalign}
for all $a\in A_1$.

\end{proof}
The state $\Omega_1$ defined above is called the {\it pull-back} of $\Omega_2$ under the $\ast$-algebra homomorphism
$\kappa$. Note that for $\ast$-algebra isomorphisms $\kappa:A_1\to A_2$ there is bijection between the states on
$A_1$ and $A_2$.
We immediately obtain from Corollary \ref{cor:symplalg} and the Lemma above
\begin{cor}
 The $\ast$-algebra isomorphisms $\mathfrak{S}_\pm:\AA_{(V_\star,\omega_\star)}\to\AA_{(V[[\lambda]],\omega)}$
 provide bijections between the states on $\AA_{(V_\star,\omega_\star)}$ and $\AA_{(V[[\lambda]],\omega)}$.
$\mathcal{G}_{(V[[\lambda]],\omega)}$-symmetric states are pulled-back to $\mathcal{G}_{(V_\star,\omega_\star)}$-symmetric
states, and vice versa.
\end{cor}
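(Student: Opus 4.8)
The statement to prove is the final corollary: the $\ast$-algebra isomorphisms $\mathfrak{S}_\pm:\AA_{(V_\star,\omega_\star)}\to\AA_{(V[[\lambda]],\omega)}$ provide bijections between the states on the two algebras, and $\mathcal{G}_{(V[[\lambda]],\omega)}$-symmetric states are pulled back to $\mathcal{G}_{(V_\star,\omega_\star)}$-symmetric states and conversely.

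The plan is to deduce this directly from the pull-back lemma just proved and from Corollary \ref{cor:symplalg}, with a short argument for the symmetry-preservation claim. First I would note that Corollary \ref{cor:symplalg} gives us $\ast$-algebra isomorphisms $\mathfrak{S}_\pm$, so in particular they are $\ast$-algebra homomorphisms in both directions: $\mathfrak{S}_\pm$ and $\mathfrak{S}_\pm^{-1}$ are both $\ast$-algebra homomorphisms. Applying the pull-back lemma to $\kappa=\mathfrak{S}_\pm$ shows that every state $\Omega$ on $\AA_{(V[[\lambda]],\omega)}$ pulls back to a state $\Omega\circ\mathfrak{S}_\pm$ on $\AA_{(V_\star,\omega_\star)}$; applying it to $\kappa=\mathfrak{S}_\pm^{-1}$ shows every state on $\AA_{(V_\star,\omega_\star)}$ pulls back to a state on $\AA_{(V[[\lambda]],\omega)}$. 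Since $\mathfrak{S}_\pm\circ\mathfrak{S}_\pm^{-1}=\id$ and $\mathfrak{S}_\pm^{-1}\circ\mathfrak{S}_\pm=\id$, these two pull-back operations on states are mutually inverse, hence a bijection. That handles the first assertion with essentially no computation.

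For the second assertion I would recall that, as described after Corollary \ref{cor:symplauto}, a symplectic automorphism $\alpha\in\mathcal{G}_{(V,\omega)}$ acts on $\AA_{(V,\omega)}$ as a $\ast$-algebra homomorphism fixing $1$ and sending $\Phi(v)\mapsto\Phi(\alpha(v))$, and that by Corollary \ref{cor:symplauto} the group isomorphism $\mathbf{S}_{\pm\mathcal{G}}$ sends $\alpha\in\mathcal{G}_{(V[[\lambda]],\omega)}$ to $\alpha_\star=\mathbf{S}_\pm^{-1}\circ\alpha\circ\mathbf{S}_\pm\in\mathcal{G}_{(V_\star,\omega_\star)}$. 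The key compatibility to check is that $\mathfrak{S}_\pm$ intertwines these actions, i.e. $\mathfrak{S}_\pm\circ\alpha_\star=\alpha\circ\mathfrak{S}_\pm$ as $\ast$-algebra homomorphisms; it suffices to verify this on the generators $1$ and $\Phi_\star([\varphi])$, where both sides send $\Phi_\star([\varphi])$ to $\Phi(\alpha(\mathbf{S}_\pm([\varphi])))$ using the defining formula $\mathfrak{S}_\pm(\Phi_\star(v))=\Phi(\mathbf{S}_\pm(v))$ from Proposition \ref{propo:symplectoalgebra}. Granting this intertwining relation, if $\Omega$ is $\mathcal{G}_{(V[[\lambda]],\omega)}$-symmetric then its pull-back $\Omega\circ\mathfrak{S}_\pm$ satisfies $(\Omega\circ\mathfrak{S}_\pm)\circ\alpha_\star=\Omega\circ\alpha\circ\mathfrak{S}_\pm=\Omega\circ\mathfrak{S}_\pm$ for every $\alpha_\star\in\mathcal{G}_{(V_\star,\omega_\star)}$, so the pulled-back state is $\mathcal{G}_{(V_\star,\omega_\star)}$-symmetric; running the same argument with $\mathfrak{S}_\pm^{-1}$ and $\mathbf{S}_{\pm\mathcal{G}}^{-1}$ gives the converse.

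I do not expect a serious obstacle here — this is a bookkeeping corollary — but the one point that needs care is the intertwining identity $\mathfrak{S}_\pm\circ\alpha_\star=\alpha\circ\mathfrak{S}_\pm$: one must check that the $\ast$-algebra homomorphism $\sigma$ built in the proof of Proposition \ref{propo:symplectoalgebra} is genuinely functorial in the symplectic map (so that the homomorphism associated to a composite $\mathbf{S}_\pm^{-1}\circ\alpha\circ\mathbf{S}_\pm$ is the composite of the associated homomorphisms), which follows because the construction is defined on generators and extended multiplicatively, and composition of symplectic maps corresponds to composition of the generator assignments. Once that naturality is noted, the corollary follows formally, and I would keep the write-up to a few lines invoking the pull-back lemma, Corollary \ref{cor:symplalg}, Corollary \ref{cor:symplauto}, and the generator-level check.
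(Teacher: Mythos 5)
Your proposal is correct and follows the same route as the paper, which derives the corollary immediately from the pull-back lemma applied to $\mathfrak{S}_\pm$ and $\mathfrak{S}_\pm^{-1}$ together with Corollaries \ref{cor:symplalg} and \ref{cor:symplauto}; your explicit verification of the intertwining identity $\mathfrak{S}_\pm\circ\alpha_\star=\alpha\circ\mathfrak{S}_\pm$ on generators is exactly the naturality the paper leaves implicit. No gaps.
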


As discussed before in Chapter \ref{chap:qftdef}, Section \ref{sec:algstatesdef},
we can induce faithful states on $\AA_{(V,\omega)}[[\lambda]]$ from faithful states
on the undeformed and unextended quantum field theory $\AA_{(V,\omega)}$.
The $\ast$-algebra homomorphism $\AA_{(V_\star,\omega_\star)}\to \AA_{(V,\omega)}[[\lambda]]$
allows us to pull-back also these states on the deformed quantum field theory.

\subsection{$n$-point functions:}
We now focus on $n$-point functions in the deformed and undeformed quantum field theory.
Let $\Omega$ be an algebraic state on $\AA_{(V[[\lambda]],\omega)}$. We define the
$n$-point correlation functions by
\begin{flalign}
\Omega_n: C^\infty_0(\MM,\bbR)[[\lambda]]^{\times n}\to \bbC[[\lambda]]\,,\quad(\varphi_1,\dots,\varphi_n)\mapsto 
 \Omega\bigl(\Phi([\varphi_1])\cdots \Phi([\varphi_n])\bigr)~.
\end{flalign}
Using the $\ast$-algebra isomorphism $\mathfrak{S}_\pm$ we can pull-back the state $\Omega$ to a state 
$\Omega_\star$ on the deformed quantum field theory $\AA_{(V_\star,\omega_\star)}$.
The deformed $n$-point correlation functions can be expressed in terms of the undeformed ones via
\begin{flalign}
 \Omega_{\star n}: H_\bbR^{\times n} \to \bbC[[\lambda]]\,,\quad (\varphi_1,\dots,\varphi_n)\mapsto 
\Omega\bigl(\Phi(\mathbf{S}_\pm([\varphi_1]))\cdots \Phi(\mathbf{S}_\pm([\varphi_n])) \bigr)~.
\end{flalign}
Note that $\Omega_{\star n}$ is defined on $H_\bbR$, such that we can not directly compare it to the undeformed
correlation functions $\Omega_{n}$, acting on compactly supported real functions.
This is due to the different smearing interpretation we have for the generators 
$\Phi([\varphi])$ of $\AA_{(V[[\lambda]],\omega)}$ and $\Phi_\star([\varphi])$ of $\AA_{(V_\star,\omega_\star)}$ 
(the latter ones are smeared with $\star$-products).
Composing $\Omega_{\star n}$ with the isomorphism $\bigl(\iota^{-1}\bigr)^{\times n}:C^\infty_0(\MM,\bbR)[[\lambda]]^{\times n}
 \to H_\bbR^{\times n}$
we obtain the correlation functions for the quantum field theory $\AA_{(\widetilde{V}_\star,\widetilde{\omega}_\star)}$
\begin{flalign}
 \nn \widetilde{\Omega}_{\star n} := \Omega_{\star n}\circ \bigl(\iota^{-1}\bigr)^{\times n}: ~& 
C^\infty_0(\MM,\bbR)[[\lambda]]^{\times n} \to\bbC[[\lambda]]~,\\
& (\varphi_1,\dots,\varphi_n)\mapsto \Omega\bigl(\Phi(\widetilde{\mathbf{S}}_\pm([\varphi_1]))\cdots \Phi(\widetilde{\mathbf{S}}_\pm
([\varphi_n])\bigr)~.
\end{flalign}
Since both, $\Omega_n$ and $\widetilde{\Omega}_{\star n}$, are maps on real functions of compact support, 
we can compare them. Due to the maps $\widetilde{\mathbf{S}}_\pm$, the undeformed and deformed correlation 
functions {\it evaluated on the same functions} do not agree
\begin{flalign}
 \Omega_n(\varphi_1,\dots,\varphi_n)\neq \widetilde{\Omega}_{\star n}(\varphi_1,\dots,\varphi_n)~.
\end{flalign}
Because the functions $\varphi_i$ which localize the correlation functions are fixed by details
of the experimental setup, using the same apparatus we could in principle distinguish between
the deformed and undeformed quantum field theory in the state $\widetilde{\Omega}_\star$ and $\Omega$, respectively.
For example, in cosmology the usual prescription to extract the primordial power spectrum
is to evaluate the $2$-point correlation function, regarded as a distribution kernel, on equal-time hypersurfaces.
If we apply the same prescription to the deformed quantum field theory we obtain
\begin{flalign}
 \Omega_2(x_1,x_2) \neq \widetilde{\Omega}_{\star 2}(x_1,x_2)~.
\end{flalign}

This discussion shows that the deformed quantum field theory in the induced state $\widetilde{\Omega}_\star$ 
is physically inequivalent to the undeformed quantum field theory in the state $\Omega$.
It remains to exclude the possibility that there is a different state $\Omega^\prime$ of the undeformed quantum field
theory such that all $n$-point functions $\Omega_n^\prime$ in this state coincide with the deformed
$n$-point functions $\widetilde{\Omega}_{\star n}$. If this would be the case, then we would
in particular have, for all $\varphi,\psi\in C^\infty_0(\MM,\bbR)[[\lambda]]$,
\begin{flalign}
\widetilde{\Omega}_{\star 2}(\varphi,\psi) = \Omega_2^\prime(\varphi,\psi)~.
\end{flalign}
This is, however, inconsistent with the canonical commutation relations, since for a nontrivial $\widetilde{\mathbf{S}}_\pm$
\begin{flalign}
 \widetilde{\Omega}_{\star 2}(\varphi,\psi) - \widetilde{\Omega}_{\star 2}(\psi,\varphi) =\widetilde{\omega}_\star(\varphi,\psi)
\neq \omega(\varphi,\psi) = \Omega_2^\prime(\varphi,\psi) - \Omega_2^\prime(\psi,\varphi)~.
\end{flalign}
This proves that the deformation can lead to new physical effects, even though
the underlying algebras of observables are isomorphic, i.e.~mathematically equivalent.

%%%%%%%%%%%%%%%%%%%%%%%%%%%%%%%%%%%%%%%%%%%%%%%%%%%%%%%
%%%%%%%%%%%%%%%%%%%%%%%%%%%%%%%%%%%%%%%%%%%%%%%%%%%%%%%

\chapter{\label{chap:qftapp}Applications}
We now study explicit examples of deformed classical and quantum field theories.
This chapter is mainly based on the articles \cite{Schenkel:2010sc,Schenkel:2010zi,Schenkel:2010jr}.

%%%%%%%%%%%%%%%%%%%%%%%%%%%%%%%%%%%%%%%%%%%%%%%%%%%%%%%

\section{\label{sec:exampleswaveop}Examples of deformed wave operators}
For discussing explicit examples of deformed wave operators it is convenient to
rewrite the deformed action functional (\ref{eqn:defaction}) for the real scalar field $\Phi$
in terms of local bases. 

Firstly, let us use a generic local coordinate basis $\lbrace \partial_\mu\in \Xi[[\lambda]]:\mu=1,\dots,N\rbrace$
for the vector fields $\Xi[[\lambda]]$. We define the $\star$-dual basis of one-forms 
$\lbrace dx_\star^\mu\in\Omega^1[[\lambda]]:\mu=1,\dots,N\rbrace$ by
$\pair{\partial_\mu}{dx_\star^\nu}_\star = \delta_\mu^\nu$. As a consequence of $\mathcal{F}=1\otimes 1+\mathcal{O}(\lambda)$
we obtain $dx_\star^\mu = dx^\mu +\mathcal{O}(\lambda)$, for all $\mu$, where $\lbrace dx^\mu\rbrace $ is the
undeformed dual basis of $\lbrace \partial_\mu\rbrace$ defined by $\pair{\partial_\mu}{dx^\nu}=\delta_\mu^\nu$.
The differential $\dd$ acting on functions corresponds to the undeformed one.
Thus, we have in the undeformed basis $\dd \varphi = dx^\mu\,\partial_\mu\varphi$, for all $\varphi\in C^\infty(\MM)[[\lambda]]$.
Since $\lbrace dx_\star^\mu\rbrace$ is also a basis of $\Omega^1[[\lambda]]$ we can find for all $\varphi$ 
coefficient functions $\partial_{\star\mu}\varphi\in C^\infty(\MM)[[\lambda]]$, such that 
$\dd \varphi = dx^\mu\,\partial_\mu \varphi= dx_\star^\mu\star\partial_{\star\mu}\varphi$.
Evaluating the $\star$-products and $dx_\star^\mu$ on the right hand side of this expression we
can construct the deformed partial derivatives $\partial_{\star\mu}$ order by order in $\lambda$.
The inverse metric field can be expressed in the local basis as
$g^{-1}_\star = \partial_\mu\otimes_{A_\star} g_\star^{\mu\nu}\star\partial_\nu$.
The involution $\ast$ on $\Xi[[\lambda]]$ satisfies $\partial_\mu^\ast =-\partial_\mu$ \cite{Aschieri:2005zs}.
From this, the reality of the twist and the hermiticity of $g_\star^{-1}$
we obtain for the coefficient functions
\begin{flalign}
 \bigl(g_\star^{\mu\nu}\bigr)^\ast =  g_\star^{\nu\mu}~.
\end{flalign}
Furthermore, for real functions $\varphi$ we have 
\begin{flalign}
\dd\varphi = -(\dd\varphi)^\ast = 
-(dx_\star^\mu\star\partial_{\star\mu}\varphi)^\ast 
= -\bigl(\partial_{\star\mu}\varphi\bigr)^\ast\star dx_\star^{\mu\ast}~.
\end{flalign}
With this we can express the kinetic term in the deformed action (\ref{eqn:defaction})
in terms of the local coordinate basis
\begin{flalign}
 \nn\pair{\pair{\dd\Phi}{g_\star^{-1}}_\star}{\dd\Phi}_\star &= 
-\bigl(\partial_{\star\mu}\Phi\bigr)^\ast\star \pair{dx_\star^{\mu\ast}}{\partial_\nu}_\star \star g_\star^{\nu\rho}
\star\pair{\partial_\rho}{dx_\star^\sigma}_\star\star\partial_{\star\sigma}\Phi\\
\nn &=-\bigl(\partial_{\star\mu}\Phi\bigr)^\ast\star\pair{\partial_\nu^\ast}{dx_\star^\mu}_\star^\ast\star g_\star^{\nu\rho}\star
\partial_{\star\rho}\Phi\\
&\hspace{-4.5mm}\stackrel{\partial_\nu^\ast=-\partial_\nu}{=} \bigl(\partial_{\star\mu}\Phi\bigr)^\ast \star g_\star^{\mu\rho}\star
\partial_{\star\rho}\Phi~.
\end{flalign}
The deformed action (\ref{eqn:defaction}) then reads  in the basis
\begin{flalign}
\label{eqn:defactioncobas}
 S_\star[\Phi] = -\frac{1}{2}\int\limits_\MM \Bigl(
\bigl(\partial_{\star\mu}\Phi\bigr)^\ast \star g_\star^{\mu\nu}\star
\partial_{\star\nu}\Phi + M^2\,\Phi\star\Phi
\Bigr)\star\vols~,
\end{flalign}
where we could also express the volume form in terms of the basis,  
$\vols = \gamma_\star\star dx_\star^1\wedge_\star\cdots \wedge_\star dx_\star^N$.
By classical correspondence we know that $\gamma_\star = \sqrt{-\text{det}(g_{\mu\nu})} +\mathcal{O}(\lambda)$.
The equation of motion in the basis can be derived from (\ref{eqn:defactioncobas}) order by order in $\lambda$ via
multiple integrations by parts.
Even though the deformed action (\ref{eqn:defactioncobas}) looks pretty familiar and
simple, there are two issues which in general complicate the application of this formula:
One requires the inverse metric field $g_\star^{\mu\nu}$ in the $\star$-basis and one has to
determine the deformed partial derivatives $\partial_{\star\mu}$, which are higher differential operators,
 to all orders in $\lambda$ recursively.

As we have observed in Chapter \ref{chap:ncgsol}, Section \ref{sec:ncgbasis}, 
the formulae of noncommutative differential geometry simplify when we use a nice basis.
As a reminder, a nice basis $\lbrace e_a\in \Xi[[\lambda]]:a=1,\dots,N \rbrace$
satisfies the conditions $[e_a,e_b]=0$ and $[\mathfrak{F},e_a]=\lbrace 0\rbrace$, for all $a,b$,
where $\mathfrak{F}\subseteq \Xi$ is the Lie subalgebra generating the twist, i.e.~$\mathcal{F}\in 
(U\mathfrak{F}\otimes U\mathfrak{F})[[\lambda]]$. For a large class of abelian twists, the nonexotic ones, such a basis
exists for opens $U\subseteq\MM$ around almost all points $p\in\MM$ \cite{Aschieri:2009qh}.
Let us now derive an expression for the deformed action (\ref{eqn:defaction})
in the nice basis $\lbrace e_a \rbrace$ and its dual $\lbrace \theta^a\rbrace$, which is also 
$\mathfrak{F}$-invariant $\mathcal{L}_\mathfrak{F}(\theta^a)=\lbrace 0\rbrace$.
For the inverse metric field we can write $g_\star^{-1} = e_a\otimes_{A_\star} g_\star^{ab}\star e_b = e_a\otimes_A g_\star^{ab}\,e_b $.
The differential of a function $\varphi$ reads $\dd\varphi=\theta^a\star e_a(\varphi) = \theta^a\,e_a(\varphi)$,
where $e_a(\cdot)$ denotes the vector field action (Lie derivative) of $e_a$ on functions.
The basis expression for the kinetic term is
\begin{flalign}
 \pair{\pair{\dd\Phi}{g_\star^{-1}}_\star}{\dd\Phi}_\star = 
e_a(\Phi)\star \pair{\theta^a}{e_b}_\star\star g_\star^{bc}\star \pair{e_c}{\theta^d}_\star \star e_d(\Phi) = 
e_a(\Phi)\star g_\star^{ab}\star e_b(\Phi)~.
\end{flalign}
Plugging this into the action yields
\begin{flalign}
 \label{eqn:defactionnicebas}
 S_\star[\Phi] = -\frac{1}{2}\int\limits_\MM \Bigl(
\bigl(e_a(\Phi) \star g_\star^{ab}\star e_b(\Phi) + M^2\,\Phi\star\Phi
\Bigr)\star\gamma_\star\star \cnt~,
\end{flalign}
where $\cnt = \theta^1\wedge_\star \cdots\wedge_\star\theta^N = \theta^1\wedge \cdots\wedge\theta^N $
is the basis top-form satisfying $\mathcal{L}_\mathfrak{F}(\cnt)=\lbrace 0\rbrace$ and $\gamma_\star$ is the volume factor
determined by $\vols = \gamma_\star \star \cnt = \gamma_\star\,\cnt$.
The obvious advantage of (\ref{eqn:defactionnicebas}) compared to (\ref{eqn:defactioncobas})
is that no higher differential operators $\partial_{\star\mu}$ appear.
Using graded cyclicity (\ref{eqn:gradedcyc}), integration by parts, $\mathfrak{F}$-invariance
of $\cnt$ and $\mathcal{L}_{e_a}(\cnt)=0$, for all $a$,
we can easily calculate the equation of motion operator by varying (\ref{eqn:defactionnicebas}) and obtain
\begin{flalign}
 P_\star^\top(\Phi) = \frac{1}{2}\Bigl( e_a\bigl(g_\star^{ab} \star e_b(\Phi)\star\gamma_\star\bigr) 
+ e_a\bigl(\gamma_\star\star e_b(\Phi)\star g_\star^{ba}\bigr) - M^2 \bigl(\Phi\star\gamma_\star + \gamma_\star\star\Phi\bigr) 
 \Bigr)\star\cnt~.
\end{flalign}
The scalar valued equation of motion operator is given by
\begin{flalign}
\label{eqn:eombas}
 P_\star(\Phi) = \frac{1}{2}\Bigl( e_a\bigl(g_\star^{ab} \star e_b(\Phi)\star\gamma_\star\bigr) 
+ e_a\bigl(\gamma_\star\star e_b(\Phi)\star g_\star^{ba}\bigr) - M^2 \bigl(\Phi\star\gamma_\star + 
\gamma_\star\star\Phi\bigr)  \Bigr)\star\gamma_\star^{-1_\star}~,
\end{flalign}
where $\gamma_\star^{-1_\star}$ is the $\star$-inverse of $\gamma_\star$, i.e.~$\gamma_\star\star\gamma_\star^{-1_\star}=
\gamma_\star^{-1_\star}\star\gamma_\star =1$.
For completeness, the real equation of motion operator $\widetilde{P}_\star$ reads
\begin{flalign}
\label{eqn:eombastilde}
 \widetilde{P}_\star(\Phi) = \frac{1}{2}\Bigl( e_a\bigl(g_\star^{ab} \star e_b(\Phi)\star\gamma_\star\bigr) 
+ e_a\bigl(\gamma_\star\star e_b(\Phi)\star g_\star^{ba}\bigr) - M^2 \bigl(\Phi\star\gamma_\star + 
\gamma_\star\star\Phi\bigr)  \Bigr)\, \gamma^{-1}~,
\end{flalign}
where $\gamma^{-1}$ is the inverse volume factor of the undeformed volume form $\vol = \gamma \,\cnt$.

We are now going to derive explicit examples of these operators, where the twist is given by an abelian twist
\begin{flalign}
\label{eqn:abeliantwistqft}
 \mathcal{F} = \exp\left(-\frac{i\lambda}{2} \Theta^{\alpha\beta} X_\alpha\otimes X_\beta\right)~,
\end{flalign}
with $[X_\alpha,X_\beta]=0$ for all $\alpha,\beta$ and $\Theta$ canonical.

\subsection{Noncommutative Minkowski spacetimes:}
The simplest example is to consider the $4$-dimensional Minkowski spacetime, i.e.~$\MM=\bbR^4$ with metric
$g_\star^{-1}=g^{-1}=\partial_\mu\otimes_A g^{\mu\nu}\partial_\nu$, where in Cartesian coordinates
$g^{\mu\nu}=\text{diag}(-1,1,1,1)^{\mu\nu}$.
The volume form is taken to be the classical one $\vols = \vol = dt\wedge dx^1\wedge dx^2\wedge dx^3$.

If we deform this spacetime with the Moyal-Weyl twist, the nice basis and coordinate basis chosen above
coincide and we find for the equation of motion operators (\ref{eqn:eombas}) and (\ref{eqn:eombastilde})
\begin{flalign}
 P_\star(\Phi) = \widetilde{P}_\star(\Phi) = P(\Phi) = g^{\mu\nu}\partial_\mu\partial_\nu\Phi -M^2\,\Phi~.
\end{flalign}
Thus, the wave operator for the Moyal-Weyl deformed Minkowski spacetime is undeformed, which is
a well-known result.

Consider now the abelian twist (\ref{eqn:abeliantwistqft}) with $X_1=r\partial_r$ and $X_2=\partial_t$,
where $r$ is the spatial radius coordinate. 
The resulting commutation relations are
$[t\stackrel{\star}{,}r] = -i\,\lambda\,r$, i.e.~we are dealing with a $\kappa$-deformed Minkowski spacetime.
Note that since $X_2$ is a Killing vector field, the undeformed metric field solves 
the noncommutative vacuum Einstein equations exactly.
A nice basis for this deformation is given by
\begin{flalign}
\label{eqn:nicebasqftmink}
 e_1=\partial_t~,\quad e_2=r\partial_r~,\quad e_3=\partial_\zeta~,\quad e_4=\partial_\phi~,
\end{flalign}
where $\zeta$ is the polar and $\phi$ is the azimuthal angle. The dual basis is
\begin{flalign}
 \theta^1=dt~,\quad \theta^2 = \frac{dr}{r} ~,\quad \theta^3 = d\zeta~,\quad\theta^4 = d\phi~.
\end{flalign}
The inverse metric field in this basis reads $g_\star^{ab} = \text{diag}\bigl(-1,r^{-2},r^{-2},(r\,\sin\zeta)^{-2}\bigr)^{ab}$
and the volume factor is $\gamma_\star= \gamma = r^3\,\sin\zeta$. Note that $\gamma_\star^{-1_\star} = \gamma^{-1} = 
r^{-3}\,(\sin\zeta)^{-1}$.
Evaluating all $\star$-products in the equation of motion operator (\ref{eqn:eombas})
we obtain
\begin{flalign}
 P_\star(\Phi) = -\frac{1}{2} \bigl(1+e^{i3\lambda \partial_t}\bigr)\bigl(\partial_t^2 + M^2\bigr)\Phi +\frac{1}{2}\bigl(
e^{-i\lambda\partial_t} + e^{i4\lambda\partial_t}\bigr)\bigtriangleup\Phi~,
\end{flalign}
where $\bigtriangleup$ is the spatial Laplacian. The tilded operator reads
\begin{flalign}
 \widetilde{P}_\star(\Phi) = -\cosh\left(\frac{3\lambda}{2}\,i\partial_t\right)\bigl(\partial_t^2 + M^2\bigr) \Phi
+\cosh\left(\frac{5\lambda}{2}\,i\partial_t\right)\bigtriangleup\Phi~.
\end{flalign}
For this derivation the following identities are useful
\begin{flalign}
\label{eqn:relationsqftapp}
 (r^n)\star h = r^n\, e^{\frac{in\lambda}{2}\partial_t}h~,\quad h\star(r^n) = r^n\,e^{-\frac{in\lambda}{2}\partial_t}h~,
\end{flalign}
for all $n\in \mathbb{Z}$ and $h\in C^\infty(\MM)[[\lambda]]$.

Thus, we obtain a nontrivial wave operator for the scalar field on the deformed Minkowski spacetime.
The second order corrections to the Green's operators for this model are calculated in the Appendix 
\ref{eqn:lambda2green}.

\subsection{Noncommutative de Sitter spacetimes:}
We consider the flat slicing of the $4$-dimensional de Sitter spacetime, i.e.~$\MM=\bbR^4$ 
with metric $g_\star^{-1} = g^{-1} =\partial_\mu\otimes_A g^{\mu\nu}\,\partial_\nu$,
where in Cartesian coordinates the metric coefficient functions read 
$g^{\mu\nu} = \text{diag}\bigl(-1,e^{-2Ht},e^{-2Ht},e^{-2Ht}\bigr)^{\mu\nu}$.
Here $H>0$ denotes the Hubble parameter, i.e.~the expansion rate of the universe.
The volume form is taken to be the classical one $\vols=\vol = e^{3Ht}\,dt\wedge dx^1\wedge dx^2\wedge dx^3$.

We deform this spacetime by an abelian twist with $X_1 = r\partial_r$ and $X_2=\partial_t$,
where again $r$ denotes the spatial radius coordinate. The commutation relations are
$[t\stackrel{\star}{,}r]=-i\,\lambda\, r$. As we have discussed in Chapter \ref{chap:ncgsol}
the noncommutative Einstein equations in presence of a cosmological constant are solved exactly for this model.
A nice basis for this deformation is given by (\ref{eqn:nicebasqftmink}). In this basis
the inverse metric field reads $g_\star^{ab}=\text{diag}\bigl(-1,e^{-2Ht}r^{-2},e^{-2Ht}r^{-2},
e^{-2Ht}(r\,\sin\zeta)^{-2}\bigr)^{ab}$ and the volume factor is
$\gamma_\star=\gamma=  e^{3Ht}\,r^3\,\sin\zeta$. Note that $\gamma_\star^{-1_\star} = \gamma^{-1} = 
e^{-3Ht}\, r^{-3}\,(\sin\zeta)^{-1}$.
Evaluating all $\star$-products in the equation of motion operator (\ref{eqn:eombas})
we obtain
\begin{flalign}
 P_\star(\Phi) = -\frac{1}{2}\bigl(1+e^{i3\lambda \mathcal{D}} \bigr)\bigl( \partial_t^2 + 3H\partial_t +M^2 \bigr)\Phi
 +\frac{1}{2}\bigl(e^{-i\lambda \mathcal{D}} + e^{i4\lambda\mathcal{D}}\bigr)e^{-2Ht}\bigtriangleup\Phi~,
\end{flalign}
where $\mathcal{D}:=\partial_t -H r\partial_r$. The tilded operator reads
\begin{flalign}
\label{eqn:isotropicdesitter}
 \widetilde{P}_\star(\Phi) = -\cosh\left(\frac{3\lambda}{2}\,i\mathcal{D}\right)\bigl( \partial_t^2 + 3H\partial_t +M^2 \bigr)\Phi
 +\cosh\left(\frac{5\lambda}{2}\,i\mathcal{D}\right)e^{-2Ht}\bigtriangleup\Phi~.
\end{flalign}
In this derivation we have used the identities
\begin{flalign}
 \bigl(e^{nHt}r^n\bigr)\star h = e^{nHt}r^n\,e^{\frac{in\lambda}{2}\mathcal{D}}h~,\quad 
h\star \bigl(e^{nHt}r^n\bigr) = e^{nHt}r^n\,e^{-\frac{in\lambda}{2}\mathcal{D}}h~,
\end{flalign}
for all $n\in\mathbb{Z}$ and $h\in C^\infty(\MM)[[\lambda]]$.

As a next deformation we consider an abelian twist with $X_1=\partial_\phi$ and $X_2=\partial_t$,
leading to a nontrivial angle-time commutator. This model has been discussed in Chapter \ref{chap:ncgsol}.
Since $X_1$ is a Killing vector field, the undeformed de Sitter metric solves the noncommutative Einstein
equations in presence of a cosmological constant exactly. As a nice basis for this deformation we can simply use
the spherical coordinate basis
\begin{flalign}
 e_1 =\partial_t~,\quad e_2=\partial_r~,\quad e_3 = \partial_\zeta~,\quad e_4 = \partial_\phi~,
\end{flalign}
and its dual
\begin{flalign}
 \theta^1=dt~,\quad \theta^2 = dr~,\quad \theta^3 = d\zeta~,\quad \theta^4=d\phi~.
\end{flalign}
The inverse metric in this basis is given by $g_\star^{ab} = \text{diag}\bigl(-1,e^{-2Ht},e^{-2Ht}r^{-2},e^{-2Ht}(r\,\sin\zeta)^{-2}
\bigr)^{ab}$ and the volume factor reads $\gamma_\star =\gamma= e^{3Ht}\,r^2\,\sin\zeta$.
We find $\gamma_\star^{-1_\star} = \gamma^{-1} = e^{-3Ht}\, r^{-2}\,(\sin\zeta)^{-1}$.
Evaluating all $\star$-products in the equation of motion operator (\ref{eqn:eombas})
we obtain
\begin{flalign}
 P_\star(\Phi) =  -\frac{1}{2}\bigl(1+e^{-i3\lambda H\partial_\phi} \bigr)\bigl( \partial_t^2 + 3H\partial_t +M^2 \bigr)\Phi
 +\frac{1}{2}\bigl(e^{i\lambda H\partial_\phi} + e^{-i4\lambda H\partial_\phi}\bigr)e^{-2Ht}\bigtriangleup\Phi~.
\end{flalign}
The tilded operator reads
\begin{flalign}
\label{eqn:frwextimeangle}
 \widetilde{P}_\star(\Phi) = -\cosh\left(\frac{3\lambda H}{2}\,i\partial_\phi\right)\bigl( \partial_t^2 + 3H\partial_t +M^2 \bigr)\Phi
 +\cosh\left(\frac{5\lambda H}{2}\,i\partial_\phi\right)e^{-2Ht}\bigtriangleup\Phi~.
\end{flalign}
Note that the deformation is governed by the product $\lambda H$. Thus,
in a slowly expanding universe noncommutativity will be small and
in a rapidly expanding one large. This is a physically very interesting feature which
might be able to explain why today we have not yet observed effects of noncommutative geometry.

As a last model we consider an abelian twist with $X_1=\partial_\phi$ and $X_2=r\partial_r$,
leading to a nontrivial angle-radius commutator. This model has been discussed in Chapter \ref{chap:ncgsol}.
Since $X_1$ is a Killing vector field, the undeformed de Sitter metric solves the noncommutative Einstein
equations in presence of a cosmological constant exactly. A nice basis for this model is given by 
(\ref{eqn:nicebasqftmink}). The inverse metric in this basis reads 
$g_\star^{ab}=\text{diag}\bigl(-1,e^{-2Ht}r^{-2},e^{-2Ht}r^{-2},
e^{-2Ht}(r\,\sin\zeta)^{-2}\bigr)^{ab}$ and the volume factor is
$\gamma_\star= \gamma=  e^{3Ht}\,r^3\,\sin\zeta$. Note that $\gamma_\star^{-1_\star} = \gamma^{-1} = 
e^{-3Ht}\, r^{-3}\,(\sin\zeta)^{-1}$.
Evaluating all $\star$-products in the equation of motion operator (\ref{eqn:eombas})
we obtain
\begin{flalign}
 P_\star(\Phi) =  -\frac{1}{2}\bigl(1+e^{-i3\lambda\partial_\phi} \bigr)\bigl( \partial_t^2 + 3H\partial_t +M^2 \bigr)\Phi
 +\frac{1}{2}\bigl(e^{i\lambda\partial_\phi} + e^{-i4\lambda\partial_\phi}\bigr)e^{-2Ht}\bigtriangleup\Phi~.
\end{flalign}
The tilded operator reads
\begin{flalign}
 \widetilde{P}_\star(\Phi) = -\cosh\left(\frac{3\lambda}{2}\,i\partial_\phi\right)\bigl( \partial_t^2 + 3H\partial_t +M^2 \bigr)\Phi
 +\cosh\left(\frac{5\lambda}{2}\,i\partial_\phi\right)e^{-2Ht}\bigtriangleup\Phi~.
\end{flalign}
This operator has a similar structure as (\ref{eqn:frwextimeangle}), with the difference that
the deformation is governed by a dimensionless $\lambda$, while the latter one is governed by $\lambda H$.

\subsection{Noncommutative Schwarzschild spacetime:}
We consider the exterior of the Schwarzschild black hole, i.e.~$\MM\subset\bbR^4$ 
with metric $g_\star^{-1} = g^{-1} =\partial_\mu\otimes_A g^{\mu\nu}\,\partial_\nu$,
where in spherical coordinates the metric coefficient functions read 
$g^{\mu\nu} = \text{diag}\bigl(-Q(r)^{-1},Q(r),r^{-2},(r\,\sin\zeta)^{-2}\bigr)^{\mu\nu}$
with $Q(r) = 1-\frac{r_s}{r}$.
The volume form is taken to be the classical one $\vols=\vol = r^2\,\sin\zeta\,dt\wedge dr\wedge d\zeta\wedge d\phi$.

We deform the black hole with an abelian twist constructed by $X_1=\partial_t$ and $X_2=r\partial_r$,
which leads to a time-radius noncommutativity.
This model was discussed in Chapter \ref{chap:ncgsol}, where it was found that
it is invariant under all classical black hole symmetries and that it solves the noncommutative Einstein equations
exactly. A nice basis for this model is given by (\ref{eqn:nicebasqftmink}). The inverse metric field in this
basis reads $g_\star^{ab} = \text{diag}\bigl(-Q(r)^{-1},Q(r)\, r^{-2},r^{-2},(r\sin\zeta)^{-2}\bigr)^{ab}$
and the volume factor is $\gamma_\star =\gamma= r^3\,\sin\zeta$. Note that $\gamma_\star^{-1_\star} = \gamma^{-1} = 
r^{-3}\,(\sin\zeta)^{-1}$.
Evaluating the equation of motion operator (\ref{eqn:eombas}) using (\ref{eqn:relationsqftapp})
we find
\begin{flalign}
 \nn P_\star(\Phi) &= -\frac{1}{2} \bigl(Q^{-1}\star \partial_t^2\Phi + 
e^{-i3\lambda\partial_t}\partial_t^2\Phi \star Q^{-1}\bigr) - \frac{M^2}{2}\bigl( 1+e^{-i3\lambda\partial_t}  \bigr)\Phi\\
&+\frac{1}{2r^2}\,\partial_r\Bigl[r^2\bigl(Q\star e^{i\lambda\partial_t}\partial_r\Phi + e^{-i4\lambda\partial_t}\partial_r\Phi\star Q\bigr)\Bigr]
+\frac{1}{2r^2} \bigl(e^{i\lambda\partial_t} + e^{-i4\lambda\partial_t}\bigr)\bigtriangleup_{S^2}\Phi~,
\end{flalign}
where $\bigtriangleup_{S^2} = (\sin\zeta)^{-1}\partial_\zeta\sin\zeta\partial_\zeta + (\sin\zeta)^{-2}\partial_\phi^2$
is the Laplacian on the unit two-sphere.
The tilded operator reads
\begin{flalign}
 \nn \widetilde{P}_\star(\Phi) &= -\frac{1}{2} \bigl(Q^{-1}\star e^{\frac{i3\lambda}{2}\partial_t}\partial_t^2\Phi + 
e^{-\frac{i3\lambda}{2}\partial_t}\partial_t^2\Phi \star Q^{-1}\bigr) - M^2\,\cosh\left(\frac{3\lambda}{2}\,i\partial_t\right)\Phi\\
&+\frac{1}{2r^2}\,\partial_r\Bigl[r^2\bigl(Q\star e^{\frac{i5\lambda}{2}\partial_t}\partial_r\Phi + e^{-\frac{i5\lambda}{2}\partial_t}\partial_r\Phi\star Q\bigr)\Bigr]
+\frac{1}{r^2} \cosh\left(\frac{5\lambda}{2}\,i\partial_t\right)\bigtriangleup_{S^2}\Phi~.
\end{flalign}
In addition to the exponentials of time derivatives, which we also found in the previous examples,
there are $\star$-products involving either $Q(r)$ or $Q^{-1}(r)$. While the former ones are
easily evaluated, since $Q(r)$ is a sum of eigenfunctions of the dilation operator $r\partial_r$, 
this is not the case for the latter. Nevertheless, one can evaluate these products up to the desired order
in the deformation parameter $\lambda$ by using the explicit expression of the $\star$-product and calculating
the scale derivatives $r\partial_r$ of $Q^{-1}$.

\subsection{Noncommutative anti-de Sitter spacetime:}
We consider the Poincar{\'e} patch of the $5$-dimensional anti-de Sitter spacetime, i.e.~$\MM=\bbR^5$
with metric $g_\star^{-1} = g^{-1} =\partial_M\otimes_A g^{MN}\,\partial_N = \partial_\mu\otimes_A e^{2ky}g^{\mu\nu}\,\partial_\nu
+\partial_y\otimes_A \partial_y$,
where $g^{\mu\nu} = \text{diag}\bigl(-1,1,1,1\bigr)^{\mu\nu}$ is the $4$-dimensional
Minkowski metric and $k$ is related to the curvature.
The volume form is taken to be the classical one $\vols=\vol = e^{-4ky}\,dt\wedge dx^1\wedge dx^2\wedge dx^3\wedge dy$.

We deform this spacetime by an abelian twist generated by $2n$ vector fields $X_\alpha$ defined as follows
\begin{flalign}
 X_{2a-1} = T_{2a-1}^\mu\partial_\mu~,\quad X_{2a} = \vartheta(y)\,T_{2a}^\mu\partial_\mu~,\qquad \text{for } a=1,\dots,n~,
\end{flalign}
where $T_\alpha^\mu$ are constant and real matrices and $\vartheta(y)$ is a smooth and real function.
Note that $X_{2a-1}$ are Killing vector fields for all $a$, and therewith 
the undeformed anti-de Sitter metric solves the noncommutative Einstein
equations in presence of a cosmological constant exactly. The commutation relations for this model read
\begin{flalign}
 \starcom{x^\mu}{x^\nu} = i\lambda\,\vartheta(y)\,\Theta^{\alpha\beta}\,T_\alpha^\mu\, T_\beta^\nu~,\quad \starcom{x^\mu}{y}=0~.
\end{flalign}
Thus, we have a canonical noncommutativity on the $\bbR^4$-slices of constant $y$, which scales along $y$ with $\vartheta(y)$.
Since in particle physics the coordinate $y$ is interpreted as an extradimension, while $x^\mu$
are coordinates of our $4$-dimensional world, this means that the strength
of noncommutativity depends on our position in the higher dimensional space.
For first phenomenological investigations in models of this kind see \cite{Ohl:2010bh}.

Instead of constructing a nice basis for this model, it is more convenient to evaluate the deformed action functional
in the coordinate basis (\ref{eqn:defactioncobas}). The metric in the $\star$-product basis
$g_\star^{-1}=\partial_M\otimes_{A_\star} g_\star^{MN}\star\partial_N$ is found to agree with $g^{MN}$
and we have the useful relation $\mathcal{L}_{X_\alpha}(\vol) =0$ for all $\alpha$.
Using this and graded cyclicity of the integral (\ref{eqn:gradedcyc}) we obtain for the action
\begin{flalign}
 S_\star[\Phi] = -\frac{1}{2}\int\limits_\MM \Bigl(\bigl(\partial_{\star\mu}\Phi\bigr)^\ast\,e^{2ky}\,g^{\mu\nu}\,\partial_{\star\nu}\Phi
+ \bigl(\partial_{\star y}\Phi\bigr)^\ast\,\partial_{\star y}\Phi\Bigr)\,e^{-4 ky} dt\wedge dx^1\wedge dx^2\wedge dx^3\wedge dy~.
\end{flalign}
We have set the ``bulk mass'' $M^2=0$ and obtain effective mass terms later via the Kaluza-Klein reduction.
It remains to calculate the deformed partial derivatives $\partial_{\star M}$ by comparing
both sides of $dx^M\,\partial_M\Phi = dx^M\star\partial_{\star M}\Phi$. We find the exact expression
\begin{flalign}
 \partial_{\star\mu} = \partial_\mu~,\quad \partial_{\star y} = \partial_y + \frac{i\lambda}{2}\vartheta^\prime(y)
\sum\limits_{a=1}^n T_{2a-1}^\mu T_{2a}^\nu\partial_\mu\partial_\nu =: \partial_y +\frac{i\lambda}{2}\vartheta^\prime(y)\,
\mathbb{T}~,
\end{flalign}
where $\vartheta^\prime$ denotes the derivative of $\vartheta$. Note that the deformed
partial derivative $\partial_{\star y}$ is as expected a higher differential operator.
Inserting this into the action yields
\begin{flalign}
 S_\star[\Phi] = -\frac{1}{2}\int\limits_\MM \Bigl(\partial_\mu\Phi\,e^{2ky}\,g^{\mu\nu}\,\partial_{\nu}\Phi
+ \partial_y\Phi\,\partial_{y}\Phi + \frac{\lambda^2}{4} \vartheta^\prime(y)^2 ~\mathbb{T}\Phi\,\mathbb{T}\Phi\Bigr)\,\vol~.
\end{flalign}
Note again that this is an exact expression valid to all orders in $\lambda$.

To obtain an effective $4$-dimensional theory we introduce two $4$-dimensional branes 
at the positions $y=0$ and $y=R\pi$, with $R$ denoting the ``radius'' of the extradimension,
and restrict the spacetime to the interval $y\in[0,R\pi]$. This is the so-called Randall-Sundrum model \cite{Randall:1999ee}.
The noncommutative Einstein equations are also solved exactly in presence of these branes as discussed in
\cite{Ohl:2010bh}.
We make the Kaluza-Klein ansatz for the scalar field 
\begin{flalign}
 \Phi(x^\mu,y) = \sum\limits_{i=0}^\infty \Phi_i(x^\mu)\,t_{i}(y)~,
\end{flalign}
where $\Phi_i$ are the effective $4$-dimensional fields and $\lbrace t_i\rbrace$ is a complete set of 
eigenfunctions of the mass operator $\hat{O}:=-e^{2ky}\partial_ye^{-4ky}\partial_y$ 
satisfying Neumann or Dirichlet boundary conditions. The eigenfunctions $\lbrace t_i\rbrace$ are
orthonormal with respect to the standard scalar product $\int_0^{R\pi}dy \,e^{-2ky}\, t_i t_j =\delta_{ij}$. 
We obtain the Kaluza-Klein reduced action
\begin{flalign}
\label{eqn:kkaction}
 S_\star[\lbrace\Phi_i\rbrace] = -\frac{1}{2}\sum\limits_{i=0}^\infty~ \int\limits_{\bbR^4} \Bigl(
\partial_\mu\Phi_i g^{\mu\nu}\partial_\nu\Phi_i + M_i^2\,\Phi_i^2 + \lambda^2 \sum\limits_{j=0}^\infty C_{ij}~ \mathbb{T}\Phi_i\,
\mathbb{T}\Phi_j \Bigr)\,\mathrm{vol}_4~,
\end{flalign}
where $\mathrm{vol}_4 = dt\wedge dx^1\wedge dx^2\wedge dx^3$ and the masses $M_i^2$ and couplings $C_{ij}$
are given by
\begin{flalign}
 \hat{O}t_i =M_i^2\,t_i~,\quad C_{ij} = \int\limits_{0}^{R\pi} dy \,\frac{\vartheta^\prime(y)^2}{4} \,e^{-4ky}\, t_i(y)\, t_j(y)~.
\end{flalign}

We finish this section by an interesting observation, which was first made in \cite{Schenkel:2010zi}.
We can specialize our deformation to obtain a $z{=}2$-Ho{\v r}ava-Lifshitz propagator  \cite{Horava:2009uw} 
for the $4$-dimensional scalar fields $\Phi_i$, while not affecting local potentials. 
To this end, we set $n=3$ and choose $T_{2a-1}^\mu = T_{2a}^\mu=\delta_a^\mu$, resulting in the spatial Laplacian
$\mathbb{T}=\bigtriangleup$. Choosing $\vartheta(y)\sim e^{ky}$ such that $C_{ij}=C_i\delta_{ij}$
is diagonal, we obtain from the deformed action (\ref{eqn:kkaction}) propagator denominators of the form
\begin{flalign}
 E^2 - \mathbf{k}^2 -\lambda^2 C_i \mathbf{k}^4 -M_i^2~,
\end{flalign}
for all individual Kaluza-Klein modes. Here $E$ denotes the energy and $\mathbf{k}$ the spatial momentum.
It is known that propagators of this form improve the high energy behavior of quantum field theories,
without introducing ghost states, see e.g.~\cite{Horava:2009uw} and references therein.
See also Section \ref{sec:z=2qft} of this chapter.

%%%%%%%%%%%%%%%%%%%%%%%%%%%%%%%%%%%%%%%%%%%%%%%%%%%%%%%

\section{\label{sec:homothetic}Homothetic Killing deformations of FRW universes}
We consider deformations by homothetic Killing vector fields, leading to exactly treatable convergent deformations
of scalar quantum field theories. For this, we first provide a short review on homothetic Killing vector fields.

Let $(\MM, g)$ be an $N$-dimensional smooth Lorentzian manifold and let $\Xi$
be the smooth and complex vector fields on $\MM$.
\begin{defi}
A vector field $v\in\Xi$ is called a {\it Killing vector field}, if $\mathcal{L}_v (g)=0$.
It is called a {\it homothetic Killing vector field}, if $\mathcal{L}_v(g)= c_v\, g$
with $c_v\in\bbC$. $v$ is called {\it proper}, if $c_v\neq 0$.
\end{defi}
Obviously, each Killing vector field is also a homothetic Killing vector field with $c_v=0$. 
A homothetic Killing vector field is a special case of a conformal Killing vector field $v\in\Xi$,
 satisfying $\mathcal{L}_v(g)=h\,g$ with $h\in C^\infty(\MM)$. We do not discuss general conformal Killing vector fields
 in the following.

We remind the reader of the following standard result.
\begin{propo}
The homothetic Killing vector fields form a Lie subalgebra $(\mathfrak{H},\com{~}{~})\subseteq (\Xi,\com{~}{~})$
of the Lie algebra of vector fields on $\mathcal{M}$.
The Killing vector fields form a Lie subalgebra $(\mathfrak{K},\com{~}{~})\subseteq (\mathfrak{H},\com{~}{~})$ and
the following inclusion holds true
\begin{flalign}
\com{\mathfrak{H}}{\mathfrak{H}}\subseteq \mathfrak{K}~.
\end{flalign}
\end{propo}
\begin{proof}
Let $v,w\in\mathfrak{H}$. Then $\mathcal{L}_{\beta\,v+\gamma\,w}(g)=\beta\mathcal{L}_v(g)+\gamma\mathcal{L}_w(g)
=(\beta\,c_v +\gamma\,c_w)g$, for all $\beta,\gamma\in\bbC$. Thus, $\mathfrak{H}$ is a vector space over $\bbC$
and $\mathfrak{K}\subseteq\mathfrak{H}$ is a vector subspace.
Furthermore,
\begin{flalign}
\mathcal{L}_{\com{v}{w}}(g)=(\mathcal{L}_{v}\circ\mathcal{L}_w-\mathcal{L}_w\circ\mathcal{L}_v)(g)= (c_v\,c_w-c_w\,c_v)g=0~,
\end{flalign}
such that $[v,w]\in\mathfrak{K}\subseteq \mathfrak{H}$. For $v,w\in\mathfrak{K}$ we trivially find $\com{v}{w}\in\mathfrak{K}$.

\end{proof}
\noindent 
It can be shown that $\dim(\mathfrak{K})\leq\dim(\mathfrak{H})\leq\dim(\mathfrak{K})+1$.
 To prove this, assume
that there are two proper homothetic Killing vector fields $v,w\in\mathfrak{H}$, satisfying
$\mathcal{L}_v(g)=c_v\,g$ and $\mathcal{L}_w(g)=c_w\,g$ with $c_v,c_w\neq0$.
Then $u:=c_w\, v-c_v\,w$ is a Killing vector field, since
 $\mathcal{L}_u(g)=\mathcal{L}_{c_w\, v-c_v\,w}(g)=(c_w c_v-c_v c_w)g=0$, and
 $w=(c_w\,v-u)/c_v$ is a linear combination of a proper homothetic Killing vector field and a Killing vector field.

Let us provide some examples of manifolds allowing for proper homothetic Killing vector fields.
\begin{ex}
\label{ex:minkhomothetic}
Let $\mathcal{M}=\bbR^N$ and let $g=g_{\mu\nu}dx^\mu\otimes_A dx^\nu$,
where $x^\mu$ are global coordinates on $\bbR^N$ and $g_{\mu\nu}=\text{diag}(-1,1,\dots,1)_{\mu\nu}$.
It is well-known that $\mathfrak{K}$ is the Lie algebra of the Poincar{\'e} group $SO(N-1,1)\ltimes \bbR^N$.
A proper homothetic Killing vector field is given by the dilation $v=x^\mu\partial_\mu$,
satisfying $\mathcal{L}_v(g) = 2\,g$.
\end{ex}
\begin{ex}[\cite{Eardley:1973fm}]
\label{ex:frwhomothetic}
Let $\mathcal{M}=(0,\infty) \times \bbR^{N-1}$ and let $g=-dt\otimes_A dt +a(t)^2\,\delta_{ij}dx^i\otimes_A dx^j$,
where $t\in(0,\infty)$ is the cosmological time, $x^i,~i=1,\dots,\,N-1$, are comoving coordinates and $a(t)$ is the scale
factor of the universe. If we assume that $a(t)\propto t^p$, where $p\in\bbR$, we have a
proper homothetic Killing vector field
\begin{flalign}
v=t\partial_t +\left(1-t\,\frac{\dot a(t)}{a(t)}\right)x^i\partial_i=t\partial_t+(1-p)\,x^i\partial_i~,
\end{flalign}
satisfying $\mathcal{L}_v(g)= 2\,g$. These spacetimes are relevant in cosmology, since a perfect fluid with
equation of state $P=\omega\rho$, where $P$ is the pressure, $\rho$ is the energy density and $\omega\in\bbR$
is a parameter, leads to a scale factor $a(t)\propto t^{\frac{2}{3(\omega+1)}}$, i.e.~$p=\frac{2}{3(\omega+1)}$. 
\end{ex}
For more examples, including the Kasner spacetime and the  plane-wave spacetime, as well as a 
construction principle for spacetimes allowing for a proper homothetic Killing vector field see \cite{Eardley:1973fm}.

\subsection{Formal aspects of homothetic Killing deformations:}
Let $\mathcal{F}$ be an abelian twist generated by two real vector fields $X_1,X_2\in \Xi$.
We call the deformation a {\it homothetic Killing deformation by two vector fields}, if
without loss of generality $X_1\in\mathfrak{K}$ and $X_2\in\mathfrak{H}$.
We shall use the normalization $\mathcal{L}_{X_2}(g) = c\,g$, with $c\in\bbR$.
From Chapter \ref{chap:ncgsol} we know that the noncommutative Einstein equations
reduce to the undeformed ones, since $X_1$ is a Killing vector field. Thus, exact solutions
where the deformed and undeformed metric field coincide exist for these deformations.

Consider the deformed action for a real scalar field
\begin{flalign}
\label{eqn:defactionhom}
 S_\star[\Phi] :=-\frac{1}{2}\int\limits_\mathcal{M}\Bigl(\pair{\pair{\dd\Phi}{g_\star^{-1}}_\star}{\dd\Phi}_\star +
 \xi \,\Phi\star \mathfrak{R}\star \Phi\Bigr) \star \vol~,
\end{flalign}
where we have set the mass to zero $M^2=0$, but introduced a coupling to curvature $\xi\in\bbR$.
Since $X_1$ is Killing, the deformed and undeformed curvature of $g$ coincide.

We obtain for the deformed wave operators
\begin{propo}
 Consider a homothetic Killing deformation by two vector fields $X_1\in\mathfrak{K}$ and $X_2\in\mathfrak{H}$
of an $N$-dimensional smooth Lorentzian manifold $(\mathcal{M},g)$. 
Then $g_\star^{-1} = g^{-1}$, where $g^{-1}\in \Xi\otimes_A\Xi$ is the undeformed inverse metric field.
The equation of motion corresponding to the scalar field action (\ref{eqn:defactionhom}) is given by
(suppressing the symbol $\mathcal{L}$ for Lie derivatives)
\begin{flalign}
\label{eqn:eomoperatorhom}
 \widetilde{P}_\star(\Phi)= \cosh\left(\frac{\lambda c}{4}\left(N+2\right) iX_1\right)\,
\bigl(\square_g -\xi\, \mathfrak{R}\bigr)\Phi=0~,
\end{flalign}
where $\square_g$ is the undeformed d'Alembert operator.
\end{propo}
\begin{proof}
The $\star$-inverse metric $g_\star^{-1}\in\bigl(\Xi\otimes_A\Xi\bigr)[[\lambda]]$ of $g$ exists and is unique. 
We show that the undeformed inverse metric $g^{-1}\in\Xi\otimes_A\Xi$ defined by $\pair{g^{-1}}{\pair{g}{v}}=v$ 
and $\pair{g}{\pair{g^{-1}}{\omega}}=\omega$, for all $v\in\Xi$ and $\omega\in\Omega^1$,
 is equal to $g_\star^{-1}$.
For $g^{-1}$ one easily finds $\mathcal{L}_{X_1}(g^{-1})=0$ and $\mathcal{L}_{X_2}(g^{-1})=-c\,g^{-1}$.
Using the homothetic Killing property we obtain
\begin{flalign}
\label{eqn:hkprop1}
\pair{g}{v}_\star = \pair{g}{e^{-\frac{i\lambda c}{2}X_1}v}~,\quad \pair{g^{-1}}{\omega}_\star=\pair{g^{-1}}{e^{\frac{i\lambda c}{2}X_1}\omega}~,
\end{flalign}
for all $v\in\Xi[[\lambda]]$ and $\omega\in\Omega^1[[\lambda]]$. Thus,
\begin{flalign}
 \pair{g^{-1}}{\pair{g}{v}_\star}_\star = \pair{g^{-1}}{\pair{g}{v}}=v~,\quad
\pair{g}{\pair{g^{-1}}{\omega}_\star}_\star = \pair{g}{\pair{g^{-1}}{\omega}}=\omega~,
\end{flalign}
by using the invariance of $g$ and $g^{-1}$ under $X_1$.

\noindent For the metric volume form one finds that $\mathcal{L}_{X_1}(\vol)=0$ and $\mathcal{L}_{X_2}(\vol)=\frac{c N}{2}\vol$.
For the curvature we have $\mathcal{L}_{X_1}(\mathfrak{R})=0$ and $\mathcal{L}_{X_2}(\mathfrak{R})=-c\,\mathfrak{R}$ 
\cite{Aschieri:2009qh}.
Using this, (\ref{eqn:hkprop1}) and graded cyclicity in order to remove one $\star$ under the integral, we obtain
for the variation of the action (\ref{eqn:defactionhom}) by functions $\delta\Phi$ of compact support
\begin{flalign}
 \delta S_\star[\Phi] = 
\int\limits_\mathcal{M}\delta\Phi\,\vol\,\cosh\left(\frac{\lambda c}{4}(N+2)i X_1\right)\,\bigl(\square_g-\xi\,\mathfrak{R}\bigr) \Phi~.
\end{flalign}

\end{proof}
\begin{rem}
Note that in case we deform by two Killing vector fields $X_1,X_2\in\mathfrak{K}$ we have $c=0$ and the equation
of motion operator $\widetilde{P}_\star$ (\ref{eqn:eomoperatorhom}) is undeformed. This is a generalization of the well-known
result that the dynamics of a free scalar field theory on the Moyal-Weyl deformed Minkowski spacetime is undeformed.
\end{rem}

Let now $(\MM,g)$ be a connected, time-oriented and globally hyperbolic Lorentzian manifold.
The construction of the Green's operators corresponding to the deformed equation of motion operator (\ref{eqn:eomoperatorhom})
is straightforward. 
We define
\begin{flalign}
\label{eqn:greenoperatorhom}
 \widetilde{\Delta}_{\star\pm}:= \Delta_{\pm}\circ \cosh\left(\frac{\lambda c}{4}\left(N+2\right)iX_1\right)^{-1}~,
\end{flalign}
where the inverse of $\cosh\left(\frac{\lambda c}{4}\left(N+2\right)iX_1\right)$ is understood
in terms of formal power series and $\Delta_\pm$ are the unique retarded and advanced
 Green's operators corresponding to the undeformed equation of motion operator $P=\square_g-\xi\,\mathfrak{R}$.
We find
\begin{subequations}
\begin{flalign}
 \widetilde{P}_\star\circ \widetilde{\Delta}_{\star\pm} & = 
\id_{C^\infty_0(\mathcal{M})[[\lambda]]}~,\\
\widetilde{\Delta}_{\star\pm}\circ \widetilde{P}_\star\vert_{C^\infty_0(\mathcal{M})[[\lambda]]} 
&= \id_{C^\infty_0(\mathcal{M})[[\lambda]]}~,
\end{flalign}
and the support property
\begin{flalign}
 \supp\bigl(\widetilde{\Delta}_{\star\pm}(\varphi)\bigr)\subseteq J_\pm\bigl(\supp(\varphi)\bigr)~,
\end{flalign}
\end{subequations}
for all $\varphi\in C_0^\infty(\mathcal{M})$, since the noncommutative corrections to $\Delta_{\pm}$ 
are finite order differential operators at every order in $\lambda$. 

The retarded-advanced Green's operator for this model is
\begin{flalign}
 \widetilde{\Delta}_\star := \widetilde{\Delta}_{\star +} -\widetilde{\Delta}_{\star -} = \Delta \circ 
\cosh\left(\frac{\lambda c}{4}\left(N+2\right)iX_1\right)^{-1}~,
\end{flalign}
resulting in the following symplectic structure on $\widetilde{V}_\star = 
C^\infty_0(\MM,\bbR)[[\lambda]]/\widetilde{P}_\star[C^\infty_0(\MM,\bbR)[[\lambda]]]$
\begin{flalign}
 \widetilde{\omega}_\star([\varphi],[\psi]) = \spp{\varphi}{\widetilde{\Delta}_\star(\psi)}~.
\end{flalign}
The symplectic isomorphism to the undeformed symplectic $\bbR[[\lambda]]$-module $(V[[\lambda]],\omega)$
is obviously given by
\begin{flalign}
 \mathbf{S}: \widetilde{V}_\star \to V[[\lambda]]\,,~[\varphi] \mapsto [S\varphi]~,
\end{flalign}
where
\begin{flalign}
 S= \sqrt{\cosh\left(\frac{\lambda c}{4}\left(N+2\right)iX_1\right)^{-1}}~.
\end{flalign}
We have $\widetilde{\omega}_\star([\varphi],[\psi]) = \omega(\mathbf{S}[\varphi],\mathbf{S}[\psi])$, 
for all $[\varphi],[\psi]\in \widetilde{V}_\star$.
\begin{rem}
 Note that the retarded and advanced symplectic isomorphism $\mathbf{S}_\pm$ of Chapter \ref{chap:qftcon}
coincide for homothetic Killing deformations by two vector fields.
\end{rem}
\noindent The quantum field theory can be constructed along the lines of Chapter \ref{chap:qftdef}.

This shows that the formal theory of homothetic Killing deformations by two vector fields
is very simple. This allows us to discuss convergent examples, which is the main goal of the
remaining part of this section.

\subsection{\label{ex:frwqft}A spatially flat FRW toy-model:}
We apply the formalism presented in the previous subsection to a toy-model.
We use a special choice of the FRW spacetime of Example \ref{ex:frwhomothetic}. 
Let $\mathcal{M}=(0,\infty)\times\bbR^3$, i.e.~$N=4$,
and let $t$ and $x^i,~i\in\lbrace1,2,3\rbrace,$ be global coordinates.
The metric field we consider is given by $g=-dt\otimes_A dt + t^2\,\delta_{ij}dx^i\otimes_A dx^j$.
Note that in our conventions the spatial coordinates $x^i$ are dimensionless. The reason for choosing
the scale factor $a(t)\propto t$ is that in this case a proper homothetic Killing vector field
is given by $X_2=t\partial_t$ and {\it all} Killing vector fields commute with $X_2$.
The most general real Killing vector field
is  $k_{(\xi,\eta)} :=\xi^i\partial_i + \eta^{k}\epsilon_{kij}x^i\partial_j$, where $\xi,\eta\in\bbR^3$.

\subsubsection*{{\bf The undeformed theory:}}
We start by collecting useful formulae of the undeformed free, real, massless and curvature coupled scalar quantum field theory
 on our particular FRW spacetime. They will be used later to study the noncommutative deformation. We frequently
use the Fourier transformation on the spatial hypersurfaces $\bbR^3$ defined by $t=\text{const}$.
We indicate this transformation by a hat and use the conventions
\begin{flalign}
 \widehat{\varphi}(t,k)= \int\limits_{\bbR^3}d^3x\,e^{ikx}\,\varphi(t,x)\quad ,\qquad \varphi(t,x)=
\int\limits_{\bbR^3}\frac{d^3k}{(2\pi)^3}\,e^{-ikx}\,\widehat{\varphi}(t,k)~.
\end{flalign}
The wave operator $P=\square_g-\xi\,\mathfrak{R}$ in Fourier space is given by
\begin{flalign}
 \widehat{P}\bigl(\widehat{\varphi}\bigr)(t,k) = -\left(\partial_t^2+\frac{3}{t}\partial_t + 
\frac{k^2+6\xi}{t^2}\right)\widehat{\varphi}(t,k)~.
\end{flalign}
The corresponding retarded and advanced Green's operators read
\begin{flalign}
\label{eqn:greenfrw}
 \widehat{\Delta}_{\pm}\bigl(\widehat{\varphi}\bigr)(t,k)=-\int\limits_{t_\pm}^t d\tau\tau^3\,\widehat{\Delta}(t,\tau,k)\,
\widehat{\varphi}(\tau,k)~,
\end{flalign}
where $t_+=0$, $t_-=\infty$ and
\begin{flalign}
\widehat{\Delta}(t,\tau,k)= \frac{t^{\sqrt{1-k^2-6\xi}}\tau^{-\sqrt{1-k^2-6\xi}}-
t^{-\sqrt{1-k^2-6\xi}}\tau^{\sqrt{1-k^2-6\xi}}}{2t\tau\sqrt{1-k^2-6\xi}}~.
\end{flalign}
We obtain for the retarded-advanced Green's operator $\Delta=\Delta_+-\Delta_-$
\begin{flalign}
  \widehat{\Delta}\bigl(\widehat{\varphi}\bigr)(t,k)=-\int\limits_{0}^{\infty} d\tau\tau^3\,\widehat{\Delta}(t,\tau,k)\,
\widehat{\varphi}(\tau,k)~,
\end{flalign}
resulting in the pre-symplectic structure
\begin{flalign}
 \omega(\varphi,\psi) = -\int\limits_{0}^\infty dt t^3 \int\limits_{0}^\infty d\tau \tau^3\int\limits_{\bbR^3} 
\frac{d^3k}{(2\pi)^3}\,\widehat{\varphi}(t,-k)\,\widehat{\Delta}(t,\tau,k)\,\widehat{\psi}(\tau,k)~.
\end{flalign}

We define the geometric action of $(R,a)\in SO(3)\ltimes\bbR^3$ on $C^\infty(\mathcal{M})$ by
\begin{flalign}
 \bigl(\alpha_{(R,a)} \varphi\bigr)(t,x) :=\varphi\bigl(t,R^{-1}(x-a)\bigr)~.
\end{flalign}
In Fourier space, these transformations are given by
\begin{flalign}
\label{eqn:geometricactionfrw}
 \bigl(\widehat{\alpha}_{(R,a)}\widehat{\varphi}\bigr)(t,k)=e^{ika}\,\widehat{\varphi}(t,R^{-1}k)~.
\end{flalign}
We easily obtain that $SO(3)\ltimes\bbR^3\subseteq \mathcal{G}_{(V,\omega)}$ are symplectic automorphisms
of the symplectic vector space $(V,\omega)=
\bigl(C^\infty_0(\mathcal{M},\bbR)/P[C^\infty_0(\mathcal{M},\bbR)],\omega\bigr)$.

In this section we are working in a convergent framework and all symplectic modules are vector spaces 
over $\bbR$. Thus, we can apply the powerful theory of CCR-algebras in order to quantize the
symplectic vector space $(V,\omega)$, i.e.~in order to define the quantum field theory, see Chapter \ref{chap:qftbas}.

\subsubsection*{{\bf The deformed theory with {\boldmath $X_1=\partial_1$}:}}
We study a convergent deformation of our FRW model. We choose $X_1=\partial_1$, 
i.e.~a translation along the $x^1$-direction. 
The flow generated by $X_1=\partial_1$ is noncompact.
The condition $[X_1,X_2]=0$, which is required for our deformations, is satisfied. 
Our strategy is to make a convergent definition of the maps 
$S=\sqrt{\cosh(3\lambda i \partial_1)^{-1}}$ and $S^{-1}=\sqrt{\cosh(3\lambda i\partial_1)}$,
which enter the construction of the deformed quantum field theory. Using these
maps we construct the deformed quantum field theory and investigate its properties. 

We first define a convenient space of functions $C^\infty_{0,\mathscr{S}}(\mathcal{M})\subset C^\infty(\mathcal{M})$,
where $\mathcal{M}=(0,\infty)\times \bbR^3$. 
A function $\varphi\in C^\infty(\mathcal{M})$ is in $C^\infty_{0,\mathscr{S}}(\mathcal{M})$, if
$(1.)$ for all fixed $x\in\bbR^3$ $\varphi(t,x)\in C^\infty_0((0,\infty))$
and $(2.)$ for all fixed $t\in (0,\infty)$ $\varphi(t,x)\in \mathscr{S}(\bbR^3)$ is a Schwartz function.
We have $C^\infty_0(\mathcal{M})\subset C^\infty_{0,\mathscr{S}}(\mathcal{M})\subset C^\infty(\mathcal{M})$.
The spatial Fourier transformation
is an automorphism of $ C^\infty_{0,\mathscr{S}}(\mathcal{M})$, i.e.~let $\varphi(t,x)\in C^\infty_{0,\mathscr{S}}(\mathcal{M})$
 then $\widehat{\varphi}(t,k)\in C^\infty_{0,\mathscr{S}}(\mathcal{M})$ and vice versa.

Using the spatial Fourier transformation we define the map $S:C^\infty_{0,\mathscr{S}}(\mathcal{M})\to 
 C^\infty_{0,\mathscr{S}}(\mathcal{M})$ by
\begin{flalign}
\label{eqn:convsymp}
 \bigl(\widehat{S}\widehat{\varphi}\bigr)(t,k) := \sqrt{\cosh(3\lambda k_1)^{-1}}\,\widehat{\varphi}(t,k)~.
\end{flalign} 
This map is injective. The inverse map $S^{-1}:S\bigl[C^\infty_{0,\mathscr{S}}(\mathcal{M})\bigr]\to 
C^\infty_{0,\mathscr{S}}(\mathcal{M})$ is given by
\begin{flalign}
 \bigl(\widehat{S}^{-1}\widehat{\varphi}\bigr)(t,k) = \sqrt{\cosh(3\lambda k_1)}\,\widehat{\varphi}(t,k)~.
\end{flalign}
Note that $S\bigl[C^\infty_{0,\mathscr{S}}(\mathcal{M})\bigr]\subset C^\infty_{0,\mathscr{S}}(\mathcal{M})$, since
$S\bigl[C^\infty_{0,\mathscr{S}}(\mathcal{M})\bigr]$ includes only functions with Fourier spectra
decreasing faster than $e^{-3\lambda \vert k_1\vert /2}$ for large $\vert k_1\vert$.

We can now construct the deformed quantum field theory. 
For simplifying the notation we drop the tilde on the deformed maps.
 We define the deformed Green's operators by
\begin{flalign}
 \widehat{\Delta}_{\star\pm}\bigl(\widehat{\varphi}\bigr)(t,k):=\widehat{\Delta}_{\pm}\bigl(\widehat{S}^2\widehat{\varphi}\bigr)(t,k)
=-\int\limits_{t_\pm}^t d\tau\tau^3\,\frac{\widehat{\Delta}(t,\tau,k)}{\cosh(3\lambda k_1)}\,\widehat{\varphi}(\tau,k)~.
\end{flalign}
This results in the deformed retarded-advanced Green's operator
\begin{flalign}
\label{eqn:frwdefsympl}
 \widehat{\Delta}_{\star}\bigl(\widehat{\varphi}\bigr)(t,k)=-\int\limits_{0}^\infty d\tau\tau^3\,
\frac{\widehat{\Delta}(t,\tau,k)}{\cosh(3\lambda k_1)}\,\widehat{\varphi}(\tau,k)~,
\end{flalign}
and the deformed pre-symplectic structure on $C^\infty_0(\mathcal{M},\bbR)$
\begin{flalign}
 \omega_\star(\varphi,\psi) = -\int\limits_{0}^\infty dt t^3 \int\limits_{0}^\infty d\tau \tau^3\int\limits_{\bbR^3} 
\frac{d^3k}{(2\pi)^3}\,\widehat{\varphi}(t,-k)\,\frac{\widehat{\Delta}(t,\tau,k)}{\cosh(3\lambda k_1)}\,
\widehat{\psi}(\tau,k)~.
\end{flalign}
The deformed retarded-advanced Green's operator satisfies $\text{Ker}(\Delta_\star) = \text{Ker}(\Delta)$.
To prove this, let $\varphi\in C^\infty_0(\mathcal{M},\bbR)$. Using (\ref{eqn:frwdefsympl}) we find
\begin{flalign}
\widehat{\Delta}_\star\bigl(\widehat{\varphi}\bigr)(t,k)=\cosh(3\lambda k_1)^{-1}\,\widehat{\Delta}\bigl(\widehat{\varphi}\bigr)(t,k)~,
\end{flalign}
and the proof follows from the positivity of $\cosh(3\lambda k_1)^{-1}$.
From the general considerations in Chapter \ref{chap:qftbas} we know that 
$\text{Ker}(\Delta)=P[C^\infty_0(\mathcal{M},\bbR)]$.
Thus, we can define the deformed symplectic vector space as $(V_\star,\omega_\star):= 
\bigl(C^\infty_0(\mathcal{M},\bbR)/P[C^\infty_0(\mathcal{M},\bbR)], \omega_\star\bigr)$.
Different to formal deformation quantization this is now a vector space over $\bbR$ and not a module over 
the ring $\bbR[[\lambda]]$.

The construction of the deformed quantum field theory in terms of a CCR-representation of $(V_\star,\omega_\star)$
can be performed analogously to the undeformed case (see Chapter \ref{chap:qftbas}),
 since $(V_\star,\omega_\star)$ is symplectic vector space over $\bbR$.
This results in a unique (up to $\ast$-isomorphisms) deformed CCR-algebra $A_\star$
describing the deformed quantum field theory.

\subsubsection*{{\bf Physical features of the deformed theory:}}
Let us investigate some physical features of the deformed field theory. In the following we assume that $\lambda>0$.
We study in more detail the map $S^2$ acting on $C^\infty_0(\mathcal{M})$. Note that in 
position space this map is given by the following convolution
\begin{flalign}
\label{eqn:convolution}
 \bigl(S^2\varphi\bigr)(t,x) = \int\limits_\bbR dy^1\,\frac{1}{6\lambda\,\cosh\bigl(\pi (x^1-y^1)/6\lambda\bigr)}\,
\varphi(t,y^1,x^2,x^3)~,
\end{flalign}
for all $\varphi\in C^\infty_0(\mathcal{M})$. It is easy to check that the image $S^2[C^\infty_0(\mathcal{M})]\not\subseteq 
C^\infty_0(\mathcal{M})$. For this assume that $\varphi\in C^\infty_0(\mathcal{M})$ is a positive semidefinite function localized
in some compact region $K\subset\mathcal{M}$, e.g.~a bump function. Since the convolution kernel is of noncompact support
and strictly positive, the resulting function $S^2\varphi$ is of noncompact support in the $x^1$-direction.
However, $S^2\varphi$ will be of rapid decrease, since the convolution kernel is a Schwartz function.

Physically, this means that causality is lost. We immediately obtain that the  relation
$\Delta_{\star\pm}(\varphi)\subseteq J_\pm\bigl(\supp(\varphi)\bigr)$, for all $\varphi\in C^\infty_0(\mathcal{M})$, 
is violated. 
Thus, external sources couple in a nonlocal way to our deformed field theory, which is a feature not
present in the commutative counterpart. Since $\Delta_{\star\pm}$ depends on the value of $\lambda$ through (\ref{eqn:convolution}),
we can determine $\lambda$ (in principle) by measuring the response of the field to external excitations.

Consider now the deformed pre-symplectic structure in position space
\begin{flalign}
\label{eqn:nonlocalsymp}
 \omega_\star(\varphi,\psi)=\int\limits_\mathcal{M} \varphi\,\Delta(S^2\psi)\,\vol~.
\end{flalign}
Due to the appearance of the nonlocal map $S^2$ (\ref{eqn:convolution}) there
are functions $\varphi,\psi$ with causally disconnected support satisfying $\omega_\star(\varphi,\psi)\neq 0$. 
For our choice of deformation ($X_1=\partial_1$ and $X_2=t\partial_t$)  this nonlocality 
only affects the $x^1$-direction, but its range is infinite. We still obtain that $\omega_\star(\varphi,\psi)=0$ if
arbitrary translations of  $\supp(\psi)$ along the $x^1$-direction and $\supp(\varphi)$ are
causally disconnected.
In the quantum field theory described by $A_\star$ this nonlocal behavior leads to a noncommutativity between observables located in
 spacelike separated regions in spacetime.

\subsubsection*{{\boldmath $\ast$}{\bf-isomorphism between the deformed and a nonstandard undeformed quantum field theory:}}
We have shown in the previous paragraph that 
$S^2[C^\infty_0(\mathcal{M},\bbR)]\not\subseteq C^\infty_0(\mathcal{M},\bbR)$.
The map $S$ defined in (\ref{eqn:convsymp}) can also be written in terms of a convolution in position
space, similar to $S^2$, but the convolution kernel is more complicated. Its explicit form is not of importance to us.

From $S^2[C^\infty_0(\mathcal{M},\bbR)]\not\subseteq C^\infty_0(\mathcal{M},\bbR)$ we can infer
that $S[C^\infty_0(\mathcal{M},\bbR)]\not\subseteq C^\infty_0(\mathcal{M},\bbR)$. To prove this, assume
that $S[C^\infty_0(\mathcal{M},\bbR)]\subseteq C^\infty_0(\mathcal{M},\bbR)$ then we would find
$S^2[C^\infty_0(\mathcal{M},\bbR)]\subseteq C^\infty_0(\mathcal{M},\bbR)$ by applying $S$ twice, 
which contradicts the observation above.

The bijective map $S:C^\infty_0(\mathcal{M},\bbR)\to C^\infty_\text{img}(\mathcal{M},\bbR)
\subset C^\infty_{0,\mathscr{S}}(\mathcal{M},\bbR)$ induces a symplectic isomorphism between
$(V_\star,\omega_\star)$ and $(V_{\text{img}},\omega):=\bigl(C^\infty_\text{img}(\mathcal{M},\bbR)
/S\bigl[P[C_0^\infty(\mathcal{M},\bbR)]\bigr], \omega\bigr)$, i.e.~the deformed field theory
can be related to a nonstandard undeformed one. Using Corollary \ref{cor:asthomoweyl},
this map induces a unique $\ast$-isomorphism between the CCR-algebras $A_\star$ and 
$A_\text{img}$.

Let us consider the symmetries of the deformed quantum field theory. As we have seen above, the undeformed symplectic
vector space $(V,\omega)$ contains $SO(3)\ltimes \bbR^3$ in the group of symplectic automorphisms.
The representation is given by the geometric action (\ref{eqn:geometricactionfrw}).
However, the space $C^\infty_{\text{img}}(\mathcal{M},\bbR)\subset C^\infty_{0,\mathscr{S}}(\mathcal{M},\bbR)$,
 which serves as a pre-symplectic vector space for $(V_\text{img},\omega)$, is not invariant under the action of 
$SO(3)\ltimes \bbR^3$. The preserved subgroup is $SO(2)\ltimes \bbR^3$, where 
the $SO(2)$-rotation preserves the $x^1$-axis. Thus, the deformed quantum field theory $A_\star$ is $\ast$-isomorphic
 to a nonsymmetric undeformed quantum field theory $A_\text{img}$. This, in particular, shows that
the results of Chapter \ref{chap:qftcon} are restricted to formal deformation quantization.

\subsubsection*{{\bf Physical interpretation:}}
As shown above, we do not have a $\ast$-isomorphism between the deformed quantum field theory 
$A_\star$ and the {\it standard} undeformed quantum field theory $A$ for convergent deformations. 
This fact has a very natural physical interpretation, which we will explain now.

Consider the extended symplectic vector space
$(V_\text{ext},\omega):= \bigl(C^\infty_{0,\mathscr{S}}(\mathcal{M},\bbR)/\text{Ker}(\Delta),\omega\bigr)$.
This vector space carries a representation of $SO(3)\ltimes\bbR^3$ via the geometric
action (\ref{eqn:geometricactionfrw}). 
A CCR-representation of $(V_\text{ext},\omega)$ yields the extended CCR-algebra $A_\text{ext}$.
$A_\text{ext}$ carries a representation of the group $SO(3)\ltimes\bbR^3$. 
The algebra $A_\text{ext}$ is an extension of the usual CCR-algebra $A$, where also certain delocalized observables
are allowed. This already shows that $A_\text{ext}$ is more suitable to study the connection between deformed and undeformed
quantum field theories.

Analogously to above, we define the deformed extended symplectic vector space
 $(V_{\text{ext}\star},\omega_\star):=(V_\text{ext},\omega_\star)$ and 
a CCR-representation of $(V_{\text{ext}\star},\omega_\star)$ yields the deformed extended CCR-algebra $A_{\text{ext}\star}$.

The bijective map $S:C^\infty_{0,\mathscr{S}}(\mathcal{M},\bbR)\to C^\infty_{\text{img}^\prime}(\mathcal{M},\bbR)
\subset C^\infty_{0,\mathscr{S}}(\mathcal{M},\bbR)$ induces a symplectic isomorphism between
$(V_{\text{ext}\star},\omega_\star)$ and $(V_{\text{img}^\prime},\omega)$. Furthermore,
$S$ induces a symplectic embedding $(V_{\text{ext}\star},\omega_\star)\rightarrow (V_\text{ext},\omega)$.
Using Corollary \ref{cor:asthomoweyl}
we can induce a unique $\ast$-isomorphism between the CCR-algebras $A_{\text{ext}\star}$ and 
$A_{\text{img}^\prime}$, as well as a unique injective, but not surjective, $\ast$-homomorphism from $A_{\text{ext}\star}$ 
to $A_\text{ext}$. We thus have $A_{\text{ext}\star}\simeq A_{\text{img}^\prime}\subset A_{\text{ext}}$.

Physically, this means that due to the noncommutative deformation the quantum field theory looses observables. 
Since $C^\infty_{\text{img}^\prime}(\mathcal{M},\bbR)$ depends on the value of 
$\lambda$, the observables we loose also depend on the value of $\lambda$.
Note that all functions in $C^\infty_{\text{img}^\prime}(\mathcal{M},\bbR)$ have Fourier spectra which decrease
 faster than $e^{-3\lambda \vert k_1\vert/2}$ for large $\vert k_1\vert$. Increasing $\lambda$ leads 
to a sharper localization in momentum space and therefore 
a weaker localization in position space. Thus, the deformed quantum field theory looses those observables
that are strongly localized in position space, which is physically very natural.

\subsubsection*{{\bf Induction of states:}}
The injective $\ast$-homomorphism $\mathfrak{S}:A_\star\to A_{\text{ext}}$, 
or its extension $\mathfrak{S}_\text{ext}:A_{\text{ext}\star}\to A_{\text{ext}}$,
is useful for inducing states via the pull-back.
Assume that we have an $SO(3)\ltimes\bbR^3$-invariant state $\Omega_{\text{ext}}:A_\text{ext}\to \bbC$.
Defining $\Omega_\star:= \Omega_\text{ext}\circ\mathfrak{S}:A_\star\to\bbC$ induces a state
on $A_\star$, which is invariant under the unbroken symmetry group $SO(2)\ltimes\bbR^3$.
The same holds true for $\Omega_{\text{ext}\star}:= \Omega_\text{ext}\circ\mathfrak{S}_\text{ext}:A_{\text{ext}\star}\to\bbC$.

\subsubsection*{{\bf Noncommutative effects on the cosmological power spectrum:}}
In order to show that our models lead to nontrivial physical effects we briefly discuss the
cosmological power spectrum. We shall work with the extended CCR-algebras $A_{\text{ext}\star}$ and $A_{\text{ext}}$.
Let $\Omega:A_{\text{ext}}\to \bbC$ be a regular, quasi-free and translation invariant (i.e.~$\bbR^3$-invariant)
state. We denote by $\bigl(\mathcal{H},\pi,\vert 0\rangle\bigr)$ the GNS-representation of $(A_{\text{ext}},\Omega)$.
Employing the $\ast$-homomorphism $\mathfrak{S}_\text{ext}:A_{\text{ext}\star}\to A_{\text{ext}}$ we obtain
a representation $\pi_\star=\pi\circ\mathfrak{S}_\text{ext}$ of $A_{\text{ext}\star}$ on $\mathcal{H}$. The vector $\vert 0\rangle$ 
might not be cyclic with respect to this representation and we define the Hilbert subspace $\mathcal{H}_\star = 
\overline{\pi_\star[A_{\text{ext}\star}]\vert 0\rangle}$. By the GNS-Theorem the cyclic representation 
$\bigl(\mathcal{H}_\star,\pi_\star,\vert 0 \rangle\bigr)$ is unitary equivalent to the GNS-representation 
of $(A_{\text{ext}\star},\Omega_\star)$. 

Using regularity of the state $\Omega$ we can define the 
selfadjoint linear field operators $\Phi(v)\in L(\mathcal{D})$, $v\in V_\text{ext}$, 
as the generators of the Weyl operators $\pi\bigl(W(v)\bigr)\in B(\mathcal{H})$. They are related to the 
deformed field operators $\Phi_\star$ by $\Phi_\star(v)= \Phi(\mathbf{S}v)\in L(\mathcal{D})$,
which are the generators of $\pi_\star\bigl(W_\star(v)\bigr) = \pi\bigl(W(\mathbf{S}v)\bigr)$.
Thus, $n$-point functions of the deformed field operators can be related to $n$-point functions of the
undeformed ones. Due to the quasi-free assumption on the state $\Omega$ only the $2$-point function
is of interest and all higher $n$-point functions can be derived from it, in the undeformed and also deformed case.
 We define the deformed $2$-point function
\begin{flalign}
 \Omega_{\star2}(\varphi,\psi):= \langle 0\vert \Phi_\star([\varphi])\Phi_\star([\psi])\vert0\rangle=
\langle 0\vert \Phi([S\varphi])\Phi([S\psi])\vert0\rangle=\Omega_2(S\varphi,S\psi)~,
\end{flalign}
for all $\varphi,\psi\in C^\infty_0(\MM,\bbR)$, where $\Omega_2$ is the undeformed $2$-point function.

In order to calculate the power spectrum we require the kernel of $\Omega_2$ ($\Omega_{\star2}$) in Fourier space.
Making use of the translation invariance of the state $\Omega$ we have
\begin{flalign}
 \Omega_2(\varphi,\psi) = \int\limits_0^\infty dt t^3\int\limits_0^\infty d\tau \tau^3 \int\limits_{\bbR^3}
\frac{d^3k}{(2\pi)^3}~
\widehat{\Omega}_2(t,\tau,k)\,\widehat{\varphi}(t,k)\,\widehat{\psi}(\tau,-k)~.
\end{flalign}
It follows that
\begin{flalign}
\label{eqn:def2point}
\Omega_{\star2}(\varphi,\psi) = 
 \Omega_{2}(S\varphi,S\psi) =
\int\limits_0^\infty dt t^3\int\limits_0^\infty d\tau \tau^3 \int\limits_{\bbR^3}
\frac{d^3k}{(2\pi)^3}~
\frac{\widehat{\Omega}_2(t,\tau,k)}{\cosh(3\lambda k_1)}\,\widehat{\varphi}(t,k)\,\widehat{\psi}(\tau,-k)~.
\end{flalign}
The undeformed power spectrum is then given by
\begin{flalign}
 \mathcal{P}(t,k) := \widehat{\Omega}_2(t,t,k)~,
\end{flalign}
and the deformed power spectrum reads
\begin{flalign}
\label{eqn:powerspectrum} 
\mathcal{P}_\star(t,k) = \frac{\mathcal{P}(t,k)}{\cosh(3\lambda k_1)}~.
\end{flalign}
The physical feature of $\mathcal{P}_\star$ compared to $\mathcal{P}$ is an exponential loss of power for large $\vert k_1\vert$.

\subsection{\label{ex:2dmodel}A spatially compact FRW toy-model:}
The flow generated by the vector field $X_1=\partial_1$ in the model above is noncompact. We now investigate
if deformations along vector fields $X_1$ generating compact flows differ from the noncompact case.
A possible choice of toy-model would be the model of Subsection \ref{ex:frwqft} with
 $X_1=x^2\partial_3-x^3\partial_2$, i.e.~a rotation around the $x^1$-axis, and $X_2=t\partial_t$.
 
However, there is an even simpler model which we can use for our studies. Consider the manifold 
$\mathcal{M}= (0,\infty)\times S^1$, where $S^1$ is the $1$-dimensional circle, 
equipped with the metric $g=-dt\otimes_A dt+ t^2 d\phi\otimes_A d\phi$.
Here $t\in(0,\infty)$ denotes time and $\phi\in[0,2\pi)$ is the angle.  A proper homothetic Killing vector field is
$X_2=t\partial_t$ and $X_1=2\partial_\phi$ is a Killing vector field.\footnote{
The factor $2$ in the definition of $X_1$ is a convenient normalization.}  
The necessary condition $[X_1,X_2]=0$ is satisfied.
The calculation of the undeformed wave operator $P=\square_g-\xi\,\mathfrak{R}$ and the corresponding 
retarded and advanced Green's operators $\Delta_{\pm}$ is standard. We do not need to present the explicit results here.

We investigate in detail the convergent version of $S=\sqrt{\cosh(3\lambda i\partial_\phi)^{-1}}$ and
$S^{-1}=\sqrt{\cosh(3\lambda i\partial_\phi)}$. 
For this we make use of the Fourier transformation on the spatial hypersurfaces $S^1$.
Since $S^1$ is compact, the momenta $n\in\mathbb{Z}$ are discrete. 
We define in Fourier space
\begin{subequations}
\begin{flalign}
\label{eqn:frw2S}\bigl(\widehat{S}\widehat{\varphi}\bigr)(t,n)&:= \sqrt{\cosh(3\lambda n)^{-1}}\,\widehat{\varphi}(t,n)~,\\
\bigl(\widehat{S}^{-1}\widehat{\varphi}\bigr)(t,n)& = \sqrt{\cosh(3\lambda n)}\,\widehat{\varphi}(t,n)~.
\end{flalign}
\end{subequations}
It remains to discuss the domains of the maps $S$ and $S^{-1}$. Note that a function $\varphi$ on the circle $S^1$
is smooth if and only if its Fourier spectrum is of rapid decrease.
From this and (\ref{eqn:frw2S}) we infer $S: C^\infty_0(\mathcal{M})\to  C^\infty_0(\mathcal{M})$. 
However, $S^{-1}$ can not be defined on all of $C^\infty_0(\mathcal{M})$, provided we restrict the image of 
$S^{-1}$ to smooth functions. 
Since $S$ is injective, we find the isomorphism 
$S:C^\infty_0(\mathcal{M})\to C^\infty_\text{img}(\mathcal{M})\subset C^\infty_0(\mathcal{M})$.

Analogously to the situation where $X_1$ generates a noncompact flow we find that the symplectic isomorphism maps between
the deformed symplectic vector space $(V_\star,\omega_\star)$ and a nonstandard undeformed symplectic vector space
$(V_{\text{img}},\omega)$. In case of models with additional isometries $(V_{\text{img}},\omega)$ carries
only a representation of the unbroken subgroup. 
Furthermore, the map $S$ embeds $(V_\star,\omega_\star)$ into the undeformed  symplectic vector space $(V,\omega)$.
This is also analogous to the situation before, with the difference that we do not have to extend the 
symplectic vector space. All linear symplectic maps induce unique injective $\ast$-homomorphisms between
the corresponding CCR-algebras due to Corollary \ref{cor:asthomoweyl}.

We find again that the deformed quantum field theory looses observables, depending on the
value  of $\lambda$.
Since the functions in $C^\infty_{\text{img}}(\mathcal{M})$ have Fourier spectra which decrease
faster than $e^{-3\lambda \vert n\vert/2}$ for large $\vert n\vert$ we loose those observables
that are strongly localized in position space.

\subsection{Physical inequivalence of the deformed and undeformed quantum field theory:}
In this subsection we collect arguments that our deformed quantum field theory is physically inequivalent to 
standard commutative quantum field theories. 
This discussion is very important, since it has been shown \cite{0158.45703,0159.29001}
that it is not possible in the framework of Wightman quantum field theory on the
undeformed Minkowski spacetime to construct a nonlocal theory with
nonlocalities in the commutator function that fall off faster than
exponentially.
Indeed, Wightman theories with a faster than
exponentially vanishing nonlocality in the commutator function of
two fundamental fields are equivalent to local quantum field theories.  Therefore,
we have to ensure that a similar result does not apply to our deformed models.

The first sign for an inequivalence of our deformed quantum field theories to standard commutative ones comes
from the nonlocal behavior of the deformed Green's operators $\Delta_{\star\pm}$ discussed above.
Coupling our deformed theories to external sources (or introducing perturbative interactions) then leads to 
nonlocal effects which in principle can be used to measure the deformation parameter $\lambda$.

The second sign comes from the form of the deformed power spectrum (\ref{eqn:powerspectrum}).
Commutative quantum field theories on FRW spacetimes typically have a power spectrum, which goes as 
$\mathcal{P}(t,k)\propto \vert k\vert^{n_s-4}$, where $n_s$ is
the spectral index. Our deformed power spectrum (\ref{eqn:powerspectrum}) shows, additionally to this power-law behavior,
an exponential drop-off for large $\vert k_1\vert$. Such a drop-off, and with this also the value of $\lambda$,
can in principle be measured.

The third sign comes from the nonlocality of the symplectic structure (\ref{eqn:nonlocalsymp}).
We consider analogously to \cite{0158.45703,0159.29001} the correlation function
\begin{flalign}
\label{eqn:4pointcom} \langle 0\vert \bigl[\Phi_\star([\varphi_1]),\Phi_\star([\varphi_2])\bigr]
\Phi_\star([\varphi_3])\Phi_\star([\varphi_4])\vert 0\rangle~,
\end{flalign}
which, in our case, reduces due to the canonical commutation relations to
\begin{flalign}
 \nonumber \langle 0\vert \bigl[\Phi_\star([\varphi_1]),\Phi_\star([\varphi_2])\bigr]\Phi_\star([\varphi_3])
\Phi_\star([\varphi_4])\vert 0\rangle
&=i\,\omega_\star(\varphi_1,\varphi_2)\,\langle 0\vert \Phi_\star([\varphi_3])\Phi_\star([\varphi_4])\vert 0\rangle\\
&=
i\,\omega_\star(\varphi_1,\varphi_2)\,\Omega_{\star2}(\varphi_3,\varphi_4) ~.
\end{flalign}
Since, as explained above, the commutator function $\omega_\star$ can be nonzero for functions $\varphi_1,\varphi_2$ with 
 spacelike separated support, (\ref{eqn:4pointcom}) is nonzero for these $\varphi_1,\varphi_2$ and for all
 $\varphi_3,\varphi_4$ such that $\Omega_{\star2}(\varphi_3,\varphi_4)\neq 0$.
Thus, our quantum field theory is nonlocal.

%%%%%%%%%%%%%%%%%%%%%%%%%%%%%%%%%%%%%%%%%%%%%%%%%%%%%%%

\section{\label{sec:isotrop}Towards quantum field theory on the isotropically deformed de Sitter spacetime}
We make some important steps towards constructing a scalar quantum field theory on the 
isotropically deformed de Sitter spacetime. 
The wave operator for this model is given by (\ref{eqn:isotropicdesitter})
\begin{flalign}
 \widetilde{P}_\star = -\cosh\left(\frac{3\lambda}{2}\,i\mathcal{D}\right)\bigl( \partial_t^2 + 3H\partial_t +M^2 \bigr)
 +\cosh\left(\frac{5\lambda}{2}\,i\mathcal{D}\right)e^{-2Ht}\bigtriangleup~,
\end{flalign}
where $\mathcal{D} = \partial_t - H\,x^i\partial_i$. In this section only the tilded wave operator
and Green's operators will appear. In order to simplify the notation we are going to drop the tilde on
all quantities.

We follow a similar strategy as for the homothetic Killing deformations and
first rewrite formally the deformed wave operator in convenient coordinates, such that
the nonlocal operators $\cosh\left(\frac{3\lambda}{2}\,i\mathcal{D}\right)$ and 
$\cosh\left(\frac{5\lambda}{2}\,i\mathcal{D}\right)$ are multiplication operators in a suitable Fourier space.
Starting from spherical coordinates $(t,r,\zeta,\phi)$ we perform the following
coordinate transformation
\begin{flalign}
 r=\frac{1}{H} \,e^{H\rho}~,\quad \rho = \frac{1}{H}\log (H r)~.
\end{flalign}
In the coordinates $(t,\rho,\zeta,\phi)$ we have $\mathcal{D}=\partial_t - \partial_\rho$.
The second coordinate transformation is given by
\begin{flalign}
 \tau^{\pm} = t\pm\rho~,\quad t=\frac{1}{2}(\tau^+ + \tau^-)~,\quad \rho = \frac{1}{2}(\tau^+ -\tau^-)~.
\end{flalign}
This leads to $\mathcal{D} = 2\partial_-$, where $\partial_-$ is the derivative along $\tau^-$.
The deformed wave operator in the coordinates $(\tau^+,\tau^-,\zeta,\phi)$ reads
\begin{multline}
 P_\star = -\cosh(3\lambda\,i\partial_-)\,\bigl((\partial_+ +\partial_-)^2 + 3 H (\partial_++\partial_-) +M^2\bigr)\\
+\cosh(5\lambda\,i\partial_-)\, e^{-2 H \tau^+}\,\bigl( (\partial_+-\partial_-)^2 + H (\partial_+-\partial_-) + H^2 \Delta_{S^2}\bigr)~,
\end{multline}
where $\Delta_{S^2}$ is the Laplacian on the unit two-sphere $S^2$.
The volume form reads in the new coordinates
\begin{flalign}
 \vol = \frac{1}{2\,H^2}\,e^{3 H \tau^+} \,\sin\zeta~ d\tau^-\wedge d\tau^+ \wedge d\zeta \wedge d\phi~.
\end{flalign}
 We make use of the Fourier transformation 
\begin{flalign}
 \varphi(\tau^+,\tau^-,\zeta,\phi) = \int\limits_{\bbR} \frac{dp}{2\pi} ~e^{-ip\tau^{-}} ~\widehat{\varphi}(\tau^+,p,\zeta,\phi)~,
\quad \widehat{\varphi}(\tau^+,p,\zeta,\phi) = \int\limits_{\bbR} d\tau^-~e^{ip\tau^-}~ \varphi(\tau^+,\tau^-,\zeta,\phi)~.
\end{flalign}
In Fourier space the deformed wave operator is given by
\begin{multline}
 \widehat{P}_\star = \cosh(3\lambda p)\,\Bigl( -\bigl((\partial_+ -ip)^2 + 3 H (\partial_+ -ip) +M^2\bigr)\\
+\frac{\cosh(5\lambda p)}{\cosh(3\lambda p)}\, e^{-2 H \tau^+}\,\bigl( (\partial_+ +ip)^2 + H (\partial_+ +ip) + H^2 \Delta_{S^2}\bigr)\Bigr)~.
\end{multline}
Defining 
\begin{flalign}
\label{eqn:defdeltap}
 e^{2H\delta_p}:=\frac{\cosh(5\lambda p)}{\cosh(3\lambda p)}
\end{flalign}
and the translation operator 
\begin{flalign}
 (T_a\widehat{\varphi})(\tau^+,p,\zeta,\phi) := \widehat{\varphi}(\tau^+ +a ,p,\zeta,\phi)~,
\end{flalign}
we observe that the deformed wave operator is given by a $p$-dependent translation
 and rescaling of the undeformed one
\begin{flalign}
 \widehat{P}_\star = \cosh(3\lambda p)\circ T_{-\delta_p}\circ \widehat{P}\circ T_{\delta_p}~.
\end{flalign}
Correspondingly, we deform the retarded and advanced Green's operators by
\begin{flalign}
 \widehat{\Delta}_{\star\pm} = \frac{1}{\cosh(3\lambda p)}\circ T_{-\delta_p}\circ\widehat{\Delta}_{\pm}\circ T_{\delta_p}~.
\end{flalign}
Let us now focus on the deformed pre-symplectic structure
\begin{flalign}
 \omega_\star(\varphi,\psi) = \int\limits_\MM \widehat{\varphi}(\tau^+,-p,\zeta,\phi)\, \bigl(\widehat{\Delta}_\star(\widehat{\psi})\bigr)
(\tau^+,p,\zeta,\phi)\,\widehat{\vol}~,
\end{flalign}
where $\widehat{\vol}$ is the integration measure in Fourier space.
Shifting the $\tau^+$ integral by $\delta_p$ we obtain
\begin{flalign}
  \nn \omega_\star(\varphi,\psi) &= \int\limits_\MM (T_{\delta_p}\widehat{\varphi})(\tau^+,-p,\zeta,\phi)\, 
\bigl(\widehat{\Delta}(T_{\delta_p}\widehat{\psi})\bigr)(\tau^+,p,\zeta,\phi)\,\frac{1}{\cosh(3\lambda p)}\,e^{3 H \delta_p}\,
\widehat{\vol}\\
&=\int\limits_\MM (T_{\delta_p}\widehat{\varphi})(\tau^+,-p,\zeta,\phi)\, 
\bigl(\widehat{\Delta}(T_{\delta_p}\widehat{\psi})\bigr)(\tau^+,p,\zeta,\phi)\,\frac{\cosh(5\lambda p)^{3/2}}{\cosh(3\lambda p)^{5/2}}\,
\widehat{\vol}~,
\end{flalign}
where in the last equality we have used the definition of $\delta_p$ (\ref{eqn:defdeltap}).
From this we find that the deformed pre-symplectic structure
is related to the undeformed one via the map
\begin{flalign}
\label{eqn:symplectoisotropic}
 \bigl(\widehat S \widehat{\varphi}\bigr)(\tau^+,p,\zeta,\phi) :=
 \frac{\cosh(5\lambda p)^{3/4}}{\cosh(3\lambda p)^{5/4}}\,\widehat{\varphi}(\tau^++\delta_p,p,\zeta,\phi)~. 
\end{flalign}
Denoting by $S$ the position space version of the linear symplectic map we have
\begin{flalign}
 \omega_\star(\varphi,\psi) = \omega(S\varphi,S\psi)~.
\end{flalign}

Thus, we were able to explicitly construct the formal symplectic isomorphism
between the deformed and undeformed field theory.
As shown in Chapter \ref{chap:qftcon}, this symplectic isomorphism provides
a $\ast$-algebra isomorphism between the deformed and undeformed quantum field theory.
In particular, given a state on the undeformed quantum field theory (e.g.~the Bunch-Davies vacuum)
with $2$-point function $\Omega_2$,
we can pull it back to the deformed quantum field theory and obtain for the deformed
$2$-point function
\begin{flalign}
 \Omega_{\star2}(\varphi,\psi) = \Omega_2(S\varphi,S\psi)~.
\end{flalign}
The noncommutative corrections to the power spectrum are thus contained in $S$
and can in principle be calculated to the desired order in $\lambda$ by using (\ref{eqn:symplectoisotropic}).
This is not the goal of the present work.

Note that the symplectic linear map was constructed in a way also suitable for nonformal investigations.
Since in (\ref{eqn:symplectoisotropic}) the deformation parameter only
appears in multiplication operators, we can set it to a finite value $\lambda\in \bbR$
and obtain a convergent definition of $S$. However, this does not yet lead to a convergent deformation
of the quantum field theory on the isotropically deformed de Sitter spacetime.
In particular, it remains to study if the action of $S$ on {\it all} compactly supported
functions $C^\infty_0(\MM)$ is well-defined and if the image of $S$ lies in the domain of $\Delta$.
This then would provide a nonformal definition of the symplectic structure $\omega_\star$.
These studies are considerably more complicated than the corresponding investigations
we have made for the homothetic Killing deformations above.
We hope to come back to this issue in a future work.

%%%%%%%%%%%%%%%%%%%%%%%%%%%%%%%%%%%%%%%%%%%%%%%%%%%%%%%

\section{\label{sec:z=2qft}A new noncommutative Euclidean quantum field theory}
Interacting quantum field theories on the Euclidean Moyal-Weyl space $\bbR^4_\Theta$
gathered a lot of attention after it was shown by Grosse and Wulkenhaar, and later also by the 
group around Rivasseau, that the $\Phi^4$-theory with a harmonic oscillator potential on $\bbR^4_\Theta$
has improved quantum properties, compared to the commutative $\Phi^4$-theory.
In particular, this model has no Landau pole, is renormalizible to all orders and 
it is a good candidate for a rigorous interacting quantum field theory in $4$-dimensions.
For details and properties of the Grosse-Wulkenhaar model see \cite{Grosse:2003aj,Grosse:2004yu,Grosse:2004by,Grosse:2009pa,
Rivasseau:2005bh,Gurau:2005gd,Disertori:2006nq,Gurau:2009ni,Sfondrini:2010zm} and references therein.
Despite its mathematical beauty, the Grosse-Wulkenhaar model has two phenomenological
drawbacks. Firstly, it is formulated on Euclidean space and it is not clear how to generalize it to the
Minkowski spacetime. See \cite{Fischer:2008dq,Fischer:2010zg} for a recent approach in this direction,
however, there seem to be unsolved issues as pointed out in \cite{Zahn:2010yt}.
Secondly, the harmonic oscillator term of the Grosse-Wulkenhaar model
leads to a strong modification of the propagator in the infrared, which will make it very hard
to render such theories compatible with experiments in particle and astrophysics.
The phenomenologically problematic term can be traced back to an infamous feature called
UV/IR-mixing, which is typically present in quantum field theories on the Euclidean Moyal-Weyl space.
The reason for this mixing lies in the phase factors entering the Feynman rules of such theories,
which originate from the deformed $\Phi^4$-potential.

We are going to present now an alternative deformation of the Euclidean space $\bbR^N$,
on which the perturbatively interacting quantum field theory has completely different features as on $\bbR^N_\Theta$.
\subsection*{The deformation:}
Consider the Euclidean space $\bbR^N$ with metric $g=\delta_{\mu\nu}\,dx^\mu\otimes_A dx^\nu$.
We choose one direction, say $x^N$, and denote the orthogonal coordinates by $x^i$, $i=1,\dots,N-1$.
Analogously to the anti-de Sitter spacetime in Section \ref{sec:exampleswaveop} of this chapter
 we consider the following $2n$ vector fields
\begin{flalign}
 X_{2a-1} = T_a^i\partial_i~,\quad X_{2a} = \vartheta(x^N)\,T_a^i\partial_i~,\quad\text{for all }a=1,\dots,n~,
\end{flalign}
where $T_a^i$ are constant and real matrices and $\vartheta$ is a smooth and real function. Note that 
$X_{2a-1}$ and $X_{2a}$ are parallel for all $a$. At first sight one might expect
that the deformation generated by these vector fields via the abelian twist is trivial,
since every two smooth functions $h,k$ commute, i.e.~$\starcom{h}{k}=0$. But this is not the case,
because the $\star$-product acts nontrivially between functions and one-forms.
In particular, we find for the basis one-forms $dx^\mu$ and a smooth function $h$
\begin{flalign}
dx^N\star h = dx^N\,h~,\quad dx^i\star h = dx^i\,h - dx^N \frac{i\lambda}{2}\vartheta^\prime(x^N) \sum\limits_{a=1}^n T_a^iT_a^j\partial_j h~,
\end{flalign}
where $\vartheta^\prime$ denotes the derivative of $\vartheta$. Note that this is an exact expression valid to all orders in
$\lambda$.

\subsection*{The deformed action:}
We evaluate the deformed action (\ref{eqn:defactioncobas}) for our model. 
The inverse metric in the deformed basis agrees with the inverse metric in the undeformed one, 
i.e.~$g_\star^{\mu\nu} =\delta^{\mu\nu}$.
The volume form is given by $\vols=\vol=dx^1\wedge\dots\wedge dx^N$ and satisfies
$\mathcal{L}_{X_\alpha}(\vols)=0$ for all $\alpha$.
Comparing both sides of $dx^\mu\,\partial_\mu h = dx^\mu\star\partial_{\star\mu}  h$ order by order
in the deformation parameter we find for the deformed partial derivatives
\begin{flalign}
 \partial_{\star N} = \partial_N +\frac{i\lambda}{2} \vartheta^\prime(x^N)\sum\limits_{a=1}^n T_a^i T_a^j\partial_i\partial_j
=: \partial_N +\frac{i\lambda}{2}\vartheta^\prime(x^N)\,\mathbb{T}~,\quad 
\partial_{\star i} =\partial_i~.
\end{flalign}
Plugging this into the action (\ref{eqn:defactioncobas}) we find for the free and massless action
\begin{flalign}
 S^\text{free}_\star[\Phi] = \frac{1}{2}\int\limits_{\bbR^N}\Bigl(\partial_\mu\Phi\, \partial^\mu\Phi + 
\frac{\lambda^2}{4}\vartheta^{\prime2}\,\mathbb{T}\Phi\,\mathbb{T}\Phi\Bigr)\,\vol~.
\end{flalign}
Let us now also introduce a $\star$-local potential $V_\star(\Phi)$. 
Since all $\star$-products between functions reduce to the pointwise products,
the potential term is undeformed, i.e.~$V_\star(\Phi) = V(\Phi)$.
For the deformed action of an interacting scalar field we thus find
\begin{flalign}
\label{eqn:euclidaction}
 S_\star[\Phi] = \int\limits_{\bbR^N}\Bigl(\frac{1}{2}\partial_\mu\Phi\,\partial^\mu\Phi + 
\frac{\lambda^2}{8}\vartheta^{\prime2}\,\mathbb{T}\Phi\,\mathbb{T}\Phi + V(\Phi)\Bigr)\,\vol~.
\end{flalign}
Note that this is an exact result, valid to all orders in $\lambda$.

\subsection*{{\boldmath $z{=}2$}  scalar field propagator:}
The action of a scalar field with a $z{=}2$ anisotropic scaling in the sense of \cite{Horava:2009uw}
can be obtained from (\ref{eqn:euclidaction}) by choosing
$n=N-1$, $T_a^i=\delta_a^i$ and $\vartheta(x^N) = 2\gamma\,x^N$, where the factor $2$ is for later convenience.
In this case we have $\mathbb{T} = \partial_i\partial^i$ and, including a quadratic potential
term $V(\Phi) = \frac{M^2}{2}\Phi^2$, we find for the action (\ref{eqn:euclidaction})
\begin{flalign}
 S_\star[\Phi] = \frac{1}{2}\int\limits_{\bbR^N}\Bigl(\partial_\mu\Phi\,\partial^\mu\Phi + 
\beta^2~\partial_i\partial^i\Phi\,\partial_j\partial^j\Phi + M^2 \,\Phi^2\Bigr)\,\vol~,
\end{flalign}
where $\beta^2 = \lambda^2\gamma^2 \geq 0$.
The propagator denominator in momentum space corresponding to this action is given by
\begin{flalign}
\label{eqn:z=2propa}
 E^2 +\mathbf{k}^2 +\beta^2\,\mathbf{k}^4 +M^2~,
\end{flalign}
where $E/k^i$ are the momenta associated to $x^N/x^i$.

This result provides an explanation for anisotropic propagators in the sense of \cite{Horava:2009uw}
from noncommutative geometry. The noncommutativity scale $\lambda$ and the slope $\gamma$ of $\vartheta(x^N)$
set the scale $\beta$ of the propagator modification, which we assume to be the Planck scale.
At small momenta, the propagator can be approximated by $E^2+\mathbf{k}^2 +M^2$, leading
to a theory which is approximately invariant under the Euclidean group $ISO(N)$,
up to $1/M_\text{pl}$-corrections. 

\subsection*{One-loop structure of the deformed {\boldmath $\Phi^4$}-theory:}
In order to check if the approximate $ISO(N)$-symmetry of our theory is also present
at the quantum (one-loop) level, we study the potential $V(\Phi) = \frac{M^2}{2}\Phi^2 + \frac{g_4}{4!}\Phi^4$
in $N=4$ dimensions. Remember that for our deformation all $\star$-products drop out of the potential.
As a consequence, we do not obtain the infamous UV/IR-mixing. The inverse propagator 
of our model is given by (\ref{eqn:z=2propa}). Due to the additional $\mathbf{k}^4$-term 
the divergences are tamed and we find for the one-loop correction $\Pi$ to the $2$-point function
\begin{flalign}
\label{eqn:twopointeqft}
 \Pi\big\vert_{\Lambda_E/\Lambda_k=\text{const.}} \,=-\frac{g_4}{\sqrt{8}\pi^2}\sqrt{\frac{\Lambda_E}{\beta^3}} + \text{finite}~,
\end{flalign}
where $\Lambda_E$ and $\Lambda_k$ denote the momentum space cutoff for $E$ and $k=\sqrt{k_i k^i}$, respectively.
The ratio $\Lambda_E/\Lambda_k$ is kept fixed for the limit $\Lambda_E, \Lambda_k\to\infty$.
The one-loop corrections to the $4$-point function are found to be finite.
Thus, due to our noncommutative deformation we have obtained an improved one-loop structure
of the $\Phi^4$-theory. On top of that, the Landau pole, which arises through the renormalization
group running of the coupling $g_4$ in the commutative case, is absent in our noncommutative theory at the
one-loop level due to the finiteness of the $4$-point function. Power-counting suggests that this
result extends to higher orders in the perturbation theory.
The result (\ref{eqn:twopointeqft}) shows that the $SO(4)$-violating $\mathbf{k}^4$-term receives no
renormalization at the one-loop level, which means that the approximate $ISO(4)$-symmetry
at small momenta is still present in the perturbatively interacting quantum field theory. This is an advantage compared
to the $\Phi^4$-theory on the Euclidean Moyal-Weyl space, since there one-loop effects lead to
infrared modifications of the propagator.

An interesting project for future research would be to find out if there are similar improvements
for Yang-Mills and perturbative quantum gravity theories when deformed by this special choice of twist.
Furthermore, comparing the resulting theory to Ho{\v r}ava's gravity theory \cite{Horava:2009uw}
 and understanding similarities and differences would be interesting.

%%%%%%%%%%%%%%%%%%%%%%%%%%%%%%%%%%%%%%%%%%%%%%%%%%%%%%%

%%%%%%%%%%%%%%%%%%%%%%%%%%%%%%%%%%%%%%%%%%%%%%%%%%%%%%%
%%%%%%%%%%%%%%%%%%%%%%%%%%%%%%%%%%%%%%%%%%%%%%%%%%%%%%%

\chapter{\label{chap:qftproblems}Open problems}
We have proven in Chapter \ref{chap:qftdef} that, in the framework of formal deformation quantization,
the construction of a scalar quantum field theory on a large class of noncommutative curved spacetimes 
is possible. We were able to treat explicit models of field and quantum field theories
on noncommutative curved spacetimes, see Chapter \ref{chap:qftapp}, where in some cases we were even
able to go over to a convergent setting for the deformations.
Because of this success, the open problems in quantum field theory on noncommutative curved spacetimes
are not concerned with the basic formalism, but they include specific applications and extensions
of it.
\subsection*{Explicit investigations in cosmology and black hole physics:}
The examples of deformed wave operators discussed in Chapter \ref{chap:qftapp}, Section \ref{sec:exampleswaveop},
provide interesting models for noncommutative cosmology and black hole physics.
These models are distinct from the
usual Moyal-Weyl deformation and they are exact solutions of the noncommutative Einstein equations.
Making use of the general formalism of Chapter \ref{chap:qftdef} we would be able to consider
perturbative noncommutative geometry effects in the cosmological power spectrum 
of a scalar field or the Hawking radiation in presence of a noncommutative black hole.
However, for a full understanding of the noncommutative geometry effects 
in these models, including possible nonlocalities, we have to take into account 
all orders in the deformation parameter $\lambda$. For toy-models in the class of
homothetic Killing deformations, we were able to do so, see Chapter \ref{chap:qftapp}, 
Section \ref{sec:homothetic}, and we have found that the convergently deformed
quantum field theory differs drastically from the formally deformed one.
Trying to extend these investigations to more realistic models of noncommutative cosmologies and
black holes, like e.g.~the models in Chapter \ref{chap:qftapp}, Section \ref{sec:exampleswaveop},
is an important topic for future research. For the quantum field theory on the
isotropically deformed de Sitter universe we have done relevant steps towards this goal 
in Chapter \ref{chap:qftapp}, Section \ref{sec:isotrop}.

\subsection*{Extension to spinor and gauge fields:}
As usual in quantum field theory on curved spacetimes, we started with
the simplest model of a free scalar field.
However, the constituents of the standard model of particle physics are spinor and gauge fields,
and the only scalar field is the up to now undetected Higgs.
This motivates to extend our formalism to include fermions and gauge fields.
For Dirac fermions, we expect that this extension is straightforward
by combining methods from noncommutative vielbein gravity \cite{Aschieri:2009ky,Aschieri:2009mc}
with the Dirac field on commutative curved spacetimes, see e.g.~\cite{Dappiaggi:2009xj}.
Nevertheless, the explicit construction along the lines presented in Chapter \ref{chap:qftdef}
 is a nontrivial and important topic for future research.

For gauge fields this extension turns out to be more complicated. To explain why, we consider
the simple case of $\MM=\bbR^4$ deformed by the Moyal-Weyl twist.
Let $g=dx^\mu\otimes_{A_\star} g_{\mu\nu}\star dx^\nu$ be a metric field, which is not invariant under the twist.
A possible deformation of the $U(1)$ action functional is given by
\begin{flalign}
 S_\star[A]= -\frac{1}{4}\int\limits_{\bbR^N} F_{\mu\nu}\star g^{\mu\rho}\star g^{\nu\sigma}\star F_{\rho\sigma}\star \vols~,
\end{flalign}
where $g^{\mu\nu}$ is the $\star$-inverse matrix of $g_{\mu\nu}$ and 
$F_{\mu\nu} = \partial_\mu A_\nu -\partial_\nu A_\mu - e \,\starcom{A_\mu}{A_\nu}$ is the field strength.
Under $U(1)$ gauge transformations the field strength transforms in the $\star$-adjoint, i.e.~
\begin{flalign}
 \delta_\epsilon F_{\mu\nu} = -e\,\starcom{F_{\mu\nu}}{\epsilon}~.
\end{flalign}
For the action to be invariant under gauge transformations, we also have to transform the inverse metric
and volume form in the $\star$-adjoint of $U(1)$,~i.e.
\begin{flalign}
\label{eqn:metricgauge}
 \delta_{\epsilon}g^{\mu\nu} = -e\,\starcom{g^{\mu\nu}}{\epsilon}~,\quad \delta_\epsilon\vols = -e\,\starcom{\vols}{\epsilon}~.
\end{flalign}
Fixing a metric field which is noncentral, i.e.~$\starcom{g^{\mu\nu}}{\epsilon}\neq 0$,
then leads to a broken gauge symmetry. This problem does not occur on the Moyal-Weyl deformed
Minkowski spacetime, since there the metric is central.
Let us give a speculative interpretation of this feature. It seems that 
in noncommutative geometry gauge and gravitational degrees of freedom are
coupled via $\star$-gauge invariance in such a way that 
separating the gauge degrees of freedom is in general not possible. 
As a consequence, we have to focus on situations where both, the gauge and metric field, are dynamical.
This indicates that there is a fundamental connection between noncommutative gauge theory and gravity,
see also \cite{Szabo:2006wx,Steinacker:2010rh} and references therein.

From the practical point of view, the observation above shows that discussing
gauge theories on fixed noncommutative curved spacetimes is in general not possible
due to the breaking of gauge invariance by the metric. One way out is to allow
for metric perturbations, i.e.~gravitons, which are dynamical and transform under the $\star$-gauge transformations.
The gauge field-graviton system then can be analyzed perturbatively.
However, in order to do so we have to improve our understanding of noncommutative gravity
and in particular noncommutative gravitons, see Chapter \ref{chap:ncgproblems}.

\subsection*{Perturbative interactions:}
We have explicitly shown in Chapter \ref{chap:qftapp}, Section \ref{sec:homothetic}, that
convergent deformations can lead to an improved ultraviolet behavior of the Green's operators and $2$-point functions.
The hyperbolic cosine factor in (\ref{eqn:def2point}) leads to an exponential drop off in certain directions in spatial momentum space
and thus tames the singularities in the $2$-point function.
It would be very interesting to study how this ultraviolet improvement of the free field
theory modifies the structure of the quantum field theory in presence of perturbative interactions.
Of particular interest are investigations proving the presence or absence of
UV/IR-mixing in these models.
The Yang-Feldman formalism \cite{Yang:1950vi,Bahns:2002vm,Zahn:2010yc} to perturbative quantum field theory 
is expected to be suitable for these studies.

%%%%%%%%%%%%%%%%%%%%%%%%%%%%%%%%%%%%%%%%%%%%%%%%%%%%%%%
%%%%%%%%%%%%%%%%%%%%%%%%%%%%%%%%%%%%%%%%%%%%%%%%%%%%%%%
%%%%%%%%%%%%%%%%%%%%%%%%%%%%%%%%%%%%%%%%%%%%%%%%%%%%%%%

\part{\label{part:math}Noncommutative Vector Bundles, Homomorphisms and Connections}

%%%%%%%%%%%%%%%%%%%%%%%%%%%%%%%%%%%%%%%%%%%%%%%%%%%%%%%
%%%%%%%%%%%%%%%%%%%%%%%%%%%%%%%%%%%%%%%%%%%%%%%%%%%%%%%

\chapter{Motivation and outline}
The fundamental object in noncommutative geometry is a noncommutative algebra $A$,
which we interpret as the algebra of functions on the quantized manifold.
Diffeomorphisms of the noncommutative manifold are described in terms of a
Hopf algebra $H$, which may also carry additional structures, such as an $R$-matrix.
A vector bundle on the noncommutative manifold is a module $V$ over the algebra $A$,
which we interpret as the module of sections of the quantized vector bundle, i.e.~vector fields.
All algebras and modules are assumed to transform covariantly under the Hopf algebra
$H$, a property which we have called ``deformed general covariance'', and all operations 
on algebras and modules shall respect this covariance.

The main aim of this part is to understand important classes of what was called ``operations'' above
 in the deformed covariant setting. We shall focus, in particular, on 
the endomorphisms of a module $V$, the homomorphisms between two modules $V$ and $W$ and 
connections on a module $V$. This part is based on ongoing work with Paolo Aschieri [that appeared after finishing the thesis
 in \cite{AlexPaolo}].

Let us explain why these structures are important for noncommutative gravity and
also Yang-Mills theory.
Modules of primary interest in (noncommutative) gravity are
the (quantized) vector fields $\Xi$ and  one-forms $\Omega^1$.
These modules are defined to be dual to each other, i.e.~vector fields are homomorphisms
from $\Omega^1$ to the algebra $A$. A metric field can be seen as an invertible
 homomorphism between $\Xi$ and $\Omega^1$, subject to reality, symmetry or hermiticity conditions.
A covariant derivative is modeled in terms of a connection on $\Xi$ or $\Omega^1$.
In Yang-Mills theory the focus is on a representation module $V$ of the gauge group, e.g.~quark fields
being in the fundamental representation of $SU(3)$. The gauge field is described in terms of a connection
on this module and gauge transformations are module endomorphisms of $V$. The field strength is a homomorphism
between $V$ and $V\otimes_A \Omega^2$.

A nontechnical outline of the content of this part is as follows: In Chapter \ref{chap:prelim} we 
fix the notation and provide a definition of the basic algebraic objects appearing in our studies.
We review standard properties of these algebraic structures and state precisely what 
it means for an algebra or module to be covariant under the action of a Hopf algebra.

In Chapter \ref{chap:HAdef} we first review well-known results on the deformation of a Hopf algebra
and its modules by a Drinfel'd twist. In the second part of this chapter we focus on
a class of algebras carrying a particular structure, namely in addition of being covariant under
the Hopf algebra $H$ we also have a compatible right module structure with respect to the algebra structure of $H$.
It is shown that algebras of this type appear quite naturally,
in particular, the endomorphism algebra of a module and the Hopf algebra itself are of this type.
We prove that under these assumptions the deformed algebra is isomorphic to the undeformed one.

The focus of Chapter \ref{chap:modhom} is on module endomorphisms and homomorphisms.
We prove that the algebra of endomorphisms of a twist quantized module
is isomorphic to a quantization of the algebra of endomorphisms of the undeformed module.
This, in particular, gives us a prescription of how to quantize a classical endomorphism
and moreover tells us that all endomorphisms of the quantized module can be obtained by this
quantization prescription.
We extend the results to homomorphisms between two modules.
As an application we show that the module obtained by quantizing the dual of a module is isomorphic to
the one obtained by dualizing the quantized module, i.e.~that there are no ambiguities in defining
the dual of a module in a twist quantized setting.
Motivated by the example of noncommutative gravity we focus on algebras and modules
which are commutative up to the action of an $R$-matrix. In this setting, we prove that
there is an isomorphism between the homomorphisms respecting the right module structure and
homomorphisms respecting the left module structure.
Similar to the commutative case, this result allows us to restrict our studies to either right or left homomorphisms, 
since the other ones can be obtained canonically by this isomorphism.
We finish this chapter by investigating how homomorphisms can be extended to yield homomorphisms
on the tensor product of two modules. 
The quantization of the product of two homomorphisms is discussed
and we obtain that it is (up to isomorphisms) equal to the deformed product of the quantized homomorphisms.
Thus, there are no ambiguities in constructing product module homomorphisms in a twist quantized setting.

In Chapter \ref{chap:con} we focus on connections on modules.  We first provide a precise definition
of a connection on a left or right module. We then prove that
all connections on a twist quantized module can be obtained by quantizing connections 
on the undeformed module. Motivated again by the example of noncommutative gravity, we then focus
on algebras and modules which are commutative up to an $R$-matrix and show that there is an isomorphism
between the connections satisfying the left Leibniz rule and the connections satisfying the right Leibniz rule.
Thus, analogously to the homomorphisms above, we can restrict our studies to either left or right connections.
As a next step we investigate the extension of connections to the tensor product of two modules and to the dual
of a module.
This study is essential for noncommutative gravity since it allows us to express the connection
on tensor fields in terms of a fundamental connection on either vector fields or one-forms.
We show that the extension of connections to tensor products is compatible with twist quantization
in the sense that the quantization of the extended connection is (up to isomorphisms) equal to
the deformed extension of the quantized connection.

%The focus of Chapter \ref{chap:inv} is on involutions on homomorphisms
%in the setting where all algebras and modules are commutative up to an $R$-matrix.
%Making use of the isomorphism between left and right homomorphisms,
%we define an involution on these spaces. We then show that the quantization of real homomorphisms
%by real Drinfel'd twists is again real with respect to this involution.

We focus in Chapter \ref{chap:curvature} on the curvature and torsion of connections.
In particular, we express the curvature and torsion of a connection in the twist quantized setting
in terms of the corresponding undeformed connection.

In Chapter \ref{chap:ncgmath} we reinvestigate noncommutative gravity solutions
in the new formalism. We can extend the results obtained before to 
exotic abelian deformations and, even more, general Drinfel'd twists.
This provides a nontrivial application of our formalism.

In Chapter \ref{chap:outlookmath} we discuss problems which are still unsolved
and give an outlook to possible applications of the formalism developed in this part.

The Appendix \ref{app:symbols} contains a list of the mathematical symbols appearing in this part.

%%%%%%%%%%%%%%%%%%%%%%%%%%%%%%%%%%%%%%%%%%%%%%%%%%%%%%%
%%%%%%%%%%%%%%%%%%%%%%%%%%%%%%%%%%%%%%%%%%%%%%%%%%%%%%%

\chapter{\label{chap:prelim}Preliminaries and notation}
In this chapter we fix the notation and recall some basic facts about Hopf algebras
and their modules. 
The intention is to provide a collection of the algebraic structures relevant for this part.
This is why we present this chapter in a rather dry and minimalistic language, focusing only
mathematical aspects.
We refer the reader to Chapter \ref{chap:basicncg} for a more elementary introduction to Hopf algebras based on examples
relevant in physics.
This chapter also contains more explanations why the structures defined below are important
for noncommutative geometry and in particular noncommutative gravity.

Let $\bfK$ be a commutative unital ring.\footnote{
In noncommutative gravity the ring $\bfK$ will be $\bbC[[\lambda]]$ or $\bbR[[\lambda]]$.
} A {\it $\bfK$-module} $A$ is an abelian group with an action
$\bfK\times A \to A$, $(\beta,a)\mapsto \beta\,a$ of the ring on the module, such that
for all $\beta,\gamma\in\bfK$ and $a,b\in A$,
\begin{flalign}
 (\beta\gamma)\,a = \beta\,(\gamma\,a)~,\quad \beta\,(a+b)=\beta\,a +\beta\,b~,\quad (\beta+\gamma)\,a = \beta\,a +\gamma\,a~,
\quad 1\,a=a~.
\end{flalign}
Note that in the special case where $\bfK$ is a field, a $\bfK$-module is typically called a vector space.
A {\it $\bfK$-module homomorphism} (or {\it $\bfK$-linear map}) $\varphi:A\to B$ between the $\bfK$-modules $A$ and $B$
is a homomorphism of the abelian groups that satisfies, for all $a\in A$ and $\beta\in \bfK$,
$\varphi(\beta\,a)=\beta\,\varphi(a)$. 

An {\it algebra $A$ over $\bfK$} is a module over the ring $\bfK$
together with a $\bfK$-linear map $\mu:A\otimes A\to A$ (product). 
We denote by $a\otimes b$ the image of $(a,b)$ under the natural $\bfK$-bilinear map $A\times A\to A\otimes A$
and write for the product $\mu(a\otimes b) = a\,b$.
The algebra is called {\it associative}, if the following diagram commutes
\begin{flalign}
\xymatrix{
~A\otimes A\otimes A~ \ar[d]_-{{\mu\otimes\id}} \ar[r]^-{{\id\otimes\mu}}  &  A\otimes A  \ar[d]^-{\mu}\\ 
A\otimes A \ar[r]_-{\mu}  &  A
}
\end{flalign}
For $a\otimes b\otimes c\in A\otimes A\otimes A$ this means that
\begin{flalign}
 (a\,b)\,c = a\,(b\,c)~.
\end{flalign}
The algebra $A$ is {\it unital}, if there exists a $\bfK$-linear map $\mathbf{e}:\bfK\to A$ (unit)
satisfying
\begin{flalign}
\xymatrix{
A\otimes A  \ar[dr]^-{\mu}&  \\
\bfK\otimes A \ar[u]^-{\mathbf{e}\otimes\id} \ar[r]_-{\simeq} & A \ar[l]
}\qquad
\xymatrix{
A\otimes A  \ar[dr]^-{\mu}&  \\
A \otimes \bfK\ar[u]^-{\id\otimes \mathbf{e}} \ar[r]_-{\simeq} & A \ar[l]
}
\end{flalign}
We denote by $1:=\mathbf{e}(1)\in A$ the element satisfying $1\,a = a\,1 = a$, for all $a\in A$.
Algebras will always be associative and unital if not stated otherwise.

``Reversing the arrows'' in the commutative diagrams above yields a coalgebra.
A {\it coalgebra} $A$ is a $\bfK$-module together with a $\bfK$-linear map
$\Delta: A\to A\otimes A$ (coproduct) and a $\bfK$-linear map 
$\epsilon:A\to \bfK$ (counit) satisfying
\begin{subequations}
\begin{flalign}
\label{eqn:haprop1}&\xymatrix{
A \ar[r]^-{\Delta}  \ar[d]_-{\Delta}&  A\otimes A \ar[d]^-{\Delta\otimes \id}\\
A\otimes A \ar[r]_-{\id\otimes\Delta} & A\otimes A\otimes A
}
\end{flalign}
\begin{flalign}
\label{eqn:haprop2}\xymatrix{
A\otimes A \ar[d]_-{\epsilon\otimes\id} & \\
\bfK\otimes A \ar[r]_-{\simeq}  &  A \ar[ul]_-{\Delta} \ar[l]
}\qquad
\xymatrix{
A\otimes A \ar[d]_-{\id\otimes \epsilon} & \\
A\otimes\bfK \ar[r]_-{\simeq}  &  A \ar[ul]_-{\Delta} \ar[l]
}
\end{flalign}
\end{subequations}
It is useful to introduce a compact notation (Sweedler's notation) for the coproduct,
for all $a\in A$, $\Delta(a)=a_1\otimes a_2$ (sum understood). 
The diagrams (\ref{eqn:haprop1}) and  (\ref{eqn:haprop2}) in this notation read, for all $a\in A$,
\begin{subequations}
\begin{flalign}
 a_{1_1}\otimes a_{1_2}\otimes a_2 &= a_1\otimes a_{2_1} \otimes a_{2_2} =: a_1\otimes a_2\otimes a_3~,\\
 \epsilon(a_1)\,a_2 &= a_1\,\epsilon(a_2) =a~.
\end{flalign}
\end{subequations}
We denote the three times iterated application of the coproduct on  $a\in A$ by
$a_1\otimes a_2\otimes a_3\otimes a_4$. 

A {\it bialgebra} $A$ is an algebra and coalgebra satisfying the compatibility conditions
 \begin{subequations}
 \begin{flalign}
 \xymatrix{
 A\otimes A \ar[rrr]^-{\mu}  \ar[d]_-{\Delta\otimes\Delta} & & & A \ar[d]^-{\Delta} \\
 A\otimes A\otimes A\otimes A \ar[rrr]_-{(\mu\otimes\mu) \circ (\id\otimes\tau\otimes\id)}& & &A\otimes A
 }
 \end{flalign}
 \begin{flalign}
 \xymatrix{
A \ar[r]^-{\epsilon} & \bfK \\
A\otimes A \ar[u]_-{\mu} \ar[ur]_-{\epsilon\otimes\epsilon} & 
 }\qquad
 \xymatrix{
 \bfK \ar[r]^-{\mathbf{e}}  \ar[dr]_-{\mathbf{e}\otimes\mathbf{e}}& A \ar[d]^-{\Delta}\\
 & A\otimes A
 }
 \end{flalign}
\end{subequations}
In the first diagram $\tau$ denotes the flip map $\tau(a\otimes b)=b\otimes a$.
On the level of elements the conditions read, for all $a,b\in A$,
\begin{subequations}
\begin{flalign}
 \Delta(a\,b) &= a_1\, b_1\otimes a_2\,b_2~,\\
 \epsilon(a\,b) &= \epsilon(a)\,\epsilon(b)~,\quad \Delta(1) = 1\otimes 1~.
\end{flalign}
\end{subequations}

\begin{defi}
 A {\it Hopf algebra} $H$ is bialgebra together with a $\bfK$-linear map $S: H\to H$ (antipode) satisfying
 \begin{flalign}\label{eqn:haprop3} 
 \xymatrix{H\ar[r]^-{\epsilon} \ar[d]_-{\Delta} & \bfK \ar[r]^{\mathbf{e}} & H\\
 H\otimes H \ar[rr]_-{\id\otimes S~,~S\otimes \id} & & H\otimes H \ar[u]^-{\mu}
 }
 \end{flalign}
On the level of elements the condition reads, for all $\xi\in H$,
\begin{flalign}
 S(\xi_1)\,\xi_2 & = \xi_1\,S(\xi_2) = \epsilon(\xi)\,1~.
\end{flalign}
\end{defi}
It can be shown that the antipode of a Hopf algebra
is unique and satisfies
the following standard properties
\begin{subequations}
\begin{flalign}
S\circ \mu  = \mu\circ (S\otimes S)\circ\tau~&,\quad S\circ \mathbf{e} =\mathbf{e}~ ,\\
(S\otimes S)\circ \Delta  =\tau\circ\Delta \circ S ~&,\quad \epsilon \circ S = \epsilon~,
\end{flalign}
\end{subequations}
which on the level of elements read, for all $\xi,\eta\in H$,
\begin{subequations}
\begin{flalign}
 S(\xi\,\eta) = S(\eta)\,S(\xi)~&,\quad S(1) =1~,\\
 S(\xi_1)\otimes S(\xi_2) = S(\xi)_2\otimes S(\xi)_1 ~&,\quad \epsilon(S(\xi)) = \epsilon(\xi)~.
\end{flalign}
\end{subequations}

\begin{defi}
 A {\it left module $V$ over an algebra $A$} (i.e.,~a {\it left $A$-module}) is a $\bfK$-module
with a $\bfK$-linear map $\cdot:A\otimes V\to V$ satisfying
\begin{flalign}
\xymatrix{
~A\otimes A\otimes V~ \ar[d]_-{{\mu\otimes\id}} \ar[r]^-{{\id\otimes\cdot}}  &  A\otimes V \ar[d]^-{\cdot}\\ 
A\otimes V \ar[r]_-{\cdot}  &  V
}\qquad 
\xymatrix{
A\otimes V  \ar[dr]^-{\cdot}&  \\
\bfK\otimes V \ar[u]^-{\mathbf{e}\otimes\id} \ar[r]_-{\simeq} & V \ar[l]
}
\end{flalign}
Using the notation $a\cdot v = \cdot(a\otimes v)$ the conditions read on the level of elements, 
for all $a,b\in A$, $v\in V$,
\begin{flalign}
 a\cdot(b\cdot v) = (a\,b)\cdot v~,\quad 1\cdot v = v~.
\end{flalign}
The map $\cdot:A\otimes V\to V$ is called an {\it action} of $A$ on $V$
or a {\it representation} of $A$ on $V$. The class of left $A$-modules is denoted by $_A\MMM$.
\end{defi}
Analogously, a {\it right $A$-module} $V$ is a $\bfK$-module with a $\bfK$-linear map
$\cdot:V\otimes A\to V$ satisfying
\begin{flalign}
\xymatrix{
~V\otimes A\otimes A~ \ar[d]_-{{\cdot\otimes\id}} \ar[r]^-{{\id\otimes\mu}}  &  V\otimes A \ar[d]^-{\cdot}\\ 
V\otimes A \ar[r]_-{\cdot}  &  V
}\qquad 
\xymatrix{
V\otimes A  \ar[dr]^-{\cdot}&  \\
V\otimes \bfK \ar[u]^-{\id\otimes \mathbf{e}} \ar[r]_-{\simeq} & V \ar[l]
}
\end{flalign}
On the level of elements we have, for all  $a,b\in A$ and $v\in V$,
\begin{flalign}
 v\cdot(a\,b) = (v\cdot a)\cdot b~,\quad v\cdot 1 = v~.
\end{flalign}
The class of right $A$-modules is denoted by $\MMM_A$.

A left and a right module structure on $V$ are compatible if the left and right actions commute:
\begin{defi}
An {\it $(A,B)$-bimodule} $V$ is a left $A$-module and a right $B$-module
($A$ and $B$ are algebras over $\bfK$) satisfying the compatibility condition
\begin{flalign}
\xymatrix{
A\otimes V\otimes B \ar[r]^-{\cdot\otimes \id}  \ar[d]_-{\id\otimes\cdot}& V\otimes B \ar[d]^-{\cdot}\\
A\otimes V \ar[r]_-{\cdot} & V
}
\end{flalign}
On the level of elements we have, for all $a\in A$, $b\in B$ and $v\in V$, 
\begin{flalign}
 (a\cdot v)\cdot b = a\cdot(v\cdot b)~.
\end{flalign}
The class of $(A,B)$-bimodules is denoted by $_A\MMM_B$.

\noindent In case of $B=A$ we call $V$ an {\it $A$-bimodule}. We denote the class of
$A$-bimodules by $_A\MMM_A$.
\end{defi}

The algebra $A$ can itself be a module over another algebra $H$. If $H$ is further a Hopf algebra
we have the notion of an $H$-module algebra, expressing covariance of $A$ under $H$.
\begin{defi}\label{defi:hmodalg}
 Let $H$ be a Hopf algebra. A {\it left $H$-module algebra} is an algebra $A$
which is also a left $H$-module (where the left action is denoted by $\ra$), such that
\begin{flalign}
\xymatrix{
H\otimes A\otimes A \ar[r]^-{\id\otimes\mu} \ar[d]_-{\Delta\otimes\id\otimes\id}&H\otimes A \ar[r]^-{\ra} & A\\
H\otimes H\otimes A\otimes A \ar[rr]_-{(\ra\otimes\ra)\circ (\id\otimes\tau\otimes \id)} & & A\otimes A \ar[u]_-{\mu}
}\qquad
\xymatrix{
H\otimes\bfK \ar[r]^-{\id\otimes \mathbf{e}} \ar[d]_-{\epsilon\otimes \id} & H\otimes A \ar[d]^-{\ra}\\
\bfK \ar[r]_-{\mathbf{e}} & A
}
\end{flalign}
On the level of elements we have, for all $\xi\in H$ and $a,b\in A$,
\begin{flalign}
 \xi\ra (a\,b) = (\xi_1\ra a)\,(\xi_2\ra b)~,\quad \xi\ra 1 = \epsilon(\xi)\,1~.
\end{flalign}
The class of left $H$-module algebras is denoted by $^{H,\ra}\AAA$. 

\noindent A {\it left $H$-module algebra homomorphism $\varphi:A\to B$}
is an algebra homomorphism that intertwines between the left action of $H$ on $A$
and the left action of $H$ on $B$, i.e.~
\begin{flalign}
\xymatrix{
 A\otimes A \ar[r]^-{\varphi\otimes \varphi} \ar[d]_-{\mu}& B\otimes B \ar[d]^-{\mu}\\
 A \ar[r]_-{\varphi} & B
} \qquad 
\xymatrix{
\bfK \ar[r]^-{\mathbf{e}} \ar[dr]_-{\mathbf{e}}& A \ar[d]^-{\varphi}\\
& B
}\qquad
\xymatrix{
H\otimes A \ar[r]^-{\ra} \ar[d]_-{\id\otimes\varphi}& A \ar[d]^-{\varphi}\\
H\otimes B \ar[r]_-{\ra} & B
}
\end{flalign}
or on the level of elements, for all $a,b\in A$ and $\xi\in H$,
\begin{flalign}
 \varphi(a\,b) = \varphi(a)\,\varphi(b)~,\quad\varphi(1) =1~,\quad \varphi(\xi\ra a) = \xi\ra\varphi(a)~.
\end{flalign}

\end{defi}
We denote, for all $\xi\in H$, the endomorphism $A\to A,~a\mapsto \xi\ra a$ by $\xi\ra\in\End_\bfK(A)$.
Furthermore, we denote the action of $H\otimes H$ on $A\otimes A$ also by $\ra$, i.e.~
\begin{flalign}
\label{eqn:tensoraction}
 \ra= (\ra\otimes\ra)\circ (\id\otimes\tau\otimes \id) : H\otimes H\otimes A\otimes A \to A\otimes A~.
\end{flalign}

We can now consider $(A,B)$-bimodules $V$, where $A,B\in{^{H,\ra}}\AAA$ and $V$ is also
a left $H$-module. Compatibility between the Hopf algebra structure
of $H$ and the $(A,B)$-bimodule structure of $V$ leads to the following covariance requirement
\begin{defi}\label{defi:hmod}
A {\it left $H$-module $(A,B)$-bimodule} $V$ is an $(A,B)$-bimodule over
$A,B\in {^{H,\ra}}\AAA$ which is also a left $H$-module, such that 
\begin{subequations}\label{eqn:HABbimodule}
\begin{flalign}\label{eqn:HABbimodule1}
\xymatrix{
H\otimes A\otimes V \ar[r]^-{\id\otimes\cdot} \ar[d]_-{\Delta\otimes\id\otimes\id} &H\otimes V \ar[r]^-{\ra}  & V\\
H\otimes H\otimes A\otimes V \ar[rr]_-{(\ra\otimes\ra)\circ(\id\otimes\tau\otimes \id)} & & A\otimes V \ar[u]_-{\cdot}
}
\end{flalign}
\begin{flalign}\label{eqn:HABbimodule2}
\xymatrix{
H\otimes V\otimes B \ar[r]^-{\id\otimes\cdot} \ar[d]_-{\Delta\otimes\id\otimes\id}& H\otimes V \ar[r]^-{\ra} & V\\
H\otimes H\otimes V\otimes B \ar[rr]_-{(\ra\otimes\ra)\circ(\id\otimes\tau\otimes \id)} & & V\otimes B \ar[u]_-{\cdot}
}
\end{flalign}
\end{subequations}
or on the level of elements, for all $a\in A$, $b\in B$ and $v\in V$,
\begin{subequations}
 \begin{flalign}
 \xi\ra (a\cdot v) &= (\xi_1\ra a)\cdot(\xi_2\ra v)~,\\
 \xi\ra (v\cdot b) &= (\xi_1\ra v)\cdot (\xi_2\ra b)~.
 \end{flalign}
\end{subequations}
The class of left $H$-module $(A,B)$-bimodules is denoted by ${^{H,\ra}_A}\MMM_B$.

\noindent In case of $B=A$ we say that $V$ is a {\it left $H$-module $A$-bimodule}
 and denote the corresponding class by ${^{H,\ra}_A}\MMM_A$.

\noindent An algebra $E$ is a {\it left $H$-module $(A,B)$-bimodule algebra}, if $E$ as a module is a left $H$-module
$(A,B)$-bimodule and if $E$ is also a left $H$-module algebra. The class of left $H$-module $(A,B)$-bimodule
algebras is denoted by ${^{H,\ra}_A}\AAA_B$.
\end{defi}
The classes ${^{H,\ra}_A}\MMM$ and $^{H,\ra}\MMM_B$ are defined analogously to ${^{H,\ra}_A}\MMM_B$, where
(\ref{eqn:HABbimodule}) is restricted to (\ref{eqn:HABbimodule1}) or (\ref{eqn:HABbimodule2}), respectively.

\begin{ex}
 Consider the universal enveloping algebra $U\Xi$ associated with the Lie algebra of vector fields
$\Xi$ on a smooth manifold $\MM$. $U\Xi$ has a natural Hopf algebra structure as explained
in Chapter \ref{chap:basicncg}.

Let $V$ be the space of one-forms $\Omega^1$ (or vector fields $\Xi$) on $\MM$, and $A=C^\infty(\MM)$
be the algebra of smooth functions on $\MM$. $\Omega^1$ ($\Xi$) is a $C^\infty(\MM)$-bimodule,
where the right module structure equals the left module structure.

$C^\infty(\MM)$ is a left $U\Xi$-module algebra, where the action $\ra$ is simply the 
Lie derivative $\mathcal{L}$ on functions. Employing the Lie derivative on vector fields and one-forms,
we also have that $\Omega^1$ ($\Xi$) is a left $U\Xi$-module $C^\infty(\MM)$-bimodule, 
i.e.~$\Omega^1,\Xi\in {^{U\Xi,\mathcal{L}}_{C^\infty(\MM)}}\MMM_{C^\infty(\MM)}$.
\end{ex}

%%%%%%%%%%%%%%%%%%%%%%%%%%%%%%%%%%%%%%%%%%%%%%%%%%%%%%%
%%%%%%%%%%%%%%%%%%%%%%%%%%%%%%%%%%%%%%%%%%%%%%%%%%%%%%%

\chapter{\label{chap:HAdef}Hopf algebras, twists and deformations}
\section{Twist deformation preliminaries}
\begin{defi}\label{defi:twist}
 Let $H$ be a Hopf algebra. A {\it twist} $\mathcal{F}$ is an element $\mathcal{F}\in H\otimes H$ that
is invertible and that satisfies
\begin{subequations}
\label{eqn:twistprop}
\begin{flalign}
\label{eqn:twistprop1} \mathcal{F}_{12}\,(\Delta\otimes \id)\mathcal{F} &= \mathcal{F}_{23}\,(\id \otimes \Delta)\mathcal{F}~,\quad\text{(2-cocycle property)}\\
\label{eqn:twistprop2}  (\epsilon\otimes \id)\mathcal{F} &= 1= (\id \otimes \epsilon)\mathcal{F} ~,\quad\text{(normalization property)}~
\end{flalign}
\end{subequations}
where $\mathcal{F}_{12} =\mathcal{F}\otimes 1$ and $\mathcal{F}_{23}=1\otimes\mathcal{F}$.
\end{defi}
We shall frequently use the notation (sum over $\alpha$ understood)
\begin{flalign}
\label{eqn:twistnotation}
 \mathcal{F}=f^\alpha\otimes f_\alpha~,\quad \mathcal{F}^{-1} = \bar f^\alpha\otimes \bar f_\alpha~.
\end{flalign}
The $f^\alpha,f_\alpha,\bar f^\alpha,\bar f_\alpha$ are elements in $H$.

In order to get familiar with this notation we rewrite (\ref{eqn:twistprop1}), 
(\ref{eqn:twistprop2}) and the inverse of (\ref{eqn:twistprop1}), i.e.~the condition
\begin{flalign}
 \label{eqn:twistprop3} \bigl((\Delta\otimes\id)\mathcal{F}^{-1}\bigr)\,\mathcal{F}^{-1}_{12} = 
\bigl((\id\otimes\Delta)\mathcal{F}^{-1}\bigr)\,\mathcal{F}_{23}^{-1}~,
\end{flalign}
using the notation (\ref{eqn:twistnotation}). Explicitly,
\begin{subequations}
\label{eqn:twistpropsimp}
 \begin{flalign}
  \label{eqn:twistpropsimp1} f^\beta f^\alpha_1\otimes f_\beta f^\alpha_2\otimes f_\alpha &= f^\alpha\otimes f^\beta f_{\alpha_1}\otimes
f_\beta f_{\alpha_2}~,\\
\label{eqn:twistpropsimp2} \epsilon(f^\alpha)f_\alpha &=1 = f^\alpha \epsilon(f_\alpha)~,\\
\label{eqn:twistpropsimp3} \bar f^\alpha_1\bar f^\beta\otimes\bar f^\alpha_2\bar f_\beta \otimes \bar f_\alpha &= 
\bar f^\alpha\otimes \bar f_{\alpha_1}\bar f^\beta \otimes \bar f_{\alpha_2}\bar f_\beta~.
 \end{flalign}
\end{subequations}

We now recall how a twist $\mathcal{F}$ induces a deformation $H^\mathcal{F}$ of the Hopf algebra
$H$ and of all its $H$-modules, that become $H^\mathcal{F}$-modules. In particular, $H$-module algebras
are deformed into $H^\mathcal{F}$-module algebras, and commutative ones are typically deformed into noncommutative
ones. In this sense $\mathcal{F}$ induces a quantization.
\begin{theo}\label{theo:HAdef}
 The twist $\FF$ of the Hopf algebra $H$ leads to a new Hopf algebra $H^\FF$, given by
\begin{flalign}
 \bigl(H,\mu,\Delta^\FF,\epsilon, S^\FF\bigr)~.
\end{flalign}
As algebras $H^\FF=H$ and they also have the same counit $\epsilon^\FF=\epsilon$. The new coproduct $\Delta^\FF$
is given by, for all $\xi\in H$,
\begin{flalign}
 \label{eqn:defcoproduct}\Delta^\FF(\xi)=\mathcal{F}\,\Delta(\xi)\FF^{-1}~.
\end{flalign}
The new antipode is, for all $\xi\in H$,
\begin{flalign}
 \label{eqn:defantipode1} S^\FF(\xi) = \chi\,S(\xi)\,\chi^{-1}~,
\end{flalign}
where 
\begin{flalign}
 \label{eqn:defantipode2}\chi:=f^\alpha\,S(f_\alpha)~,\quad \chi^{-1}=S(\bar f^\alpha)\,\bar f_\alpha~.
\end{flalign}
\end{theo}
\noindent A proof of this theorem can be found in every textbook on Hopf algebras, e.g.~\cite{Majid:1996kd}.

Note: It is easy to show that the Hopf algebra $H^\FF$ admits the twist $\FF^{-1}$, indeed:
\begin{flalign}
 \FF_{12}^{-1}\,(\Delta^\FF\otimes \id)\FF^{-1} = \FF_{23}^{-1}\,(\id\otimes\Delta^\FF)\FF^{-1}
\end{flalign}
is equivalent to (\ref{eqn:twistprop3}). From (\ref{eqn:defcoproduct}), (\ref{eqn:defantipode1}) and (\ref{eqn:defantipode2})
we see that the Hopf algebra $(H^\FF)^{\FF^{-1}}$ is canonically isomorphic to $H$.
In this respect, we call the deformation $\FF^{-1}$ of $H^\FF$ dequantization.

\begin{theo}\label{theo:algebradef}
 Given a Hopf algebra $(H,\mu,\Delta,\epsilon,S)$, a twist $\FF\in H\otimes H$ and a left
$H$-module algebra $A$ (not necessarily associative or with unit), then there exists a deformed
left $H^\FF$-module algebra $A_\star$. The algebra $A_\star$ has the same $\bfK$-module structure as $A$
and the action of $H^\FF$ on $A_\star$ is that of $H$ on $A$. The product in $A_\star$ is given by
$\mu_\star := \mu\circ\FF^{-1}\ra : A\otimes A\to A$.
In the notation (\ref{eqn:twistnotation}) the deformed product reads, for all $a,b\in A$,
\begin{flalign}
 a\star b=(\bar f^\alpha\ra a)\,(\bar f_\alpha\ra b)~.
\end{flalign}
If $A$ has a unit then $A_\star$ has the same unit element. If $A$ is associative then
$A_\star$ is an associative algebra as well.
\end{theo}
\begin{proof}
 We have to prove that the product in $A_\star$ is compatible with the Hopf algebra structure on $H^\FF$,
for all $a,b\in A$ and $\xi\in H$,
\begin{flalign}
 \nn \xi\ra(a\star b) &= \xi\ra\bigl((\bar f^\alpha\ra a)\,(\bar f_\alpha\ra b)\bigr)\\
\nn &=(\xi_1\bar f^\alpha\ra a)\,(\xi_2\bar f_\alpha\ra b)\\
\nn &=(\bar f^\gamma f^\beta \xi_1\bar f^\alpha\ra a)\,(\bar f_\gamma f_\beta \xi_2\bar f_\alpha\ra b)\\
&=(\xi_{1_\FF}\ra a)\star(\xi_{2_\FF}\ra b)~,
\end{flalign}
where in line three we have inserted $1\otimes 1 = \FF^{-1}\,\FF$ and in line four
we have used the notation $\Delta^{\FF}(\xi)=\xi_{1_\FF}\otimes \xi_{2_\FF}$.

If $A$ has a unit element $1$, then $1\star a = a\star 1 =a$, for all $a\in A$, follows from the normalization property
(\ref{eqn:twistprop2}). If $A$ is an associative algebra we have to prove associativity of the new product,
for all $a,b,c\in A$,
\begin{flalign}
 \nn (a\star b)\star c &= \bar f^\alpha\ra\bigl((\bar f^\beta\ra a)\,(\bar f_\beta\ra b)\bigr)\,(\bar f_\alpha\ra c)\\
&=(\bar f^\alpha\ra a)\,\bar f_\alpha\ra\bigl((\bar f^\beta\ra b)\,(\bar f_\beta\ra c)\bigr)=a\star(b\star c)~,
\end{flalign}
where we have used the 2-cocycle property (\ref{eqn:twistprop1}) in the notation of (\ref{eqn:twistpropsimp3}).

\end{proof}

\begin{theo}\label{theo:moduledef}
 In the hypotheses of Theorem \ref{theo:algebradef}, given a left $H$-module $(A,B)$-bimodule $V\in {^{H,\ra}_A}\MMM_B$,
then there exists a left $H^\FF$-module $(A_\star,B_\star)$-bimodule $V_\star\in {^{H^\FF,\ra}_{A_\star}}\MMM_{B_\star}$.
The module $V_\star$ has the same $\bfK$-module structure as $V$ and the left action of $H^\FF$
on $V_\star$ is that of $H$ on $V$. The $A_\star$ and $B_\star$ action on $V_\star$ are respectively given by
$\star:= \cdot\circ \FF^{-1}\ra:A\otimes V \to V$ and $\star:= \cdot\circ \FF^{-1}:V\otimes B\to V$.
In the notation (\ref{eqn:twistnotation}) the deformed actions read, for all $a\in A$, $b\in B$ and $v\in V$,
\begin{subequations}
\label{eqn:moduledef}
\begin{flalign}
 \label{eqn:moduledef1}a\star v &=  (\bar f^\alpha\ra a)\cdot (\bar f_\alpha\ra v)~,\\
 \label{eqn:moduledef2}v\star b &=  (\bar f^\alpha\ra v)\cdot (\bar f_\alpha\ra b)~.
\end{flalign}
\end{subequations}
If $V$ is further a left $H$-module $(A,B)$-bimodule algebra $V=E\in {^{H,\ra}_A}\AAA_B$, 
then $E_\star\in {^{H^\FF,\ra}_{A_\star}}\AAA_{B_\star}$,
where the $\star$-product in the algebra $E_\star$ is given in Theorem \ref{theo:algebradef}.
\end{theo}
\begin{proof}
 The left $A_\star$-module property holds, for all $a,b\in A$ and $v\in V$,
\begin{flalign}
 \nn (a\star b)\star v &= \bar f^\alpha\ra\bigl((\bar f^\beta\ra a)\,(\bar f_\beta\ra b)\bigr)\cdot(\bar f_\alpha\ra v)\\
&=(\bar f^\alpha\ra a)\cdot \bar f_\alpha\ra\bigl((\bar f^\beta\ra b)\cdot (\bar f_\beta\ra v)\bigr)=a\star(b\star v)~.
\end{flalign}
The right $B_\star$-module property and the $(A_\star,B_\star)$-bimodule property are similarly proven.

\noindent Compatibility between the left $H^\FF$ and the left $A_\star$-action is shown by
\begin{flalign}
 \xi\ra (a\star v) = (\xi_1\bar f^\alpha\ra a)\cdot(\xi_2\bar f_\alpha\ra v) =  (\xi_{1_\FF}\ra a)\star (\xi_{2_\FF}\ra v)~.
\end{flalign}
Compatibility between the left $H^\FF$ and the right $B_\star$-action is shown analogously.
In case we have $V=E\in {^{H,\ra}_A}\AAA_B$, then $E_\star\in {^{H^\FF,\ra}_{A_\star}}\AAA_{B_\star}$ because of Theorem
\ref{theo:algebradef}.

\end{proof}
Analogously to Theorem \ref{theo:moduledef} we can deform ${^{H,\ra}_A}\MMM$ and ${^{H,\ra}}\MMM_B$ modules
into ${^{H^\FF,\ra}_{A_\star}}\MMM$ and ${^{H^\FF,\ra}}\MMM_{B_\star}$ modules by restricting (\ref{eqn:moduledef})
to (\ref{eqn:moduledef1}) or (\ref{eqn:moduledef2}), respectively.

%%%%%%%%%%%%%%%%%%%%%%%%%%%%%%%%%%%%%%%%%%%%%%%%%%%%%%%

\section{The quantization isomorphism $D_\FF$}
In general the algebras $A$ and $A_\star$ of Theorem \ref{theo:algebradef} are not isomorphic
algebras, if they are we name them $\bbA$ and $\bbA_\star$ and we denote their elements by $P,Q\in \bbA$.
\begin{theo}\label{theo:algebraiso}
 Let $\bbA$ be a left $H$-module algebra (not necessarily associative or with unit) and also a right
module with respect to the algebra $(H,\mu)$ (the right action of $(H,\mu)$ on $\bbA$
is simply denoted by juxtaposition), with the compatibility conditions, for all $P,Q\in\bbA$ and $\xi,\eta\in H$,
\begin{subequations}
\label{eqn:algebraiso}
\begin{flalign}
 \label{eqn:algebraiso1} (PQ)\xi &= P(Q\xi)~,\\
\label{eqn:algebraiso2} (P\xi)Q  &= P(\xi_1\ra Q)\xi_2~,\\
\label{eqn:algebraiso3} \xi\ra(P\eta) &= (\xi_1\ra P)(\xi_2\ra \eta)~,
\end{flalign}
\end{subequations}
where $\xi\ra \eta = \xi_1\,\eta\,S(\xi_2)$ is the adjoint action of $H$ on $H$.
In this case the algebras $\bbA$ and $\bbA_\star$ are isomorphic algebras.
\end{theo}
Before proving this theorem notice that $\bbA$ in the hypotheses above is a left module
with respect to the algebra $(H,\mu)$ by defining, for all $\xi\in H$ and $P\in\bbA$,
\begin{flalign}
 \xi P := (\xi_1\ra P)\xi_2~.
\end{flalign}
Indeed, we have for all $\xi,\eta\in H$ and $P\in \bbA$
\begin{flalign}
\nn \xi(\eta P) &= \xi\bigl((\eta_1\ra P) \eta_2\bigr)= \Bigl(\xi_1\ra\bigl((\eta_1\ra P) \eta_2\bigr) \Bigr)\,\xi_2
=(\xi_1\eta_1\ra P) (\xi_2\ra \eta_2) \xi_3 \\
&= (\xi_1\eta_1\ra P) \xi_2\eta_2 S(\xi_3)\xi_4 = (\xi_1\eta_1\ra P) \xi_2\eta_2 
= (\xi\eta)P~.
\end{flalign}
Moreover, $\bbA$ is an $(H,\mu)$-bimodule, i.e.~we have for all $\xi,\eta\in H$ and $P\in\bbA$,
\begin{flalign}
 \xi(P\eta) = \bigl(\xi_1\ra(P\eta)\bigr)\xi_2 = \bigl((\xi_1\ra P)(\xi_2\ra \eta)\bigr)\xi_3
=(\xi_1\ra P)\xi_2\eta S(\xi_3) \xi_4 = (\xi P)\eta~.
\end{flalign}
The Hopf algebra action $\ra$ on $\bbA$ is just the adjoint action with respect to this bimodule structure
\begin{flalign}
\label{eqn:adjointaction}
 \xi\ra P = \xi_1 P S(\xi_2)~.
\end{flalign}
Condition (\ref{eqn:algebraiso2}) then simply reads
\begin{flalign}
 \label{eqn:algebraiso4}(P\xi)Q = P(\xi Q)~,
\end{flalign}
for all $P,Q\in\bbA$ and $\xi\in H$. We obtain for the left $(H,\mu)$-action, for all $\xi\in H$ and $P,Q\in\bbA$,
\begin{flalign}
 \label{eqn:algebraiso5} \xi(PQ) = (\xi P)Q~.
\end{flalign}
In case $\bbA$ is unital with $1_\bbA\in\bbA$ we also find, for all $\xi\in H$,
\begin{flalign}
\label{eqn:algebraiso6}
 \xi 1_\bbA = (\xi_1\ra 1_\bbA) \xi_2 = 1_\bbA \epsilon(\xi_1)\xi_2 = 1_\bbA \xi~.
\end{flalign}

Vice versa, if $\bbA$ is an algebra and an $(H,\mu)$-bimodule satisfying (\ref{eqn:algebraiso1}), 
(\ref{eqn:algebraiso4}) and (\ref{eqn:algebraiso5}) (as well as (\ref{eqn:algebraiso6}) in case $\bbA$ is unital), 
then $\xi\ra P:= \xi_1\ P S(\xi_2)$ defines a left
$H$-module structure on $\bbA$ that satisfies (\ref{eqn:algebraiso2}) and (\ref{eqn:algebraiso3}).
Hence, Theorem \ref{theo:algebraiso} equivalently reads
\begin{theo}\label{theo:algebraiso2}
 Consider a Hopf algebra $H$ and an $(H,\mu)$-bimodule $\bbA$ that is also an algebra 
(not necessarily associative or with unit). If, for all $\xi\in H$ and $P,Q\in \bbA$, the ``generalized associativity'' conditions
\begin{subequations}
\begin{flalign}
 (PQ)\xi=P(Q\xi)~,\quad (P\xi)Q= P(\xi Q)~,\quad \xi(PQ) = (\xi P)Q~,
\end{flalign}
and in case of $\bbA$ unital also the condition
\begin{flalign}
 \xi 1_\bbA = 1_\bbA \xi~,
\end{flalign}
\end{subequations}
hold true, then the adjoint action (\ref{eqn:adjointaction}) structures $\bbA$ as a left module algebra with respect to 
the Hopf algebra $H$. Given a twist $\FF$ of the Hopf algebra $H$, the twist quantized
algebra $\bbA_\star$ is isomorphic to $\bbA$, via the map 
\begin{flalign}
\label{eqn:Ddef}
D_\FF:\bbA_\star\to\bbA~,\quad P\mapsto D_\FF(P) = (\bar f^\alpha\ra P)\bar f_\alpha = \bar f^\alpha_1 P S(\bar f^\alpha_2)\bar f_\alpha~.
\end{flalign}
\end{theo}
\begin{proof}
 $D_\FF$ is obviously a $\bfK$-linear map. We prove that
\begin{flalign}
 D_\FF\circ \mu_\star = \mu\circ (D_\FF\otimes D_\FF)~,
\end{flalign}
for all $P,Q\in\bbA$,
\begin{flalign}
\nn D_\FF(P\star Q) &= D_\FF\bigl((\bar f^\beta\ra P)(\bar f_\beta\ra Q)\bigr)
 =\Bigl(\bar f^\alpha\ra \bigl((\bar f^\beta\ra P)(\bar f_\beta\ra Q)\bigr)\Bigr)\bar f_\alpha\\
\nn &=(\bar f^\alpha_1\bar f^\beta\ra P)(\bar f^\alpha_2\bar f_\beta\ra Q)\bar f_\alpha
 = (\bar f^\alpha\ra P)(\bar f_{\alpha_1}\bar f^\beta\ra Q)\bar f_{\alpha_2}\bar f_\beta\\
 &= (\bar f^\alpha\ra P)\bar f_\alpha\,(\bar f^\beta\ra Q)\bar f_\beta = D_\FF(P)\,D_\FF(Q)~,
\end{flalign}
where in the second equality in the second line we have used (\ref{eqn:twistpropsimp3})
and in the first equality in the third line we have used
\begin{flalign}
 (\bar f_{\alpha_1}\bar f^\beta\ra Q)\bar f_{\alpha_2} = \bar f_{\alpha_1}(\bar f^\beta\ra Q)S(\bar f_{\alpha_2})\bar f_{\alpha_3}
= \bar f_\alpha(\bar f^\beta\ra Q)~.
\end{flalign}
This shows that $D_\FF$ is an algebra homomorphism and it remains to prove its invertability.
We can simplify (\ref{eqn:Ddef}) by using (\ref{eqn:twistpropsimp}) as follows
\begin{flalign}
 \nn D_\FF(P) &= \bar f^\alpha_1 P S(\bar f^\alpha_2)\bar f_\alpha = 
\bar f^\alpha f^\gamma P S(\bar f_{\alpha_1}\bar f^\beta f_\gamma)\bar f_{\alpha_2}\bar f_\beta\\
\label{eqn:Dsimp}&=\bar f^\alpha f^\gamma P S(f_\gamma) S(\bar f^\beta)\epsilon(\bar f_\alpha)\bar f_\beta =
 f^\gamma P S(f_\gamma) \chi^{-1}~,
\end{flalign}
where $\chi^{-1} = S(\bar f^\alpha)\bar f_\alpha$. Therefore, $D_\FF$ is invertible and we have for all $P\in\bbA$
\begin{flalign}\label{eqn:Dinv}
 D_\FF^{-1}(P) = \bar f^\alpha P \chi S(\bar f_\alpha)~,
\end{flalign}
where $\chi = f^\beta S(f_\beta)$.

Finally, if $\bbA$ is unital, $D_\FF$ maps the unit of $\bbA_\star$ to the unit of $\bbA$ because of the normalization
property of the twist (\ref{eqn:twistprop2}),
\begin{flalign}
 D_\FF(1_\bbA) = (\bar f^\alpha\ra 1_\bbA) \bar f_\alpha = 1_\bbA \epsilon(\bar f^\alpha) \bar f_\alpha = 1_\bbA 1 = 1_\bbA~.
\end{flalign}

\end{proof}
In the hypotheses of Theorem \ref{theo:algebraiso2}, the Hopf algebra properties
immediately imply that the algebra $\bbA$ has a left $H^\FF$-module algebra structure given by
the adjoint $H^\FF$-action, for all $\xi\in H$ and $P\in\bbA$,
\begin{flalign}
 \xi\ra_\FF P := \xi_{1_\FF} P S^\FF(\xi_{2_\FF})~.
\end{flalign}
\begin{theo}\label{theo:algebraiso3}
 The algebra isomorphism $D_\FF:\bbA_\star \to \bbA$ of Theorem \ref{theo:algebraiso2}
is also an isomorphism between the left $H^\FF$-module algebra $\bbA_\star\in {^{H^\FF,\ra}}\AAA$ and
the left $H^\FF$-module algebra $\bbA\in {^{H^\FF,\ra_\FF}}\AAA$, i.e.~$D_\FF$ intertwines between the
left $H^\FF$-actions $\ra$ and $\ra_\FF$, for all $\xi\in H$ and $P\in \bbA$,
\begin{flalign}
 D_\FF(\xi\ra P) = \xi\ra_\FF D_\FF(P)~.
\end{flalign}
\end{theo}
\begin{proof} Using (\ref{eqn:Dsimp}) we obtain
 \begin{flalign}
\nn  D_\FF(\xi\ra P) &= f^\beta (\xi\ra P) S(f_\beta)\chi^{-1} = f^\beta \xi_1 P S(f_\beta \xi_2)\chi^{-1} = 
\xi_{1_\FF}f^\beta P S(\xi_{2_\FF}f_\beta)\chi^{-1}\\
&=\xi_{1_\FF} f^\beta P S(f_\beta)\chi^{-1}\chi S(\xi_{2_\FF})\chi^{-1} = 
\xi_{1_\FF} D_\FF(P)  S^\FF(\xi_{2_\FF}) = \xi\ra_\FF D_\FF(P)~.
 \end{flalign}

\end{proof}
\begin{rem}\label{rem:Dinv}
We have discussed above that $H^\FF$ admits the twist $\FF^{-1}$ leading to $(H^\FF)^{\FF^{-1}}=H$.
The associated quantization isomorphism is exactly $D_\FF^{-1}$ given in (\ref{eqn:Dinv}).
This can be shown by using (\ref{eqn:Dsimp}) and a short calculation, for all $P\in\bbA$,
\begin{flalign}
 (f^\alpha\ra_\FF P) f_\alpha = \bar f^\alpha P S^\FF(\bar f_\alpha) \chi_\FF^{-1} 
= \bar f^\alpha P \chi S(\bar f_\alpha)\chi^{-1} \chi = D_\FF^{-1}(P)~,
\end{flalign}
where we have used that $\chi_\FF^{-1} = S^\FF(f^\alpha)f_\alpha = \chi$.
\end{rem}
In order to show that algebras of the type $\bbA$ actually appear quite naturally we provide some
examples.
\begin{ex}
 Given a left $H$-module algebra $A$ (not necessarily associative or with unit) consider the
crossed product (or smash product) $A\,\sharp\, H$. By definition the underlying $\bfK$-module of
$A\,\sharp\, H$ is $A\otimes H$, and the product is given by
\begin{flalign}
 (a\otimes \xi)(b\otimes \eta) = a(\xi_1\ra b)\otimes \xi_2\eta~.
\end{flalign}
Note that in case $A$ is associative, then so is $A\,\sharp\, H$, and that in case $A$ is unital, then so is
$A\,\sharp\, H$ with unit $1_A\otimes 1_H$.
The algebra $A\,\sharp\, H$ is a left $H$-module algebra with the action $\xi\ra (a\otimes\eta) 
= (\xi_1\ra a)\otimes(\xi_2\ra \eta)$.
The right $(H,\mu)$-module structure is given by $(a\otimes\xi)\eta=a\otimes(\xi\eta)$ and the compatibility conditions
(\ref{eqn:algebraiso}) hold true. Hence, the requirements of Theorem \ref{theo:algebraiso} are satisfied.
Notice that $A$ is always a subalgebra of $A\,\sharp\, H$ via $a\mapsto a\otimes 1_H$, for all $a\in A$.
If $A$ is unital, then $H$ is also a subalgebra of $A\,\sharp\, H$ via $\xi\mapsto 1_A\otimes \xi$, for all $\xi\in H$.
\end{ex}
\begin{ex}
 Given an associative and unital algebra $\bbA$ that admits a homomorphism $\rho:H\to\bbA$,
 then the hypotheses of Theorem \ref{theo:algebraiso2}
immediately hold, just define the $(H,\mu)$-bimodule structure of $\bbA$ by, for all $\xi\in H$ and $P\in \bbA$, 
$\xi P :=\rho(\xi)P$ and $P\xi := P\rho(\xi)$. A particular case is when $\bbA =H$ and we consider the
identity homomorphism. Then we recover the (Hopf algebra) isomorphism $D:H_\star \to H^\FF$ 
discussed in Chapter \ref{chap:basicncg}, Section \ref{sec:QLA}, see also \cite{0787.17010,Aschieri:2005zs}.
\end{ex}
\begin{ex}\label{ex:endoex}
 Given a Hopf algebra $H$ over the ring $\bfK$ and a left $H$-module $V$, consider the algebra
$\End_\bfK(V)$ of $\bfK$-linear maps ($\bfK$-module homomorphisms) from $V$ to $V$. Since $H$ is a Hopf algebra
the left action of $H$ on $V$ lifts to a left action of $H$ on $\End_\bfK(V)$, defined by,
for all $\xi\in H$ and $P\in \End_\bfK(V)$,
\begin{flalign}
 \xi\RA P := \xi_1\ra\circ P\circ S(\xi_2)\ra~\,~,
\end{flalign}
where $\circ$ denotes the usual composition of morphisms and $\xi\ra\in\End_\bfK(V)$ the endomorphism
$v\mapsto \xi\ra v$.
The algebra $\End_\bfK(V)$ is thus a left $H$-module algebra, i.e.~$\End_\bfK(V)\in{^{H,\RA}}\AAA$.
The algebra homomorphism $H\to \End_\bfK(V)\,,~\xi\mapsto\xi\ra$ implies, by using the previous example,
the isomorphism $D_\FF:\End_\bfK(V)_\star \to \End_\bfK(V)$, where the composition law in
$\End_\bfK(V)_\star$ is given by the $\star$-composition, for 
all $P,Q\in \End_\bfK(V)_\star$,
\begin{flalign}\label{eqn:starcompo}
 P\circ_\star Q := (\bar f^\alpha\RA P)\circ (\bar f_\alpha\RA Q)~.
\end{flalign}

By Theorem \ref{theo:algebraiso3} we even obtain a left $H^\FF$-module algebra isomorphism
$D_\FF$ between $\End_\bfK(V)_\star \in {^{H^\FF,\RA}}\AAA$ and $\End_\bfK(V)\in {^{H^\FF,\RA_\FF}}\AAA$.
\end{ex}

%%%%%%%%%%%%%%%%%%%%%%%%%%%%%%%%%%%%%%%%%%%%%%%%%%%%%%%
%%%%%%%%%%%%%%%%%%%%%%%%%%%%%%%%%%%%%%%%%%%%%%%%%%%%%%%

\chapter{\label{chap:modhom}Module homomorphisms}
\section{Quantization of endomorphisms}
In this section we study the algebras $\End_\bfK(V)$ and $\End_B(V)$ of endomorphisms of a module
$V\in {^{H,\ra}_A}\MMM_B$. We then investigate how they behave under twist quantization.
\begin{propo}\label{propo:endomodulealgebra}
 Let $A\in {^{H,\ra}}\AAA$ and $V\in {^{H,\ra}_A}\MMM$, then the algebra $\End_\bfK(V)$ of $\bfK$-linear maps
from $V$ to $V$ is a left $H$-module $A$-bimodule algebra
\begin{flalign}
 \End_\bfK(V)\in {^{H,\RA}_A}\AAA_A~,
\end{flalign}
where $\RA$ is the adjoint $H$-action, for all $\xi\in H$ and $P\in \End_\bfK(V)$,
\begin{flalign}
\label{eqn:endomodulealgebra1}
 \xi\RA P := \xi_1\ra\circ P\circ S(\xi_2)\ra~~,
\end{flalign}
and the $A$-bimodule structure is given by, for all $a\in A$ and $P\in \End_\bfK(V)$,
\begin{subequations}
\label{eqn:endomodulealgebra2}
\begin{flalign}
\label{eqn:endomodulealgebra21}  a\cdot P &:= l_a\circ P~,\\
 \label{eqn:endomodulealgebra22} P\cdot a &:= P\circ l_a~,
\end{flalign}
\end{subequations}
where, for all $v\in V$, $l_a(v) := a\cdot v$.

\noindent If $V$ is also a right module over $B\in{^{H,\ra}}\AAA$, such that $V\in {^{H,\ra}_A}\MMM_B$, then the subalgebra
 of right $B$-linear endomorphisms of $V$,  $\End_B(V)\subseteq \End_\bfK(V)$, is still a left $H$-module
$A$-bimodule algebra
\begin{flalign}
 \End_B(V) \in {^{H,\RA}_A}\AAA_A~,
\end{flalign}
with $H$ and $A$ actions given in (\ref{eqn:endomodulealgebra1}) and (\ref{eqn:endomodulealgebra2}), respectively.
\end{propo}
\begin{proof}
 $\End_\bfK(V)$ is an $A$-bimodule, since for all $a,b\in A$ and $P\in \End_\bfK(V)$,
\begin{flalign}
 \nn a\cdot(b\cdot P) &= l_a\circ (l_b\circ P) = l_{ab}\circ P = (ab)\cdot P~,\\
 \nn (P \cdot a)\cdot b &= (P\circ l_a)\circ l_b = P\circ l_{ab} = P\cdot(ab)~,\\
 a\cdot(P\cdot b) &= l_a\circ (P\circ l_b) = (l_a\circ P)\circ l_b = (a\cdot P)\cdot b~.
\end{flalign}
It is straightforward to check that $\End_\bfK(V)$ is a left $H$-module algebra, for all $\xi,\eta\in H$ and
$P,Q\in\End_\bfK(V)$,
\begin{flalign}
 \xi\RA (\eta\RA P) = (\xi\eta)\RA P~,\quad \xi\RA\id_V = \epsilon(\xi)\,\id_V~,
\end{flalign}
and
\begin{flalign}\label{eqn:endomodulealgebra3}
 \xi\RA(P\circ Q) = (\xi_1\RA P)\circ (\xi_2\RA Q)~.
\end{flalign}
In order to show that $\End_\bfK(V)$ is a left $H$-module $A$-bimodule algebra it remains to prove
the compatibilities $\xi\RA(a\cdot P) = (\xi_1\ra a)\cdot(\xi_2\RA P)$ and $\xi\RA(P\cdot a) = (\xi_1\RA P)\cdot (\xi_2\ra a)$.
They immediately follow from (\ref{eqn:endomodulealgebra3}) and $\xi\RA l_a= l_{\xi\ra a}$, for all
$\xi\in H$ and $a\in A$. The latter property is simply shown to hold
\begin{flalign}
 \nn(\xi\RA l_a)(v) &= \xi_1\ra\bigl(l_a(S(\xi_2)\ra v)\bigr) = \xi_1\ra\bigl(a\cdot(S(\xi_2)\ra v)\bigr)\\
&=(\xi_1\ra a)\cdot (\xi_2 S(\xi_3)\ra v) = (\xi\ra a)\cdot v = l_{\xi\ra a}(v)~, 
\end{flalign}
for all $v\in V$.

Finally, let $V\in {^{H,\ra}_A}\MMM_B$, then (by definition) $V$ is an $(A,B)$-bimodule
and we in particular have $a\cdot (v \cdot b) = (a\cdot v)\cdot b$, hence $l_a\in\End_B(V)$ for all 
$a\in A$. Therefore, $a\cdot P\in \End_B(V)$ and $P\cdot a\in \End_B(V)$ for all $a\in A$ and 
$P\in \End_B(V)$. Furthermore, for all $\xi\in H$ and $P\in \End_B(V)$ we have $\xi\RA P\in \End_B(V)$, since
\begin{flalign}
 \nn(\xi\RA P)(v\cdot b) &= \xi_1\ra\Bigl( P\bigl( (S(\xi_3)\ra v)\cdot (S(\xi_2)\ra b) \bigr)\Bigr)\\
 &=\Bigl( \xi_1\ra\bigl( P(S(\xi_4)\ra v) \bigr)\Bigr)\cdot \Bigl(\xi_2\ra\bigl(S(\xi_3)\ra b\bigr)\Bigr) = (\xi\RA P)(v)\cdot b~,
\end{flalign}
for all $v\in V$ and $b\in B$.

\end{proof}

Let $H$ be a Hopf algebra with twist $\FF\in H\otimes H$, and let $A,B\in {^{H,\ra}}\AAA$, $V\in {^{H,\ra}_A}\MMM_B$. 
As a consequence,  $\End_\bfK(V),\End_B(V)\in {^{H,\RA}_A}\AAA_A$. We have two possible deformations of the endomorphisms
$\End_\bfK(V)$, $\End_B(V)$: The first, $\End_\bfK(V)_\star,\End_B(V)_\star\in  {^{H^\FF,\RA}_{A_\star}}\AAA_{A_\star}$
is obtained by applying Theorem \ref{theo:moduledef} to the left $H$-module $A$-bimodule algebras $\End_\bfK(V)$, $\End_B(V)$.
It is characterized by a deformed composition law $P\circ_\star Q = (\bar f^\alpha\RA P)\circ (\bar f_\alpha\RA Q)$.
The second option is simply to consider the $\bfK$-linear or right $B_\star$-linear endomorphisms of the deformed
module $V_\star$. From Proposition \ref{propo:endomodulealgebra} we have $\End_\bfK(V_\star),\End_{B_\star}(V_\star)\in
{^{H^\FF,\RA_\FF}_{A_\star}}\AAA_{A_\star}$. Here the product is the usual composition law $\circ$
and the action is the adjoint $H^\FF$-action, for all $\xi\in H^\FF$ and $P\in\End_\bfK(V_\star)$,
\begin{flalign}\label{eqn:adjointhfaction}
 \xi\RA_\FF P := \xi_{1_\FF}\ra\,\circ P\circ\,S^\FF(\xi_{2_\FF})\ra~.
\end{flalign}

\begin{theo}\label{theo:endodef}
 The map 
\begin{flalign}\label{eqn:endodef}
 D_\FF : \End_\bfK(V)_\star \to \End_\bfK(V_\star)~,\quad P\mapsto D_\FF(P)= (\bar f^\alpha\RA P)\circ \bar f_\alpha\ra~~
\end{flalign}
is an isomorphism between the left $H^\FF$-module $A_\star$-bimodule algebras 
$\End_\bfK(V)_\star\in {^{H^\FF,\RA}_{A_\star}}\AAA_{A_\star}$ and
$\End_\bfK(V_\star)\in {^{H^\FF,\RA_\FF}_{A_\star}}\AAA_{A_\star}$. It restricts to a
left $H^\FF$-module $A_\star$-bimodule algebra isomorphism 
\begin{flalign}
 D_\FF:\End_{B}(V)_\star \to \End_{B_\star}(V_\star)~.
\end{flalign}

\noindent We call $D_\FF(P)$ the quantization of the endomorphism $P\in \End_B(V)$.

\end{theo}
\begin{proof}
 Since $V = V_\star$ as $H^\FF$-modules, we have as left $H^\FF$-module algebras $\End_\bfK(V_\star)=\End_\bfK(V)$.
The left $H^\FF$-module algebra isomorphism (\ref{eqn:endodef}) is therefore the isomorphism
$D_\FF:\End_\bfK(V)_\star\to \End_\bfK(V)$ discussed in Example \ref{ex:endoex}.

The $A_\star$-bimodule structure of $\End_\bfK(V)_\star$ is given by (cf.~Theorem \ref{theo:moduledef}), for all
$a\in A$ and $P\in \End_\bfK(V)$,
\begin{subequations}
\begin{flalign}
 a\star P &:= l_{\bar f^\alpha\ra a}\circ (\bar f_\alpha\RA P) = (\bar f^\alpha\RA l_a)\circ (\bar f_\alpha\RA P) = l_a\circ_\star P~,\\
P\star a &:= (\bar f^\alpha\RA P)\circ l_{\bar f_\alpha\ra a}=(\bar f^\alpha\RA P)\circ (\bar f_\alpha\RA l_a) = P\circ_\star l_a~.
\end{flalign}
\end{subequations}
The $A_\star$-bimodule structure of $\End_\bfK(V_\star)$ is given by (cf.~Proposition \ref{propo:endomodulealgebra}),
for all $a\in A_\star$ and $P_\star\in \End_\bfK(V_\star)$,
\begin{subequations}
 \begin{flalign}
  a\bigstar P_\star := l_a^\star\circ P_\star~,\\
 P_\star \bigstar a := P_\star\circ l_a^\star~,
 \end{flalign}
\end{subequations}
where, for all $v\in V_\star$, $l_a^\star(v) := a\star v$.
The map $D_\FF$ is a left $H^\FF$-module $A_\star$-bimodule algebra isomorphism,
because (cf.~Theorem \ref{theo:algebraiso2}) $D_\FF(a\star P) = D_\FF(l_a\circ_\star P) = D_\FF(l_a)\circ D_\FF(P)$,
$D_\FF(P \star a) = D_\FF(P)\circ D_\FF(l_a)$ and $D_\FF(l_a) =l_a^\star$. The last property follows from a short calculation
\begin{flalign}
 D_\FF(l_a)(v) = (\bar f^\alpha\RA l_a)(\bar f_\alpha\ra v) = l_{\bar f^\alpha\ra a}(\bar f_\alpha\ra v) = a\star v = l_a^\star (v)~,
\end{flalign}
for all $v\in V$.

\noindent In order to show that $D_\FF$ restricts to an isomorphism between the left $H^\FF$-module
$A_\star$-bimodule subalgebras $\End_B(V)_\star\in {^{H^\FF,\RA}_{A_\star}}\AAA_{A_\star}$ and
$\End_{B_\star}(V_\star)\in {^{H^\FF,\RA_\FF}_{A_\star}}\AAA_{A_\star}$, we have to prove that for all
$P\in \End_B(V)$ we have $D_\FF(P)\in \End_{B_\star}(V_\star)$ and that for all $P_\star\in \End_{B_\star}(V_\star)$
we have $D_\FF^{-1}(P_\star)\in \End_B(V)$. 
Employing Remark \ref{rem:Dinv} it is sufficient to prove the first statement, since 
the second follows from dequantization with $\FF^{-1}$.
We obtain by a short calculation, for all $v\in V$ and $b\in B$,
\begin{flalign}
\nn D_\FF(P)(v\star b) &= 
(\bar f^\alpha\RA P)\bigl((\bar f_{\alpha_1}\bar f^\beta \ra v)\cdot (\bar f_{\alpha_2}\bar f_\beta\ra b)\bigr)\\
\nn&= (\bar f^\alpha_1\bar f^\beta\RA P)\bigl(\bar f^\alpha_2\bar f_\beta\ra v \bigr)\cdot (\bar f_\alpha\ra b)\\
&=\bar f^\alpha\ra \bigl(D_\FF(P)(v)\bigr)\cdot (\bar f_\alpha\ra b) = D_\FF(P)(v)\star b~,
\end{flalign}
where in the second line we have used the $2$-cocycle property of the twist (\ref{eqn:twistpropsimp3}) and 
the fact that $\xi\RA P\in \End_B(V)$ for all $\xi\in H$ and $P\in \End_B(V)$.

\end{proof}

We have so far studied right $B$-linear endomorphisms $\End_B(V)$, but we as well could have studied
left $A$-linear endomorphisms $_A\End(V)$ of $V\in {^{H,\ra}_A}\MMM_B$. 
These are related to the right linear endomorphisms by a
``mirror'' construction.

Recall that given a Hopf algebra $(H,\mu,\Delta,\epsilon,S)$ with an invertible antipode $S$,
we have the Hopf algebras $H^\cop$ and $H_\op$. With an abuse of notation denoting
also the underlying $\bfK$-module by $H$, we have explicitly
$H^\cop=(H,\mu,\Delta^\cop,\epsilon,S^{-1})$ and $H_\op=(H,\mu_\op,\Delta,\epsilon,S^{-1})$,
where the coopposite coproduct is defined by, for all $\xi\in H$, $\Delta^\cop(\xi)=\xi_{1^\cop}\otimes \xi_{2^\cop} 
= \xi_2\otimes \xi_1$, and the opposite product is given by, for all $\xi,\eta\in H$,
$\mu_\op(\xi\otimes \eta)=\mu(\eta\otimes\xi) =\eta\,\xi$. Note that $H_\op^\cop=(H,\mu_\op,\Delta^\cop,\epsilon,S)$
even exists when $S$ is not invertible. For a simplified mirror construction based on
$H^\cop$ (and not on $H^\cop_\op$) we are going to assume from now on that $S^{-1}$ exists.
This assumption is motivated by the fact that $S^{-1}$ exists, in particular, for {\it all} quasitriangular 
Hopf algebras (see e.g.~\cite{Majid:1996kd}), which will be the main subject of our studies.

We observe that for any left module $V\in {_A}\MMM$ there is a right module
$V^\op\in \MMM_{A^\op}$. As $\bfK$-modules $V=V^\op$, the algebra $A^\op$ is the algebra with opposite product
and its right action on $V^\op$ is given by $v\cdot^\op a := a\cdot v$. Similarly, for any
right module $V\in \MMM_{A}$ there is a left module $V^\op\in {_{A^\op}}\MMM$, and $(V^\op)^\op=V$.

Moreover, if we have a left $H$-module algebra $A\in {^{H,\ra}}\AAA$ then the opposite algebra
is a left $H^\cop$-module algebra
\begin{flalign}
 A\in {^{H,\ra}}\AAA \quad\Longrightarrow\quad A^\op\in{^{H^\cop,\ra}}\AAA~,
\end{flalign}
where the Hopf algebra action is unchanged.
This easily leads to the following
\begin{lem}\label{lem:opposite}
 Let $A,B\in {^{H,\ra}}\AAA$ and $V\in {^{H,\ra}_A}\MMM_B$, then $A^\op,B^\op\in {^{H^\cop,\ra}}\AAA$
and $V^\op\in {^{H^\cop,\ra}_{B^\op}}\MMM_{A^\op}$. If $E\in {^{H,\ra}_A}\AAA_B$, then 
$E^\op\in {^{H^\cop,\ra}_{B^\op}}\AAA_{A^\op}$.
\end{lem}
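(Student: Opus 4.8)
The plan is to verify each claimed membership directly from the definitions, treating the three statements in order and observing that each one reduces to checking a handful of compatibility axioms that all follow from the Hopf algebra axioms together with the behaviour of $S^{-1}$ under $\Delta^\cop$. First I would recall the two elementary facts already noted in the text: for any algebra $A\in {^{H,\ra}}\AAA$ the opposite algebra $A^\op$ (same underlying $\bfK$-module, product $a\,\mu_\op\, b = b\,a$, unchanged $H$-action) lies in ${^{H^\cop,\ra}}\AAA$, and for any left $A$-module $V$ the $\bfK$-module $V$ equipped with $v\cdot^\op a := a\cdot v$ is a right $A^\op$-module, with $(V^\op)^\op = V$. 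I would state these as a preliminary sublemma with one-line proofs — the module-algebra compatibility $\xi\ra(a\,\mu_\op\,b) = \xi\ra(b\,a) = (\xi_1\ra b)(\xi_2\ra a) = (\xi_{2^\cop}\ra a)\,\mu_\op\,(\xi_{1^\cop}\ra b)$, which is exactly the ${^{H^\cop,\ra}}\AAA$ axiom, and similarly $\xi\ra 1 = \epsilon(\xi)1$ is unchanged.

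Next I would handle the bimodule claim. Given $V\in {^{H,\ra}_A}\MMM_B$, the left $A$-action and right $B$-action on $V$ become, on $V^\op$, a right $A^\op$-action and a left $B^\op$-action respectively, via $v\cdot^\op a := a\cdot v$ and $b\cdot^\op v := v\cdot b$. The $(B^\op,A^\op)$-bimodule axiom $(b\cdot^\op v)\cdot^\op a = b\cdot^\op(v\cdot^\op a)$ unravels to $(v\cdot b)$ acted on the left by $a$ equals $(a\cdot v)\cdot b$, which is precisely the original $(A,B)$-bimodule compatibility in ${^{H,\ra}_A}\MMM_B$. The covariance conditions of Definition \ref{defi:hmod} for $V^\op$ over $H^\cop$ follow from those of $V$ over $H$ by replacing $\Delta$ with $\Delta^\cop$: for instance $\xi\ra(v\cdot^\op a) = \xi\ra(a\cdot v) = (\xi_1\ra a)\cdot(\xi_2\ra v) = (\xi_{2^\cop}\ra v)\cdot^\op(\xi_{1^\cop}\ra a)$, which is the required identity for a left $H^\cop$-module right $A^\op$-module, and symmetrically for the $B^\op$-action. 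Thus $V^\op\in {^{H^\cop,\ra}_{B^\op}}\MMM_{A^\op}$. For the algebra statement, if $E\in {^{H,\ra}_A}\AAA_B$ then $E^\op$ is simultaneously the opposite algebra (covariant over $H^\cop$ by the preliminary sublemma) and the opposite bimodule (covariant over $H^\cop$ by the bimodule argument just given), and the compatibility between the algebra structure and the $A^\op$- and $B^\op$-bimodule structures on $E^\op$ is inherited from that on $E$ by the same opposite-product bookkeeping; hence $E^\op\in {^{H^\cop,\ra}_{B^\op}}\AAA_{A^\op}$.

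I do not expect any genuine obstacle here — the statement is essentially a dictionary lemma and every verification is a direct, if slightly tedious, rewriting. The only point requiring a little care is the role of the antipode: passing from $H$ to $H^\cop$ changes $S$ to $S^{-1}$, so in any identity where the antipode appears explicitly (for instance when later one uses the right-module covariance to rewrite $\xi\ra(v\cdot b)$ in a form involving $S$, or in the endomorphism/connection mirror constructions that build on this lemma) one must check that the $H^\cop$-antipode $S^{-1}$ is exactly what is needed. Since the defining covariance axioms in Definitions \ref{defi:hmodalg} and \ref{defi:hmod} are phrased only in terms of $\Delta$ and $\epsilon$ and never invoke $S$ directly, the verification of the lemma itself does not touch $S$ at all; the hypothesis that $S^{-1}$ exists is recorded only so that $H^\cop$ is a bona fide Hopf algebra in the first place. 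Accordingly I would keep the proof short: state the preliminary sublemma, verify the bimodule axiom and the two covariance squares by the opposite-product substitution, and remark that the algebra case is the conjunction of the two.
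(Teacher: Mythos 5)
Your proposal is correct and follows essentially the same route as the paper, which proves nothing beyond the two preliminary observations (the opposite algebra is a left $H^\cop$-module algebra with unchanged action, and left/right actions become right/left actions on the opposite module) and then asserts the lemma "easily" follows; your write-up simply fills in the covariance checks. One bookkeeping slip to fix: with the paper's convention $\xi_{1^\cop}\otimes\xi_{2^\cop}=\xi_2\otimes\xi_1$, the final members of your chains should read $(\xi_{1^\cop}\ra a)\,\mu_\op\,(\xi_{2^\cop}\ra b)$ and $(\xi_{1^\cop}\ra v)\cdot^\op(\xi_{2^\cop}\ra a)$, since those are the expressions that both constitute the $H^\cop$-covariance axioms and unravel to $(\xi_1\ra b)(\xi_2\ra a)$ and $(\xi_1\ra a)\cdot(\xi_2\ra v)$ respectively; as written, with $1^\cop$ and $2^\cop$ interchanged, the last equalities are false, though the intended identities and hence the lemma are of course correct.
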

We apply these observations to the algebra of endomorphisms of the module $V$ and obtain
\begin{propo}\label{propo:opposite}
 Let $A,B\in {^{H,\ra}}\AAA$ and $V\in{^{H,\ra}_A}\MMM_B$, then $\bigl({_A}\End(V)\bigr)^\op \in {^{H,\RA^\cop}_B}\AAA_B$,
where for all $\xi\in H$ and $P\in {_A}\End(V)$,
\begin{flalign}\label{eqn:adjointhcopaction}
 \xi\RA^\cop P := \xi_2\ra\circ P\circ S^{-1}(\xi_1)\ra~~,
\end{flalign}
i.e.~the action $\RA^\cop$ is just the adjoint action of
 $H^\cop$, $\xi\RA^\cop P = \xi_{1^\cop}\ra\circ P \circ S^\cop(\xi_{2^\cop})\ra$~.
\end{propo}
\begin{proof}
 The hypothesis and Lemma \ref{lem:opposite} 
implies  $A^\op,B^\op\in {^{H^\cop,\ra}}\AAA$, $V^\op\in {^{H^\cop,\ra}_{B^\op}}\MMM_{A^\op}$, and hence
 $\End_{A^\op}(V^\op)\in {^{H^\cop,\RA^\cop}_{B^\op}}\AAA_{B^\op}$. Because of Lemma \ref{lem:opposite}
we have $\bigl(\End_{A^\op}(V^\op)\bigr)^\op \in {^{H,\RA^\cop}_B}\AAA_B$.
The proof follows by using the canonical algebra isomorphism
\begin{flalign}
 \End_{A^\op}(V^\op) \simeq {_A}\End(V)
\end{flalign}
given by the identity map. Indeed, if $P\in \End_{A^\op}(V^\op)$, then for all $a\in A$ and $v\in V$,
\begin{flalign}
 P(a\cdot v) = P(v\cdot^\op a) = P(v)\cdot^\op a = a\cdot P(v)~,
\end{flalign}
hence $P\in {_A}\End(V)$, and vice versa.

\end{proof}

The $B$-bimodule structure of $\bigl({_A}\End(V)\bigr)^\op$ explicitly reads, for all $b\in B$ and $P\in
{_A}\End(V)$,
\begin{flalign}
 b\cdot P = P\circ l_b^\op = P\circ r_b ~,\quad P\cdot b = l^\op_b\circ P = r_b\circ P~,
\end{flalign}
where, for all $v\in V$, $r_b(v) := v\cdot b$.

Similar to Theorem \ref{theo:endodef} we have a left quantization map.
\begin{theo}\label{theo:leftendodef}
Let $H$ be a Hopf algebra with twist $\FF\in H\otimes H$, and let $A,B\in {^{H,\ra}}\AAA$, $V\in{^{H,\ra}_A}\MMM_B$.
The map 
\begin{flalign}
\label{eqn:DFcop}
 D_\FF^\cop : \Bigl(\bigl({_A}\End(V)\bigr)^\op\Bigr)_\star\to \bigl({_{A_\star}}\End(V_\star)\bigr)^\op~,\quad P\mapsto D_\FF^\cop(P) = 
(\bar f_\alpha\RA^\cop P)\circ \bar f^\alpha\ra~
\end{flalign}
is an isomorphism between the left $H^\FF$-module $B_\star$-bimodule algebras
$\Bigl(\bigl({_A}\End(V)\bigr)^\op\Bigr)_\star\in {^{H^\FF,\RA^\cop}_{B_\star}}\AAA_{B_\star}$
and $\bigl({_{A_\star}}\End(V_\star)\bigr)^\op\in {^{H^\FF,(\RA_\FF)^\cop}_{B_\star}}\AAA_{B_\star}$.

\noindent We call $D_\FF^\cop(P)$ the left quantization of the endomorphism $P\in {_A}\End(V)$.

\end{theo}
\begin{proof}
 Recall that if $\FF$ is a twist of $H$, then $\FF^\cop = \FF_{21}$ is a twist of the Hopf algebra $H^\cop$.
From Lemma \ref{lem:opposite}, Proposition \ref{propo:endomodulealgebra}, and the Theorems 
\ref{theo:algebradef} and \ref{theo:moduledef}
we have 
\begin{subequations}
\begin{flalign}
 \bigl(\End_{A^\op}(V^\op)\bigr)_{\star^\cop}\in 
{^{ (H^\cop)^{\FF^\cop},\RA^\cop }_{(B^\op)_{\star^\cop}}}\AAA_{(B^\op)_{\star^\cop}}~,
\end{flalign}
and
\begin{flalign}
 \End_{(A^\op)_{\star^\cop}}((V^\op)_{\star^\cop})\in
 {^{ (H^\cop)^{\FF^\cop},(\RA^\cop)_{\FF^\cop} }_{(B^\op)_{\star^\cop}}}\AAA_{(B^\op)_{\star^\cop}}~,
\end{flalign}
\end{subequations}
where $\star^\cop$ is the $\star$-product of $\FF^\cop$.
Theorem \ref{theo:endodef} implies the left $(H^\cop)^{\FF^\cop}$-module $(B^\op)_{\star^\cop}$-bimodule algebra
isomorphism
\begin{flalign}
\label{eqn:oppo1}
 D_{\FF^\cop} = D_\FF^\cop : \bigl(\End_{A^\op}(V^\op)\bigr)_{\star^\cop} \to \End_{(A^\op)_{\star^\cop}}((V^\op)_{\star^\cop})~.
\end{flalign}
Observe that
\begin{subequations}
\begin{flalign}
 (A^\op)_{\star^\cop} = (A_\star)^\op~,\quad (B^\op)_{\star^\cop} = (B_\star)^\op~,\quad (V^\op)_{\star^\cop} = (V_\star)^\op~,
\end{flalign}
as well as 
\begin{flalign}
 (H^\cop)^{\FF^\cop} = (H^\FF)^\cop~,\quad (\RA^\cop)_{\FF^\cop} = (\RA_\FF)^\cop~.
\end{flalign}
\end{subequations}

Using the canonical algebra isomorphism
$ \End_{A^\op}(V^\op)\simeq {_A}\End(V)$ 
we have
\begin{subequations}
\begin{flalign}
  \bigl(\End_{A^\op}(V^\op)\bigr)_{\star^\cop} \simeq \bigl({_A}\End(V)\bigr)_{\star^\cop}
=\bigl(\bigl({_A}\End(V)\bigr)^{\op\,\op}\bigr)_{\star^\cop} = \Bigr(\bigl({_A}\End(V)\bigr)^{\op}\Bigr)_\star^{~\,~\op}
\end{flalign}
and
\begin{flalign}
 \End_{(A^\op)_{\star^\cop}}((V^\op)_{\star^\cop}) = \End_{(A_\star)^\op}((V_\star)^\op)\simeq {_{A_\star}}\End(V_\star)~.
\end{flalign}
\end{subequations}
The proof of the theorem follows by applying Lemma \ref{lem:opposite} to (\ref{eqn:oppo1}), leading to the
 left $H^\FF$-module $B_\star$-bimodule algebra isomorphism (\ref{eqn:DFcop}).

\end{proof}

%%%%%%%%%%%%%%%%%%%%%%%%%%%%%%%%%%%%%%%%%%%%%%%%%%%%%%%%%%%%%%%%%%

\section{\label{sec:homoquant}Quantization of homomorphisms}
Let $H$ be a Hopf algebra, $A,B\in {^{H,\ra}}\AAA$ and $V,W\in {^{H,\ra}_{A}}\MMM$.
In this section we focus on $\bfK$-linear maps from $V$ to $W$, denoted by
$\Hom_\bfK(V,W)$, and their quantization. In case of
$V,W\in {^{H,\ra}_A}\MMM_{B}$, we can also consider right $B$-linear
homomorphisms $\Hom_{B}(V,W)$ from $V$ to $W$. Properties of left $A$-linear
homomorphisms ${_A}\Hom(V,W)$ and their quantization 
will follow from a mirror construction.
The results are analogous to those for the endomorphisms above. 
For completeness and for a better overview we will briefly present the main propositions and theorems.

\begin{propo}\label{propo:homomodule}
 Let $A\in {^{H,\ra}}\AAA$ and $V,W\in {^{H,\ra}_A}\MMM$, then the $\bfK$-module $\Hom_\bfK(V,W)$ of $\bfK$-linear maps
from $V$ to $W$ is a left $H$-module $A$-bimodule
\begin{flalign}
 \Hom_\bfK(V,W)\in {^{H,\RA}_A}\MMM_A~,
\end{flalign}
where $\RA$ is the adjoint $H$-action, for all $\xi\in H$ and $P\in \Hom_\bfK(V,W)$,
\begin{flalign}
\label{eqn:homomodule1}
 \xi\RA P := \xi_1\ra\circ P\circ S(\xi_2)\ra~~,
\end{flalign}
and the $A$-bimodule structure is given by, for all $a\in A$ and $P\in \Hom_\bfK(V,W)$,
\begin{subequations}
\label{eqn:homomodule2}
\begin{flalign}
\label{eqn:homomodule21}  a\cdot P &:= l_a\circ P~,\\
 \label{eqn:homomodule22} P\cdot a &:= P\circ l_a~,
\end{flalign}
\end{subequations}
where we used for notational simplicity 
the same symbol $l_a$ for the left multiplication map on $V$ and $W$.

\noindent If   $V,W$ are also right $B$-modules, such that $V,W\in {^{H,\ra}_A}\MMM_B$, then the $\bfK$-submodule
 of right $B$-linear homomorphisms, $\Hom_B(V,W)\subseteq \Hom_\bfK(V,W)$, is still a left $H$-module
$A$-bimodule
\begin{flalign}
 \Hom_B(V,W) \in {^{H,\RA}_A}\MMM_A~,
\end{flalign}
with $H$ and $A$ actions given in (\ref{eqn:homomodule1}) and (\ref{eqn:homomodule2}), respectively.
\end{propo}
\noindent The proof of this proposition is analogous to Proposition \ref{propo:endomodulealgebra} so that we can omit it here.

Let now $H$ be a Hopf algebra with twist $\FF\in H\otimes H$, and let
$A,B\in {^{H,\ra}}\AAA$, $V,W\in{^{H,\ra}_A}\MMM_B$. As a consequence,
$\Hom_{\bfK}(V,W),\Hom_B(V,W)\in {^{H,\RA}_A}\MMM_A$.
Analogously to the endomorphisms, we have two possible deformations of the homomorphisms $\Hom_\bfK(V,W)$ and $\Hom_{B}(V,W)$. 
The first option,
$\Hom_{\bfK}(V,W)_\star$, $\Hom_B(V,W)_\star\in {^{H^\FF,\RA}_{A_\star}}\MMM_{A_\star}$,
 is obtained by applying Theorem \ref{theo:moduledef}.
The second option, $\Hom_\bfK(V_\star,W_\star)$, $\Hom_{B_\star}(V_\star,W_\star) \in{^{H^\FF,\RA_\FF}_{A_\star}}\MMM_{A_\star}$,
is just to consider homomorphisms between the quantized modules $V_\star$ and $W_\star$.
Similar to Theorem \ref{theo:endodef} we can relate these two constructions.
\begin{theo}\label{theo:homodef}
 The map 
\begin{flalign}\label{eqn:homodef}
 D_\FF : \Hom_\bfK(V,W)_\star \to \Hom_\bfK(V_\star,W_\star)~,\quad P\mapsto D_\FF(P)= 
(\bar f^\alpha\RA P)\circ \bar f_\alpha\ra~~
\end{flalign}
is an isomorphism between the left $H^\FF$-module $A_\star$-bimodules 
$\Hom_\bfK(V,W)_\star\in {^{H^\FF,\RA}_{A_\star}}\MMM_{A_\star}$ and \newline
$\Hom_\bfK(V_\star,W_\star)\in {^{H^\FF,\RA_\FF}_{A_\star}}\MMM_{A_\star}$. It restricts to a
left $H^\FF$-module $A_\star$-bimodule isomorphism 
\begin{flalign}
 D_\FF:\Hom_{B}(V,W)_\star \to \Hom_{B_\star}(V_\star,W_\star)~.
\end{flalign}

\noindent We call $D_\FF(P)$ the quantization of the homomorphism $P\in \Hom_B(V,W)$.

\end{theo}
\noindent Since the proof is analogous to the one of Theorem \ref{theo:endodef} we do not have to provide it here.

For completeness, we also briefly discuss left $A$-linear homomorphisms ${_A}\Hom(V,W)$ and their quantization.
We have $\bigl({_A}\Hom(V,W)\bigr)^\op \in {^{H,\RA^\cop}_{B}}\MMM_B$, where the left $H$-action is given by,
for all $\xi\in H$ and $P\in {_A}\Hom(V,W)$,
\begin{subequations}
\begin{flalign}
 \xi\RA^\cop P := \xi_2\ra\circ P\circ S^{-1}(\xi_1)\ra~,
\end{flalign}
and the $B$-bimodule structure reads, for all $b\in B$ and $P\in {_A}\Hom(V,W)$,
\begin{flalign}
 b\cdot P := P\circ r_b~,\quad P\cdot b := r_b\circ P~.
\end{flalign}
\end{subequations}
The quantization of left $A$-linear homomorphisms is understood analogously to Theorem \ref{theo:homodef}.
\begin{theo}\label{theo:lefthomodef}
 Let $H$ be a Hopf algebra with twist $\FF\in H\otimes H$, and let $A,B\in {^{H,\ra}}\AAA$, $V,W\in{^{H,\ra}_A}\MMM_B$.
The map
\begin{flalign}
 D_\FF^\cop : \Bigl(\bigl({_A}\Hom(V,W)\bigr)^\op\Bigr)_\star\to \bigl({_{A_\star}}\Hom(V_\star,W_\star)\bigr)^\op~,
\quad P\mapsto D_\FF^\cop(P) = 
(\bar f_\alpha\RA^\cop P)\circ \bar f^\alpha\ra~
\end{flalign}
is an isomorphism between the left $H^\FF$-module $B_\star$-bimodules
$\Bigl(\bigl({_A}\Hom(V,W)\bigr)^\op\Bigr)_\star\in {^{H^\FF,\RA^\cop}_{B_\star}}\MMM_{B_\star}$
and $\bigl({_{A_\star}}\Hom(V_\star,W_\star)\bigr)^\op\in {^{H^\FF,(\RA_\FF)^\cop}_{B_\star}}\MMM_{B_\star}$.

\noindent We call $D_\FF^\cop(P)$ the left quantization of the homomorphism $P\in{_A}\Hom(V,W)$.
\end{theo}
\noindent This theorem can be proven with a mirror construction (analogously to Theorem \ref{theo:leftendodef}),
 or equivalently by direct a calculation.

\begin{ex}
 Let $A\in {^{H,\ra}}\AAA$ and $V\in {^{H,\ra}_A}\MMM_A$. The {\it dual module} of $V$
is defined by $V^\prime := \Hom_A(V,A)$. Since $A$ can be regarded as a left $H$-module $A$-bimodule,
we have by Proposition \ref{propo:homomodule} $V^\prime \in {^{H,\RA}_A}\MMM_A$.
Let $\FF\in H\otimes H$ be a twist of $H$ and consider the deformed $H^\FF$-modules 
$A_\star\in{^{H^\FF,\ra}}\AAA$ and $V_\star\in{^{H^\FF,\ra}_{A_\star}}\MMM_{A_\star}$.
We have two possible deformations of the dual module, $(V^\prime)_\star = \Hom_{A}(V,A)_\star
\in {^{H^\FF,\RA}_{A_\star}}\MMM_{A_\star}$
and $(V_\star)^\prime = \Hom_{A_\star}(V_\star,A_\star)\in {^{H^\FF,\RA_\FF}_{A_\star}}\MMM_{A_\star}$.
Due to Theorem \ref{theo:homodef} there is a left $H^\FF$-module $A_\star$-bimodule isomorphism
$D_\FF$, such that $(V_\star)^\prime\simeq (V^\prime)_\star$.
In words, dualizing the quantized module is (up to isomorphism) equivalent to quantizing the dual one.
Note that by Theorem \ref{theo:lefthomodef} similar statements hold true for the left $A$-linear dual 
$^\prime V= \bigl({_A}\Hom(V,A)\bigr)^\op$ and its quantization.
\end{ex}

%%%%%%%%%%%%%%%%%%%%%%%%%%%%%%%%%%%%%%%%%%%%%%%%%%%%%%%%%%%%%%%%%%

\section{\label{sec:modhomqc}Quasi-commutative algebras and bimodules}
In many examples Hopf algebras come with the additional structure of an $R$-matrix.
In particular, all models from Part \ref{part:ncg} and Part \ref{part:qft} fall into this class.
This justifies to study properties of endomorphisms and homomorphisms in presence of this extra structure.
\begin{defi}\label{defi:rmatrix}
A {\it quasi-cocommutative} Hopf algebra $(H,R)$ is a Hopf algebra $H$ together with an
invertible element $R\in H\otimes H$ (called {\it universal $R$-matrix}) such that, for all $\xi\in H$,
\begin{flalign}
\label{eqn:Rproper1}
 \Delta^\cop(\xi) = R \,\Delta(\xi)\, R^{-1}~.
\end{flalign}
 A quasi-cocommutative Hopf algebra is called {\it quasitriangular} if 
\begin{flalign}
\label{eqn:Rproper2}
 (\Delta\otimes\id)R = R_{13}\,R_{23}~,\quad (\id\otimes\Delta)R = R_{13}\,R_{12}~,
\end{flalign}
and {\it triangular} if additionally
\begin{flalign}
 R_{21}=R^{-1}~.
\end{flalign}
\end{defi}
\begin{ex}
 The Hopf algebra $U\Xi$ of diffeomorphisms of a manifold $\MM$ (see Chapter \ref{chap:basicncg}) is triangular
with trivial $R$-matrix $R=1\otimes 1$. The twist deformation $U\Xi^\FF$ of $U\Xi$ is triangular
with $R$-matrix $R^\FF=\FF_{21}\,\FF^{-1}$. Moreover, the twist deformation of any (quasi)triangular Hopf algebra $(H,R)$ is
(quasi)triangular with $R$-matrix $R^\FF = \FF_{21}\, R\, \FF^{-1}$.
\end{ex}
From the definition of a quasitriangular Hopf algebra the following standard properties follow
(see e.g.~\cite{Majid:1996kd,Kassel:1995xr})
\begin{subequations}\label{eqn:Rproper3}
\begin{flalign}
 \label{eqn:Rproper31} \qquad\,\qquad\,\qquad\,\qquad&(\epsilon\otimes\id)R =1~,\quad& &(\id\otimes \epsilon)R=1~,&\qquad\,\qquad\\
 \label{eqn:Rproper32} \qquad\,\qquad\,\qquad\,\qquad&(S\otimes\id)R =R^{-1}~,\quad& &(\id\otimes S)R^{-1}=R~,&\qquad\,\qquad
\end{flalign}
and
\begin{flalign}
  \label{eqn:Rproper33} R_{12}\,R_{13}\,R_{23} = R_{23}\,R_{13}\,R_{12}~.
\end{flalign}
\end{subequations}
Notice that the properties (\ref{eqn:Rproper2}) and (\ref{eqn:Rproper33}) imply that $R$ is a twist of the Hopf algebra
$H$, and from (\ref{eqn:Rproper1}) we obtain $H^R = H^\cop$. The Hopf algebra $H^\cop$ is quasitriangular with
$R$-matrix $R_{21}$.

Similar to the twist we introduce the notations $R=R^\alpha\otimes R_\alpha$ and $R^{-1}=\bar R^\alpha\otimes \bar R_\alpha$
(sum over $\alpha$ understood). 
\begin{defi}
 Let $(H,R)$ be a quasitriangular Hopf algebra. A left $H$-module algebra $A\in {^{H,\ra}}\AAA$ is called
{\it quasi-commutative} if, for all $a,b\in A$,
\begin{flalign}
 a\,b = (\bar R^\alpha\ra b)\,(\bar R_\alpha\ra a)~.
\end{flalign}
Similarly, a left $H$-module $A$-bimodule $V\in{^{H,\ra}_A}\MMM_A$ is {\it quasi-commutative} if, for all $a\in A$ and $v\in V$,
\begin{flalign}
\label{eqn:qcra}
 v\cdot a = (\bar R^\alpha\ra a)\cdot (\bar R_\alpha\ra v)~.
\end{flalign}
\end{defi}
Note that as a consequence of (\ref{eqn:qcra}) we have
\begin{flalign}
 a\cdot v = (R_\alpha\ra v)\cdot (R^\alpha\ra a)~.
\end{flalign}
\begin{ex}
 Consider the Hopf algebra of diffeomorphisms $U\Xi$ of a manifold $\MM$.
Then trivially the algebra of smooth functions $C^\infty(\MM)$ is quasi-commutative, 
since it is commutative and $R=1\otimes 1$.
Furthermore, the $C^\infty(\MM)$-bimodules $\Omega^1$ and $\Xi$ of one-forms and vector fields 
are quasi-commutative bimodules.
Let $\FF$ be a twist of $U\Xi$. Then $U\Xi^\FF$ is triangular with $R=\FF_{21}\FF^{-1}$
and the deformed algebra $(C^\infty(\MM),\star)$, as well as the deformed bimodules $\Omega^1_\star$ and $\Xi_\star$,
are quasi-commutative. In general, the twist quantization of any quasi-commutative algebra or bimodule
is quasi-commutative.
\end{ex}

\begin{propo}
 Let $(H,R)$ be a quasitriangular Hopf algebra and let $A\in{^{H,\ra}}\AAA$ and $V,W\in {^{H,\ra}_A}\MMM_A$ be quasi-commutative.
 Then for all $P\in \Hom_A(V,W)$, $a\in A$ and $v\in V$
\begin{flalign}
\label{eqn:righthomleft}
 P(a\cdot v) = (\bar R^\alpha\ra a)\cdot (\bar R_\alpha\RA P)(v)~.
\end{flalign}
Similarly, for all $P\in{_A}\Hom(V,W)$, $a\in A$ and $v\in V$
\begin{flalign}
 \label{eqn:lefthomright}
P(v\cdot a) = (R_\alpha\RA^\cop P)(v)\cdot(R^\alpha\ra a)~.
\end{flalign}
\end{propo}
\begin{proof}
 We show (\ref{eqn:righthomleft}), for all $P\in \Hom_A(V,W)$, $a\in A$ and $v\in V$,
\begin{flalign}
 \nn P(a\cdot v) &= P\bigl((R_\alpha\ra v)\cdot (R^\alpha\ra a)\bigr) = P(R_\alpha\ra v)\cdot (R^\alpha\ra a)\\
\nn &=(\bar R^\beta R^\alpha\ra a)\cdot \bar R_\beta\ra \bigl(P(R_\alpha\ra v)\bigr)\\
\nn &=(\bar R^\beta \bar R^\alpha\ra a)\cdot \bar R_\beta\ra \bigl(P(S(\bar R_\alpha)\ra v)\bigr)\\
\nn&=(\bar R^\alpha\ra a)\cdot \bar R_{\alpha_1}\ra \bigl(P(S(\bar R_{\alpha_2})\ra v)\bigr)\\
&= (\bar R^\alpha\ra a)\cdot (\bar R_\alpha\RA P)(v)~,
\end{flalign}
where in line three we have used (\ref{eqn:Rproper32}) and in line four (\ref{eqn:Rproper2}).
The property (\ref{eqn:lefthomright}) is proven analogously.
\begin{comment}
\noindent We now show (\ref{eqn:lefthomright}), for all $P\in{_A}\Hom(V,W)$, $a\in A$ and $v\in V$,
\begin{flalign}
\nn P(v\cdot a) &= P\bigl((\bar R^\alpha\ra a) \cdot (\bar R_\alpha\ra v)\bigr)
 =(\bar R^\alpha\ra a)\cdot P(\bar R_\alpha\ra v)\\
\nn &=R_\beta\ra\bigl(P(\bar R_\alpha\ra v)\bigr)\cdot (R^\beta\bar R^\alpha\ra a)\\
\nn &=R_\beta\ra\bigl(P(S^{-1}( R_\alpha)\ra v)\bigr)\cdot (R^\beta  R^\alpha\ra a)\\
\nn &=R_{\alpha_2}\ra\bigl(P(S^{-1}( R_{\alpha_1})\ra v)\bigr)\cdot (R^\alpha\ra a)\\
&= (R_\alpha\RA^\cop P)(v)\cdot (R^\alpha\ra a)~,
\end{flalign}
where in line three we have used (\ref{eqn:Rproper32}) and in line four (\ref{eqn:Rproper2}).
\end{comment}

\end{proof}
This immediately leads us to the following
\begin{cor}\label{cor:homqc}
  Let $(H,R)$ be a quasitriangular Hopf algebra and let $A\in{^{H,\ra}}\AAA$ and $V,W\in {^{H,\ra}_A}\MMM_A$ be quasi-commutative.
Then $\Hom_A(V,W)\in {^{H,\RA}_A}\MMM_A$ and $\bigl({_A}\Hom(V,W)\bigr)^\op\in{^{H,\RA^\cop}_A}\MMM_A$ are quasi-commutative.
\end{cor}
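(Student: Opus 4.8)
The plan is to notice that almost everything in the statement is already available: Proposition~\ref{propo:homomodule} (together with its mirror version recorded at the end of Section~\ref{sec:homoquant}) tells us that $\Hom_A(V,W)\in{^{H,\RA}_A}\MMM_A$ and $\bigl({_A}\Hom(V,W)\bigr)^\op\in{^{H,\RA^\cop}_A}\MMM_A$ with the stated $A$-bimodule and $H$-module structures, so the only genuinely new content of the corollary is the \emph{quasi-commutativity} of these two $A$-bimodules. The approach is then to write out the candidate quasi-commutativity relations in terms of the explicit bimodule actions, evaluate both sides on an arbitrary $v\in V$, and read off equality from the two identities \eqref{eqn:righthomleft} and \eqref{eqn:lefthomright} that were just established.

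Concretely, for $\mathcal{H}:=\Hom_A(V,W)$ the $A$-bimodule structure is $a\cdot P=l_a\circ P$ and $P\cdot a=P\circ l_a$ (Proposition~\ref{propo:homomodule}), and $\bar R_\alpha\RA P\in\Hom_A(V,W)$ so the relevant expressions make sense. Evaluating the desired relation $P\cdot a=(\bar R^\alpha\ra a)\cdot(\bar R_\alpha\RA P)$ on $v$ turns the left side into $P(a\cdot v)$ and the right side into $(\bar R^\alpha\ra a)\cdot\bigl((\bar R_\alpha\RA P)(v)\bigr)$, and these agree by \eqref{eqn:righthomleft}; this settles the first assertion. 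For $\mathcal{H}':=\bigl({_A}\Hom(V,W)\bigr)^\op$ the bimodule structure is $a\cdot P=P\circ r_a$ and $P\cdot a=r_a\circ P$; here I would check quasi-commutativity in the equivalent form $a\cdot P=(R_\alpha\RA^\cop P)\cdot(R^\alpha\ra a)$ (recall that for any quasi-commutative bimodule the defining relation of \eqref{eqn:qcra} is equivalent, via the quasitriangularity axioms, to $a\cdot v=(R_\alpha\ra v)\cdot(R^\alpha\ra a)$, as noted right after \eqref{eqn:qcra}). Evaluated on $v$ this becomes $P(v\cdot a)=\bigl((R_\alpha\RA^\cop P)(v)\bigr)\cdot(R^\alpha\ra a)$, which is precisely \eqref{eqn:lefthomright}, again using that $R_\alpha\RA^\cop P\in{_A}\Hom(V,W)$.

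I do not expect a real obstacle: the corollary is essentially a repackaging of the two preceding identities. The only points that need care are bookkeeping ones — keeping straight whether $\RA$ or $\RA^\cop$ acts, which of the left/right $A$-actions appears on each side, and the passage through the opposite construction for $\mathcal{H}'$ — together with the observation that \eqref{eqn:lefthomright} delivers the quasi-commutativity relation for $\mathcal{H}'$ in the ``$a\cdot P$'' form rather than the defining ``$P\cdot a$'' form, so one must invoke the (routine) equivalence of the two forms. As an alternative I would mention, but not carry out, that the statement for $\mathcal{H}'$ also follows by a mirror argument: apply the already-proven first half of the corollary to the quasitriangular Hopf algebra $(H^\cop,R_{21})$ and the quasi-commutative data $A^\op$, $V^\op$, $W^\op$, and then transport back along the canonical isomorphism $\End_{A^\op}(V^\op)\simeq{_A}\End(V)$ as in the proof of Theorem~\ref{theo:leftendodef}.
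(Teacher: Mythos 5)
Your proposal is correct and follows essentially the same route as the paper: the paper also reduces both quasi-commutativity claims to the identities \eqref{eqn:righthomleft} and \eqref{eqn:lefthomright}, writing $P\cdot a=P\circ l_a=l_{\bar R^\alpha\ra a}\circ(\bar R_\alpha\RA P)$ for $\Hom_A(V,W)$ and $P\cdot a=r_a\circ P=(\bar R_\alpha\RA^\cop P)\circ r_{\bar R^\alpha\ra a}$ for $\bigl({_A}\Hom(V,W)\bigr)^\op$. The only cosmetic difference is that for the second bimodule you verify the equivalent ``$a\cdot P$'' form of quasi-commutativity and invoke the (routine, $R$-invertibility) equivalence with the defining ``$P\cdot a$'' form, whereas the paper checks the defining form directly; both are fine.
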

\begin{proof}
 We first show that $\Hom_A(V,W)$ is quasi-commutative, for all $P\in \Hom_A(V,W)$ and $a\in A$,
\begin{flalign}
P\cdot a = P\circ l_a = l_{\bar R^\alpha\ra a}\circ (\bar R_\alpha\RA P) = (\bar R^\alpha\ra a) \cdot (\bar R_\alpha\RA P)~,
\end{flalign}
where in the second equality we have used (\ref{eqn:righthomleft}).

\noindent Similarly, we show that $\bigl({_A}\Hom(V,W)\bigr)^\op$ is quasi-commutative, for all $P\in {_A}\Hom(V,W)$ and $a\in A$,
\begin{flalign}
 P\cdot a = r_a\circ P = (\bar R_\alpha\RA^\cop P)\circ r_{\bar R^\alpha\ra a} =(\bar R^\alpha\ra a)\cdot (\bar R_\alpha\RA^\cop P)~,
\end{flalign}
where in the second equality we have used (\ref{eqn:lefthomright}).

\end{proof}
\begin{rem}
 Let $(H,R)$ be a quasitriangular Hopf algebra and let $A\in{^{H,\ra}}\AAA$ and $V\in {^{H,\ra}_A}\MMM_A$ be quasi-commutative.
Due to the corollary above we know that $\End_A(V)\in {^{H,\RA}_A}\AAA_A$ and
$\bigl({_A}\End(V)\bigr)^\op\in {^{H,\RA^\cop}_A}\AAA_A$ are quasi-commutative left $H$-module $A$-bimodules.
However, $\End_A(V)$ and $\bigl({_A}\End(V)\bigr)^\op$ are in general not quasi-commutative as algebras.
\end{rem}
\begin{ex}
 Let $A\in{^{H,\ra}}\AAA$ and $V\in {^{H,\ra}_A}\MMM_{A}$ be quasi-commutative.
Then the right dual module $V^\prime = \Hom_A(V,A)\in{^{H,\RA}_A}\MMM_A$ and the left
 dual module $^\prime V=\bigl({_A}\Hom(V,A)\bigr)^\op\in{^{H,\RA^\cop}_A}\MMM_A$ are quasi-commutative.
\end{ex}

Let us now focus on the quantization with twist $\FF=R$ of the quasi-commutative left $H$-modules $A\in {^{H,\ra}}\AAA$ and
$V\in {^{H,\ra}_A}\MMM_A$ in case of a triangular $R$-matrix. Note that in this case we have in addition
to (\ref{eqn:qcra}), for all $a\in A$ and $v\in V$,
\begin{flalign}
\label{eqn:qcla}
 a\cdot v= (R_\alpha\ra v)\cdot (R^\alpha\ra a) = (\bar R^\alpha\ra v)\cdot (\bar R_\alpha\ra a)~,
\end{flalign}
where the first equality always holds (as a consequence of (\ref{eqn:qcra})) and 
in the second equality we have used triangularity $R_{21} = R^{-1}$.
\begin{lem}\label{lem:copRdef}
 Let $(H,R)$ be a triangular Hopf algebra with twist $\FF=R$ and let $A\in {^{H,\ra}}\AAA$ and
$V\in {^{H,\ra}_A}\MMM_A$ be quasi-commutative. Then $H^R = H^\cop$, $A_{\star}=A^\op\in {^{H^\cop,\ra}}\AAA$ 
and $V_\star=V^\op\in{^{H^\cop,\ra}_{A^\op}}\MMM_{A^\op}$, where $\star$ denotes the deformation
associated to the twist $\FF=R$ (cf.~Theorems \ref{theo:algebradef} and \ref{theo:moduledef}).
\end{lem}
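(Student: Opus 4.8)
The plan is to verify the three claimed identifications directly from the definitions of the twist deformations (Theorems \ref{theo:HAdef}, \ref{theo:algebradef}, \ref{theo:moduledef}), using the special structure of the twist $\mathcal{F}=R$ and the quasi-commutativity hypotheses. The strategy is: first handle the Hopf algebra, then the algebra, then the module, since each step uses the previous one.

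First I would record that $H^R = H^\cop$. This is essentially observed already in the text right after Definition \ref{defi:rmatrix}: from (\ref{eqn:Rproper1}) we have $\Delta^\cop(\xi) = R\,\Delta(\xi)\,R^{-1} = \Delta^R(\xi)$ by the defining formula (\ref{eqn:defcoproduct}), and the counits agree since $\epsilon^R=\epsilon$ and the underlying algebras coincide. For the antipode I would check $S^R = S^\cop$ directly, computing $\chi = f^\alpha S(f_\alpha) = R^\alpha S(R_\alpha)$ and using the standard quasitriangular identities (\ref{eqn:Rproper32}), namely $(S\otimes\id)R = R^{-1}$, together with triangularity $R_{21}=R^{-1}$, to identify $\chi$ with the known element implementing the square of the antipode; since $H^\cop$ has antipode $S^\cop = S^{-1}$ for triangular $H$ (here $S^2=\id$ by triangularity, so in fact $S^\cop = S$), the conjugation formula (\ref{eqn:defantipode1}) reduces to $S^R = S = S^\cop$. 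This is a short computation with the $R$-matrix axioms and I expect it to be routine.

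Next, for the algebra: by Theorem \ref{theo:algebradef} the product in $A_\star$ is $a\star b = (\bar f^\alpha\ra a)(\bar f_\alpha\ra b) = (\bar R^\alpha\ra a)(\bar R_\alpha\ra b)$. I would then invoke the triangular version of quasi-commutativity of $A$: from $a\,b = (\bar R^\alpha\ra b)(\bar R_\alpha\ra a)$ one rewrites, using $R_{21}=R^{-1}$, that $(\bar R^\alpha\ra a)(\bar R_\alpha\ra b) = b\,a$. Hence $a\star b = b\,a = \mu_\op(a\otimes b)$, i.e. $A_\star = A^\op$ as algebras. The Hopf algebra action on $A_\star$ is unchanged (Theorem \ref{theo:algebradef}), and since $A^\op\in {}^{H^\cop,\ra}\AAA$ (recorded in the discussion preceding Lemma \ref{lem:opposite}), the identification $A_\star = A^\op\in {}^{H^\cop,\ra}\AAA$ follows. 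For the module, I would do the analogous computation with the two actions from (\ref{eqn:moduledef}): $a\star v = (\bar R^\alpha\ra a)\cdot(\bar R_\alpha\ra v)$ and $v\star b = (\bar R^\alpha\ra v)\cdot(\bar R_\alpha\ra b)$, using the triangular quasi-commutativity identities (\ref{eqn:qcra}) and (\ref{eqn:qcla}) to get $a\star v = v\cdot a = v\cdot^\op a$ and $v\star b = b\cdot v = b\cdot^\op v$, which is precisely the $A^\op$-bimodule structure of $V^\op$. Again the $H$-action is unchanged, so $V_\star = V^\op\in {}^{H^\cop,\ra}_{A^\op}\MMM_{A^\op}$ by Lemma \ref{lem:opposite}.

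The main obstacle, and the only place requiring genuine care, is the antipode identity $S^R = S^\cop$: one must be attentive to whether triangularity is being used to reduce $S^2$ to the identity and to handle $S^{-1}$ versus $S$ correctly, and to get the bookkeeping of $\chi$ and $\chi^{-1}$ right in the conjugation formula (\ref{eqn:defantipode1}). Everything else is a direct substitution of the triangular identities $R_{21}=R^{-1}$, (\ref{eqn:Rproper1}), (\ref{eqn:Rproper31})--(\ref{eqn:Rproper32}) into the deformation formulas, together with the quasi-commutativity relations; I would organize the write-up as three short paragraphs (Hopf algebra, algebra, bimodule) mirroring the three assertions of the lemma.
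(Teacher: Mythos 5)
Your proposal follows the same route as the paper: the substantive content (the identifications $A_\star=A^\op$ and $V_\star=V^\op$) is obtained exactly as in the paper's proof, by substituting $\FF^{-1}=R^{-1}$ into the deformed product and actions and applying quasi-commutativity, resp.\ (\ref{eqn:qcra}) and (\ref{eqn:qcla}); the paper simply declares $H^R=H^\cop$ to be standard. One small remark: for $a\star b=b\,a$ you do not need triangularity at all — applying the quasi-commutativity relation with $a$ and $b$ interchanged already gives $(\bar R^\alpha\ra a)(\bar R_\alpha\ra b)=b\,a$. Triangularity is genuinely needed only for $v\star a=a\cdot v$, which you handle correctly via (\ref{eqn:qcla}).

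The one flawed step is your justification of $S^R=S^\cop$. It is not true that $S^2=\id$ for triangular Hopf algebras: triangularity only forces the Drinfel'd element $u=S(R_\alpha)R^\alpha$ to be grouplike, and $S^2$ is conjugation by $u$, which is generally nontrivial (already for the twisted Hopf algebras $H^\FF$ of this thesis one has $S^{\FF}{}^2=\chi S(\chi)^{-1}(\,\cdot\,)S(\chi)\chi^{-1}\neq\id$ in general; Sweedler's Hopf algebra is another counterexample). The correct argument is the standard one: with $\FF=R$ one has $\chi=R^\alpha S(R_\alpha)$, and the identity $(S\otimes S)R=R$ gives $S(\chi)=u$, so $S^2(\xi)=u\,\xi\,u^{-1}=S(\chi)\,\xi\,S(\chi)^{-1}$, which upon applying $S^{-1}$ yields $\chi S(\xi)\chi^{-1}=S^{-1}(\xi)$, i.e.\ $S^R=S^{-1}=S^\cop$ as defined in Chapter \ref{chap:modhom}. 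Since the paper leaves this part as ``standard,'' the lemma is not in jeopardy, but as written your antipode argument would not compile into a proof.
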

\begin{proof}
 The proof that $H^R=H^\cop$ is standard. Let us consider $A_\star$. We have, for all $a,b\in A$,
\begin{flalign}
 a\star b = (\bar R^\alpha\ra a)\,(\bar R_\alpha\ra b) = b\,a=\mu_\op(a\otimes b)~,
\end{flalign}
where in the second equality we have used quasi-commutativity of $A$.
Similarly we have for all $a\in A$ and $v\in V$
\begin{subequations}
\begin{flalign}
 a\star v &= (\bar R^\alpha\ra a)\cdot (\bar R_\alpha \ra v) = v\cdot a = a\cdot^\op v~,\\
 v\star a &= (\bar R^\alpha\ra v)\cdot (\bar R_\alpha \ra a) = a\cdot v = v\cdot^\op a~,
\end{flalign}
\end{subequations}
where in the second equality we have used (\ref{eqn:qcra}) and (\ref{eqn:qcla}), respectively.
Thus, $A_\star =A^\op$ and $V_\star = V^\op$.

\end{proof}
Using this lemma and Theorems \ref{theo:endodef} and \ref{theo:homodef} we
find an isomorphism between left and right $A$-linear endomorphisms and homomorphisms.
\begin{theo}\label{theo:qclriso}
 Let $(H,R)$ be a triangular Hopf algebra with twist $\FF=R$ and let $A\in {^{H,\ra}}\AAA$ and
$V,W\in {^{H,\ra}_A}\MMM_A$ be quasi-commutative. 
Then there is an isomorphism
\begin{flalign}
 D_R: (\End_{A}(V)_\star)^\op \to \bigl({_A}\End(V)\bigr)^\op~,\quad P\mapsto D_R(P) 
 = (\bar R^\alpha\RA P)\circ \bar R_\alpha\ra~
\end{flalign}
between the left $H$-module $A$-bimodule algebras $(\End_{A}(V)_\star)^\op\in {^{H,\RA}_A}\AAA_A$ and
$({_A}\End(V))^\op\in {^{H,\RA^\cop}_A}\AAA_A$. 

\noindent Similarly, there is an isomorphism (denoted by the same symbol)
\begin{flalign}
 D_R: \Hom_{A}(V,W) \to \bigl({_A}\Hom(V,W)\bigr)^\op~,\quad 
P\mapsto D_R(P) = (\bar R^\alpha\RA P)\circ \bar R_\alpha\ra~
\end{flalign}
between the left $H$-module $A$-bimodules $\Hom_{A}(V,W)\in {^{H,\RA}_A}\MMM_A$ and
$({_A}\Hom(V,W))^\op\in {^{H,\RA^\cop}_A}\MMM_A$. 
\end{theo}
\begin{proof}
We first prove the statement for endomorphisms.
From Proposition \ref{propo:endomodulealgebra} we know that
$\End_A(V)\in {^{H,\RA}_A}\AAA_A$ and from Theorem \ref{theo:algebradef}
and Theorem \ref{theo:moduledef} we obtain $\End_A(V)_\star\in {^{H^R,\RA}_{A_\star}}\AAA_{A_\star}$.
Lemma \ref{lem:copRdef} leads to
$\End_A(V)_\star\in {^{H^\cop,\RA}_{A^\op}}\AAA_{A^\op}$ and Lemma
\ref{lem:opposite} implies $\bigl(\End_A(V)_\star\bigr)^\op \in {^{H,\RA}_{A}}\AAA_{A}$.
By Proposition \ref{propo:opposite} we also have $\bigl({_A}\End(V)\bigr)^\op\in {^{H,\RA^\cop}_A}\AAA_A$.
Thus, the module structure is as claimed above.

\noindent From Theorem \ref{theo:endodef} we have a left $H^\cop$-module $A^\op$-bimodule algebra
isomorphism
\begin{flalign}
 D_R:\End_A(V)_\star \to \End_{A_\star}(V_\star)~.
\end{flalign}
Using Lemma \ref{lem:copRdef} we find $\End_{A_\star}(V_\star) = \End_{A^\op}(V^\op)\simeq {_A}\End(V)$, where
the last isomorphism is canonical. This observation leads to a
 left $H^\cop$-module $A^\op$-bimodule algebra isomorphism
\begin{flalign}
 D_R:\End_A(V)_\star \to {_A}\End(V)~,
\end{flalign}
and the proof follows since this map canonically induces a left $H$-module $A$-bimodule algebra isomorphism
(denoted by the same symbol)
\begin{flalign}
 D_R:\bigl(\End_A(V)_\star\bigr)^\op \to \bigl({_A}\End(V)\bigr)^\op~.
\end{flalign}

The construction of $D_R$ for the homomorphisms is analogous, leading to a left
$H$-module $A$-bimodule isomorphism
\begin{flalign}
D_R: \bigl(\Hom_{A}(V,W)_\star\bigr)^\op \to \bigl({_A}\Hom(V,W)\bigr)^\op~.
\end{flalign}
It remains to show that $(\Hom_{A}(V,W)_\star)^\op = \Hom_{A}(V,W)$ as left $H$-module $A$-bimodules.
Since $\Hom_{A}(V,W)$ is quasi-commutative (cf.~Corollary \ref{cor:homqc}),
Lemma \ref{lem:copRdef} implies $ (\Hom_{A}(V,W)_\star)^\op = (\Hom_{A}(V,W))^{\op\,\op} = \Hom_{A}(V,W)$.

\end{proof}

By Remark \ref{rem:Dinv} and triangularity $R_{21}=R^{-1}$ we obtain for the inverse of $D_R$, respectively,
\begin{subequations}
\begin{flalign}
  D_R^{-1}: \bigl({_A}\End(V)\bigr)^\op\to (\End_{A}(V)_\star)^\op~,\quad P\mapsto D^{-1}_R(P) 
 = (\bar R_\alpha\RA^\cop P)\circ \bar R^\alpha\ra~
\end{flalign}
and
\begin{flalign}
 D^{-1}_R: \bigl({_A}\Hom(V,W)\bigr)^\op \to \Hom_{A}(V,W)~,\quad 
P\mapsto D^{-1}_R(P) = (\bar R_\alpha\RA^\cop P)\circ \bar R^\alpha\ra~.
\end{flalign}
\end{subequations}

\begin{ex}
 Let $(H,R)$ be a triangular Hopf algebra and let $A\in{^{H,\ra}}\AAA$ and $V\in {^{H,\ra}_A}\MMM_{A}$ be quasi-commutative.
 Then by Theorem \ref{theo:qclriso} there is a left $H$-module $A$-bimodule isomorphism between the right dual $V^\prime = \Hom_A(V,A)$
and the left dual $^\prime V = \bigl({_A}\Hom(V,A)\bigr)^\op$.
\end{ex}
%%%%%%%%%%%%%%%%%%%%%%%%%%%%%%%%%%%%%%%%%%%%%%%%%%%%%%%%%%%%%%%%%%

\section{\label{sec:productmodhom}Product module homomorphisms}
Given two $\bfK$-modules $V,W\in\MMM$ we can consider the tensor product
module (over $\bfK$) $V\otimes W\in\MMM$. 
Provided $\bfK$-linear maps $P\in \Hom_\bfK(V,\widetilde{V})$ and $Q\in\Hom_\bfK(W,\widetilde{W})$
between the $\bfK$-modules $V,W,\widetilde{V},\widetilde{W}\in\MMM$ we define
a $\bfK$-linear map $P\otimes Q\in\Hom_\bfK(V\otimes W,\widetilde{V}\otimes \widetilde{W})$
by
\begin{flalign}
\label{eqn:Ktensorhom}
(P\otimes Q)(v\otimes w):=P(v)\otimes Q(w)~.
\end{flalign}
If $\widehat{V},\widehat{W}\in \MMM$ are two further $\bfK$-modules
and $\widetilde{P}\in \Hom_\bfK(\widetilde{V},\widehat{V})$, $\widetilde{Q}\in \Hom_\bfK(\widetilde{W},\widehat{W})$,
then
\begin{flalign}
\label{eqn:Ktensorhomcomp}
\bigl(\widetilde{P}\otimes \widetilde{Q}\bigr)\circ\bigl(P\otimes Q  \bigr) = (\widetilde{P}\circ P)\otimes (\widetilde{Q}\circ Q)~.
\end{flalign}

Let us now study left $H$-modules $V,W,\widetilde{V},\widetilde{W}\in{^{H,\ra}}\MMM$ over a Hopf algebra $H$.
Employing the coproduct on $H$ we have $V\otimes W\in{^{H,\ra}}\MMM$ by defining,
for all $\xi\in H$, $v\in V$ and $w\in W$,
\begin{flalign}
\label{eqn:productaction}
\xi\ra (v\otimes w) := (\xi_1\ra v)\otimes (\xi_2\ra w)~.
\end{flalign}
The $\bfK$-modules $\Hom_\bfK(V,\widetilde{V})$, $\Hom_\bfK(W,\widetilde{W})$ and 
$\Hom_\bfK(V\otimes W,\widetilde{V}\otimes \widetilde{W})$ 
can be equipped with a left $H$-module structure by employing the adjoint action.
We consider now the action of $\xi\in H$ on the tensor product map (\ref{eqn:Ktensorhom}).
Using (\ref{eqn:Ktensorhomcomp}) and (\ref{eqn:productaction}) we obtain
\begin{flalign}
\nn \xi\RA (P\otimes Q) &= (\xi_1\ra\,\otimes\xi_2\ra\,) \circ (P\otimes Q)\circ (S(\xi_4)\ra\,\otimes S(\xi_3)\ra)\\
\nn &= \bigl( \xi_1\ra\,\circ P\circ S(\xi_4)\ra\, \bigr)\otimes \bigl(\xi_2\ra\,\circ Q\circ S(\xi_3)\ra\,\bigr)\\
 \label{eqn:helpprodhom}&=\bigl( \xi_1\ra\,\circ P\circ S(\xi_3)\ra\, \bigr)\otimes  (\xi_2\RA Q )~.
\end{flalign}
This is for a non-cocommutative Hopf algebra different to the map $(\xi_1\RA P)\otimes (\xi_2\RA Q)$,
i.e.~the tensor product of $\bfK$-linear maps (\ref{eqn:Ktensorhom}) is in general incompatible with the
left $H$-module structure induced by (\ref{eqn:productaction}).
This incompatibility can be understood as follows: Thinking of $\bfK$-linear 
maps as acting from left to right, the ordering on the left hand side of
(\ref{eqn:Ktensorhom})  is $P,Q,v,w$, while the ordering on the right hand side is $P,v,Q,w$,
i.e.~$P$ and $v$ do not appear properly ordered in the definition (\ref{eqn:Ktensorhom}).
For a quasitriangular Hopf algebra $(H,R)$ this can be improved by redefining the tensor product of
$\bfK$-linear maps, see also \cite{Majid:1996kd} Chapter 9.3.
\begin{defi}\label{defi:Rtensor}
Let $(H,R)$ be a quasitriangular Hopf algebra and $V,W,\widetilde{V},\widetilde{W}\in{^{H,\ra}}\MMM$
be left $H$-modules. The {\it $R$-tensor product} of $\bfK$-linear maps is defined by, for
all $P\in\Hom_\bfK(V,\widetilde{V})$ and $Q\in\Hom_\bfK(W,\widetilde{W})$,
\begin{flalign}
\label{eqn:Rtensor}
P\otimes_R Q := (P\circ \bar R^\alpha\ra\,)\otimes (\bar R_\alpha\RA Q)\in \Hom_\bfK(V\otimes W,\widetilde{V}\otimes\widetilde{W})~,
\end{flalign}
where $\otimes$ is defined in (\ref{eqn:Ktensorhom}).
\end{defi}
We can rewrite (\ref{eqn:Rtensor}) in a way convenient for the further investigations
\begin{flalign}
\nn P\otimes_R Q &= (P\circ \bar R^\alpha\ra\,)\otimes (\bar R_\alpha\RA Q)\\
 \nn &= (P\circ \bar R^\alpha\ra\,)\otimes \bigl(\bar R_{\alpha_1}\ra\,\circ Q\circ S(\bar R_{\alpha_{2}})\ra\, \bigr)\\
 \nn &= (P\circ \bar R^\alpha\bar R^\beta\ra\,)\otimes \bigl(\bar R_{\alpha}\ra\,\circ Q\circ S(\bar R_{\beta})\ra\, \bigr)\\
 \nn &= (P\circ \bar R^\alpha R^\beta\ra\,)\otimes \bigl(\bar R_{\alpha}\ra\,\circ Q\circ R_{\beta}\ra\, \bigr)\\
\label{eqn:Rtensorsimple} &= (P\otimes \id)\circ \tau_{R}\circ (Q\otimes \id ) \circ \tau^{-1}_{R}~,
\end{flalign}
where $\tau_R$ is the $R$-flip map and $\tau_R^{-1}$ its inverse, for all $v\in V$ and $w\in W$,
\begin{subequations}\label{eqn:Rflipmap}
\begin{flalign}
 \tau_{R}(w\otimes v)= (\bar R^\alpha\ra v) \otimes (\bar R_\alpha\ra w)~,\\
\tau_{R}^{-1}(v\otimes w) = ( R_\alpha\ra w)\otimes ( R^\alpha\ra v)~.
\end{flalign}
\end{subequations}

We now show that the $R$-tensor product is compatible with the left $H$-module structure, associative and
satisfies a generalized composition law.
\begin{propo}
Let $(H,R)$ be a quasitriangular Hopf algebra and $V,W,Z,\widetilde{V},\widetilde{W},\widetilde{Z},\widehat{V},\widehat{W}
\in{^{H,\ra}}\MMM$ be left $H$-modules.
The $R$-tensor product is compatible with the left $H$-module structure, i.e.~for all
$\xi\in H$, $P\in\Hom_\bfK(V,\widetilde{V})$ and $Q\in\Hom_\bfK(W,\widetilde{W})$,
\begin{subequations}
\begin{flalign}
\label{eqn:RtensorHmod}
\xi\RA (P\otimes_R Q) = (\xi_1\RA P)\otimes_R (\xi_2\RA Q)~.
\end{flalign}
Furthermore, the $R$-tensor product is associative, i.e.~for all
$P\in\Hom_\bfK(V,\widetilde{V})$, $Q\in\Hom_\bfK(W,\widetilde{W})$ and $T\in\Hom_\bfK(Z,\widetilde{Z})$,
\begin{flalign}
\label{eqn:Rtensorass}
\bigl(P\otimes_R Q\bigr)\otimes_R T = P\otimes_R \bigl(Q\otimes_R T\bigr)~,
\end{flalign}
and satisfies the composition law, for all $P\in\Hom_\bfK(V,\widetilde{V})$,  $Q\in\Hom_\bfK(W,\widetilde{W})$,
$\widetilde{P}\in\Hom_{\bfK}(\widetilde{V},\widehat{V})$ and $\widetilde{Q}\in\Hom_\bfK(\widetilde{W},\widehat{W})$,
\begin{flalign}
\label{eqn:Rtensorcirc}
\bigl(\widetilde{P}\otimes_R\widetilde{Q}\bigr)\circ \bigl(P\otimes_R Q\bigr) = \bigl(\widetilde{P}\circ (\bar R^\alpha\RA P)\bigr)
\otimes_R\bigl((\bar R_\alpha\RA \widetilde{Q})\circ Q\bigr)~.
\end{flalign}
\end{subequations}
\end{propo}
\begin{proof}
The $R$-flip map $\tau_R$ and its inverse $\tau_R^{-1}$ are left $H$-module isomorphisms, i.e.~for all $\xi\in H$,
\begin{flalign}
\xi\RA\tau_R = \epsilon(\xi)\,\tau_R~,\quad \xi\RA\tau_R^{-1} = \epsilon(\xi)\,\tau_R^{-1}~. 
\end{flalign}
Using this and (\ref{eqn:Rtensorsimple}) we can prove (\ref{eqn:RtensorHmod})
\begin{flalign}
\nn \xi\RA (P\otimes_R Q) &= \xi\RA\bigl((P\otimes \id)\circ \tau_R\circ (Q\otimes \id)\circ \tau_R^{-1}\bigr)\\
\nn &= (\xi_1\RA P\otimes \id)\circ \tau_R \circ (\xi_2\RA Q\otimes \id)\circ \tau_R^{-1}\\
&=(\xi_1\RA P)\otimes_R(\xi_2\RA Q)~,
\end{flalign}
where in the second line we have used that $\xi\RA(P\otimes \id) = \xi\RA P\otimes \id$, for all $\xi\in H$, 
which follows from (\ref{eqn:helpprodhom}).

We now prove (\ref{eqn:Rtensorass}). The left hand side of (\ref{eqn:Rtensorass}) can be simplified
as follows
\begin{subequations}
\begin{flalign}
 \nn \bigl(P\otimes_R Q\bigr)\otimes_R T &= 
\Bigl(\bigl(P\otimes_R Q\bigr)\circ \bigl(\bar R^\alpha_1\ra\,\otimes \bar R^\alpha_2\ra\,\bigr)\Bigr)\otimes \Bigl(\bar R_\alpha\RA T\Bigr)\\
\nn &= \Bigl(P\circ \bar R^\beta\bar R^\alpha_1 \ra\,\Bigr)\otimes \Bigl((\bar R_\beta\RA Q)\circ \bar R^\alpha_2\ra\,\Bigr)\otimes \Bigl(\bar R_\alpha\RA T\Bigr)\\
&= \Bigl(P\circ \bar R^\beta\bar R^\alpha \ra\,\Bigr)\otimes \Bigl((\bar R_\beta\RA Q)\circ \bar R^\gamma\ra\,\Bigr)\otimes\Bigl( \bar R_\gamma\bar R_\alpha\RA T\Bigr)~,
\end{flalign}
where in the third line we have used (\ref{eqn:Rproper2}). This is equal to the right hand side of
(\ref{eqn:Rtensorass})
\begin{flalign}
\nn P\otimes_R \bigl(Q\otimes_R T\bigr) 
\nn &= \Bigl(P\circ \bar R^\alpha\ra\,\Bigr)\otimes \Bigl(\bar R_{\alpha_1}\RA Q\otimes_R \bar R_{\alpha_2}\RA T\Bigr)\\
\nn &= \Bigl(P\circ \bar R^\alpha\ra\,\Bigr)\otimes \Bigl((\bar R_{\alpha_1}\RA Q)\circ \bar R^\gamma\ra\,\Bigr)\otimes \Bigl(\bar R_\gamma \bar R_{\alpha_2}\RA T\Bigr)\\
 &= \Bigl(P\circ \bar R^\beta \bar R^\alpha\ra\,\Bigr)\otimes \Bigl((\bar R_{\beta}\RA Q)\circ \bar R^\gamma\ra\,\Bigr)\otimes \Bigl(\bar R_\gamma \bar R_{\alpha}\RA T\Bigr)~,
\end{flalign}
\end{subequations}
where in the third line we have used (\ref{eqn:Rproper2}). A diagrammatic proof of the associativity property is given
in the Appendix \ref{app:diagram}.

Finally, we show (\ref{eqn:Rtensorcirc}). The left hand side of (\ref{eqn:Rtensorcirc}) can be written with the help
of (\ref{eqn:Rtensorsimple}) as follows
\begin{flalign}
\label{eqn:Rcircstep1}
 \bigl(\widetilde{P}\otimes_R\widetilde{Q}\bigr)\circ \bigl(P\otimes_R Q\bigr)=
(\widetilde{P}\otimes \id)\circ\tau_R \circ (\widetilde{Q}\otimes\id)\circ\tau_R^{-1}\circ (P\otimes\id)\circ \tau_R \circ (Q\otimes \id)\circ \tau_R^{-1}~.
\end{flalign}
For the middle part we obtain
\begin{flalign}
 \nn \tau_R \circ (\widetilde{Q}\otimes\id)\circ\tau_R^{-1}\circ (P\otimes\id) &= 
\bigl(\bar R^\beta R^\alpha\ra\,\circ P \bigr)\otimes \bigl(\bar R_\beta\ra\,\circ \widetilde{Q}\circ R_\alpha\ra\,\bigr)\\
\nn &= \bigl(\bar R^\beta \bar R^\alpha\ra\,\circ P\bigr) \otimes \bigl(\bar R_\beta\ra\,\circ \widetilde{Q}\circ S(\bar R_\alpha)\ra\,\bigr)\\
\nn &= \bigl(\bar R^\alpha\ra\,\circ P \bigr)\otimes\bigl( \bar R_{\alpha_1}\ra\,\circ \widetilde{Q}\circ S(\bar R_{\alpha_2})\ra\,\bigr)\\
\nn &= \bigl(\bar R^\alpha\ra\,\circ P \bigr)\otimes \bigl(\bar R_\alpha\RA \widetilde{Q}\bigr)~\\
\nn &= \bigl((\bar R^\alpha_1\RA P)\circ \bar R^\alpha_2\ra\,\bigr)\otimes \bigl(\bar R_\alpha\RA\widetilde{Q}\bigr)~\\
\nn &= \bigl((\bar R^\alpha\RA P)\circ \bar R^\beta\ra\,\bigr)\otimes \bigl(\bar R_\beta\bar R_\alpha\RA\widetilde{Q}\bigr)~\\
\label{eqn:Rcircstep2}&=(\bar R^\alpha\RA P)\otimes_R (\bar R_\alpha\RA \widetilde{Q})~,
\end{flalign}
where in line two we have used (\ref{eqn:Rproper32}) and in line three and six (\ref{eqn:Rproper2}).
Inserting (\ref{eqn:Rcircstep2}) into (\ref{eqn:Rcircstep1}) and using again (\ref{eqn:Rtensorsimple})
we find
\begin{flalign}
 \nn \bigl(\widetilde{P}\otimes_R\widetilde{Q}\bigr)\circ \bigl(P\otimes_R Q\bigr)&=
(\widetilde{P}\otimes \id)\circ (\bar R^\alpha\RA P\otimes \id) \circ \tau_R \circ (\bar R_\alpha\RA \widetilde{Q}\otimes \id)\circ 
(Q\otimes\id)\circ\tau_R^{-1}\\
&= \bigl(\widetilde{P}\circ (\bar R^\alpha\RA P)\bigr)
\otimes_R\bigl((\bar R_\alpha\RA \widetilde{Q})\circ Q\bigr)~,
\end{flalign}
where in the second line we have also used (\ref{eqn:Ktensorhomcomp}).

\end{proof}

Let us now consider the case $V,\widetilde{V}\in{^{H,\ra}}\MMM$ and
$W,\widetilde{W}\in{^{H,\ra}}\MMM_A$, where $A\in {^{H,\ra}}\AAA$.
Then we can equip $V\otimes W$ (as well as $\widetilde{V}\otimes \widetilde{W}$) 
with a right $A$-module structure by defining $(v\otimes w)\cdot a := v\otimes(w\cdot a)$,
for all $v\in V$, $w\in W$ and $a\in A$. Moreover, we have
$V\otimes W \in {^{H,\ra}}\MMM_A$ and $\widetilde{V}\otimes \widetilde{W} \in {^{H,\ra}}\MMM_A$ 
by employing the left $H$-action (\ref{eqn:productaction}).
We obtain the following
\begin{propo}
 Let $(H,R)$ be a quasitriangular Hopf algebra, $A\in {^{H,\ra}}\AAA$, $V,\widetilde{V}\in{^{H,\ra}}\MMM$
and $W,\widetilde{W}\in{^{H,\ra}}\MMM_A$. Then we have for all
$P\in\Hom_\bfK(V,\widetilde{V})$ and $Q\in\Hom_A(W,\widetilde{W})$
\begin{flalign}
 P\otimes_R Q \in\Hom_A(V\otimes W,\widetilde{V}\otimes\widetilde{W})~.
\end{flalign}
\end{propo}
\begin{proof}
The proof follows from a short calculation, for all $a\in A$, $v\in V$ and $w\in W$, 
\begin{flalign}
\nn (P\otimes_R Q)\bigl( (v\otimes w)\cdot a \bigr) &=(P\otimes_R Q)\bigl( v\otimes (w\cdot a) \bigr) 
= P\bigl(\bar R^\alpha\ra v\bigr)\otimes (\bar R_\alpha\RA Q)(w\cdot a)\\
\nn &=P\bigl(\bar R^\alpha\ra v\bigr)\otimes \bigl((\bar R_\alpha\RA Q)(w)\cdot a\bigr)\\
&=\bigl((P\otimes_R Q)(v\otimes w)\bigr)\cdot a~,
\end{flalign}
where in the second line we have used that
$\xi\RA Q\in\Hom_A(W,\widetilde{W})$, for all $\xi\in H$.

\end{proof}

We now consider the case $V,\widetilde{V}\in{^{H,\ra}}\MMM_A$ and
$W,\widetilde{W}\in{^{H,\ra}_A}\MMM_A$, where $A\in {^{H,\ra}}\AAA$.
Let also $A$, $W$ and $\widetilde{W}$ be quasi-commutative.
We consider the tensor product over $A$ and have
$V\otimes_A W \in {^{H,\ra}}\MMM_A$ and $\widetilde{V}\otimes_A \widetilde{W} \in {^{H,\ra}}\MMM_A$.
The image of $(v,w)$ under the natural map $V\times W\to V\otimes_A W$ is denoted by $v\otimes_A w$.
We obtain the following
\begin{propo}\label{propo:RtensorAhom}
 Let $(H,R)$ be a quasitriangular Hopf algebra, $A\in {^{H,\ra}}\AAA$, $V,\widetilde{V}\in{^{H,\ra}}\MMM_A$
and $W,\widetilde{W}\in{^{H,\ra}_A}\MMM_A$. Let also $A$, $W$ and $\widetilde{W}$ be quasi-commutative.
 Then we have for all $P\in\Hom_A(V,\widetilde{V})$ and $Q\in\Hom_A(W,\widetilde{W})$
\begin{flalign}
 P\otimes_R Q \in\Hom_A(V\otimes_A W,\widetilde{V}\otimes_A\widetilde{W})~.
\end{flalign}
\end{propo}
\begin{proof}
 It remains to show that $P\otimes_R Q$ is compatible with middle $A$-linearity,
for all $a\in A$, $v\in V$ and $w\in W$,
\begin{flalign}
 \nn(P\otimes_R Q)\bigl((v\cdot a)\otimes_A w\bigr) &=  P\bigl(\bar R^\alpha\ra (v\cdot a)\bigr)\otimes_A (\bar R_\alpha\RA Q)(w)\\
\nn &=P(\bar R^\alpha_1\ra v)\cdot(\bar R^\alpha_2 \ra a) \otimes_A (\bar R_\alpha\RA Q)(w)\\
\nn &=P(\bar R^\alpha\ra v)\otimes_A (\bar R^\beta \ra a) \cdot (\bar R_\beta \bar R_\alpha\RA Q)(w)\\
\nn &=P(\bar R^\alpha\ra v)\otimes_A (\bar R_\alpha\RA Q)(a\cdot w)\\
&=(P\otimes_R Q)\bigl(v\otimes_A (a\cdot w)\bigr)~,
\end{flalign}
where in the third line we have used (\ref{eqn:Rproper2}) and in line four (\ref{eqn:righthomleft}).

\end{proof}

We finish this section by studying the behavior of $P\otimes_R Q$ under twist quantization.
\begin{theo}\label{theo:promodhomdef}
 Let $(H,R)$ be a quasitriangular Hopf algebra with twist $\FF\in H\otimes H$ and  
$V,W,\widetilde{V},\widetilde{W}\in{^{H,\ra}}\MMM$. Then for all $P\in \Hom_\bfK(V,\widetilde{V})$
and $Q\in\Hom_\bfK(W,\widetilde{W})$ we have
\begin{flalign}
\label{eqn:prodhomdef}
D_\FF(P)\otimes_{R^\FF}D_\FF(Q) = 
\iota^{-1}\circ D_\FF\bigl((\bar f^\alpha\RA P)\otimes_R (\bar f_\alpha\RA Q)\bigr)\circ \iota\,~,
\end{flalign}
where $R^\FF = \FF_{21}\,R\,\FF^{-1}$ and $\iota = \FF^{-1}\ra\,$.
\end{theo}
\begin{proof}
 Firstly, note that
\begin{flalign}
 \nn (\bar f^\alpha\RA P)\otimes_R (\bar f_\alpha\RA Q) &= 
\bigl(\bar f^\alpha\RA P\otimes \id\bigr)\circ \tau_R \circ \bigl(\bar f_\alpha\RA Q\otimes \id\bigr)\circ\tau_R^{-1}\\
\nn &= \bigl(\bar f^\alpha\RA(P\otimes \id)\bigr)\circ \tau_R \circ \bigl(\bar f_\alpha\RA (Q\otimes \id)\bigr)\circ\tau_R^{-1}\\
 &= (P\otimes \id)\circ_\star \tau_R\circ_\star (Q\otimes\id)\circ_\star\tau_R^{-1}~,
\end{flalign}
where in the last line we have used that $\tau_R$ and $\tau_R^{-1}$ are $H$-invariant, i.e.~$\xi\RA\tau_R =\epsilon(\xi)\,\tau_R$
and $\xi\RA\tau_R^{-1} =\epsilon(\xi)\,\tau_R^{-1}$ for all $\xi\in H$.
Acting with $D_\FF$ on this expression we obtain
\begin{flalign}
 D_\FF\bigl((\bar f^\alpha\RA P)\otimes_R (\bar f_\alpha\RA Q)\bigr) = D_\FF(P\otimes\id) \circ \tau_R \circ D_\FF(Q\otimes\id)\circ\tau_R^{-1}~,
\end{flalign}
where we again used that $\tau_R$ and $\tau_R^{-1}$ are $H$-invariant and thus $D_\FF(\tau_R) =\tau_R$ 
and $D_\FF(\tau_R^{-1})=\tau_R^{-1}$.
The quantization of $P\otimes \id$ (and also $Q\otimes \id$) can be simplified as follows
\begin{flalign}
 \nn D_\FF(P\otimes \id) &= \bigl((\bar f^\alpha\RA P)\circ \bar f_{\alpha_1}\ra\,\bigr)\otimes \bar f_{\alpha_2}\ra\,\\
\nn &=\bigl((\bar f^\alpha_1\bar f^\beta\RA P)\circ \bar f^\alpha_2\bar f_\beta f^\gamma \ra\,\bigr)\otimes \bar f_{\alpha}f_\gamma\ra\,\\
\nn&= \bigl(\bar f^\alpha\ra\,\circ D_\FF(P)\circ f^\gamma\ra\,\bigr)\otimes \bar f_\alpha f_\gamma\ra\, \\
&= \iota\circ (D_\FF(P)\otimes \id)\circ\iota^{-1}~,
\end{flalign}
where we have used in line two the $2$-cocycle property (\ref{eqn:twistprop1}) of the twist.
Defining $\tau_{R^\FF} := \iota^{-1}\circ \tau_R \circ\iota$ we obtain
\begin{flalign}
  D_\FF\bigl((\bar f^\alpha\RA P)\otimes_R (\bar f_\alpha\RA Q)\bigr) = 
\iota\circ (D_\FF(P)\otimes \id) \circ \tau_{R^\FF} \circ (D_\FF(Q)\otimes \id)\circ \tau_{R^\FF}^{-1}\circ \iota^{-1}~.
\end{flalign}
Note that $\tau_{R^\FF}(w\otimes v) = (\bar R^{\FF\alpha}\ra v) \otimes (\bar R^\FF_{\alpha}\ra w)$ and
$\tau_{R^\FF}^{-1}(v\otimes w) = (R_\alpha^\FF\ra w)\otimes(R^{\FF\alpha}\ra v)$,
which analogously to (\ref{eqn:Rtensorsimple}) leads to 
\begin{flalign}
 (D_\FF(P)\otimes \id) \circ \tau_{R^\FF} \circ (D_\FF(Q)\otimes \id)\circ \tau_{R^\FF}^{-1}
= \bigl( D_\FF(P)\circ \bar R^{\FF\alpha}\ra\,\bigr)\otimes \bigl(\bar R_\alpha^\FF\RA_\FF D_\FF(Q)\bigr)~.
\end{flalign}

\end{proof}

Let us understand (\ref{eqn:prodhomdef}) in more detail. 
Firstly, note that each left $H$-module $V\in{^{H,\ra}}\MMM$ is also a left $H^{\FF}$-module
$V\in{^{H^\FF,\ra}}\MMM$, since $H$ and $H^\FF$ are equal as algebras.
However, for defining the left $H$-module structure on the $\bfK$-module $V\otimes W\in\MMM$
(\ref{eqn:productaction}) we have made use of the coproduct in $H$. Analogously, we can equip
the $\bfK$-module $V\otimes W \in\MMM$ with a left $H^\FF$-module structure (thus also another $H$-module structure)
by using the coproduct in $H^\FF$. We denote this module by $V\otimes_\star W\in{^{H,\ra_\FF}}\MMM$
and the image of $(v,w)$ under the natural map $V\times W\to V\otimes_\star W$ by $v\otimes_\star w$.
The $\bfK$-linear map $\iota: V\otimes_\star W \to V\otimes W$ defined by
$\iota(v\otimes_\star w) =(\bar f^\alpha\ra v) \otimes (\bar f_\alpha\ra w)$ provides a
left $H$-module (and therewith also a left $H^\FF$-module) isomorphism $V\otimes W\simeq V\otimes_\star W$, since
\begin{flalign}
 \iota\bigl(\xi\ra_\FF(v\otimes_\star w)\bigr) = \iota\bigl((\xi_{1_\FF}\ra v)\otimes_\star (\xi_{2_\FF}\ra w)  \bigr)
= (\xi_{1}\bar f^\alpha\ra v)\otimes (\xi_{2}\bar f_\alpha\ra w) = \xi\ra\iota(v\otimes_\star w)~.
\end{flalign}
The inverse is given by $\iota^{-1}(v\otimes w)  = (f^\alpha\ra v)\otimes_\star(f_\alpha\ra w)$.

Note that $D_\FF\bigl((\bar f^\alpha\RA P)\otimes_R (\bar f_\alpha\RA Q)\bigr)$ is by construction an element in $\Hom_\bfK(V\otimes W,\widetilde{V}\otimes \widetilde{W})$.
Employing the isomorphism $\iota$, we can induce a homomorphism in
 $\Hom_\bfK(V\otimes_\star W,\widetilde{V}\otimes_\star \widetilde{W})$.
Theorem \ref{theo:promodhomdef} then simply states that the following diagram commutes
\begin{flalign}\label{eqn:promodhomdef}
 \xymatrix{
V\otimes_\star W \ar[d]_-{\iota}\ar[rrrr]^-{D_\FF(P)\otimes_{R^\FF}D_\FF(Q)} & & & &\widetilde{V}\otimes_\star\widetilde{W} \\
V\otimes W \ar[rrrr]_-{D_\FF\bigl((\bar f^\alpha\RA P)\otimes_R (\bar f_\alpha\RA Q)\bigr)}& & & &\widetilde{V}\otimes\widetilde{W}\ar[u]_-{\iota^{-1}}
}
\end{flalign}

We omit a discussion of left $A$-linear homomorphisms and their tensor products, since we do not require
these structures in the following.

%%%%%%%%%%%%%%%%%%%%%%%%%%%%%%%%%%%%%%%%%%%%%%%%%%%%%%%
%%%%%%%%%%%%%%%%%%%%%%%%%%%%%%%%%%%%%%%%%%%%%%%%%%%%%%%

\chapter{\label{chap:con}Bimodule connections}
\section{\label{sec:conbas}Connections on right and left modules}
We briefly review the notion of a connection on a right or left module, 
see e.g.~\cite{1038.58004,Madore:2000aq} for an introduction.

Let $A$ be a unital and associative algebra over $\bfK$. 
A {\it differential calculus} $\bigl(\Omega^\bullet,\wedge,\dd\bigr)$ over $A$ is an $\bbN^0$-graded algebra 
$\bigl(\Omega^\bullet = \bigoplus_{n\geq0} \Omega^n,\wedge\bigr)$ over $\bfK$,\footnote{
In order to stress the analogy to classical differential geometry we denote the 
product in $\Omega^\bullet$ by a wedge $\wedge$.
However, one has to be a bit careful with this notation, since in contrast to the
$\wedge$-product in differential geometry our wedge
 is not necessarily graded commutative.
} where $\Omega^0=A$, together with a
$\bfK$-linear map $\dd:\Omega^\bullet \to \Omega^\bullet$ of degree one, satisfying $\dd\circ\dd=0$ and
\begin{flalign}
 \dd(\omega\wedge\omega^\prime) = (\dd\omega)\wedge\omega^\prime + (-1)^{\deg(\omega)}\,\omega\wedge(\dd\omega^\prime)~,
\end{flalign}
for all $\omega,\omega^\prime\in\Omega^\bullet$.
The {\it differential} $\dd$ and product $\wedge$ give rise to $\bfK$-linear maps (denoted by the same symbols) 
$\dd:\Omega^n\to\Omega^{n+1}$ and $\wedge:\Omega^n\otimes \Omega^m\to\Omega^{n+m}$.
Note that in the hypotheses above the $\bfK$-modules $\Omega^n$, $n>0$, are 
$A$-bimodules, i.e.~$\Omega^n\in{_A}\MMM_A$.
\begin{ex}
 Let $\MM$ be an $N$-dimensional smooth manifold and $A=C^\infty(\MM)$ be the smooth 
and complex valued functions on $\MM$.
The exterior algebra of differential forms $\bigl(\Omega^\bullet=\bigoplus_{n\geq 0}\Omega^n,\wedge\bigr)$ 
is an $\bbN^0$-graded algebra over $\bbC$,
where $\Omega^0=A$ and $\Omega^{N+n}=0$, for all $n>0$. 
The exterior differential $\dd$ is a differential on $\bigl(\Omega^\bullet,\wedge\bigr)$,
leading to the differential calculus $\bigl(\Omega^\bullet,\wedge,\dd\bigr)$.
In this special case $\Omega^\bullet$ is graded commutative.

\noindent Another example is given by the twist deformed differential calculus
$\bigl(\Omega^\bullet[[\lambda]],\wedge_\star,\dd\bigr)$, see Chapter \ref{chap:basicncg}.
There, the algebra $\bigl(\Omega^\bullet[[\lambda]],\wedge_\star\bigr)$ is graded quasi-commutative, 
i.e.~
\begin{flalign}
 \omega\wedge_\star \omega^\prime = 
(-1)^{\deg(\omega)\,\deg(\omega^\prime)} ~(\bar R^\alpha\ra \omega^\prime)\wedge_\star (\bar R_\alpha\ra \omega)~.
\end{flalign}
\end{ex}

\begin{defi}\label{defi:connection}
 Let $A$ be a unital and associative algebra over $\bfK$ and $\bigl(\Omega^\bullet,\wedge,\dd\bigr)$
be a differential calculus over $A$.
A {\it connection on a right $A$-module} $V\in\MMM_A$ is a $\bfK$-linear map $\nab:V\to V\otimes_A\Omega^1$, satisfying
the right Leibniz rule, for all $v\in V$ and $a\in A$,
\begin{flalign}
\label{eqn:rightcon}
 \nab(v\cdot a) = (\nab v)\cdot a + v\otimes_A \dd a~.
\end{flalign}
Similarly, a {\it connection on a left $A$-module} $V\in{_A}\MMM$ is a $\bfK$-linear map 
$\nab:V\to \Omega^1\otimes_A V$, satisfying the left Leibniz rule, for all $v\in V$ and $a\in A$,
\begin{flalign}
\label{eqn:leftcon}
\nab(a\cdot v) = a\cdot (\nab v)  + \dd a\otimes_A v~.
\end{flalign}
In case $V\in {_A}\MMM_A$ is an $A$-bimodule, we say that a $\bfK$-linear map
$\nab:V\to V\otimes_A\Omega^1$ is a {\it right connection} on $V$, if (\ref{eqn:rightcon})
is satisfied. Analogously, we say that a $\bfK$-linear map $\nab: V\to \Omega^1\otimes_A V$
is a {\it left connection} on $V$, if (\ref{eqn:leftcon}) is satisfied.
\end{defi}
 
We denote by $\Con_A(V)$ the set of all connections on a right $A$-module $V\in\MMM_A$
and by ${_A}\Con(V)$ the set of all connections on a left $A$-module $V\in{_A}\MMM$.
Similarly, we denote by $\Con_A(V)$ and ${_A}\Con(V)$ respectively the set of all right and left
connections on an $A$-bimodule $V\in{_A}\MMM_A$.
Note that given any connection $\nab \in \Con_A(V)$ and any homomorphism
$P\in\Hom_A(V,V\otimes_A\Omega^1)$, then $\nab^\prime = \nab + P \in\Con_A(V)$ is again a connection, since
\begin{flalign}
 \nab^\prime(v\cdot a) = \nab(v\cdot a) + P(v\cdot a) = (\nab v)\cdot a + P(v)\cdot a + v\otimes_A\dd a
= (\nab^\prime v)\cdot a + v\otimes_A \dd a~.
\end{flalign}
Similarly, let $\nab\in {_A}\Con(V)$ and $P\in{_A}\Hom(V,\Omega^1\otimes_A V)$,
then $\nab^\prime = \nab+P\in{_A}\Con(V)$, since
\begin{flalign}
 \nab^\prime(a\cdot v) = \nab(a\cdot v) + P(a\cdot v) = a\cdot (\nab v) + a\cdot P(v) + \dd a\otimes_A v
=a\cdot(\nab^\prime v) + \dd a\otimes_A v ~.
\end{flalign}
This means that $\Con_A(V)$ and ${_A}\Con(V)$ are affine spaces over the homomorphisms
 $\Hom_A(V,V\otimes_A\Omega^1)$ or ${_A}\Hom(V,\Omega^1\otimes_A V)$, respectively.
\begin{rem}
In this part the focus is on connections as defined above. A covariant derivative, as defined in 
Chapter \ref{chap:basicncg}, can be derived canonically from a connection:
Let $\nab:V\to \Omega^1\otimes_A V $ be a left connection.
A covariant derivative $\nab_u:V\to V$ along $u\in\Omega^{1\prime}=\Hom_A(\Omega^1,A)$ 
is obtained by the following composition of maps
\begin{flalign}
 \xymatrix{
V \ar[d]_-{\nab}\ar[rr]^-{\nab_u} & & \ar[d] V\\
\Omega^1\otimes_A V \ar[rr]_-{u\otimes\id} & & A\otimes_A V\ar[u]_-{\simeq}
}
\end{flalign} 
An analogous statement holds for right connections.
\end{rem}

%%%%%%%%%%%%%%%%%%%%%%%%%%%%%%%%%%%%%%%%%%%%%%%%%%%%%%%

\section{Quantization of connections}
We study the twist quantization of connections on left and right modules.
Let $H$ be a Hopf algebra and $A\in{^{H,\ra}}\AAA$ be a left $H$-module algebra.
Let further $\bigl(\Omega^\bullet,\wedge,\dd\bigr)$ be a {\it left $H$-covariant differential calculus} over $A$,
i.e.~$\Omega^\bullet\in {^{H,\ra}}\AAA$ is a left $H$-module algebra, the action $\ra$ is of degree zero and 
the differential is equivariant, for all $\xi\in H$ and $\omega\in \Omega^\bullet$,
\begin{flalign}
\label{eqn:equivar}
 \xi\ra (\dd\omega) = \dd(\xi\ra \omega)~.
\end{flalign}
As a consequence, we have for all $n\geq 0$, $\Omega^n\in{^{H,\ra}_A}\MMM_A$.
This is exactly the setting we face in (noncommutative) gravity. A left $H$-covariant
differential calculus can be quantized to yield a left $H^\FF$-covariant differential calculus.
\begin{lem}\label{lem:dcdef}
 Let $H$ be a Hopf algebra with twist $\FF\in H\otimes H$, $A\in{^{H,\ra}}\AAA$ be a left $H$-module algebra
and $\bigl(\Omega^\bullet,\wedge,\dd\bigr)$ be a left $H$-covariant differential calculus over $A$.
Then $\bigl(\Omega^\bullet,\wedge_\star,\dd\bigr)$ is a left $H^\FF$-covariant differential calculus
over $A_\star$.
\end{lem}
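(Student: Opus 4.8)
The statement is essentially a compatibility check: the twist deformation machinery of Theorems \ref{theo:algebradef} and \ref{theo:moduledef} already produces the deformed algebra $\bigl(\Omega^\bullet,\wedge_\star\bigr)$ and the deformed bimodules $\Omega^n_\star$, so most of the work is done. What remains is to verify that (i) $\bigl(\Omega^\bullet,\wedge_\star\bigr)$ is still $\bbN^0$-graded with $\Omega^0_\star=A_\star$, (ii) the undeformed differential $\dd$ is still a differential with respect to the new product $\wedge_\star$, i.e.\ $\dd\circ\dd=0$ (unchanged) and the graded Leibniz rule holds for $\wedge_\star$, and (iii) $\dd$ is still equivariant, now with respect to the $H^\FF$-action. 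Then invoking Theorem \ref{theo:moduledef} gives $\Omega^n_\star\in {^{H^\FF,\ra}_{A_\star}}\MMM_{A_\star}$ for all $n$, completing the claim.

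First I would record that as $\bfK$-modules nothing changes: $\Omega^\bullet=\bigoplus_{n\geq 0}\Omega^n$ with the same grading, and since the twist $\FF\in H\otimes H$ acts via an action $\ra$ of degree zero, the deformed product $\omega\wedge_\star\omega' = (\bar f^\alpha\ra\omega)\wedge(\bar f_\alpha\ra\omega')$ preserves the grading, so $\bigl(\Omega^\bullet,\wedge_\star\bigr)$ is an $\bbN^0$-graded algebra over $\bfK$ with $\Omega^0_\star=A_\star$. Theorem \ref{theo:algebradef} already gives associativity and unitality of $\wedge_\star$, and Theorem \ref{theo:moduledef} (applied with $E=\Omega^\bullet\in {^{H,\ra}_A}\AAA_A$, noting each $\Omega^n$ is a left $H$-module $A$-bimodule) gives that $\Omega^\bullet_\star\in {^{H^\FF,\ra}_{A_\star}}\AAA_{A_\star}$ with the $\star$-bimodule structure $a\star\omega = (\bar f^\alpha\ra a)\cdot(\bar f_\alpha\ra \omega)$ and similarly on the right.

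The main computational step — and the only one requiring care — is the graded Leibniz rule for $\dd$ with respect to $\wedge_\star$. The key inputs are the equivariance $\xi\ra(\dd\omega)=\dd(\xi\ra\omega)$, the undeformed graded Leibniz rule, and the fact that the $\bar f^\alpha,\bar f_\alpha$ are elements of $H$ so that the action distributes via the (undeformed) coproduct on products, together with the normalization property $(\epsilon\otimes\id)\FF^{-1}=1=(\id\otimes\epsilon)\FF^{-1}$. Concretely, I would compute
\begin{flalign*}
\dd(\omega\wedge_\star\omega') &= \dd\bigl((\bar f^\alpha\ra\omega)\wedge(\bar f_\alpha\ra\omega')\bigr)\\
&= \dd(\bar f^\alpha\ra\omega)\wedge(\bar f_\alpha\ra\omega') + (-1)^{\deg(\omega)}(\bar f^\alpha\ra\omega)\wedge\dd(\bar f_\alpha\ra\omega')\\
&= (\bar f^\alpha\ra\dd\omega)\wedge(\bar f_\alpha\ra\omega') + (-1)^{\deg(\omega)}(\bar f^\alpha\ra\omega)\wedge(\bar f_\alpha\ra\dd\omega')\\
&= (\dd\omega)\wedge_\star\omega' + (-1)^{\deg(\omega)}\,\omega\wedge_\star(\dd\omega')~,
\end{flalign*}
using equivariance in the third line and the definition of $\wedge_\star$ in the last. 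Note that the degree shift $\deg(\dd\omega)=\deg(\omega)+1$ is unchanged, so the sign is consistent. Then $\dd\circ\dd=0$ is immediate since $\dd$ is literally the same map. Finally, equivariance with respect to the $H^\FF$-action is automatic: $H^\FF=H$ as algebras and the underlying action is unchanged, so $\xi\ra(\dd\omega)=\dd(\xi\ra\omega)$ for all $\xi\in H^\FF$ still holds, and the action remains of degree zero. Hence $\bigl(\Omega^\bullet,\wedge_\star,\dd\bigr)$ is a left $H^\FF$-covariant differential calculus over $A_\star$, and by Theorem \ref{theo:moduledef} each $\Omega^n_\star\in {^{H^\FF,\ra}_{A_\star}}\MMM_{A_\star}$. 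I do not anticipate any genuine obstacle here; the only subtlety worth stating explicitly is that the degree-zero property of $\ra$ is what makes $\wedge_\star$ respect the grading, and this should be part of the definition of a left $H$-covariant differential calculus (as indeed stated in the excerpt).
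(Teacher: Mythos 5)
Your proposal is correct and follows essentially the same route as the paper's own (much terser) proof: invoke Theorem \ref{theo:algebradef} for the deformed left $H^\FF$-module algebra structure, note that the grading and $\Omega^0_\star=A_\star$ are preserved, and use equivariance of $\dd$ to conclude it remains a differential for $\wedge_\star$. The explicit verification of the graded Leibniz rule that you write out is exactly the computation the paper leaves implicit behind the phrase ``due to the equivariance of the differential.''
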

\begin{proof}
 By Theorem \ref{theo:algebradef} $\bigl(\Omega^\bullet,\wedge_\star\bigr)$ is a left $H^\FF$-module
algebra. It is $\bbN^0$-graded and we have $(\Omega^0,\wedge_\star) = A_\star$.
Due to the equivariance of the differential, $\dd$ is also a differential on $\bigl(\Omega^\bullet,\wedge_\star\bigr)$.

\end{proof}

Let $V\in {^{H,\ra}}\MMM_A$ be a left $H$-module right $A$-module. We consider the case of left $H$-modules left
$A$-modules ${^{H,\ra}_A}\MMM$ later.
Since $\Con_A(V)\subseteq \Hom_\bfK(V,V\otimes_A\Omega^1)$ we can act with the adjoint 
action $\RA$ on $\Con_A(V)$, for all $\xi\in H$ and $\nab\in \Con_A(V)$,
\begin{flalign}
\label{eqn:conadjoint}
 \xi\RA\nab := \xi_1\ra\,\circ \nab \circ S(\xi_2)\ra\,~.
\end{flalign}
Note that for general $\xi\in H$,  $\xi\RA\nab\not\in\Con_A(V)$, since
\begin{flalign}
 \nn(\xi\RA\nab)(v\cdot a) &= \xi_1\ra\bigl(\nab(S(\xi_2)_1\ra v\cdot S(\xi_2)_2\ra a)\bigr)\\
\nn&= \xi_1\ra\Bigl(\bigl(\nab (S(\xi_3)\ra v)\bigr) \cdot (S(\xi_2)\ra a) + (S(\xi_3)\ra v)\otimes_A \dd (S(\xi_2)\ra a)\Bigr)\\
\label{eqn:Hcon}&=\bigl((\xi\RA \nab) v\bigr)\cdot a + \epsilon(\xi)\,v\otimes_A \dd a~.
\end{flalign}
In particular, for $\xi\in H$ with $\epsilon(\xi)=0$ we obtain $\xi\RA \nab \in\Hom_A(V,V\otimes_A \Omega^1)$.
However, for $\epsilon(\xi)=1$ we have $\xi\RA\nab\in\Con_A(V)$.

We now show that given a twist $\FF\in H\otimes H$ of the Hopf algebra $H$, then there is an isomorphism
$\Con_A(V)\simeq \Con_{A_\star}(V_\star)$ between connections on the undeformed module $V\in {^{H,\ra}}\MMM_A$
and the deformed module $V_\star\in{^{H^\FF,\ra}}\MMM_{A_\star}$.
For this we make use of the quantization map $D_\FF$ on module homomorphisms, see Theorem \ref{theo:homodef}.
Since $\Con_A(V)\subseteq \Hom_\bfK(V,V\otimes_A\Omega^1)$ we have $D_\FF(\nab) \in \Hom_\bfK(V_\star,(V\otimes_A\Omega^1)_\star)$.
In order to relate $D_\FF(\nab)$ to a connection on $V_\star$, 
i.e.~a $\bfK$-linear map $\nab_\star:V_\star\mapsto V_\star\otimes_{A_\star}\Omega^1_\star$, 
we have to establish first an isomorphism $(V\otimes_A\Omega^1)_\star\simeq V_\star\otimes_{A_\star}\Omega^1_\star$.
This is the aim of the following
\begin{lem}\label{lem:iotaiso}
 Let $H$ be a Hopf algebra with twist $\FF\in H\otimes H$ and let $A\in{^{H,\ra}}\AAA$,
$V\in{^{H,\ra}}\MMM_A$ and $W\in{^{H,\ra}_A}\MMM_A$. Then
the $\bfK$-linear map $\iota: V_\star\otimes_{A_\star} W_\star \to (V\otimes_A W)_\star$
defined by $\iota(v\otimes_{A_\star}w) = (\bar f^\alpha\ra v)\otimes_A (\bar f_\alpha\ra w)$
is a left $H^\FF$-module right $A_\star$-module isomorphism.

\noindent In case of $V\in{^{H,\ra}_A}\MMM_A$ the map $\iota$ is a left $H^\FF$-module
$A_\star$-bimodule isomorphism.
\end{lem}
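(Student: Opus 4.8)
The statement to prove is Lemma~\ref{lem:iotaiso}: the map $\iota$ defined by $\iota(v\otimes_{A_\star}w) = (\bar f^\alpha\ra v)\otimes_A (\bar f_\alpha\ra w)$ is a left $H^\FF$-module right $A_\star$-module isomorphism from $V_\star\otimes_{A_\star} W_\star$ to $(V\otimes_A W)_\star$, and an $A_\star$-bimodule isomorphism when $V$ is moreover an $A$-bimodule. The strategy splits into four tasks: (i) check that $\iota$ is well-defined as a map out of the balanced tensor product $V_\star\otimes_{A_\star} W_\star$, i.e.\ that it respects middle $A_\star$-linearity; (ii) write down the candidate inverse and check it is well-defined; (iii) verify the two maps are mutually inverse; (iv) verify $H^\FF$-equivariance and $A_\star$-linearity (on the right, and on the left in the bimodule case). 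Throughout I will use the twist properties in the forms (\ref{eqn:twistpropsimp1})--(\ref{eqn:twistpropsimp3}) and the covariance relations from Definition~\ref{defi:hmod} together with Theorems~\ref{theo:algebradef} and \ref{theo:moduledef}.

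\textbf{Well-definedness and the inverse.} First I would exhibit $\iota$ as induced from the $\bfK$-bilinear map $V_\star\times W_\star\to (V\otimes_A W)_\star$, $(v,w)\mapsto (\bar f^\alpha\ra v)\otimes_A (\bar f_\alpha\ra w)$, and check it descends to the quotient: the key identity is that $\iota\bigl((v\star a)\otimes_{A_\star} w\bigr) = \iota\bigl(v\otimes_{A_\star}(a\star w)\bigr)$, which unfolds using (\ref{eqn:moduledef1}), (\ref{eqn:moduledef2}) and the $2$-cocycle identity (\ref{eqn:twistpropsimp3}) exactly as in the middle-linearity computation in the proof of Proposition~\ref{propo:RtensorAhom}; here the point is that $(\bar f^\alpha\ra(v\cdot a'))\otimes_A(\bar f_\alpha\ra w)$ with $a'=\bar f'^\beta\ra a$ etc.\ collapses to $(\bar f^\alpha\ra v)\otimes_A(\bar f_\alpha\ra(a\cdot w'))$ after reassociating the twist legs. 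For the inverse I propose
\begin{flalign}
\iota^{-1}: (V\otimes_A W)_\star \to V_\star\otimes_{A_\star} W_\star~,\quad v\otimes_A w \mapsto (f^\alpha\ra v)\otimes_{A_\star}(f_\alpha\ra w)~,
\end{flalign}
well-defined for the analogous reason (now using (\ref{eqn:twistpropsimp1})), and then $\iota\circ\iota^{-1}=\id$ and $\iota^{-1}\circ\iota=\id$ follow from $\FF\,\FF^{-1}=1\otimes 1=\FF^{-1}\,\FF$ together with the module-algebra axioms; this is the same bookkeeping as in the proof of Theorem~\ref{theo:promodhomdef} where the map $\iota=\FF^{-1}\ra$ appeared on ordinary tensor products.

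\textbf{Equivariance and $A_\star$-linearity.} For the left $H^\FF$-module property I would compute
\begin{flalign}
\iota\bigl(\xi\ra(v\otimes_{A_\star}w)\bigr) = \iota\bigl((\xi_{1_\FF}\ra v)\otimes_{A_\star}(\xi_{2_\FF}\ra w)\bigr)
= (\bar f^\alpha\xi_{1_\FF}\ra v)\otimes_A(\bar f_\alpha\xi_{2_\FF}\ra w)~,
\end{flalign}
and then rewrite $\bar f^\alpha\xi_{1_\FF}\otimes \bar f_\alpha\xi_{2_\FF} = \mathcal F^{-1}\FF\,\Delta(\xi)\,\FF^{-1} = \Delta(\xi)\,\FF^{-1}$ using $\Delta^\FF(\xi)=\FF\Delta(\xi)\FF^{-1}$, which gives $\xi_1\bar f^\alpha\otimes\xi_2\bar f_\alpha$, i.e.\ exactly $\xi\ra\iota(v\otimes_{A_\star}w)$ by (\ref{eqn:productaction}) and the covariance of the $A$-action on $V\otimes_A W$. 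Right $A_\star$-linearity, $\iota\bigl((v\otimes_{A_\star}w)\star a\bigr) = \iota(v\otimes_{A_\star}w)\star a$, unfolds $(v\otimes_{A_\star}w)\star a = v\otimes_{A_\star}(w\star a)$ on the left and the $\star$-action (\ref{eqn:moduledef2}) on $(V\otimes_A W)_\star$ on the right, again reducing to (\ref{eqn:twistpropsimp3}); left $A_\star$-linearity in the bimodule case is entirely parallel using (\ref{eqn:moduledef1}).

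\textbf{Main obstacle.} None of the steps is conceptually deep, but the genuinely delicate part is keeping the Sweedler/twist indices straight when proving well-definedness over the \emph{balanced} tensor product and, simultaneously, that the middle-linearity relations one quotients by on the two sides ($\cdot a$ versus $a\cdot$ with their twisted actions) are matched correctly; a sign error in which leg of $\FF$ or $\FF^{-1}$ one inserts will silently break the argument. I would therefore do that computation first and in full detail, treating the remaining three verifications as essentially mechanical once the index conventions are fixed. A clean way to organize it is to note that $\iota$ is literally the restriction to the balanced quotient of the map $\FF^{-1}\ra$ on the $\bfK$-tensor product $V\otimes W$, so that well-definedness amounts to showing $\FF^{-1}\ra$ maps the submodule generated by $\{(v\star a)\otimes w - v\otimes(a\star w)\}$ into the one generated by $\{(v\cdot a)\otimes w - v\otimes(a\cdot w)\}$, which is again a single $2$-cocycle manipulation.
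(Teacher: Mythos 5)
Your proposal is correct and follows essentially the same route as the paper's proof: well-definedness over the balanced tensor product via the $2$-cocycle identity (\ref{eqn:twistpropsimp3}), the explicit inverse $v\otimes_A w\mapsto (f^\alpha\ra v)\otimes_{A_\star}(f_\alpha\ra w)$, $H^\FF$-equivariance via $\FF^{-1}\Delta^\FF(\xi)=\Delta(\xi)\FF^{-1}$, and the module-linearity checks. The only difference is presentational (you spell out the mutual-inverse verification and the reformulation of $\iota$ as the restriction of $\FF^{-1}\ra$ to the quotient, which the paper leaves implicit).
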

\begin{proof}
The map $\iota$ is well-defined as it is compatible with middle $A_\star$-linearity, for all
$a\in A$, $v\in V$ and $w\in W$,
\begin{flalign}
 \nn \iota\bigl((v\star a) \otimes_{A_\star} w\bigr)&=
\bigl((\bar f^\alpha_1\bar f^\beta\ra v) \cdot (\bar f^\alpha_2\bar f_\beta\ra a)\bigr)\otimes_A (\bar f_\alpha\ra w)\\
&= (\bar f^\alpha\ra v)\otimes_A \bigl((\bar f_{\alpha_1}\bar f^\beta\ra a)\cdot (\bar f_{\alpha_2}\bar f_\beta\ra w)\bigr)
= \iota\bigl(v\otimes_{A_\star} (a\star w)\bigr)~,
\end{flalign}
where in line two we have used (\ref{eqn:twistpropsimp3}). 
Furthermore,
$\iota$ is a right $A_\star$-module homomorphism, for all $a\in A$, $v\in V$ and $w\in W$,
\begin{flalign}
\nn \iota\bigl(v\otimes_{A_\star} w\star a\bigr) &= 
\bigl((\bar f^\alpha\ra v)\otimes_A (\bar f_{\alpha_1}\bar f^\beta\ra w)\bigr)\cdot (\bar f_{\alpha_2}\bar f_\beta\ra a)\\
&=\bar f^\alpha\ra \bigl((\bar f^\beta\ra v)\otimes_A (\bar f_\beta\ra w )\bigr) \cdot (\bar f_\alpha\ra a) = 
\iota(v\otimes_{A_\star} w)\star a~,
\end{flalign}
and also a left $H^\FF$-module homomorphism, for all $\xi\in H$, $v\in V$ and $w\in W$,
\begin{flalign}
 \nn \iota\bigl(\xi\ra(v\otimes_{A_\star}w)\bigr) &= \iota\bigl((\xi_{1_\FF}\ra v)\otimes_{A_\star} (\xi_{2_\FF}\ra w)\bigr)\\
&=(\xi_1\bar f^\alpha\ra v)\otimes_A (\xi_2\bar f_\alpha\ra w) = \xi\ra\iota(v\otimes_{A_\star}w)~.
\end{flalign}
The inverse $\iota^{-1}: (V\otimes_A W)_\star \to  V_\star \otimes_{A_\star} W_\star$ is given by
$\iota^{-1}(v\otimes_A w) = (f^\alpha\ra v)\otimes_{A_\star}(f_\alpha\ra w)$.

In case of  $V\in{^{H,\ra}_A}\MMM_A$, $\iota$ is also a left $A_\star$-module homomorphism, for
all $a\in A$, $v\in V$ and $w\in W$,
\begin{flalign}
\nn \iota\bigl(a\star v\otimes_{A_\star} w\bigr) &= 
(\bar f^\alpha_1\bar f^\beta\ra a)\cdot \bigl((\bar f^\alpha_2\bar f_\beta\ra v)\otimes_A (\bar f_\alpha\ra w)  \bigr)\\
&= (\bar f^\alpha\ra a)\cdot \bar f_\alpha \ra \bigl((\bar f^\beta \ra v) \otimes_A (\bar f_\beta\ra w)\bigr) 
= a\star \iota(v\otimes_{A_\star} w)~.
\end{flalign}

\end{proof}

We obtain a quantization map $\widetilde{D}_\FF:\Con_A(V)\to\Con_{A_\star}(V_\star)$ for 
right module connections by composing 
$D_\FF(\nab):V_\star \to (V\otimes_A \Omega^1)_\star$ with the 
isomorphism $\iota^{-1}:(V\otimes_A\Omega^1)_\star \to V_\star\otimes_{A_\star}\Omega^1_\star$
according to the following commuting diagram
\begin{flalign}
\xymatrix{
 V_\star \ar[rrd]_-{D_\FF(\nab)}\ar[rr]^-{\widetilde{D}_\FF(\nab)} & & V_\star\otimes_{A_\star}\Omega^1_\star \\
 & & (V\otimes_A\Omega^1)_\star \ar[u]_-{\iota^{-1}}
}
\end{flalign}

\begin{theo}\label{theo:condef}
 Let $H$ be a Hopf algebra with twist $\FF\in H\otimes H$, $A\in{^{H,\ra}}\AAA$ be a left $H$-module
algebra, $\bigl(\Omega^\bullet,\wedge,\dd\bigr)$ a left $H$-covariant differential calculus 
and $V\in {^{H,\ra}}\MMM_A$ a left $H$-module right $A$-module.
Then the $\bfK$-linear map
\begin{flalign}
 \widetilde{D}_\FF:\Con_A(V)\to \Con_{A_\star}(V_\star)~,\quad \nab\mapsto 
\widetilde{D}_\FF(\nab)= \iota^{-1}\circ\bigl( D_\FF(\nab)\bigr) = \iota^{-1}\circ (\bar f^\alpha\RA \nab)\circ \bar f_\alpha\ra~
\end{flalign}
is an isomorphism.
\end{theo}
\begin{proof}
 The proof follows from a short calculation. Let $\nab\in\Con_A(V)$, then for all $v\in V$ and $a\in A$,
\begin{flalign}
\nn D_\FF(\nab)(v\star a) 
&= (\bar f^\alpha\RA\nab)\bigl((\bar f_{\alpha_1}\bar f^\beta\ra v)\cdot (\bar f_{\alpha_2}\bar f_\beta\ra a)\bigr)\\
\nn &=\bigl((\bar f^\alpha\RA\nab)(\bar f_{\alpha_1}\bar f^\beta\ra v)\bigr) \cdot (\bar f_{\alpha_2}\bar f_\beta\ra a)
+\epsilon(\bar f^\alpha)\,(\bar f_{\alpha_1}\bar f^\beta\ra v) \otimes_A \dd(\bar f_{\alpha_2}\bar f_\beta\ra a)~\\
\nn &=\bigl((\bar f^\alpha_1\bar f^\beta\RA\nab)(\bar f^{\alpha}_2\bar f_\beta\ra v)\bigr) \cdot (\bar f_{\alpha}\ra a)
+(\bar f^\beta\ra v) \otimes_A \dd(\bar f_\beta\ra a)\\
&= D_\FF(\nab)(v)\star a +(\bar f^\beta\ra v) \otimes_A (\bar f_\beta\ra \dd a)~.
\end{flalign}
In the second line we have used (\ref{eqn:Hcon}),
in the third line (\ref{eqn:twistprop}) and in the last line (\ref{eqn:equivar}) and (\ref{eqn:conadjoint}).
Applying $\iota^{-1}$ we obtain
\begin{flalign}
 \widetilde{D}_\FF(v\star a) = \widetilde{D}_\FF(\nab)(v)\star a +v\otimes_{A_\star} \dd a~.
\end{flalign}
$\widetilde{D}_\FF$ is invertible via the map $\nab_\star\mapsto 
\widetilde{D}_\FF^{-1}(\nab_\star) = \iota\circ D_\FF^{-1}(\nab_\star)$, since for all $\nab\in\Con_A(V)$,
\begin{flalign}
 \nn \widetilde{D}^{-1}_\FF\bigl(\widetilde{D}_\FF(\nab)\bigr) &= 
\iota\circ \bigl( f^\alpha\RA_\FF (\iota^{-1}\circ D_\FF(\nab)) \bigr)\circ f_\alpha\ra
=\iota\circ\iota^{-1}\circ \bigl( f^\alpha\RA_\FF D_\FF(\nab) \bigr)\circ f_\alpha\ra\\
&= D_\FF^{-1}\bigl(D_\FF(\nab)\bigr)=\nab~,
\end{flalign}
where for the second equality we have used that $\iota$ is a left $H^\FF$-module isomorphism, thus
$\xi\RA_\FF\iota^{-1}  =\epsilon(\xi)\,\iota^{-1}$, for all $\xi\in H$.

The property $\widetilde{D}_\FF^{-1}(\nab_\star)\in\Con_A(V)$,
for all $\nab_\star\in\Con_{A_\star}(V_\star)$, follows from dequantization of $H^\FF$ and all its modules 
via the twist $\mathcal{F}^{-1}$.

\end{proof}
An analogous statement holds true for the quantization of connections on left $H$-module left $A$-modules.
\begin{theo}\label{theo:condefleft}
 Let $H$ be a Hopf algebra with twist $\FF\in H\otimes H$, $A\in{^{H,\ra}}\AAA$ be a left $H$-module
algebra, $\bigl(\Omega^\bullet,\wedge,\dd\bigr)$ a left $H$-covariant differential calculus 
and $V\in {^{H,\ra}_A}\MMM$ a left $H$-module left $A$-module.
Then the $\bfK$-linear map
\begin{flalign}
 \widetilde{D}^\cop_\FF:{_A}\Con(V)\to {_{A_\star}}\Con(V_\star)~,\quad \nab\mapsto \widetilde{D}^\cop_\FF(\nab)
=\iota^{-1}\circ \bigl(D_\FF^\cop(\nab)\bigr)=\iota^{-1}\circ(\bar f_\alpha\RA^\cop \nab)\circ \bar f^\alpha\ra~
\end{flalign}
is an isomorphism.
\end{theo}
Since the proof is analogous to Theorem \ref{theo:condef} we can omit it here.

%%%%%%%%%%%%%%%%%%%%%%%%%%%%%%%%%%%%%%%%%%%%%%%%%%%%%%%

\section{Quasi-commutative algebras and bimodules}
Let $(H,R)$ be a triangular Hopf algebra and $A\in{^{H,\ra}}\AAA$ be quasi-commutative.
We obtained in Theorem \ref{theo:qclriso} that the map $D_R$, i.e.~the quantization isomorphism
with $\FF=R$, provides an isomorphism of right and left 
module homomorphisms between quasi-commutative bimodules.
The aim of this section is to prove a similar statement for right and left connections
on a quasi-commutative bimodule $V\in {^{H,\ra}_A}\MMM_A$.

\begin{theo}\label{theo:leftrightconiso}
  Let $(H,R)$ be a triangular Hopf algebra and $A\in{^{H,\ra}}\AAA$, $V\in {^{H,\ra}_A}\MMM_A$ be quasi-commutative. 
Let further $\bigl(\Omega^\bullet,\wedge,\dd\bigr)$ be a graded quasi-commutative left $H$-covariant differential calculus over $A$.
Then the $\bfK$-linear map
\begin{flalign}
 \widetilde{D}_R: \Con_A(V)\to {_A}\Con(V)~,\quad \nab\mapsto \widetilde{D}_R(\nab) =\tau_R \circ  \bigl(D_R(\nab)\bigr)
\end{flalign}
is an isomorphism, where $D_R$ is the quantization isomorphism with $\FF=R$ and 
$\tau_R: V\otimes_A\Omega^1 \mapsto \Omega^1\otimes_A V$ is the $R$-flip map, i.e.~$\tau_R(v\otimes_A\omega)=
(\bar R^\alpha\ra \omega)\otimes_A(\bar R_\alpha\ra v)$.

\end{theo}
\begin{proof}
 By Lemma \ref{lem:copRdef} we have $A_\star = A^\op$, $\Omega_\star = \Omega^\op$ and $V_\star=V^\op$.
 Using this and Theorem \ref{theo:condef} we obtain, for all $\nab\in\Con_A(V)$, $a\in A$ and $v\in V$,
\begin{flalign}
 \nn D_R(\nab)(a\cdot v) &= D_R(\nab)(v\star a) = D_R(\nab)(v)\star a + (\bar R^\alpha\ra v)\otimes_A\dd (\bar R_\alpha\ra a)\\
&= a\cdot D_R(\nab)(v) + (\bar R^\alpha\ra v)\otimes_A(\bar R_\alpha\ra \dd a)~.
\end{flalign}
Applying the bimodule isomorphism $\tau_R$ and using triangularity $R_{21}=R^{-1}$ we find
\begin{flalign}
 \widetilde{D}_R(\nab)(a\cdot v) = a\cdot \widetilde{D}_R(\nab)(v) + \dd a\otimes_A v~.
\end{flalign}
Thus, $\widetilde{D}_R(\nab)\in{_A}\Con(V)$. The map $\widetilde{D}_R$ is invertible via
\begin{flalign}
 \widetilde{D}_R^{-1}: {_A}\Con(V)\to \Con_A(V)~,\quad \nab\mapsto \widetilde{D}_R^{-1}(\nab) =
 \tau_R\circ \bigl(D_R^{-1}(\nab)\bigr)~,
\end{flalign}
where $D_R^{-1}(\nab) = D_R^\cop(\nab)=(\bar R_\alpha\RA^\cop \nab)\circ\bar R^\alpha\ra$.
(Note that $\tau_R = \tau_R^{-1}$ since $H$ is triangular.)

\end{proof}

%%%%%%%%%%%%%%%%%%%%%%%%%%%%%%%%%%%%%%%%%%%%%%%%%%%%%%%

\section{Extension to product modules}
Motivated by the studies on product module homomorphisms in Chapter \ref{chap:modhom}, Section \ref{sec:productmodhom},
we now investigate the extension of connections to product modules.
This is of major importance for noncommutative gravity, since it provides a construction principle
for connections on deformed tensor fields in terms of a fundamental connections on the deformed vector fields or one-forms.
For the construction presented below it is essential to assume a triangular Hopf algebra 
and quasi-commutative algebras and modules.

\begin{theo}\label{theo:conplus}
 Let $(H,R)$ be a triangular Hopf algebra and $A\in{^{H,\ra}}\AAA$, $W\in {^{H,\ra}_A}\MMM_A$ be quasi-commutative. Let further
 $\bigl(\Omega^\bullet,\wedge,\dd\bigr)$ be a graded quasi-commutative left $H$-covariant differential calculus over $A$ 
and $V\in{^{H,\ra}}\MMM_A$, $\nab_V\in \Con_A(V)$ and $\nab_W\in\Con_A(W)$. Then the $\bfK$-linear map
$\nab_V \oplus_R \nab_W : V\otimes_A W\to V\otimes_A W\otimes_A\Omega^1$ defined by
\begin{flalign}
\label{eqn:tensorcon}
 \nab_V \oplus_R \nab_W := \tau_{R\,23}\circ(\nab_V\otimes_R\id) + \id\otimes_R\nab_W
\end{flalign}
is a connection on $V\otimes_A W\in{^{H,\ra}}\MMM_A$. Here $\tau_{R\,23}=\id\otimes\tau_R$ is
the $R$-flip map acting on the second and third leg of the tensor product and
 the $R$-tensor product of $\bfK$-linear maps was defined in (\ref{eqn:Rtensor}).
\end{theo}
\begin{proof}
Firstly, note that $\nab_V\otimes_R\id =\nab_V\otimes \id$, since $\id$ is $H$-invariant.
 We have to show that (\ref{eqn:tensorcon}) satisfies the right Leibniz rule (\ref{eqn:rightcon})
and that it is compatible with $A$-middle linearity on $V\otimes_A W$. 
For the proof we denote both connections $\nab_V$ and $\nab_W$ by $\nab$ in order to simplify the notation.
We start with the former property, for all $v\in V$, $w\in W$ and $a\in A$,
\begin{flalign}
\nn (\nab\oplus_R\nab)(v\otimes_A w\cdot a) &= \tau_{R\,23}\bigl((\nab v)\otimes_A w\cdot a\bigr)
+ (\bar R^\alpha\ra v)\otimes_A (\bar R_\alpha\RA\nab)(w\cdot a)\\
\nn&=(\nab\oplus_R\nab)(v\otimes_A w)\cdot a + \epsilon(\bar R_\alpha)~(\bar R^\alpha\ra v)\otimes_A w\otimes_A \dd a\\
&=(\nab\oplus_R\nab)(v\otimes_A w)\cdot a + v\otimes_A w\otimes_A \dd a~,
\end{flalign}
where in the second line we have used that $\tau_R$ is a bimodule isomorphism and (\ref{eqn:Hcon}),
and in the last line the normalization property of the $R$-matrix (\ref{eqn:Rproper31})

We now prove compatibility of (\ref{eqn:tensorcon}) with $A$-middle linearity,
for all $v\in V$, $w\in W$ and $a\in A$,
\begin{flalign}
\nn  (\nab\oplus_R\nab)&(v\cdot a\otimes_A w) = \tau_{R\,23}\bigl(\nab (v\cdot a)\otimes_A w\bigr)
+ (\bar R^\alpha\ra (v\cdot a))\otimes_A (\bar R_\alpha\RA\nab)(w)\\
\nn &= \tau_{R\,23}\bigl((\nab v)\otimes_A a\cdot w + v\otimes_A \dd a\otimes_A w\bigr)
+ (\bar R^\alpha\ra v) \otimes_A (\bar R^\beta\ra a)\cdot (\bar R_\beta \bar R_\alpha\RA\nab)(w)\\
\nn&=\tau_{R\,23}\bigl((\nab v)\otimes_A a\cdot w\bigr)
+ (\bar R^\alpha\ra v) \otimes_A (\bar R_\alpha\RA\nab)(a\cdot w)\\
&= (\nab\oplus_R\nab)(v\otimes_A a\cdot w)~,
\end{flalign}
where in the second line we have used the right Leibniz rule (\ref{eqn:rightcon}) and the property 
(\ref{eqn:Rproper2}) of the $R$-matrix. The third line follows by using quasi-commutativity, triangularity,
the properties (\ref{eqn:Rproper2}) and (\ref{eqn:Rproper3}) of the $R$-matrix and (\ref{eqn:Hcon}).
(Hint: It is easier to show that line two follows from line three.)

\end{proof}

The Hopf algebra $H$ acts on the connection $\nab_V\oplus_R \nab_W$ via the adjoint action $\RA$.
We obtain for all $\xi\in H$
\begin{flalign}
 \nn\xi\RA (\nab_V\oplus_R\nab_W) &= \xi\RA\bigl(\tau_{R\,23}\circ(\nab_V\otimes_R\id) + \id\otimes_R\nab_W\bigr)\\
\nn&=\tau_{R\,23}\circ\bigl((\xi\RA\nab_V)\otimes_R\id\bigr) + \id\otimes_R(\xi\RA\nab_W) \\
&= (\xi\RA\nab_V) \oplus_R (\xi\RA\nab_W)~,
\end{flalign}
where we have used that $\tau_R$ and $\id$ are invariant under $H$ and (\ref{eqn:RtensorHmod}).

The sum $\oplus_R$ of connections canonically 
extends to arbitrary tensor products and 
is compatible with the bimodule isomorphism $\tau_R$
due to the following
\begin{theo}\label{theo:tensorcon}
  Let $(H,R)$ be a triangular Hopf algebra and $A\in{^{H,\ra}}\AAA$, $W,Z\in {^{H,\ra}_A}\MMM_A$ be quasi-commutative.
 Let further $\bigl(\Omega^\bullet,\wedge,\dd\bigr)$ be a graded quasi-commutative left $H$-covariant differential calculus over $A$
and $V\in{^{H,\ra}}\MMM_A$, $\nab_V\in \Con_A(V)$, $\nab_W\in\Con_A(W)$ and $\nab_Z\in\Con_A(Z)$.
Then 
\begin{flalign}\label{eqn:tensorconass}
 \bigl(\nab_V\oplus_R \nab_W\bigr)\oplus_R \nab_Z = \nab_V\oplus_R\bigl(\nab_W\oplus_R\nab_Z\bigr)
\end{flalign}
and
\begin{flalign}\label{eqn:tensorconcom}
 \nab_Z\oplus_R \nab_W = \tau_{R\,12}\circ\bigl(\nab_W\oplus_R\nab_Z\bigr)\circ\tau_R~,
\end{flalign}
where $\tau_{R\,12}=\tau_R\otimes\id$ is the $R$-flip map acting on the first and second leg of the tensor product.
\end{theo}
\begin{proof}
We denote by $\tau_{R\,i~i+1} = \id\otimes\dots\otimes\tau_R\otimes\dots\otimes\id$ the
$R$-flip map acting on the $i$-th and $(i+1)$-th leg of a tensor product.
The $R$-flip map exchanging the $i$-th leg with the $(i+1)$-th and $(i+2)$-th leg is denoted by
$\tau_{R\,i~(i+1~i+2)}$ and similarly $\tau_{R\,(i~i+1)~i+2}$ is the $R$-flip map exchanging
the $i$-th and $(i+1)$-th leg with the $(i+2)$-th leg.
For example, $\tau_{R\,(12)3}(a\otimes b\otimes c) = (\bar R^\alpha\ra c) \otimes \bar R_\alpha\ra (a\otimes b)$
and $\tau_{R\,1(23)}(a\otimes b \otimes c) = \bar R^\alpha\ra(b\otimes c)\otimes(\bar R_\alpha\ra a)$.
 Let us first show the associativity property (\ref{eqn:tensorconass}). To simplify notation
we denote all connections by $\nab$.
\begin{flalign}
\nn (\nab\oplus_R\nab)\oplus_R \nab &= \tau_{R\,34}\circ \bigl((\nab\oplus_R\nab)\otimes_R\id\bigr) 
+ \id\otimes_R \id \otimes_R\nab\\
\nn&=\tau_{R\,34}\circ \bigl(\tau_{R\,23}\circ (\nab\otimes_R\id\otimes_R\id) + \id\otimes_R\nab\otimes_R\id \bigr) 
+ \id\otimes_R \id \otimes_R\nab\\
\nn &=\tau_{R\,34}\circ \tau_{R\,23}\circ (\nab\otimes_R\id\otimes_R\id) + \tau_{R\,34}\circ (\id\otimes_R\nab\otimes_R\id ) 
+ \id\otimes_R \id \otimes_R\nab\\
\nn &=\tau_{R\,2 (34)}\circ (\nab\otimes_R\id\otimes_R\id) + \id\otimes_R \bigl(\tau_{R\,23}\circ (\nab\otimes_R\id) + \id\otimes_R \nab \bigr) \\
\nn&=\tau_{R\,2 (34)}\circ (\nab\otimes_R\id\otimes_R\id) + \id\otimes_R (\nab\oplus_R\nab ) \\
&= \nab\oplus_R(\nab\oplus_R\nab)~.
\end{flalign}
We have frequently used (\ref{eqn:Rtensorcirc}) together with the fact that
$\id$ and $\tau_R$ are $H$-invariant. In line four we have also used
$\tau_{R\,2(34)} = \tau_{R\,34}\circ\tau_{R\,23}$, which follows from the properties 
of the $R$-matrix (\ref{eqn:Rproper2}).

We now show the property (\ref{eqn:tensorconcom}). The strategy is to express all $R$-tensor products
by usual ones using (\ref{eqn:Rtensorsimple}). Note that $\tau_R^{-1}=\tau_R$ for a triangular Hopf algebra.
\begin{flalign}
\nn \nab_Z\oplus_R\nab_W &= \tau_{R\,23}\circ (\nab_Z\otimes\id) + \id\otimes_R \nab_W\\
\nn&=\tau_{R\,23}\circ (\nab_Z\otimes\id) + \tau_{R\,(12)3} \circ (\nab_W\otimes \id)\circ \tau_R\\
\nn&=\tau_{R\,23}\circ (\nab_Z\otimes\id) + \tau_{R\,12}\circ\tau_{R\,23} \circ (\nab_W\otimes \id)\circ \tau_R\\
\nn &=\tau_{R\,12}\circ \bigl( \tau_{R\,12}\circ\tau_{R\,23}\circ(\nab_Z\otimes\id)\circ\tau_R + \tau_{R\,23}\circ (\nab_W\otimes\id)  \bigr)\circ\tau_R\\
&=\tau_{R\,12}\circ (\nab_{W}\oplus_R\nab_Z)\circ\tau_R~.
\end{flalign}
In line three we have used $\tau_{R\,(12)3} = \tau_{R\,12}\circ\tau_{R\,23}$, which follows from the properties 
of the $R$-matrix (\ref{eqn:Rproper2}).

\end{proof}

It remains to study the quantization of the sum of two connections (\ref{eqn:tensorcon}).
By construction, $D_\FF(\nab_V\oplus_R\nab_W):(V\otimes_A W)_\star \to (V\otimes_A W\otimes_A\Omega^1)_\star$
is a $\bfK$-linear map. Analogously to Lemma \ref{lem:iotaiso} there is a left $H^\FF$-module
right $A_\star$-module isomorphism $\iota_{123}: V_\star\otimes_{A_\star} W_\star\otimes_{A_\star} \Omega^1_\star
\to (V\otimes_A W\otimes_A \Omega^1)_\star$ given by the following commuting diagram\footnote{
The indices on $\iota$ label the legs it acts on.
}
\begin{flalign}\label{eqn:higheriota}
 \xymatrix{
V_\star\otimes_{A_\star} W_\star\otimes_{A_\star} \Omega^1_\star \ar[drr]_-{\iota_{123}}\ar[d]_-{\iota_{23}}\ar[rr]^-{\iota_{12}} & &(V\otimes_A W)_\star\otimes_{A_\star} \Omega^1_\star \ar[d]^-{\iota_{(12)3}}\\
V_\star\otimes_{A_\star} (W\otimes_A \Omega^1)_\star \ar[rr]_-{\iota_{1(23)}} & &(V\otimes_A W\otimes_A \Omega^1)_\star
}
\end{flalign}

\begin{theo}\label{theo:prodcondef}
  Let $(H,R)$ be a triangular Hopf algebra with twist $\FF\in H\otimes H$ and 
$A\in{^{H,\ra}}\AAA$, $W\in {^{H,\ra}_A}\MMM_A$ be quasi-commutative. Let further 
 $\bigl(\Omega^\bullet,\wedge,\dd\bigr)$ be a graded quasi-commutative left $H$-covariant differential calculus 
and $V\in{^{H,\ra}}\MMM_A$,
$\nab_V\in \Con_A(V)$ and $\nab_W\in\Con_A(W)$. Then the following diagram commutes
\begin{flalign}\label{eqn:quantprodcon}
\xymatrix{
V_\star \otimes_{A_\star} W_\star \ar[rrr]^-{\widetilde{D}_\FF(\nab_V)\oplus_{R^\FF} \widetilde{D}_\FF(\nab_W)} \ar[d]_-{\iota}& & &V_\star \otimes_{A_\star} W_\star \otimes_{A_\star} \Omega^1_\star\\
(V\otimes_A W)_\star \ar[rrr]_-{D_\FF(\nab_V\oplus_R\nab_W)} & & & (V\otimes_A W\otimes_A\Omega^1)_\star \ar[u]_-{\iota_{123}^{-1}}
} 
\end{flalign}
\end{theo}
\begin{proof}
 To simplify notation we denote all connections by $\nab$. Using $\bfK$-linearity
of $D_\FF$ and that $\tau_R$ and $\id$ are invariant under $H$, we obtain
\begin{flalign}
 D_\FF(\nab\oplus_R\nab) = D_\FF\bigl(\tau_{R\,23}\circ (\nab\otimes_R \id) + \id\otimes_R \nab\bigr)
= \tau_{R\,23}\circ D_\FF(\nab\otimes_R \id) + D_\FF(\id\otimes_R\nab)~.
\end{flalign}
Theorem \ref{theo:promodhomdef} and $H$-invariance of $\id$ implies
\begin{flalign}
  \nn D_\FF(\nab\oplus_R\nab) &=  \tau_{R\,23}\circ \iota_{(12)3} \circ \bigl(D_\FF(\nab)\otimes_{R^\FF}\id\bigr)\circ \iota^{-1}+
 \iota_{1(23)}\circ \bigl(\id\otimes_{R^\FF}D_\FF(\nab)\bigr)\circ \iota^{-1}~\\
&=\tau_{R\,23}\circ \iota_{(12)3}\circ \iota_{12} \circ \bigl(\widetilde{D}_\FF(\nab)\otimes_{R^\FF}\id\bigr)\circ \iota^{-1}+
 \iota_{1(23)}\circ\iota_{23}\circ \bigl(\id\otimes_{R^\FF}\widetilde{D}_\FF(\nab)\bigr)\circ \iota^{-1}~,
\end{flalign}
where in the last line we have also used (\ref{eqn:Rtensorcirc}) and that $\iota$ and $\id$ are $H$-invariant.

We now compose this with $\iota$ from the right and with $\iota_{123}^{-1}$ from the left. Using the diagram (\ref{eqn:higheriota}), 
$\tau_{R\,23}\circ\iota_{1(23)}=\iota_{1(23)}\circ\tau_{R\,23}$ and $\tau_{R^\FF\,23} = \iota_{23}^{-1}\circ\tau_{R\,23}\circ\iota_{23}$,
we find
\begin{flalign}
 \nn \iota_{123}^{-1} \circ D_\FF(\nab\oplus_R\nab) \circ \iota &= 
\tau_{R^\FF\,23}\circ \bigl(\widetilde{D}_\FF(\nab)\otimes_{R^\FF}\id\bigr) + \bigl(\id\otimes_{R^\FF}\widetilde{D}_\FF(\nab)\bigr)\\
&= \widetilde{D}_\FF(\nab)\oplus_{R^\FF} \widetilde{D}_\FF(\nab)~.
\end{flalign}

\end{proof}

%%%%%%%%%%%%%%%%%%%%%%%%%%%%%%%%%%%%%%%%%%%%%%%%%%%%%%%

\section{Extension to the dual module}
It remains to provide an extension of connections $\nab\in\Con_A(V)$ to the dual module
$V^\prime=\Hom_A(V,A)$. This construction is of major importance in noncommutative gravity, since 
given a connection on e.g.~vector fields $\Xi$, we have to know how to obtain from it a
connection on one-forms $\Omega^1$, the dual of $\Xi$. We focus only on right connections $\Con_A(V)$
and the right dual $V^\prime=\Hom_A(V,A)$ of a module $V$, since  left connections and left duals 
follow from a mirror construction as above.

For the discussion we have to assume that the module $V$ is finitely generated and projective, 
a property which is also required in every other approach to connections on modules known to us, see
e.g.~\cite{1038.58004,Madore:2000aq}.
For describing noncommutative vector bundles this restriction is motivated by the Serre-Swan theorem 
\cite{0067.16201,0109.41601}, stating that vector bundles over commutative, smooth and compact manifolds
are equivalent to finitely generated and projective modules over the algebra of smooth functions.
 Before defining and discussing finitely generated and projective modules in detail, 
let us briefly explain why this property is essential for describing dual connections.
Using a connection $\nab\in \Con_A(V)$ and the differential $\dd$ on $A$, we can induce a $\bfK$-linear
map $V^\prime\to\Hom_A(V,\Omega^1)$ by $v^\prime \mapsto \dd\circ v^\prime - \wedge\circ
(v^\prime\otimes\id) \circ \nab$.
For a finitely generated and projective module $V$ there is an isomorphism $\Hom_A(V,\Omega^1)\simeq \Omega^1\otimes_A V^\prime$.
Composing the induced map $V^\prime\to\Hom_A(V,\Omega^1)$ with this isomorphism we obtain a $\bfK$-linear map
$V^\prime \to \Omega^1\otimes_A V^\prime$. We are going to prove in detail that this map is a left connection on $V^\prime$.
Using Theorem \ref{theo:leftrightconiso} we can induce from this left connection a right connection on $V^\prime$
in the quasi-commutative setting.

There are various equivalent definitions of a finitely generated and projective module over a ring, see e.g.~the book
\cite{0911.16001}. We do not have to go into the details and use the very convenient 
characterization of this type of module in terms of a pair of dual bases.
\begin{lem}[Dual Basis Lemma]
 Let $A\in\AAA$ be a unital and associative algebra. A right $A$-module $V\in\MMM_A$ is
finitely generated and projective, if and only if there exists a family of elements
$\lbrace v_i\in V:i=1,\dots,n\rbrace$ and $A$-linear functionals $\lbrace v_i^\prime\in V^\prime=\Hom_A(V,A):i=1,\dots,n \rbrace$
with $n\in\bbN$, such that for any $v\in V$ we have
\begin{flalign}
 v=\sum\limits_{i=1}^n v_i\cdot v_i^\prime(v)~.
\end{flalign}

\noindent Analogously, a left $A$-module $V\in{_A}\MMM$ is
finitely generated and projective, if and only if there exists a family of elements
$\lbrace v_i\in V:i=1,\dots,n\rbrace$ and $A$-linear functionals $\lbrace v^\prime_i \in {^\prime}V={_A}\Hom(V,A):i=1,\dots,n \rbrace$
with $n\in\bbN$, such that for any $v\in V$ we have
\begin{flalign}
 v=\sum\limits_{i=1}^n v^\prime_i(v)\cdot v_i~.
\end{flalign}
\end{lem}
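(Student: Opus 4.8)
The plan is to prove both implications directly, by translating the existence of a dual basis into the statement that a canonical surjection from a finitely generated free module onto $V$ splits. I shall carry out the argument for the right $A$-module case in full; the left $A$-module statement then follows either by the identical argument with all actions reversed, or by the opposite-algebra/opposite-module dictionary used repeatedly in this part (replace $A$ by $A^\op$, $V$ by $V^\op$, which interchanges $\MMM_A$ with ${_{A^\op}}\MMM$ and $\Hom_A(V,A)$ with ${_{A^\op}}\Hom(V^\op,A^\op)$).

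First I would establish the implication ``dual basis $\Rightarrow$ finitely generated and projective''. Assume we are given $\{v_i\in V\}_{i=1}^n$ and $\{v_i^\prime\in V^\prime=\Hom_A(V,A)\}_{i=1}^n$ with $v=\sum_i v_i\cdot v_i^\prime(v)$ for all $v\in V$. The relation shows at once that $v_1,\dots,v_n$ generate $V$ as a right $A$-module, so $V$ is finitely generated. For projectivity, let $A^n$ be the free right $A$-module with standard generators $e_1,\dots,e_n$, and define the right $A$-linear surjection $\pi\colon A^n\to V$ by $\pi(e_i)=v_i$, together with the right $A$-linear map $\sigma\colon V\to A^n$, $\sigma(v)=\sum_i e_i\cdot v_i^\prime(v)$; here right $A$-linearity of $\sigma$ is exactly the right $A$-linearity of each $v_i^\prime$. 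The dual basis relation gives $\pi\circ\sigma=\id_V$, so $\sigma$ is a section of $\pi$ and $V$ is a direct summand of the free module $A^n$, hence projective.

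For the converse, if $V$ is finitely generated then there is a right $A$-linear surjection $\pi\colon A^n\to V$ for some $n$; put $v_i:=\pi(e_i)$. If $V$ is moreover projective, then $\pi$ admits a right $A$-linear section $\sigma\colon V\to A^n$ with $\pi\circ\sigma=\id_V$. Composing $\sigma$ with the $i$-th coordinate projection $A^n\to A$ yields right $A$-linear functionals $v_i^\prime\in\Hom_A(V,A)=V^\prime$ such that $\sigma(v)=\sum_i e_i\cdot v_i^\prime(v)$ for all $v\in V$. Applying $\pi$ and using $\pi\circ\sigma=\id_V$ gives $v=\sum_i v_i\cdot v_i^\prime(v)$, the desired dual basis relation.

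There is no genuine obstacle in this argument: it is the standard characterisation of finitely generated projective modules, and the only point requiring care is the bookkeeping of the module-side conventions. One must use the \emph{free right} module $A^n$, with scalars acting on the right, so that $\pi$ and $\sigma$ are truly right $A$-linear and the functionals $v_i^\prime$ land in $\Hom_A(V,A)=V^\prime$ rather than in ${_A}\Hom(V,A)$. Tracking this distinction is exactly what makes the left-module half of the statement come out with ${_A}\Hom(V,A)$ in place of $\Hom_A(V,A)$, and it is handled uniformly by the mirror construction.
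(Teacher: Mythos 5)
Your proof is correct and complete. The paper itself does not prove this lemma — it only cites a standard reference — and your argument (translating the dual basis relation into a splitting $\pi\circ\sigma=\id_V$ of a surjection from the free right module $A^n$, and conversely reading off the $v_i^\prime$ from the coordinates of a section) is exactly the standard proof one finds there, with the right/left module bookkeeping handled correctly.
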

\noindent A proof of this lemma can be found e.g.~in \cite{0911.16001}.
The set $\lbrace v_i,v^\prime_i:i=1,\dots,n \rbrace$ is loosely referred to ``pair of dual bases'' for $V$,
even though $\lbrace v_i\rbrace$ is just a generating set of $V$ and not a basis.

Finitely generated and projective modules enjoy the following important properties.
\begin{propo}
 Let $V\in\MMM_A$ be a finitely generated and projective right $A$-module with
a pair of dual bases $\lbrace v_i,v_i^\prime:i=1,\dots,n \rbrace$. For any $v\in V$ let
$v^{\prime\prime} \in V^{\prime\prime}:= {_A}\Hom(V^\prime,A)$ be defined by
$v^{\prime\prime}(v^\prime) := v^\prime(v)$, for all $v^\prime \in V^\prime$. We have
\begin{enumerate}
 \item $\lbrace v_i^\prime,v_i^{\prime\prime}:i=1,\dots,n\rbrace$ is a pair of dual bases for $V^\prime$
\item $V^\prime$ is a finitely generated and projective left $A$-module
\item The natural map $V\to V^{\prime\prime}\,,~v\mapsto v^{\prime\prime}$ is an isomorphism
of right $A$-modules
\end{enumerate}
 
\end{propo}
\begin{proof}
 {\it Proof of (1):}\\
We have to show that $v^\prime = \sum_{i=1}^n v_i^{\prime\prime}(v^\prime)\cdot v_i^\prime$, for all $v^\prime\in V^\prime$.
Evaluating the right hand side on an arbitrary $v\in V$ we obtain
\begin{flalign}
 \sum\limits_{i=1}^n v^{\prime\prime}_i(v^\prime) \,v_i^{\prime}(v)=
\sum\limits_{i=1}^n v^\prime(v_i) \,v_i^{\prime}(v) = v^\prime\left(\sum\limits_{i=1}^n v_i\cdot v_i^\prime(v)\right) = v^\prime(v)~.
\end{flalign}

\noindent {\it Proof of (2):}\\
Follows directly from the dual basis lemma.

\noindent {\it Proof of (3):}\\
The map $V\to V^{\prime\prime}\,,~v\mapsto v^{\prime\prime}$ is injective, since let
$0=v^{\prime\prime}$ we have $v^{\prime\prime}(v^\prime) = v^\prime(v) =0$, for all
$v^\prime\in V^\prime$, and thus by the dual basis lemma
$v=\sum_{i=1}^n v_i\cdot v_i^\prime(v) = 0$.
It is also surjective, since $\lbrace v_i^{\prime\prime}\rbrace$ is a generating set of $V^{\prime\prime}$.

\end{proof}

We now come to the property of finitely generated and projective modules which is most relevant for our studies.
\begin{propo}\label{propo:projhomiso}
 Let $V\in \MMM_A$ be a finitely generated and projective right $A$-module and let $W\in\MMM_A$. 
Then there exists an isomorphism $\varphi: W\otimes_A V^\prime \to \Hom_A(V,W)$.

\noindent If in addition $V,W\in{_A}\MMM_A$ are $A$-bimodules, then $\varphi$
is an isomorphism between the $A$-bimodules $W\otimes_A V^\prime\in{_A}\MMM_A$ and $\Hom_A(V,W)\in{_A}\MMM_A$.

\noindent For left $H$-module $A$-bimodules $V,W\in{^{H,\ra}_A}\MMM_A$, $\varphi$ is an isomorphism
between the left $H$-module $A$-bimodules $W\otimes_A V^\prime\in{^{H,\ra}_A}\MMM_A$ and $\Hom_A(V,W)\in{^{H,\RA}_A}\MMM_A$.
\end{propo}
\begin{proof}
We define a $\bfK$-linear map $\varphi:W\otimes_A V^\prime\to \Hom_A(V,W)$ by, for all $v\in V$,
$\bigl(\varphi(w\otimes_A v^\prime)\bigr)(v):= w\cdot v^\prime(v)$.
Employing the pair of dual bases this map is invertible via
\begin{flalign}\label{eqn:projhomiso}
\varphi^{-1}:\Hom_A(V,W)\to W\otimes_A V^\prime~,\quad P\mapsto \varphi^{-1}(P) = \sum\limits_{i=1}^n P(v_i)\otimes_A v_i^\prime~. 
\end{flalign}
Indeed, we obtain for all $v\in V$ and $P\in \Hom_A(V,W)$,
\begin{flalign}
 \nn \bigl(\varphi(\varphi^{-1}(P))\bigr)(v) &= \sum\limits_{i=1}^n \bigl(\varphi(P(v_i)\otimes_A v_i^\prime) \bigr)(v)
= \sum\limits_{i=1}^n P(v_i)\cdot v_i^\prime(v) \\
&= P\left(\sum\limits_{i=1}^n v_i\cdot v_i^\prime(v)\right) = P(v)~,
\end{flalign}
and for all $w\in W$ and $v^\prime\in V^\prime$
\begin{flalign}
  \varphi^{-1}\bigl(\varphi(w\otimes_A v^\prime)\bigr) &= \sum\limits_{i=1}^n w\cdot v^\prime(v_i)\otimes_A v_i^\prime 
= w\otimes_A \sum\limits_{i=1}^n v_i^{\prime\prime}(v^\prime)\cdot v_i^\prime = w\otimes_A v^\prime~.
\end{flalign}

Let now $V,W\in{_A}\MMM_A$ be $A$-bimodules, then $V^\prime,W\otimes_A V^\prime$ and $\Hom_A(V,W)$ are
also $A$-bimodules. The isomorphism $\varphi$ respects the bimodule structure, since for all $a,b\in A$,
$w\in W$, $v^\prime\in V^\prime$ and $v\in V$,
\begin{flalign}
\nn\bigl(\varphi(a\cdot w\otimes_A v^\prime \cdot b)\bigr)(v) &= a\cdot w \cdot v^\prime(b\cdot v) =  
a\cdot \bigl(\varphi(w\otimes_A v^\prime)\bigr)(b\cdot v) \\
&= \bigl(a\cdot \varphi(w\otimes_A v^\prime)\cdot b\bigr)(v)~.
\end{flalign}

Let $V,W\in{^{H,\ra}_A}\MMM_A$ be left $H$-module $A$-bimodules, then $V^\prime,W\otimes_A V^\prime$ and $\Hom_A(V,W)$ are
also left $H$-module $A$-bimodules. The isomorphism $\varphi$ respects the left $H$-module structure,
since for all $\xi\in H$, $w\in W$, $v^\prime\in V^\prime$ and $v\in V$,
\begin{flalign}
 \nn \bigl(\varphi(\xi\ra (w\otimes_A v^\prime))\bigr)(v) &= (\xi_1\ra w)\cdot (\xi_2\RA v^\prime)(v) 
= (\xi_1\ra w)\cdot \bigl(\xi_2\ra v^\prime(S(\xi_3)\ra v)\bigr)\\
&= \xi_1\ra\bigl(w\cdot v^\prime(S(\xi_2)\ra v)\bigr) = \bigl(\xi\RA \bigl(\varphi(w\otimes_A v^\prime)\bigr)\bigr)(v)~.
\end{flalign}

\end{proof}

This property of finitely generated and projective modules now allows us to induce a connection on the dual module
$V^\prime = \Hom_A(V,A)$, see also \cite{1038.58004,Madore:2000aq} for similar discussions. 
Let $A\in \AAA$ be a unital and associative algebra and let $\bigl(\Omega^\bullet,\wedge,\dd\bigr)$
be a differential calculus over $A$.
Let further $V\in \MMM_A$ be a finitely generated and projective right $A$-module.
Given a connection $\nab\in\Con_A(V)$ we define the following $\bfK$-linear map
\begin{flalign}\label{eqn:dualcon1}
 \nab^\Hom: V^\prime\to \Hom_A(V,\Omega^1)~,\quad v^\prime\mapsto \nab^\Hom(v^\prime) = \dd\circ v^\prime - \wedge\circ
(v^\prime\otimes\id) \circ \nab~.
\end{flalign}
In the second term we have canonically extended $v^\prime\in\Hom_A(V,A)$ to
 a right $A$-linear map $v^\prime\otimes\id: V\otimes_A\Omega^1\to A\otimes_A\Omega^1$ 
by $(v^\prime\otimes\id)(v\otimes_A\omega) := v^\prime(v)\otimes_A\omega$.
The map $\wedge: A\otimes_A \Omega^1 \to\Omega^1$ is the $A$-bimodule homomorphism associated to the
product in $\Omega^\bullet$, i.e.~$\wedge(a\otimes_A \omega) = a\wedge \omega = a\cdot \omega$.
The right $A$-linearity of $\nab^\Hom(v^\prime)\in\Hom_A(V,\Omega^1)$ is easily shown, for all
$a\in A$, $v^\prime\in V^\prime$ and $v\in V$,
\begin{flalign}
\nn\left(\nab^\Hom(v^\prime)\right)(v\cdot a) &= \dd\bigl(v^\prime(v\cdot a)\bigr) -  \wedge\Bigl((v^\prime\otimes\id)\bigl(\nab(v\cdot a) \bigr)\Bigr)\\
\nn&=\dd (v^\prime(v))\cdot a + v^\prime(v)\cdot \dd a - \wedge\Bigl((v^\prime\otimes\id)\bigl(\nab v \bigr)\Bigr)\cdot a - \wedge\Bigl((v^\prime\otimes\id)(v\otimes_A \dd a)\Bigr)\\
&=\left(\nab^\Hom(v^\prime)\right)(v) \cdot a~.
\end{flalign}

Employing the left $A$-module isomorphism $\varphi^{-1}:\Hom_A(V,\Omega^1)\to \Omega^1\otimes_A V^\prime$ (\ref{eqn:projhomiso})
we can induce a $\bfK$-linear map
$\nab^\prime :V^\prime \to \Omega^1\otimes_A V^\prime$ via the following diagram
\begin{flalign}
\label{eqn:dualcon2}
 \xymatrix{
V^\prime \ar[rr]^-{\nab^\Hom} \ar[rrd]_-{\nab^\prime}& &\Hom_A(V,\Omega^1)\ar[d]^-{\varphi^{-1}}\\
 & & \Omega^1\otimes_A V^\prime 
}
\end{flalign}
The map $\nab^\prime$ acting on $v^\prime\in V^\prime$ reads explicitly
\begin{flalign}
\label{eqn:nabprimeexplicit}
 \nab^\prime(v^\prime) =\varphi^{-1}\left(\nab^\Hom(v^\prime)\right) 
= \sum\limits_{i=1}^n\Bigl(\dd (v^\prime(v_i)) - \wedge\bigl((v^\prime\otimes\id)(\nab v_i)\bigr) \Bigr)\otimes_Av_i^\prime ~.
\end{flalign}

We can now show that $\nab^\prime$ is a connection on $V^\prime$, see also \cite{1038.58004,Madore:2000aq}.
\begin{theo}
In the hypotheses above,  the map $\nab^\prime$ (\ref{eqn:dualcon2}) is a connection on the left $A$-module $V^\prime\in{_A}\MMM$.
\end{theo}
\begin{proof}
Using (\ref{eqn:nabprimeexplicit}) we have for all $a\in A$ and $v^\prime\in V^\prime$
\begin{flalign}
 \nn \nab^\prime(a\cdot v^\prime) &= \sum\limits_{i=1}^n \Bigl(\dd(a\,v^\prime(v_i)) - a\cdot \wedge\bigl((v^\prime\otimes\id)(\nab v_i)\bigr)\Bigr)\otimes_A v_i^\prime\\
\nn &=\sum\limits_{i=1}^n \Bigl(\dd a \cdot v^\prime(v_i) + a\cdot \dd( v^\prime(v_i)) - a\cdot \wedge\bigl((v^\prime\otimes\id)(\nab v_i)\bigr)\Bigr)\otimes_A v_i^\prime\\
\nn&=\dd a\otimes_A \sum\limits_{i=1}^n v_i^{\prime\prime}(v^\prime)\cdot v_i^\prime + a\cdot \nab^\prime(v^\prime) \\
&= \dd a\otimes_A v^\prime + a\cdot \nab^\prime(v^\prime)~.
\end{flalign}

\end{proof}

This well-known theorem serves as a basis for discussing the twist quantization of dual connections.
For this we require the following technical 
\begin{lem}\label{lem:varphistar}
 Let $H$ be a Hopf algebra with twist $\FF\in H\otimes H$, $A\in{^{H,\ra}}\AAA$ and $V,W\in {^{H,\ra}}\MMM_A$.
Let further $V$ be finitely generated and projective. We denote by 
$\varphi:W\otimes_A V^\prime \to \Hom_A(V,W)$ the left $H$-module isomorphism of Proposition \ref{propo:projhomiso}
and by $\varphi_\star: W_\star \otimes_{A_\star} (V_\star)^\prime\to\Hom_{A_\star}(V_\star,W_\star)$ 
the $\bfK$-linear map defined by, for all $v\in V_\star$,
$\bigl(\varphi_\star(w\otimes_{A_\star}v_\star^\prime)\bigr)(v):= w\star v_\star^\prime(v)$.
 Then the following diagram commutes
\begin{flalign}\label{eqn:varphistar}
 \xymatrix{
W_\star\otimes_{A_\star} (V_\star)^\prime \ar[rr]^-{\varphi_\star} \ar[d]_-{\id\otimes D_\FF^{-1}} & &\Hom_{A_\star}(V_\star,W_\star)\ar[dd]^-{D_\FF^{-1}}\\
 W_\star\otimes_{A_\star} (V^\prime)_\star \ar[d]_-{\iota} & &\\
 (W\otimes_A V^\prime)_\star \ar[rr]_-{\varphi} & & (\Hom_A(V,W))_\star 
}
\end{flalign}
\end{lem}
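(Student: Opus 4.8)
The statement to be proven is that the diagram \eqref{eqn:varphistar} commutes, i.e.\ that the $\bfK$-linear map $\varphi_\star$ relating $W_\star\otimes_{A_\star}(V_\star)^\prime$ to $\Hom_{A_\star}(V_\star,W_\star)$ agrees with the undeformed isomorphism $\varphi$ once we transport everything along the quantization maps $D_\FF$, $D_\FF^{-1}$ and the canonical identification $\iota$. My plan is to chase an arbitrary element around the diagram and verify that both paths give the same answer; the key input will be the explicit formulae for $D_\FF$ on homomorphisms (Theorem \ref{theo:homodef}), the formula \eqref{eqn:Dinv} for $D_\FF^{-1}$, the definition of $\iota$ on tensor products (Lemma \ref{lem:iotaiso}), and the $2$-cocycle and normalization properties of the twist \eqref{eqn:twistprop}.

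\textbf{Key steps.} First I would fix $w\otimes_{A_\star}v_\star^\prime\in W_\star\otimes_{A_\star}(V_\star)^\prime$ and compute the image under the right-and-down path $D_\FF^{-1}\circ\varphi_\star$. By definition $\bigl(\varphi_\star(w\otimes_{A_\star}v_\star^\prime)\bigr)(v)=w\star v_\star^\prime(v)$, and applying $D_\FF^{-1}$ (on the homomorphism module $\Hom_{A_\star}(V_\star,W_\star)\to (\Hom_A(V,W))_\star$, which is the inverse of the map in Theorem \ref{theo:homodef}, hence $D_\FF^{-1}(P_\star)=(\bar f^\alpha\RA P_\star)\circ\chi_\FF\ldots$, carefully using Remark \ref{rem:Dinv} to identify $D_\FF^{-1}$ with the quantization isomorphism for the twist $\FF^{-1}$) I would rewrite everything in terms of the undeformed composition and the undeformed $A$-action. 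Second, I would compute the down-down-right path: start by applying $\id\otimes D_\FF^{-1}$ to pass from $(V_\star)^\prime$ to $(V^\prime)_\star$, then $\iota$ to land in $(W\otimes_A V^\prime)_\star$, and finally the undeformed $\varphi$. Here the main manipulations are: (i) $\iota(w\otimes_{A_\star}v^\prime)=(\bar f^\alpha\ra w)\otimes_A(\bar f_\alpha\RA v^\prime)$ with $v^\prime=D_\FF^{-1}(v_\star^\prime)$, and (ii) unwinding $\varphi$, which sends $w\otimes_A v^\prime\mapsto \bigl(v\mapsto w\cdot v^\prime(v)\bigr)$. Third, I would show the two resulting $\bfK$-linear maps $V_\star\to W_\star$ agree by evaluating on an arbitrary $v$ and reorganizing the twist legs: the $2$-cocycle identity \eqref{eqn:twistpropsimp3} (or \eqref{eqn:twistpropsimp1}) is exactly what is needed to collapse the nested actions of $\bar f^\alpha$, $\bar f_\alpha$, $\bar f^\beta$, $\bar f_\beta$ into the single $\star$-product $w\star v_\star^\prime(v)$, while the normalization \eqref{eqn:twistpropsimp2} handles the boundary terms. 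This is entirely parallel to the computation in the proof of Theorem \ref{theo:endodef} showing $D_\FF(l_a)=l_a^\star$, and to the compatibility of $D_\FF$ with the $A_\star$-bimodule structure; indeed the cleanest route may be to first prove the auxiliary fact that, under the identifications, $\varphi_\star$ corresponds to the ``evaluation/multiplication'' homomorphism whose quantization behaviour is already governed by Theorem \ref{theo:homodef} applied to $W$ viewed as $\Hom_A(A,W)$.

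\textbf{Main obstacle.} The genuine difficulty is bookkeeping: $\varphi_\star$ is defined intrinsically on the \emph{deformed} dual $(V_\star)^\prime=\Hom_{A_\star}(V_\star,A_\star)$, which a priori is not literally $(V^\prime)_\star$, so the map $\id\otimes D_\FF^{-1}$ must first be used to translate $v_\star^\prime$ into $(V^\prime)_\star$, and one has to be scrupulous that the right $A$-linearity used to push $A$-module elements through the dual pairing is the $\star$-linearity on one side and the undeformed linearity on the other. Keeping track of which $\RA$ vs.\ $\RA_\FF$ adjoint action is acting, and ensuring the functionals are evaluated before or after twisting consistently, is where a sign-free but delicate computation lives. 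I expect that once the diagram is reduced to an identity of the schematic form $(\bar f^\alpha\ra w)\cdot\bigl((\bar f_\alpha)_1\bar f^\beta\ra v^\prime\bigr)\bigl((\bar f_\alpha)_2\bar f_\beta\ra v\bigr)=w\star v_\star^\prime(v)$, it follows from \eqref{eqn:twistpropsimp3} exactly as in Lemma \ref{lem:iotaiso} and Theorem \ref{theo:homodef}; I would therefore structure the write-up so that this single twist-identity step is isolated and the rest is routine unwinding of definitions. As a consequence of this lemma, combined with Theorem \ref{theo:homodef}, one immediately obtains that dualization commutes with quantization even for finitely generated projective modules, and the construction of the dual connection $\nab^\prime$ intertwines correctly with $D_\FF$, which is the payoff we are after.
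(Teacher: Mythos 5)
Your proposal is correct and follows essentially the same route as the paper: a direct element chase of both paths of the diagram, using the explicit formula \eqref{eqn:Dinv} for $D_\FF^{-1}$, the definition of $\iota$, and the $2$-cocycle condition \eqref{eqn:twistpropsimp3} to show that both composites send $w\otimes_{A_\star}v_\star^\prime$ to the map $v\mapsto(\bar f^\alpha\ra w)\cdot\bigl(\bar f_\alpha\RA D_\FF^{-1}(v_\star^\prime)\bigr)(v)$. The single twist-identity step you isolate is exactly where the paper invokes the cocycle condition, so no further changes are needed.
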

\begin{proof}
 Using the explicit expression for $D_\FF^{-1}$ (\ref{eqn:Dinv}), we find when going the upper path,
for all $v\in V_\star$,
\begin{flalign}
\nn \Bigl(D_\FF^{-1}\bigl(\varphi_\star(w\otimes_{A_\star}v_\star^\prime)\bigr)\Bigr)(v) &= \bar f^\alpha\ra \Bigl(w\star v_\star^\prime \bigl(\chi S(\bar f_\alpha)\ra v\bigr)\Bigr)\\
\nn &= (\bar f^\alpha_1\bar f^\beta\ra w) \cdot \bar f^\alpha_2\bar f_\beta\ra v_\star^\prime\bigl(\chi S(\bar f_\alpha)\ra v\bigr)\\
\nn &= (\bar f^\alpha\ra w) \cdot \bar f_{\alpha_1}\bar f^\beta\ra v_\star^\prime\bigl(\chi S(\bar f_\beta) S(\bar f_{\alpha_2})\ra v\bigr)\\
&= (\bar f^\alpha\ra w) \cdot \bigl(\bar f_\alpha\RA D_\FF^{-1}(v_\star^\prime)\bigr)(v)~,
\end{flalign}
where in line three we have used the cocycle condition (\ref{eqn:twistpropsimp3}).
Following the lower path we find, for all $v\in V_\star$,
\begin{flalign}
\nn \Bigl(\varphi\Bigl(\iota\bigl(w\otimes_{A_\star}D_\FF^{-1}(v_\star^\prime)\bigr)  \Bigr)\Bigr) (v) 
&=\Bigl(\varphi\Bigl((\bar f^\alpha\ra w)\otimes_{A}\bigl(\bar f_\alpha\RA D_\FF^{-1}(v_\star^\prime)\bigr)  \Bigr)\Bigr) (v) \\
&=(\bar f^\alpha\ra w) \cdot \bigl(\bar f_\alpha\RA D_\FF^{-1}(v_\star^\prime)\bigr)(v)~.
\end{flalign}

\end{proof}

We now prove that quantizing the dual connection is equivalent to dualizing the quantized connection.
\begin{theo}\label{theo:dualcondef}
 Let $H$ be a Hopf algebra with twist $\FF\in H\otimes H$, $A\in{^{H,\ra}}\AAA$ and $V\in {^{H,\ra}}\MMM_A$.
Let further $\bigl(\Omega^\bullet,\wedge,\dd\bigr)$ be a left $H$-covariant differential calculus
and let $V$ be finitely generated and projective.
Then the following diagram commutes
\begin{flalign}
\xymatrix{
\Con_A(V) \ar[rr]^-{^\prime}\ar[d]_-{\widetilde{D}_\FF} & & {_A}\Con(V^\prime) \ar[d]^-{\widetilde{D}_\FF^\cop} \\
\Con_{A_\star}(V_\star) \ar[rr]_-{^\prime} & & {_{A_\star}}\Con((V_\star)^\prime)\simeq  {_{A_\star}}\Con((V^\prime)_\star)
} 
\end{flalign}

\end{theo}
\begin{proof}
Let $\nab\in\Con_A(V)$ be a connection on $V$. Following the upper path in the diagram, we first construct
its dual $\nab^\prime\in{_A}\Con(V^\prime)$ and then apply the quantization map $\widetilde{D}_\FF^\cop$
yielding the connection $\widetilde{D}_\FF^\cop(\nab^\prime)\in {_{A_\star}}\Con((V^\prime)_\star)$.
In particular, $\widetilde{D}_\FF^\cop(\nab^\prime)$ is a map 
$(V^\prime)_\star\to \Omega_\star^1\otimes_{A_\star} (V^\prime)_\star$.
Using that $\varphi$ is a left $H$-module isomorphism, we have
\begin{flalign}
 \widetilde{D}_\FF^\cop(\nab^\prime) = \iota^{-1} \circ D_\FF^\cop(\varphi^{-1}\circ\nab^\Hom) = 
\iota^{-1}\circ \varphi^{-1} \circ D_\FF^\cop(\nab^\Hom)~.
\end{flalign}
When acting on an element $v^\prime\in (V^\prime)_\star$ we get
\begin{flalign}
 \label{eqn:upper}\bigl(\widetilde{D}_\FF^\cop(\nab^\prime)\bigr)(v^\prime) = 
\iota^{-1}\Bigl(\varphi^{-1}\Bigl(\dd\circ_\star v^\prime - \wedge\circ_\star (v^\prime\otimes\id)\circ_\star\nab\Bigr)\Bigr)~.
\end{flalign}

Going the lower path,  we first quantize the connection $\nab$ yielding the
connection $\widetilde{D}_\FF(\nab)\in \Con_{A_\star}(V_\star)$ and then construct its dual
$\bigl(\widetilde{D}_\FF(\nab)\bigr)^\prime\in{_{A_\star}}\Con((V_\star)^\prime)$.
In particular, $\bigl(\widetilde{D}_\FF(\nab)\bigr)^\prime$ is a map
$(V_\star)^\prime\to \Omega^1_\star\otimes_{A_\star} (V_\star)^\prime$, which reads when acting on
$v_\star^\prime \in (V_\star)^\prime$
\begin{flalign}
 \label{eqn:lower}\bigl(\widetilde{D}_\FF(\nab)\bigr)^\prime(v_\star^\prime) = \varphi_\star^{-1}\Bigl(\dd\circ v_\star^\prime 
- \wedge_\star\circ (v_\star^\prime\otimes\id)\circ \widetilde{D}_\FF(\nab)\Bigr)~.
\end{flalign}

Since $(V^\prime)_\star\simeq (V_\star)^\prime$ via the map $D_\FF$, equality
of the connections $\widetilde{D}_\FF^\cop(\nab^\prime)$ and $\bigl(\widetilde{D}_\FF(\nab)\bigr)^\prime$
means that the following diagram has to commute
\begin{flalign}
\xymatrix{
 (V^\prime)_\star \ar[rrr]^-{\widetilde{D}_\FF^\cop(\nab^\prime)} \ar[d]_-{D_\FF} & & &\Omega_\star^1\otimes_{A_\star} (V^\prime)_\star \ar[d]^-{\id\otimes D_\FF} \\
(V_\star)^\prime \ar[rrr]_-{\bigl(\widetilde{D}_\FF(\nab)\bigr)^\prime}  & & &\Omega_\star^1\otimes_{A_\star} (V_\star)^\prime 
}
\end{flalign}
Going the upper path, using (\ref{eqn:upper}) and acting with
 $\varphi_\star=D_\FF\circ\varphi\circ \iota\circ(\id\otimes D_\FF^{-1})$ (see (\ref{eqn:varphistar})) on the result, we obtain
\begin{flalign}
\nn \varphi_\star\Bigl((\id\otimes D_\FF)\bigl(\widetilde{D}_\FF^\cop(\nab^\prime)(v^\prime)\bigr)\Bigr) 
&=D_\FF\Bigl(\dd\circ_\star v^\prime -\wedge\circ_\star (v^\prime\otimes\id) \circ_\star \nab\Bigr) \\
&= \dd \circ D_\FF(v^\prime) -\wedge_\star \circ \iota^{-1}\circ D_\FF\bigl(v^\prime\otimes\id\bigr)\circ\iota \circ\widetilde{D}_\FF(\nab)~,
\end{flalign}
where we have used that $\dd$ and $\wedge$ are $H$-invariant and $\wedge_\star = \wedge\circ\iota$.
Going the lower path, using (\ref{eqn:lower}) and acting also with $\varphi_\star$ on the result, we obtain
\begin{flalign}
 \varphi_\star\Bigl(\bigl(\widetilde{D}_\FF(\nab)\bigr)^\prime(D_\FF(v^\prime))  \Bigr)= 
\dd \circ D_\FF(v^\prime) - \wedge_\star \circ (D_\FF(v^\prime)\otimes \id)\circ \widetilde{D}_\FF(\nab)~.
\end{flalign}
These two expressions are equal, since $(D_\FF(v^\prime)\otimes \id) = \iota^{-1}\circ D_\FF\bigl(v^\prime\otimes\id\bigr)\circ\iota$
due to Theorem \ref{theo:promodhomdef}.

\end{proof}

In the quasi-commutative setting, i.e.~assuming $(H,R)$ to be a triangular Hopf algebra, and
$A\in{^{H,\ra}}\AAA$, $(\Omega^\bullet,\wedge,\dd)$ and $V\in{^{H,\ra}_A}\MMM_A$ to be (graded) quasi-commutative,
we can derive a right connection on $V^\prime$ by employing Theorem \ref{theo:leftrightconiso}.
The right connections $\nab\in\Con_A(V)$  and $\widetilde{D}^{-1}_R(\nab^\prime)\in\Con_A(V^\prime)$ then can be extended
to arbitrary tensor products of $V$ and $V^\prime$ as shown in Theorem \ref{theo:tensorcon}.
In the case of (noncommutative) gravity, this provides us with a prescription to construct a connection
on arbitrary tensor fields.

\chapter{\label{chap:curvature}Curvature and torsion}
The aim of this chapter is to introduce curvature and torsion into the framework developed above. 
This is an essential step towards noncommutative gravity, see Chapter \ref{chap:ncgmath},
and also noncommutative Yang-Mills theory.
For an introduction to connections, curvature and torsion, see \cite{1038.58004,Madore:2000aq}.

Let $A\in\AAA$ be a unital and associative algebra, $(\Omega^\bullet,\wedge,\dd)$ be a differential calculus
over $A$ and $V\in\MMM_A$ a right $A$-module.
We can extend any connection $\nab\in\Con_A(V)$ to a $\bfK$-linear map
$\widetilde{\nab}: V\otimes_A \Omega^1\to V\otimes_A\Omega^2$ by defining
\begin{flalign}
 \widetilde{\nab} := (\id\otimes\wedge)\circ (\nab\otimes\id) + (\id\otimes\dd)~,
\end{flalign}
where the $A$-bimodule homomorphism $\wedge:\Omega^1\otimes_A\Omega^1\to\Omega^2$ is given by the product in $\Omega^\bullet$.
The map $\widetilde{\nab}$ is well-defined, since it is compatible with middle $A$-linearity, for all
$v\in V$, $\omega\in\Omega^1$ and $a\in A$,
\begin{flalign}
\nn \widetilde{\nab}\bigl(v\cdot a\otimes_A\omega \bigr) &= (\id\otimes\wedge)\bigl((\nab v)\cdot a \otimes_A \omega+ 
v\otimes_A \dd a\otimes_A\omega\bigr) + v\cdot a \otimes_A \dd\omega\\
\nn&= (\id\otimes\wedge)\bigl((\nab v)\otimes_A a\cdot \omega\bigr) 
 + v\otimes_A \bigl(\dd a\wedge \omega + a\cdot \dd\omega\bigr) \\
&= \widetilde{\nab}\bigl(v\otimes_A a\cdot\omega \bigr)~.
\end{flalign}

\begin{defi}\label{def:curvature}
 The {\it curvature} of a connection $\nab\in \Con_A(V)$ is defined by the right $A$-linear map
\begin{flalign}
 \mathrm{R}:= \widetilde{\nab}\circ\nab: V\to V\otimes_A\Omega^2~.
\end{flalign}
\end{defi}

We show that $\mathrm{R}$ is right $A$-linear. First, note that for all $v\in V$, $\omega\in\Omega^1$ and $a\in A$,
\begin{flalign}
\nn \widetilde{\nab}\bigl(v\otimes_A\omega\cdot a \bigr) &= (\id\otimes\wedge)\bigl((\nab v)\otimes_A\omega\cdot a\bigr) + v\otimes_A \dd (\omega \cdot a)\\
\nn&=(\id\otimes\wedge)\bigl((\nab v)\otimes_A\omega\bigr)\cdot a + v\otimes_A \bigl( (\dd \omega) \cdot a - \omega\wedge \dd a\bigr)\\
&=\widetilde{\nab}\bigl(v\otimes_A\omega\bigr)\cdot a - (\id\otimes\wedge)\bigl(v\otimes_A \omega\otimes_A\dd a\bigr)~.
\end{flalign}
Using this we obtain, for all $v\in V$ and $a\in A$,
\begin{flalign}
 \nn \mathrm{R}(v\cdot a) &= \widetilde{\nab}\bigl(\nab(v\cdot a)\bigr) = \widetilde{\nab}\bigl((\nab v)\cdot a + v\otimes_A \dd a\bigr)\\
\nn &=\widetilde{\nab}\bigl(\nab v\bigr) \cdot a - (\id\otimes \wedge)\bigl( (\nab v)\otimes_A \dd a\bigr)+  (\id\otimes \wedge)\bigl( (\nab v)\otimes_A \dd a\bigr) + v\otimes_A\dd \dd a\\
&=\mathrm{R}(v)\cdot a~.
\end{flalign}
\begin{rem}
In classical pseudo-Riemannian geometry the usual definition of curvature
is as a map $\text{Riem} : \Xi\otimes_A\Xi\otimes_A\Xi\to \Xi$, see also Chapter \ref{chap:basicncg},
which is antisymmetric in the first two legs. This map can be equivalently described
by a tensor in $\Xi\otimes_A \Omega^1\otimes_A\Omega^2$, the Riemann tensor,
which, due to duality of $\Xi$ and $\Omega^1$, canonically gives rise to a map $\Omega^1\to \Omega^1\otimes_A\Omega^2$.
In classical pseudo-Riemannian geometry this map coincides with the curvature defined above.
For our purpose it is more convenient to work with the Definition \ref{def:curvature}.
\end{rem}

We now study the behavior of the curvature $\mathrm{R}$ under twist quantization.
For this let $H$ be a Hopf algebra with twist $\FF\in H\otimes H$ and let $A\in{^{H,\ra}}\AAA$,
$V\in{^{H,\ra}}\MMM_A$ and  $\bigl(\Omega^\bullet,\wedge,\dd\bigr)$ be a left $H$-covariant differential calculus over $A$.
Twist quantization yields the deformed Hopf algebra $H^\FF$, the left $H^\FF$-modules
$A_\star\in {^{H^\FF,\ra}}\AAA$, $V_\star\in {^{H^\FF,\ra}}\MMM_{A_\star}$ and the left $H^\FF$-covariant differential
calculus $\bigl(\Omega^\bullet,\wedge_\star,\dd\bigr)$, see Theorems \ref{theo:HAdef}, \ref{theo:algebradef},
\ref{theo:moduledef} and Lemma \ref{lem:dcdef}.
Let $\nab_\star\in\Con_{A_\star}(V_\star)$ be any connection on deformed module $V_\star$, then its curvature
is given by the right $A_\star$-linear map $\mathrm{R}_\star:V_\star\to V_\star\otimes_{A_\star}\Omega^2_\star$
\begin{flalign}\label{eqn:defcurvature}
 \mathrm{R}_\star = \widetilde{\nab_\star} \circ\nab_\star = (\id\otimes\wedge_\star)\circ (\nab_\star\otimes\id)\circ \nab_\star
+ (\id\otimes\dd)\circ \nab_\star~.
\end{flalign}
Due to Theorem \ref{theo:condef} there is a unique connection $\nab\in\Con_A(V)$, such that
$\nab_\star = \widetilde{D}_\FF(\nab) = \iota^{-1}\circ D_\FF(\nab)$. We find the following
expression for the curvature $\mathrm{R}_\star$ in terms of the undeformed connection $\nab$.
\begin{propo}\label{propo:defcurvature}
 In the hypotheses above, the curvature $\mathrm{R}_\star$ of any connection $\nab_\star\in\Con_{A_\star}(V_\star)$
 can be expressed in terms of the corresponding undeformed connection $\nab\in \Con_A(V)$ by
\begin{flalign}\label{eqn:defcurvaturesimpl}
 \mathrm{R}_\star = \iota^{-1}\circ D_\FF\Bigl( (\id\otimes\wedge) \circ_\star (\nab\otimes\id)\circ_\star \nab
+ (\id\otimes \dd)\circ_\star \nab\Bigr)~,
\end{flalign}
where the $\star$-composition was defined in (\ref{eqn:starcompo}) and $\iota^{-1}$ in Lemma \ref{lem:iotaiso}.
\end{propo}
\begin{proof}
We investigate the individual terms in (\ref{eqn:defcurvature}) and make use of Theorem \ref{theo:promodhomdef}.

\noindent Firstly, the following diagram commutes
\begin{flalign}
\xymatrix{
 V_\star\otimes_{A_\star}\Omega^1_\star\otimes_{A_\star}\Omega^1_\star \ar[rrr]^-{\id\otimes\wedge_\star} \ar[d]_-{\iota_{123}} & & & V_\star \otimes_{A_\star}\Omega^2_\star\\
(V\otimes_A \Omega^1\otimes_A\Omega^1)_\star \ar[rrr]_-{D_\FF(\id\otimes\wedge)} & & & (V\otimes_A\Omega^2)_\star \ar[u]_-{\iota^{-1}}
}
\end{flalign}
Here it is essential to note that $\wedge$ is a left $H$-module homomorphism, 
i.e.~$\xi\RA \wedge =\epsilon(\xi)\,\wedge$ for all $\xi\in H$.

\noindent For the second term, $\nab_\star\otimes\id$, we find the following commuting diagram
\begin{flalign}
 \xymatrix{
V_\star\otimes_{A_\star}\Omega^1_\star \ar[rrr]^-{\nab_\star\otimes \id} \ar[d]_-{\iota}& & & V_\star\otimes_{A_\star}\Omega^1_\star\otimes_{A_\star}\Omega^1_\star\\
(V\otimes_A\Omega^1)_\star \ar[rrr]_-{D_\FF(\nab\otimes\id)} & & & (V\otimes_A \Omega^1\otimes_A\Omega^1)_\star \ar[u]_-{\iota^{-1}_{123}}
}
\end{flalign}

\noindent Since the differential $\dd$ is equivariant, we obtain the commuting diagram
\begin{flalign}
 \xymatrix{
V_\star\otimes_{A_\star}\Omega^1_\star \ar[rrr]^-{\id\otimes\dd} \ar[d]_-{\iota}& & & V_\star\otimes_{A_\star}\Omega^2_\star\\
(V\otimes_A\Omega^1)_\star \ar[rrr]_{D_\FF(\id\otimes\dd)} & & & (V\otimes_A\Omega^2)_\star \ar[u]_-{\iota^{-1}}
}
\end{flalign}
 
\noindent Combining these results, we can express the curvature $\mathrm{R}_\star$ in terms of $\nab$ by
(\ref{eqn:defcurvaturesimpl}).

\end{proof}

In (noncommutative) gravity a module $V$ of particular interest is the module of one-forms $\Omega^1$.
A right connection $\nab\in \Con_A(\Omega^1)$ is also called a {\it linear connection}.
Choosing $V = \Omega^1$ allows us to define and study torsion and the Ricci curvature,
a particular contraction of the curvature $\mathrm{R}$ above. Since these objects
are important for noncommutative gravity we are now going to investigate their properties
and behavior under twist quantization.

Let $A\in\AAA$ be a unital and associative algebra and $(\Omega^\bullet,\wedge,\dd)$ be a differential calculus
over $A$.
\begin{defi}\label{def:torsion}
 The {\it torsion} of a linear connection $\nab\in\Con_A(\Omega^1)$ is defined by the
right $A$-linear map
\begin{flalign}
 \mathrm{T}:\Omega^1\to\Omega^2~,\quad \omega\mapsto \mathrm{T}(\omega)= \wedge\bigl(\nab \omega\bigr) + \dd \omega~.
\end{flalign}
\end{defi}
We now show that $\mathrm{T}$ is right $A$-linear, for all $\omega\in\Omega^1$ and $a\in A$,
\begin{flalign}
 \nn\mathrm{T}(\omega\cdot a) &= \wedge\bigl((\nab \omega)\cdot a + \omega\otimes_A \dd a\bigr) + \dd(\omega\cdot a)\\
\nn&= \wedge\bigl((\nab \omega)\bigr)\cdot a + \omega\wedge \dd a + (\dd\omega)\cdot a - \omega\wedge \dd a\\
&=\mathrm{T}(\omega)\cdot a~.
\end{flalign}
\begin{rem}
 In classical pseudo-Riemannian geometry the usual definition of torsion is as a map $\text{Tor}:\Xi\otimes_A\Xi\to \Xi$,
see also Chapter \ref{chap:basicncg}, which is antisymmetric.
This map can be equivalently described
by a tensor in $\Xi\otimes_A\Omega^2$, the torsion tensor,
which, due to duality of $\Xi$ and $\Omega^1$, canonically gives rise to a map $\Omega^1\to \Omega^2$.
In classical pseudo-Riemannian geometry this map coincides with the torsion defined above.
\end{rem}

We now study the behavior of the torsion $\mathrm{T}$ under twist quantization.
For this let $H$ be a Hopf algebra with twist $\FF\in H\otimes H$, $A\in{^{H,\ra}}\AAA$
and let $\bigl(\Omega^\bullet,\wedge,\dd\bigr)$ be a left $H$-covariant differential calculus over $A$.
Let $\nab_\star\in\Con_{A_\star}(\Omega^1_\star)$ be any linear connection on deformed module $\Omega^1_\star$, 
then its torsion is given by the right $A_\star$-linear map $\mathrm{T}_\star:\Omega^1_\star\to \Omega^2_\star$
\begin{flalign}\label{eqn:deftorsion}
 \mathrm{T}_\star = \wedge_\star\circ\nab_\star + \dd~.
\end{flalign}
Due to Theorem \ref{theo:condef} there is a unique linear connection $\nab\in\Con_A(\Omega^1)$, such that
$\nab_\star = \widetilde{D}_\FF(\nab) = \iota^{-1}\circ D_\FF(\nab)$. We find the following
expression for the torsion $\mathrm{T}_\star$ in terms of the undeformed linear connection $\nab$.
\begin{propo}\label{propo:deftorsion}
 In the hypotheses above, the torsion $\mathrm{T}_\star$ of any linear connection $\nab_\star\in\Con_{A_\star}(\Omega^1_\star)$
 can be expressed in terms of the corresponding undeformed linear connection $\nab\in \Con_A(\Omega^1)$ by
\begin{flalign}
 \mathrm{T}_\star = D_\FF(\mathrm{T})~.
\end{flalign}
\end{propo}
\begin{proof}
Using $\wedge_\star= \wedge\circ\iota$ and $\nab_\star = \iota^{-1}\circ D_\FF(\nab)$
we obtain for (\ref{eqn:deftorsion})
\begin{flalign}
 \mathrm{T}_\star = \wedge\circ D_\FF(\nab) + \dd = D_\FF(\mathrm{T})~,
\end{flalign}
where in the second equality we have used that $\xi\RA \wedge =\epsilon(\xi)\,\wedge$ and $\xi\RA \dd = \epsilon(\xi)\,\dd$,
for all $\xi\in H$.

\end{proof}

These results on the curvature and torsion hold in a very general setting. In particular, we did not
have to assume the Hopf algebra to be (quasi)triangular and the modules to be finitely generated and projective
or quasi-commutative.
However, for constructing the Ricci curvature as a contraction of the curvature $\mathrm{R}$ 
we have to add some additional assumptions. 
In the following, let $(H,R)$ be a triangular Hopf algebra and let $A\in {^{H,\ra}}\AAA$ be quasi-commutative.
Let further $\bigl(\Omega^\bullet,\wedge,\dd\bigr)$ be a graded quasi-commutative left $H$-covariant differential calculus over $A$
and $\Omega^1\in{^{H,\ra}_A}\MMM_A$ be finitely generated and projective as a right $A$-module.
Moreover, we assume that there is an injective left $H$-module $A$-bimodule homomorphism
$\Omega^2\to \Omega^1\otimes_A\Omega^1$.\footnote{
In classical differential geometry this homomorphism exists, since two-forms can be regarded
as antisymmetric tensor fields in $\Omega^1\otimes_A\Omega^1$. The same hold true
in twist deformed differential geometry, since there two-forms can be regarded as $R$-antisymmetric
tensor fields.
}

In order to construct the Ricci curvature we first have to define a contraction map.
In the hypotheses above, let $V\in{^{H,\ra}_A}\MMM_A$ be quasi-commutative, and finitely generated and projective as a 
right $A$-module. We define the contraction map $\mathrm{contr}:V\otimes_A V^\prime \to A$ by the following diagram
\begin{flalign}\label{eqn:contrdiag}
 \xymatrix{
V\otimes_A V^\prime\ar[drr]_-{\tau_R} \ar[rr]^-{\mathrm{contr}} & & A\\
& & V^\prime\otimes_A V \ar[u]_-{\mathrm{ev}} 
}
\end{flalign}
where $\mathrm{ev}:V^\prime \otimes_A V \to A$ is the evaluation map defined by
$\mathrm{ev}(v^\prime\otimes_A v) = v^\prime(v)$. Note that $\mathrm{ev}$ is a left $H$-module
$A$-bimodule homomorphism, and hence $\mathrm{contr}$ is also a left $H$-module $A$-bimodule homomorphism
as a composition of maps with this property.

Due to the assumed injective left $H$-module $A$-bimodule homomorphism
$\Omega^2\to \Omega^1\otimes_A\Omega^1$, we can regard the curvature as
a right $A$-linear map $\Omega^1\to \Omega^1\otimes_A\Omega^1\otimes_A \Omega^1$, simply by
composing $\mathrm{R}$ with this homomorphism. 
Applying also the map $\varphi^{-1}:\Hom_A(\Omega^1,\Omega^1\otimes_A\Omega^1\otimes_A\Omega^1)\to 
\Omega^1\otimes_A\Omega^1\otimes_A\Omega^1\otimes_A\Omega^{1\prime}$ of 
Proposition \ref{propo:projhomiso}, we obtain from the curvature $\mathrm{R}$ the curvature tensor
$\mathbf{R}\in\Omega^1\otimes_A\Omega^1\otimes_A\Omega^1\otimes_A\Omega^{1\prime}$.
\begin{defi}
In the hypotheses above, the {\it Ricci curvature} of a linear connection $\nab\in\Con_A(\Omega^1)$ is defined 
by the application of the following contraction on the curvature tensor
\begin{subequations}
\begin{flalign}
\xymatrix{
 \Omega^1\otimes_A\Omega^1\otimes_A\Omega^1\otimes_A\Omega^{1\prime} \ar[rr]^-{\id\otimes\id\otimes\mathrm{contr}} && \Omega^1\otimes_A\Omega^1\otimes_A A \ar[rr]^-{\id\otimes\wedge}& &\Omega^1\otimes_A\Omega^1~,
}
\end{flalign}
i.e.~
\begin{flalign}
 \mathrm{Ric} := (\id\otimes\wedge)\Bigl((\id\otimes\id\otimes\mathrm{contr})\bigl(\mathbf{R}\bigr)\Bigr)~.
\end{flalign}
\end{subequations}
\end{defi}

We are now going to discuss the behavior of the contraction map and the Ricci curvature under twist quantization.
\begin{propo}
Let in the hypotheses above $\FF\in H\otimes H$ be a twist. 
Then the following diagram commutes
\begin{flalign}\label{eqn:contrstar}
\xymatrix{
 V_\star\otimes_{A_\star} (V_\star)^\prime \ar[d]_-{\id\otimes D_\FF^{-1}}\ar[rr]^-{\mathrm{contr}_\star} & & A_\star\\
 V_\star\otimes_{A_\star} (V^\prime)_\star \ar[rr]_-{\iota} & & (V\otimes_A V^\prime)_\star \ar[u]_-{\mathrm{contr}}
}
\end{flalign}

\end{propo}
\begin{proof}
 Our strategy is to extend the diagram (\ref{eqn:contrdiag}) for the deformed contraction 
by the natural isomorphisms and to show that all subdiagrams commute.
\begin{flalign}
 \xymatrix{
V_\star\otimes_{A_\star}(V^\prime)_\star \ar[d]_-{\iota}	& & V_\star\otimes_{A_\star} (V_\star)^\prime \ar[ll]_-{\id\otimes D_\FF^{-1}} \ar[rr]^-{\mathrm{contr}_\star} \ar[drr]_-{\tau_{R^\FF}}		& & A_\star 													& &  \ar[ll]_-{\mathrm{ev}} (V^\prime\otimes_A V)_\star\\
(V\otimes_A V^\prime)_\star     \ar[d]_-{\tau_R}              	& & 																		& & (V_\star)^\prime \otimes_{A_\star}V_\star \ar[u]_-{\mathrm{ev}_\star} \ar[rr]^-{D_\FF^{-1}\otimes\id}  	& &  (V^\prime)_\star\otimes_{A_\star}V_\star \ar[u]_-{\iota}\\
(V^\prime\otimes_A V)_\star	\ar[rr]_-{\iota^{-1}}		& &	(V^\prime)_\star\otimes_{A_\star}V_\star	\ar[urr]_-{D_\FF\otimes\id}								& &														& &		
}
\end{flalign}
The middle subdiagram commutes by definition of $\mathrm{contr}_\star$. 
The right subdiagram commutes as shown by this small calculation 
\begin{flalign}
\nn\mathrm{ev}\Bigl(\iota\bigl( (D_\FF^{-1}\otimes\id) (v_\star^\prime\otimes_{A_\star} v)\bigr) \Bigr) &=
\mathrm{ev}\Bigl(\bar f^\alpha\RA \bigl(D_\FF^{-1}(v_\star^\prime)\bigr)\otimes_{A} \bar f_\alpha\ra v \Bigr)\\
&= \Bigl(\bar f^\alpha\RA \bigl(D_\FF^{-1}(v_\star^\prime)\bigr)\Bigr)(\bar f_\alpha\ra v) 
= \Bigl(D_\FF\bigl(D_\FF^{-1}(v_\star^\prime)\bigr)\Bigr)(v) = v_\star^\prime(v)~.
\end{flalign}
Commutativity of the left subdiagram
is easily shown by a similar calculation, which we do not have to present here explicitly.

The proof follows by going the long path in the diagram above.

\end{proof}
As a consequence, we obtain for the deformed Ricci curvature 
of a deformed curvature tensor $\mathbf{R}_\star\in \Omega_\star^1\otimes_{A_\star}\Omega_\star^1\otimes_{A_\star}\Omega_\star^1
\otimes_{A_\star}(\Omega_\star^{1})^{\prime}$, i.e.~
\begin{flalign}
 \mathrm{Ric}_\star = (\id\otimes\wedge_\star)\Bigl((\id\otimes\id\otimes\mathrm{contr}_\star)\bigl(\mathbf{R}_\star\bigr)\Bigr)~,
\end{flalign}
the following property.
\begin{cor}
For all  $\mathbf{R}_\star\in \Omega_\star^1\otimes_{A_\star}\Omega_\star^1\otimes_{A_\star}\Omega_\star^1
\otimes_{A_\star}(\Omega_\star^{1})^{\prime}$,
\begin{flalign}\label{eqn:riccidef}
 \mathrm{Ric}_\star = \Bigl(\iota^{-1}\circ (\id\otimes\wedge)\circ (\id\otimes\id\otimes\mathrm{contr})\circ
 \iota_{1234}\circ (\id\otimes\id\otimes\id\otimes D_\FF^{-1})\Bigr)\bigl(\mathbf{R}_\star\bigr)~.
\end{flalign}

\end{cor}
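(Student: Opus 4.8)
The plan is to prove the corollary by simply unwinding the definition of $\mathrm{Ric}_\star$ and inserting the commuting diagrams established in the preceding propositions, piece by piece, exactly as was done for the curvature in Proposition~\ref{propo:defcurvature} and for the Ricci contraction in the displayed diagram~(\ref{eqn:contrstar}). The three ingredients are: first, $\wedge_\star = \wedge\circ\iota$ together with the $H$-invariance of $\wedge$; second, the commuting square~(\ref{eqn:contrstar}) relating $\mathrm{contr}_\star$ to $\mathrm{contr}$ via $\iota$ and $D_\FF^{-1}$; and third, the higher natural isomorphism $\iota_{1234}$ identifying $(\Omega^1\otimes_A\Omega^1\otimes_A\Omega^1\otimes_A\Omega^{1\prime})_\star$ with $\Omega^1_\star\otimes_{A_\star}\Omega^1_\star\otimes_{A_\star}\Omega^1_\star\otimes_{A_\star}(\Omega^1_\star)^\prime$, constructed by iterating Lemma~\ref{lem:iotaiso} along the pattern of~(\ref{eqn:higheriota}). (One needs the mild extra observation that $\Omega^{1\prime}$ quantizes compatibly, i.e.\ $(\Omega^1_\star)^\prime\simeq(\Omega^{1\prime})_\star$ via $D_\FF$, which is exactly the dual-module statement in Section~\ref{sec:homoquant}; this is what forces the $D_\FF^{-1}$ on the last leg.)

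First I would take an arbitrary deformed curvature tensor $\mathbf{R}_\star$ and work backwards from the right-hand side of~(\ref{eqn:riccidef}). Applying $\id\otimes\id\otimes\id\otimes D_\FF^{-1}$ to the last tensor leg lands us in $\Omega^1_\star\otimes_{A_\star}\Omega^1_\star\otimes_{A_\star}\Omega^1_\star\otimes_{A_\star}(\Omega^{1\prime})_\star$; then $\iota_{1234}$ transports this to $(\Omega^1\otimes_A\Omega^1\otimes_A\Omega^1\otimes_A\Omega^{1\prime})_\star$, where $\mathrm{contr}$ and $\wedge$ are the undeformed maps. Next I would contract the third and fourth legs: here diagram~(\ref{eqn:contrstar}), applied on the relevant legs with the remaining $\Omega^1_\star\otimes_{A_\star}\Omega^1_\star$ factors spectating, shows that $(\id\otimes\id\otimes\mathrm{contr})\circ(\text{appropriate }\iota)\circ(\id\otimes\id\otimes\id\otimes D_\FF^{-1})$ agrees with $\iota\circ(\id\otimes\id\otimes\mathrm{contr}_\star)$ up to the natural identifications. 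Finally $(\id\otimes\wedge)\circ\iota = \iota'\circ(\id\otimes\wedge_\star)$ using $\wedge_\star=\wedge\circ\iota$ and $H$-invariance of $\wedge$, and the outer $\iota^{-1}$ cancels, leaving exactly $\mathrm{Ric}_\star=(\id\otimes\wedge_\star)\circ(\id\otimes\id\otimes\mathrm{contr}_\star)$ applied to $\mathbf{R}_\star$, which is the definition. So the whole proof is: expand the definition, insert four commuting squares in sequence, cancel the outer natural isomorphism.

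The main obstacle I anticipate is purely bookkeeping rather than conceptual: one must be careful that the various $\iota$'s appearing on the different tensor legs in~(\ref{eqn:riccidef}) are the correct ones, i.e.\ that the composite natural isomorphism $\iota_{1234}$ restricts correctly when we contract only two of the four legs, and that the $\iota$ produced by diagram~(\ref{eqn:contrstar}) on legs three and four is compatible with the $\iota$ already present on legs one and two. This is the same kind of multi-leg diagram chase as in the proof of Theorem~\ref{theo:prodcondef}, and I would handle it exactly as there: fix an explicit iterated $\iota$ (as in~(\ref{eqn:higheriota})) and check that each elementary move either is one of the already-proven squares or is the trivial interchange of a spectating $\iota$ with a map acting on disjoint legs. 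A secondary point to verify is that all maps in sight ($\wedge$, $\mathrm{ev}$, hence $\mathrm{contr}$, and the homomorphism $\Omega^2\to\Omega^1\otimes_A\Omega^1$) are left $H$-module homomorphisms, so that $D_\FF$ acts on them trivially; this was already noted in the text just before the corollary and requires no new argument. No step requires the triangular structure beyond what was used to define $\mathrm{contr}$ via $\tau_R$, so the proof carries over verbatim to the quasi-commutative hypotheses in force.
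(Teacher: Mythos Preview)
Your proposal is correct and follows essentially the same approach as the paper: the paper's proof starts from the definition $\mathrm{Ric}_\star=(\id\otimes\wedge_\star)\circ(\id\otimes\id\otimes\mathrm{contr}_\star)$, replaces $\wedge_\star=\wedge\circ\iota$ and uses~(\ref{eqn:contrstar}) for $\mathrm{contr}_\star$, then uses $H$-invariance of $\wedge$ and $\mathrm{contr}$ to push the $\iota$'s through and assemble $\iota_{34}$, $\iota_{2(34)}$, $\iota_{1(234)}$ into $\iota_{1234}$ --- exactly the bookkeeping you describe. The only cosmetic difference is that the paper works forward from the definition while you work backward from the right-hand side of~(\ref{eqn:riccidef}).
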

\begin{proof}
The proof follows by using $\wedge_\star = \wedge\circ \iota$,  (\ref{eqn:contrstar})
and that $\wedge$ and $\mathrm{contr}$ are left $H$-module homomorphisms.
\begin{flalign}
\nn (\id\otimes\wedge_\star)&\circ (\id\otimes\id\otimes\mathrm{contr}_\star) 
= (\id\otimes\wedge)\circ\iota_{23}\circ (\id\otimes\id\otimes\mathrm{contr}) \circ\iota_{34} \circ (\id\otimes \id\otimes\id\otimes D_\FF^{-1})\\
\nn&=\iota^{-1}\circ\iota \circ (\id\otimes\wedge)\circ(\id\otimes\id\otimes\mathrm{contr})\circ \iota_{2(34)} \circ\iota_{34}
 \circ (\id\otimes \id\otimes\id\otimes D_\FF^{-1})\\
\nn&= \iota^{-1}\circ (\id\otimes\wedge)\circ(\id\otimes\id\otimes\mathrm{contr})\circ\iota_{1(234)}\circ \iota_{2(34)} \circ\iota_{34} \circ (\id\otimes \id\otimes\id \otimes D_\FF^{-1})\\
&=\iota^{-1}\circ (\id\otimes\wedge)\circ(\id\otimes\id\otimes\mathrm{contr})\circ\iota_{1234} \circ (\id\otimes \id\otimes\id \otimes D_\FF^{-1})~.
\end{flalign}

\end{proof}
We briefly study the double contraction of an element of $\Omega^1\otimes_{A}\Omega^1$ with an element of 
$\Omega^{1\prime}\otimes_{A}\Omega^{1\prime}$ defined by the map
\begin{flalign}
 \mathrm{contr}\circ(\wedge\otimes\id)\circ(\id\otimes\mathrm{contr}\otimes\id):\Omega^1\otimes_{A}\Omega^1\otimes_A\Omega^{1\prime}\otimes_A\Omega^{1\prime} \to A~.
\end{flalign}
This will be required in the next chapter to study the curvature scalar.
In the twist quantized setting, this map can be expressed in terms of the undeformed maps 
$\wedge$ and $\mathrm{contr}$ by using (\ref{eqn:contrstar}) as follows
\begin{multline}\label{eqn:rsdef}
  \mathrm{contr}_\star\circ(\wedge_\star\otimes\id)\circ(\id\otimes\mathrm{contr}_\star\otimes\id) = \\
\mathrm{contr}\circ(\wedge\otimes\id)\circ(\id\otimes\mathrm{contr}\otimes\id)\circ \iota_{1234} \circ (\id\otimes\id \otimes
D_\FF^{-1}\otimes D_\FF^{-1})~.
\end{multline}

%%%%%%%%%%%%%%%%%%%%%%%%%%%%%%%%%%%%%%%%%%%%%%%%%%%%%%%
%%%%%%%%%%%%%%%%%%%%%%%%%%%%%%%%%%%%%%%%%%%%%%%%%%%%%%%

\chapter{\label{chap:ncgmath}Noncommutative gravity solutions revisited}
In this chapter we provide a nontrivial application of the formalism developed above to noncommutative gravity solutions.
Remember that in Chapter \ref{chap:ncgsol} we have made use of a local nice basis of vector fields
in order to discuss when a commutative metric field also solves the noncommutative Einstein equations.
Even though we were able to provide interesting models for noncommutative cosmology and black hole physics,
there are two problems associated to this approach:
Firstly, a nice basis of vector fields is only ensured to exist for the class of nonexotic abelian twists,
i.e.~twists which are locally equivalent to the Moyal-Weyl twist. 
Secondly, even if a nice basis exists, it will in general be a local
basis defined on some coordinate patch. If we would go over from formal deformations to convergent ones,
these patches are in general not preserved under the action of the twist, so that we can not rely on
local constructions.

In this chapter we resolve these two problems by making use of the improved understanding of
homomorphisms and connections developed above. We discuss noncommutative gravity solutions 
in a completely global and basis free formulation.
In addition to providing a more elegant discussion of the solutions found before, we can extend
these results to general twists.

Let $(\MM,g)$ be an $N$-dimensional pseudo-Riemannian manifold and $A=C^\infty(\MM)$ the algebra
of complex valued smooth functions on $\MM$. 
The metric tensor $g\in\Omega^1\otimes_A\Omega^1$ gives rise to a right $A$-linear map $\mathrm{g}:\Xi\to \Omega^1$, where
$\Xi=\Omega^{1\prime}:=\Hom_A(\Omega^1,A)$ denotes the complexified and smooth vector fields and $\Omega^1$ the complexified 
and smooth one-forms on $\MM$. The diffeomorphisms of $\MM$ are described by the Hopf algebra
 $U\Xi$, see Chapter \ref{chap:basicncg}, which acts on $A,\Xi$ and $\Omega^1$ 
by the Lie derivative.
The algebraic structures of the objects above are
\begin{flalign}
 A\in {^{U\Xi,\mathcal{L}}}\AAA~,\quad \Xi,\Omega^1\in{^{U\Xi,\mathcal{L}}_A}\MMM_A~.
\end{flalign}
We assume in the following that $\Omega^1$ and thus also $\Xi$ are finitely generated and projective modules 
(since $A$ and $\Xi$ are commutative, these modules are finitely generated and projective 
as left and right $A$-modules).

For quantizing this system we have to extend it by formal powers in a deformation parameter $\lambda$.
In order to simplify our discussion we do not explicitly display this extension 
and also do not focus on topological aspects of formal power series, see the Appendix \ref{app:basicsdefq}.
For a discussion of the latter see the next chapter.
In this setting a twist is an element $\FF\in U\Xi\otimes U\Xi$, which leads to a quantization of
the Hopf algebra and its modules, see Theorems \ref{theo:HAdef}, \ref{theo:algebradef} and \ref{theo:moduledef}.
The twist quantized Hopf algebra is denoted by $U\Xi^\FF$, the twist quantized algebra by
 $A_\star$ and the twist quantized one-forms by $\Omega^1_\star$.
The twist quantized vector fields $\Xi_\star=\Hom_A(\Omega^1,A)_\star$ are isomorphic 
to $(\Omega_\star^1)^\prime = \Hom_{A_\star}(\Omega^1_\star,A_\star)$
via the map $D_\FF$, see Theorem \ref{theo:homodef}. 
The classical metric $\mathrm{g}:\Xi\to\Omega^1$ can be quantized 
 via the following diagram
\begin{flalign}\label{eqn:defmetrc}
 \xymatrix{
(\Omega^1_\star)^\prime \ar[d]_-{D_\FF^{-1}} \ar[rr]^-{\mathrm{g}_\star}& & \Omega^1_\star\\
\Xi_\star \ar[rru]_-{D_\FF(\mathrm{g})}
}
\end{flalign}
to yield a deformed metric $\mathrm{g}_\star:(\Omega^1_\star)^\prime\to \Omega^1_\star$.
Vice versa, any deformed metric is the quantization of a classical one, see Theorem \ref{theo:homodef}.
On the level of metric tensors, (\ref{eqn:defmetrc}) implies that
$g_\star = \iota^{-1}(g)\in\Omega^1_\star\otimes_{A_\star}\Omega^1_\star$,
where $\iota$ is the isomorphism of Lemma \ref{lem:iotaiso}.

Let us make our first observation: Consider the inverse classical metric $\mathrm{g}^{-1}:\Omega^1\to\Xi$ defined 
by $\mathrm{g}\circ \mathrm{g}^{-1} = \id$ and $\mathrm{g}^{-1}\circ \mathrm{g} =\id$. 
Then its quantization $(\mathrm{g}^{-1})_\star:\Omega^1_\star \to (\Omega^1_\star)^\prime$ defined by
\begin{flalign}
\xymatrix{
 \Omega^1_\star \ar[rrd]_-{D_\FF(\mathrm{g}^{-1})}\ar[rr]^-{(\mathrm{g}^{-1})_\star} & & (\Omega^1_\star)^\prime\\
 & & \Xi_\star \ar[u]_-{D_\FF}
}
\end{flalign}
is in general not equal to the $\star$-inverse metric $\mathrm{g}_\star^{-1}:\Omega^1_\star\to(\Omega^1_\star)^\prime$ defined by
$\mathrm{g}_{\star}^{-1}\circ \mathrm{g}_\star = \id$ and $\mathrm{g}_\star\circ  \mathrm{g}_\star^{-1}=\id$.
Indeed, we find
\begin{flalign}\label{eqn:zwischensol}
\mathrm{g}_\star \circ  (\mathrm{g}^{-1})_\star=  D_\FF(\mathrm{g})\circ D_\FF(\mathrm{g}^{-1}) 
= D_\FF(\mathrm{g}\circ_\star\mathrm{g}^{-1})\neq \id~,
\end{flalign}
due to the $\star$-composition (\ref{eqn:starcompo}). In order to ensure the $U\Xi^\FF$-covariance of the noncommutative
gravity theory the correct choice for an inverse metric is the $\star$-inverse metric $\mathrm{g}_\star^{-1}$ and not
$(\mathrm{g}^{-1})_\star$.

Our method for extracting exact solutions of the noncommutative Einstein equations in Chapter \ref{chap:ncgsol}
was based on searching for special deformations of symmetric pseudo-Riemannian manifolds $(\MM,g)$,
where the deformation has no effect on the curvature and on the Einstein tensor.
We denote the Lie algebra of Killing vector fields by $\mathfrak{g}$.
In the setting above, this means that we are looking for deformations such that, in particular,
$\mathrm{g}_\star^{-1}=(\mathrm{g}^{-1})_\star$. A sufficient condition for the inequality  
in (\ref{eqn:zwischensol}) to become an equality is to assume that the inverse twist is of the form 
\begin{flalign}\label{eqn:twistsemikilling}
 \FF^{-1}-1\otimes 1 \in U\Xi\,\mathfrak{g}\otimes U\Xi + U\Xi\otimes U\Xi\,\mathfrak{g}~.
\end{flalign}
This means that for all higher orders in the inverse twist there is a Killing vector field
on the very right in the left or right leg.
This is a generalization of the condition found for the abelian twist in Chapter \ref{chap:ncgsol}, Section \ref{sec:ncgsolgen}.
Due to this property and the $\mathfrak{g}$-invariance of $\mathrm{g}$ and $\mathrm{g}^{-1}$,
the $\star$-composition in (\ref{eqn:zwischensol}) reduces to the usual composition
and we obtain
\begin{flalign}
 \mathrm{g}_\star \circ (\mathrm{g}^{-1})_\star= \id~,\quad (\mathrm{g}^{-1})_\star\circ\mathrm{g}_\star =\id~.
\end{flalign}
Note that on the level of metric tensors, $(\mathrm{g}^{-1})_\star = \mathrm{g}_\star^{-1}$ implies that
$g_\star^{-1} = (D_\FF\otimes D_\FF)\bigl(\iota^{-1}(g^{-1})\bigr)\in(\Omega_\star^1)^\prime\otimes_{A_\star}(\Omega_\star^1)^\prime$,
where $g^{-1}\in\Xi\otimes_A\Xi$ is the classical inverse metric tensor.

The classical Levi-Civita connection $\nab\in\Con_A(\Omega^1)$ is defined uniquely by demanding metric compatibility 
and the torsionfree condition. 
Regarding the metric as a tensor field $g\in\Omega^1\otimes_A\Omega^1$, the former condition reads
\begin{flalign}
 (\nab\oplus\nab)\bigl(g\bigr) = 0~,
\end{flalign}
where $\oplus$ is the sum of connections (\ref{eqn:tensorcon}) with trivial $R$-matrix $1\otimes 1$.
The torsion as given in Definition \ref{def:torsion} is a right $A$-linear map $\mathrm{T}:\Omega^1\to\Omega^2$ defined by, 
for all $\omega\in\Omega^1$,
\begin{flalign}
 \mathrm{T}(\omega) = \wedge\bigl(\nab \omega\bigr) + \dd\omega ~.
\end{flalign}

In the deformed case, we say that a connection $\nab_\star\in\Con_{A_\star}(\Omega^1_\star)$
is metric compatible, if
\begin{flalign}
 (\nab_\star\oplus_R\nab_\star)\bigl(g_\star\bigr)=0~,
\end{flalign}
where $R=\FF_{21}\,\FF^{-1}$ and $g_\star\in\Omega^1_\star\otimes_{A_\star}\Omega^1_\star$ is the metric tensor.
The torsion is the right $A_\star$-linear map $\mathrm{T}_\star:\Omega_\star^1\to\Omega_\star^2$ defined by, for all
$\omega\in\Omega^1_\star$,
\begin{flalign}
 \mathrm{T}_\star(\omega) = \wedge_\star\bigl(\nab_\star \omega \bigr) + \dd \omega~.
\end{flalign}

From Theorem \ref{theo:condef} we know that we can find for all connections $\nab_\star\in\Con_{A_\star}(\Omega^1_\star)$
a unique connection $\nab\in\Con_A(\Omega^1)$, such that $\nab_\star = \iota^{-1}\circ D_\FF(\nab)$.
Even more, Theorem \ref{theo:prodcondef} tells us that
\begin{flalign}
 \nab_\star\oplus_R \nab_\star = \iota_{123}^{-1}\circ D_\FF(\nab\oplus\nab)\circ \iota~.
\end{flalign}
The deformed metric compatibility condition is then equivalent to
\begin{flalign}
  0=\iota_{123}\circ (\nab_\star\oplus_R\nab_\star)\bigl(g_\star\bigr)
= D_\FF(\nab\oplus\nab)\bigl(\iota(g_\star)\bigr)~.
\end{flalign}
Using that (\ref{eqn:defmetrc}) implies $g_\star = \iota^{-1}(g)$,
the deformed metric compatibility reduces to the condition
\begin{flalign}\label{eqn:metriccondsimpl}
 D_\FF(\nab\oplus\nab)\bigl(g\bigr)=0~.
\end{flalign}
As shown in Proposition \ref{propo:deftorsion}, the torsion $\mathrm{T}_\star$ of $\nab_\star$ and torsion
$\mathrm{T}$ of the corresponding undeformed connection $\nab$ are related by $\mathrm{T}_\star = D_\FF(\mathrm{T})$.
Thus, the torsion of a connection $\nab_\star$ vanishes, if and only if the torsion of
the corresponding undeformed connection $\nab$ vanishes.

Let now the twist be of the form (\ref{eqn:twistsemikilling}) and consider the classical Levi-Civita
connection $\nab\in\Con_A(\Omega^1)$. Then the remaining $D_\FF$ in the deformed metric compatibility condition
(\ref{eqn:metriccondsimpl}) drops out, since $g$ and $\nab$ are invariant under $\mathfrak{g}$.
The deformed torsion condition is also fulfilled, thus $\nab_\star = \iota^{-1}\circ D_\FF(\nab)$ is
a deformed Levi-Civita connection. We now show that all curvatures of this deformed Levi-Civita connection
coincide (up to natural isomorphisms) with the undeformed ones.

As shown in Proposition \ref{propo:defcurvature}, the curvature $\mathrm{R}_\star$ of the connection $\nab_\star$
can be expressed in terms of the corresponding undeformed connection $\nab$ via
\begin{flalign}
\mathrm{R}_\star = \iota^{-1}\circ D_\FF\Bigl( (\id\otimes\wedge) \circ_\star (\nab\otimes\id)\circ_\star \nab 
+ (\id\otimes \dd)\circ_\star \nab\Bigr)~.
\end{flalign}
For a twist of the form (\ref{eqn:twistsemikilling}) the remaining $\star$-compositions (\ref{eqn:starcompo}) reduce to
the usual compositions and we find that $\mathrm{R}_\star$ is equal to $\mathrm{R}$ up to the natural isomorphisms
\begin{flalign}
 \mathrm{R}_\star = \iota^{-1}\circ D_\FF(\mathrm{R})~.
\end{flalign}
The corresponding curvature tensor 
$\mathbf{R}_\star \in \Omega^1_\star\otimes_{A_\star}\Omega^1_\star\otimes_{A_\star}\Omega^1_\star\otimes_{A_\star}
(\Omega^1_\star)^\prime$ then can be expressed in terms of the undeformed curvature tensor
$\mathbf{R}$ by 
\begin{flalign}
 \mathbf{R}_\star = \bigl(\id\otimes\id\otimes\id\otimes D_\FF\bigr)\Bigl(\iota_{1234}^{-1}\bigl(\mathbf{R}\bigr)\Bigr)~.
\end{flalign}
Using (\ref{eqn:riccidef}) we can express the deformed Ricci curvature in terms of the undeformed one, $\mathrm{Ric}$, by
\begin{flalign}
 \mathrm{Ric}_\star = \iota^{-1}(\mathrm{Ric})~.
\end{flalign}
We define the curvature scalar as the double contraction
\begin{flalign}
\mathfrak{R}_\star :=  \bigl(\mathrm{contr}_\star\circ(\wedge_\star\otimes\id)\circ(\id\otimes\mathrm{contr}_\star\otimes\id)\bigr)
(\mathrm{Ric}_\star\otimes_{A_\star} g_\star^{-1})~.
\end{flalign}
The relation (\ref{eqn:rsdef}) and (\ref{eqn:twistsemikilling}) implies
 that the  deformed and undeformed curvature scalars coincide
\begin{flalign}
 \nn \mathfrak{R}_\star&=  \bigl(\mathrm{contr}\circ(\wedge\otimes\id)\circ(\id\otimes\mathrm{contr}\otimes\id)\bigr)
\bigl(\iota_{1234}(\iota^{-1}(\mathrm{Ric}) \otimes_{A_\star} \iota^{-1}(g^{-1})\bigr)\\
&= \bigl(\mathrm{contr}\circ(\wedge\otimes\id)\circ(\id\otimes\mathrm{contr}\otimes\id)\bigr)
\bigl(\mathrm{Ric} \otimes_{A} g^{-1}\bigr) =\mathfrak{R}~.
\end{flalign}
Finally, the deformed Einstein tensor is up to the natural isomorphisms equal to the undeformed one
\begin{flalign}\label{eqn:defeinstein}
 \mathrm{Ric}_\star -\frac{1}{2} g_\star\star \mathfrak{R}_\star
= \iota^{-1}\left( \mathrm{Ric} -\frac{1}{2} g\star\mathfrak{R} \right) = 
 \iota^{-1}\left( \mathrm{Ric} -\frac{1}{2} g\cdot\mathfrak{R} \right)~,
\end{flalign}
where in the last equality we have used that $g$ and $\mathfrak{R}$ are $\mathfrak{g}$-invariant.

We can now make statements about exact solutions of the noncommutative Einstein equations,
extending the results of Chapter \ref{chap:ncgsol}, Section \ref{sec:ncgsolgen}.
We start with the simplest case of Einstein manifolds\footnote{
Remember that $(\MM,g)$ is an Einstein manifold, if $\mathrm{Ric}-\frac{1}{2}g\cdot\mathfrak{R} =\Lambda\,g$ with $\Lambda\in\bbR$.
}.
Let $(\MM,g)$ be a classical Einstein manifold with Lie algebra of Killing vector fields
$\mathfrak{g}$ and let the twist $\FF$ be of the form (\ref{eqn:twistsemikilling}).
Then the deformed manifold $(\MM,g_\star=\iota^{-1}(g),\FF)$ is deformed Einstein, since due to (\ref{eqn:defeinstein}) 
we have
\begin{flalign}
 \mathrm{Ric}_\star -\frac{1}{2} g_\star\star \mathfrak{R}_\star =
 \iota^{-1}\left( \mathrm{Ric} -\frac{1}{2} g\cdot\mathfrak{R} \right) = \iota^{-1}\left(\Lambda\,g \right) = \Lambda\,g_\star~.
\end{flalign}

Consider now a classical manifold $\MM$ with a metric field $g\in\Omega^1\otimes_A\Omega^1$
and a collection of matter fields (tensor fields) $\lbrace \Phi_i\rbrace$.
Assume that $g$ and $\lbrace\Phi_i\rbrace$ are invariant under a symmetry Lie algebra
$\mathfrak{g}$ and that the fields solve exactly the classical Einstein equations and geometric differential equations
for the matter fields. If the twist is of the form (\ref{eqn:twistsemikilling}), then the deformed Einstein tensor
reduces, up to natural isomorphisms, to the undeformed one. 
Constructing a deformed stress-energy tensor with the same twist deformation methods,
it will also reduce (up to the natural isomorphisms) to the undeformed tensor when evaluated on $\mathfrak{g}$-symmetric
configurations. This is because the $\mathfrak{g}$-invariance of all fields prevents any corrections
in the deformation parameter $\lambda$ coming from the twist (\ref{eqn:twistsemikilling}).
The same holds true for the deformed equations of motion for the matter fields.
Thus, the classical solution gives rise to an exact solution of the deformed system via the natural isomorphisms.

%%%%%%%%%%%%%%%%%%%%%%%%%%%%%%%%%%%%%%%%%%%%%%%%%%%%%%%
%%%%%%%%%%%%%%%%%%%%%%%%%%%%%%%%%%%%%%%%%%%%%%%%%%%%%%%

\chapter{\label{chap:outlookmath}Open problems and outlook}
\subsection*{Topological algebras and modules:}
In this part we have considered the situation where the underlying commutative and unital ring
$\bfK$ is generic. The tensor product over $\bfK$ was taken to be the algebraic tensor product.
However, in deformation quantization the underlying ring is $\bbK[[\lambda]]$ (where $\bbK$ is a field),
which is equipped with the $\lambda$-adic topology, see the Appendix \ref{app:basicsdefq}. 
As also explained in this appendix, the algebraic tensor product over $\bbK[[\lambda]]$
is not always appropriate in this setting, and it has to be replaced by its $\lambda$-adic completion, called
the topological tensor product. A prime example which explains why this completion is required
comes from the Moyal-Weyl twist $\FF=\exp(-\frac{i\lambda}{2}\Theta^{\mu\nu}\partial_\mu\otimes\partial_\nu)$.
Note that this twist is not an element of $H[[\lambda]]\otimes_{\bbK[[\lambda]]} H[[\lambda]]$, but only an element
of the completion of this $\bbK[[\lambda]]$-module, i.e.~the topological tensor product module $H[[\lambda]]\totimes H[[\lambda]]$.
As a consequence, algebras should be replaced by topological algebras, where the product is a $\bbK[[\lambda]]$-linear
map $\mu: A[[\lambda]]\totimes A[[\lambda]]\to A[[\lambda]]$. Note that not all elements
in $A[[\lambda]]\totimes A[[\lambda]]$ are given by finite sums of products $a\otimes b$, but the elements
which can be written in this way are only dense. 
Since in the part above we have frequently checked properties of maps by acting on the simple elements 
$a\otimes b$, we are in general missing some elements in the topological setting. 
Thus, as presented in the chapters above, our statements are only valid in a non-topological setting.

As investigated in the Appendix \ref{app:basicsdefq}, for a special choice of
the underlying $\bbK[[\lambda]]$-modules, the topologically free modules $V[[\lambda]]$ given by formal
extension of vector spaces $V$, the $\lambda$-adic completion is quite harmless.
In particular, given any $\bbK[[\lambda]]$-linear map between incomplete $\bbK[[\lambda]]$-modules,
there is a canonical construction of a $\bbK[[\lambda]]$-linear map between the completions.
Using this construction we can for example associate to any usual algebra with
product $\mu:A[[\lambda]]\otimes_{\bbK[[\lambda]]}A[[\lambda]]\to A[[\lambda]]$ a topological algebra
with product $\mu:A[[\lambda]]\totimes A[[\lambda]] \to A[[\lambda]]$. Since 
$A[[\lambda]]\otimes_{\bbK[[\lambda]]}A[[\lambda]]$ is by definition dense in the topological tensor product
and any $\bbK[[\lambda]]$-linear map between topologically free modules is continuous,
a condition satisfied on this dense subset is indeed satisfied on all of topological 
tensor product module.

Such arguments should allow us to extend all of our results to the topological setting, at least
in case all underlying modules are topologically free. 
This is not in the scope of the present work and has to be left for future research.

\subsection*{Finishing the construction of a noncommutative gravity theory:}
We have made in this part some important steps towards constructing a noncommutative gravity theory.
In particular, we obtained a better understanding of module homomorphisms and connections,
which are the main ingredients of noncommutative gravity. The extension of connections to the dual
module and to tensor products of modules was clarified, thus allowing us to include arbitrary tensor
fields into our theory.

In order to complete the construction of a noncommutative theory of gravity there are still some unsolved
issues. In particular, implementing reality conditions and clarifying the existence of a unique deformed
Levi-Civita connection are not yet completely understood.
Since we have shown above that all deformed homomorphisms, tensor fields and connections
can be obtained by quantizing undeformed homomorphisms, tensor fields and connections,
we can rewrite the noncommutative gravity theory in terms of the unquantized
quantities, paying the price that $\star$-compositions might appear. 
An interesting project for future work is to find out if this reformulation
in terms of the unquantized quantities can lead to any new insights into reality conditions and
the existence of a unique Levi-Civita connection.

\subsection*{Noncommutative gauge theory:}
The formalism developed in this part also allows us to study kinematical aspects of noncommutative gauge theories.
For this case one takes $V$ to be a representation module of the desired gauge group,
i.e.~the module of sections of an associated vector bundle of the principal bundle
underlying the classical gauge theory. The gauge field is described in terms of a
connection on this module. Due to our work above, the twist quantization of this setting 
is understood and we can in particular express every noncommutative gauge field (described by
a connection $\nab_\star$) in terms of a commutative connection $\nab$.
In the deformed curvature (\ref{eqn:defcurvaturesimpl}) there will be $\star$-compositions,
making it in general different to the undeformed one. 

An interesting project for future research is to find out if and under which conditions
twist deformations can modify the value of the topological action, given
for a four-dimensional manifold by  \cite{Connes:2000tj,Landi:2006qw}
\begin{flalign}
\mathrm{Top}(V)= \int\limits_\MM \mathrm{Tr}_V\bigl(\mathrm{R}^2\bigr)~.
\end{flalign}
Note that all operations required to extend this definition to the twist deformed setting
have been clarified above, in particular the trace $\mathrm{Tr}_V$ can be expressed in terms of the contraction map
$\mathrm{contr}$ (\ref{eqn:contrdiag}).

For studying dynamical problems in noncommutative gauge theory, like e.g.~instantons  \cite{Connes:2000tj,Landi:2006qw},
we are still missing one ingredient in the deformed setting, the Hodge operator.
For twists of the form (\ref{eqn:twistsemikilling}), it is expected that the deformed
Hodge operator is (up to the natural isomorphisms) equal to the undeformed one.
Since the deformations (\ref{eqn:twistsemikilling}) do not assume all vector fields in the twist to be Killing vector fields,
these models go beyond the ones discussed in \cite{Connes:2000tj,Landi:2006qw}. It is interesting to
study if there are qualitatively new features in the properties of noncommutative instantons in this more general
setting. A detailed investigation of gauge invariance in these models, see Chapter \ref{chap:qftproblems},
would also be important.

%%%%%%%%%%%%%%%%%%%%%%%%%%%%%%%%%%%%%%%%%%%%%%%%%%%%%%%
%%%%%%%%%%%%%%%%%%%%%%%%%%%%%%%%%%%%%%%%%%%%%%%%%%%%%%%
%%%%%%%%%%%%%%%%%%%%%%%%%%%%%%%%%%%%%%%%%%%%%%%%%%%%%%%

\part*{Conclusions}

%%%%%%%%%%%%%%%%%%%%%%%%%%%%%%%%%%%%%%%%%%%%%%%%%%%%%%%
%%%%%%%%%%%%%%%%%%%%%%%%%%%%%%%%%%%%%%%%%%%%%%%%%%%%%%%

\vspace{3mm}

\begin{center}
 {\bf Summary of the results}
\end{center}
\vspace{2mm}

In this thesis we have studied different, but strongly connected, aspects of noncommutative geometry,
gravity and quantum field theory.
A particular focus was on developing the required mathematical formalism and methods in order to
bring noncommutative gravity closer to physical applications, like for example in noncommutative cosmology
and black hole physics.

In Part \ref{part:ncg} we have investigated symmetry reduction and exact solutions in noncommutative gravity.
The usual symmetry reduction procedure, which is frequently applied to construct 
exact solutions of Einstein's equations,
could not be directly applied and had to be modified in order to be compatible
 with the deformed Hopf algebra structure of the diffeomorphisms.
We have proposed in Chapter \ref{chap:symred} a possible generalization of symmetry reduction to the noncommutative setting
by making use of almost quantum Lie subalgebras of the quantum Lie algebra of diffeomorphisms.
These algebraic structures replace the usual Lie algebra structure of Killing vector fields in
Einstein's theory of gravity, and shall be interpreted as isometries of noncommutative spacetimes.
We have studied under which conditions a classical Lie algebra of Killing vector fields
can be quantized to yield an almost quantum Lie subalgebra and we have found that there
are compatibility conditions among the classical symmetries and the twist deformation we use.
This singles out preferred choices of twists.
For the special case of abelian Drinfel'd twists, we have shown that these conditions
reduce to the very simply conditions $[X_\alpha,\mathfrak{g}] \subseteq \mathfrak{g}$, for all $\alpha$,
where $X_\alpha$ are the vector fields generating the twist and $\mathfrak{g}$ is the classical Lie algebra
we want to deform. Taking the physically interesting models of a spatially flat Friedmann-Robertson-Walker universe and
a Schwarzschild black hole, we have classified all possible abelian twist deformation of these systems satisfying
our axioms for deformed symmetry reduction. This classification yielded interesting noncommutative 
cosmological and black hole models.

After establishing a formalism for symmetry reduction in noncommutative gravity we have taken the natural next
step in Chapter \ref{chap:ncgsol} and focussed on exact solutions of the noncommutative Einstein equations.
This is in general a highly nontrivial task, since the noncommutative Einstein equations
derived by Wess et al.~\cite{Aschieri:2005yw,Aschieri:2005zs} are not only nonlinear, but also contain
arbitrarily high derivatives due to the noncommutative deformation.
We have explicitly studied the noncommutative Einstein equations for symmetric
background spacetimes, in particular for the noncommutative Friedmann-Robertson-Walker and
Schwarzschild spacetimes found in our classification before.
We have found that for special choices of the twist, all noncommutative corrections
in the Einstein equations drop out, thus leading to exact solutions where the noncommutative metric field
coincides with the undeformed metric field. It was recognized that this is in particular the case
when the twist is constructed from sufficiently many Killing vector fields, such that
all $\star$-products in the noncommutative Einstein equations reduce to the usual products due to the invariance
of the tensors. Even though the noncommutative metric field coincides with the undeformed one for these solutions,
the underlying manifold is noncommutative, leading to distinct effects in the coordinate algebra, and thus
on the localization of spacetime points, and on fields propagating on these spacetimes.

In order to understand quantitatively the new effects in these models we have turned in
Part \ref{part:qft} to noncommutative quantum field theory.
In the literature, noncommutative quantum field theory is typically studied on the Moyal-Weyl deformed
Minkowski spacetime. However, for our models a more general formalism is required, covering also curved spacetimes
and more general deformations.
This provided us with the motivation to develop and study an approach
to quantum field theory on noncommutative curved spacetimes, which was presented in Chapter \ref{chap:qftdef}.
Our strategy was to combine methods from the algebraic approach to quantum field theory on curved spacetimes
with noncommutative differential geometry. 
The starting point is a compactly deformed
normally hyperbolic operator on a compactly deformed Lorentzian manifold acting on a scalar field. 
We have provided explicit examples of such operators coming from an action principle.
We have shown that, on the level of formal deformations, this operator admits unique retarded and advanced
Green's operators, provided the deformed Lorentzian manifold is time-oriented, connected and globally hyperbolic in the classical limit. 
The solution space of the deformed wave equation was characterized abstractly
in terms of the image of the retarded-advanced Green's operator, a feature which is analogous to commutative
field theories.
We have proven that the solution space can be equipped with a symplectic structure,
and thus can be quantized in terms of $\ast$-algebras of field polynomials or $\ast$-algebras of Weyl-type.
This yielded the algebra of observables for a scalar quantum field theory on a large class of noncommutative curved spacetimes.
In Chapter \ref{chap:qftcon} we have studied mathematical aspects of the quantum field theories constructed in 
Chapter \ref{chap:qftdef}. We in particular have shown that there is an easier, but equivalent,
formulation of such theories, where reality properties are more obvious.
Also in this chapter we have derived a remarkable relation between the quantum field theories
on noncommutative curved spacetimes and their commutative counterparts. We have first shown that there are
one-to-one maps between solutions of the deformed and undeformed wave equation. These maps
gave rise to symplectic isomorphisms, such that the deformed field theory can be equivalently described
by an undeformed one. The symplectic isomorphisms lift to algebra isomorphisms between the algebras of observables
of the deformed and undeformed quantum field theory, meaning that the deformed quantum field theory can be described
mathematically in terms of the corresponding undeformed one. 
We have shown that despite this mathematical equivalence, the deformed quantum field theory is able to
lead to new physical effects, provided we include a physical interpretation.

In order to study properties of quantum field theories on noncommutative curved spacetimes in more detail
we have considered explicit applications in Chapter \ref{chap:qftapp}.
The first goal was to derive deformed wave operators for explicit examples of noncommutative Minkowski, de Sitter, Schwarzschild
and anti-de Sitter spacetimes. We then have studied the quantum field theory on toy-models
of Friedmann-Robertson-Walker spacetimes, which were deformed by a twist generated by homothetic Killing vector fields.
The special structure of these deformations allowed us to go beyond the formal power series setting.
In the convergent approach we have obtained interesting new features, in particular, we have shown that
the deformed quantum field theory is not anymore isomorphic to the undeformed one, but only to a
subalgebra which does not include strongly localized observables. This is a particular realization of the
intuitive picture that noncommutative geometry should lead to an improvement of quantum field theories
in the ultraviolet. As a further application, we have considered a more realistic cosmological model
and we have explicitly constructed the symplectic isomorphism mapping between the deformed and the undeformed theory.
We have closed the chapter on applications by discussing briefly a new model for a perturbatively interacting
quantum field theory on a nonstandard noncommutative Euclidean space. This model has completely opposite features
compared to the usual Moyal-Weyl space, and we have shown in particular that 
our model has improved quantum properties at the one-loop level and that the infamous UV/IR-mixing does
not appear. We have pointed out a remarkable relation between our model and the recently proposed Ho{\v r}ava-Lifshitz theories.

Motivated by the open problems and unsolved issues we have faced during our studies on
noncommutative gravity, we have concentrated in Part \ref{part:math} on mathematical
developments in this field. A particular point which had to be clarified was the extension of connections
to tensor fields. 
In Chapter \ref{chap:prelim} and \ref{chap:HAdef} 
we have provided a mathematical introduction to the algebraic structures which are relevant for this part.
Motivated by noncommutative field theory and gravity, we have considered a Hopf algebra
$H$, describing the symmetries of our system, acting covariantly on a noncommutative algebra $A$, describing the quantized
functions on spacetime. We have further considered modules $V$ and $W$ over $A$, transforming covariantly under $H$,
which shall be interpreted as quantized sections of a vector bundle, e.g.~quantized vector fields or one-forms.
We have focused in Chapter \ref{chap:modhom} on endomorphisms of and homomorphisms between
modules of this type. The rich algebraic structure and twist quantization of these spaces was discussed.
We have proven that there is an isomorphism between the
quantized algebra of endomorphisms and the algebra of endomorphisms of the quantized module.
This has shown, in particular, that any undeformed endomorphism can be quantized to yield a deformed
one, and that all deformed endomorphisms can be obtained via this quantization map.
We have extended the result to homomorphisms between two modules.
Motivated again by noncommutative gravity, we have studied in detail the situation where
the Hopf algebra is triangular and all algebras and bimodules are commutative up the 
action of the $R$-matrix. In this quasi-commutative setting we have proven that there is an isomorphism
between the endomorphisms respecting the left module structure and the ones respecting the right module structure of the bimodule.
The same result applies to homomorphisms between quasi-commutative bimodules.
As a consequence, we were able to establish a precise relation between the left linear and right linear
dual of a bimodule. The extension of homomorphisms to tensor products of bimodules was studied in detail, and we have obtained
an explicit expression for the quantization of a product module homomorphism.

In Chapter \ref{chap:con} we have turned to connections on modules and also bimodules.
We have developed a new theory for the extension of connections to products of bimodules,
making heavy use of our results on homomorphisms. We have shown that this extension 
is compatible with twist quantization. Even more, we have proven that all connections on the quantized
module can be obtained by quantizing connections on the undeformed module. 
In the quasi-commutative setting we have also shown that there is an isomorphism between the connections
satisfying the left Leibniz rule and the connections satisfying the right Leibniz rule.
In Chapter \ref{chap:curvature} we gave a preliminary study of the curvature and torsion of
connections. We have derived explicit expressions for the deformed curvature and torsion in terms
of the undeformed connection.
The mathematical results of Chapters \ref{chap:modhom}, \ref{chap:con} and \ref{chap:curvature}
have been applied in Chapter \ref{chap:ncgmath} to noncommutative gravity solutions.
Making use of the new mathematical techniques, we were able to rederive our results
on exact noncommutative gravity solutions in an elegant, global and basis free way, 
and also to extend them to nonabelian twists.

\newpage

\vspace{3mm}

\begin{center}
 {\bf Outlook}
\end{center}
\vspace{2mm}

There are still open issues and room for further investigations in noncommutative gravity and
quantum field theory.
On the mathematical side, noncommutative gravity
is not yet a complete theory, and there remains a lot of hard work in order to consistently 
fill up the missing parts. Of major importance is to provide an existence and uniqueness theorem for a noncommutative
Levi-Civita connection and to understand reality properties of the curvature, torsion and Ricci tensor
in the realm of noncommutative geometry. Comparing, and maybe also combining, our approach with
the recently developed methods by Beggs and Majid \cite{Beggs201195,1742-6596-254-1-012002}
may be fruitful.
A detailed investigation of gravitational fluctuations around a given noncommutative
gravity background and finding out the number of propagating degrees of freedom is also a relevant topic for future research.

On the phenomenological side, we have now reached the stage where we are able to 
go beyond the standard example of the Moyal-Weyl Minkowski spacetime. The methods developed
in this thesis allow us to study a large variety of deformations of not only Minkowski spacetimes, 
but also curved ones, like e.g.~Friedmann-Robertson-Walker universes and Schwarzschild black holes.
A first phenomenological application of our formalism appeared in \cite{Ohl:2010bh}, where we have studied
deformations of Randall-Sundrum spacetimes and their effects on scattering experiments in particle physics.
By using a non-Moyal-Weyl deformation, possible experimental signatures of noncommutative geometry 
in this model are different to the usual noncommutative standard model. In particular, in our model only the
Kaluza-Klein excitations of gravitons are subject to noncommutative geometry effects,
and deviations from the standard model are only expected at the graviton resonance.

The investigation of noncommutative geometry effects, which are not of the Moyal-Weyl type,
in cosmology is an important next step. In particular, the model investigated in Chapter \ref{chap:qftapp},
 Section \ref{sec:isotrop}, is an excellent candidate to study. Since this model is isotropic around one point,
an observer located at this point will not detect any preferred directions in the cosmological microwave background
radiation, opposed to noncommutative cosmologies based on the Moyal-Weyl deformation.
The possible experimental signatures of this model will thus be distinct, and are worth to be worked out in detail.

Another important test for noncommutative quantum field theories is the Unruh effect and Hawking radiation.
These two effects are very robust predictions of quantum field theories on commutative curved spacetimes,
which, however, still wait for their experimental detection. Any deviation from the standard predictions
would be a very interesting result, which also may offer the possibility to pin down in future experiments
 the noncommutative structure of spacetime.

%%%%%%%%%%%%%%%%%%%%%%%%%%%%%%%%%%%%%%%%%%%%%%%%%%%%%%%
%%%%%%%%%%%%%%%%%%%%%%%%%%%%%%%%%%%%%%%%%%%%%%%%%%%%%%%
%%%%%%%%%%%%%%%%%%%%%%%%%%%%%%%%%%%%%%%%%%%%%%%%%%%%%%%

\begin{appendix}

\part*{Appendix}

\chapter{\label{app:basicsdefq}Formal power series and the $\lambda$-adic topology}

In this appendix we introduce the required mathematical methods in order to precisely  
 describe formal deformations. We follow mostly \cite{Kassel:1995xr}, but at some points we add more information
 and interpretations, which we expect to help the reader to understand the basic ideas and features.
 
 %%%%%%%%%%%%%%%%%%%%%%%%%%%%%%%%%%%%%%%%%%%%%%%%%%%%%%%%%%%%%%%%%%
\bgroup
\makeatletter
\def\@tocwrite#1#2{\relax}
\makeatother

\section{The ring $\bbK[[\lambda]]$ and its topology}
Let $\bbK=\bbR$ or $\bbC$ be the field of real or complex numbers. We consider the formal power series extension
$\bbK[[\lambda]]$, whose elements are given by 
\begin{flalign}
\bbK[[\lambda]]\ni \beta =  \sum\limits_{n=0}^\infty \lambda^n \,\beta_{(n)}~,
\end{flalign}
where $\beta_{(n)}\in\bbK$ for all $n\geq 0$.
Using the field structure of $\bbK$ we can equip $\bbK[[\lambda]]$ with the structure of a commutative 
and unital ring by defining
\begin{subequations}
\begin{flalign}
 \label{eqn:Klamplus}\beta +\gamma &:= \sum\limits_{n=0}^\infty\lambda^n\,(\beta_{(n)}+\gamma_{(n)})~,\\
\label{eqn:Klamtimes}\beta\,\gamma &:= \sum\limits_{n=0}^\infty \lambda^n \sum\limits_{m+k=n} \beta_{(m)}\,\gamma_{(k)}~,
\end{flalign}
\end{subequations}
for all $\beta,\gamma\in\bbK[[\lambda]]$.
Any polynomial in $\lambda$ can be considered as an element of $\bbK[[\lambda]]$.
The polynomials in $\lambda$ with coefficients in $\bbK$ are denoted by $\bbK[\lambda]$.
In particular, $1\in\bbK$ gives the unit in $\bbK[[\lambda]]$ by this identification.
Note that $\bbK[[\lambda]]$ is no field due to the following
\begin{lem}
 An element $\beta\in\bbK[[\lambda]]$ is invertible in $\bbK[[\lambda]]$ if and only if $\beta_{(0)}\neq 0$.
\end{lem}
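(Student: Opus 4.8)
The statement is the standard characterization of units in $\bbK[[\lambda]]$. The plan is to prove both directions of the equivalence by elementary manipulations of formal power series, relying only on the ring operations (\ref{eqn:Klamplus}) and (\ref{eqn:Klamtimes}).

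First I would prove the easy direction ($\Rightarrow$): suppose $\beta$ is invertible, so there is $\gamma \in \bbK[[\lambda]]$ with $\beta\,\gamma = 1$. Reading off the zeroth-order coefficient of the product using (\ref{eqn:Klamtimes}), we get $\beta_{(0)}\,\gamma_{(0)} = 1$ in $\bbK$. Since $\bbK$ is a field (in particular has no zero divisors), this forces $\beta_{(0)} \neq 0$.

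For the converse ($\Leftarrow$), suppose $\beta_{(0)} \neq 0$. I would construct the inverse $\gamma = \sum_{n=0}^\infty \lambda^n \gamma_{(n)}$ recursively by solving the system of equations obtained from expanding $\beta\,\gamma = 1$ order by order in $\lambda$. At order $\lambda^0$ the equation is $\beta_{(0)}\gamma_{(0)} = 1$, which is solvable in the field $\bbK$ precisely because $\beta_{(0)} \neq 0$, giving $\gamma_{(0)} = \beta_{(0)}^{-1}$. At order $\lambda^n$ for $n \geq 1$ the equation reads $\sum_{m+k=n}\beta_{(m)}\gamma_{(k)} = 0$, i.e. $\beta_{(0)}\gamma_{(n)} = -\sum_{k=0}^{n-1}\beta_{(n-k)}\gamma_{(k)}$; since the right-hand side involves only the already-determined $\gamma_{(0)},\dots,\gamma_{(n-1)}$, and $\beta_{(0)}$ is invertible in $\bbK$, this determines $\gamma_{(n)}$ uniquely. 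By induction on $n$ all coefficients $\gamma_{(n)}$ are defined, so $\gamma \in \bbK[[\lambda]]$ exists with $\beta\,\gamma = 1$; by commutativity of the ring this $\gamma$ is a genuine two-sided inverse.

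There is no real obstacle here — the argument is a routine recursion. The only point requiring any care is to make the inductive step precise: one must observe that the order-$\lambda^n$ equation isolates $\gamma_{(n)}$ with coefficient exactly $\beta_{(0)}$, so that invertibility of $\beta_{(0)}$ in $\bbK$ (and nothing more) suffices to solve for it. I would present this cleanly rather than grinding through the computation.
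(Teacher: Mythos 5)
Your proof is correct and follows essentially the same route as the paper's: reading off the order-$\lambda^0$ equation $\beta_{(0)}\gamma_{(0)}=1$ for necessity, and solving the order-$\lambda^n$ equations recursively for $\gamma_{(n)}$ for sufficiency. Your write-up is in fact slightly more explicit than the paper's about why the recursion isolates $\gamma_{(n)}$ with coefficient $\beta_{(0)}$, which is the right point to emphasize.
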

\begin{proof}
 $\beta\in\bbK[[\lambda]]$ is invertible if and only if there is a $\gamma\in\bbK[[\lambda]]$, such that
$\beta\,\gamma = 1$. By (\ref{eqn:Klamtimes}) this is equivalent to
$\beta_{(0)}\,\gamma_{(0)}=1$ and
\begin{flalign}
\label{eqn:Klaminv}
\beta_{(0)}\,\gamma_{(n)}+ \beta_{(1)}\gamma_{(n-1)}+ \dots + \beta_{(n)}\gamma_{(0)}=0~,
\end{flalign}
for all $n>0$. The equation $\beta_{(0)}\,\gamma_{(0)}=1$ shows that $\beta_{(0)}\neq 0 $ is a necessary condition
for the invertability of $\beta$. This condition is also sufficient, since $\gamma_{(0)}=1/\beta_{(0)}$ and 
(\ref{eqn:Klaminv}) can be used to determine recursively the elements $\gamma_{(n)}$ for $n>0$.

\end{proof}

As explained in Section \ref{sec:invlim} of this appendix, the ring $\bbK[[\lambda]]$ is isomorphic to 
the inverse limit $\lim\limits_{\overleftarrow{~n~}} \bbK[\lambda]/(\lambda^n)$,
where $(\lambda^n)$ is the ideal generated by $\lambda^n$. It thus can be equipped with a topology, called
the inverse limit topology.  In our setting this topology is also called the $\lambda$-adic topology. The $\lambda$-adic topology
is a metric topology, where the metric is defined as follows: For any nonzero $\beta\in\bbK[[\lambda]]$
we define $\omega(\beta)$ as the unique nonnegative integer, such that $\beta_{(\omega(\beta))}\neq 0$ and
$\beta_{(n)}=0$ for all $n<\omega(\beta)$. We set $\omega(0)=+\infty$. We define a map $\vert\cdot\vert:\bbK[[\lambda]] \to [0,\infty)$
by
\begin{flalign}
 \vert \beta\vert = 2^{-\omega(\beta)}~,
\end{flalign}
for $\beta\neq 0$ and $\vert 0 \vert =0$.
The map $\vert\cdot\vert$ satisfies obviously for all $\beta,\gamma\in\bbK[[\lambda]]$
\begin{flalign}
 \vert \beta \vert=0~\Longleftrightarrow~\beta=0~,\quad \vert -\beta\vert = \vert \beta\vert~,\quad \vert \beta+\gamma\vert \leq
\text{max}(\vert \beta\vert,\vert\gamma\vert)~.
\end{flalign}
\begin{cor}
\label{cor:Kadicmetric}
 Define $d(\beta,\gamma)=\vert\beta-\gamma\vert$ for any $\beta,\gamma\in\bbK[[\lambda]]$. Then $d$ is an ultrametric
on $\bbK[[\lambda]]$, i.e.~we have for all $\beta,\gamma,\delta\in\bbK[[\lambda]]$
\begin{subequations}
 \begin{flalign}
  \nn(i)&\quad d(\beta,\gamma)=0~\Longleftrightarrow~\beta=\gamma~,\\
 \nn(ii)&\quad d(\beta,\gamma)=d(\gamma,\beta)~,\\
\nn(iii)&\quad d(\beta,\delta)\leq \text{max}\bigl(d(\beta,\gamma),d(\gamma,\delta)\bigr)~.
 \end{flalign}
\end{subequations}
\end{cor}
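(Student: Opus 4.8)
The plan is to verify the three ultrametric axioms directly from the properties of $\vert\cdot\vert$ already established just above the corollary, since $d(\beta,\gamma) = \vert\beta-\gamma\vert$ is defined in terms of this map. The key observation is that everything reduces to the three bullet properties of $\vert\cdot\vert$: nondegeneracy, symmetry under negation, and the strong (non-archimedean) triangle inequality $\vert\beta+\gamma\vert \le \max(\vert\beta\vert,\vert\gamma\vert)$.

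First I would prove $(i)$: by definition $d(\beta,\gamma) = \vert\beta-\gamma\vert$, and by the first listed property of $\vert\cdot\vert$ this vanishes if and only if $\beta-\gamma = 0$, i.e.\ $\beta = \gamma$. Next, $(ii)$ follows from the second property $\vert -\delta\vert = \vert\delta\vert$ applied to $\delta = \beta-\gamma$, giving $d(\beta,\gamma) = \vert\beta-\gamma\vert = \vert\gamma-\beta\vert = d(\gamma,\beta)$. For $(iii)$ I would write $\beta-\delta = (\beta-\gamma) + (\gamma-\delta)$ and apply the strong triangle inequality for $\vert\cdot\vert$ to the two summands, obtaining
\begin{flalign}
 d(\beta,\delta) = \vert\beta-\delta\vert = \vert(\beta-\gamma)+(\gamma-\delta)\vert \le \max\bigl(\vert\beta-\gamma\vert,\vert\gamma-\delta\vert\bigr) = \max\bigl(d(\beta,\gamma),d(\gamma,\delta)\bigr)~.
\end{flalign}

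There is essentially no obstacle here: the corollary is an immediate consequence of the properties of $\vert\cdot\vert$ stated in the line preceding it, and those properties were in turn noted to follow "obviously" from the definition of $\omega$ and $\vert\cdot\vert$. If one wanted to be fully self-contained I would additionally spell out why $\vert\beta+\gamma\vert \le \max(\vert\beta\vert,\vert\gamma\vert)$: if either summand is zero this is trivial, and otherwise $\omega(\beta+\gamma) \ge \min(\omega(\beta),\omega(\gamma))$ because the lowest-order coefficient of $\beta+\gamma$ can only appear at or above the minimum of the two individual orders, which upon applying $2^{-(\cdot)}$ reverses the inequality. The only mild subtlety worth a remark is that $d$ takes values in $[0,\infty)$ (never $+\infty$), since $\omega$ of a nonzero element is a finite nonnegative integer; this is why $d$ is a genuine metric and not an extended one. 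I would close by noting that $d$ being an ultrametric is strictly stronger than the ordinary triangle inequality, so in particular $(\bbK[[\lambda]],d)$ is a metric space and the $\lambda$-adic topology it induces coincides with the inverse limit topology mentioned above.
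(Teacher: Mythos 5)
Your proof is correct and follows exactly the route the paper intends: the corollary is stated there as an immediate consequence of the three properties of $\vert\cdot\vert$ listed just before it, and your verification of $(i)$--$(iii)$ via those properties (with the decomposition $\beta-\delta=(\beta-\gamma)+(\gamma-\delta)$ for the ultrametric inequality) is precisely the implicit argument. Your extra remark justifying $\vert\beta+\gamma\vert\leq\max(\vert\beta\vert,\vert\gamma\vert)$ via $\omega(\beta+\gamma)\geq\min(\omega(\beta),\omega(\gamma))$ simply fills in what the paper calls obvious.
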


\begin{lem}
\label{lem:densepoly}
 With respect to the $\lambda$-adic topology the polynomials $\bbK[\lambda]$ are dense in $\bbK[[\lambda]]$.
\end{lem}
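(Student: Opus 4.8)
The statement to prove is Lemma~\ref{lem:densepoly}: the polynomial ring $\bbK[\lambda]$ is dense in $\bbK[[\lambda]]$ with respect to the $\lambda$-adic topology. The plan is to argue directly from the definition of the metric $d$ introduced in Corollary~\ref{cor:Kadicmetric}: given an arbitrary $\beta=\sum_{n=0}^\infty\lambda^n\beta_{(n)}\in\bbK[[\lambda]]$ and any $\varepsilon>0$, I want to exhibit a polynomial within distance $\varepsilon$ of $\beta$.

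First I would choose an integer $N$ large enough that $2^{-N}<\varepsilon$ (possible since $2^{-N}\to 0$). Then I would define the truncated polynomial
\begin{flalign*}
 \beta^{[N]} := \sum\limits_{n=0}^{N}\lambda^n\,\beta_{(n)}\in\bbK[\lambda]~.
\end{flalign*}
Next I would compute $\beta-\beta^{[N]} = \sum_{n=N+1}^\infty\lambda^n\,\beta_{(n)}$, observe that all its coefficients of degree $\le N$ vanish, and hence that $\omega(\beta-\beta^{[N]})\ge N+1$ (with the convention $\omega(0)=+\infty$ covering the case where $\beta-\beta^{[N]}=0$). Consequently
\begin{flalign*}
 d(\beta,\beta^{[N]}) = \vert \beta-\beta^{[N]}\vert = 2^{-\omega(\beta-\beta^{[N]})} \leq 2^{-(N+1)} < 2^{-N} < \varepsilon~.
\end{flalign*}
Since $\beta$ and $\varepsilon>0$ were arbitrary, every open ball around every point of $\bbK[[\lambda]]$ contains a polynomial, which is precisely the assertion that $\bbK[\lambda]$ is dense.

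This is an entirely routine argument and there is no real obstacle; the only point requiring a little care is the bookkeeping with the valuation $\omega$ and the edge case $\beta-\beta^{[N]}=0$, where one simply notes $\omega(0)=+\infty$ so that $\vert 0\vert=0<\varepsilon$ trivially. One could equivalently phrase the proof via the inverse limit description $\bbK[[\lambda]]\cong\varprojlim_n \bbK[\lambda]/(\lambda^n)$ mentioned in the text, noting that the image of $\bbK[\lambda]$ surjects onto each quotient $\bbK[\lambda]/(\lambda^n)$, but the direct metric computation above is the cleanest and is what I would write.
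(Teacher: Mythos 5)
Your proof is correct and uses the same idea as the paper: approximate $\beta$ by its truncations $\sum_{n=0}^{N}\lambda^n\beta_{(n)}$, the only cosmetic difference being that the paper phrases closeness via membership in $\lambda^k\bbK[[\lambda]]$ while you unwind this through the metric $d$. No gaps.
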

\begin{proof}
 Let $\beta\in \bbK[[\lambda]]$ be arbitrary. We have to find a family $\beta_k\in \bbK[\lambda]$, $k>0$, 
such that $\beta-\beta_k\in \lambda^k \bbK[[\lambda]]$ for all $k>0$. We define
$\beta_k=\sum_{n=0}^{k-1}\lambda^n\,\beta_{(n)}$ and find
$\beta-\beta_k = \sum_{n=k}^\infty\lambda^n\,\beta_{(n)}\in \lambda^k \bbK[[\lambda]]$ for all $k>0$.

\end{proof}

%%%%%%%%%%%%%%%%%%%%%%%%%%%%%%%%%%%%%%%%%%%%%%%%%%%%%%%

\section{Topologically free $\bbK[[\lambda]]$-modules}
As a reminder, a module over a unital ring is a generalization of a vector space over a field.
\begin{defi}
 A left $R$-module $M$ over a unital ring $R$ consists of an abelian group $(M,+)$ and a map $\cdot: R\times M\to M$, such that
for all $r,s\in R$ and $m,n\in M$ 
\begin{subequations}
 \begin{flalign}
  r\cdot (m+n)&=r\cdot m+ r\cdot n~,\\
  (r+s)\cdot m&= r\cdot m + s\cdot m~,\\
  (r\,s)\cdot m &= r\cdot(s\cdot m)~,\\
  1\cdot m &= m~.
 \end{flalign}
\end{subequations}
\end{defi}
\noindent For a commutative ring we can drop the term {\it left}. We will also drop $\cdot$ in this section
and write the action of the ring $\bbK[[\lambda]]$ on its modules simply by juxtaposition.

Let $M,N$ be two $\bbK[[\lambda]]$-modules. A $\bbK[[\lambda]]$-module homomorphism 
$P\in\Hom_{\bbK[[\lambda]]}(M,N)$ is a map $P:M\to N$, such that
\begin{subequations}
 \begin{flalign}
  P(\beta\, m) &= \beta\, P(m)~,\\
  P(m+m^\prime)&= P(m)+P(m^\prime)~,
 \end{flalign}
\end{subequations}
for all $\beta\in\bbK[[\lambda]]$ and $m,m^\prime\in M$. We call this type of maps
also $\bbK[[\lambda]]$-linear maps.

Let $M$ be a $\bbK[[\lambda]]$-module. The family of $\bbK[[\lambda]]$-modules 
$(M_n=M/\lambda^nM)_{n\geq 0}$ together with the natural projections 
$(p_n:M_n\to M_{n-1})_{n>0}$ forms an inverse system of $\bbK[[\lambda]]$-modules.
We define the $\bbK[[\lambda]]$-module
\begin{flalign}
\widetilde{M} := \lim\limits_{\overleftarrow{~n~}}M_n~,
\end{flalign}
which has a natural topology, namely the inverse limit topology.
The module $\widetilde{M}$ is called the completion of $M$.
The projections $i_n:M\to M_n$ induce a unique $\bbK[[\lambda]]$-linear map
$i:M\to\widetilde{M}$, such that $\pi_n\circ i = i_n$ for all $n$ (see Proposition \ref{propo:invlimhom}).
We say that the module $M$ is separated, if $\Ker(i)=\lbrace 0\rbrace$, and that 
$M$ is complete, if $i$ is surjective.
For a separated and complete module $i$ provides an isomorphism $M\cong \widetilde{M}$
and we can induce a topology on $M$. As before, we call the topology induced by the inverse limit topology
the $\lambda$-adic topology.  
\begin{propo}
 Let $M,N$ be two separated and complete $\bbK[[\lambda]]$-modules and let
$P\in \Hom_{\bbK[[\lambda]]}(M,N)$ be arbitrary. Then $P$ is a continuous map with respect to the
$\lambda$-adic topology.
\end{propo}
\begin{proof}
 We have to show that for all $m\in M$ and all open balls $P(m)+\lambda^n N$, $n>0$,
 there is an open neighborhood $U_n$ of $m$, such that $P(U_n)\subseteq P(m)+\lambda^n N$.
 Let $U_n = m+ \lambda^n M$, then $P(U_n) = P(m + \lambda^nM) = P(m)+ \lambda^n P(M)
\subseteq P(m) + \lambda^n N$, where we used that $P$ is a $\bbK[[\lambda]]$-module homomorphism.

\end{proof}

Provided a $\bbK[[\lambda]]$-linear map between $M$ and $N$ there is a canonical construction of
a $\bbK[[\lambda]]$-linear map between the completions $\widetilde{M}$ and $\widetilde{N}$.
Consider the inverse systems $(M_n= M/\lambda^nM,p_n)_{n\geq 0}$ 
and $(N_n=N/\lambda^nN,p_n^\prime)_{n\geq 0}$. The map $P:M\to N$ induces a family of $\bbK[[\lambda]]$-linear maps by
\begin{flalign}
 P_n:M_n\to N_n~,~[m]\mapsto [P(m)]~.
\end{flalign}
These maps are well-defined since $P[\lambda^n M]\subseteq \lambda^nN$. The requirements of
Proposition \ref{propo:invlimhom2} (more precisely the analogous statement for modules)
 are satisfied and we obtain the $\bbK[[\lambda]]$-linear map 
\begin{flalign}
\lim\limits_{\overleftarrow{~n~}}P_n: \widetilde{M}\to \widetilde{N}~.
\end{flalign}

We now consider a restricted, but very important, class of separated and complete $\bbK[[\lambda]]$-modules.
Let $V$ be a vector space over $\bbK$ and let $V[[\lambda]]$ be its formal power series extension.
Any element of $V[[\lambda]]$ is of the form
\begin{flalign}
 V[[\lambda]]\ni v=\sum\limits_{n=0}^\infty\lambda^n\,v_{(n)}~,
\end{flalign}
where $v_{(n)}\in V$ for all $n\geq 0$. The $\bbK$-vector space structure of $V$ allows us to induce
a $\bbK[[\lambda]]$-module structure on $V[[\lambda]]$ by defining
\begin{subequations}
\begin{flalign}
 \label{eqn:Vlamplus}v +v^\prime &:= \sum\limits_{n=0}^\infty\lambda^n\,(v_{(n)}+v^\prime_{(n)})~,\\
\label{eqn:Vlamtimes}\beta\, v &:= \sum\limits_{n=0}^\infty \lambda^n \sum\limits_{m+k=n} \beta_{(m)}\,v_{(k)}~,
\end{flalign}
\end{subequations}
for all $\beta\in\bbK[[\lambda]]$ and $v,v^\prime\in V[[\lambda]]$. 
$\bbK[[\lambda]]$-modules of the type $V[[\lambda]]$, where $V$ is a $\bbK$-vector space, are called
topologically free.

The $\bbK[[\lambda]]$-module $V[[\lambda]]$ can be equipped with the $\lambda$-adic topology.
This topology can be induced from a metric
similarly to Corollary \ref{cor:Kadicmetric}:
For any nonzero $v\in V[[\lambda]]$ we define $\omega(v)$ as the unique nonnegative integer, 
such that $v_{(\omega(v))}\neq 0$ and $v_{(n)}=0$ for all $n<\omega(v)$. 
We set $\omega(0)=+\infty$. We define a map $\vert\cdot\vert: V[[\lambda]] \to [0,\infty)$
by
\begin{flalign}
 \vert v \vert = 2^{-\omega(v)}~,
\end{flalign}
for $v\neq 0$ and $\vert 0 \vert =0$. The metric on $V[[\lambda]]$ is then defined by 
$d(v,v^\prime) = \vert v-v^\prime\vert$, for all $v,v^\prime\in V[[\lambda]]$.

Analogously to Lemma \ref{lem:densepoly} one can show that the polynomials $V[\lambda]$ are
dense in $V[[\lambda]]$ with respect to the $\lambda$-adic topology.

There is a simple characterization of topologically free modules.
We state it without proof and refer to \cite{Kassel:1995xr} for details.
\begin{propo}
A $\bbK[[\lambda]]$-module is topologically free if and only if
it is separated, complete and torsion-free (i.e.~$\lambda m\neq 0$ for all
$m\neq 0$).
\end{propo}

%%%%%%%%%%%%%%%%%%%%%%%%%%%%%%%%%%%%%%%%%%%%%%%%%%%%%%%

\section{Homomorphisms between topologically free $\bbK[[\lambda]]$-modules}
Let $V,W,Z$ be $\bbK$-vector spaces and let $V[[\lambda]],W[[\lambda]],Z[[\lambda]]$
be the corresponding topologically free $\bbK[[\lambda]]$-modules.
It turns out that all $P\in\Hom_{\bbK[[\lambda]]}(V[[\lambda]],W[[\lambda]])$ can be equivalently
described by a family $(P_{(n)}:V\to W)_{n\geq 0}$ of vector space homomorphisms.
\begin{propo}
\label{eqn:topmodulemaps}
 Let $(P_{(n)}:V\to W)_{n\geq 0}$ be an arbitrary family of $\bbK$-linear maps, then
 \begin{flalign}
\label{eqn:vectormodulemap}
  P:V[[\lambda]]\to W[[\lambda]]\,,~v\mapsto P(v) = \sum\limits_{n=0}^\infty \lambda^n \sum\limits_{m+k=n} P_{(m)}(v_{(k)})~
 \end{flalign}
is a $\bbK[[\lambda]]$-linear map.\vspace{1mm}\\
 Let now $P\in\Hom_{\bbK[[\lambda]]}(V[[\lambda]],W[[\lambda]])$ be arbitrary, then we can define
a family of $\bbK$-linear maps $(P_{(n)}:V\to W)_{n\geq 0}$ by $P_{(n)}(v):= \bigl(P(v)\bigr)_{(n)}$ for all $v\in V$.
The original $P$ is equal to the map (\ref{eqn:vectormodulemap}) constructed by these $P_{(n)}$.
\end{propo}
\begin{proof}
 We have to check if (\ref{eqn:vectormodulemap}) is a $\bbK[[\lambda]]$-linear map.
We have for all $v,v^\prime\in V[[\lambda]]$ and $\beta\in\bbK[[\lambda]]$
\begin{flalign}
 \nn P(v+v^\prime) &= \sum\limits_{n=0}^\infty \lambda^n \sum\limits_{m+k=n} P_{(m)}(v_{(k)}+v^\prime_{(k)})\\
 \nn &=  \sum\limits_{n=0}^\infty \lambda^n \Bigl(\sum\limits_{m+k=n} P_{(m)}(v_{(k)}) + \sum\limits_{m+k=n} P_{(m)}(v^\prime_{(k)})\Bigr)\\
 &= P(v)+P(v^\prime)~,
\end{flalign}
and 
\begin{flalign}
 \nn P(\beta\,v) &= \sum\limits_{n=0}^\infty \lambda^n\,\sum\limits_{m+k=n} P_{(m)}\Bigl(\sum\limits_{i+j=k}\beta_{(i)}\,v_{(j)}\Bigr)\\
 \nn &=\sum\limits_{n=0}^\infty \lambda^n\,\sum\limits_{m+i+j=n} P_{(m)}\Bigl(\beta_{(i)}\,v_{(j)}\Bigr)\\
 \nn &=\sum\limits_{n=0}^\infty \lambda^n\,\sum\limits_{m+i+j=n} \beta_{(i)}\,P_{(m)}(v_{(j)})\\
 \nn &=\sum\limits_{n=0}^\infty \lambda^n\,\sum\limits_{i+k=n}\beta_{(i)}\,\sum\limits_{m+j=k}P_{(m)}(v_{(j)})\\
 &= \beta\,P(v)~.
\end{flalign}

Let now $P\in\Hom_{\bbK[[\lambda]]}(V[[\lambda]],W[[\lambda]])$ and define a family of $\bbK$-linear
maps $(P_{(n)}:V\to W)_{n\geq 0}$ by $P_{(n)}(v):= \bigl(P(v)\bigr)_{(n)}$, for all $v\in V$. 
Obviously, $P_{(n)}(v+v^\prime) = \bigl(P(v+v^\prime)\bigr)_{(n)} = \bigl(P(v)+P(v^\prime)\bigr)_{(n)} = P_{(n)}(v)+P_{(n)}(v^\prime)$
and $P_{(n)}(\beta\,v) = \bigl(P(\beta\,v)\bigr)_{(n)}= \bigl(\beta\, P(v)\bigr)_{(n)} = \beta\,P_{(n)}(v)$, for
all $v,v^\prime\in V$ and $\beta\in\bbK$.  Since all $\bbK[[\lambda]]$-linear maps are continuous and
$V[\lambda]$ is dense in $V[[\lambda]]$ it is sufficient to proof that $P$ is equal to (\ref{eqn:vectormodulemap})
constructed by the $P_{(n)}$ above on $V[\lambda]$. We obtain for all $v\in V[\lambda]$
\begin{flalign}
\nn  P(v) &= P\Bigl(\sum\limits_{n=0}^N \lambda^n\,v_{(n)}\Bigr) = \sum\limits_{n=0}^N\lambda^n P(v_{(n)})\\
\nn &= \sum\limits_{n=0}^N\lambda^n \sum\limits_{m=0}^\infty \lambda^m\,\bigl(P(v_{(n)})\bigr)_{(m)} 
= \sum\limits_{m=0}^\infty \sum\limits_{n=0}^N\lambda^{n+m}  \,P_{(m)}(v_{(n)})\\
 &= \sum\limits_{m=0}^\infty \lambda^m \sum\limits_{n+k=m}P_{(n)}(v_{(k)})~.
\end{flalign}

\end{proof}
For the map (\ref{eqn:vectormodulemap}) we introduce the following convenient notation
\begin{flalign}
\label{eqn:vectormodulemap2}
 P = \sum\limits_{n=0}^\infty \lambda^n\,P_{(n)}~.
\end{flalign}
Given $P,Q\in\Hom_{\bbK[[\lambda]]}(V[[\lambda]],W[[\lambda]])$ and $\beta\in\bbK[[\lambda]]$
 we can define new $\bbK[[\lambda]]$-linear maps
$\beta\,P$ and $P+Q$ by
\begin{flalign}
 (\beta\,P)(v) := \beta\,P(v)~,\quad (P+Q)(v) := P(v)+ Q(v)~,
\end{flalign}
for all $v\in V[[\lambda]]$.  This means that $\Hom_{\bbK[[\lambda]]}(V[[\lambda]],W[[\lambda]])$ is a $\bbK[[\lambda]]$-module
itself. Note that due to (\ref{eqn:vectormodulemap2}) the $\bbK[[\lambda]]$-module
$\Hom_{\bbK[[\lambda]]}(V[[\lambda]],W[[\lambda]])$ is topologically free. 
In the notation (\ref{eqn:vectormodulemap2}) we have
\begin{flalign}
 \beta\,P = \sum\limits_{n=0}^\infty\lambda^n\sum\limits_{m+k=n}\beta_{(m)}\,P_{(k)}~,\quad 
 P+Q = \sum\limits_{n=0}^\infty \lambda^n \bigl( P_{(n)}+ Q_{(n)}\bigr)~,
\end{flalign}
where on the right hand side we have used the $\bbK$-vector space structure on $\bbK$-linear maps.

Let now $P\in\Hom_{\bbK[[\lambda]]}(V[[\lambda]],W[[\lambda]]) $ and $Q\in\Hom_{\bbK[[\lambda]]}(W[[\lambda]],Z[[\lambda]])$.
We can compose $P$ and $Q$, yielding  $Q\circ P\in\Hom_{\bbK[[\lambda]]}(V[[\lambda]],Z[[\lambda]])$
defined by $(Q\circ P)(v) := Q(P(v))$, for all $v\in V[[\lambda]]$. In the notation
(\ref{eqn:vectormodulemap2}) the composed map reads
\begin{flalign}
 Q\circ P = \sum\limits_{n=0}^\infty\lambda^n \sum\limits_{m+k=n} Q_{(m)}\circ P_{(k)}~,
\end{flalign}
where on the right hand side the composition is the composition of $\bbK$-linear maps.

%%%%%%%%%%%%%%%%%%%%%%%%%%%%%%%%%%%%%%%%%%%%%%%%%%%%%%%

\section{The topological tensor product}
Let $M,N$ be $\bbK[[\lambda]]$-modules. 
The algebraic tensor product of $M$ and $N$, denoted by $M\otimes_{\bbK[[\lambda]]} N$, is as usual
defined as the quotient of the free $\bbK[[\lambda]]$-module over the Cartesian product $M\times N$
by the submodule generated by
\begin{subequations}
\begin{flalign}
 &(m+m^\prime, n) - (m,n) - (m^\prime , n)~,\\
 &(m,n +n^\prime) - (m,n) - (m, n^\prime)  ~,\\
 &(\beta\, m,\gamma\,n) - \beta\,\gamma\,(m,n)~,
\end{flalign}
\end{subequations}
for all $m,m^\prime \in M$, $n,n^\prime\in N$ and $\beta,\gamma\in\bbK[[\lambda]]$.
Given $m\in M$ and $n\in N$ we denote by $m\otimes_{\bbK[[\lambda]]}n$ the image of $(m,n)$ under the natural 
$\bbK[[\lambda]]$-bilinear map $M\times N\to M\otimes_{\bbK[[\lambda]]} N$.

For reasons which become clear later in this section, the algebraic tensor product is not always suitable when
dealing with formal power series. 
\begin{defi}
Let $M,N$ be two $\bbK[[\lambda]]$-modules. The topological tensor product $M\widetilde{\otimes} N$ 
is defined as the $\lambda$-adic completion of $M\otimes_{\bbK[[\lambda]]}N$, i.e.~
\begin{flalign}
M\widetilde{\otimes}N := \left(M\otimes_{\bbK[[\lambda]]}N\right)^{\widetilde{~~}} = \lim_{\overleftarrow{~n~}} \left(M\otimes_{\bbK[[\lambda]]}N\right)/\lambda^n\left(M\otimes_{\bbK[[\lambda]]}N\right) ~.
\end{flalign} 
\end{defi}
Given $m\in M$ and $n\in N$ we denote by $m\widetilde{\otimes} n $ the image of $(m,n)$ under the natural
maps $M\times N\to M\otimes_{\bbK[[\lambda]]}N \to M\widetilde{\otimes} N$. The span of these elements lies dense in 
$M\widetilde{\otimes}N$.
Furthermore, the topological tensor product has the following properties, for all $\bbK[[\lambda]]$-modules
$M,N,L$
\begin{subequations}
\begin{flalign}
(M\totimes N)\totimes L &\cong M\totimes (N\totimes L)~,\\
\label{eqn:commutoptens}M\totimes N &\cong N\totimes M~,\\
\bbK[[\lambda]]\totimes M &= \widetilde{M} = M\totimes \bbK[[\lambda]]~.
\end{flalign}
\end{subequations}
This is a direct consequence of the definition of the topological tensor product as an inverse limit,
the corresponding $\bbK[[\lambda]]$-module isomorphisms for the algebraic tensor product 
and Proposition \ref{propo:invlimhom2}.
For a better understanding we prove (\ref{eqn:commutoptens}):
Let $\tau: M\otimes_{\bbK[[\lambda]]} N \to N\otimes_{\bbK[[\lambda]]} M $ be the corresponding
$\bbK[[\lambda]]$-module isomorphism for the algebraic tensor product.
The $\bbK[[\lambda]]$-linear maps 
\begin{flalign}
 \tau_n: &M\otimes_{\bbK[[\lambda]]} N /\lambda^n(M\otimes_{\bbK[[\lambda]]} N) \to 
N\otimes_{\bbK[[\lambda]]} M/\lambda^n(N\otimes_{\bbK[[\lambda]]} M)~,~[v]\mapsto [\tau(v)]~,
\end{flalign}
are well-defined isomorphisms since $\tau[\lambda^n(M\otimes_{\bbK[[\lambda]]} N)] = \lambda^n(N\otimes_{\bbK[[\lambda]]} M)$.
The requirements for Proposition \ref{propo:invlimhom2} are satisfied and we obtain a unique $\bbK[[\lambda]]$-module
isomorphism $M\totimes N \to N\totimes M$.

The topological tensor product is functorial, i.e.~provided $\bbK[[\lambda]]$-linear
maps $P:M\to M^\prime$ and $Q:N\to N^\prime$ then there is a $\bbK[[\lambda]]$-linear
map
\begin{flalign}
P\totimes Q :M\totimes N\to M^\prime\totimes N^\prime
\end{flalign}
enjoying the formal properties of the algebraic tensor product. Let us be more detailed at this point
and explain the construction of $P\totimes Q$. Let $P\otimes_{\bbK[[\lambda]]} Q:
M\otimes_{\bbK[[\lambda]]} N\to M^\prime\otimes_{\bbK[[\lambda]]} N^\prime$ be the $\bbK[[\lambda]]$-linear map
defined by $P\otimes_{\bbK[[\lambda]]}Q(m\otimes_{\bbK[[\lambda]]}n) = P(m)\otimes_{\bbK[[\lambda]]}Q(n)$,
for all $m\in M$ and $n\in N$. We define a family of $\bbK[[\lambda]]$-linear maps
\begin{flalign}
\nonumber (P\otimes_{\bbK[[\lambda]]} Q)_n:&
M\otimes_{\bbK[[\lambda]]} N/\lambda^n(M\otimes_{\bbK[[\lambda]]} N) \to M^\prime\otimes_{\bbK[[\lambda]]} 
N^\prime/\lambda^n(M^\prime\otimes_{\bbK[[\lambda]]} N^\prime)~,\\
& [v] \mapsto [P\otimes_{\bbK[[\lambda]]}Q(v)]~.
\end{flalign}
These maps are well-defined since due to $\bbK[[\lambda]]$-linearity 
$(P\otimes_{\bbK[[\lambda]]}Q) [\lambda^n( M\otimes_{\bbK[[\lambda]]} N)]
\subseteq \lambda^n(M^\prime\otimes_{\bbK[[\lambda]]} N^\prime)$.
The requirements for Proposition \ref{propo:invlimhom2} are satisfied and we define 
$P\totimes Q := \lim\limits_{\overleftarrow{~n~}}(P\otimes_{\bbK[[\lambda]]}Q)_n$.
For composeable maps $P\totimes Q$ and $P^\prime\totimes Q^\prime$ we obtain
\begin{flalign}
(P^\prime\totimes Q^\prime)\circ (P\totimes Q) = (P^\prime\circ P)\totimes (Q^\prime\circ Q)~.
\end{flalign}

The topological tensor product closes on topologically free modules.
\begin{propo}
Let $V,W$ be $\bbK$-vector spaces. Then $V[[\lambda]]\totimes W[[\lambda]] = (V\otimes_\bbK W)[[\lambda]]$.
\end{propo}
\noindent For a proof of this proposition we refer to \cite{Kassel:1995xr}.

%%%%%%%%%%%%%%%%%%%%%%%%%%%%%%%%%%%%%%%%%%%%%%%%%%%%%%%

\section{Topological algebras, Hopf algebras and modules}
The definition of topological algebras, Hopf algebras and  modules 
is analogous to the algebraic case (see Chapter \ref{chap:prelim}), with the difference that all algebraic
tensor products are replaced by topological ones. For completeness, we are going to review them.

A topological algebra is a triple $(A,\mu,\mathbf{e})$, where 
$A$ is a $\bbK[[\lambda]]$-module, $\mu:A\totimes A\to A$ and
$\mathbf{e}:\bbK[[\lambda]]\to A$ are $\bbK[[\lambda]]$-linear maps, such that
\begin{subequations}
 \begin{flalign}
  \mu\circ (\mu\totimes\id_A) &= \mu\circ (\id_A\totimes \mu)~,\\
  \mu\circ (\mathbf{e}\totimes \id_A) &= \id_A =\mu\circ (\id_A\totimes \mathbf{e})~.
 \end{flalign}
\end{subequations}
A  homomorphism $\varphi:(A,\mu,\mathbf{e})\to(A^\prime,\mu^\prime,\mathbf{e}^\prime)$ of topological algebras
is a $\bbK[[\lambda]]$-linear map $\varphi:A\to A^\prime$, such that
\begin{flalign}
 \varphi\circ \mu = \mu^\prime \circ (\varphi\totimes\varphi)~,\quad \varphi\circ\mathbf{e}=\mathbf{e}^\prime~.
\end{flalign}

Let us provide some examples.
Identifying $\bbK[[\lambda]]\totimes \bbK[[\lambda]]$ with $\bbK[[\lambda]]$,
we have the topological algebra $(\bbK[[\lambda]],\id_{\bbK[[\lambda]]},\id_{\bbK[[\lambda]]})$.
Furthermore, given two topological algebras $(A,\mu,\mathbf{e})$ and $(A^\prime,\mu^\prime,\mathbf{e}^\prime)$ then
\begin{flalign}
 \bigl(A\totimes A^\prime, \mu\totimes \mu^\prime \circ(\id_A \totimes \tau\totimes \id_{A^\prime}),
\mathbf{e}\totimes\mathbf{e}^\prime\bigr)~,
\end{flalign}
where $\tau:A^\prime \totimes A \to A\totimes A^\prime$ is the canonical isomorphism,
is a topological algebra.

A  topological bialgebra is a quintuple $(A,\mu,\mathbf{e},\Delta,\epsilon)$, where $(A,\mu,\mathbf{e})$
is a topological algebra, $\Delta:A\to A\totimes A$ and $\epsilon: A\to \bbK[[\lambda]]$ are 
homomorphisms of topological algebras, such that
\begin{subequations}
\begin{flalign}
 (\id_A\totimes \Delta)\circ \Delta &= (\Delta\totimes \id_A)\circ \Delta~,\\
(\epsilon\totimes \id_A)\circ \Delta &= \id_A =(\id_A\totimes \Delta)\circ\Delta~.
\end{flalign}
\end{subequations}
A homomorphism $\varphi:(A,\mu,\mathbf{e},\Delta,\epsilon)\to
(A^\prime,\mu^\prime,\mathbf{e}^\prime,\Delta^\prime,\epsilon^\prime)$  of topological bialgebras
is a homomorphism $\varphi$ between the underlying topological algebras, such that
\begin{flalign}
 (\varphi\totimes\varphi)\circ \Delta = \Delta^\prime\circ \varphi~,\quad \epsilon=\epsilon^\prime\circ \varphi~.
\end{flalign}

A  topological Hopf algebra is a sextuple $(H,\mu,\mathbf{e},\Delta,\epsilon,S)$, where
$(H,\mu,\mathbf{e},\Delta,\epsilon)$ is a topological bialgebra and $S: H\to H$ is a $\bbK[[\lambda]]$-linear
map, such that
\begin{flalign}
 \mu \circ(\id_H\totimes S)\circ\Delta = \mu \circ(S\totimes \id_H)\circ\Delta = \mathbf{e}\circ\epsilon~.
\end{flalign}

A topological right $A$-module is a triple $\bigl(V,(A,\mu,\mathbf{e}),\cdot\bigr)$, where $V$ is a $\bbK[[\lambda]]$-module,
$(A,\mu,\mathbf{e})$ is a topological algebra and $\cdot:V\totimes A\to V$ is a $\bbK[[\lambda]]$-linear map,
such that
\begin{flalign}
\cdot \circ (\id_V\totimes\mu) = \cdot \circ (\cdot \totimes\id_A)~,\quad \cdot\circ (\id_V\totimes \mathbf{e} ) = \id_V~.
\end{flalign}
A topological left $A$-module and $A$-bimodule are defined analogously.

%%%%%%%%%%%%%%%%%%%%%%%%%%%%%%%%%%%%%%%%%%%%%%%%%%%%%%%

\section{\label{sec:invlim}Appendix: Inverse limits}
Since inverse limits might not be familiar to the reader, we briefly review the basic ideas and 
definitions following \cite{Kassel:1995xr}. An inverse system of abelian groups $(A_n,p_n)$ is a family $(A_n)_{n\geq 0}$
of abelian groups and of group homomorphisms $(p_n:A_n\to A_{n-1})_{n>0}$. Given this data
we can define the inverse limit $\lim\limits_{\overleftarrow{~n~}}A_n$ by
\begin{flalign}
\lim\limits_{\overleftarrow{~n~}}A_n := \Bigl\lbrace (x_n)_{n\geq 0}\in \prod\limits_{n\geq 0} A_n\,:\,p_n(x_n) = x_{n-1}\,\text{ for all }n>0 \Bigr\rbrace~.
\end{flalign}
The inverse limit can be equipped with an abelian group structure by defining 
$(x_n)_n +(y_n)_n = (x_n+y_n)_n$, for all $(x_n)_n,(y_n)_n\in \lim\limits_{\overleftarrow{~n~}}A_n$.
Remark:  $(x_n)_n +(y_n)_n\in\lim\limits_{\overleftarrow{~n~}}A_n$, since
 $p_n(x_n+y_n)=p_n(x_n)+p_n(y_n)=x_{n-1}+y_{n-1}$ for all $n>0$.
 
There is a natural group homomorphism $\pi_k:\lim\limits_{\overleftarrow{~n~}}A_n\to A_k\,,~(x_n)_n\mapsto x_k$, for all
$k\in\bbN^0$. If all $p_n$ are surjective, then so are the $\pi_n$. By definition  of the inverse limit we have
$p_n\circ \pi_n = \pi_{n-1}$, for all $n>0$.

Let $C$ be any abelian group and $f:C\to \lim\limits_{\overleftarrow{~n~}} A_n$ be a group homomorphism.
Then we can define a family of group homomorphisms $(f_n:=\pi_n \circ f: C\to A_n)_{n\geq 0}$, satisfying
$p_n\circ f_n=p_n\circ \pi_n\circ f = \pi_{n-1}\circ f = f_{n-1}$, for all $n>0$.
The other way around is also possible due to 
\begin{propo}\label{propo:invlimhom}
For any abelian group $C$ and any given family $(f_n:C\to A_n)_{n\geq 0}$ of group homomorphisms
such that $p_n\circ f_n =f_{n-1}$ for all $n>0$, there exists a unique group homomorphism
\begin{flalign}
f:C\to \lim\limits_{\overleftarrow{~n~}}A_n~,
\end{flalign}
such that $\pi_n\circ f = f_n$ for all $n\geq 0$.
\end{propo}
\begin{proof}
The family $(f_n)_n$ defines a unique group homomorphism $f$ from $C$
to $\prod_{n\geq0} A_n$. Due to $p_n\circ f_n = f_{n-1}$, for all $n>0$,
the image of $f$ is contained in the inverse limit. This shows the existence of $f$.

Assume that there are two maps $f,\tilde f$ satisfying the hypothesis above.
Then $\pi_n\circ (f-\tilde f) =f_n-f_n=0$ for all $n\geq 0$, and thus $\tilde f =f$. This shows the uniqueness of $f$.

\end{proof}
This proposition establishes a one-to-one correspondence between group homomorphisms into the inverse limit
and families of group homomorphisms into the inverse system satisfying the hypothesis above.

The inverse limit is functorial, i.e.~provided a homomorphism between inverse systems there is
a unique homomorphism between the corresponding inverse limits.
\begin{propo}\label{propo:invlimhom2}
Let $(A_n,p_n)$ and $(A_n^\prime,p_n^\prime)$ be two inverse systems of abelian groups and
let $(f_n:A_n\to A^\prime_n)_{n\geq 0}$ be a family of group homomorphisms such that
$p_n^\prime \circ f_n = f_{n-1}\circ p_n$ for all $n>0$. Then there exists a unique group homomorphism
\begin{flalign}
 f = \lim\limits_{\overleftarrow{~n~}}f_n : \lim\limits_{\overleftarrow{~n~}}A_n\to \lim\limits_{\overleftarrow{~n~}}A_n^\prime~,
\end{flalign}
such that $\pi_n^\prime \circ f = f_n\circ \pi_n$ for all $n\geq 0$.
\end{propo}
\begin{proof}
Consider the family of group homomorphisms $(f_n\circ \pi_n:\lim\limits_{\overleftarrow{~n~}}A_n\to A^\prime_n)_{n\geq0}$.
We have $p^\prime_n\circ f_n\circ \pi_n = f_{n-1}\circ p_n\circ \pi_n = f_{n-1}\circ \pi_{n-1}$, for all $n>0$.
By Proposition \ref{propo:invlimhom} there is a unique 
$f:\lim\limits_{\overleftarrow{~n~}}A_n\to \lim\limits_{\overleftarrow{~n~}}A_n^\prime$, such
that $\pi^\prime_n\circ f = f_n\circ \pi_n$, for all $n\geq 0$.

\end{proof}
Let $(A_n,p_n)$, $(A_n^\prime,p_n^\prime)$ and $(A^{\prime\prime}_n,p^{\prime\prime}_n)$ be three inverse systems
of abelian groups and let $(f_n:A_n\to A^\prime_n)_{n\geq 0}$ and $(g_n:A^\prime_n\to A^{\prime\prime}_n)_{n\geq 0}$
be two families of group homomorphisms such that $p_n^{\prime\prime}\circ g_n=g_{n-1}\circ p_n^\prime$
and $p_n^\prime\circ f_n = f_{n-1}\circ p_n$ for all $n>0$. Obviously, the family of group homomorphisms
$(g_n\circ f_n:A_n\to A_n^{\prime\prime})_{n\geq 0}$ satisfies $p_n^{\prime\prime}\circ g_n\circ f_n =
g_{n-1}\circ f_{n-1}\circ p_n$ for all $n>0$, and by Proposition \ref{propo:invlimhom2} there is a unique
group homomorphism $\lim\limits_{\overleftarrow{~n~}}(g_n\circ f_n) :\lim\limits_{\overleftarrow{~n~}}A_n\to 
\lim\limits_{\overleftarrow{~n~}}A_n^{\prime\prime} $ satisfying $\pi_n^{\prime\prime} \circ 
\lim\limits_{\overleftarrow{~n~}}(g_n\circ f_n) = g_n\circ f_n\circ \pi_n$, for all $n\geq 0$.
We obtain by employing the uniqueness
\begin{flalign}
 g\circ f = \bigl(\lim\limits_{\overleftarrow{~n~}}g_n\bigr)\circ \bigl(\lim\limits_{\overleftarrow{~n~}}f_n\bigr) = 
\lim\limits_{\overleftarrow{~n~}}(g_n\circ f_n)~,
\end{flalign}
since $\pi_n^{\prime\prime}\circ g\circ f = g_n\circ \pi_n^\prime \circ f = g_n\circ f_n\circ \pi_n$ for all $n\geq0$.
As a consequence, a group homomorphism $f=\lim\limits_{\overleftarrow{~n~}}f_n:\lim\limits_{\overleftarrow{~n~}}A_n\to 
\lim\limits_{\overleftarrow{~n~}}A_n^{\prime}$ is an isomorphism, if  $f_n:A_n\to A_n^\prime$
are group isomorphisms for all $n\geq 0$. The inverse is $f^{-1}=\lim\limits_{\overleftarrow{~n~}}f_n^{-1}:
\lim\limits_{\overleftarrow{~n~}}A^\prime_n\to \lim\limits_{\overleftarrow{~n~}}A_n$.

\begin{rem}
Analogously one can define an inverse system of rings, modules, algebras, etc., by replacing the abelian groups $A_n$
and group homomorphisms $p_n$ by the desired algebraic objects and morphisms, respectively.
\end{rem}

To become more familiar with these notions let us provide an
\begin{ex}
Let $\bbK=\bbR$ or $\bbC$ and $\bbK[[\lambda]]$ the ring of formal power series.
We denote the (ring) ideal generated by $\lambda^n$ by $(\lambda^n)$ and define $\bbK_n:= \bbK[[\lambda]]/(\lambda^n)$
for all $n\in\bbN^0$. Note that $K_n$ is isomorphic (as a ring) to the truncated polynomials $\bbK[\lambda]/(\lambda^n)$, where
 $(\lambda^n)$ here denotes the ideal w.r.t.~the polynomial ring $\bbK[\lambda]$.

Due to the inclusion $(\lambda^n)\subset (\lambda^{n-1})$  the identity map $\id:\bbK[[\lambda]]\to\bbK[[\lambda]]$
reduces to surjective ring homomorphisms $p_n:\bbK_n\to\bbK_{n-1}$.
Using representatives, $p_n$ is given by 
\begin{flalign}
p_n\left(\sum\limits_{m=0}^{n-1}\lambda^m\, \beta_{(m)}\right)=\sum\limits_{m=0}^{n-2}\lambda^m\,\beta_{(m)}~,
\end{flalign}
i.e.~it ``throws away'' the highest power in $\lambda$.

The inverse limit is then given by
\begin{flalign}
\lim\limits_{\overleftarrow{~n~}}K_n := \Bigl\lbrace (k_n)_{n\geq 0}\in \prod\limits_{n\geq 0} K_n\,:\,p_n(k_n) = k_{n-1}\,\text{ for all }n>0 \Bigr\rbrace~.
\end{flalign}
Using representatives we can get a heuristic understanding of this ring:
The elements of the inverse limit are collections of truncated formal power series, 
where (for $n>m$) $k_n$ and $k_m$ agree up to order $m-1$.  
Since the inverse limit includes all $n\in\bbN^0$ one expects that 
it is isomorphic (as a ring) to $\bbK[[\lambda]]$ itself. 

This can be made rigorous by employing Proposition \ref{propo:invlimhom}.
Consider the family of projections $(f_n:\bbK[[\lambda]]\to\bbK_n)_{n\geq0}$ where 
$f_n(\beta) = [\beta]$ for all $\beta\in\bbK[[\lambda]]$ (for notational reasons we do not label $[\cdot]$ by $n$).
It is easy to see that $p_n\circ f_n=f_{n-1}$ for all $n>0$. By Proposition \ref{propo:invlimhom}
(more precisely the analogous statement for rings) there is a unique $f:\bbK[[\lambda]]\to \lim\limits_{\overleftarrow{~n~}}K_n$,
such that $\pi_n\circ f = f_n$. This map is obviously injective and surjective, thus it is a ring isomorphism.
\end{ex}

The inverse limit can be naturally equipped with a topology, the so-called inverse limit topology.
This is done as follows: Equip all groups $A_n$ with the discrete topology, i.e.~the topology where every subset is open.
The inverse limit topology is then the restriction of the product topology on $\prod_{n\geq0}A_n$.
This means that a basis of open sets is given by $\pi_n^{-1}(U_n)$, where $n\in\bbN^0$ and $U_n\subset A_n$ 
is any subset. 
A neighborhood basis of $0\in \lim\limits_{\overleftarrow{~n~}} A_n$ is given by 
$\pi^{-1}_n(\lbrace 0 \rbrace)$, $n\geq 0$, since $\lbrace 0\rbrace$ is a neighborhood basis
of $0\in A_n$.

\egroup

%%%%%%%%%%%%%%%%%%%%%%%%%%%%%%%%%%%%%%%%%%%%%%%%%%%%%%%
%%%%%%%%%%%%%%%%%%%%%%%%%%%%%%%%%%%%%%%%%%%%%%%%%%%%%%%

\chapter{\label{app:twistqft}On the twist deformation of the algebra of field polynomials}
In this appendix we study the twist approach to quantum field theory, which is typically used for the Moyal-Weyl deformation
of a Minkowski quantum field theory \cite{Zahn:2006wt,Balachandran:2007vx,Aschieri:2007sq}. 
We show that the twist deformation of the algebra
of field polynomials along homothetic Killing vector fields is possible if and only if the vector fields are Killing.
Note that in the approach presented in Part \ref{part:qft} no such restriction exist.

Let $\mathcal{A}_{(V,\omega)}[[\lambda]]$ be the formal power series extension of the $\ast$-algebra of field polynomials 
of a commutative quantum field theory. The basic idea of the twist approach to noncommutative quantum field theory 
is to replace the usual algebra product by a $\star$-product
\begin{flalign}
 a\star b=(\bar f^\alpha\triangleright a)\,(\bar f_\alpha \triangleright b)~,
\end{flalign}
for all $a,b\in\mathcal{A}_{(V,\omega)}[[\lambda]]$. We restrict ourselves to twists 
$\mathcal{F}^{-1}=\bar f^\alpha\otimes\bar f_\alpha$ generated by homothetic 
Killing vector fields $\mathfrak{H}$, i.e.~for all $v\in\mathfrak{H}$ there is a $c_v\in\bbC$, such that
we have for the metric $\mathcal{L}_v(g) = c_v\,g$.

We assume the action $\triangleright$ of the twist on 
$\mathcal{A}_{(V,\omega)}$ to be the natural (geometric) action in order to interpret the deformation 
as a spacetime deformation. The geometric action of the Lie algebra $\mathfrak{H}$ is defined on the 
generators of $\mathcal{A}_{(V,\omega)}$ by 
\begin{subequations}
\begin{flalign}
 &v\triangleright 1:=0~,\\
 \label{eqn:geometricaction}&v\triangleright\Phi\bigl([\varphi]\bigr):= \Phi\bigl([\mathcal{L}_v(\varphi)]\bigr)~, 
\end{flalign}
\end{subequations}
for all $v\in\mathfrak{H}$ and $[\varphi]\in V$. The action is extended to $\mathcal{A}_{(V,\omega)}$ by 
$\bbC$-linearity and the Leibniz rule $v\triangleright (a\,b)=(v\triangleright a)\,b + a\,(v\triangleright b)$,
for all $a,b\in\mathcal{A}_{(V,\omega)}$ and $v\in\mathfrak{H}$. The extension to $\AA_{(V,\omega)}[[\lambda]]$
is straightforward.

It has to be checked if (\ref{eqn:geometricaction}) is well-defined. For this let
$[\varphi]=[\varphi^\prime]$, i.e.~$\varphi^\prime = \varphi + P(\psi)$, where $P=\square_g-\xi\,\mathfrak{R}$ is the equation
of motion operator and $\psi\in C^\infty_0(\mathcal{M},\bbR)$. We find
\begin{flalign}
 \mathcal{L}_v(\varphi^\prime)= \mathcal{L}_v(\varphi)+\mathcal{L}_{v}\bigl(P(\psi)\bigr) = \mathcal{L}_v(\varphi)
 +P\bigl(\mathcal{L}_v(\psi) - c_v \psi\bigr)~,
\end{flalign}
where we have used that the scaling of the d'Alembert operator is $[\mathcal{L}_v,\square_g]=-c_v\,\square_g$.
Thus, the action is well-defined for all $v\in\mathfrak{H}$.

Next, we have to check if the action of $\mathfrak{H}$ is consistent with the commutation relations in $\mathcal{A}_{(V,\omega)}$.
We obtain the consistency condition 
(omitting the brackets $[\,\cdot\,]$ denoting equivalence classes)
\begin{multline}
\label{eqn:consistencycom}
 0=v\triangleright\bigl( i\, \omega(\varphi,\psi)\, 1\bigr)=v\triangleright[\Phi(\varphi),\Phi(\psi)]=
 [v\triangleright\Phi(\varphi),\Phi(\psi)]+[\Phi(\varphi),v\triangleright\Phi(\psi)]\\
 =[\Phi(\mathcal{L}_v(\varphi)),\Phi(\psi)]+[\Phi(\varphi),\Phi(\mathcal{L}_v(\psi))]=i\bigl(
\omega(\mathcal{L}_v(\varphi),\psi)+\omega(\varphi,\mathcal{L}_v(\psi))\bigr)\,1~,
\end{multline}
for all $\varphi,\psi\in V$ and $v\in\mathfrak{H}$.
Using the explicit form of $\omega$ we find
\begin{flalign}
 \nn \omega(\mathcal{L}_v(\varphi),\psi)&=\int\limits_\mathcal{M}\mathcal{L}_v(\varphi)\,\Delta(\psi)\,\vol\stackrel{\text{PI}}{=}-
\int\limits_\mathcal{M}\varphi\,\mathcal{L}_v\bigl(\Delta(\psi)\,\vol\bigr) \\
\nn & \hspace{-1.4mm}\stackrel{\text{HKP}}{=} -\int\limits_\mathcal{M}\varphi\,
\left(\Delta\bigl(\mathcal{L}_v(\psi)\bigr) +c_v\,\left(\frac{N}{2}+1\right)\,
 \Delta(\psi) \right)\, \vol\\
\label{eqn:leftright} &=-\omega(\varphi,\mathcal{L}_v(\psi))-c_v\,\left(\frac{N}{2}+1\right)\,\omega(\varphi,\psi)~,
\end{flalign}
for all $\varphi,\psi\in V$ and $v\in\mathfrak{H}$. In this derivation we have used integration by parts (PI)
and the homothetic Killing properties $\mathcal{L}_v(\vol)=\frac{ c_v N}{2}\vol$ and $[\mathcal{L}_v,\Delta]=c_v\Delta$
(HKP).
Putting (\ref{eqn:leftright}) into (\ref{eqn:consistencycom}) the consistency condition reads
\begin{flalign}
0= -c_v\,\left(\frac{N}{2}+1\right)\,\omega(\varphi,\psi)~,~\forall \varphi,\psi\in V~,
\end{flalign}
which implies $c_v=0$ due to the (weak) nondegeneracy of the symplectic structure $\omega$.
Thus, we can only represent the Lie subalgebra of Killing vector fields $\mathfrak{K}\subseteq\mathfrak{H}$ 
on $\mathcal{A}_{(V,\omega)}$, provided we assume a geometric action.

Let us briefly consider a general vector field $v\in\Xi$.
The two consistency conditions (\ref{eqn:geometricaction}) and (\ref{eqn:consistencycom}) required 
for $v$ to be implementable have the following meaning: (\ref{eqn:consistencycom}) states that
$v$ has to be an infinitesimal symplectic automorphism and (\ref{eqn:geometricaction})
means that the equation of motion operator has to transform as $[\mathcal{L}_v,P]=P\circ \mathcal{O}_v$, with some
operator $\mathcal{O}_v$ mapping compactly supported functions to compactly supported functions. 
These two conditions are of course not fulfilled for the most general vector field $v\in\Xi$.
Since the twisted quantum field theory construction requires a representation of {\it all} vector fields entering the twist
on the algebra $\mathcal{A}_{(V,\omega)}$, the above argumentation shows that not all twists can be implemented. 
Our conjecture, which deserves a rigorous proof, is that a vector field is implementable if and only if it is Killing.

We now show for completeness that the twisted quantum field theory construction is possible if the twist is Killing.
The Lie algebra representation of $\mathfrak{K}$ on $\mathcal{A}_{(V,\omega)}$ extends to a Hopf algebra
representation of the universal enveloping algebra $U\mathfrak{K}$, equipped with the natural coproduct, counit and antipode.
The formal deformation quantization of the Hopf algebra $U\mathfrak{K}$ and algebra $\mathcal{A}_{(V,\omega)}$ 
by a Killing twist $\mathcal{F}\in (U\mathfrak{K}\otimes U\mathfrak{K})[[\lambda]]$ 
is then straightforward, extending the result for the Moyal-Weyl deformation of the Minkowski quantum field theory
\cite{Zahn:2006wt,Balachandran:2007vx,Aschieri:2007sq}.

%%%%%%%%%%%%%%%%%%%%%%%%%%%%%%%%%%%%%%%%%%%%%%%%%%%%%%%
%%%%%%%%%%%%%%%%%%%%%%%%%%%%%%%%%%%%%%%%%%%%%%%%%%%%%%%

\chapter{\label{eqn:lambda2green}$\mathcal{O}(\lambda^2)$ Green's operators for a noncommutative Minkowski spacetime}
In order to better understand the explicit construction of deformed Green's operators, see Theorem \ref{theo:greendef},
we focus on a simple example.
Consider the deformed Minkowski spacetime of Chapter \ref{chap:qftapp}, Section \ref{sec:exampleswaveop},
where the wave operator is given by
\begin{flalign} 
\widetilde{P}_\star = -\cosh\left(\frac{3\lambda}{2}\,i\partial_t\right)\circ \bigl(\partial_t^2 + M^2\bigr) 
+\cosh\left(\frac{5\lambda}{2}\,i\partial_t\right)\circ \bigtriangleup~.
\end{flalign}
Expanding up to order $\lambda^2$ we obtain
\begin{subequations}
\begin{flalign}
 \widetilde{P}_{(0)} &= P = -\partial_t^2 - M^2 +\bigtriangleup~,\\
 \widetilde{P}_{(1)} &= 0~,\\
 \widetilde{P}_{(2)} &= \frac{9}{8}\partial_t^2 \circ \bigl(\partial_t^2+M^2\bigr) - \frac{25}{8} \partial_t^2\circ\bigtriangleup
 = -\frac{9}{8} \partial_t^2\circ P - 2\,\partial_t^2\circ \bigtriangleup ~.
\end{flalign}
\end{subequations}
The corrections to the Green's operators are given by, see Theorem \ref{theo:greendef},
\begin{subequations}
\begin{flalign}
 \widetilde{\Delta}_{(1)\pm} &= 0~,\\
 \widetilde{\Delta}_{(2)\pm} &= -\Delta_\pm \circ \widetilde{P}_{(2)}\circ \Delta_{\pm} = \frac{9}{8}\Delta_\pm\circ\partial_t^2 + 
2 \,\Delta_\pm\circ\partial_t^2\circ \bigtriangleup\circ \Delta_\pm~,
\end{flalign}
\end{subequations}
where $\Delta_\pm$ are the Green's operators for $P$. 
The distribution kernel of the deformed Green's operators $\widetilde{\Delta}_{\star\pm}$ is thus given by
\begin{flalign}
 \widetilde{\Delta}_{\star\pm}(x,y) = \Delta_\pm(x,y) + \frac{9\lambda^2}{8} \partial_{t_y}^2\Delta_\pm(x,y)
+ 2\lambda^2 \int\limits_{\bbR^4} d^4z\,\Delta_\pm(x,z)\,\partial_{t_z}^2\bigtriangleup_z\Delta_\pm(z,y) +\mathcal{O}(\lambda^3)~,
\end{flalign}
where 
\begin{flalign}
 \Delta_\pm(x,y) = \lim_{\epsilon\to 0^+} \int\limits_{\bbR^4} \frac{d^4p}{(2\pi)^4}\,
\frac{e^{-ip(x-y)}}{ (p_0 \pm i\epsilon)^2 -\mathbf{p}^2 -M^2}~.
\end{flalign}
Note the translation invariance of all kernels and operators above.
The remaining integral can be explicitly evaluated and we obtain in momentum space
\begin{flalign}
 \nn \widetilde{\Delta}_{\star\pm}(x,0) &= \Delta_\pm(x,0) + \frac{9\lambda^2}{8} \partial_{t}^2\Delta_\pm(x,0)
\\
&\mp\lambda^2\Theta(\pm t)\int\limits_{\bbR^3} \frac{d^3p}{(2\pi)^3} e^{-i\mathbf{p} \mathbf{x}} \,
\mathbf{p}^2\left(t\,\cos(E_p t) +\frac{\sin(E_p t))}{E_p}\right) +\mathcal{O}(\lambda^3)~,
\end{flalign}
where $E_p = \sqrt{\mathbf{p}^2 +M^2}$ and $\Theta$ is the Heaviside step-function.
For a massless field $M^2=0$ we can also evaluate the last integral explicitly and find as expected that
$\widetilde{\Delta}_{\star\pm}$ has support in the forward/backward lightcone.
Since the resulting expression is quite long and not instructive we do not present it here.

Let us finish with one remark: Note that the deformed wave operator is not compactly deformed, however 
all integrals required for calculating $\widetilde{\Delta}_{\star\pm}$ to order $\lambda^2$ exist, 
even without an infrared regularization.
This shows that the assumption of compactly deformed wave operators made in Chapter
\ref{chap:qftdef} is sufficient to ensure the existence of the Green's operators, but not necessary in all cases.

%%%%%%%%%%%%%%%%%%%%%%%%%%%%%%%%%%%%%%%%%%%%%%%%%%%%%%%
%%%%%%%%%%%%%%%%%%%%%%%%%%%%%%%%%%%%%%%%%%%%%%%%%%%%%%%

\chapter{\label{app:diagram}Diagrammatic proof of the associativity of $\otimes_R$}
In this appendix we provide an alternative (diagrammatic) proof of (\ref{eqn:Rtensorass}), which
in our opinion is more intuitive than the corresponding computational proof.
Since $\tau_R$ and $\tau_R^{-1}$ are representations of the generators of the braid group
we can represent them as
\begin{flalign}
\tau_R\quad =\quad \parbox{10mm}{
\begin{tikzpicture}[ultra thick]
\draw (0,-2.5) -- (0,-3)    ;
\draw (1,-2.5) -- (1,-3)    ;
 \draw (0,-3) -- (1,-4)   ;
 \draw (1,-3) -- (0.6,-3.4);
 \draw (0.4,-3.6) -- (0,-4);
 \draw (0,-4)   -- (0,-4.5) ;
 \draw (1,-4)   -- (1,-4.5) ;
\end{tikzpicture}
} \qquad\,\quad~,\qquad
\tau_R^{-1}\quad =\quad \parbox{10mm}{
\begin{tikzpicture}[ultra thick]
 \draw (0,-0.5)    -- (0,-1) ;
 \draw (1,-0.5)    -- (1,-1) ;
 \draw (0,-1) -- (0.4,-1.4);
 \draw (0.6,-1.6)  -- (1,-2);
 \draw (1,-1) -- (0,-2)   ;
 \draw (0,-2)   -- (0,-2.5) ;
 \draw (1,-2)   -- (1,-2.5) ; 
\end{tikzpicture}
}~\,\quad.
\end{flalign}
From this notation it is obvious that $\tau_R^{-1}$ is the inverse of $\tau_R$
\begin{flalign}
\tau_R\circ\tau_R^{-1} \quad =\quad \parbox{10mm}{
\begin{tikzpicture}[ultra thick]
 \draw (0,-0.5)    -- (0,-1) ;
 \draw (1,-0.5)    -- (1,-1) ;
 \draw (0,-1) -- (0.4,-1.4);
 \draw (0.6,-1.6)  -- (1,-2);
 \draw (1,-1) -- (0,-2)   ;
 \draw (0,-2)   -- (0,-2.5) ;
 \draw (1,-2)   -- (1,-2.5) ; 
\draw (0,-2.5) -- (0,-3)    ;
\draw (1,-2.5) -- (1,-3)    ;
 \draw (0,-3) -- (1,-4)   ;
 \draw (1,-3) -- (0.6,-3.4);
 \draw (0.4,-3.6) -- (0,-4);
 \draw (0,-4)   -- (0,-4.5) ;
 \draw (1,-4)   -- (1,-4.5) ;
\end{tikzpicture} 
} \quad =\quad \parbox{10mm}{
\begin{tikzpicture}[ultra thick]
 \draw (0,-0.5)    -- (0,-4.5) ;
 \draw (1,-0.5)    -- (1,-4.5) ;
\end{tikzpicture} 
}\quad =\quad \id~,
\end{flalign}
simply by unbraiding. A similar diagram shows the relation $\tau_R^{-1}\circ \tau_R =\id$.
The braid relation $\tau_{R\,12}\circ\tau_{R\,23}\circ \tau_{R\,12} = \tau_{R\,23}\circ\tau_{R\,12}\circ\tau_{R\,23}$ on triple
tensor products  holds because of the properties of the quasitriangular $R$-matrix. They can be depicted as
\begin{flalign}
\tau_{R\,12}\circ\tau_{R\,23}\circ \tau_{R\,12} \quad= \quad \parbox{20mm}{
\begin{tikzpicture}[ultra thick]
 \draw (0,0.5) -- (0,0) ;
 \draw (1,0.5) -- (1,0) ;
 \draw (2,0.5) -- (2,0) ;
 \draw (0,0) -- (1,-1) ;
 \draw (1,0) -- (0.6,-0.4);
 \draw (0.4,-0.6) -- (0,-1);
\draw (2,0) -- (2,-1);
 \draw (0,-1) -- (0,-2) ;
 \draw (1,-1) -- (2,-2);
 \draw (2,-1) -- (1.6,-1.4);
 \draw (1.4,-1.6) -- (1,-2);
 \draw (0,-2) -- (1,-3);
 \draw (2,-2) -- (2,-3);
 \draw (1,-2) -- (0.6,-2.4);
 \draw (0.4,-2.6) -- (0,-3);
 \draw (0,-3) -- (0,-3.5);
\draw (1,-3) -- (1,-3.5);
\draw (2,-3) -- (2,-3.5);
\end{tikzpicture} 
}
 \quad \,\quad=\quad\,\quad \parbox{20mm}{
\begin{tikzpicture}[ultra thick]
 \draw (0,0.5) -- (0,0) ;
 \draw (1,0.5) -- (1,0) ;
 \draw (2,0.5) -- (2,0) ;
 \draw (1,0) -- (2,-1) ;
 \draw (2,0) -- (1.6,-0.4);
 \draw (1.4,-0.6) -- (1,-1);
\draw (0,0) -- (0,-1);
 \draw (0,-1) -- (1,-2) ;
 \draw (1,-1) -- (0.6,-1.4);
 \draw (0.4,-1.6) -- (0,-2);
 \draw (2,-1) -- (2,-2);
 \draw (1,-2) -- (2,-3);
 \draw (2,-2) -- (1.6,-2.4);
 \draw (1.4,-2.6) -- (1,-3);
 \draw (0,-2) -- (0,-3);
 \draw (0,-3) -- (0,-3.5);
\draw (1,-3) -- (1,-3.5);
\draw (2,-3) -- (2,-3.5);
\end{tikzpicture} 
}~\,\quad = \quad  \tau_{R\,23}\circ\tau_{R\,12}\circ\tau_{R\,23}~.
\end{flalign}
For the $R$-tensor product we introduce the following notation
\begin{flalign}
P\otimes_R Q\quad  = \quad (P\otimes\id)\circ \tau_R \circ (Q\otimes \id)\circ\tau_R^{-1}\quad=\quad \parbox{10mm}{\begin{tikzpicture}[ultra thick]
 \draw (0,-0.5)    -- (0,-1) ;
 \draw (1,-0.5)    -- (1,-1) ;
 \draw (0,-1) -- (0.4,-1.4);
 \draw (0.6,-1.6)  -- (1,-2);
 \draw (1,-1) -- (0,-2)   ;
 \draw (0,-2)   -- (0,-3) ;
 \draw (1,-2)   -- (1,-3) ; 
 \draw (0,-3) -- (1,-4)   ;
 \draw (1,-3) -- (0.6,-3.4);
 \draw (0.4,-3.6) -- (0,-4);
 \draw (0,-4)   -- (0,-5) ;
 \draw (1,-4)   -- (1,-5) ; 
\node at (0,-2.5) [rectangle,draw,fill] {};
\node at (0,-4.5) [rectangle,draw,fill] {};
\end{tikzpicture}}~\quad\,\quad,
\end{flalign}
where the rectangles denote the action of $P\otimes\id$ and $Q\otimes\id$.
On the triple tensor product we have the identity
\begin{flalign}
 \tau_{R\,23}\circ(P\otimes \id\otimes \id )\circ\tau_{R\,23}^{-1} \quad =\quad \parbox{25mm}{
\begin{tikzpicture}[ultra thick]
 \draw (0,-0.5)    -- (0,-1) ;
 \draw (1,-0.5)    -- (1,-1) ;
 \draw (0,-1) -- (0.4,-1.4);
 \draw (0.6,-1.6)  -- (1,-2);
 \draw (1,-1) -- (0,-2)   ;
 \draw (0,-2)   -- (0,-2.5) ;
 \draw (1,-2)   -- (1,-2.5) ; 
\draw (0,-2.5) -- (0,-3)    ;
\draw (1,-2.5) -- (1,-3)    ;
 \draw (0,-3) -- (1,-4)   ;
 \draw (1,-3) -- (0.6,-3.4);
 \draw (0.4,-3.6) -- (0,-4);
 \draw (0,-4)   -- (0,-4.5) ;
 \draw (1,-4)   -- (1,-4.5) ;
\draw (-1,-0.5) -- (-1,-4.5) ;
\node at (-1,-2.5) [rectangle,draw,fill] {};
\end{tikzpicture} 
} \quad =\quad \parbox{25mm}{
\begin{tikzpicture}[ultra thick]
 \draw (0,-0.5)    -- (0,-4.5) ;
 \draw (1,-0.5)    -- (1,-4.5) ;
\draw (-1,-0.5) -- (-1,-4.5) ;
\node at (-1,-2.5) [rectangle,draw,fill] {};
\end{tikzpicture} 
}\quad =\quad P\otimes \id\otimes\id~.
\end{flalign}
The associativity condition (\ref{eqn:Rtensorass}) reads diagrammatically
\begin{flalign}
 (P\otimes_R Q)\otimes_R T \quad =\quad \parbox{25mm}{
\begin{tikzpicture}[ultra thick]
 \draw (0,0.5) -- (0,0);
 \draw (1,0.5) -- (1,0);
 \draw (2,0.5) -- (2,0);
\draw (2,0) -- (1,-1);
\draw (1,0) -- (1.4,-0.4);
\draw (1.6,-0.6) -- (2,-1);
\draw (0,0) -- (0,-1);
\draw (0,-1) -- (0.4,-1.4);
\draw (0.6,-1.6) -- (1,-2);
\draw (1,-1) -- (0,-2);
\draw (2,-1) -- (2,-2);
\draw (0,-2) -- (0,-3);
\draw (1,-2) -- (1,-3);
\draw (2,-2) -- (2,-3);
\node at (0,-2.5) [rectangle,draw,fill] {};
\draw (0,-3) -- (1,-4) ;
\draw (1,-3) -- (0.6,-3.4);
\draw (0.4,-3.6) -- (0,-4);
\draw (2,-3) -- (2,-4);
\draw (0,-4) -- (0,-5);
\draw (1,-4) -- (2,-5);
\draw (2,-4) -- (1.6,-4.4);
\draw (1.4,-4.6) -- (1,-5);
\draw (0,-5) -- (0.4,-5.4);
\draw (0.6,-5.6) -- (1,-6);
\draw (1,-5) -- (0,-6);
\draw (2,-5)  -- (2,-6);
\draw (0,-6) -- (0,-7);
\draw (1,-6) -- (1,-7);
\draw (2,-6) -- (2,-7);
\node at (0,-6.5) [rectangle,draw,fill] {};
\draw (0,-7) -- (1,-8);
\draw (1,-7) -- (0.6,-7.4);
\draw (0.4,-7.6) -- (0,-8);
\draw (2,-7) -- (2,-8);
\draw (0,-8) -- (0,-9);
\draw (1,-8) -- (1,-9);
\draw (2,-8) -- (2,-9);
\node at (0,-8.5) [rectangle,draw,fill] {};
\end{tikzpicture}}
\quad = \quad 
\parbox{25mm}{
\begin{tikzpicture}[ultra thick]
  \draw (0,0.5) -- (0,0);
 \draw (1,0.5) -- (1,0);
 \draw (2,0.5) -- (2,0);
\draw (0,0) -- (0.4,-0.4);
\draw (0.6,-0.6) -- (1,-1);
\draw (1,0) -- (0,-1);
\draw (2,0) -- (2,-1);
\draw (0,-1) -- (0,-2);
\draw (1,-1) -- (1.4,-1.4);
\draw (1.6,-1.6)  -- (2,-2);
\draw (2,-1) -- (1,-2);
\draw (0,-2) -- (0.4,-2.4);
\draw (0.6,-2.6) -- (1,-3);
\draw (1,-2) -- (0,-3);
\draw (2,-2) -- (2,-3);
\draw (0,-3) -- (0,-4);
\draw (1,-3) -- (1,-4);
\draw (2,-3) -- (2,-4);
\node at (0,-3.5) [rectangle,draw,fill] {};
\draw (0,-4) -- (1,-5);
\draw (1,-4) -- (0.6,-4.4);
\draw (0.4,-4.6) -- (0,-5);
\draw (2,-4) -- (2,-5);
\draw (0,-5) -- (0,-6);
\draw (1,-5) -- (1,-6);
\draw (2,-5) -- (2,-6);
\node at (0,-5.5) [rectangle,draw,fill] {};
\draw (0,-6) -- (0,-7);
\draw (1,-6) -- (2,-7);
\draw (2,-6) -- (1.6,-6.4);
\draw (1.4,-6.6) -- (1,-7);
\draw (0,-7) -- (1,-8);
\draw (1,-7) -- (0.6,-7.4);
\draw (0.4,-7.6) -- (0,-8);
\draw (2,-7) -- (2,-8);
\draw (0,-8) -- (0,-9);
\draw (1,-8) -- (1,-9);
\draw (2,-8) -- (2,-9);
\node at (0,-8.5) [rectangle,draw,fill] {};
\end{tikzpicture}}
\quad = \quad P\otimes_R (Q\otimes_R T)~.
\end{flalign}
The proof follows by applying the identities above to these diagrams in order to
deform one into the other.

%%%%%%%%%%%%%%%%%%%%%%%%%%%%%%%%%%%%%%%%%%%%%%%%%%%%%%%%%
%%%%%%%%%%%%%%%%%%%%%%%%%%%%%%%%%%%%%%%%%%%%%%%%%%%%%%%%%

\chapter{\label{app:symbols}Symbol index for Part \ref{part:math}}
In the following we give a list of symbols frequently used in Part \ref{part:math}
with a short explanation and reference to the main text.
The basic notations of Chapter \ref{chap:prelim} and \ref{chap:HAdef} are not repeated.
\vspace{3mm}
\begin{flushleft}
\begin{tabular}{l|l|l}
$\FF=f^\alpha\otimes f_\alpha$ & twist of a Hopf algebra			& Definition \ref{defi:twist}\\
$\FF^{-1}=\bar f^\alpha\otimes \bar f_\alpha$ & inverse twist 					& \\
$R= R^\alpha\otimes R_\alpha$  & $R$-matrix of a Hopf algebra 			& Definition \ref{defi:rmatrix}\\
$R^{-1}= \bar R^\alpha\otimes \bar R_\alpha$  & inverse $R$-matrix			& \\
 $\End_\bfK(V)$ 	& $\bfK$-linear maps $V\to V$  				   &  Proposition \ref{propo:endomodulealgebra}\\
 $\End_A(V)$    	& right $A$-linear maps $V\to V$			   &  Proposition \ref{propo:endomodulealgebra}\\
 ${_A}\End(V)$  	& left $A$-linear maps $V\to V$  			   &  Proposition \ref{propo:opposite}\\
 $\Hom_\bfK(V,W)$ 	& $\bfK$-linear maps $V\to W$  				   &  Proposition \ref{propo:homomodule}\\
 $\Hom_A(V,W)$    	& right $A$-linear maps $V\to W$			   &  Proposition \ref{propo:homomodule}\\
 ${_A}\Hom(V,W)$  	& left $A$-linear maps $V\to W$  			   &  Chapter \ref{chap:modhom}, Section \ref{sec:homoquant}\\
 $\ra$			& Hopf algebra action					& Definitions \ref{defi:hmodalg} and \ref{defi:hmod}\\
$\xi\ra\,$ & endomorphism $v\mapsto \xi\ra v$  & Example \ref{ex:endoex}\\ 
 $\RA$, $\RA_\FF$, $\RA^\cop$	& adjoint $H$, $H^\FF$, $H^\cop$-action   &  (\ref{eqn:endomodulealgebra1}), (\ref{eqn:adjointhfaction}), (\ref{eqn:adjointhcopaction}) \\
$\circ$, $\circ_\star$	& usual and $\star$-composition & (\ref{eqn:starcompo})\\
 $D_\FF$		& quantization isomorphism 		& Theorems \ref{theo:endodef} and \ref{theo:homodef}\\
$D^\cop_\FF$		& left quantization isomorphism				& Theorems \ref{theo:leftendodef} and \ref{theo:lefthomodef}\\
$\otimes_R$ 		& $R$-tensor product of $\bfK$-linear maps		& Definition \ref{defi:Rtensor}\\
$\tau_R$, $\tau_R^{-1}$ & $R$-flip map 			& (\ref{eqn:Rflipmap})\\
$\iota$, $\iota_{123}$   & product module isomorphisms				& Lemma \ref{lem:iotaiso}, (\ref{eqn:higheriota})\\
$\bigl(\Omega^\bullet=\bigoplus_{n\geq0}\Omega^n,\wedge,\dd\bigr)$ & differential calculus & Chapter \ref{chap:con}, Section \ref{sec:conbas}\\
$\Con_A(V)$  		& right connections $V\to V\otimes_A\Omega^1$ & Definition \ref{defi:connection}\\
${_A}\Con(V)$  		& left connections $V\to \Omega^1\otimes_A V$ & Definition \ref{defi:connection}\\
$\widetilde{D}_\FF$	& quantization isomorphism (connections)      & Theorem  \ref{theo:condef}\\
$\widetilde{D}^\cop_\FF$& left quantization isomorphism (connections) & Theorem  \ref{theo:condefleft}\\
$\oplus_R$		& sum of connections			& Theorem \ref{theo:conplus}\\
$V^\prime=\Hom_A(V,A)$	& dual of the module $V$			& 
\end{tabular}
\end{flushleft}

\end{appendix}

%%%%%%%%%%%%%%%%%%%%%%%%%%%%%%%%%%%%%%%%%%%%%%%%%%%%%%%
%%%%%%%%%%%%%%%%%%%%%%%%%%%%%%%%%%%%%%%%%%%%%%%%%%%%%%%
%%%%%%%%%%%%%%%%%%%%%%%%%%%%%%%%%%%%%%%%%%%%%%%%%%%%%%%

\bibliographystyle{hep}
\bibliography{bibl-arxiv.bib}

%%%%%%%%%%%%%%%%%%%%%%%%%%%%%%%%%%%%%%%%%%%%%%%%%%%%%%%
%%%%%%%%%%%%%%%%%%%%%%%%%%%%%%%%%%%%%%%%%%%%%%%%%%%%%%%

\chapter*{Acknowledgements}
\noindent I would like to thank my friends, colleagues and collaborators for their support during my PhD thesis.
A particular thanks goes to:\vspace{2mm}
\begin{itemize}
\item Prof.~Dr.~Thorsten Ohl for the opportunity to write my thesis in his group, the many discussions and the fruitful collaborations.\vspace{2mm}
\item  Dr.~Paolo Aschieri, Dr.~Tim Koslowski and Christoph Uhlemann for many discussions and fruitful collaborations.\vspace{2mm}
\item  Prof.~Dr.~Christian Klingenberg and Prof.~Dr.~Werner Porod  for joining my ``Betreuungskommission''.\vspace{2mm}
\item Prof.~Dr.~Haye Hinrichsen and Prof.~Dr.~Peter Schupp for agreeing to be a referee for my thesis.\vspace{2mm}
\item All members of Theoretical Physics II and the Research Training Group GRK 1147 for the interesting and nice time.
In particular Julian Adamek for keeping alive my interest in cosmology.\vspace{2mm}
\item  Dr.~Paolo Aschieri, Prof.~Dr.~Dorothea Bahns, Dr.~Daniel Blaschke, Prof.~Dr.~Leonardo Castellani, Dr.~Claudio Dappiaggi,  
Dr.~Marija Dimitrijevi{\'c},
 Prof.~Dr.~Klaus Fredenhagen, Prof.~Dr.~Harald Grosse, Dr.~Gandalf Lechner, Prof.~Dr.~John Madore, Eric Morfa-Morales, 
 Prof.~Dr.~Denjoe O'Connor, Prof.~Dr. Jeong-Hyuck Park, Dr.~Christian S\"amann, Prof.~Dr.~Martin Schlichenmaier, 
 Paul Schreivogl, Prof.~Dr. Peter Schupp, PD~Dr.~Harold Steinacker, Dr.~Christoph Stephan, Prof.~Dr.~Richard Szabo, 
 Prof.~Dr. Dmitri Vassilevich, Prof.~Dr. Rainer Verch,
 Prof.~Dr.~Stefan Waldmann,  Dr.~Michael Wohlgenannt,
 Dr.~Hyun Seok Yang and Dr.~Jochen Zahn,
 not only for many discussions and helpful comments, but also for the enjoyable ``social events'' during conferences. \vspace{2mm}
 \item Quantum Geometry and Quantum Gravity Network of the European Science Foundation (ESF) for the financial support.\vspace{2mm}
 \item Perimeter Institute for Theoretical Physics for the invitation and the financial support.\vspace{2mm}
 \item Research Training Group 1147 \textit{Theoretical Astrophysics and Particle Physics} of the DFG for the financial support.\vspace{2mm} 
 \item Brigitte Wehner for helping me with all sorts of bureaucracy.\vspace{2mm}
 \item My parents, family and ``non-physics friends'' for the support and enjoyable free time.\vspace{2mm}
 \item And last but not least Asma Ch{\'e}rigu{\`e}ne for giving me her love, support and the valuable time together.\vspace{2mm}
\end{itemize}
Thanks!

\end{document}